\numberwithin{equation}{section}
\newtheorem*{proposition*}{Proposition}
\newtheorem*{theorem*}{Theorem}
\newtheorem*{conjecture*}{Conjecture}
\newtheorem*{claim*}{Claim}
\newtheorem*{lemma*}{Lemma}
\newtheorem*{corollary*}{Corollary}
\newtheorem{theorem}{Theorem}[section]
\newtheorem{proposition}[theorem]{Proposition}
\newtheorem{lemma}[theorem]{Lemma}
\newtheorem{corollary}[theorem]{Corollary}
\newtheorem*{definition*}{Definition}
\newtheorem{definition}{Definition}[section]
\newtheorem*{assumption*}{\mathcal{A}ssumption}
\newtheorem*{remark*}{Remark}
\newtheorem{remark}{Remark}[section]
\newtheorem{thmx}{Theorem}
\newcommand{\la}{\langle}
\newcommand{\ra}{\rangle}
\newcommand{\R}{\mathbb{R}}
\newcommand{\s}{\mathbb{S}}
\newcommand{\C}{\mathbb{C}}
\newcommand{\Z}{\mathbb{Z}}
\newcommand{\N}{\mathbb{N}}
\newcommand{\snabla}{\slashed{\nabla}}
\newcommand{\chphi}{ \check{\phi}}
\newcommand{\Lbar}{\underline{L}}
\begin{document}

\title{Late-time tails and mode coupling of linear waves on Kerr spacetimes}
\author[1]{Yannis Angelopoulos \thanks {yannis@caltech.edu}}
\author[2]{Stefanos Aretakis\thanks {aretakis@math.toronto.edu}}
\author[3,4]{Dejan Gajic \thanks {D.Gajic@dpmms.cam.ac.uk, d.gajic@ru.nl}}
	\affil[1]{\small The Division of Physics, Mathematics and Astronomy, Caltech,
1200 E California Blvd, Pasadena CA 91125, USA}
	\affil[2]{\small Department of Mathematics, University of Toronto, 40 St George Street, Toronto, ON, Canada}
	\affil[3]{\small Centre for Mathematical Sciences, University of Cambridge, Wilberforce Road, Cambridge CB3 0WB, UK}
	\affil[4]{\small Department of Mathematics, Radboud University, 6525 AJ Nijmegen, The Netherlands}

\date{February 23, 2021}

\maketitle

\begin{abstract}
We provide a rigorous derivation of the precise late-time asymptotics for solutions to the scalar wave equation on subextremal Kerr backgrounds, including the asymptotics for projections to angular frequencies $\ell\geq 1$ and $\ell\geq 2$. The $\ell$-dependent asymptotics on Kerr spacetimes differ significantly from the non-rotating Schwarzschild setting (``Price's law''). The main differences with Schwarzschild are slower decay rates for higher angular frequencies and oscillations along the null generators of the event horizon. We introduce a physical space-based method that resolves the following two main difficulties for establishing $\ell$-dependent asymptotics in the Kerr setting: 1) the coupling of angular modes and 2) a loss of ellipticity in the ergoregion. Our mechanism identifies and exploits the existence of conserved charges along null infinity via a time invertibility theory, which in turn relies on new elliptic estimates in the full black hole exterior. This framework is suitable for resolving the conflicting numerology in Kerr late-time asymptotics that appears in the numerics literature.

\end{abstract}

\tableofcontents

\section{Introduction}
The Kerr spacetimes $(\mathcal{M}_{M,a},g_{M,a})$ constitute a 2-parameter family of solutions to the Einstein vacuum equations
\begin{equation*}
\textnormal{Ric}[g]=0
\end{equation*}
that are expected to describe all possible final states of a wide variety of gravitational collapse scenarios in an astrophysical setting \cite{penrose82}. A preliminary step to describing the intricate dynamical properties of the gravitational radiation that is emitted when black hole exteriors settle down to Kerr solutions is to \emph{verify} that they indeed settle down to Kerr solutions. In the context of the evolution of dynamical black holes arising from ``small perturbations'' of Kerr initial data to the Einstein equations, this is the black hole stability problem. There has been significant recent progress towards addressing the question of linear and nonlinear stability; see for example \cite{Dafermos2016, klainerman17,johns19,anderssonkerr,haf20,shlcosta20} and references therein.

In this paper, we develop mathematical tools necessary for going \emph{beyond} the question of stability, by addressing the \emph{precise} late-time behaviour of gravitational radiation emitted in the evolution of perturbations of Kerr initial data in the context of the model problem of the linear scalar wave equation on a fixed subextremal $(|a|<M)$ Kerr spacetime background $(\mathcal{M}_{M,a},g_{M,a})$:
\begin{equation}
\label{eq:introwaveeq}
\square_{g_{M,a}}\psi=0.
\end{equation}
The evolution of the wave $\psi$ models the evolution of key dynamical quantities for the Einstein equations. Addressing the precise late-time dynamics of gravitational radiation is motivated by the exterior as well as the interior of black holes:
\begin{enumerate}[A)]
\item The decay rates and leading-order coefficients of gravitational radiation measured by observers ``at infinity'' are expected to encode information about the initial perturbation and the final Kerr solution. Deducing information purely from gravitational radiation is important since it can be detected and ana\-lysed experimentally at gravitational wave observatories and forms a key signature of astrophysical black hole processes, see also \cite{extremal-prl}.
\item The precise late-time behaviour of gravitational radiation crossing the event horizon, the boundary between black hole interior and exterior, plays a direct role in uncovering the nature and strength of singularities that may be present in black hole interiors. As such, it is important for addressing the \emph{strong cosmic censorship conjecture}; see the introduction of \cite{dl-scc} for a comprehensive overview of the black hole interior and the strong cosmic censorship conjecture. 
\end{enumerate}

\textbf{In this paper, we determine the precise leading-order late-time behaviour of $\psi$ and its higher angular frequencies $\psi_{\ell\geq 1}$ and $\psi_{\ell \geq 2}$ on Kerr black hole exteriors.} We develop a physical space mechanism that exploits the existence of conserved charges together with a time inversion theory, in order to derive the presence of ``tails'' in the late-time dynamics. Our mechanism deals with new difficulties in deriving late-time asymptotics that are caused by the rotation of the black hole background: a coupling of different angular frequencies and a loss of ellipticity of operators relevant for the time inversion theory. We derive new phenomena that arise due to rotation: slower decay rates at the level of higher angular frequencies and oscillations along the event horizon.

We provide below an outline of the remainder of the introductory section of the paper.

\begin{itemize}
\item In Section \ref{intro:results}, we present the main theorems of the paper.
\item In Section \ref{intro:prevwork}, we discuss previous work on late-time asymptotics for waves on black holes. 
\item In Section \ref{intro:pricelaw}, we sketch an analogue of an angular frequency-dependent Price's law in the Kerr setting.
\item In Section \ref{intro:ideas}, we outline the main new ideas and methods that appear in the proofs of the theorems. 
\end{itemize}

\subsection{Main results and first remarks}
\label{intro:results}

In this section, we present the main results obtained in this paper and provide some additional remarks.

We first introduce briefly the notation appearing in the statements of the theorems below.

\begin{itemize}
\item Let $\Sigma_{\tau}$ denote appropriate spacelike, asymptotically hyperboloidal hypersurfaces in the Kerr manifold $\mathcal{M}_{M,a}$ that intersect the future event horizon $\mathcal{H}^+$. The spacetime region of interest is then the union $\bigcup_{\tau \in [0,\infty)}\Sigma_{\tau}$ which constitutes the main region of interest in $\mathcal{M}_{M,a}$. See Figure \ref{fig:foliationsintro} for a pictorial representation.
\item The hypersurfaces $\Sigma_{\tau}$ are foliated by Boyer--Lindquist spheres $S^2_{\tau,r}$, which we equip with angular coordinates $(\theta,\varphi_*)$. The label $r$ denotes the radial Boyer--Lindquist coordinate, which takes the values $r\in [r_+,\infty)$ along $\Sigma_{\tau}$, with $r=r_+$ at $\mathcal{H}^+$.

Note that the components of the rescaled induced metric $r^{-2}g_{S^2_{\tau,r}}$ approach the components of the metric on the unit round sphere in standard spherical coordinates $(\theta,\varphi_*)$ as $r\to \infty$. In this sense, the Boyer--Lindquist spheres are asymptotically round.

\item The notation $\psi_{\ell}$ and $\psi_{\geq \ell}$ indicates a projection of the function $\psi$ to standard spherical harmonic modes on the spheres $S^2_{\tau,r}$ equipped with the unit round metric, with angular frequencies equal to $\ell$, and greater or equal to $\ell$, respectively. The spherical harmonics here are defined with respect to the angular coordinates $(\theta,\varphi_*)$ and are denoted by $Y_{\ell, m}(\theta,\varphi_*)$. We denote with $\pi_{\ell}$ the operator that projects a function on $S^2_{\tau,r}$ to the spherical harmonics with angular frequency $\ell$.
\end{itemize}

See Section \ref{sec:geom} for a precise introduction of the spacetime geometry, the Boyer--Lindquist coordinates and other notational conventions.

\subsubsection{Main results}
We state here the main theorems proved in the paper.
\begin{figure}[H]
	\begin{center}
\includegraphics[scale=0.5]{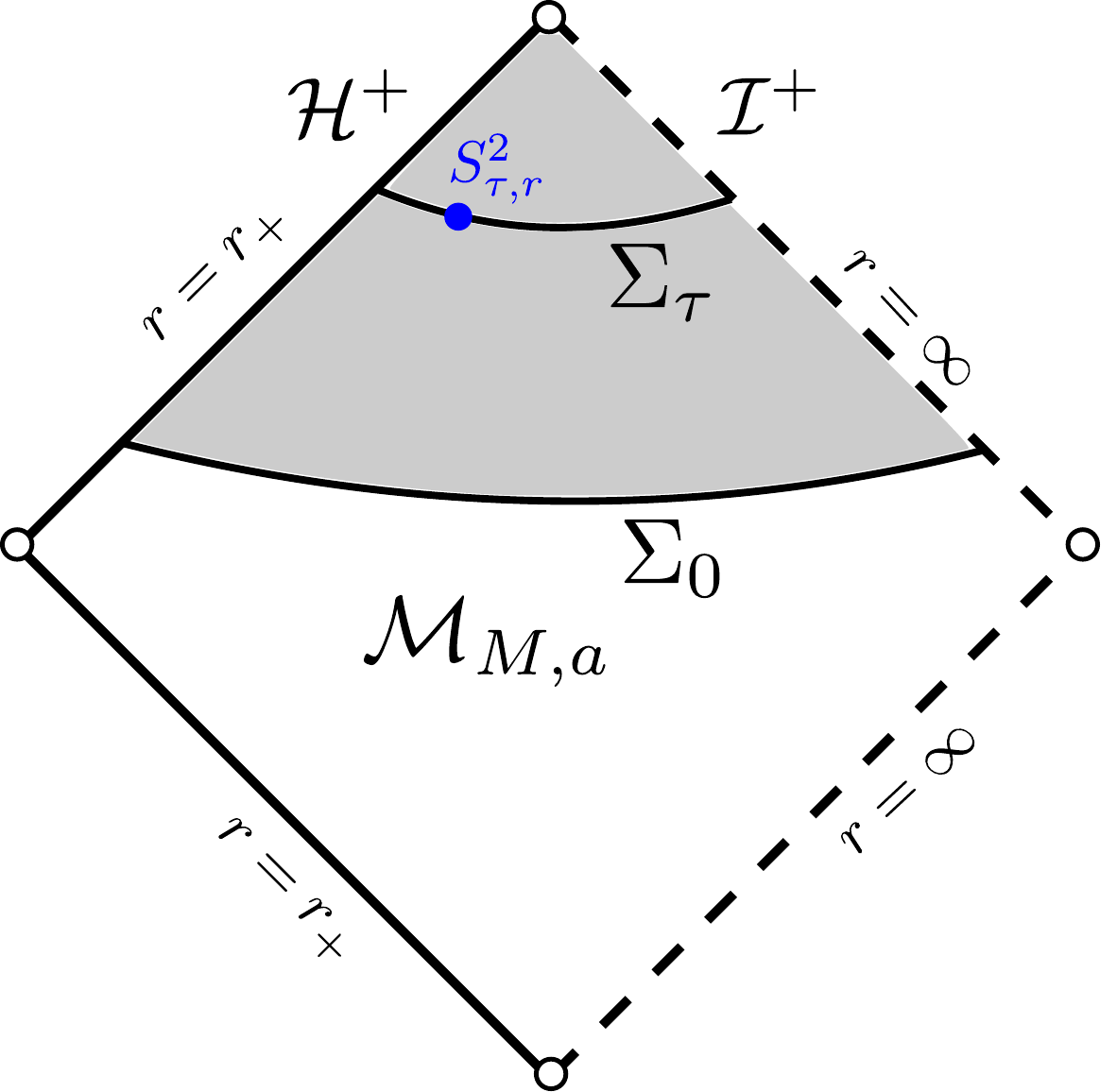}
\end{center}
\vspace{-0.2cm}
\caption{A 2-dimensional representation of the spacetime $\mathcal{M}_{M,a}$, with the hypersurfaces $\Sigma_{\tau}$ and the shaded region depicting $\bigcup_{\tau \in [0,\infty)}\Sigma_{\tau}$. Each point in the picture represents a Boyer--Lindquist sphere $S^2_{\tau,r}$ and the hypersurface $\mathcal{I}^+$, which represents the points $(\tau,\infty,\theta,\varphi_*)$ is depicted at a finite distance.}
	\label{fig:foliationsintro}
\end{figure}

\begin{theorem}
\label{thm:intro1}
Let $\psi$ be a solution arising from smooth and compactly supported initial data for \eqref{eq:introwaveeq}. Then $\psi$ satisfies the following asymptotic behaviour along $\{r=r_0\}$, for each $r_0\geq r_+$:
\begin{align*}
\psi(\tau,r_0,\theta,\varphi_*)=&-8I_0^{(1)}(1+\tau)^{-3}+\mathcal{O}_{r_0}(\tau^{-3-\eta}),\\
r^{-1}\psi_1(\tau,r_0,\theta,\varphi_*)=&-\frac{32}{3}I_1^{(1)}(\theta,\varphi_*)(1+\tau)^{-5}+\mathcal{O}_{r_0}(\tau^{-5-\eta}),\\
\psi_{\geq2}(\tau,r_0,\theta,\varphi_*)=&-\frac{16}{3}\sqrt{\frac{\pi}{5}}a^2I_0^{(1)}Y_{2,0}(\theta)(1+\tau)^{-5}+\mathcal{O}_{r_0}(\tau^{-5-\eta}),\end{align*}
where $\eta>0$, $I_i^{(1)}$ are functions on $\s^2$ and denote the \emph{time-inverted Newman--Penrose charges of $\psi$} (see Section \ref{rm:npcharges}), and $\mathcal{O}_{r_0}(\tau^{-p})$ denotes schematically terms that can be bounded uniformly in $\tau$ by weighted initial data $L^2$-norms multiplied by $(1+\tau)^{-p}$, with constants that may depend on $r_0$.

When $r\to \infty$ the ``radiation field'' $r\psi$ has the following asymptotic behaviour:
\begin{align*}
r\psi(\tau,\infty,\theta,\varphi_*)=&-2I_0^{(1)}(1+\tau)^{-2}+\mathcal{O}(\tau^{-2-\eta}),\\
r\psi_1(\tau,\infty,\theta,\varphi_*)=&-\frac{4}{3}I_1^{(1)}(\theta,\varphi_*)(1+\tau)^{-3}+\mathcal{O}(\tau^{-3-\eta}),\\
r\psi_{\geq 2}(\tau,\infty,\theta,\varphi_*)=&\left[-\frac{2}{5}I_2^{(1)}(\theta,\varphi_*)+\frac{8}{3}\sqrt{\frac{\pi}{5}}a^2I_0^{(1)}Y_{2,0}(\theta)\right](1+\tau)^{-4}+\mathcal{O}(\tau^{-4-\eta}),
\end{align*}
where $\mathcal{O}(\tau^{-p})$ denote schematically terms that can be bounded uniformly in $\tau$ by weighted initial data $L^2$-norms multiplied by $(1+\tau)^{-p}$.
\end{theorem}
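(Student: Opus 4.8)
The plan is to reduce the proof to a hierarchy of decay estimates via a combination of (a) conserved Newman--Penrose-type charges along null infinity, (b) a time inversion theory that constructs, from $\psi$, auxiliary solutions $\psi^{(1)}$ whose own NP charges control the leading-order coefficients, and (c) new elliptic estimates that allow this construction to proceed through the ergoregion despite the loss of ellipticity. First I would set up hyperboloidal/characteristic coordinates adapted to $\mathcal{H}^+$ and $\mathcal{I}^+$, and perform the conformal rescaling to $r\psi$, so that the wave equation becomes a transport-type equation in the outgoing null direction with a potential and with angular-mode-coupling terms proportional to $a$. The key conceptual step is to show that the rotation-induced coupling terms, when projected to $\ell\geq 2$, are "lower order" in a way that still contributes at the leading order: schematically, the $\ell=0$ part of $\psi$ sources the $\ell=2$ part through the $a^2$ term, which is exactly how the coefficient $\frac{16}{3}\sqrt{\pi/5}\,a^2 I_0^{(1)}Y_{2,0}$ appears. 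I would isolate this by writing $\psi_{\geq2}=\psi_{\geq2}^{\mathrm{hom}}+\psi_{\geq2}^{\mathrm{inh}}$, where the inhomogeneous part solves a forced equation with source built from $\psi_0$, and then track the NP charge of each piece separately.

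Next I would carry out the core induction on $r$-weights (the "$r^p$-hierarchy" in the style of Dafermos--Rodnianski, adapted to each angular mode). For each fixed $\ell$, one obtains a chain of weighted energy estimates that upgrade decay in $\tau$ by integer steps; the number of available steps is governed by how many NP charges survive, which is in turn $\ell$-dependent — this is the structural source of the slower decay $(1+\tau)^{-5}$ for $\psi_1$ and $\psi_{\geq2}$ versus $(1+\tau)^{-3}$ for $\psi_0$. The crucial input here is the time inversion theory: I would show that the elliptic operator $\mathcal{L}$ obtained by formally inverting $\partial_\tau$ (roughly $\square_{g_{M,a}}$ restricted to a $\tau$-independent ansatz) admits solutions in the full exterior $r\in[r_+,\infty)$ with prescribed asymptotics, and that the relevant solvability condition is precisely the vanishing of a lower-order NP charge — the nonvanishing charge then becomes $I_\ell^{(1)}$. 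The new elliptic estimates in the black hole exterior, valid despite $\mathcal{L}$ failing to be elliptic inside the ergoregion (where $g(\partial_\tau,\partial_\tau)$ changes sign), are what make this inversion globally well-posed; I would prove these by exploiting the Killing structure, a Carleman-type or frequency-localized argument near the ergoregion, and the redshift effect at $\mathcal{H}^+$.

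Finally I would extract the precise numerology. Having established that $r\psi_\ell(\tau,\infty,\cdot)$ decays like $(1+\tau)^{-q_\ell}$ with a coefficient expressible as an explicit rational multiple of the surviving charge, I would pin down the constants $-2$, $-\frac{4}{3}$, $-\frac{2}{5}$, $\frac{8}{3}\sqrt{\pi/5}$ etc. by integrating the transport equation for $r\psi$ from a late sphere out to $\mathcal{I}^+$ and matching against the conserved-charge normalization; the constants for the fixed-$r_0$ asymptotics then follow from an additional integration in $r$ from $\mathcal{I}^+$ back to $\{r=r_0\}$, which in particular accounts for the extra factor relating the $(1+\tau)^{-2}$ behaviour of $r\psi$ at infinity to the $(1+\tau)^{-3}$ behaviour at finite $r_0$. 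The $\mathcal{O}_{r_0}(\tau^{-p-\eta})$ error terms are obtained by running the same hierarchy one extra step, at the cost of an $\eta$-loss coming from non-integer $r$-weights near the "top" of the hierarchy. I expect the main obstacle to be the elliptic theory for the time inversion in the ergoregion: establishing existence, uniqueness up to the charge, and sharp asymptotics for $\psi^{(1)}$ uniformly down to $\mathcal{H}^+$ when the natural energy is indefinite, while simultaneously keeping precise enough control on the angular-mode coupling to see that it feeds the $\ell\geq2$ asymptotics at exactly the claimed order rather than being absorbed into the error.
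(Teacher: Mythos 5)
Your proposal follows essentially the same architecture as the paper: conserved Newman--Penrose charges extending the $r^p$-hierarchy, an elliptic time-inversion theory whose non-trivial output is $I_\ell^{(1)}$, the $\ell=0\to\ell=2$ coupling producing the $a^2 I_0^{(1)}Y_{2,0}$ coefficient, and extraction of the constants by integrating the transport equation for the renormalized derivatives outward to $\mathcal{I}^+$ and then integrating in $r$ back to $\{r=r_0\}$ using the faster decay of radial derivatives. The one place you deviate is the treatment of the ergoregion: you propose a Carleman-type or frequency-localized argument, whereas the paper avoids this entirely with a purely physical-space ``spacelike redshift'' multiplier estimate --- pairing $\mathcal{L}f$ with $Xf$ (and $X^{\ell+1}\psi_\ell$ with $X^\ell(\mathcal{L}\psi_\ell)$) and exploiting that $\Delta'=2r-2M>0$ on the entire exterior, which yields global coercivity for $\int r(Xf)^2$ without any ellipticity; this is both simpler and what makes the inversion and the $r^{-2k}$-weighted elliptic hierarchy uniform down to $\mathcal{H}^+$.
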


A more detailed version of Theorem \ref{thm:intro1} follows directly by combining Propositions \ref{prop:asympl0NP0boundr}, \ref{prop:asympl1NP0boundr} and \ref{prop:asympl2NP0boundr}. In these propositions we moreover obtain global analogues of the expressions in Theorem \ref{thm:intro1} with additional derivatives $\partial_{\tau}^k$ on both sides, for arbitrary $k\in \N_0$. 

\begin{theorem}
\label{thm:intro2}
Let $\psi$ be a solution arising from  initial data for \eqref{eq:introwaveeq} on $\Sigma_0$ that are smooth with respect to the differentiable structure on the conformal compactification of $\Sigma_0$, with non-zero \emph{Newman--Penrose charges} $I_{\ell}$, $\ell=0,1,2$ (see Section \ref{rm:npcharges}). Then the rates and coefficients in the late-time expansions presented in Theorem \ref{thm:intro1} and Corollary \ref{cor:introosc} are modified according to the expressions displayed in Table \ref{table:nonzeronp}.
\end{theorem}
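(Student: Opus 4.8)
The plan is to trace through the proof of Theorem \ref{thm:intro1} and identify precisely the two places where the hypothesis of compact support (equivalently, vanishing Newman--Penrose charges $I_0, I_1, I_2$) was used, then quantify the correction that a nonzero charge produces at each stage. Recall that in the compactly supported case the asymptotics were generated by the \emph{time-inverted} charges $I_\ell^{(1)}$: one solves $\partial_\tau \psi^{(1)} = \psi$ (modulo lower-order error) using the time-inversion theory, and reads off the leading tail from the Newman--Penrose charge of $\psi^{(1)}$ together with the conserved-charge-along-$\mathcal{I}^+$ mechanism. When the data for $\psi$ itself have a nonzero charge $I_\ell$, the time-inverted solution $\psi^{(1)}$ acquires a logarithmically-growing or polynomially-growing component along $\mathcal{I}^+$, and the clean $\tau^{-p}$ behaviour is replaced by the slower rate dictated directly by $I_\ell$. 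Concretely, I would first establish, for each $\ell \in \{0,1,2\}$, a conditional statement: \emph{if} the lower charges vanish and $I_\ell \neq 0$, \emph{then} the corresponding projection $\psi_\ell$ (or $r\psi_\ell$ at $\mathcal{I}^+$) decays one power slower than in Theorem \ref{thm:intro1}, with leading coefficient an explicit multiple of $I_\ell$.

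Next I would handle the mode coupling. The term $\tfrac{8}{3}\sqrt{\pi/5}\,a^2 I_0^{(1)} Y_{2,0}$ appearing in $\psi_{\geq 2}$ in Theorem \ref{thm:intro1} came from the $a^2$-coupling of the $\ell=0$ tail into the $\ell=2$ sector. When $I_0 \neq 0$, the $\ell=0$ part of $\psi$ itself (not its time inversion) is the dominant source, so one repeats the coupling computation with $I_0^{(1)}$ replaced by the appropriate constant multiple of $I_0$ and with the rate shifted accordingly; similarly a nonzero $I_1$ feeds, at order $a^2$, into the $\ell \geq 3$ and back-reacts on lower modes, and a nonzero $I_2$ contributes directly at the top of the hierarchy. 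Each such contribution is obtained by the same physical-space argument already used in the proof — expand $\square_{g_{M,a}}$ around the Schwarzschild operator, isolate the $a^2$ angular-coupling term, and commute with $\pi_\ell$ — only now applied to data with slowly-decaying charges rather than to $\psi^{(k)}$. The bookkeeping is to assemble, for each entry of Table \ref{table:nonzeronp}, which of $I_0, I_1, I_2$ is the lowest nonvanishing charge and hence sets the rate, and then to collect all coupling contributions that are of the same order in $\tau$.

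The main obstacle I anticipate is the loss of regularity and the loss of ellipticity interacting with nonzero charges simultaneously. In the compactly supported case the time-inversion theory was the tool that converted the presence of a charge into an honest asymptotic expansion; with $I_\ell \neq 0$ one cannot time-invert $\psi$ itself without encountering growth, so one must instead subtract off an explicit leading profile (a fixed function times a known power of $\tau$) and time-invert only the remainder, verifying that the remainder has vanishing charge and lies in the function spaces on which the elliptic estimates of the paper — valid in the full exterior including the ergoregion — apply. Checking that the subtracted profiles are consistent across the event horizon (where the oscillations of Corollary \ref{cor:introosc} live) and across $\mathcal{I}^+$, and that the error terms genuinely close at one order better, is the delicate part; everything else is a matter of propagating the already-established machinery with shifted weights and recording the resulting constants in Table \ref{table:nonzeronp}.
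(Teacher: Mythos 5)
Your proposal inverts the logical structure that actually makes this theorem provable, and the inversion is not merely cosmetic. In the paper, the non-vanishing--charge case (Theorem \ref{thm:intro2}) is the \emph{primary} result: the asymptotics propositions of Sections \ref{sec:latetimeasympl0}--\ref{sec:latetimeasympl2} are stated and proved directly for $\psi$ with $I_\ell[\psi]\neq 0$, by (i) integrating the transport equations $\underline{L}\big(r^{-2\ell}P_\ell\big)=O(r^{-3-2\ell})(\cdots)$ from $\Sigma_0$ in the far region $\mathcal{A}_{\gamma^\alpha}$, using the conservation $\lim_{r\to\infty}2r^2P_\ell=I_\ell$ to pin down the leading term $r^{-2\ell}P_\ell\approx c_\ell I_\ell v^{-2-2\ell}$; (ii) integrating $L\check{\phi}_\ell^{(\ell)}\approx P_\ell$ repeatedly in $v$ to get the $\mathcal{I}^+$ and $\gamma^\alpha$ asymptotics; and (iii) propagating inward by integrating in $\uprho$, using that $X^{\ell+1}\psi_\ell$ decays faster in $\tau$ (Proposition \ref{prop:fasterdecayforpropag}). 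No time inversion appears anywhere in this argument. Theorem \ref{thm:intro1} is then the \emph{corollary}, obtained by applying these propositions to $T^{-1}\psi$. Your plan — start from the compactly supported proof, locate where vanishing charges were used, and repair it by ``subtracting an explicit leading profile and time-inverting the remainder'' — is circular: to carry out the subtraction you must already know the late-time asymptotics of a profile carrying the charges $I_\ell$, which is exactly the content of Theorem \ref{thm:intro2}. If the profile is only approximate, the remainder solves an inhomogeneous equation and the entire energy/elliptic machinery would have to be redone with source terms; none of this is needed in the direct argument.

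A secondary issue is the bookkeeping for the coupled $\ell=2$ entry. It is not the case that ``the lowest nonvanishing charge sets the rate'' with the others collected at the same order: at $\mathcal{I}^+$ both $I_2$ and $a^2I_0$ contribute to the coefficient of $\tau^{-3}$ (via the term $\tfrac{2}{15}I_2-\tfrac{8}{9}\sqrt{\pi/5}\,a^2I_0 Y_{2,0}$), whereas at fixed $r$ only the $a^2I_0Y_{2,0}$ contribution survives at leading order $\tau^{-4}$, the $I_2$ contribution being subleading there. This asymmetry comes out of the direct computation in Lemma \ref{lm:estcouplingl0} and Propositions \ref{prop:mainpropasympl2}--\ref{prop:asympboundrl2}, where the already-derived $\ell=0$ asymptotics are inserted into the modified transport equation for $\widetilde{P}_2$; your heuristic would not reproduce Table \ref{table:nonzeronp} correctly.
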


See Section \ref{sec:foliations} for a precise definition of the conformal compactification of $\Sigma_0$. We directly obtain a more detailed version of Theorem \ref{thm:intro1} by combining Corollary \ref{cor:asympl0nearI} and Propositions \ref{prop:asympl0boundr}, \ref{prop:asympl1nearI}, \ref{prop:asympboundrl1}, \ref{prop:mainpropasympl2} and \ref{prop:asympboundrl2}.

\begin{table}[H]
 \small
\begin{center}
  \begin{tabular}{ |cV{4}c  |c | }                                \hline
   Proj  &  $r=r_0\geq r_{+}$         & $\mathcal{I}^{+}$ \\ 
	\Xcline{1-3}{0.04cm} 
   $\psi_{0}$   & $ 4I_{0}\cdot\frac{1}{\tau^2}$    & $2I_0\cdot\frac{1}{\tau}$ \\ \hline
 $\psi_{1}$ &    $\frac{8r_0}{3}I_{1}(\theta,\varphi_{*})\cdot\frac{1}{\tau^4}$       & $\frac{2}{3}I_{1}(\theta, \varphi_{*})\cdot\frac{1}{\tau^2}$       \\ \hline
  $\psi_{\geq 2}$   &    $ \frac{4}{3}\sqrt{\frac{\pi}{5}}a^2I_{0}\cdot Y_{2,0}(\theta)\cdot\frac{1}{\tau^4}$ & $ \left[\frac{2}{15}I_{2}(\theta,\varphi_{*})-\frac{8}{9}\sqrt{\frac{\pi}{5}}a^2I_{0}\cdot Y_{2,0}(\theta)\right]\cdot\frac{1}{\tau^3}$      \\ \hline
  \end{tabular}
  \end{center}
	\vspace{-0.4cm}
 \caption{Late-time asymptotics of $ \psi_0, \psi_1, \psi_{\geq 2}$ for \underline{non-zero} Newman--Penrose charges $I_{\ell}$}
 \label{table:nonzeronp}
\end{table}
\normalsize

\begin{remark}[Logarithmic next-to-leading order terms]
\label{rm:log}
One can apply the arguments in \cite{logasymptotics} to the setting of Kerr spacetimes, with minimal modifications, to obtain the following higher-order extension of the late asymptotics of $r\psi$ at infinity in Theorem \ref{thm:intro1}:
\begin{equation*}
r\psi(\tau,\infty,\theta,\varphi_*)=-2I_0^{(1)}(1+\tau)^{-2}+8MI_0^{(1)}\log(1+\tau)(1+\tau)^{-3}+\mathcal{O}(\tau^{-3}).
\end{equation*}
See also \cite{baskinw} for additional results pertaining to polyhomogeneity in late-time expansions on asymptotically Minkowksi spacetimes.
\end{remark}

\begin{remark}[Initial data regularity]
The initial data in Theorem \ref{thm:intro1} are assumed to be compactly supported. This assumption is made purely to simplify the expressions of the time-inverted Newman--Penrose charges $I_{\ell}^{(1)}$ in terms of initial data for $\psi$; see Section \ref{sec:timinvNP}. The assumption can be weakened by merely assuming the data have sufficient, finite regularity with respect to the differentiable structure on the conformal compactification of the hypersurface $\Sigma_0$, with moreover \underline{vanishing} Newman--Penrose charges $I_{\ell}$. In fact, one can consider even rougher data by modifying the powers of $r$ appearing in the definitions of the Newman--Penrose charges, which will affect the decay rates; see also the upcoming \cite{Kerrburger} for an example of late-time asymptotics arising from rougher initial data that naturally come up in scattering problems.
\end{remark}

\begin{remark}[Blow-up in the black hole interior]
The results in Theorems \ref{thm:intro1} and \ref{thm:intro2} can be applied to justify the assumptions made along the event horizon in \cite{LukSbierski2016} in order to derive blow-up of the $H^1_{\rm loc}$ norm of $\psi$ at the inner horizon, the future causal boundary of the Kerr black hole interior for $a\neq 0$.
\end{remark}

\begin{remark}[Linearized Einstein equations]
\label{rm:teuk}
The methods that are developed in this paper to prove Theorems \ref{thm:intro1} and \ref{thm:intro2} could be extended naturally to the setting of the spin-2 Teukolsky equations, when combined with the integrated energy estimates in \cite{dhr-teukolsky-kerr, shlcosta20}. As the Teukolsky equations dictate the dynamical behaviour of perturbations of Kerr initial data in the context of the linearized Einstein equations, these methods provide a clear strategy for deriving late-time tails in the evolution of metric perturbations.
\end{remark}

\subsubsection{Newman--Penrose charges}
\label{rm:npcharges}

The \textbf{Newman--Penrose charges} $I_{\ell}$  are the values of \emph{asymptotic quantities} that are defined along the spheres foliating future null infinity and are conserved in time. See point 3 of Section \ref{sec:introendec} for an illustration of this conservation property and see Section \ref{sec:defNPconstants} for the precise definition of the Newman--Penrose charges. They were originally introduced in \cite{np2} on Minkowski background spacetimes and more generally in the context of the nonlinear Einstein equations.

In the Minkowski spacetime, the Newman--Penrose charges can be expressed as the following weighted derivatives of the radiation field:
\begin{equation*}
I_{\ell}[\psi]=(2r^2L)^{\ell}(r\psi_{\ell})(\tau,r=\infty,\theta,\varphi_*),
\end{equation*}
and they are independent of $\tau$. Here, $L$ denotes a standard outgoing null vector field.

In Schwarzschild spacetimes, or Kerr spacetimes with mass $M$ and angular parameter $a=0$, the above quantity is still conserved when $\ell=0$, but for $\ell\geq 1$, it has to be modified as follows:
\begin{equation*}
I_{\ell}[\psi]=r^2L(2r^2 w_{\ell}L(\ldots(2r^2 w_1 L(2r^2L(r\psi_{\ell})))\ldots )(\tau,r=\infty,\theta,\varphi_*),
\end{equation*}
with $w_i$ denoting polynomials in $r^{-1}$ of degree $\ell+1-i$, such that $\lim_{r\to \infty}w_i(r)=1$, which depend on the mass $M$. See also \cite{aagprice} for a more precise inductive definition of $I_{\ell}$ on Schwarzschild. Analogous modified expressions can also be shown to hold in more general spherically symmetric, asymptotical flat spacetimes.

In order to construct $I_{\ell}$ on general Kerr spacetimes, one has to deal with the \emph{coupling} of two difficulties:
\begin{enumerate}
\item A modification of the weighted vector fields $r^2L$ due to the presence of mass $M$,
\item The appearance of the spherical harmonic modes $\psi_{\ell-2}$ and $\psi_{\ell+2}$ in the expression for $I_{\ell}$, due to the presence of angular momentum $a\neq 0$ and the use of Boyer--Lindquist spheres.
\end{enumerate}

Note that difficulty 2 is already present in Minkowski when considering a foliation by oblate spheroids that are asymptotically round, rather than exactly round spheres, i.e. by taking $M=0$ in the Kerr metric, which leads to a spacetime that is isometric to Minkowski.\footnote{The eccentricity of the spheroids is then given by $\frac{|a|}{\sqrt{r^2+a^2}}$.} For example, the charge $I_0$ then takes the form:
\begin{equation*}
I_0[\psi]=\left[2r^2L(r\psi_0)-\frac{1}{2}a^2\pi_0(\sin^2\theta T(r\psi))\right](\tau,r=\infty,\theta,\varphi_*),
\end{equation*}
with $\pi_0$ and $\psi_0$ now denoting projections with respect to ellipsoids, rather than round spheres. The presence of the term $\pi_0(\sin^2\theta T(r\psi))$ above may be viewed as an effect of angular mode coupling in Boyer--Lindquist coordinates at infinity. See Section \ref{rm:decoupling} for more details.

In Kerr, $I_0$ takes the same form as in Minkowski with respect to oblate spheroidal coordinates, but $I_{\ell}$ with $\ell \geq 1$ includes modifications depending on $M$. We derive the Newman--Penrose charges $I_{\ell}$ in Kerr, with $\ell=0,1,2$ in Section \ref{sec:defNPconstants} and the same strategy may be applied inductively to find $I_{\ell}$.

The \textbf{time-inverted Newman--Penrose charges}, which are the quantities $I_{\ell}^{(1)}$ appearing in the leading-order terms in Theorem \ref{thm:intro1} can be interpreted as the Newman--Penrose charges of the function:
\begin{equation*}
\partial_{\tau}^{-1}\psi(\tau,r,\theta,\varphi):=-\int_{\tau}^{\infty} \psi(\tau',r,\theta,\varphi)\,d\tau',
\end{equation*}
the time integral of $\psi$. We moreover show that $I_{\ell}^{(1)}$ can be expressed solely in terms of initial data for $\psi$ on $\Sigma_0$, for example, for $\ell=0$:
\begin{equation*}
I_0^{(1)}=\frac{1}{4\pi}M(r_+^2+a^2)\int_{\Sigma_0\cap\mathcal{H}^+}\psi\,d\omega+\frac{1}{4\pi}M\int_{\Sigma_0} \mathbf{n}_0(\psi)\,d\mu_0,
\end{equation*}
with $\mathbf{n}_0$ the normal vector field with respect to $\Sigma_0$ and $d\mu_0$ the natural volume form with respect to the induced metric on $\Sigma_0$. See Section \ref{sec:timinvNP} for the precise integral expressions of $I_{\ell}^{(1)}$.

\subsubsection{Horizon oscillations}
\label{intro:horosc}
As an immediate corollary of the late-time asymptotics presented in Theorem \ref{thm:intro1}, we derive an oscillatory and decaying behaviour of the $\ell=1$ angular mode of $\psi$ when measured along the null generators of the event horizon.
\begin{corollary}
\label{cor:introosc}
Let $\gamma_{\theta,\varphi_{\mathcal{H}^+}}(\tau)$ denote a null generator of the event horizon $\mathcal{H}^+$ emanating from the point $(\theta,\varphi_{\mathcal{H}^+})$ on the 2-sphere $S^2_{0,r_+}=\Sigma_0\cap \mathcal{H}^+$, with time parameter $\tau$. Then:
\begin{align*}
\psi|_{\gamma_{(\theta,\varphi_*)}(\tau)}=&-8I_0^{(1)}(1+\tau)^{-3}+\mathcal{O}(\tau^{-3-\eta}),\\
r^{-1}\psi_1|_{\gamma_{(\theta,\varphi_*)}(\tau)}=&-\frac{32}{3}\sum_{m=-1}^1{I_{1m}^{(1)}}Y_{1,m}(\theta,\varphi_{\mathcal{H}^+})e^{im \upomega_{+}\tau}(1+\tau)^{-5}+\mathcal{O}(\tau^{-5-\eta}),\\
\psi_{\geq 2}|_{\gamma_{(\theta,\varphi_*)}(\tau)}=&-\frac{16}{3}\sqrt{\frac{\pi}{5}}a^2I_0^{(2)}Y_{2,0}(\theta)(1+\tau)^{-5}+\mathcal{O}(\tau^{-5-\eta}),
\end{align*}
where $\upomega_+=\frac{a}{r_+^2+a^2}$ is the \emph{angular velocity of the black hole} and $I_{\ell m}^{(1)}$ denotes the projection of $I_{\ell}^{(1)}$ to the $m$-th azimuthal mode. \end{corollary}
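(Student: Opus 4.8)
The plan is to deduce Corollary \ref{cor:introosc} directly from Theorem \ref{thm:intro1}, specialised to the hypersurface $\{r=r_0\}$ with $r_0=r_+$, by composing those expansions with the flow along the null generators of $\mathcal{H}^+$, once the generators are written out in the coordinate $\varphi_*$ in which Theorem \ref{thm:intro1} is stated. No new analytic input is needed beyond Theorem \ref{thm:intro1}: the content is purely the relation between $\varphi_*$ and the stationary Killing parameter on the horizon.

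\textbf{Step 1 (the generators in the $(\tau,r,\theta,\varphi_*)$ chart).} First I would describe the null generators of $\mathcal{H}^+=\{r=r_+\}$: they are the integral curves of the horizon Killing field $K=T+\upomega_+\Phi$, with $T=\partial_t$, $\Phi=\partial_\varphi$ in Boyer--Lindquist coordinates and $\upomega_+=\frac{a}{r_+^2+a^2}$. Since $\varphi_*$ differs from $\varphi$ by a function of $r$ only, and since the foliation in Section \ref{sec:geom} is arranged so that $\partial_\tau$ coincides with $T$ along $\mathcal{H}^+$, one finds $K|_{\{r=r_+\}}=\partial_\tau+\upomega_+\partial_{\varphi_*}$. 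Integrating this along $\mathcal{H}^+$ shows that the generator emanating from $(\theta,\varphi_{\mathcal{H}^+})\in S^2_{0,r_+}$, parametrised by $\tau$ as in the statement, is
\[
\gamma_{(\theta,\varphi_{\mathcal{H}^+})}(\tau)=\bigl(\tau,\,r_+,\,\theta,\,\varphi_{\mathcal{H}^+}+\upomega_+\tau\bigr),
\]
so that for any spacetime function $f$ one has $f|_{\gamma_{(\theta,\varphi_{\mathcal{H}^+})}}(\tau)=f(\tau,r_+,\theta,\varphi_{\mathcal{H}^+}+\upomega_+\tau)$; in particular, since $\pi_\ell$ commutes with the rotation generated by $\partial_{\varphi_*}$, the value $\psi_\ell|_{\gamma}(\tau)$ is unambiguously $(\pi_\ell\psi)(\tau,r_+,\theta,\varphi_{\mathcal{H}^+}+\upomega_+\tau)$.

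\textbf{Step 2 (composition with Theorem \ref{thm:intro1}).} Next I would substitute into the $r_0=r_+$ case of Theorem \ref{thm:intro1}. Its expansions hold pointwise on $\{r=r_+\}$ with error terms bounded by $\tau$-weighted initial-data norms uniformly over the spheres $S^2_{\tau,r_+}$, so the $\mathcal{O}(\tau^{-3-\eta})$ and $\mathcal{O}(\tau^{-5-\eta})$ errors persist when the base point moves along $\gamma$. For $\psi$ the leading term $-8I_0^{(1)}(1+\tau)^{-3}$ is angle-independent, yielding the first line. For $\psi_{\geq2}$ the leading angular profile $Y_{2,0}(\theta)$ is axisymmetric, hence invariant under $\varphi_*\mapsto\varphi_*+\upomega_+\tau$, so the third line is inherited from Theorem \ref{thm:intro1} at $r=r_+$. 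For $\psi_1$ I would decompose $I_1^{(1)}(\theta,\varphi_*)=\sum_{m=-1}^{1}I_{1m}^{(1)}Y_{1,m}(\theta,\varphi_*)$ and use the exact factorisation $Y_{1,m}(\theta,\varphi_*)=e^{im\varphi_*}Y_{1,m}(\theta,0)$; substituting $\varphi_*=\varphi_{\mathcal{H}^+}+\upomega_+\tau$ turns each summand into $I_{1m}^{(1)}Y_{1,m}(\theta,\varphi_{\mathcal{H}^+})e^{im\upomega_+\tau}$, and since $r^{-1}=r_+^{-1}$ is constant along $\gamma$ this reproduces the middle line.

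\textbf{Main obstacle.} The only step requiring genuine geometric care is Step 1: verifying that the null generators advance in $\varphi_*$ at rate exactly $\upomega_+$ per unit foliation time, with no extra reparametrisation entering at leading order. This rests on the precise definitions of $\varphi_*$ and of the hyperboloidal foliation in Section \ref{sec:geom}, in particular on the normalisation of $\tau$ along $\mathcal{H}^+$; once this is pinned down the remaining argument is the elementary bookkeeping above. As a consistency check, for $a=0$ one has $\upomega_+=0$ and the oscillatory factors disappear, recovering the non-rotating behaviour.
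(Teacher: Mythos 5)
Your proposal is correct and follows essentially the same route as the paper: the paper also deduces the corollary by rewriting $\varphi_*$ in terms of the co-rotating horizon angle $\varphi_{\mathcal{H}^+}=\varphi_*-\upomega_+v$ (which is constant along the generators since $K(\varphi_{\mathcal{H}^+})=0$) and using $Y_{1,m}(\theta,\varphi_*)=Y_{1,m}(\theta,\varphi_{\mathcal{H}^+})e^{im\upomega_+v}$ together with the fixed-$r$ asymptotics, the $\ell=0$ and $\ell\geq2$ leading terms being axisymmetric and hence unaffected. The only cosmetic difference is that you parametrise the generators by integrating $K=\partial_\tau+\upomega_+\partial_{\varphi_*}$ while the paper works with the conserved angle $\varphi_{\mathcal{H}^+}$ directly; these are equivalent up to the constant phase $e^{im\upomega_+v_0}$ absorbed into the choice of base point on $\Sigma_0\cap\mathcal{H}^+$.
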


We therefore see that the leading-order behaviour of the $\ell=1$ mode in time \emph{encodes} the angular velocity of the black hole in the form of an oscillatory factor $e^{i m \upomega_+ \tau}$. Corollary \ref{cor:introosc} provides the first rigorous derivation of \emph{horizon oscillations} in Kerr spacetimes, which were originally suggested by Barack--Ori in \cite{LBAO99}. 

The presence of horizon oscillations illustrates the different roles played by the Killing vector fields $T$ and $K$. Here, $T$ denotes a suitably normalized Killing vector field that is timelike for large values of $r$, but spacelike near the event horizon, whereas $K$ denotes the Killing vector field that is tangential to the null generators of the horizon and is timelike close to the horizon, but spacelike far away from the horizon. Hence, $T$ provides a ``natural'' choice of time direction far away from the horizon and $K$ provides a natural choice of time direction close to the horizon.  Note that in the coordinate chart $(\tau,r,\theta,\varphi_*)$ that we have chosen, we can simply express $T=\partial_{\tau}$. The two Killing vector fields can be related as follows:
\begin{equation*}
K=T+\upomega_+ \Phi,
\end{equation*}
where $\Phi$ denotes the Killing vector field that generates the axisymmetry of the Kerr spacetime, and $K=T$ when $a=0$.

Horizon oscillations follow from the fact that when $a\neq 0$ the time derivative \textbf{$T\psi$ decays one power faster than $\psi$ to leading order, whereas the time derivative $K\psi$ \underline{does not}}. In fact, using the above relation between $T$ and $K$, we can see that the non-axisymmetric part of  $K\psi$ must agree precisely with $\upomega_+ \Phi \psi$ to leading order in time, which explains the oscillatory behaviour in the $\ell=1$ mode.

We note that it has been suggested that horizon oscillations will similarly appear in the leading-order late-time asymptotics of general solutions to non-zero integer spin Teukolsky equations, see \cite{barack99grav} and Remark \ref{rm:teuk}. They are therefore expected to play a leading-order role in the general late-time horizon dynamics in the setting of linearized gravity and the Maxwell equations on Kerr spacetimes.

\subsubsection{Angular mode coupling}
\label{rm:decoupling}
The spherical harmonic modes in Theorems \ref{thm:intro1} and \ref{thm:intro2} are coupled in their evolution, in contrast with the Schwarzschild case $(a=0)$ where they evolve independently.

Indeed, when defined with respect to the metric on the unit round sphere on Boyer--Lindquist spheres, the angular mode coupling takes the following form:
\begin{equation}
\label{eq:introcouplingmodes1}
\rho^2\square_{g_{M,a}}\psi_{\ell}= a^2 {T^2}(\pi_{\ell}(\sin^2\theta\psi)-\sin^2\theta \pi_{\ell}\psi),
\end{equation}
where we can schematically write the right-hand side as:
\begin{equation}
\label{eq:introcouplingmodes2}
a^2 {T^2}(\pi_{\ell}(\sin^2\theta\psi)-\sin^2\theta \pi_{\ell}\psi)\approx  a^2( c_- T^2\psi_{\ell-2}+c_0T^2\psi_{\ell}+c_+ T^2\psi_{\ell+2}),
\end{equation}
with $c_0,c_-,c_+$ constants, see Lemma \ref{eq:lprojsin} for a precise expression.

If the different angular modes were uncoupled, we would be able to show that modes supported on \emph{higher} $\ell$ decay \emph{faster}. This is precisely the case when $a=0$, see \cite{aagprice}. In view of \eqref{eq:introcouplingmodes1} and \eqref{eq:introcouplingmodes2}, however, we can see heuristically that the decay of $\psi_{\ell}$ is limited by the decay of $T^2\psi_{\ell-2}$. We would therefore have the following schematic relation for the late-time asymptotics of $\psi_{\ell}$:
\begin{align*}
\psi_{\ell}\sim T^2\psi_{\ell-2}\sim\ldots \sim T^{\ell}\psi_0\quad \textnormal{for even $\ell\geq 2$},\\
\psi_{\ell}\sim T^2\psi_{\ell-2}\sim\ldots \sim T^{\ell-1}\psi_1\quad \textnormal{for odd $\ell\geq 3$}.
\end{align*}
So while the decay rate of $\psi_{\ell}$ is limited by the lower angular modes $T^{\ell}\psi_0$ and $T^{\ell-1}\psi_1$, we can use that $T$-derivatives decay faster, as mentioned in Section \ref{intro:horosc}, together with the above heuristics to predict that higher angular frequencies $\psi_{\ell}$ \emph{do} in fact decay faster than lower angular frequencies, but they decay slower compared to the uncoupled setting. 

In this paper, we give a rigorous validation of the above heuristics in the case of the late-time asymptotics of $\ell=2$. We also discuss the general $\ell$ case in Section \ref{intro:pricelaw}.

We conclude that while there is no \emph{a priori} natural choice of 2-spheres to foliate $\mathcal{M}_{M,a}$, Theorems \ref{thm:intro1} and \ref{thm:intro2} imply that the choice of Boyer--Lindquist spheres is natural from the point of view of late-time asymptotics, as it has the favourable property that it results in a ``late-time decoupling'' of the corresponding spherical harmonic modes $\psi_{\ell}$, i.e. the terms in the expansion of $\psi$ in $\tau^{-1}$ at any fixed $r$ up to order $N\in \N$ will be supported only on spherical harmonic modes with angular frequencies $\ell\leq L(N)<\infty$, where we expect $L(N)=\max\{N-3,0\}$.

\subsection{Previous work}
\label{intro:prevwork}
In this section we give an overview of some relevant previous results in the literature on late-time tails and sharp decay estimates for waves on black hole spacetimes.

\subsubsection{Physics literature}

The first discussion on inverse polynomial late-time tails for wave equations on black hole background appeared in a paper of Price \cite{Price1972}, who provided a heuristic argument for the presence of a $\tau^{-2\ell-3}$ tail in Schwarzschild. Late-time tails on spherically symmetric backgrounds have since been discussed frequently in the physics literature, using both heuristic and numerical arguments; see the introduction of \cite{paper4} for an overview.

The first numerical discussion on late-time tails in the context of subextremal Kerr spacetimes appeared in \cite{krivan97}, which was followed by heuristic analyses in \cite{hod99} and \cite{LBAO99}. The latter work suggested the power laws for generic, compactly supported data that are stated in Section \ref{intro:pricelaw1}. Subsequent numerical work on late-time tails focused on the case of initial data supported on a single harmonic mode and has produced conflicting numerology, caused by the difficulty of characterising the evolutionary coupling and excitation of spherical harmonic modes (numerically). We refer to \cite{zengi14,burko14} for the latest  numerical results and corrected predictions of late-time tails and also to the references therein for a more complete history of the problem. As discussed in Section \ref{intro:pricelaw2}, the methods developed in the present paper allow for a final resolution of the differences in the numerology of late-time tails suggested in the literature, by relating the numerology to the vanishing and non-vanishing of time-inverted Newman--Penrose charges, which we expect to align with the numerical results of \cite{zengi14,burko14}.

\subsubsection{Mathematics literature}
The first mathematically rigorous derivation of the leading-order late-time behaviour of waves on black hole background was obtained in \cite{paper1, paper2} for a variety of spherically symmetric backgrounds. The presence of $\ell$-dependent late-time tails on Schwarzschild and more general subextremal Reissner--Nordstr\"om backgrounds is derived in \cite{aagprice}. These results appeal to important ideas and results that appeared in previous literature on (integrated) energy decay estimates, most notably \cite{redshift,newmethod}, see also \cite{lecturesMD}. The methods of \cite{paper1, paper2} have also been adapted to the setting of the Dirac equation on Schwarzschild backgrounds \cite{ma2}.

A proof of the existence of $\tau^{-3}$ tails on Kerr spacetimes was recently established in \cite{hintzprice}, using methods based in Fourier space.

We refer also to \cite{tataru3, metal,hintzprice} for sharp decay estimates for $\psi$ in a linear setting and \cite{MDIR05} in a spherically symmetric nonlinear setting. See the also the results in \cite{dssprice,other1}, investigating the $\ell$-dependence in decay estimates and \cite{ma1} for some sharp decay estimates in the context of the Maxwell equations and a characterisation of the Newman--Penrose charges on Schwarzschild in the Maxwell setting.

Finally, the results in the present paper appeal to the energy boundedness and integrated energy decay estimates that have already been established in Kerr spacetimes in \cite{part3}, which capture the key geometric obstructions to wave decay on Kerr caused by the presence of trapped null geodesics and an ergoregion. We refer to the introduction of \cite{part3} for a complete overview of the history of integrated energy decay estimates on Kerr spacetimes.

\subsubsection{Extremal black hole spacetimes}
The leading-order behaviour of waves on extremal black hole backgrounds features additional interesting geometric phenomena. The first mathematically rigorous proof of the existence and form of late-time tails on extremal Reissner--Nordstr\"om spacetimes was obtained in \cite{paper4} and follows previous numerical and heuristic results, see the introduction of \cite{paper4} for a comprehensive overview. The late-time tails in this setting involve an additional conserved charge that occurs along the event horizon of extremal black hole spacetimes and was first discovered in \cite{aretakis1}. It is connected to the absence of the redshift effect along extremal black hole horizons and may be related to the Newman--Penrose charges, see \cite{hm2012,bizon2012}. See also \cite{extremal-prl} and the subsequent numerical work \cite{burko19,burko21} for discussions on the \emph{signature} of this conserved horizon charge at infinity.

While decay estimates have also been obtained in the setting of rotating extremal black hole spacetimes, for the special case of axisymmetric solutions \cite{aretakis3}, the late-time properties of non-axisymmetric linear waves on extremal Kerr ($|a|=M$) remain open. In this setting, only mode stability has been established in a mathematically rigorous setting in \cite{costa20}, but heuristics have also been provided on the rates of late-time tails for fixed azimuthal modes in \cite{zimmerman1}.

\subsection{Analogues of Price's law in Kerr}
\label{intro:pricelaw}

\subsubsection{Generic initial data}
\label{intro:pricelaw1}
One can apply the methods developed in this paper to extend the results of Theorem \ref{thm:intro1} (and Theorem \ref{thm:intro2}) beyond $\ell=2$ and obtain the following late-time asymptotics for higher spherical harmonic modes:
\begin{align*}
\psi_{\ell=2k}(\tau,r_0,\theta,\varphi_*)=&\:c_{\ell,0} a^kI_0^{(1)}Y_{\ell, 0}(\theta)(1+\tau)^{-\ell-3}+\mathcal{O}_{r_0}(\tau^{-\ell-3-\eta}),\\
r^{-1}\psi_{\ell=2k+1}(\tau,r_0,\theta,\varphi_*)=&\:\sum_{m=-1}^1c_{\ell,m} a^kI_{1m}^{(1)}Y_{\ell, m}(\theta,\varphi_*)(1+\tau)^{-\ell-4}+\mathcal{O}_{r_0}(\tau^{-\ell-4-\eta}),\\
r\psi_{\ell=2k}(\tau,\infty,\theta,\varphi_*)=&\:\sum_{k=0}^{\frac{\ell}{2}}\sum_{m=-2k}^{2k} c_{\ell,k,m} a^{\ell-2k}I_{2k\,m}^{(1)}Y_{\ell, m}(\theta,\varphi_*)(1+\tau)^{-\ell-2}+\mathcal{O}(\tau^{-\ell-2-\eta}),\\
r\psi_{\ell=2k+1}(\tau,\infty,\theta,\varphi_*)=&\:\sum_{k=0}^{\frac{\ell-1}{2}}\sum_{m=-(2k+1)}^{2k+1} c_{\ell,k,m} a^{\ell-(2k+1)}I_{2k+1\,m}^{(1)}Y_{\ell, m}(\theta,\varphi_*)(1+\tau)^{-\ell-2}+\mathcal{O}(\tau^{-\ell-2-\eta}),
\end{align*}
where $c_{\ell,m}$ and $c_{\ell,k,m}$ are dimensionless constants.  We do not provide a proof of the above expressions in the present paper, but we refer to the heuristics in Section \ref{rm:decoupling} and leave them for the interested reader to verify, equipped with the techniques and estimates that are developed in the paper.

The above behaviour suggests in particular the following version of ``Price's law'' at fixed radius for $a\neq 0$: 
\begin{align*}
\psi_{\ell}\sim & \tau^{-\ell-3}\quad  \textnormal{when $\ell$ is odd},\\
\psi_{\ell}\sim & \tau^{-\ell-4}\quad \textnormal{when $\ell$ is even},
\end{align*}
 which differs from Price's law when $a=0$: $\psi_{\ell}\sim \tau^{-2\ell-3}$, see \cite{aagprice}. We note that the above decay rates are consistent with the heuristics in \cite{LBAO99}.

\subsubsection{Initial data restricted to higher angular frequencies and mode excitation}
\label{intro:pricelaw2}
Rather than considering generic initial data supported on all angular frequencies, one can further investigate the mode-coupling mechanism by considering restricted data, supported on angular frequencies $ \geq \ell'$ and then investigate the late-time tails of $\psi_{\ell}$ for each $\ell\in \N_0$. This may be viewed as a study of ``mode excitation'' to leading order in time. 

For $|\ell-\ell'|>2$, it follows easily that $I_{\ell}^{(1)}=0$. In this case, one would have to apply the time inversion procedure $k$ times, for suitable values of $k$, to obtain a non-vanishing \emph{higher-order} time-inverted charge $I_{\ell}^{(k)}[\psi]:=I_{\ell}[\partial_{\tau}^{-k}\psi]$, which would then be relevant for the late-time behaviour. It is straightforward to show via the equation for the time inverse \eqref{eq:timeinvimportant} that if the (smooth and compactly supported) initial data are supported on the angular frequency $\ell'$, with $\ell'\notin \{0,1\}$, the first (generically) non-zero time-inverted charge for $\ell\neq \ell'$ would be 
\begin{equation*}
I_{
\ell}^{(|\ell'-\ell|-1)}[\psi]=I_{\ell}[\partial_{\tau}^{-|\ell'-\ell|+1}\psi],
\end{equation*}
 if $\ell$ and $\ell'$ are both either even or odd (note that $I_{\ell}^{(k)}$ vanishes for all $k$ if $\ell$ and $\ell'$ have opposite parity). Moreover, $I_{\ell'}^{(1)}=I_{\ell'}[\partial_{\tau}^{-1}\psi]\neq 0$, generically. See also Section 9.2 of \cite{paper2} for a discussion on higher-order time inversions in the spherically symmetric context. 

In light of the non-vanishing of the above higher-order time inverted charges, we expect that, analogously to what is described in Section \ref{intro:pricelaw}, one could apply the methods in the present paper to obtain schematically the following late-time tails for initial data supported on angular frequency $\ell'$: let $j=0$ when $\ell'$ is even and $j=1$ when $\ell'$ is odd, then \begin{align}
\label{eq:pricelawharmonicdata1}
\psi_{\ell}|_{r=r_0}\sim & \sum_{\substack{0\leq k\leq \lfloor \frac{\max\{\ell,\ell'-2\}}{2}\rfloor }}I_{2k+j}^{(|\ell'-2k|-1)}\tau^{-\ell-\ell'-1}\quad \textnormal{when}\,\, \ell'\notin \{0,1\},\\
\label{eq:pricelawharmonicdata2}
\psi_{\ell}|_{r=r_0}\sim & I_{\ell'}^{(1)}\tau^{-\ell-\ell'-3}\quad \textnormal{when}\,\,\ell'\in \{0,1\},\\
\label{eq:pricelawharmonicdata3}
r\psi_{\ell}|_{\mathcal{I}^+}\sim & I_{\ell}^{(\ell'-\ell-1)}\tau^{-\ell'} \quad \textnormal{when}\,\,\ell\leq \ell'-2,\\
\label{eq:pricelawharmonicdata4}
 r\psi_{\ell}|_{\mathcal{I}^+}\sim & \sum_{-1\leq k\leq \min\{1,\frac{\ell-\ell'}{2}\}}^1I_{\ell'+2k}^{(1)}\tau^{-\ell-2} \quad \textnormal{when}\,\,\ell\geq \ell',
\end{align}
with $\ell'$ and $\ell$ both either even or odd. Note in particular that \eqref{eq:pricelawharmonicdata2}--\eqref{eq:pricelawharmonicdata4} follow directly from Theorem \ref{thm:intro1} and the expressions in Section \ref{intro:pricelaw1}, whereas \eqref{eq:pricelawharmonicdata1} requires a minor further extension of the methods in the present paper.

The above power laws agree with an extrapolation of the numerical results of \cite{zengi14}, but deviate from earlier suggestions in the literature. See also Section \ref{intro:prevwork}.

\subsection{Main ideas and techniques}
\label{intro:ideas}

In this section, we introduce the main new ideas and techniques involved in proving Theorems \ref{thm:intro1} and \ref{thm:intro2}.

We make use of the following additional notation in this section:
\begin{itemize}
\item We cover $\Sigma_{0}$ with the coordinates $(\uprho,\theta,\varphi_*)$, where $\uprho=r|_{\Sigma_0}$ and employ the chart $(\tau,\uprho,\theta,\varphi_*)$ to cover the spacetime region $\bigcup_{\tau\in[0,\infty)} \Sigma_{\tau}$. Furthermore, we denote the standard volume form on the unit round 2-sphere $\s^2$ by $d\omega$ and the natural volume form corresponding to the induced metric on $\Sigma_{\tau}$ by $d\mu_{\tau}$.
\item We use the notation $T=\partial_{\tau}$ and $X=\partial_{\uprho}$, and we use $L$ and $\underline{L}$ to denote the principal outgoing and ingoing null vector fields in Kerr. We also use $\snabla_{\s^2}$ to denote the covariant derivative on $\s^2$.
\item We the following rescaling of $\psi$:
\begin{equation*}
\phi=\sqrt{r^2+a^2}\psi,
\end{equation*}
where the limit $\phi(\tau,\infty,\theta,\varphi_*)$ is known as the \emph{Friedlander radiation field}. This is a natural quantity to consider for scattering problems; see for example \cite{linearscattering}.
\item We use $\mathbb{T}$ to denote the stress energy momentum tensor corresponding to \eqref{eq:introwaveeq}:
\begin{equation*}
\mathbb{T}(V,W)=V\psi W\psi-\frac{1}{2}g_{M,a}(V,W)(g^{-1}_{M,a})^{\alpha \beta}\partial_{\alpha}\psi\partial_{\beta}\psi.
\end{equation*}
\end{itemize}

We refer to Section \ref{sec:geom} for a more precise introduction to all the above notation.

\subsubsection{Step 1: Energy decay via new hierarchies of $r^p$-weighted estimates}
\label{sec:introendec}
The first step towards determining the leading-order behaviour of $\psi$ is to obtain time decay estimates for energy norms along the leaves of a suitable spacetime foliation.

We consider a foliation by asymptotically hyperboloidal hypersurfaces $\Sigma_{\tau}$ intersecting the event horizon, see Figure \ref{fig:foliationsintro}. We obtain energy decay by using extensions of the original Dafermos--Rodnianski $r^p$-weighted energy method \cite{newmethod}. See \cite{linearscattering} for an application of the Dafermos--Rodnianski $r^p$-weighted energy method in Kerr and \cite{moschidis1} for the case of more general spacetimes.

\begin{enumerate}
\item (\textbf{Dafermos--Rodnianski hierarchy})
Consider first the Dafermos--Rodnianski (D--R) hierarchy, which takes the schematic form
\begin{equation}
\label{eq:introdafrod}
\int_{\tau_1}^{\tau_2}\int_{\Sigma_{\tau}} r^{p-1}(L\phi)^2\,d\omega d\uprho d\tau\lesssim  \int_{\Sigma_{\tau_1}} r^{p}(L\phi)^2\,d\omega d\uprho+\ldots,
\end{equation}
for $0<p\leq 2$, where we have omitted additional angular and $T$-derivatives appearing on the left-hand side and $\ldots$ denotes terms arising from an application of integrated energy (Morawetz) and energy boundedness estimates, which we ignore in the current discussion. 

Taking $p=1$ and applying the mean-value theorem in $\tau$ gives:
\begin{equation*}
\int_{\Sigma_{\tau}} (L\phi)^2\,d\omega d\uprho \lesssim (1+\tau)^{-1}\int_{\Sigma_{\tau_0}} r(L\phi)^2\,d\omega d\uprho+\ldots.
\end{equation*}
Taking $p=2$ and applying the mean value theorem once again (along an appropriate sequence of times) allows us to estimate the right-hand side further and obtain the following uniform energy decay estimates:
\begin{equation*}
\int_{\Sigma_{\tau}} (L\phi)^2\,d\omega d\uprho \lesssim (1+\tau)^{-2}\int_{\Sigma_{\tau_0}} r^2(L\phi)^2\,d\omega d\uprho+\ldots.
\end{equation*}
\item (\textbf{Extending the D--R hierarchy via $r^2L$ derivatives})
The D--R hierarhcy is limited by the restriction $p\leq 2$. To be able to get a decay rate of $(1+\tau)^{-3}$ and higher, we need control over the left-hand side of \eqref{eq:introdafrod} for $p\geq 3$. By an application of a standard Hardy inequality, we can show that:
\begin{equation*}
\begin{split}
\int_{\tau_1}^{\tau_2}\int_{\Sigma_{\tau}} r^{2}(L\phi)^2\,d\omega d\uprho d\tau= &\:\int_{\tau_1}^{\tau_2}\int_{\Sigma_{\tau}} r^{-2}(r^2L\phi)^2\,d\omega d\uprho d\tau\\
\lesssim &\: \int_{\tau_1}^{\tau_2}\int_{\Sigma_{\tau}}  (L(r^2L\phi))^2\,d\omega d\uprho d\tau+\ldots.
\end{split}
\end{equation*}
In the case where $\psi$ is supported on $\ell\geq 1$, we can further estimate: for $0\leq p\leq 2$
\begin{equation*}
\int_{\tau_1}^{\tau_2}\int_{\Sigma_{\tau}} r^{p-1}(L(r^2L\phi_{\geq 1}))^2\,d\omega d\uprho d\tau\lesssim \int_{\Sigma_{\tau_1}} r^{p}(L(r^2L\phi_{\geq 1})^2\,d\omega d\uprho+\boxed{a^2\int_{\tau_1}^{\tau_2}\int_{\Sigma_{\tau}} r^{p-3}(r^2LT^2\phi)^2\,d\omega d\uprho d\tau}\ldots.
\end{equation*}
In fact, more generally, for all $n\leq \ell$ and $0\leq p\leq 2$:
\begin{equation}
\label{eq:introcommhier}
\int_{\tau_1}^{\tau_2}\int_{\Sigma_{\tau}} r^{p-1}(L(r^2L)^{n}\phi_{\geq \ell})^2\,d\omega d\uprho d\tau\lesssim \int_{\Sigma_{\tau_1}} r^{p}(L(r^2L)^n\phi_{\geq \ell})^2\,d\omega d\uprho+\boxed{a^2\int_{\tau_1}^{\tau_2}\int_{\Sigma_{\tau}} r^{p-3}((r^2L)^{n}T^2\phi_{\geq \ell-2})^2\,d\omega d\uprho d\tau}\ldots.
\end{equation}

When $a=0$, we can therefore continue applying the mean-value theorem together with a Hardy inequality and apply \eqref{eq:introcommhier} in order to obtain (at least) $(1+\tau)^{-2-2\ell}$ energy decay for $\phi_{\geq \ell}$. This is done in \cite{aagprice}.

However, when $a\neq 0$, we have to address the boxed term in \eqref{eq:introcommhier}, which reflects the fact that we cannot treat $\phi_{\geq \ell}$ independently from $\phi_{\leq \ell-1}$ (recall that the spherical harmonic modes are coupled).

For example, in the case $\ell=2$, we can rewrite the boxed term in \eqref{eq:introcommhier} as follows:
\begin{equation*}
a^2\int_{\tau_1}^{\tau_2}\int_{\Sigma_{\tau}} r^{p-3}((r^2L)^{2}T^2\phi)^2\,d\omega d\uprho d\tau\lesssim a^2\sum_{k=0}^1\int_{\tau_1}^{\tau_2}\int_{\Sigma_{\tau}} r^{p+3}(L(rL)^kT^2\phi)^2\,d\omega d\uprho.
\end{equation*}

By applying the wave equation \eqref{eq:introwaveeq} we can exchange $T$ derivatives for $L$ and angular derivatives to estimate further:
\begin{equation*}
a^2\sum_{k=0}^1\int_{\tau_1}^{\tau_2}\int_{\Sigma_{\tau}} r^{p+3}(L(rL)^kT^2\phi)^2\,d\omega d\uprho \lesssim a^2\sum_{0\leq k_1+k_2\leq 3}\int_{\tau_1}^{\tau_2}\int_{\Sigma_{\tau}} r^{p-1}|\snabla_{\s^2}^{k_1}L(rL)^{k_2}\phi|^2\,d\omega d\uprho.
\end{equation*}
When then show that the term above can be controlled after showing the estimate \eqref{eq:introdafrod} also holds when $\phi$ is replaced with $\snabla_{\s^2}^{k_1}(rL)^{k_2}\phi$. That is to say, \emph{the (extended) $r^p$-weighted hierarchies remain valid after commuting arbitrarily many times with $rL$ and $\snabla_{\s^2}$}. This favourable commutation property also plays an important role in \cite{moschidis1,volker1}. We stress moreover that this property contrasts strongly with commutation with $r^2L$, which, as outlined above, is more delicate as it is only valid after projecting to suitably higher angular frequencies.

We apply the above strategy for $\ell=1,2,3$ to obtain $(1+\tau)^{-2-2\ell}$ energy decay for $\phi_{\geq \ell}$ and moreover\\ $(1+\tau)^{-2-2\ell-2k}$ energy decay for $T^k\phi$ by exploiting the above exchange of $T$-derivatives for $rL$-derivatives and angular derivatives to extend the hierarchy even further.

See Section \ref{sec:rpest} for a precise discussion of the various $r^p$-weighted hierarchies and Section \ref{sec:edecay} for the arguments that use these hierarchies to obtain energy decay.

\item (\textbf{Maximal length hierarchies via conserved charges at infinity})
In order to obtain sharp energy decay rates, we complement the extended hierarchies above with an additional extension that relies on the existence of conservation charges at future null infinity.

We consider first $\phi_0$. In order to go beyond $p=2$ in the Dafermos--Rodnianski hierarchy \eqref{eq:introdafrod}, we replace $L\phi_0$ in \eqref{eq:introdafrod} with $P_0$, which is defined as follows:
\begin{equation*}
P_0:=L\phi_0-\frac{1}{4}\frac{\Delta}{(r^2+a^2)^2}a^2\pi_0(\sin^2\theta T\phi).
\end{equation*}
Since $P_0$ satisfies schematically
\begin{equation*}
\underline{L}P_0=O(r^{-3})(\phi_0+a^2rL\phi_0+a^2T\phi_0+a^2r L\phi_2)
\end{equation*}
it follows that $\lim_{r\to \infty}r^2{P_0}|_{\Sigma_{\tau}}$ is conserved in $\tau$ and an analogue of \eqref{eq:introdafrod} holds for the extended range $0<p<3$, with $L\phi_0$ replaced by $P_0$. We use the constant $I_0$ to denote these limits and refer to this quantity as the Newman--Penrose charge for $\ell=0$; see Section \ref{rm:npcharges}.

We can similarly replace $(r^2L)\phi_1$ and $(r^2L)^2\phi_2$ by the quantities $P_1$ and $P_2$ (defined precisely in Section \ref{sec:defNPconstants}) to extend \eqref{eq:introcommhier}  with $n=\ell\in\{0,1,2\}$, to $p<3$. See also  \cite{paper1, paper2, aagprice} for a similar extension of the hierarchy of $r^p$-weighted energy estimates in the $a=0$ case. In this case, the corresponding limits $\lim_{r\to \infty}r^2{P_{\ell}}|_{\Sigma_{\tau}}$, which are conserved in $\tau$, are functions of $(\theta,\varphi_*)$ and are denoted by $I_{\ell}(\theta,\varphi_*)$. We refer to them as the Newman--Penrose charges for $\ell=1$ and $\ell=2$.

We therefore conclude the energy decay rates $(1+\tau)^{-3-2\ell-2k}$ for $T^k\phi_{\ell}$, with $\ell=0,1,2$, which we can combine with the already established energy decay rate $(1+\tau)^{-9-2k}$ for $\phi_{\geq 3}$.

The application of the quantities $P_{\ell}$ to $r^p$-weighted energy estimates are derived in Section \ref{sec:rpestPi}.
\end{enumerate}

By a straightforward application of the fundamental theorem of calculus in $\uprho$ (see Appendix \ref{sec:apppoint}) together with standard Sobolev estimates on $\s^2$, we can convert the energy decay estimates above into the following pointwise estimates:
\begin{align*}
|rT^k\psi_{0}|\lesssim &\: (1+\tau)^{-1+\epsilon},\\
|rT^k\psi_{1}|\lesssim &\: (1+\tau)^{-2-k+\epsilon},\\
|rT^k\psi_{2}|\lesssim  &\: (1+\tau)^{-3-k+\epsilon},\\
|rT^k\psi_{\geq 3}|\lesssim  &\: (1+\tau)^{-\frac{7}{2}-k+\epsilon}.
\end{align*}

\textbf{Additional difficulties:} In the above outline, we have suppressed the role of the geometric phenomena of \emph{redshift}, \emph{trapped null geodesics} and \emph{ergoregion} towards establishing energy boundedness and decay in Kerr spacetimes. While these play an important role, their effect on integrated energy estimates has already been dealt with in \cite{part3} and we appeal to these estimates in the present paper. We refer to the introduction of \cite{part3} for a comprehensive discussion.

 Note however that the estimates in \cite{part3} do not distinguish between different spherical harmonic modes $\psi_{\ell}$, so we need to modify them slightly to obtain more refined integrated estimates for $\psi_{\geq \ell}$ to be able to carry out the arguments described above. We give an overview of the necessary energy boundedness and integrated energy decay estimates, including the refined integrated estimates for $\psi_{\geq \ell}$, in Section \ref{sec:prelimen}.

\subsubsection{Step 2: Improved pointwise decay via a hierarchy of elliptic estimates}
\label{sec:introelliptic}
We subsequently improve the pointwise decay rates for $r\psi$ above further by considering the $r$-rescalings: $\psi_0,\psi_2,\psi_{\geq 3}$ and  and $r^{-1}\psi_1$, and we apply in addition a hierarchy of elliptic estimates.

We first rewrite the wave equation \eqref{eq:introwaveeq} as follows;
\begin{equation*}
\mathcal{L}\psi=F[T\psi],
\end{equation*}
with $\mathcal{L}$ defined as the differential operator:
\begin{equation*}
\mathcal{L}\psi=X(\Delta X\psi)+ 2a X\Phi \psi+\slashed{\Delta}_{\s^2}\psi
\end{equation*}
and $F[T\psi]$ containing only $T$-derivatives of $\psi$, see \eqref{eq:inhomo} for the precise expression. In this section, we treat $F[T\psi]$ as an inhomogeneity. 

The operator $\mathcal{L}$ is elliptic when $T$ is a timelike vector field. This is the case outside the ergoregion, i.e. when  $r^2-2Mr+a^2\cos^2\theta> 0$. The loss of ellipticity inside the ergoregion prevents the use of a standard elliptic estimate to control $\psi$ in terms of $F$, which would involve integrating by parts the equation $(\mathcal{L}\psi)^2=F^2$ on $\Sigma_{\tau}$.\footnote{In the $a=0$ case, where ellipticity only fails at the event horizon, this strategy does work and results in elliptic estimates that degenerate at the event horizon. See for example \cite{paper1,aagprice}.}

We establish nevertheless for $k\in (\frac{1}{2},\ell+\frac{1}{2})$:
\begin{equation}
\label{eq:introellhier}
\sum_{n=0}^{\ell} \int_{\Sigma_{\tau}}r^{-2k}\left[ (X(rX)^n \psi_{\ell})^2+r^{-2} |\snabla_{\s^2}(rX)^n \psi_{\ell}|^2\right]\,r^2d\omega d\uprho\lesssim \sum_{n=0}^{\ell}\int_{\Sigma_{\tau}} r^{-2k}( (rX)^{n}({\pi}_{\ell}F))^2\,d\omega d\uprho
\end{equation}
by applying the following estimates:
\begin{enumerate}[A.)]
\item A ``spacelike redshift multiplier estimate", i.e. when $\ell=0$, we integrate by parts on $\Sigma_{\tau}\cap\{r\leq R_0\}$ the product:
\begin{equation*}
X\psi_0\cdot \mathcal{L}\psi_0
\end{equation*}
and more generally,
\begin{equation*}
X^{\ell+1}\psi_{\ell}\cdot X^{\ell}(\mathcal{L}\psi_{\ell}).
\end{equation*}
In contrast with the standard redshift estimates \cite{redshift,lecturesMD}, which involve integrating by parts in spacetime with timelike vector field multipliers, and which are only valid in a neighbourhood of the event horizon, the above spacelike redshift estimates hold with arbitrarily large radial coordinate $r$.
\item An elliptic estimate in the region $r\geq R_0$, with $R_0\gg M$, which follows from integrating
\begin{equation*}
r^{-2k}(X^{\ell}\mathcal{L}\psi)^2=r^{-2k}(X^{\ell}F)^2
\end{equation*}
 where we apply A.) to estimate the boundary terms on $r=R_0$. The cross terms that arise when expanding the square on the left-hand side above are estimated via a Poincar\'e inequality and the range of allowed values for $k$ depends therefore on the angular frequency $\ell$.
 \item  Another spacelike redshift multiplier estimate to control lower-order derivative terms, by integrating
 \begin{equation*}
X^{n+1}\psi_{\ell}\cdot X^{n}(\mathcal{L}\psi_{\ell}),
\end{equation*}
with $n\leq \ell-1$ and applying an induction argument, using A.) and B.).
\end{enumerate}

Filling in the expression for $F[T\psi]$, we therefore obtain the following \emph{elliptic hierarchy} of energy estimates:
\begin{equation}
\label{eq:introellhier}
\begin{split}
\sum_{n=0}^{\ell} \int_{\Sigma_{\tau}}r^{-2k}\mathbb{T}[(rX)^n\psi_{\ell}](T,\mathbf{n}_{\tau})\,d\mu_{\tau}\lesssim &\:\sum_{n=0}^{\ell} \int_{\Sigma_{\tau}}r^{-2k+2}\mathbb{T}[(rX)^nT\psi_{\ell}](T,\mathbf{n}_{\tau})\,d\mu_{\tau}\\
&+a^2\int_{\Sigma_{\tau}}r^{-2k}\left\{\mathbb{T}[(rX)^nT^2\psi_{\ell-2}](T,\mathbf{n}_{\tau})+\mathbb{T}[(rX)^nT\psi_{\ell+2}](T,\mathbf{n}_{\tau})\right\}\,d\mu_{\tau},
\end{split}
\end{equation}
for $k\in (\frac{1}{2},\ell+\frac{1}{2})$.

The elliptic hierarchy allows us to exchange negative $r$ weights for additional $T$ derivatives, which decay faster in time. One may compare this with extra time decay obtained via an extended hierarchy $r^p$-weighted energy estimates, as described in Section \ref{sec:introendec}, where additional time decay follows instead by controlling an additional \underline{time integral} on the left-hand side and applying the mean-value theorem.

In the $a=0$ case, we can apply \eqref{eq:introellhier} starting from $k=\frac{1}{2}-\delta$ to $k=\ell+\frac{1}{2}-\delta$ to obtain a hierarchy of estimates of length $\ell$ and therefore estimate:
\begin{equation*}
\begin{split}
\sum_{n=0}^{\ell} \int_{\Sigma_{\tau}}r^{-2\ell-1-\delta}\mathbb{T}[(rX)^n\psi_{\ell}](T,\mathbf{n}_{\tau})\,d\mu_{\tau}\lesssim &\:\sum_{n=0}^{\ell} \int_{\Sigma_{\tau}}r^{1-\delta}\mathbb{T}[(rX)^nT^{\ell}\psi_{\ell}](T,\mathbf{n}_{\tau})\,d\mu_{\tau},
\end{split}
\end{equation*}
and obtain the sharp decay rate of the weighted energy on the left-hand side, together with the sharp decay rate of $r^{-\ell}\psi_{\ell}$.

When $a\neq 0$, the above strategy can be repeated for $\ell=0$ and $\ell=1$, but it fails for $\ell=2$, due to the presence of the terms involving $\psi_0$ on the right-hand side of \eqref{eq:introellhier} when $\ell=0$, which limits the decay rate and hence the length of the elliptic hierarchy. We refer to Section \ref{sec:elliptic} for a precise discussion of the above arguments.

When $a\neq 0$, we therefore conclude pointwise decay estimates with the following weights:
\begin{align*}
|T^k\psi_{0}|\lesssim &\: (1+\tau)^{-2+\epsilon},\\
|r^{-1}T^k\psi_{1}|\lesssim &\: (1+\tau)^{-4-k+\epsilon},\\
|T^k\psi_{2}|\lesssim  &\: (1+\tau)^{-4-k+\epsilon},\\
|T^k\psi_{\geq 3}|\lesssim  &\: (1+\tau)^{-\frac{9}{2}-k+\epsilon}.
\end{align*}
Note in particular that the decay rate for the $\ell=1$ and $\ell=2$ mode is the same, in contrast with the $a=0$ case.

\subsubsection{Step 3: Elliptic theory of time inversion}
The pointwise decay rates stated in Sections \ref{sec:introendec} and \ref{sec:introelliptic} are (almost) sharp in the case of initial data for which the conserved Newman--Penrose charges $I_{\ell}$ are non-zero, with $\ell=0,1,2$. For initial data that decay faster as $r\to \infty$, such that $I_{\ell}=0$, it is possible to gain one power in the decay rate.

The idea is the following: we can show that $\psi$ can be thought of as being a time derivative of another solution to \eqref{eq:introwaveeq}, i.e.\
\begin{equation*}
T\widetilde{\psi}=\psi,
\end{equation*}
for some suitably regular function $\widetilde{\psi}$ such that $\widetilde{\psi}_{\Sigma_0}\to 0$ as $r\to \infty$ and $\square_{g_{M,a}}\widetilde{\psi}=0$. We denote $T^{-1}\psi:=\widetilde{\psi}$ and refer to it as the \emph{time inverse} of $\psi$.

When the charges $I_{\ell}[\psi]$ are zero, for example, when the initial data for $\psi$ is compactly supported, it turns out that the charges corresponding to $\widetilde{\psi}$ are finite, i.e.\
\begin{equation*}
I_{\ell}[\widetilde{\psi}]<\infty
\end{equation*}
or $\ell=0,1,2$ and, generically, $I_{\ell}[\widetilde{\psi}]\neq 0$. Furthermore, we can express $I_{\ell}[\widetilde{\psi}]$ purely in terms of initial data for $\psi$, see Section \ref{rm:npcharges} and Section \ref{sec:timinvNP}.

We construct the initial data leading to $T^{-1}\psi$ by solving the equation:
\begin{equation*}
\mathcal{L} \widetilde{\psi}=F[\psi].
\end{equation*}
As explained in Section \ref{sec:introelliptic}, the operator $\mathcal{L}$ is elliptic outside the ergoregion. Constructing $\mathcal{L}^{-1}$ requires the use of spacelike redshift multipliers and is reminiscent of the strategy for constructing resolvent operators in \cite{warn15}. Note that in the $a=0$ setting, the spherical modes $\psi_{\ell}$ can be treated independently, so obtaining $\mathcal{L}^{-1}$ for a single mode amounts to solving a standard ODE, see \cite{aagprice}. When $a\neq 0$, however, it is neccessary to construct $\mathcal{L}^{-1}$ acting on the full solution \emph{before} projecting to a fixed spherical harmonic mode.

In Section \ref{sec:timeinv}, we construct $\mathcal{L}^{-1}$ and established $r$-decay properties $T^{-1}\psi|_{\Sigma_0}$ required to apply the time decay arguments sketched in the sections when $\psi$ is replaced with $T^{-1}\psi$.

We then obtain in particular the decay estimates:
\begin{align*}
|\psi_{0}|=|T\widetilde{\psi}_{0}|\lesssim &\: (1+\tau)^{-3+\epsilon},\quad |r\psi_{0}|=|r T\widetilde{\psi}_{0}|\lesssim  (1+\tau)^{-2+\epsilon}\\
|r^{-1}\psi_{1}|=|r^{-1}T \widetilde{\psi}_{1}|\lesssim &\: (1+\tau)^{-4-k+\epsilon},\quad |r\psi_{1}|=|rT \widetilde{\psi}_{1}|\lesssim  (1+\tau)^{-3-k+\epsilon}\\
|\psi_{2}|=|T\widetilde{\psi}_{2}|\lesssim  &\: (1+\tau)^{-5-k+\epsilon},\quad |r\psi_{2}|=|rT\widetilde{\psi}_{2}|\lesssim (1+\tau)^{-4-k+\epsilon}\\
|\psi_{\geq 3}|=|T\widetilde{\psi}_{\geq 3}|\lesssim  &\: (1+\tau)^{-\frac{11}{2}+\epsilon},\quad |r\psi_{\geq 3}|=|rT \widetilde{\psi}_{\geq 3}|\lesssim  (1+\tau)^{-\frac{9}{2}+\epsilon},
\end{align*}
which are sharp up to arbitrarily small $\epsilon$.

\subsubsection{Step 4: Late-time asymptotics and inverse polynomial tails}
In order to derive the precise leading order terms of $\psi$ and $\psi_{\geq \ell}$ with $\ell=1,2$, we need to go beyond the \emph{upper bound estimates} outlined in the previous sections.

We extend the method developed in \cite{paper2} in a spherically symmetric setting to the setting of Kerr spacetimes. The strategy is to derive the leading-order asymptotics for $T^k\widetilde{\psi}$ (with arbitrary $k$) by: 1) extending the conservation and non-vanishing property of $r^2P_{\ell}$ from $r=\ $ to a suitable far-away spacetime region to obtain the late-time asymptotics for $r \psi_{\ell}$ there, and 2) propagating these asymptotics to the rest of the spacetime by using that
\begin{equation*}
X^{1+\ell}\widetilde{\psi}_{\ell}\quad \textnormal{and} \quad T X^{\ell}\widetilde{\psi}_{\ell} 
\end{equation*}
both decay \emph{faster} in time than $X^{\ell}\widetilde{\psi}_{\ell}$ when $\ell=0,1$, $X\widetilde{\psi}_{2}$ and $T\widetilde{\psi}_{2}$ decay faster than $\widetilde{\psi}_2$, so we can simply integrate (multiply times) in the $X$ direction. This allows us to prove Theorem \ref{thm:intro2}. Theorem \ref{thm:intro1} then follows simply from the observation that $\psi=T\widetilde{\psi}$.

 We refer to a more expansive outline of a similar strategy in the spherically symmetric setting in \cite{paper2, aagprice}. The key new difficulty in Kerr in this step is the coupling between spherical harmonic modes $\psi_{\ell}$. See Section \ref{rm:decoupling} for an outline of this phenomenon. More concretely, obtaining late-time asymptotics in the far-away region for $\psi_2$ requires deriving first the late-time asymptotics of $\psi_0$ and plugging those in suitably in the derivation for $\psi_2$. A precise derivation of the late-time asymptotics for the $\ell=0,1,2$ modes and the effects of mode coupling can be found in Sections \ref{sec:latetimeasympl0}--\ref{sec:latetimeasympl2}.

\subsection{Outline of the paper}
We provide below an outline of the remaining sections in the paper.
\begin{itemize}
\item In Section \ref{sec:geom} we introduce the Kerr geometry and the relevant vector fields and spacetime foliations. We moreover state a systematic method for integrating by parts, which we apply to derive all the energy estimates in the paper. Finally, we state and derive some key properties regarding spherical harmonics.
\item In Section \ref{sec:equations} we provide various different forms of the wave equation \ref{eq:introwaveeq} and state the preliminary energy boundedness and integrated energy decay estimates, which we utilise in the rest of the paper.
\item In Section \ref{sec:defNPconstants} we define the Newman--Penrose charges for the $\ell=0,1,2$ modes and derive their conservation properties.
\item In Section \ref{sec:rpest} we derive the main hierarchies of $r^p$-weighted energy estimates.
\item In Section \ref{sec:edecay} we use the $r^p$-weighted energy estimates from Section \ref{sec:rpest} to derive energy decay estimates.
\item In Section \ref{sec:elliptic}, we derive a hierarchy of weighted elliptic estimates.
\item In Section \ref{sec:poinwdecay}, we convert the (weighted) energy decay estimates into pointwise decay estimates.
\item In Section \ref{sec:timeinv}, we construct the time inverse $T^{-1}\psi$ and establish its decay properties along $\Sigma_0$. We moreover define the time-inverted Newman--Penrose charges and derive explicit expressions for them in terms of initial data for $\psi$ on $\Sigma_0$.
\item Finally, in Sections \ref{sec:latetimeasympl0}--\ref{sec:latetimeasympl2}, we use the pointwise estimates from Section \ref{sec:poinwdecay}, applied moreover to both $\psi$ and the time inverse $T^{-1}\psi$ constructed in Section \ref{sec:timeinv}, to derive the precise late-time asymptotics for $\psi_0$, $\psi_1$ and $\psi_{2}$ and hence determine the precise form of the corresponding late-time tails.
\end{itemize}

\subsection{Acknowledgments}
The second author (S.A.) acknowledges support through the NSERC grant 502581 and the Ontario Early Researcher Award.

\section{Preliminaries: geometry}
\label{sec:geom}
In this section, we introduce the Kerr family of spacetimes and the spacetime foliations of interest. We moreover derive a systematic method of integrating by parts to derive energy estimates, and we prove some relevant properties of spherical harmonic decompositions.
\subsection{The Kerr metric}
We introduce in this section the $2$-parameter family of Kerr black hole exteriors $(\mathcal{M}_{M,a},g_{M,a})$ in \emph{ingoing Kerr coordinates}. 

Let $\mathcal{M}_{M,a}=\R_v\times [r_+,\infty)_r\times (\s^2)_{\theta,\varphi_*}$ be a manifold-with-boundary equipped with a coordinate chart $(v,r,\theta,\varphi_*)$ that is global, excluding the standard degeneration of spherical coordinates $(\theta,\varphi_*)$ on $\s^2$. Let $g_{M,a}$ denote the Lorentzian metric:
\begin{equation}
\label{eq:kerrmetric}
\begin{split}
g_{M,a}=&-\rho^{-2}\left(\Delta-a^2\sin^2\theta\right)dv^2+2dvdr-4Ma r \rho^{-2}  \sin^2\theta dv d\varphi_*-2a\sin^2\theta drd\varphi_*+\rho^2d\theta^2\\
&+\rho^{-2}((r^2+a^2)^2-a^2\Delta \sin^2\theta)\sin^2\theta d\varphi_*^2,
\end{split}
\end{equation}
where
\begin{align*}
\Delta=&\:r^2-2Mr+a^2=(r_+-r)(r-r_-),\\
r_{+,-}=&\:M\pm \sqrt{M^2-a^2},\\
\rho^2=&\: r^2+a^2\cos^2\theta.
\end{align*}
We will restrict our considerations to sub-extremal Kerr spacetimes by assuming that $|a|<M$, with $M>0$.\footnote{Extremal Kerr spacetimes correspond to the subclass of spacetimes satisfying $|a|=M$.}

We denote the boundary of $\mathcal{M}_{M,a}$ with $\mathcal{H}^+:=\{r=r_+\}$ and will refer to $\mathcal{H}^+$ as the \emph{(future) event horizon} of the spacetime. The level sets $S^2_{v',r'}=\{v=v',r=r'\}$ are 2-surfaces diffeomorphic to $\s^2$, which are known as the \emph{Boyer--Lindquist spheres}. Note that it is possible to extend the spacetimes $(\mathcal{M}_{M,a},g_{M,a})$ smoothly by attaching a \emph{black hole region} where $r_-<r<r_+$. This extension will not be necessary in the present paper.

On the manifold $\mathcal{M}_{M,a}\setminus \mathcal{H}^+$, one can alternatively consider the \emph{standard Boyer--Lindquist coordinates} $(t,r,\theta,\varphi)$, where
\begin{align*}
t=&\:v-r_*,\\
\varphi=&\:\varphi_*+\int_{r}^{\infty} \frac{a}{\Delta}\,dr' \mod 2\pi,\\
\end{align*}
where $r_*: (r_+,\infty)\to \R$ is a solution to
\begin{equation*}
\frac{dr_*}{dr}=\frac{r^2+a^2}{\Delta},
\end{equation*}
defined uniquely up to a constant.

We introduce also the function $u: \mathcal{M}_{M,a}\setminus \mathcal{H}^+\to \R$, with
\begin{equation*}
u=v-2r_*=t-r_*.
\end{equation*}

It will be convenient to state moreover the inverse metric $g^{-1}_{M,a}$:

\begin{equation}
\label{eq:invmetric}
\begin{split}
g^{-1}_{M,a}=&\:a^2\rho^{-2}\sin^2\theta \partial_v\otimes \partial_v+\rho^{-2}(r^2+a^2)[ \partial_v\otimes \partial_r+ \partial_r\otimes \partial_v]+\Delta \rho^{-2}\partial_r\otimes \partial_r\\
&+a\rho^{-2} [(\partial_v+\partial_r)\otimes \partial_{\varphi_*}+\partial_{\varphi_*}\otimes (\partial_v+\partial_r)]+\rho^{-2}[\partial_{\theta}\otimes \partial_{\theta}+\sin^{-2}\theta \partial_{\varphi_*}\otimes\partial_{\varphi_*}].
\end{split}
\end{equation}

\subsection{Vector fields}
\label{sec:vf}
The following vector fields are Killing vector fields with respect to $g_{M,a}$:
\begin{align*}
T=&\:\partial_v,\\
\Phi=&\:\partial_{\varphi_*},\\
K=&\: T+ \upomega_+\Phi,
\end{align*}
where $\upomega_+=\frac{a}{r_+^2+a^2}$ may be interpreted as the ``angular velocity of the Kerr black hole''. Furthermore, $K$ is tangent to $\mathcal{H}^+$.

It can easily be verified that the vector field $T$ is timelike for all $(v,r,\theta,\varphi_*)$ satisfying the condition:
\begin{equation*}
\Delta-a^2\sin^2\theta>0.
\end{equation*}
The subset $\{\Delta-a^2\sin^2\theta< 0\}$, where $T$ fails to be causal is called the \emph{ergoregion}. Note that
\begin{equation*}
\{\Delta-a^2\sin^2\theta< 0\} \subset \{r<2M\}.
\end{equation*}
 
We will introduce the following additional vector fields, which will play an important role in later analysis:
\begin{align*}
Y=&\:\partial_r,\\
\underline{L}=&\:-\frac{\Delta}{2(r^2+a^2)}Y,\\
L=&\: T+\frac{a}{r^2+a^2}\Phi-\underline{L}.
\end{align*}
The vector fields $L$ and $\underline{L}$ are null and define the \emph{principal null directions}.

Note moreover the following identities
\begin{align*}
L(v)=&\:1,\\
\underline{L}(u)=&\: 1,\\
g(L,\underline{L})=&-\frac{\Delta \rho^2}{2(r^2+a^2)^2},\\
[\underline{L},L]=&\:\frac{a r \Delta}{(r^2+a^2)^{3}} \Phi,\\
\underline{L}(\varphi_*)=&\:0,\\
L(\varphi_*)=&\:\frac{a}{r^2+a^2}.
\end{align*}
We introduce moreover the \emph{horizon azimuthal angle}
\begin{equation}
\label{eq:phihor}
{\varphi}_{\mathcal{H}^+}=\varphi_*-\upomega_+v,
\end{equation}
which satisfies $K({\varphi}_{\mathcal{H}^+})=0$, i.e.\ it is constant along the null generators of the future event horizon $\mathcal{H}^+$.

Finally, we introduce some notation for angular derivatives. Consider $\s^2$ equipped with standard spherical coordinates $(\theta,\varphi_*)$. We denote with $\snabla_{\s^2}$ the covariant derivative (Levi--Civita connection) on the unit round sphere $\s^2$. The corresponding Laplacian $\slashed{\Delta}_{\s^2}$ takes the following form in $(\theta,\varphi_*)$ coordinates:
\begin{equation*}
\slashed{\Delta}_{\s^2}(\cdot)=\frac{1}{\sin  \theta}\partial_{\theta}(\sin \theta \partial_{\theta}(\cdot))+\frac{1}{\sin^2\theta}\partial_{\varphi_*}^2(\cdot).
\end{equation*}

\subsection{Foliations and conformal coordinates}
\label{sec:foliations}
It will be convenient to keep track of decay with respect to the coordinate $r$ via the following notation: let $f :[r_+,\infty)_r\to \R$ and $k\in \Z$. We write $f=O(r^{-k})$ when there exists a constant $C>0$, depending only on $M$ and $a$ such that $|f(r)|\leq Cr^{-k}$ for all $r\in [r_+,\infty)$. We write $f=O_N(r^{-k})$ when moreover $\frac{d^n f}{dx^n} =O(r^{-k-n})$ for all $0\leq n\leq N$, where $f$ is assumed to be suitably regular. Finally, we write $f= O_{\infty}(r^{-k})$ if $f=O_N(r^{-k})$ for all $N\in \N_0$.

Consider the following \emph{time function} on $\mathcal{M}_{M,a}$:
\begin{equation*}
\tau=v-\int_{r_+}^r h(r')\,dr'-v_0,
\end{equation*}
where $v_0\in \R_{>0}$ and $h: [r_+,\infty)\to \R$ is a smooth, non-negative function, such that
\begin{align}
\label{eq:condh1}
\frac{2(r^2+a^2)}{\Delta}-h(r)=&\:O(r^{-2 }),\\
\label{eq:condh2}
h(r)\left(\frac{2(r^2+a^2)}{\Delta}-h(r)\right)> &\:a^2\Delta^{-1}\sin^2\theta.
\end{align}
By construction, $T(\tau)=1$.

Let $\hat{h}:[0,\frac{1}{r_+}]_x\to \R$, with 
\begin{equation*}
\hat{h}(x)=2(x^{-2}+a^2)-h(x^{-1})(x^{-2}-2Mx^{-1}+a^2). 
\end{equation*}

Then $\hat{h}(x)=O(x^0)$ by \eqref{eq:condh1}, where we take $O(x^k)$, $O_N(x^k)$ and $O_{\infty}(x^k)$ to have the same meaning as above with $x$ replacing $r$.

We will make the following additional assumptions for the sake of convenience:
\begin{align}
\label{eq:condh3}
\hat{h}\in &\:C^{\infty}([0,r_+^{-1}]),\\
\label{eq:condh4}
\hat{h}(0)=:&\:h_0> 0.
\end{align}
Smoothness of $\hat{h}$ is not a necessary assumption and the main results below are still valid with a lower regularity assumption on $\hat{h}$, in particular they hold also when $\hat{h}\in C^{N_0}([0,r_+^{-1}])$, with $N_0\in \N_0$ chosen suitably large.

Define the following 1-parameter family of hypersurfaces:
\begin{equation*}
\Sigma_{\tau'}:=\{\tau=\tau'\}.
\end{equation*}
We denote $\Sigma:=\Sigma_0$ and $\mathcal{R}=J^+(\Sigma)\cap J^-(\mathcal{H}^+)$.

Let $R>2M$ be an arbitrarily large radius. Then we introduce the notation:
\begin{equation*}
D_{\tau_1}^{\tau_2}:=\bigcup_{\tau_1\leq \tau\leq \tau_2} \Sigma_{\tau} \cap \{r\geq R\}.
\end{equation*}

We moreover denote $N_{\tau}=\Sigma_{\tau}\cap\{r\geq R\}$.

Let $\uprho=r|_{\Sigma}$. Then $(\uprho,\theta,\varphi_*)$ defines a coordinate chart on $\Sigma$ and $(\tau,\uprho,\theta,\varphi_*)$ defines a coordinate chart on $\mathcal{R}$. We moreover introduce the notation:
\begin{equation*}
X:=(\partial_{\uprho})_{\Sigma_{\tau}}=Y+h T.
\end{equation*}

Note that we can express the metric $g_{M,a}$ as follows in $(\tau,\uprho,\theta,\varphi_*)$ coordinates:
\begin{equation*}
\begin{split}
g_{M,a}=&-\rho^{-2}\left(\Delta-a^2\sin^2\theta\right)d\tau^2-2\left(\rho^{-2}h(\Delta-a^2\sin^2\theta)-1\right)d\tau d\uprho+\left(2\rho^2-h(\Delta-a^2\sin^2\theta)\right)\rho^{-2}hd\uprho^2\\
&-4Ma r \rho^{-2}  \sin^2\theta d\tau d\varphi_*-2(a+2Mar \rho^{-2} h)\sin^2\theta d\uprho d\varphi_*+\rho^2d\theta^2+\rho^{-2}((r^2+a^2)^2-a^2\Delta \sin^2\theta)\sin^2\theta d\varphi_*^2.
\end{split}
\end{equation*}

We further define the function
\begin{equation*}
s=v-2r_*+\int_{r_+}^r h(r')\,dr'+v_0=\tau-2r_*+2\int_{r_+}^r h(r')\,dr'+2v_0.
\end{equation*}
and consider the corresponding level sets
\begin{equation*}
I_{s'}:=\{s=s'\}\cap \mathcal{R}.
\end{equation*}
We can equip the level sets $I_s$ with the coordinate chart $(\tau,\theta,\varphi_*)$, and we express:
\begin{equation*}
\begin{split}
(\partial_{\tau})_{I_s}=&\:T-\frac{1}{2}\frac{1}{h-\frac{r^2+a^2}{\Delta}}X,\\
=&\:-\frac{1}{2}\frac{\frac{2(r^2+a^2)}{\Delta}-h}{h-\frac{r^2+a^2}{\Delta}}T-\frac{1}{2}\frac{\Delta}{r^2+a^2+(\Delta h-2(r^2+a^2))}Y.
\end{split}
\end{equation*}

See Figure \ref{fig:foliations} below for the relevant pictorial representations.

\begin{figure}[H]
	\begin{center}
\includegraphics[scale=0.5]{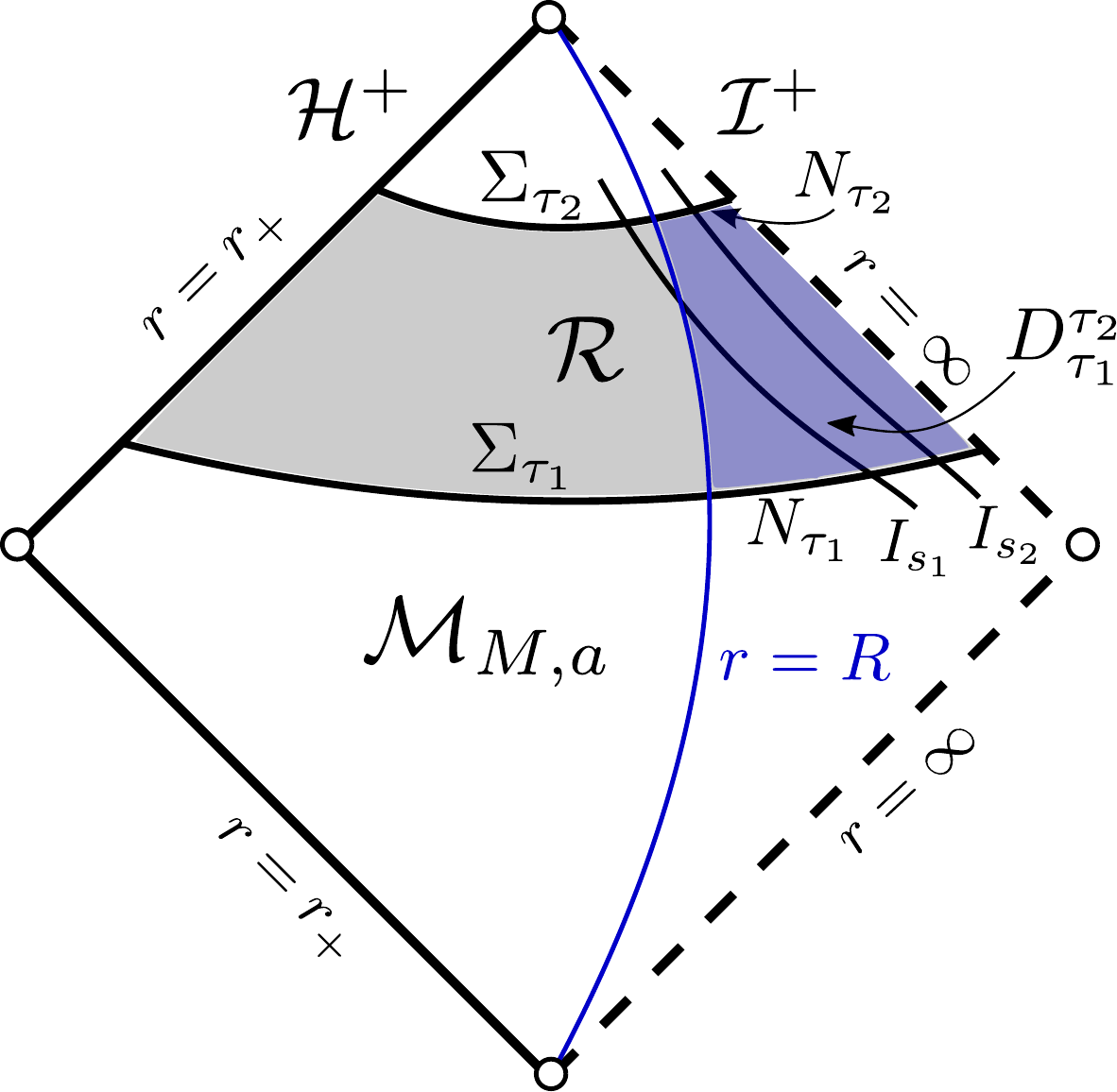}
\end{center}
\vspace{-0.2cm}
\caption{A 2-dimensional representation of the spacetime $\mathcal{M}_{M,a}$ and the hypersurfaces $\Sigma_{\tau_i}$, $I_{s_i}$, with $i=1,2$ and $\tau_1<\tau_2$, $s_1<s_2$. Each point in the picture represents a Boyer--Lindquist sphere $S^2_{v,r}$ and the hypersurface $\mathcal{I}^+$ is depicted at a finite distance.}
	\label{fig:foliations}
\end{figure}

The lemma below establishes the key causual properties of the hypersurfaces $\Sigma_{\tau}$, which are also represented pictorially in Figure \ref{fig:foliations}.

\begin{lemma}
\label{lm:propSigmat}
The 1-parameter family $\{\Sigma_{\tau}\}_{\tau\geq 0}$ and the level sets $I_s$ satisfy the following properties:
\begin{enumerate}
\item $\Sigma_{\tau}$ and $I_{s}$ (if non-empty) are spacelike for all $\tau,s\geq 0$,\\
\item $\Sigma_{\tau}$ is isometric to $\Sigma_{\tau'}$ for all $\tau,\tau'\geq 0$,\\
\item $\Sigma_{\tau}\cap \mathcal{H}^+=\{v=\tau+v_0\}\cap \mathcal{H}^+=S^2_{\tau+v_0,r_+}$,\\
\item For $v_0$ sufficiently large, depending on $h$, there exists $u_0>0$ such that $\Sigma_0 \subset J^+(\{u=u_0\})$.
\end{enumerate}
\end{lemma}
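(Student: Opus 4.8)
The plan is to verify the four assertions directly from the definitions of $\tau$, $h$, $s$, and the explicit form of the metric $g_{M,a}$ in $(\tau,\uprho,\theta,\varphi_*)$ coordinates, using the structural assumptions \eqref{eq:condh1}--\eqref{eq:condh4} on $h$.

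For item 1, I would compute $g_{M,a}^{-1}(d\tau,d\tau)$ and $g_{M,a}^{-1}(ds,ds)$ and show they are negative. Using $\tau = v - \int_{r_+}^r h\,dr' - v_0$, so $d\tau = dv - h\,dr$, one gets from \eqref{eq:invmetric} that $g^{-1}(d\tau,d\tau) = a^2\rho^{-2}\sin^2\theta - 2\rho^{-2}(r^2+a^2)h + \Delta\rho^{-2}h^2 = \rho^{-2}\big(a^2\sin^2\theta - h(2(r^2+a^2) - \Delta h)\big) = -\rho^{-2}\Delta^{-1}\big(h(2(r^2+a^2) - \Delta h) - a^2\Delta\sin^2\theta\big)\cdot\Delta$; rearranging, negativity is precisely condition \eqref{eq:condh2} after multiplying through by $\Delta > 0$ (valid since $r > r_+$; at $r = r_+$ one checks $\Sigma_\tau$ meets $\mathcal{H}^+$ transversally, handled separately or by noting $h$ stays finite and $\Delta h^2 \to 0$). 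For $I_s$, since $s = v - 2r_* + \int_{r_+}^r h\,dr' + v_0$, we have $ds = dv - (\tfrac{2(r^2+a^2)}{\Delta} - h)\,dr$; writing $q := \tfrac{2(r^2+a^2)}{\Delta} - h$, a similar computation gives $g^{-1}(ds,ds) = \rho^{-2}(a^2\sin^2\theta - q(2(r^2+a^2) - \Delta q))$. Substituting $q = \tfrac{2(r^2+a^2)}{\Delta} - h$ and simplifying, $2(r^2+a^2) - \Delta q = \Delta h$, so $g^{-1}(ds,ds) = \rho^{-2}(a^2\sin^2\theta - \Delta h q) = -\rho^{-2}\Delta^{-1}(\Delta^2 h q - a^2\Delta\sin^2\theta)$, and since $\Delta h q = h(2(r^2+a^2) - \Delta h)$ this is again negative by \eqref{eq:condh2}. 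Hence both families are spacelike where non-empty.

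For item 2, the isometry is the flow of the Killing field $T = \partial_v = \partial_\tau$: since $T(\tau) = 1$, the time-$c$ flow of $T$ maps $\Sigma_\tau$ diffeomorphically onto $\Sigma_{\tau+c}$, and being a Killing flow it is an isometry, so $\Sigma_\tau \cong \Sigma_{\tau'}$ for all $\tau, \tau' \geq 0$. For item 3, on $\mathcal{H}^+ = \{r = r_+\}$ we have $\int_{r_+}^{r_+} h\,dr' = 0$, so $\tau = v - v_0$ on $\mathcal{H}^+$; thus $\Sigma_\tau \cap \mathcal{H}^+ = \{v = \tau + v_0\} \cap \mathcal{H}^+ = S^2_{\tau + v_0, r_+}$, which is immediate.

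The main obstacle is item 4. Here I would track the function $u = v - 2r_*$ along $\Sigma_0 = \{v = \int_{r_+}^r h\,dr' + v_0\}$: along $\Sigma_0$, $u(r) = \int_{r_+}^r h(r')\,dr' + v_0 - 2r_*(r)$. Using $\tfrac{dr_*}{dr} = \tfrac{r^2+a^2}{\Delta}$ and \eqref{eq:condh1}, which says $h(r) - \tfrac{2(r^2+a^2)}{\Delta} = -q(r) = O(r^{-2})$, we get $\tfrac{d}{dr}u|_{\Sigma_0} = h(r) - \tfrac{2(r^2+a^2)}{\Delta} = O(r^{-2})$, so $u|_{\Sigma_0}$ is bounded — it converges as $r \to \infty$ to $u_\infty := v_0 - 2r_*(r_+^{\mathrm{ref}}) + \int_{r_+}^\infty(h - \tfrac{2(r^2+a^2)}{\Delta})\,dr$ modulo the choice of integration constant in $r_*$, and its minimum over $[r_+,\infty)$ is attained at a finite value. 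The point is that this minimum, call it $u_0(v_0)$, is a fixed affine-shifted function of $v_0$ of the form $v_0 + (\text{const depending only on } h, M, a)$, hence by choosing $v_0$ large enough we arrange $u_0 > 0$; then $\Sigma_0 \subset \{u \geq u_0\} \subset J^+(\{u = u_0\})$ since $u$ is a retarded null coordinate increasing to the future on and near $\Sigma_0$ (one checks $\nabla u$ is past-directed null, or equivalently that $\{u = u_0\}$ is an ingoing null cone and $\Sigma_0$ lies to its future). The care needed is in fixing conventions for the constant in $r_*$ and verifying the causal-side statement ``$\{u \geq u_0\} \subset J^+(\{u=u_0\})$'' near $\Sigma_0$, which follows from $g^{-1}(du,du) = 0$ with $\nabla u$ past-directed — a short computation from \eqref{eq:invmetric}.
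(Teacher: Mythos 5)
Your proposal follows essentially the same route as the paper: items 1--3 are verified exactly as in the text (computing $g^{-1}(d\tau,d\tau)$ and $g^{-1}(ds,ds)$ and invoking \eqref{eq:condh2}, the Killing flow of $T$, and $\int_{r_+}^{r_+}h=0$), and for item 4 you track $u|_{\Sigma_0}(r)=v_0+\int_{r_+}^r h\,dr'-2r_*(r)$ and use \eqref{eq:condh1} to show $\frac{d}{dr}u|_{\Sigma_0}=O(r^{-2})$ is integrable, so that $\inf_{\Sigma_0}u=v_0+c_0$ with $c_0$ depending only on $h,M,a$, which is made positive by choosing $v_0$ large. This is the paper's argument (the paper anchors at $r=2r_+$ instead of discussing the infimum, but the content is identical). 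One small caveat: the infimum need not be \emph{attained} at finite $r$ (it may be the limit at infinity), but finiteness is all you use.

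There is, however, one factual slip in your final causal step. You propose to verify $\{u\geq u_0\}\subset J^+(\{u=u_0\})$ by checking that $g^{-1}(du,du)=0$ with $\nabla u$ past-directed null. In Kerr with $a\neq 0$ this is false: from \eqref{eq:invmetric}, with $du=dv-\tfrac{2(r^2+a^2)}{\Delta}dr$, one finds
\begin{equation*}
g^{-1}(du,du)=a^2\rho^{-2}\sin^2\theta\geq 0,
\end{equation*}
so the level sets of $u=v-2r_*$ are \emph{timelike} away from the axis (the principal null congruences are not hypersurface-orthogonal for $a\neq 0$). The desired inclusion still holds, but for a different reason: by the identities in Section \ref{sec:vf}, $\underline{L}(u)=1$ with $\underline{L}$ future-directed null, so from any point with $u>u_0$ the past-directed integral curve of $\underline{L}$ decreases $u$ at unit rate (while increasing $r$ only boundedly) and reaches $\{u=u_0\}$ in finite parameter. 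Replacing your null-gradient argument with this observation closes the step. Note that the paper itself leaves this causal inclusion implicit, so your instinct to verify it is sound; only the proposed verification needs correcting.
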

\begin{proof}
Property 3. follows directly from the definition of $\tau$. Property 2. follows from the fact that each $\Sigma_{\tau}$ can be obtained from $\Sigma_0$ along the flow corresponding to the Killing vector field $T$.  

By \eqref{eq:invmetric}, it follows that
\begin{equation*}
g^{-1}(d\tau,d\tau)=g^{-1}(dv-hdr,dv-hdr)=a^2\rho^{-2}\sin^2\theta-\Delta \rho^{-2}h\left(\frac{2(r^2+a^2)}{\Delta}-h\right),
\end{equation*}
so by \eqref{eq:condh2}, Property 1. must also hold for $\Sigma_{\tau}$. The same conclusion follows for $I_s$:
\begin{equation*}
g^{-1}(ds,ds)=g^{-1}(dv+(h-2(r^2+a^2)\Delta^{-1})dr,dv+(h-2(r^2+a^2)\Delta^{-1})dr)=a^2\rho^{-2}\sin^2\theta-\Delta \rho^{-2}h\left(\frac{2(r^2+a^2)}{\Delta}-h\right)<0.
\end{equation*}

In order to infer that property 4. holds, we first observe that
\begin{equation*}
u|_{\Sigma_0}(r)=v|_{\Sigma_0}(r)-2r_*(r)=v_0+\int_{r_+}^r h(r')\,dr'-2r_*(r).
\end{equation*}
so $u|_{\Sigma_0}(2r_+)=v_0-\int_{r_+}^{2r_+}h(r')\,dr'-2r_*(2r_+)>\frac{1}{2}v_0$ for $v_0$ suitably large. Then for $r\geq 2r_+$
\begin{equation*}
u|_{\Sigma_0}(r)=u|_{\Sigma_0}(2r_+)+\int_{2r_+}^r (h-\frac{2(r'^2+a^2)}{\Delta})(r')\,dr'\to u_0\in \R\quad \textnormal{as $r\to \infty$},
\end{equation*}
as the integral above is well-defined by the asymptotics in \eqref{eq:condh1}. For $v_0$ suitably large, we can additionally ensure positivity of $u_0$.
\end{proof}

Let $\mathbf{n}_{\tau}$ denote the future-directed unit normal vector field with respect to $\Sigma_{\tau}$. We may extend $\mathbf{n}_{\tau}$ as a vector field on $\mathcal{R}$ and introduce the vector field $N$ as follows via a smooth cut-off function:
\begin{align}
N=&\:\mathbf{n}_{\tau} && \textnormal{when $r\leq 3M$},\\
N=&\: T && \textnormal{when $r\geq 4M$}.
\end{align}
such that $N$ is timelike everywhere.\footnote{We used here that $g(T,T)<0$ for $r>2M$, i.e. outside the ergoregion.} It follows moreover that exists a constant $c=c(M,a)>0$, such that $g(N,N)\leq -c$.

From Lemma \ref{lm:propSigmat} it follows that $\{\Sigma_{\tau}\}_{\tau\geq 0}$ defines a foliation of $\mathcal{R}$ by isometric spacelike hypersurfaces. We can express $d\mu$, the natural volume form with respect to $g_{M,a}$, as follows:
\begin{equation*}
d\mu:=\sqrt{\det g_{M,a}}dv dr d\theta d\varphi_*=\rho^2 d\omega dv dr= \rho^2 d\omega d\tau d\uprho,
\end{equation*}
with $d\omega=\sin \theta d\theta d\varphi_*$ the natural volume form on $\s^2$.

Various estimates on the induced volume forms on $\Sigma_{\tau}$ and $I_s$ that are relevant in the context of divergence theorems are contained in Section 2.1.2 of \cite{dhr-teukolsky-kerr}. For the sake of completeness and notational convention, we prove these estimates in Lemma \ref{lm:volumeforms} below.

\begin{lemma}
\label{lm:volumeforms}
The volume forms $d\mu_{\tau}:=\sqrt{\det g|_{\Sigma_{\tau}}}d\uprho d\theta d\varphi_*$ and $d\mu_{I_s}:=\sqrt{\det g|_{I_s}}d\tau d\theta d\varphi_*$ satisfy the following properties:
\begin{enumerate}[\rm (i)]
\item We can express 
\begin{align*}
d\mu_{\tau}=&\:m_1(\uprho,\theta) rd\omega d\uprho,\\
d\mu_{I_s}=&\:m_2(\uprho,\theta) rd\omega d\tau && \textnormal{in $\{r\geq r_+\}$},
\end{align*}
with $m_i: [r_+,\infty)\times (0,\pi)\to \R$, $i=1,2$, smooth functions satisfying:
\begin{equation*}
c_h\leq m_i\leq C_h,
\end{equation*}
for positive constants $c_h,C_h$, depending on $M$, $a$ and the choice of $h$.
\item
We can moreover express:
\begin{align*}
-g\left(L,\mathbf{n}_{\tau}\right)\sqrt{\det g|_{\Sigma_{\tau}}}=&\:m_{L,1}(\uprho,\theta) \sin \theta,\\
-g\left(\frac{r^2+a^2}{\Delta}\underline{L},\mathbf{n}_{\tau}\right)\sqrt{\det g|_{\Sigma_{\tau}}}=&\:m_{\underline{L},1}(\uprho,\theta) r^2\sin \theta,\\
-g\left(L,\mathbf{n}_{I_s}\right)\sqrt{\det g|_{I_s}}=&\:m_{L,2}(\uprho,\theta) r^2\sin \theta,\\
-g\left(\frac{r^2+a^2}{\Delta}\underline{L},\mathbf{n}_{I_s}\right)\sqrt{\det g|_{I_s}}=&\:m_{\underline{L},2}(\uprho,\theta) \sin \theta,
\end{align*}
with
$m_{L,i}, m_{\underline{L},i}: [r_+,\infty)\times (0,\pi)\to \R$, $i=1,2$, smooth functions satisfying:
\begin{equation*}
c_h\leq m_{L,i}, m_{\underline{L},i}\leq C_h,
\end{equation*}
for $c_h,C_h>0$.
\item The following divergence identities hold:
\begin{align*}
\nabla_{\alpha}\left(\rho^{-2}\frac{r^2+a^2}{\Delta}L^{\alpha}\right)=&\:0,\\
\nabla_{\alpha}\left(\rho^{-2}\frac{r^2+a^2}{\Delta}\underline{L}^{\alpha}\right)=&\:0.
\end{align*}
\end{enumerate}
\end{lemma}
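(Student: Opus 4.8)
The statement is a collection of explicit computations verifying that the induced volume forms on the foliations $\Sigma_\tau$ and $I_s$, together with the pairings of the principal null vector fields against the unit normals, have the stated structure with uniformly bounded coefficients, and that two particular rescaled null vector fields are divergence-free. I would organize the proof around these three parts.

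\emph{Part (iii): the divergence identities.} I would begin here, as it is the cleanest. The point is that both $\rho^{-2}\frac{r^2+a^2}{\Delta}L^\alpha$ and $\rho^{-2}\frac{r^2+a^2}{\Delta}\underline{L}^\alpha$ are geodesic (up to scaling) principal null congruences with a canonical normalization. Concretely, recall $\underline{L}=-\frac{\Delta}{2(r^2+a^2)}\partial_r$, so $\rho^{-2}\frac{r^2+a^2}{\Delta}\underline{L}=-\frac{1}{2}\rho^{-2}\partial_r$, and $\nabla_\alpha(\rho^{-2}\partial_r^\alpha)=\frac{1}{\sqrt{-\det g}}\partial_r(\sqrt{-\det g}\,\rho^{-2})$. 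Using $\sqrt{-\det g_{M,a}}=\rho^2\sin\theta$ in ingoing Kerr coordinates, one gets $\partial_r(\rho^2\sin\theta\cdot\rho^{-2})=\partial_r(\sin\theta)=0$, which gives the second identity. For the first, $L=T+\frac{a}{r^2+a^2}\Phi-\underline{L}$, and since $T=\partial_v$, $\Phi=\partial_{\varphi_*}$ are Killing (hence divergence-free), and $\frac{a}{r^2+a^2}$ depends only on $r$ while $\Phi^\alpha=\delta^\alpha_{\varphi_*}$ is $\varphi_*$-independent, one has $\nabla_\alpha(\rho^{-2}\frac{r^2+a^2}{\Delta}(T+\frac{a}{r^2+a^2}\Phi)^\alpha)=0$ too, by a direct coordinate computation using $\sqrt{-\det g}=\rho^2\sin\theta$ and that the relevant components are $\varphi_*$-independent; subtracting the $\underline{L}$ contribution (already shown divergence-free) yields the first identity.

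\emph{Part (i): the volume forms.} Using the explicit form of $g_{M,a}$ in $(\tau,\uprho,\theta,\varphi_*)$ coordinates displayed in the excerpt, I would compute $\det(g|_{\Sigma_\tau})$ by restricting to the coordinates $(\uprho,\theta,\varphi_*)$, i.e. deleting the $d\tau$ row and column from that matrix. The resulting $3\times 3$ determinant is a rational function in $r$ and a polynomial in $\cos\theta,\sin\theta$; one extracts the leading behaviour as $r\to\infty$ (it should grow like $r^4\sin^2\theta$, so that $\sqrt{\det}\sim r^2\sin\theta = r\cdot(r\sin\theta)$), and checks positivity everywhere via the fact that $\Sigma_\tau$ is spacelike (Lemma~\ref{lm:propSigmat}) — spacelikeness guarantees the induced metric is Riemannian, hence the determinant is strictly positive in the interior. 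Then $m_1(\uprho,\theta):=\frac{1}{r\sin\theta}\sqrt{\det g|_{\Sigma_\tau}}/\sin\theta$ — more precisely, one writes $d\mu_\tau = \sqrt{\det g|_{\Sigma_\tau}}\,d\uprho\,d\theta\,d\varphi_* = m_1(\uprho,\theta)\,r\,d\omega\,d\uprho$ with $d\omega=\sin\theta\,d\theta\,d\varphi_*$, so $m_1 = (r\sin\theta)^{-1}\sqrt{\det g|_{\Sigma_\tau}}$ — and one checks smoothness on $[r_+,\infty)\times(0,\pi)$ together with the two-sided bound $c_h\le m_1\le C_h$ by combining the $r\to\infty$ asymptotics with a compactness argument on compact $r$-intervals, using the non-negativity and asymptotic conditions \eqref{eq:condh1}--\eqref{eq:condh4} on $h$. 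The same procedure applied to $I_s$, deleting the $d\tau$-part differently (the induced coordinates on $I_s$ are $(\tau,\theta,\varphi_*)$, using the expression for $(\partial_\tau)_{I_s}$), gives $m_2$.

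\emph{Part (ii): the null pairings.} For each of the four expressions I would compute $g(L,\mathbf{n}_\tau)$ and $g(\frac{r^2+a^2}{\Delta}\underline{L},\mathbf{n}_\tau)$ (and the $I_s$-analogues) explicitly. The unit normal is $\mathbf{n}_\tau = -g^{-1}(d\tau,\cdot)/\sqrt{-g^{-1}(d\tau,d\tau)}$, and from the proof of Lemma~\ref{lm:propSigmat} we already know $-g^{-1}(d\tau,d\tau) = \Delta\rho^{-2}h(\frac{2(r^2+a^2)}{\Delta}-h)-a^2\rho^{-2}\sin^2\theta$, which by \eqref{eq:condh2} is a strictly positive smooth function, behaving like $h_0\cdot r^{-2}\cdot(\text{const})$... — more carefully, one tracks its $r\to\infty$ rate, which combined with $\sqrt{\det g|_{\Sigma_\tau}}\sim r^2\sin\theta$ produces exactly the stated powers of $r$ in front of $\sin\theta$ (namely $r^0$ for the $L$-pairing and $r^2$ for the $\underline{L}$-pairing, with the roles swapped on $I_s$). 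The key structural input is that $g(L,\mathbf{n}_\tau)<0$ and $g(\underline{L},\mathbf{n}_\tau)<0$ because both $L$ and $\underline{L}$ are future-directed null (or causal) and $\mathbf{n}_\tau$ is future-directed timelike; combined with smoothness and the asymptotic rate computation, this gives the two-sided bounds $c_h\le m_{L,i},m_{\underline{L},i}\le C_h$ as before.

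\textbf{Main obstacle.} None of the three parts is conceptually hard, but the bookkeeping in Part (ii) — correctly identifying the powers of $r$ and $\sin\theta$ that factor out, so that the residual coefficients $m_{L,i},m_{\underline{L},i}$ are genuinely bounded above \emph{and below} away from $r=\infty$, $\theta=0$, $\theta=\pi$ — is the delicate point, because it requires simultaneously controlling the $r\to\infty$ asymptotics of $\sqrt{\det g|_{\Sigma_\tau}}$, of $\sqrt{-g^{-1}(d\tau,d\tau)}$, and of the pairing $g(L,\partial_v - h\partial_r)$-type expressions, and checking that the potential degeneracies at the poles of $\s^2$ cancel exactly between numerator and denominator (the $\sin^2\theta$ factors in \eqref{eq:kerrmetric} conspire with the $\sin^{-2}\theta$ in \eqref{eq:invmetric}). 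I would handle this by first doing the computation in the region $r\ge R$ using the expansions in $x=1/r$ and the $O_\infty$-notation from Section~\ref{sec:foliations}, and then invoking compactness plus strict spacelikeness/causality on $r_+\le r\le R$ to get the uniform constants, rather than attempting a single closed-form expression valid everywhere.
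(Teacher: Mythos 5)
Your overall strategy is sound and all three parts would go through, but there are two points worth flagging. First, a computational slip in Part (i): you assert that $\det g|_{\Sigma_\tau}$ grows like $r^4\sin^2\theta$, which is the asymptotically \emph{flat} scaling; since $\Sigma_\tau$ is asymptotically hyperboloidal one has $g(X,X)=g_{\uprho\uprho}\sim r^{-2}$ (because $2\rho^2-h(\Delta-a^2\sin^2\theta)=\hat h-2a^2\sin^2\theta+2a^2\cos^2\theta\cdot 0+\dots=O(1)$ by \eqref{eq:condh1}), so the determinant is $\sim r^{-2}\cdot r^2\cdot r^2\sin^2\theta= r^2\sin^2\theta$ and $\sqrt{\det g|_{\Sigma_\tau}}\sim r\sin\theta$. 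With your claimed asymptotic, $m_1=(r\sin\theta)^{-1}\sqrt{\det g|_{\Sigma_\tau}}$ would be $\sim r$ and unbounded, contradicting the statement; the method self-corrects once the determinant is actually computed, but as written the claim is inconsistent with your own definition of $m_1$. This is essentially the content of the paper's one-line proof of (i), which simply records $g(X,X)\sim r^{-2}$ and $g((\partial_\tau)_{I_s},(\partial_\tau)_{I_s})\sim r^{-2}$.

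Second, for Part (ii) the paper takes a genuinely slicker route that dissolves exactly the "main obstacle" you identify. Rather than separately tracking the asymptotics of $\sqrt{\det g|_{\Sigma_\tau}}$, of $\sqrt{-g^{-1}(d\tau,d\tau)}$, and of the pairing, it uses the decomposition $g^{-1}=g^{-1}(d\tau^{\sharp},d\tau^{\sharp})\,d\tau^{\sharp}\otimes d\tau^{\sharp}+g^{-1}|_{\Sigma_{\tau}}$ to get $\det g=\det g|_{\Sigma_{\tau}}/g^{-1}(d\tau^{\sharp},d\tau^{\sharp})$, which together with $\sqrt{-\det g}=\rho^2\sin\theta$ yields the single identity $-\sqrt{\det g|_{\Sigma_{\tau}}}\,\mathbf{n}_{\tau}=\rho^2\sin\theta\, d\tau^{\sharp}$. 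All four quantities in (ii) then reduce to computing $g(V,\rho^2 d\tau^{\sharp})$ for $V=L$ or $V=\frac{r^2+a^2}{\Delta}\underline{L}$ from the explicit expression $\rho^2 d\tau^{\sharp}=[a^2\sin^2\theta-(r^2+a^2)h]T+[r^2+a^2-\Delta h]Y+a(1-h)\Phi$; the $\sin\theta$ factor comes out for free and the residual coefficients are $r^2+O(r)$ and $\tfrac12\hat h(1/r)+O(r^{-1})$, with positivity from $\hat h(0)=h_0>0$. Your normalized-normal computation is equivalent and would work, but the cancellations you worry about between numerator and denominator are built into the paper's identity from the start. Your Part (iii) matches the paper (which only says "straightforward computation"), and your observation that the coefficient of the $T,\Phi$ part of the rescaled $L$ depends only on $r$ while $\sqrt{-\det g}=\rho^2\sin\theta$ is $v$- and $\varphi_*$-independent is the right way to make that computation explicit.
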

\begin{proof}
Property (i) results from the observation that $g(X,X)\sim r^{-2}$ and $g((\partial_{\tau})_{I_s},(\partial_{\tau})_{I_s})\sim r^{-2}$. 

In order to obtain property (ii), note first that we can express:
\begin{equation*}
g^{-1}=g^{-1}(d\tau^{\sharp},d\tau^{\sharp})d\tau^{\sharp}\otimes d\tau^{\sharp}+g^{-1}|_{\Sigma_{\tau}},
\end{equation*}
so we have that with respect to $(\tau,\uprho,\theta,\varphi_*)$ coordinates:
\begin{equation*}
\det g=\frac{\det g|_{\Sigma_{\tau}}}{g^{-1}(d\tau^{\sharp},d\tau^{\sharp})}
\end{equation*}
and we must therefore have that
\begin{equation}
\label{eq:normalder}
-\sqrt{\det g|_{\Sigma_{\tau}}}\mathbf{n}_{{\tau}}=\rho^2\sin \theta d\tau^{\sharp}.
\end{equation}
Furthermore,
\begin{align*}
\rho^2 d\tau^{\sharp}=&\:[a^2\sin^2\theta-(r^2+a^2)h]T+[r^2+a^2-\Delta h]Y+a(1-h)\Phi,\\
\rho^2 ds^{\sharp}=&\:[a^2\sin^2\theta-(r^2+a^2)\Delta^{-1}\hat{h}(r^{-1})]T+[r^2+a^2-\hat{h}(r^{-1})]Y+a(1-h)\Phi,
\end{align*}
where $\hat{h}(r^{-1})=2(r^2+a^2)-\Delta h(r)$, so after a straightforward computation, we obtain
\begin{align*}
g\left(\frac{r^2+a^2}{\Delta}\underline{L},\rho^2d\tau^{\sharp}\right)=&\: \frac{1}{2}(r^2+a^2)h+(1-2h)a^2\sin^2\theta=r^2+O(r),\\
g\left(L,\rho^2d\tau^{\sharp}\right)=&\:\frac{1}{2}\hat{h}(1/r)+O(r^{-1}).\end{align*}
Similar expressions involving $(ds)^{\sharp}$ follow after replacing $h(r)$ with $\Delta^{-1}\hat{h}(r^{-1})$ and vice versa. Property (ii) then follows from the assumption that $\hat{h}(0)=h_0>0$. 

Finally, we obtain property (iii) via a straightforward computation.
\end{proof}

We introduce the \emph{conformal radial coordinate} $x$, defined as follows in $\mathcal{R}$:
\begin{equation*}
x=\frac{1}{\uprho}.
\end{equation*}
We now define the conformally rescaled metric $\hat{g}_{M,a}$ in the region $\mathcal{R}=[0,\infty)_{\tau}\times (0,r_+^{-1}]_x\times \s^2$:
\begin{equation*}
\hat{g}_{M,a}:=r^{-2}{g}_{M,a}=x^2{g}_{M,a}.
\end{equation*}
Note that
\begin{equation*}
\begin{split}
\hat{g}_{M,a}=&-(x^2+O_{\infty}(x^3))d\tau^2+2(1+O_{\infty}(x))d\tau dx+O_{\infty}(x^0)dx^2+O_{\infty}(x^5)\sin^2\theta d\tau d\varphi_*-2(a^2+O_{\infty}(x))dxd\varphi_*\\
&+(1+O_{\infty}(x^2))d\theta^2+(1+O_{\infty}(x))\sin^2\theta d\varphi_*^2,
\end{split}
\end{equation*}
and the above metric components are smooth functions with respect to $(\tau,x,\theta,\varphi_*)$. Furthermore, $\hat{g}_{M,a}$ can be smoothly extended to the manifold-with-boundary:
\begin{equation*}
\widehat{\mathcal{R}}=[0,\infty)_{\tau}\times [0,r_+^{-1}]_x\times \s^2.
\end{equation*}
We refer to the boundary
\begin{equation*}
\mathcal{I}^+=[0,\infty)_{\tau}\times \{0\}_x\times \s^2
\end{equation*}
as \emph{future null infinity}. It follows immediately that $\mathcal{I}^+$ is a null hypersurface with respect to $\hat{g}_{M,a}$. Furthermore, it is straightforward to see that
\begin{equation*}
x|_{I_s}(\tau)\to 0
\end{equation*}
as $s\to \infty$.

We moreover denote with
\begin{equation*}
\widehat{\Sigma}_{\tau}:=\{\tau\}_{\tau}\times  [0,r_+^{-1}]_x\times \s^2
\end{equation*}
the extension of $\Sigma_{\tau}$ to $\widehat{\mathcal{R}}$. 

\subsection{Additional notation}
We will denote with $H^k(\Sigma_{\tau})$, with $k\in\N_0$, the standard Sobelev spaces with respect to the natural volume form corresponding to the induced metric $g_{M,a}|_{\Sigma_{\tau}}$. Similarly, will denote with $H^k(\widehat{\Sigma}_{\tau})$, with $k\in\N_0$, the standard Sobelev spaces with respect to the natural volume form corresponding to the induced conformal metric $\hat{g}_{M,a}|_{\Sigma_{\tau}}$.

Similarly, we use $H^k(\s^2)$ to denote Sobolev spaces of functions on $\s^2$ with respect to the standard volume form.

We will frequently use the notations $c$ and $C$ to indicate constants appearing at the right-hand side of an inequality. When the notation $C$ or $c$ appears in an inequality, we will make use of the following ``algebra of constants'':
\begin{equation*}
C+C=C C=C,\quad c+c=c c=c.
\end{equation*}
in order to avoid the introduction of additional notation to denote different constants in an estimate.

We will use the notation $f \sim g$, with $f,g$ two non-negative definite expressions to mean:
\begin{equation*}
c g \leq f\leq C g,
\end{equation*}
with constants $0<c<C$ that depend only on $a$, $M$ and the function $h$ that determines the foliation $\{\Sigma_{\tau}\}$.

\subsection{Systematic integration by parts}
We appeal to the properties of $\Sigma_{\tau}$ and $I_s$ established in Lemma \ref{lm:volumeforms} to derive a systematic method of integrating by parts; see also Remark 5.1 of \cite{dhr-teukolsky-kerr}.
\begin{lemma}
\label{lm:intbyparts}
Let $\mathcal{F}_{\underline{L}},\mathcal{F}_L$ be smooth functions on $\mathcal{M}_{M,a}$ and let $\mathcal{F}_{\slashed{\nabla}}$ be a smooth vector field on $S^2_{v,r}$ extended to $\mathcal{M}_{M,a}$, such that the following identity holds:
\begin{equation}
\label{eq:schematicid}
\underline{L}(\mathcal{F}_{\underline{L}})+L(\mathcal{F}_L)+\slashed{\textnormal{div}}_{\s^2} \mathcal{F}_{\slashed{\nabla}}+\Phi(F_{\Phi})+\mathcal{J}=0,
\end{equation}
where $\slashed{\textnormal{div}}_{\s^2}$ denotes the divergence operator with respect to the round metric on $\s^2$.
\begin{enumerate}[\rm (i)]
\item Then
\begin{equation*}
\textnormal{div}\, \left[\left(\rho^{-2} \frac{r^2+a^2}{\Delta}\mathcal{F}_{\underline{L}}\right) \underline{L}+ \left(\rho^{-2} \frac{r^2+a^2}{\Delta}\mathcal{F}_L\right)L \right]+\rho^{-2} \frac{r^2+a^2}{\Delta}\mathcal{J}+\rho^{-2} \frac{r^2+a^2}{\Delta}(\slashed{\textnormal{div}}_{\s^2} \mathcal{F}_{\slashed{\nabla}}+\Phi(\mathcal{F}_{\Phi}))=0.
\end{equation*}
The following integral identity holds:
\begin{equation*}
\begin{split}
\int_{\Sigma_{\tau_2}\cap\{s\leq S\}}& \rho^{-2}\mathcal{F}_{\underline{L}} \sqrt{\det g_{\Sigma_{\tau}}} g\left(\frac{r^2+a^2}{\Delta}\underline{L},-\mathbf{n}_{\Sigma_{\tau}}\right)+\rho^{-2}\mathcal{F}_L\sqrt{\det g_{\Sigma_{\tau}}} g\left(\frac{r^2+a^2}{\Delta}L,-\mathbf{n}_{\Sigma_{\tau}}\right)\,d\uprho d\theta d\varphi_*\\
&+\int_{\mathcal{H}^+\cap\{\tau_1\leq \tau \leq \tau_2\}} \frac{1}{2}\mathcal{F}_{\underline{L}}\,d\omega d\tau\\
&+\int_{I_S\cap\{\tau_1\leq \tau \leq \tau_2\}} \rho^{-2}\mathcal{F}_{\underline{L}} \sqrt{\det g_{I_S}} g\left(\frac{r^2+a^2}{\Delta}\underline{L},-\mathbf{n}_{I_s}\right)+\rho^{-2}\mathcal{F}_L\sqrt{\det g_{I_S}}g\left(\frac{r^2+a^2}{\Delta}L,-\mathbf{n}_{I_S}\right)\,d\tau d\theta d\varphi_*  \\
&+\int_{D^{\tau_2}_{\tau_1}\cap\{s\leq S\}} \rho^{-2} \frac{r^2+a^2}{\Delta}\mathcal{J}\,d\mu\\
=&\:\int_{\Sigma_{\tau_1}\cap\{s\leq S\}}\rho^{-2} \mathcal{F}_{\underline{L}} \sqrt{\det g_{\Sigma_{\tau}}} g\left(\frac{r^2+a^2}{\Delta}\underline{L},-\mathbf{n}_{\Sigma_{\tau}}\right)+\rho^{-2}\mathcal{F}_L \sqrt{\det g_{\Sigma_{\tau}}} g\left(\frac{r^2+a^2}{\Delta}L,-\mathbf{n}_{\Sigma_{\tau}}\right)\,d\uprho d\theta d\varphi_*.
\end{split}
\end{equation*}
\item If $\mathcal{F}_{\underline{L}},\mathcal{F}_L, \mathcal{F}_{\slashed{\nabla}}$  are non-negative definite, then we have that:
\begin{equation*}
\begin{split}
\int_{\Sigma_{\tau_2}\cap\{s\leq S\}}& \left(\mathcal{F}_{\underline{L}}+\Delta^{-1}\mathcal{F}_L\right)\,d\omega d\uprho+\int_{I_S\cap\{\tau_1\leq \tau \leq \tau_2\}}\left(\mathcal{F}_L+r^{-2}\mathcal{F}_{\underline{L}}\right)\,d\omega d\tau +\int_{\tau_1}^{\tau_2}\left[\int_{\Sigma_{\tau}\cap\{s\leq S\}} \frac{r^2+a^2}{\Delta}\mathcal{J}\,d\omega d\uprho\right]\,d\tau\\
&+\int_{\mathcal{H}^+\cap\{\tau_1\leq \tau \leq \tau_2\}} \mathcal{F}_{\underline{L}}\,d\omega d\tau\\
\sim &\:\int_{\Sigma_{\tau_1}\cap\{s\leq S\}} \left(\mathcal{F}_{\underline{L}}+\Delta^{-1}\mathcal{F}_L\right)\,d\omega d\uprho.
\end{split}
\end{equation*}
\end{enumerate}
\end{lemma}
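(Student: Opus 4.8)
The plan is to derive both parts from the divergence theorem, with part (ii) following from part (i) together with the volume‑form estimates of Lemma~\ref{lm:volumeforms}.

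For part (i), the first step is to multiply the schematic identity \eqref{eq:schematicid} by the weight $w:=\rho^{-2}\frac{r^2+a^2}{\Delta}$. By Lemma~\ref{lm:volumeforms}(iii) the vector fields $wL$ and $w\underline{L}$ are divergence-free, so the Leibniz rule gives $\textnormal{div}\big((w\mathcal{F}_{\underline{L}})\underline{L}\big)=w\,\underline{L}(\mathcal{F}_{\underline{L}})$ and $\textnormal{div}\big((w\mathcal{F}_L)L\big)=w\,L(\mathcal{F}_L)$; hence the first two terms of \eqref{eq:schematicid}, after multiplication by $w$, assemble into $\textnormal{div}\big[(w\mathcal{F}_{\underline{L}})\underline{L}+(w\mathcal{F}_L)L\big]$, which is precisely the pointwise divergence identity asserted in part (i). I would then integrate this identity over the spacetime region bounded by $\Sigma_{\tau_1}$, $\Sigma_{\tau_2}$, $\mathcal{H}^+$ and $I_S$ and apply the divergence theorem. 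The fluxes through the spacelike hypersurfaces $\Sigma_{\tau_1},\Sigma_{\tau_2}$ and $I_S$ (the latter spacelike by Lemma~\ref{lm:propSigmat}) come out directly: contracting $(w\mathcal{F}_{\underline{L}})\underline{L}+(w\mathcal{F}_L)L$ with the unit normals and the induced volume forms reproduces verbatim the boundary integrals in the statement, since $g(w\underline{L},-\mathbf{n})=\rho^{-2}g(\tfrac{r^2+a^2}{\Delta}\underline{L},-\mathbf{n})$ and likewise for $L$. The two angular terms drop out after integration: all the hypersurfaces involved and the cutoff $\{s\le S\}$ are fibred over full round spheres $\s^2$, and $\int_{\s^2}\slashed{\textnormal{div}}_{\s^2}(\cdot)\,d\omega=0$, $\int_{\s^2}\Phi(\cdot)\,d\omega=0$ by the divergence theorem on a closed manifold and by $2\pi$-periodicity in $\varphi_*$; equivalently, $w\,\Phi(\mathcal{F}_{\Phi})=\textnormal{div}(w\mathcal{F}_{\Phi}\Phi)$ because $\Phi$ is Killing and $w$ is $\varphi_*$-independent, and $\Phi$ is tangent to every boundary hypersurface, so this flux vanishes (and similarly for the spherical divergence term).

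The step I expect to be the main obstacle is the flux through the null boundary $\mathcal{H}^+$, where the weight $w$ degenerates. Here I would compute the radial ($\partial_r$-) component of the flux vector field at $r=r_+$ against the measure $\rho^2\,d\omega\,d\tau$ directly: from $\underline{L}=-\frac{\Delta}{2(r^2+a^2)}Y$ one gets $w\underline{L}=-\frac{1}{2\rho^2}\partial_r$, which is regular and transverse to $\mathcal{H}^+$, whereas $wL=-w\underline{L}+w\big(T+\frac{a}{r^2+a^2}\Phi\big)$ and the singular summand $w\big(T+\frac{a}{r^2+a^2}\Phi\big)=\frac{1}{\rho^2\Delta}\big[(r^2+a^2)T+a\Phi\big]$ has vanishing $\partial_r$-component and is in fact tangent to $\mathcal{H}^+$, being to leading order in $\Delta$ a multiple of the horizon-generating Killing field $K$; hence it contributes nothing. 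Only the $\mathcal{F}_{\underline{L}}$-term survives, giving $\int_{\mathcal{H}^+\cap\{\tau_1\le\tau\le\tau_2\}}\frac12\mathcal{F}_{\underline{L}}\,d\omega\,d\tau$ (the potential $\mathcal{F}_L$-contribution through $\mathcal{H}^+$ drops, consistently with the fact that finiteness of the $\Sigma_\tau$-flux of $w\mathcal{F}_L L$, in which the factor $\frac{r^2+a^2}{\Delta}$ appears, already forces $\mathcal{F}_L$ to degenerate at $\mathcal{H}^+$ in applications). Collecting these boundary terms together with the bulk term $\int w\mathcal{J}\,d\mu$, where $d\mu=\rho^2\,d\omega\,d\tau\,d\uprho$, yields the integral identity of part (i).

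For part (ii), I would feed part (i) into Lemma~\ref{lm:volumeforms}(i)--(ii) to replace the flux densities through $\Sigma_{\tau_1},\Sigma_{\tau_2}$ and $I_S$ by $\sim$-equivalent simpler expressions. Since $\mathcal{F}_{\underline{L}},\mathcal{F}_L\ge 0$, and since $m_{\underline{L},i},m_{L,i}$ together with the ratios $\rho^{-2}(r^2+a^2)$ and $r^2\Delta^{-1}$ are all pinched between positive constants on the relevant ranges of $r$, the $\Sigma_\tau$-flux density is $\sim(\mathcal{F}_{\underline{L}}+\Delta^{-1}\mathcal{F}_L)\,d\omega\,d\uprho$ and the $I_s$-flux density is $\sim(\mathcal{F}_L+r^{-2}\mathcal{F}_{\underline{L}})\,d\omega\,d\tau$; the bulk term equals $\int_{\tau_1}^{\tau_2}\int_{\Sigma_\tau\cap\{s\le S\}}\frac{r^2+a^2}{\Delta}\mathcal{J}\,d\omega\,d\uprho\,d\tau$ by $d\mu=\rho^2\,d\omega\,d\tau\,d\uprho$, and the horizon term is already in the stated form; any non-negative $\mathcal{F}_{\slashed{\nabla}}$-contribution only appears with a favourable sign and may be dropped (or retained on the left). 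Because part (i) asserts an exact equality, passing to these $\sim$-equivalent quantities gives the claimed two-sided estimate.
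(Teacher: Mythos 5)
Your proposal is correct and follows the same route as the paper's (two-line) proof: Stokes' theorem combined with the divergence-free property of $\rho^{-2}\frac{r^2+a^2}{\Delta}L$ and $\rho^{-2}\frac{r^2+a^2}{\Delta}\underline{L}$ from Lemma \ref{lm:volumeforms}(iii) for part (i), and the volume-form comparisons of Lemma \ref{lm:volumeforms}(i)--(ii) for part (ii). Your discussion of the horizon flux --- in particular that the $\partial_r$-component of $\rho^{-2}\frac{r^2+a^2}{\Delta}\mathcal{F}_L L$ would a priori contribute $\mp\tfrac{1}{2}\mathcal{F}_L$ there, and that this contribution is absent only because $\mathcal{F}_L$ must degenerate like $\Delta$ at $r=r_+$ for the $\Sigma_\tau$-fluxes to be finite --- is more careful than the paper's own argument, which simply cites Stokes' theorem.
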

\begin{proof}
Part (i) follows directly from Stokes' Theorem combined with (iii) of Lemma \ref{lm:volumeforms}. Part (ii) follows by applying additionally the remaining identities in Lemma \ref{lm:volumeforms}.
\end{proof}

\subsection{Spherical harmonic modes}
Let $\ell\in \N_0$ and consider the following projection operators
\begin{align*}
\pi_{\ell}:& L^2(\s^2)\to L^2(\s^2),\\
\pi_{\ell}f=&f_{\ell}:=\sum_{m=-\ell}^{\ell} f_{\ell, m} Y_{\ell, m}(\theta,\varphi_*),
\end{align*}
with $f_{\ell, m} \in \C$ and $Y_{\ell, m}(\theta,\varphi_*)$, $m=-\ell,\ldots,\ell$ spherical harmonics with angular momentum $\ell$, with respect to the polar angle $\theta$ and the azimuthal angle $\varphi_*$.

Note that
\begin{equation*}
\sum_{\ell=0}^{\infty}\int_{\s^2} f_{\ell}^2\,d\omega=\int_{\s^2}f^2\,d\omega.
\end{equation*}

The operator $\pi_{\ell}$ is well-defined on the function space $C^{\infty}(\Sigma_{\tau})$, where we interpret $\pi_{\ell}$ as acting on the restrictions of functions in $C^{\infty}(\Sigma_{\tau})$ to functions on the Boyer--Lindquist spheres foliating $\Sigma_{\tau}$, which we cover with angular coordinates $(\theta,\varphi_*)$. Since $\pi_{\ell}$ is a bounded linear operator with respect to $||\cdot||_{L^2(\Sigma_{\tau})}$, the following extension is also well-defined:
\begin{equation*}
\pi_{\ell}: L^2(\Sigma_{\tau})\to L^2(\Sigma_{\tau}).
\end{equation*}

We will introduce the following additional notation: let $L\in \N_0$ and $f\in  L^2(\Sigma_{\tau})$, then
\begin{align*}
f_{\leq L}:=&\:\sum_{\ell=0}^L \pi_{\ell}f,\\
f_{\geq L+1}:=&\:f-f_{\leq L}.
\end{align*}

We will moreover need to investigate how the projection operators $\pi_{\ell}$ act on product functions of the form $\sin^2\theta f$, with $f\in L^2(\s^2)$.
\begin{lemma}
\label{eq:lprojsin}
Let $f\in L^2(\s^2)$. Then there exist numerical constants $N_{\ell, m,-2}, N_{\ell, m,0},  N_{\ell, m,+2}$, such that
\begin{equation}
\label{eq:lproj}
\pi_{\ell}(\sin^2\theta f)=\sum_{m=-\ell}^{\ell}( N_{\ell, m,-2}f_{\ell-2, m}+N_{\ell,m, 0}f_{\ell, m}+ N_{\ell, m,+2}f_{\ell+2, m})Y_{\ell, m}(\theta,\varphi_*),
\end{equation}
with $N_{\ell, m,0}$ and  $N_{\ell, m,+2}$ non-vanishing and $N_{\ell, m,-2}$ non-vanishing if and only if $|m|\leq \ell-2$.

In particular,
\begin{equation}
\label{eq:l2projl0}
\pi_2(\sin^2\theta \pi_0(f))=-\frac{4}{3}\sqrt{\frac{\pi}{5}}\pi_0(f)Y_{2,0}(\theta).
\end{equation}
\end{lemma}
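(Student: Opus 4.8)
The plan is to reduce the statement to a classical fact about expanding products of spherical harmonics. First I would observe that multiplication by $\sin^2\theta$ is, up to constants, multiplication by $1 - \cos^2\theta$ and that $\cos\theta$ is (a multiple of) $Y_{1,0}$, so $\sin^2\theta$ is a linear combination of $Y_{0,0}$ and $Y_{2,0}$ — it lies entirely in the $\ell = 0$ and $\ell = 2$ spaces and, crucially, is \emph{axisymmetric} (independent of $\varphi_*$). Since multiplication by an axisymmetric function commutes with $\Phi = \partial_{\varphi_*}$, the product $\sin^2\theta\, Y_{\ell,m}$ has the same azimuthal dependence $e^{im\varphi_*}$ as $Y_{\ell,m}$; hence $\pi_\ell(\sin^2\theta f)$ only couples $Y_{\ell,m}$ to modes $f_{\ell',m}$ with the \emph{same} $m$. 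This already gives the form of the sum over $m$ in \eqref{eq:lproj}.

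Next I would pin down which $\ell'$ can contribute. Expanding the product $Y_{2,0} Y_{\ell,m}$ into spherical harmonics via Clebsch--Gordan/Gaunt coefficients, the triangle inequality for angular momentum addition forces the output angular frequencies to lie in $\{|\ell-2|,\dots,\ell+2\}$, and parity (the integrand $Y_{2,0}Y_{\ell,m}\overline{Y_{\ell',m}}$ integrates to zero unless $2+\ell+\ell'$ is even) kills the $\ell' = \ell\pm 1$ terms. Therefore only $\ell' \in \{\ell-2,\ell,\ell+2\}$ survive, which is exactly the claimed coupling $N_{\ell,m,-2}, N_{\ell,m,0}, N_{\ell,m,+2}$; and of course $\ell-2$ only makes sense, i.e. $N_{\ell,m,-2}$ can only be nonzero, when $\ell \geq 2$ and $|m| \leq \ell - 2$. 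The remaining non-vanishing claims — that $N_{\ell,m,0} \neq 0$ always and $N_{\ell,m,-2}\neq 0$ precisely when $|m|\le \ell-2$ — I would extract from the explicit recursion relations for $\cos\theta\, Y_{\ell,m}$ (the standard three-term recurrence $\cos\theta\, Y_{\ell,m} = a_{\ell,m} Y_{\ell+1,m} + b_{\ell,m} Y_{\ell-1,m}$ with $a_{\ell,m}, b_{\ell,m}$ explicit square roots that vanish only at the edge cases $b_{\ell,m} = 0 \iff |m| = \ell$): squaring this recurrence and collecting terms gives $\cos^2\theta\, Y_{\ell,m}$ as a combination of $Y_{\ell\pm2,m}$ and $Y_{\ell,m}$ with coefficients that are manifestly positive combinations of the $a$'s and $b$'s, so the $\ell$-th coefficient $N_{\ell,m,0}$ — which involves $a_{\ell,m}^2 + b_{\ell,m}^2$ plus the contribution of the $Y_{0,0}$ piece of $\sin^2\theta$ — is a strictly positive quantity, while the $(\ell-2)$-coefficient is $b_{\ell,m}b_{\ell-1,m}$ up to a positive constant, vanishing exactly when $|m|\ge\ell-1$.

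Finally, for \eqref{eq:l2projl0}, since $\pi_0 f$ is a constant multiple of $Y_{0,0} = \tfrac{1}{2\sqrt\pi}$, I only need $\pi_2(\sin^2\theta\, Y_{0,0})$, which is a single explicit integral: $\pi_2(\sin^2\theta) = \left(\int_{\s^2}\sin^2\theta\, Y_{2,0}\,d\omega\right) Y_{2,0}(\theta)$, and a direct computation using $Y_{2,0}(\theta) = \tfrac14\sqrt{\tfrac{5}{\pi}}(3\cos^2\theta - 1)$ and $\int_{-1}^1 (3x^2-1)^2\,dx = \tfrac{8}{5}$, $\int_{-1}^1 (1-x^2)(3x^2-1)\,dx$ evaluated directly, yields the coefficient $-\tfrac43\sqrt{\tfrac\pi5}$. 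I expect the only mildly delicate point to be bookkeeping the normalization constants consistently so that the numerical coefficient in \eqref{eq:l2projl0} comes out exactly as stated; the structural claims (which modes couple, and the non-vanishing pattern) are immediate from commutation with $\Phi$, parity, and the edge-case behaviour of the standard recurrence coefficients.
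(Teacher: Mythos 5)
Your proposal is correct, but it takes a genuinely different route from the paper. The paper reduces everything to the one-dimensional integrals $\int_{-1}^1(1-x^2)P^m_{\ell'}P^m_{\ell}\,dx$ and then invokes the recurrences for $\sqrt{1-x^2}\,P_l^m$ that shift the \emph{order} $m$ by one while shifting $l$ by $\pm1$; non-vanishing is then read off from orthogonality of the resulting products of associated Legendre functions, with the edge case $\ell'=m=0$ handled separately via the Rodrigues formula. You instead exploit that $\sin^2\theta=1-\cos^2\theta$ is axisymmetric and supported on $\ell\in\{0,2\}$, use the Gaunt/parity selection rules to isolate $\ell'\in\{\ell-2,\ell,\ell+2\}$, and then apply the three-term recurrence $\cos\theta\,Y_{\ell,m}=a_{\ell,m}Y_{\ell+1,m}+b_{\ell,m}Y_{\ell-1,m}$ twice to get the coefficients explicitly. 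Your route is arguably cleaner: the same-$m$ coupling and the restriction to $\ell'\in\{\ell-2,\ell,\ell+2\}$ come for free, and the non-vanishing pattern of $N_{\ell,m,\pm2}$ drops out of the explicit zeros of $a_{\ell,m}$ and $b_{\ell,m}$ rather than from a case analysis of Legendre integrals. The paper's method is more self-contained (it only uses the recurrences quoted from Abramowitz--Stegun) and generalizes to the weight $(1-x^2)$ directly without passing through $\cos^2\theta$.

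One soft spot: your justification that $N_{\ell,m,0}\neq 0$ is slightly misstated. Since $\sin^2\theta=1-\cos^2\theta$, the diagonal coefficient is $N_{\ell,m,0}=1-(a_{\ell,m}^2+b_{\ell,m}^2)$, i.e.\ $1$ \emph{minus} a sum of squares, not a "manifestly positive combination"; to conclude it is nonzero you would still need to check $a_{\ell,m}^2+b_{\ell,m}^2<1$ from the explicit formulas. The quickest repair bypasses the recurrence entirely: $N_{\ell,m,0}=\int_{\s^2}\sin^2\theta\,|Y_{\ell,m}|^2\,d\omega>0$ because $\sin^2\theta>0$ almost everywhere. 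With that observation inserted, the argument is complete, and your computation of the coefficient $-\tfrac{4}{3}\sqrt{\pi/5}$ in \eqref{eq:l2projl0} is correct.
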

\begin{proof}
We can express:
\begin{equation*}
\pi_{\ell}(\sin^2\theta f)=\sum_{m=-\ell}^{\ell} \left[\sum_{\ell'=|m|}^{\infty} f_{\ell ', m}\int_{\s^2}\sin^2\theta Y_{\ell', m}\overline{Y}_{\ell, m}\,\sin \theta d\theta d\varphi\right]Y_{\ell, m}.
\end{equation*}
Note that for given $\ell$ and $\ell'$, the integral
\begin{equation*}
\int_{\s^2}\sin^2\theta Y_{\ell', m}\overline{Y}_{\ell, m}\,\sin \theta d\theta d\varphi=0
\end{equation*}
if and only if
\begin{equation}
\label{eq:intlegendre}
\int_{-1}^1(1-x^2)P^m_{\ell'}(x)P^m_{\ell}(x)\,dx=0,
\end{equation}
where $P^m_{\ell}$ and $P^m_{\ell'}$ denote associated Legendre polynomials. By the definition of Legendre polynomials, it follows immediately that the above integral vanishes if $\ell'-\ell$ is odd and is non-vanishing when $\ell=\ell'$. It remains to consider the cases when $\ell'-\ell$ is even and non-zero. Without loss of generality, we can assume that $0 \leq m \leq \ell'$ and $\ell\geq \ell'+2$.

We use the following standard recursive relations between associated Legendre polynomials (see for example Chapter 8 of \cite{abram70}): let $l\neq 0$, then
\begin{align*}
\sqrt{1-x^2}P_{l}^m=&-(2l+1)^{-1}[(l-m+1)(l-m+2)P^{m-1}_{l+1}-(l+m-1)(l+m)P^{m-1}_{l-1}]\quad \textnormal{for  $m\geq 1$},\\
\sqrt{1-x^2}P_{l}^m=&-(2l+1)^{-1}[P^{m+1}_{l+1}-P^{m+1}_{l-1}]\quad \textnormal{for $m\leq l-2$}.
\end{align*}
to write 
\begin{equation*}
\begin{split}
(2\ell+1)(2\ell'+1)(1-x^2)P^m_{\ell'}P^m_{\ell}=&\: \left[(\ell-m+1)(\ell-m+2)P^{m-1}_{\ell+1}-(\ell+m-1)(\ell+m)P^{m-1}_{\ell-1}\right]\\
\cdot &\left[(\ell'-m+1)(\ell'-m+2)P^{m-1}_{\ell'+1}-(\ell'+m-1)(\ell'+m)P^{m-1}_{\ell'-1}\right],
\end{split}
\end{equation*}
when $\ell'\neq 0$ and $m\geq 1$. For $\ell'\neq 0$ and $m=0$, we obtain instead
\begin{equation*}
(2\ell+1)(2\ell'+1)(1-x^2)P^0_{\ell'}P^0_{\ell}=[P^{1}_{\ell+1}-P^{1}_{\ell-1}][P^{1}_{\ell'+1}-P^{1}_{\ell'-1}].
\end{equation*}
The integral \eqref{eq:intlegendre} corresponding to the above cases is non-vanishing if and only if $\ell=\ell'$ or $|\ell-\ell'|=2$.

It only remains to consider the case: $\ell'=m=0$ and $\ell\geq 2$, for which it follows by integrating by parts that
\begin{equation*}
\begin{split}
\int_{-1}^1(1-x^2)P^0_{\ell}P^0_{0}\,dx=2^{-\ell}(\ell!)^{-1}\int_{-1}^1(1-x^2)\frac{d^{\ell}}{dx^{\ell}}((x^2-1)^{\ell})\,dx
\end{split}
\end{equation*}
is non-vanishing if and only if $\ell=2$. The expression \eqref{eq:lproj} then follows and \eqref{eq:l2projl0} can easily be computed explicitly by keeping track of the appropriate normalization constants.
\end{proof}

In the lemma below, we state various Poincar\'e-type inequalities on $\s^2$.

\begin{lemma}
Let $f\in H^2(\s^2)$. Then:
\begin{align}
\label{eq:poincare1}
\int_{\s^2} |\snabla_{\s^2}f_{\geq \ell}|^2\,d\omega\geq &\:\ell(\ell+1)\int_{\s^2} f_{\geq \ell}^2\,d\omega,\\
\label{eq:poincare2}
\int_{\s^2} |\snabla_{\s^2}f_{ \ell}|^2\,d\omega= &\:\ell(\ell+1)\int_{\s^2} f_{\ell}^2\,d\omega,\\
\label{eq:poincare3}
\int_{\s^2} f_{\ell}^2+|\snabla_{\s^2}f_{ \ell}|^2+|\snabla_{\s^2}^2f_{ \ell}|^2\,d\omega\leq &\:(1+\ell(\ell+1)+\ell^2(\ell+1)^2)\int_{\s^2} f_{\ell}^2\,d\omega.
\end{align}
Furthermore, let $N\in \N$ and assume that $f\in H^{\max\{N,2\}}(\s^2)$. Then there exists constants $0<c<C$, depending on $N$, such that
\begin{equation}
\label{eq:angdercontrol}
c \int_{\s^2} |\snabla_{\s^2}^s \slashed{\Delta}_{\s^2}^{\frac{N-s}{2}}f|^2\,d\omega\leq \sum_{n=0}^N\int_{\s^2} |\snabla_{\s^2}^nf|^2\,d\omega \leq C\int_{\s^2} |\snabla_{\s^2}^s \slashed{\Delta}_{\s^2}^{\frac{N-s}{2}}f|^2\,d\omega,
\end{equation}
where $s=0$ if $N$ is even and $s=1$ if $N$ is odd.
\end{lemma}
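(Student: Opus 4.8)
The statement to prove collects four standard facts about spherical-harmonic decompositions on $\s^2$: the Poincar\'e inequality \eqref{eq:poincare1}, its equality version \eqref{eq:poincare2}, the $H^2$-bound \eqref{eq:poincare3}, and the equivalence of iterated gradient norms with a single top-order norm \eqref{eq:angdercontrol}. The plan is to reduce everything to the eigenvalue identity $\slashed{\Delta}_{\s^2}Y_{\ell,m}=-\ell(\ell+1)Y_{\ell,m}$ together with orthogonality of distinct spherical harmonics in $L^2(\s^2)$, and to the integration-by-parts formula $\int_{\s^2}|\snabla_{\s^2}u|^2\,d\omega=-\int_{\s^2}u\,\slashed{\Delta}_{\s^2}u\,d\omega$, valid for $u\in H^2(\s^2)$ since $\s^2$ is closed.

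First I would prove \eqref{eq:poincare2}: writing $f_\ell=\sum_{m=-\ell}^\ell f_{\ell,m}Y_{\ell,m}$ and integrating by parts,
\begin{equation*}
\int_{\s^2}|\snabla_{\s^2}f_\ell|^2\,d\omega=-\int_{\s^2}f_\ell\,\slashed{\Delta}_{\s^2}f_\ell\,d\omega=\ell(\ell+1)\int_{\s^2}f_\ell^2\,d\omega,
\end{equation*}
using $\slashed{\Delta}_{\s^2}f_\ell=-\ell(\ell+1)f_\ell$. For \eqref{eq:poincare1}, decompose $f_{\geq\ell}=\sum_{\ell'\geq\ell}f_{\ell'}$; by orthogonality of the $f_{\ell'}$ and of the $\snabla_{\s^2}f_{\ell'}$ (the latter because $\int_{\s^2}\snabla_{\s^2}f_{\ell'}\cdot\snabla_{\s^2}f_{\ell''}\,d\omega=-\int_{\s^2}f_{\ell'}\slashed{\Delta}_{\s^2}f_{\ell''}\,d\omega=\ell''(\ell''+1)\int_{\s^2}f_{\ell'}f_{\ell''}\,d\omega=0$ for $\ell'\neq\ell''$), we get $\int_{\s^2}|\snabla_{\s^2}f_{\geq\ell}|^2\,d\omega=\sum_{\ell'\geq\ell}\ell'(\ell'+1)\int_{\s^2}f_{\ell'}^2\,d\omega\geq \ell(\ell+1)\sum_{\ell'\geq\ell}\int_{\s^2}f_{\ell'}^2\,d\omega=\ell(\ell+1)\int_{\s^2}f_{\geq\ell}^2\,d\omega$, since $\ell'(\ell'+1)$ is increasing in $\ell'$. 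For \eqref{eq:poincare3} I would note $\int_{\s^2}|\snabla_{\s^2}^2f_\ell|^2\,d\omega$ is controlled after a further integration by parts by $\int_{\s^2}(\slashed{\Delta}_{\s^2}f_\ell)^2\,d\omega$ plus curvature (Ricci) correction terms coming from the Bochner/commutator identity on $\s^2$; since $\slashed{\Delta}_{\s^2}f_\ell=-\ell(\ell+1)f_\ell$ and the curvature of the unit sphere is bounded by a universal constant, all terms are bounded by $\ell^2(\ell+1)^2\int_{\s^2}f_\ell^2\,d\omega$ up to lower order, and combining with \eqref{eq:poincare2} gives the stated bound (the precise numerical constant $1+\ell(\ell+1)+\ell^2(\ell+1)^2$ is then read off).

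For \eqref{eq:angdercontrol} I would argue by Plancherel in the $Y_{\ell,m}$ basis. Expand $f=\sum_{\ell,m}f_{\ell,m}Y_{\ell,m}$; then each of the quantities $\sum_{n=0}^N\int_{\s^2}|\snabla_{\s^2}^nf|^2\,d\omega$ and $\int_{\s^2}|\snabla_{\s^2}^s\slashed{\Delta}_{\s^2}^{(N-s)/2}f|^2\,d\omega$ is, by the same integration-by-parts and orthogonality arguments as above (iterated, again picking up only bounded curvature corrections at each commutation step), comparable to $\sum_{\ell,m}p_n(\ell(\ell+1))\,|f_{\ell,m}|^2$ for suitable polynomials: the first to $\sum_{\ell,m}\big(\sum_{n=0}^N (\ell(\ell+1))^n + \text{lower order}\big)|f_{\ell,m}|^2$, which is comparable to $\sum_{\ell,m}(\ell(\ell+1))^N|f_{\ell,m}|^2$ since the top power dominates; the second, using that $\slashed{\Delta}_{\s^2}^{(N-s)/2}Y_{\ell,m}=(-\ell(\ell+1))^{(N-s)/2}Y_{\ell,m}$ and then \eqref{eq:poincare2} (or one more integration by parts when $s=1$), equals exactly or is comparable to $\sum_{\ell,m}(\ell(\ell+1))^N|f_{\ell,m}|^2$. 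Matching these two comparisons yields \eqref{eq:angdercontrol} with $N$-dependent constants $0<c<C$.

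The main obstacle is bookkeeping rather than conceptual: one must be careful that $\snabla_{\s^2}^n$ is the full iterated covariant derivative (a rank-$n$ tensor), not a power of the Laplacian, so passing between $|\snabla_{\s^2}^nf|^2$ and $(\slashed{\Delta}_{\s^2}^{n/2}f)^2$ requires repeated use of the Bochner formula and the commutation identity $[\snabla_{\s^2},\slashed{\Delta}_{\s^2}]$ on the round sphere, each application of which produces curvature terms of strictly lower differential order. These lower-order terms are harmless because the highest eigenvalue power $(\ell(\ell+1))^N$ dominates uniformly over all remaining terms for every $\ell\in\N_0$ (and the $\ell=0$ mode contributes zero to all gradient norms), so the equivalence constants depend only on $N$. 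The remaining steps are direct computations with associated Legendre normalizations for the explicit constant in \eqref{eq:poincare3}, which I would not carry out in detail.
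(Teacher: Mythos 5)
Your proposal is correct and follows essentially the same route as the paper, whose proof consists of exactly the statement that \eqref{eq:poincare1}--\eqref{eq:poincare3} follow from decomposing $f_{\geq \ell}=\sum_{\ell'\geq\ell}f_{\ell'}$, using $\slashed{\Delta}_{\s^2}f_{\ell}=-\ell(\ell+1)f_{\ell}$ and integrating by parts, and that \eqref{eq:angdercontrol} follows from integrating by parts with respect to covariant differentiation on $\s^2$; your write-up merely supplies the orthogonality and Bochner/commutator bookkeeping that the paper leaves implicit. The one caveat — shared with the paper's own statement — is that your claim that $(\ell(\ell+1))^N$ dominates the lower powers fails at $\ell=0$, so the upper bound in \eqref{eq:angdercontrol} as literally written does not hold for functions with a nonvanishing mean when $N\geq 1$; this is an imprecision of the lemma rather than of your argument and does not affect how it is applied.
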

\begin{proof}
The inequalities \eqref{eq:poincare1}--\eqref{eq:poincare3} follow by decomposing $f_{\geq \ell}=\sum_{\ell'=\ell}^{\infty} f_{\ell'}$, using that $\slashed{\Delta}_{\s^2}f_{\ell}=-\ell(\ell+1)f_{\ell}$ and integrating by parts on $\s^2$.

We obtain \eqref{eq:angdercontrol} by integrating by parts with respect to covariant derivation on $\s^2$.
\end{proof}

\subsection{Hardy inequalities along hyperboloidal hypersurfaces}
Throughout this paper we will frequently appeal to Hardy inequalities, in the form of the estimates in Lemma \ref{lm:hardy} below, to estimate lower-order weighted derivatives in terms of higher-order weighted derivatives. 
\begin{lemma}
\label{lm:hardy}
Let $0\leq a<b$ and let $f\in C^1([a,b])$, then for $p\in \R\setminus \{-1\}$:
\begin{align}
\label{eq:hardy1d}
\int_{a}^b x^{p}f^2(x)\,dx\leq&\: 4(p+1)^{-2} \int_{a}^b x^{p+2}\left(\frac{df}{dx}\right)^2(x)\,dx+2(p+1)^{-1}\left[(x^{p+1}f^2)(b)-(x^{p+1}f^2)(a)\right].
\end{align}
Let $r_+\leq r_1<r_2$ and let $f\in C^1(\Sigma_{\tau})$, then for $p\in \R\setminus \{-1\}$:
\begin{equation}
\label{eq:hardyX}
\begin{split}
\int_{\Sigma_{\tau}\cap\{r_1\leq \uprho\leq r_2\}} r^{p} f^2\,d\omega d\uprho\leq&\: 4(p+1)^{-2} \int_{\Sigma_{\tau}\cap\{r_1\leq \uprho\leq r_2\}} r^{p+2} (Xf)^2\,d\omega d\uprho\\
&+2(p+1)^{-1}\left[\int_{S^2_{\tau,r_2}} r_2^{p+1}f^2\,d\omega-\int_{S^2_{\tau,r_1}} r_1^{p+1}f^2\,d\omega\right].
\end{split}
\end{equation}

And for $r_+< r_1<r_2$ and $f\in C^1(\mathcal{R})$, there exists a constant $C=C(M,a,r_1)>0$ such that
\begin{equation}
\label{eq:hardyL}
\begin{split}
\int_{\Sigma_{\tau}\cap\{r_1\leq \uprho\leq r_2\}} r^{p} f^2\,d\omega d\uprho\leq&\: C(p+1)^{-2} \int_{\Sigma_{\tau}\cap\{r_1\leq \uprho\leq r_2\}} r^{p+2} (Lf)^2+r^{p-2}(Tf)^2+r^{p-2}(\Phi f)^2\,d\omega d\uprho\\
&+2(p+1)^{-1}\left[\int_{S^2_{\tau,r_2}} r_2^{p+1}f^2\,d\omega-\int_{S^2_{\tau,r_1}} r_1^{p+1}f^2\,d\omega\right].
\end{split}
\end{equation}
\end{lemma}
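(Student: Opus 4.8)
The final statement to prove is Lemma~\ref{lm:hardy}, comprising three Hardy-type inequalities. The plan is to establish the one-dimensional inequality \eqref{eq:hardy1d} first and then deduce \eqref{eq:hardyX} and \eqref{eq:hardyL} from it by integrating over $\s^2$ and handling the change of variables and vector-field geometry.

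\begin{proof}[Proof sketch]
\textbf{Step 1: the one-dimensional estimate \eqref{eq:hardy1d}.} Write $x^p f^2 = \frac{1}{p+1}\frac{d}{dx}(x^{p+1}) f^2$ and integrate by parts on $[a,b]$:
\begin{equation*}
\int_a^b x^p f^2\,dx = \frac{1}{p+1}\left[(x^{p+1}f^2)(b)-(x^{p+1}f^2)(a)\right] - \frac{2}{p+1}\int_a^b x^{p+1} f \frac{df}{dx}\,dx.
\end{equation*}
For the remaining integral apply Cauchy--Schwarz in the form $|x^{p+1} f f'| \le \frac{1}{2}|p+1|^{-1}\varepsilon^{-1} x^p f^2 \cdot |p+1|\varepsilon \cdot \dots$; more cleanly, bound $\left|\frac{2}{p+1}\int x^{p+1} f f'\right| \le \frac{1}{2}\int x^p f^2 + \frac{2}{(p+1)^2}\int x^{p+2}(f')^2$ by the weighted AM--GM inequality $2|uv|\le \frac{1}{2}u^2 + 2v^2$ with $u = x^{p/2}f$, $v = (p+1)^{-1}x^{(p+2)/2}f'$. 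Absorbing the $\frac12\int x^p f^2$ term into the left-hand side yields \eqref{eq:hardy1d} after multiplying through by $2$.

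\textbf{Step 2: the $X$-weighted estimate \eqref{eq:hardyX}.} On $\Sigma_\tau$ we use the coordinate chart $(\uprho,\theta,\varphi_*)$ with $\uprho = r|_{\Sigma_\tau}$ and $X = \partial_\uprho$. For each fixed $(\theta,\varphi_*)\in\s^2$, apply \eqref{eq:hardy1d} in the variable $x = \uprho$ on $[r_1,r_2]$ with $f(\cdot) = f(\cdot,\theta,\varphi_*)$ and $\frac{df}{dx} = Xf$, then integrate the resulting inequality against $d\omega = \sin\theta\,d\theta\,d\varphi_*$ over $\s^2$. The boundary terms become the integrals over $S^2_{\tau,r_2}$ and $S^2_{\tau,r_1}$, and since $r = \uprho$ along $\Sigma_\tau$ this is exactly \eqref{eq:hardyX} with the constant $4(p+1)^{-2}$ untouched.

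\textbf{Step 3: the $L$-weighted estimate \eqref{eq:hardyL}.} Here the main subtlety is that $L$ is not $\partial_\uprho$; using $L = T + \frac{a}{r^2+a^2}\Phi - \underline{L}$ and $\underline{L} = -\frac{\Delta}{2(r^2+a^2)}Y$ together with $X = Y + hT$, one expresses $Y = X - hT$, hence $\underline{L} = \frac{\Delta}{2(r^2+a^2)}(hT - X)$, so that
\begin{equation*}
X = \frac{2(r^2+a^2)}{\Delta}\left(L - T - \frac{a}{r^2+a^2}\Phi\right) + hT,
\end{equation*}
giving a pointwise bound $|Xf| \le C(M,a,r_1)\left(|Lf| + |Tf| + r^{-2}|\Phi f|\right)$ on $\{r\ge r_1\}$, where the $r^{-2}$ weight on $\Phi f$ comes from the $\frac{a}{r^2+a^2}$ coefficient (after multiplying the whole bracket by $\frac{2(r^2+a^2)}{\Delta}=O(1)$ for large $r$, and using that $\Delta^{-1}$ is bounded on $r\ge r_1 > r_+$). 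Substituting $(Xf)^2 \le C(L f)^2 + C(Tf)^2 + C r^{-4}(\Phi f)^2$ into \eqref{eq:hardyX} and noting $r^{p+2}\cdot r^{-4} = r^{p-2}$ produces \eqref{eq:hardyL}; the dependence of $C$ on $r_1$ (but not $r_2$) enters only through the uniform lower bound on $\Delta$ away from the horizon. No harder obstacle arises --- the only point requiring care is tracking that the weight on $Tf$ is $r^{p+2}$ rather than $r^{p-2}$, which is indeed what \eqref{eq:hardyL} asserts (the $r^{p+2}(Tf)^2$ term comes both from the $L - T$ cancellation structure and the $hT$ piece, where $h = O(r^0)$ contributes no extra decay), while $\Phi f$ and the genuinely null $Lf$ piece retain their stated weights.
\end{proof}
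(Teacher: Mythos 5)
Your Steps 1 and 2 are correct and coincide with the paper's argument: integrate $\frac{d}{dx}(x^{p+1}f^2)$, apply Young's inequality with the stated splitting, absorb $\tfrac12\int x^p f^2$, and then run the estimate fibrewise in $\uprho$ and integrate over $\s^2$.

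Step 3, however, contains a genuine error. Expanding your own identity,
\begin{equation*}
X=\frac{2(r^2+a^2)}{\Delta}\Big(L-T-\frac{a}{r^2+a^2}\Phi\Big)+hT=\frac{2(r^2+a^2)}{\Delta}L+\Big(h-\frac{2(r^2+a^2)}{\Delta}\Big)T-\frac{2a}{\Delta}\Phi,
\end{equation*}
the coefficient of $T$ is \emph{not} $O(1)$: by the defining condition \eqref{eq:condh1} on the hyperboloidal foliation, $h-\frac{2(r^2+a^2)}{\Delta}=O(r^{-2})$. You overlooked this cancellation between the $hT$ piece and the $-\frac{2(r^2+a^2)}{\Delta}T$ piece, and so your pointwise bound carries the weight $1$ on $|Tf|$ instead of $r^{-2}$. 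Substituted into \eqref{eq:hardyX} this yields $r^{p+2}(Tf)^2$ on the right-hand side, whereas \eqref{eq:hardyL} asserts $r^{p-2}(Tf)^2$ --- you have misread the statement in your closing sentence, which claims the weight on $(Tf)^2$ is $r^{p+2}$; it is $r^{p-2}$. Your argument therefore proves only a strictly weaker inequality, which would not suffice for the applications in the paper (where the gain of $r^{-4}$ on the $T$-derivative term is used to trade $r$-weights for time decay). The fix is exactly the cancellation above: once you use \eqref{eq:condh1} you obtain $|Xf|\leq C\big(|Lf|+r^{-2}|Tf|+r^{-2}|\Phi f|\big)$ on $\{r\geq r_1\}$, which is the relation the paper's proof invokes in the equivalent form $L=\frac{\Delta}{2(r^2+a^2)}X+O(r^{-2})T+O(r^{-2})\Phi$, and \eqref{eq:hardyL} then follows as you intended.
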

\begin{proof}
The estimate \eqref{eq:hardy1d} follows from integrating $\frac{d}{dx}\left(x^{p+1} f^2)\right)$ and applying the fundamental theorem of calculus and then immediately implies \eqref{eq:hardyX}. We conclude  \eqref{eq:hardyL} from  \eqref{eq:hardyX} combined with the following relation between $L$ and $X$:
\begin{equation*}
L=\frac{\Delta}{2(r^2+a^2)}X+O(r^{-2})T+O(r^{-2})\Phi.
\end{equation*}
\end{proof}

\section{Preliminaries: wave equation}
\label{sec:equations}
We recall and derive in this section some preliminary estimates for the geometric wave equation on Kerr spacetimes that will form important ingredients for the estimates in subsequent sections.
\subsection{Main equations}
\label{sec:maineq}
In the lemma below we represent the wave equation $\square_{g_{M,a}}\psi=0$ in terms of the vector fields $T,Y,\Phi,X,L$ and $\Lbar$. 
\begin{lemma}
\label{lm:maineqs}
Let $\psi\in C^{\infty}(\mathcal{R}\to \C)$ be a solution to
\begin{equation}
\label{eq:waveeq}
\square_{g_{M,a}}\psi:=\frac{1}{\sqrt{\det g_{M,a}}}\partial_{\alpha}\left(\sqrt{\det g_{M,a}}(g_{M,a}^{-1})^{\alpha \beta}\partial_{\beta}\psi \right)=0.
\end{equation}
Then $\psi$ satisfies the following equation:
\begin{equation}
\label{eq:waveeq1}
0=\:a^2\sin^2\theta T^2\psi+2(r^2+a^2)TY\psi+Y(\Delta Y\psi)+2a T\Phi\psi+2aY\Phi \psi+2rT\psi+\slashed{\Delta}_{\s^2}\psi.
\end{equation}
We can reformulate \eqref{eq:waveeq1} as follows:
\begin{equation}
\label{eq:waveeq2}
\begin{split}
X(\Delta X\psi)+2aX\Phi \psi+\slashed{\Delta}_{\s^2}\psi=&\:2[h\Delta-(r^2+a^2)] XT\psi+[(\Delta h)'-2r]T\psi\\
&+[2h(r^2+a^2)-h^2\Delta-a^2\sin^2\theta] T^2\psi+2a(h-1) T\Phi\psi.
\end{split}
\end{equation}
Furthermore, the rescaled quantity $\phi:=\sqrt{r^2+a^2}\psi$ satisfies the equations:
\begin{align}
\label{eq:phieq1}
0=&\:a^2\sin^2\theta T^2\phi+2a T\Phi\phi+2aY\Phi \phi+2(r^2+a^2)TY\phi+(r^2+a^2)Y\left(\frac{\Delta}{r^2+a^2}Y\phi\right)+\slashed{\Delta}_{\s^2}\phi\\\nonumber
&-2a\frac{r}{r^2+a^2} \Phi\phi-\sqrt{r^2+a^2}\frac{d}{dr}(r(r^2+a^2)^{-3/2}\Delta) \phi,\\ 
\label{eq:phieq1}
4\underline{L}L\phi=&\:a^2\sin^2\theta \frac{\Delta}{(r^2+a^2)^2}T^2\phi+2a  \frac{\Delta}{(r^2+a^2)^2}T\Phi\phi+  \frac{\Delta}{(r^2+a^2)^2} \slashed{\Delta}_{\s^2}\phi+2a  \frac{r\Delta}{(r^2+a^2)^3} \Phi \phi\\ \nonumber
&-\frac{1}{2r}\frac{d}{dr}\left(r^2\Delta^2(r^2+a^2)^{-3}\right)\phi.
\end{align}
\end{lemma}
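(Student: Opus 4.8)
The statement to prove is Lemma~\ref{lm:maineqs}, which collects four equivalent forms of the wave equation $\square_{g_{M,a}}\psi=0$: the form \eqref{eq:waveeq1} in the vector fields $T,Y,\Phi$; its reorganization \eqref{eq:waveeq2} in terms of $X,\Phi$ with a $\tau$-derivative inhomogeneity; and the two equations \eqref{eq:phieq1} for the rescaled quantity $\phi=\sqrt{r^2+a^2}\,\psi$. The plan is to derive \eqref{eq:waveeq1} directly from the covariant expression $\frac{1}{\sqrt{\det g}}\partial_\alpha(\sqrt{\det g}\,(g^{-1})^{\alpha\beta}\partial_\beta\psi)=0$ and then obtain the remaining three identities by purely algebraic substitutions — change of vector field and rescaling — requiring no further PDE input.

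\emph{Step 1 (the core computation).} First I would compute $\square_{g_{M,a}}\psi$ in the ingoing Kerr coordinates $(v,r,\theta,\varphi_*)$. One needs $\det g_{M,a}$, which from \eqref{eq:kerrmetric} equals $-\rho^4\sin^2\theta$, so $\sqrt{\det(-g_{M,a})}=\rho^2\sin\theta$; crucially this is $r$-independent up to the $\sin\theta$ factor and the $\rho^2$ is benign under $\partial_r,\partial_v,\partial_{\varphi_*}$. Then I would read off the inverse metric components from \eqref{eq:invmetric} and expand $\partial_\alpha(\rho^2\sin\theta\,(g^{-1})^{\alpha\beta}\partial_\beta\psi)$. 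Multiplying through by $\rho^2$ (to clear the $\rho^{-2}$ factors in $g^{-1}$) should collapse everything to \eqref{eq:waveeq1}: the $\partial_v\otimes\partial_v$ term gives $a^2\sin^2\theta\,T^2\psi$, the mixed $\partial_v\partial_r$ terms give $2(r^2+a^2)TY\psi$ (the $2r\,T\psi$ coming from $\partial_r(r^2+a^2)=2r$ hitting the coefficient), the $\partial_r\otimes\partial_r$ term gives $Y(\Delta Y\psi)$, the $\partial_{\varphi_*}$-mixed terms give $2aT\Phi\psi+2aY\Phi\psi$, and the spherical part gives exactly $\slashed\Delta_{\s^2}\psi$ since the $\rho^{-2}$ cancels. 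This is the one genuine calculation; the main (mild) obstacle is bookkeeping the cross terms carefully and confirming that no $\Phi^2\psi$ or lower-order $\psi$ terms survive — they shouldn't, because the $\varphi_*\varphi_*$ and $vv$ pieces of $\rho^2 g^{-1}$ combine so that only $\slashed\Delta_{\s^2}$ and $a^2\sin^2\theta\,T^2$ remain.

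\emph{Step 2 (from $Y$ to $X$).} To get \eqref{eq:waveeq2}, substitute $Y = X - hT$ (equivalently $X = Y + hT$, as defined in Section~\ref{sec:foliations}) into \eqref{eq:waveeq1}. Expanding $Y(\Delta Y\psi) = X(\Delta X\psi) - X(\Delta h T\psi) - h T(\Delta X\psi) + hT(\Delta h T\psi)$ and $2(r^2+a^2)TY\psi = 2(r^2+a^2)TX\psi - 2(r^2+a^2)hT^2\psi$, then grouping all terms with an explicit $T$ on the right-hand side and using $T(\Delta)=T(h)=0$ (these depend only on $r$, and $T=\partial_v$ with $r$ held fixed in these coordinates — note $X=\partial_\uprho$ commutes appropriately), one collects the coefficients: the $XT\psi$ coefficient is $2h\Delta - 2(r^2+a^2)$, the $T\psi$ coefficient is $(\Delta h)' - 2r$ (the $(\Delta h)'$ from $X$ acting on the $\Delta h$ coefficient), the $T^2\psi$ coefficient is $2h(r^2+a^2) - h^2\Delta - a^2\sin^2\theta$, and the $T\Phi\psi$ coefficient becomes $2a(h-1)$ after combining $2aT\Phi\psi$ with the $-2ahT\Phi\psi$ produced by $2aY\Phi\psi = 2aX\Phi\psi - 2ahT\Phi\psi$. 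This matches \eqref{eq:waveeq2} term for term.

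\emph{Step 3 (rescaling to $\phi$).} For the $\phi$-equations, substitute $\psi = (r^2+a^2)^{-1/2}\phi$ into \eqref{eq:waveeq1}. Since $Y$ is the only vector field acting nontrivially on the radial weight ($T\psi$ and $\Phi\psi$ just carry the weight through, while $Y$ produces derivative-of-weight terms), the terms $2(r^2+a^2)TY\psi$ and $Y(\Delta Y\psi)$ and $2aY\Phi\psi$ generate the zeroth-order terms in $\phi$: using $Y(r^2+a^2)^{-1/2} = -r(r^2+a^2)^{-3/2}$ one gets the $-2a\frac{r}{r^2+a^2}\Phi\phi$ and the $-\sqrt{r^2+a^2}\frac{d}{dr}(r(r^2+a^2)^{-3/2}\Delta)\phi$ contributions, and $2(r^2+a^2)TY[(r^2+a^2)^{-1/2}\phi]$ produces, besides $2(r^2+a^2)TY[(r^2+a^2)^{-1/2}]\phi$-type terms that I would carefully track, the clean form $2(r^2+a^2)T[(r^2+a^2)^{-1/2}Y\phi - r(r^2+a^2)^{-3/2}\phi]$, which after regrouping yields the first equation \eqref{eq:phieq1}. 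Finally, for the null form: apply $\underline L = -\frac{\Delta}{2(r^2+a^2)}Y$ and $L = T + \frac{a}{r^2+a^2}\Phi - \underline L$; compute $4\underline L L\phi$ using $[\underline L, L] = \frac{ar\Delta}{(r^2+a^2)^3}\Phi$ from Section~\ref{sec:vf}, and multiply the first $\phi$-equation through by $\frac{\Delta}{(r^2+a^2)^2}$, matching the coefficient $\frac{\Delta}{(r^2+a^2)^2}$ on $a^2\sin^2\theta\,T^2\phi$, $2aT\Phi\phi$, $\slashed\Delta_{\s^2}\phi$ and the weighted $\Phi\phi$ term, and checking that the zeroth-order coefficient collapses to $-\frac{1}{2r}\frac{d}{dr}(r^2\Delta^2(r^2+a^2)^{-3})$. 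The main obstacle across Steps 2–3 is simply the volume of algebra and sign-tracking; there is no conceptual difficulty, and each identity is verified by direct substitution into the previous one.
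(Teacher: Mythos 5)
Your proposal is correct and follows essentially the same route as the paper, whose proof simply states that the identities follow from straightforward computations using the expressions for $g_{M,a}$, $g_{M,a}^{-1}$ and the vector field definitions; your Steps 1--3 are a faithful (and correctly executed) expansion of that computation, including the key facts $\sqrt{-\det g_{M,a}}=\rho^2\sin\theta$, the cancellation of the $\rho^{-2}$ factors after multiplying by $\rho^2$, and the substitutions $Y=X-hT$ and $\psi=(r^2+a^2)^{-1/2}\phi$. The coefficient bookkeeping in Step 2 matches \eqref{eq:waveeq2} term for term, so no gap remains.
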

\begin{proof}
The above identities follow from straightforward computations that use the expressions for $g_{M,a}$ and $g_{M,a}^{-1}$ in \eqref{eq:kerrmetric} and \eqref{eq:invmetric}, respectively, and the definitions of the vector fields $T,\Phi, Y, X,L,\underline{L}$ in Section \ref{sec:vf}.
\end{proof}

By rewriting \eqref{eq:waveeq2} in terms of the conformal coordinates $(\tau,x,\theta,\varphi_*)$, we obtain an equation for $\phi$ (defined in the statement of Lemma \ref{lm:maineqs}) with smooth coefficients with respect to the differentiable structure on $\widehat{\mathcal{R}}$:

\begin{corollary}
Let $\psi\in C^{\infty}(\mathcal{R} \to \C)$ be a solution to \eqref{eq:waveeq}. Then $\phi$ satisfies the following equation in $(\tau,x,\theta,\varphi_*)$ coordinates:
\begin{equation}
\label{eq:confeqphi}
\begin{split}
0=&\:(1+a^2x^2)\partial_x\left(\frac{\Delta}{(r^2+a^2)}x^2\partial_x\phi\right)+2[1+a^2x^2-x^2\hat{h}(x)]\partial_{\tau}\partial_x\phi+\left[-\frac{2(r^2+a^2)-\hat{h}}{\Delta} \hat{h}+a^2\sin^2\theta\right]\partial_{\tau}^2\phi+\slashed{\Delta}_{\s^2}\phi\\
&+2a\left[1-\frac{2(r^2+a^2)-\hat{h}}{\Delta}\right] \partial_{\tau}\partial_{\varphi_*}\phi-2ax^2\partial_x\partial_{\varphi_*}\phi-(1+a^2x^2)\frac{d}{dx}\left(\frac{\hat{h}x^2}{1+a^2x^2}\right)\partial_{\tau}\phi-\frac{2a x}{1+a^2x^2}\partial_{\varphi_*}\phi\\
&+(1+a^2x^2)^{\frac{1}{2}}\frac{d}{dx}\left(\frac{(1-2Mx+a^2)}{(1+a^2x^2)^{\frac{3}{2}}}\right)\phi.
\end{split}
\end{equation}
\end{corollary}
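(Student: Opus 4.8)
The plan is to obtain \eqref{eq:confeqphi} by a direct change of variables in \eqref{eq:waveeq2}, which is already an identity on $\mathcal{R}$ written in the chart $(\tau,\uprho,\theta,\varphi_*)$ with $X=\partial_{\uprho}$, $T=\partial_{\tau}$, $\Phi=\partial_{\varphi_*}$. Since $x=\uprho^{-1}=r^{-1}$, the coordinate vector field $\partial_{\uprho}$ acts on functions of $(\tau,x,\theta,\varphi_*)$ as $-x^2\partial_x$, whereas $T$ and $\Phi$ are unaffected. First I would record the purely algebraic dictionary
\begin{equation*}
\Delta=x^{-2}(1-2Mx+a^2x^2),\qquad r^2+a^2=x^{-2}(1+a^2x^2),\qquad \Delta h=2(r^2+a^2)-\hat{h}(x),
\end{equation*}
the last identity being merely the definition of $\hat{h}$ rearranged; in particular $h\Delta-(r^2+a^2)=(r^2+a^2)-\hat{h}$, the quantity $(\Delta h)'-2r$ is expressible through $\hat h$, $\hat h'$ and $M$, and $2h(r^2+a^2)-h^2\Delta=h\hat h=\tfrac{2(r^2+a^2)-\hat h}{\Delta}\hat h$. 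Together with the weight $\psi=(r^2+a^2)^{-1/2}\phi=x(1+a^2x^2)^{-1/2}\phi$, this is everything that enters.

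Next I would substitute $\psi=(r^2+a^2)^{-1/2}\phi$ into \eqref{eq:waveeq2} (with all terms moved to one side), expand by the Leibniz rule, replace $\partial_{\uprho}\mapsto-x^2\partial_x$ and $\Delta,\ r^2+a^2,\ h$ by the expressions above, and finally multiply the resulting equation through by $(r^2+a^2)^{1/2}=x^{-1}(1+a^2x^2)^{1/2}$, which is precisely the factor needed to cancel the conformal weight sitting in front of $\slashed{\Delta}_{\s^2}\phi$. A short computation then shows that: the principal part $X(\Delta X(\cdot))$ reassembles into $(1+a^2x^2)\partial_x\big(\tfrac{\Delta}{r^2+a^2}x^2\partial_x\phi\big)$ plus a zeroth-order term, the latter arising solely from $\partial_{\uprho}\big(\Delta\,\partial_{\uprho}(r^2+a^2)^{-1/2}\big)$; the mixed term $2(h\Delta-(r^2+a^2))XT\psi$ produces the coefficient $2[1+a^2x^2-x^2\hat h]$ in front of $\partial_\tau\partial_x\phi$ together with part of the $\partial_\tau\phi$ coefficient; $[2h(r^2+a^2)-h^2\Delta-a^2\sin^2\theta]T^2\psi$ produces the $\partial_\tau^2\phi$ coefficient $-\tfrac{2(r^2+a^2)-\hat h}{\Delta}\hat h+a^2\sin^2\theta$; $2aX\Phi\psi$ produces $-2ax^2\partial_x\partial_{\varphi_*}\phi$ together with a contribution to the $\partial_{\varphi_*}\phi$ coefficient; $2a(h-1)T\Phi\psi$ produces $2a\big[1-\tfrac{2(r^2+a^2)-\hat h}{\Delta}\big]\partial_\tau\partial_{\varphi_*}\phi$; and $[(\Delta h)'-2r]T\psi$ supplies the remaining part of the $\partial_\tau\phi$ coefficient, which after combining with the weight-derivative contribution collapses to $-(1+a^2x^2)\tfrac{d}{dx}\big(\tfrac{\hat h x^2}{1+a^2x^2}\big)$, and similarly for the $\partial_{\varphi_*}\phi$ term. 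Reading off the collected coefficients yields \eqref{eq:confeqphi}.

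Finally, the smoothness of all coefficients on $\widehat{\mathcal{R}}$, in particular their regularity up to and including $\mathcal{I}^+=\{x=0\}$, must be verified. This is where \eqref{eq:condh1}, \eqref{eq:condh3} and \eqref{eq:condh4} are used: the only combinations that are not manifestly regular, namely $\tfrac{2(r^2+a^2)-\hat h}{\Delta}$ occurring in the $\partial_\tau^2\phi$ and $\partial_\tau\partial_{\varphi_*}\phi$ coefficients, equal $h$, which extends smoothly to $x=0$ by \eqref{eq:condh1} since $\tfrac{2(r^2+a^2)}{\Delta}=\tfrac{2(1+a^2x^2)}{1-2Mx+a^2x^2}$ is smooth there, while all remaining coefficients are smooth functions of $x$ built out of $\hat h\in C^\infty([0,r_+^{-1}])$ and $1-2Mx+a^2x^2=x^2\Delta$, up to the single factor $a^2\sin^2\theta$. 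I expect the genuine work here to be purely bookkeeping: keeping track of every first- and zeroth-order term generated when the Leibniz rule hits the weight $(r^2+a^2)^{-1/2}$ and the coefficients $\Delta$ and $h$, and then repackaging those terms in terms of $\hat h$ and $\hat h'$ so that regularity at $x=0$ is transparent. An equivalent but slightly shorter route for the principal part would start from the $(T,Y,\Phi)$-form \eqref{eq:phieq1} of the equation for $\phi$ and substitute $Y=-x^2\partial_x-hT$ along $\Sigma_\tau$, at the cost of reproducing the same cross-term algebra.
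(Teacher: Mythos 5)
Your proposal is correct and follows exactly the route the paper intends: the paper gives no written proof beyond the remark that the corollary follows by rewriting \eqref{eq:waveeq2} in the conformal coordinates, and your substitution $\psi=(r^2+a^2)^{-1/2}\phi$, the replacement $\partial_{\uprho}\mapsto -x^2\partial_x$, the dictionary $\Delta h=2(r^2+a^2)-\hat h$, and the final multiplication by $(r^2+a^2)^{1/2}$ is precisely that computation. Your bookkeeping of the first- and zeroth-order terms (in particular the cancellation of the $2r$ singularity between $(\Delta h)'-2r$ and the weight-derivative contribution from the $XT\psi$ term) checks out.
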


\begin{remark}
When viewed as a function on the extended manifold $\widehat{\mathcal{R}}$, the restriction $\phi|_{\mathcal{I}^+}$ is known as \emph{the (Friedlander) radiation field}.
\end{remark}

We introduce the following higher-order quantities in $\{r>r_+\}$:
\begin{equation*}
\phi^{(n)}=(2(r^2+a^2)^2\Delta^{-1}L)^n\phi.
\end{equation*}
Observe that we can alternatively express $\phi^{(n)}=(-1)^n(\partial_x)^n\phi$ in $(u,x,\theta,\varphi_*)$ coordinates, so $\phi^{(n)}$ may be considered natural higher-order analogues of $\phi$ from the point of view of the conformal spacetime $(\widehat{\mathcal{R}},\hat{g}_{M,a})$.

\begin{proposition}
\label{eq:ho}
Let $\psi\in C^{\infty}(\mathcal{R}\to \C)$ be a solution to \eqref{eq:waveeq}. Then $\phi^{(n)}$ satisfies the following equation
\begin{equation}
\label{eq:commeq}
\begin{split}
4\underline{L}L\phi^{(n)}= &\:a^2\sin^2\theta \frac{\Delta}{(r^2+a^2)^2}T^2 \phi^{(n)}+2a  \frac{\Delta}{(r^2+a^2)^2}T\Phi\phi^{(n)}+2(1+2n)a  \frac{r\Delta}{(r^2+a^2)^3} \Phi \phi^{(n)}\\
&+\Delta (r^2+a^2)^{-2} \slashed{\Delta}_{\s^2}\phi^{(n)}+\Delta (r^2+a^2)^{-2}[n(n+1)+O_{\infty}(r^{-1})] \phi^{(n)}-[4nr^{-1}+O_{\infty}(r^{-2})] L \phi^{(n)}\\
&+a \sum_{k=0}^{n-1}O_{\infty}(r^{-2})\Phi\phi^{(k)}+n\sum_{k=0}^{n-1}O_{\infty}(r^{-2})\phi^{(k)}.
\end{split}
\end{equation}
\end{proposition}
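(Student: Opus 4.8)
The plan is to prove \eqref{eq:commeq} by induction on $n$. The base case $n=0$ is the equation for $\phi=\phi^{(0)}$ established in Lemma \ref{lm:maineqs}: for $n=0$ the coefficient $n(n+1)$ and the factor $4n$ vanish, the sums over $k$ are empty, and $-\tfrac{1}{2r}\tfrac{d}{dr}\big(r^{2}\Delta^{2}(r^{2}+a^{2})^{-3}\big)$ is readily seen to be of the form $\Delta(r^{2}+a^{2})^{-2}O_{\infty}(r^{-1})$ (in fact $O_{\infty}(r^{-3})$).

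For the inductive step write $w:=2(r^{2}+a^{2})^{2}\Delta^{-1}$ and $\mathcal{D}:=wL$, so that $\phi^{(n+1)}=\mathcal{D}\phi^{(n)}$, and apply $\mathcal{D}$ to both sides of \eqref{eq:commeq} at level $n$, which I abbreviate as $4\underline{L}L\phi^{(n)}=\mathrm{RHS}_{n}$. I use the commutator identities of Section \ref{sec:vf} throughout: $[\underline{L},L]=\tfrac{ar\Delta}{(r^{2}+a^{2})^{3}}\Phi$, and — since in $(v,r,\theta,\varphi_{*})$ coordinates the coefficients of $L$ and $\underline{L}$ depend only on $r$ while $T=\partial_{v}$, $\Phi=\partial_{\varphi_{*}}$ and $\slashed{\Delta}_{\s^{2}}$ involve only $(\theta,\varphi_{*})$ — the vanishing of $[T,L]$, $[\Phi,L]$, $[\slashed{\Delta}_{\s^{2}},L]$ and the same commutators with $\underline{L}$, together with the elementary identity $\underline{L}(w)=-L(w)=-\tfrac{\Delta}{2(r^{2}+a^{2})}w'$. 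Commuting $\mathcal{D}$ past $\underline{L}L$ and invoking a single application of $[\underline{L},L]$ gives
\[
4\underline{L}L\phi^{(n+1)}=\mathcal{D}(\mathrm{RHS}_{n})+L(w)\,\mathrm{RHS}_{n}+4\underline{L}(L(w))\,L\phi^{(n)}+4\underline{L}(w)\,LL\phi^{(n)}+\frac{8ar}{r^{2}+a^{2}}\,\Phi L\phi^{(n)}.
\]
On the right one commutes every $L$ through $T,\Phi,\slashed{\Delta}_{\s^{2}}$ and then eliminates it using $L\phi^{(k)}=w^{-1}\phi^{(k+1)}$ and $LL\phi^{(n)}=-L(w)w^{-2}\phi^{(n+1)}+w^{-1}L\phi^{(n+1)}$, so that everything is re-expressed in $\phi^{(k)}$, $L\phi^{(k)}$ and $T,\Phi,\slashed{\Delta}_{\s^{2}}$ applied to them.

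The structural point that closes the induction is that the coefficients of $T^{2}\phi^{(n)}$, $T\Phi\phi^{(n)}$ and $\slashed{\Delta}_{\s^{2}}\phi^{(n)}$ in $\mathrm{RHS}_{n}$ become \emph{constant in $r$} once multiplied by $w$ (they equal $2a^{2}\sin^{2}\theta$, $4a$ and $2$). Hence for such a term $B\,q\phi^{(n)}$ with $q\in\{T^{2},T\Phi,\slashed{\Delta}_{\s^{2}}\}$ the Leibniz rule gives $\mathcal{D}(B\,q\phi^{(n)})=B\,q\phi^{(n+1)}+\big(L(wB)-L(w)B\big)q\phi^{(n)}$ with $L(wB)=0$, so the spurious term $-L(w)B\,q\phi^{(n)}$ is cancelled exactly by the matching term in $L(w)\,\mathrm{RHS}_{n}$, reproducing the same coefficients at level $n+1$. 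The $\Phi\phi^{(n)}$ coefficient is treated the same way, except that $wB=\tfrac{4(1+2n)ar}{r^{2}+a^{2}}$ is only constant to leading order, so $L(wB)=(1+2n)a\,O_{\infty}(r^{-2})$ survives and feeds into $a\sum_{k\le n}O_{\infty}(r^{-2})\Phi\phi^{(k)}$; combined with the contribution $\tfrac{8ar}{r^{2}+a^{2}}\Phi L\phi^{(n)}=\tfrac{4ar\Delta}{(r^{2}+a^{2})^{3}}\Phi\phi^{(n+1)}$ coming from the $[\underline{L},L]$ term, the principal coefficient $2(1+2n)$ is upgraded to $2(1+2(n+1))$. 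Finally the $r^{-2}$-contributions to the coefficient of $\phi^{(n+1)}$ are $n(n+1)$ (from $\mathcal{D}$ of the $\phi^{(n)}$ term), $2n+4n$ (from $\mathcal{D}$ of the $L\phi^{(n)}$ term), $-2$ (from $4\underline{L}(L(w))L\phi^{(n)}$), $4$ (from $4\underline{L}(w)LL\phi^{(n)}$) and $-4n$ (from $L(w)\,\mathrm{RHS}_{n}$ acting on the $L\phi^{(n)}$ term), which sum to $(n+1)(n+2)$; similarly the coefficient $-4nr^{-1}$ of $L\phi^{(n)}$ becomes $-4(n+1)r^{-1}$, and $\mathcal{D}$ together with $L(w)\,\mathrm{RHS}_{n}$ applied to the lower-order sums feeds back into $a\sum_{k\le n}O_{\infty}(r^{-2})\Phi\phi^{(k)}$ and $(n+1)\sum_{k\le n}O_{\infty}(r^{-2})\phi^{(k)}$.

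The main obstacle is this last step of bookkeeping: one must verify that \emph{every} term of the ``wrong'' $r$-weight — those of size $O(r^{-1})$ multiplying $\phi^{(n)}$ or an angular/$T$-operator applied to it — cancels in pairs, which always reduces to the identity $\mathcal{D}(B\,\cdot)=B\,\mathcal{D}(\cdot)+\big(L(wB)-L(w)B\big)(\cdot)$ together with $wB$ being constant (or constant up to $O_{\infty}(r^{-1})$); and one must check that the $O_{\infty}$-errors genuinely retain the claimed $r$-decay across the induction, which may require propagating a mildly strengthened hypothesis on those error coefficients, e.g.\ that $w$ times the coefficient of each $\phi^{(k)}$ with $k<n$ is $O_{\infty}(r^{-1})$. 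Keeping track of the $\sin^{2}\theta$-dependence and of the azimuthal $\Phi$-terms requires care but involves no new idea; everything else is a routine computation with $\Delta$, $\rho^{2}$, $r^{2}+a^{2}$ and their $r$-derivatives.
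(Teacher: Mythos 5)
Your induction is the same as the paper's: the paper likewise writes $4\underline{L}L\phi^{(n+1)}=4L\bigl(\underline{L}(wL\phi^{(n)})\bigr)+4[\underline{L},L]\phi^{(n+1)}$ with $w=2\Delta^{-1}(r^2+a^2)^2$, substitutes the inductive identity into $L(w\cdot 4\underline{L}L\phi^{(n)})$, and relies on exactly the cancellation $wL(B)+L(w)B=L(wB)$ with $wB$ constant to leading order; your coefficient bookkeeping ($n(n+1)+6n+2-4n=(n+1)(n+2)$, $2(1+2n)+4=2(1+2(n+1))$, $-4n-4=-4(n+1)$) checks out. One small correction to your closing remark: the property that actually propagates is that $w$ times each lower-order coefficient is a \emph{constant plus} $O_{\infty}(r^{-1})$ (as you state mid-sentence), not $O_{\infty}(r^{-1})$ itself --- e.g.\ the coefficient of $\phi^{(n)}$ produced at level $n+1$ is genuinely $\sim 2r^{-2}$, so $w$ times it tends to a nonzero constant; with that formulation the induction closes.
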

\begin{proof}
We will prove \eqref{eq:commeq} by induction. First, note that the $n=0$ case follows from \eqref{eq:phieq1}. Now suppose \eqref{eq:commeq} holds for some $n\in \N$. Then
\begin{equation*}
\begin{split}
4\underline{L}L\phi^{(n+1)}=&\:4L\left(\underline{L}\left(2\Delta^{-1} (r^2+a^2)^{2} L\phi^{(n)}\right)\right)+4[\underline{L},L] \phi^{(n+1)}\\
=&\:8L\left(\Delta^{-1} (r^2+a^2)^{2} \underline{L} L\phi^{(n)}\right)-4L\left(\frac{d}{dr}(\Delta^{-1} (r^2+a^2)^{2})\Delta (r^2+a^2)^{-1} L\phi^{(n)}\right)\\
&+4\frac{a r \Delta}{(r^2+a^2)^{3}} \Phi \phi^{(n+1)}.
\end{split}
\end{equation*}
By applying \eqref{eq:commeq}, we obtain
\begin{equation*}
\begin{split}
8L\left(\Delta^{-1} (r^2+a^2)^{2} \underline{L} L\phi^{(n)}\right)=&\:a^2\sin^2\theta \frac{\Delta}{(r^2+a^2)^2}T^2 \phi^{(n+1)}+2a  \frac{\Delta}{(r^2+a^2)^2}T\Phi\phi^{(n+1)}+\Delta (r^2+a^2)^{-2} \slashed{\Delta}_{\s^2}\phi^{(n+1)}\\
&+4(1-2n)a\frac{r\Delta}{(r^2+a^2)^2} \Phi \phi^{(n+1)}+\Delta (r^2+a^2)^{-2}[n(n+1)+O_{\infty}(r^{-1})] \phi^{(n+1)}\\
&-[4nr^{-1}+O_{\infty}(r^{-2})] L \phi^{(n+1)}\\
&+a \sum_{k=0}^{n}O_{\infty}(r^{-2})\Phi\phi^{(k)}+(n+1)\sum_{k=0}^{n}O_{\infty}(r^{-2})\phi^{(k)}.
\end{split}
\end{equation*}
Furthermore,
\begin{equation*}
\begin{split}
-4L\left(\frac{d}{dr}(\Delta^{-1} (r^2+a^2)^{2})\Delta (r^2+a^2)^{-1} L\phi^{(n)}\right)=&\:[-4 r^{-1}+O_{\infty}(r^{-2})] L\phi^{(n+1)}+[2r^{-2}+O_{\infty}(r^{-3}) ]\phi^{(n)}
\end{split}
\end{equation*}
By combining the above equations, we arrive at \eqref{eq:commeq} with $n$ replaced by $n+1$, which concludes the induction argument.
\end{proof}

\subsection{Propagation of regularity and limits at null infinity}
It will be convenient to appeal to a propagation of regularity result for the wave equation \eqref{eq:waveeq} with respect to the differentiable structure on both $\mathcal{R}$ and $\widehat{\mathcal{R}}$.
\begin{proposition}
Let $k\in \N_0$. 
\begin{enumerate}[\rm (i)]
\item
Let $(\Psi,\Psi')\in H^{k+1}_{\rm loc}(\Sigma_0)\times H^{k}_{\rm loc}(\Sigma_0)$. Then there exists a solution $\psi$ to \eqref{eq:waveeq} (in the distributional sense), such that for all $T\geq 0$:
\begin{align*}
\psi\in &\:C^0([0,T], H^{k+1}_{\rm loc}(\Sigma_0)) \cap C^1([0,T], H^{k}_{\rm loc}(\Sigma_0)),\\
\psi|_{\Sigma_0}=&\:\Psi,\\
T\psi|_{\Sigma_0}=&\:\Psi'.
\end{align*}
\item Assume moreover that $(\sqrt{r^2+a^2}\Psi,\sqrt{r^2+a^2}\Psi')\in H^{k+1}(\widehat{\Sigma}_0)\times H^{k}(\widehat{\Sigma}_0)$. Then, for all $\tau\geq 0$,
\begin{equation}
\label{eq:reghoderphi}
(\phi|_{\Sigma_{\tau}},T\phi|_{\Sigma_{\tau}})\in  H^{k+1}(\widehat{\Sigma}_{\tau})\times H^{k}(\widehat{\Sigma}_{\tau}).
\end{equation}
\item Let $(\Psi,\Psi')\in C^{\infty}(\Sigma_0)\times C^{\infty}(\Sigma_0)$. Assume moreover that
\begin{equation*}
(\sqrt{r^2+a^2}\Psi,\sqrt{r^2+a^2}\Psi')\in H^{N+1}(\widehat{\Sigma}_0)\times H^{N}(\widehat{\Sigma}_0)
\end{equation*}
for some $N\in \N_0$. Then for all $\tau\geq 0$ and $k,m,n\in \N_0$, such that $k+m+n\leq N$,
\begin{equation*}
\int_{\s^2}|\snabla_{\s^2}^{k} \partial_{\tau}^m\phi^{(n)}|^2(\tau,\uprho,\theta,\varphi)\,d\omega
\end{equation*}
attains a finite limit as $\uprho\to \infty$.
\end{enumerate}
\end{proposition}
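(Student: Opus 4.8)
The plan is to treat the three parts in turn; the substance lies in the weighted estimate of part~(ii), after which parts~(i) and~(iii) are short.

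\emph{Part (i).} This is classical Cauchy theory. The region $\mathcal{R}=J^+(\Sigma_0)\cap J^-(\mathcal{H}^+)$ is globally hyperbolic with Cauchy hypersurfaces $\{\Sigma_\tau\}_{\tau\ge0}$ by Lemma~\ref{lm:propSigmat}, and $\mathcal{H}^+$ is a null boundary — an ingoing characteristic, in ingoing Kerr coordinates where $g_{M,a}$ is smooth — so no data need be prescribed there. I would first solve at the level of $C^0([0,T],H^1_{\rm loc})\cap C^1([0,T],L^2_{\rm loc})$ by a localized energy estimate for \eqref{eq:waveeq1}, using the everywhere-timelike vector field $N$ of Section~\ref{sec:foliations} (equivalently $\mathbf{n}_\tau$) as multiplier and the systematic integration-by-parts identity of Lemma~\ref{lm:intbyparts}; since only finite $\tau$ is at stake, a crude Gr\"onwall argument absorbing the non-Killing bulk error suffices, and the Morawetz/redshift estimates of \cite{part3} are not needed here. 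Higher regularity $C^0H^{k+1}_{\rm loc}\cap C^1H^k_{\rm loc}$ then follows by commuting \eqref{eq:waveeq1} with the Killing fields $T,\Phi$ and with $X$ and recovering the top-order pure radial derivatives from the equation itself (the coefficient $\Delta$ of $Y(\Delta Y\psi)$ being nonzero for $r>r_+$, the behaviour at $r=r_+$ being handled by commuting with $Y$ and the usual redshift structure).

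\emph{Part (ii).} Here I would run the same scheme for $\phi=\sqrt{r^2+a^2}\,\psi$ in the conformal chart $(\tau,x,\theta,\varphi_*)$ on $\widehat{\mathcal{R}}$, using \eqref{eq:confeqphi}, whose coefficients extend smoothly up to and including $\mathcal{I}^+=\{x=0\}$. Since $\mathcal{I}^+$ is again a null boundary, no data are prescribed there and the flux of $\mathbb{T}[\phi]$ through it is non-negative. Applying Lemma~\ref{lm:intbyparts} to the current generated by $\mathbf{n}_\tau$ — uniformly timelike with respect to $\hat g_{M,a}$ up to both boundaries — and adding the Hardy inequality \eqref{eq:hardyX} for the zeroth-order term gives
\begin{equation*}
\int_{\widehat{\Sigma}_{\tau}}\big((\partial_\tau\phi)^2+(\partial_x\phi)^2+|\snabla_{\s^2}\phi|^2+\phi^2\big)\,dx\,d\omega\ \lesssim\ e^{C\tau}\int_{\widehat{\Sigma}_{0}}\big((\partial_\tau\phi)^2+(\partial_x\phi)^2+|\snabla_{\s^2}\phi|^2+\phi^2\big)\,dx\,d\omega,
\end{equation*}
the right-hand side being finite precisely because $(\sqrt{r^2+a^2}\,\Psi,\sqrt{r^2+a^2}\,\Psi')\in H^1(\widehat{\Sigma}_0)\times H^0(\widehat{\Sigma}_0)$. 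To propagate $H^{k+1}(\widehat{\Sigma}_\tau)$ I would commute \eqref{eq:confeqphi} with $\partial_\tau$, $\partial_{\varphi_*}$ and $\snabla_{\s^2}$, and with $\partial_x$ — the direction transverse to $\mathcal{I}^+$ — where the commutator is exactly the structured identity \eqref{eq:commeq} of Proposition~\ref{eq:ho} for $\phi^{(n)}=(-1)^n\partial_x^n\phi$, whose error terms carry favourable $O_\infty(r^{-2})$ weights and are absorbed inductively, while the top-order radial derivatives are again read off from the equation. Inducting on $k$ yields \eqref{eq:reghoderphi}.

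\emph{Part (iii).} The first step is to upgrade (ii) to control of \emph{all} mixed spacetime derivatives near $\mathcal{I}^+$: every $\snabla_{\s^2}^k\partial_x^j\partial_\tau^m\phi|_{\widehat{\Sigma}_\tau}$ of total order $k+j+m\le N+1$ lies in $L^2(\widehat{\Sigma}_\tau)$. Indeed, the coefficient of $T^2\psi$ in \eqref{eq:waveeq2} — equivalently of $\partial_\tau^2\phi$ in \eqref{eq:confeqphi} — is bounded away from zero on $\widehat{\mathcal{R}}$ as a consequence of \eqref{eq:condh2}, so the equation may be solved for $\partial_\tau^2\phi$ as a second-order operator with coefficients smooth on $\widehat{\mathcal{R}}$ and at most one $\partial_\tau$; substituting this identity whenever two or more $\partial_\tau$'s occur rewrites any such mixed derivative as a finite sum of terms of the form $(\text{smooth coeff})\cdot\partial_x^a\snabla_{\s^2}^b\phi$ with $a+b\le N+1$ or $(\text{smooth coeff})\cdot\partial_x^a\snabla_{\s^2}^b(T\phi)$ with $a+b\le N$, hence controlled by $\phi|_{\widehat{\Sigma}_\tau}\in H^{N+1}$ and $T\phi|_{\widehat{\Sigma}_\tau}\in H^{N}$ from (ii) (applied with $k=N$). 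Since $\phi^{(n)}$ is obtained from $\phi$ by an order-$n$ differential operator with coefficients smooth on $\widehat{\mathcal{R}}$, it follows that for $k+m+n\le N$ both
\begin{equation*}
F:=\snabla_{\s^2}^k\partial_\tau^m\phi^{(n)}\qquad\text{and}\qquad \partial_x F
\end{equation*}
lie in $L^2(\widehat{\Sigma}_\tau)$, and — the conformal volume form on $\widehat{\Sigma}_\tau$ being comparable to $dx\,d\omega$ near $x=0$ — in $L^2\big((0,r_+^{-1}]_x\times\s^2\big)$. Set $\tilde G(x):=\int_{\s^2}|F|^2(\tau,\uprho=x^{-1},\theta,\varphi_*)\,d\omega$ for $x\in(0,r_+^{-1}]$. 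Then $\tilde G'(x)=2\int_{\s^2}F\,\partial_x F\,d\omega\in L^1\big((0,r_+^{-1}]\big)$ by Cauchy--Schwarz, so $\tilde G$ agrees with a $W^{1,1}$ function on $(0,r_+^{-1}]$ and therefore extends continuously to $x=0$; equivalently $\lim_{\uprho\to\infty}\int_{\s^2}|F|^2\,d\omega$ exists and is finite, since for $0<x_1\le x_2$
\begin{equation*}
\big|\tilde G(x_1)-\tilde G(x_2)\big|\ \le\ 2\,\|F\|_{L^2(\{x_1\le x\le x_2\}\times\s^2)}\,\|\partial_x F\|_{L^2(\{x_1\le x\le x_2\}\times\s^2)}\ \longrightarrow\ 0\quad\text{as }x_1,x_2\to0^+.
\end{equation*}

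\emph{Main obstacle.} The only genuinely delicate point is part~(ii): the energy estimate must close uniformly up to the two null boundaries $\mathcal{H}^+$ and $\mathcal{I}^+$, where the principal part of \eqref{eq:confeqphi} degenerates (the coefficient $\frac{\Delta}{r^2+a^2}\,x^2$ of the second radial derivative vanishes at $\mathcal{I}^+$) and where $T$ fails to be causal (the ergoregion). Choosing $\mathbf{n}_\tau$ — uniformly $\hat g_{M,a}$-timelike up to both boundaries — as the multiplier makes all boundary fluxes non-negative, and the transverse commutation with $\partial_x$, which would otherwise be the dangerous step near $\mathcal{I}^+$, is tractable precisely because \eqref{eq:commeq} displays all its error terms with the correct $r$-weights. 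Granting part~(ii), parts (i) and (iii) are routine.
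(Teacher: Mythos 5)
Your proposal is correct and follows essentially the same route as the paper: part (i) by standard local Cauchy theory, part (ii) by finite-in-time higher-order energy estimates for \eqref{eq:confeqphi} in the conformal chart where the coefficients are smooth up to $\mathcal{I}^+$, and part (iii) by the fundamental theorem of calculus in $x$ plus Cauchy--Schwarz, using the equation to trade repeated $\partial_\tau$'s for spatial derivatives and the fact that $\snabla_{\s^2}^{k}\partial_\tau^m\phi^{(n)}$ is a smooth-coefficient combination of coordinate derivatives of total order $\leq N$. You simply supply more detail (multiplier/commutator choices, the $W^{1,1}$ continuity argument) where the paper defers to standard references.
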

\begin{proof}
Part (i) follows from standard result for linear wave equations, see for example \cite{sogge}. In order to obtain part (ii), we consider \eqref{eq:confeqphi} and observe that the coefficients in the equation are smooth functions of $(x,\theta)$ (note in particular the smoothness in $r$.) Therefore, \eqref{eq:reghoderphi} follows from standard (finite-in-time) higher-order energy estimates with respect to $(\tau,x,\theta,\varphi_*)$ coordinates.

By the fundamental theorem of calculus it follows that for $0<x_0\leq r_+^{-1}$ and $x\leq x_0$
\begin{equation*}
\begin{split}
\int_{\s^2}&|\snabla_{\s^2}^{k_1}\partial_\tau^{k_2}\partial_{x}^{k_3}\phi|^2(\tau,x,\theta,\varphi)\,d\omega=\int_{\s^2}|\snabla_{\s^2}^{k_1}\partial_\tau^{k_2}\partial_{x}^{k_3}\phi|^2(\tau,x_0,\theta,\varphi)\,d\omega\\
&-2\int_x^{x_0}\snabla_{\s^2}^{k_1}\partial_\tau^{k_2}\partial_{x}^{k_3}\phi \cdot \snabla_{\s^2}^{k_1}\partial_\tau^{k_2}\partial_{x}^{k_3+1}\phi (\tau,x',\theta,\varphi)\,d\omega dx'.
\end{split}
\end{equation*}
Observe that the first integral on the right-hand side is finite by smoothness of $\psi$ with respect to the differentiable structure on $\mathcal{R}$, which in turn follows from the smoothness of $\Psi$ and $\Psi'$, together with a standard propagation of regularity argument; see again \cite{sogge}.

  Then, after applying Cauchy--Schwarz to the second term on the right-hand side of the equation above, together with \eqref{eq:reghoderphi} and \eqref{eq:confeqphi}, it follows that the limit of the left-hand side at $x=0$ is well-defined provided $k_1+k_2\leq N$. Since $\snabla_{\s^2}^{k}\partial_{\tau}^{m}\phi^{(n)}$, with $k+m+n\leq N$, can be written as a linear combination of $\snabla_{\s^2}^{k_1}\partial_\tau^{k_2}\partial_{x}^{k_3}\phi$ with $k_1+k_2+k_3\leq N$, we obtain part (iii).
\end{proof}

Throughout the remainder of the paper we will assume for the sake of convenience that the initial data for \eqref{eq:waveeq} satisfy:
\begin{equation*}
(\sqrt{r^2+a^2}\Psi,\sqrt{r^2+a^2}\Psi')\in C^{\infty}(\widehat{\Sigma}_0)\times C^{\infty}(\widehat{\Sigma}_0),
\end{equation*}
though this assumption can be significantly weakened by a standard density argument with respect to the (weighted) energy norms in the relevant estimates.
\subsection{Energy boundedness and integrated local energy decay}
\label{sec:prelimen}
Two key ingredients towards deriving the results in the present article are \emph{energy boundedness} and \emph{integrated local energy decay}. In the context of subextremal Kerr spacetimes with $|a|<M$, these were both derived in \cite{part3}. We summarize the relevant results as Theorem \ref{thm:morawetz} below.

\begin{thmx}[Dafermos--Rodnianski--Shlapentokh-Rothman, \cite{part3}]
\label{thm:morawetz}
 Let $\psi$ be a solution to \eqref{eq:waveeq} arising from suitably regular initial data with respect to $\widehat{\mathcal{R}}$. Then the following estimates hold:
 \begin{enumerate}[\rm (i)]
 \item There exists a constant $C=C(M,a)>0$, such that for any $\tau_2\geq \tau_1\geq 0$:
 \begin{equation}
 \label{eq:ebound}
 \int_{\Sigma_{\tau_2}}  \mathbf{J}^{N}[\psi]\cdot \mathbf{n}_{\tau_2}\, r^2d\omega d \uprho\leq C \int_{\Sigma_{\tau_1}}  \mathbf{J}^{N}[\psi]\cdot \mathbf{n}_{\tau_1}\, r^2d\omega d \uprho.
 \end{equation}
 \item
Let $n\in \N_0$. For $R_0>0$ suitably large and $R_1>0$, there exists a constant $C=C(n,R_0,R_1,M,a)>0$, such that for any $\tau_2\geq \tau_1\geq 0$:
 \begin{equation}
 \label{eq:morawetz1}
\sum_{0\leq k_1+k_2+k_3\leq n}\int_{\tau_1}^{\tau_2}\left[ \int_{\Sigma_{\tau} \cap \{R_0\leq r \leq R_1\}} |\snabla_{\s^2}^{k_1} T^{k_2}Y^{k_3}\psi|^2 d\omega d\uprho\right]\,d\tau\leq C \sum_{m\leq n}\int_{\Sigma_{\tau_1}}  \mathbf{J}^{N}[T^m\psi]\cdot \mathbf{n}_{\tau_1}\, r^2d\omega d \uprho.
 \end{equation}
 Furthermore, for all $\delta>0$, there exists $C=C(n,M,a)>0$, such that
 \begin{equation}
\label{eq:eboundradfield2}
\int_{D^{\tau_2}_{\tau_1}} r^{-1-\delta}\left[ (\Lbar \phi)^2+(L \phi)^2+r^{-2}\phi^2+r^{-2 }|\snabla_{\s^2}\phi|^2\right]\,d\omega d\rho d\tau \leq  C \int_{\Sigma_{\tau_1}}  \mathbf{J}^{N}[\psi]\cdot \mathbf{n}_{\tau_1}\, r^2d\omega d \uprho.
\end{equation}
 \item Let $n\in \N_0$. For $R_0>0$ arbitrarily large, there exists a constant $C=C(n,R_0,M,a)>0$, such that for any $\tau_2\geq \tau_1\geq 0$:
 \begin{equation}
\label{eq:morawetz2}
\sum_{0\leq k_1+k_2+k_3\leq n}\int_{\tau_1}^{\tau_2}\left[ \int_{\Sigma_{\tau} \cap \{r\leq R_0\}} |\snabla_{\s^2}^{k_1} T^{k_2}Y^{k_3}\psi|^2 d\omega d\uprho\right]\,d\tau\leq C \sum_{k_1+k_2\leq n+1}\int_{\Sigma_0}  \mathbf{J}^{N}[T^{k_1}\Phi^{k_2}\psi]\cdot \mathbf{n}_{\tau}\, r^2d\omega d \uprho.
 \end{equation}
 \end{enumerate}
\end{thmx}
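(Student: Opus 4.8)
The estimates in Theorem~\ref{thm:morawetz} are due to \cite{part3}; here I outline the strategy one would follow. The plan is to prove energy boundedness \eqref{eq:ebound} and the degenerate integrated local energy decay \eqref{eq:morawetz1}--\eqref{eq:morawetz2} \emph{together}, since the ergoregion obstructs treating either in isolation: the flux of $\mathbf{J}^N[\psi]$ through $\Sigma_\tau$ controls a positive-definite energy only up to a bulk term supported in $\{\Delta-a^2\sin^2\theta<0\}$, which one can absorb only once one already controls a spacetime integral over that region. I would first reduce to the main difficulty by using the redshift vector field of \cite{redshift} as a multiplier in a neighbourhood of $\mathcal{H}^+$ (this produces a good spacetime term with the right sign near the horizon, and is precisely why $N=\mathbf{n}_\tau$ may be taken in $\{r\le 3M\}$) together with a standard $T$-energy estimate in the asymptotically flat region, so that everything reduces to a Morawetz (spacetime) estimate with a loss at the trapped set, uniform up to the boundaries.

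The core of the argument is a frequency-localized analysis. I would insert a smooth cutoff in $\tau$ to make $\psi$ compactly supported in the time direction, at the cost of an inhomogeneity supported on two fixed time slabs, then Fourier transform in the Boyer--Lindquist time coordinate $t$ and in $\varphi$ and expand in oblate spheroidal harmonics $S_{m\ell}(a\omega,\theta)$, reducing \eqref{eq:waveeq1} to a family of radial ODEs in $r^*$ with a frequency-dependent potential $V_{\omega m\ell}(r)$. For each frequency triple one builds a current of the form $f(r)\,\mathrm{Re}(\overline{u'}u)+\ldots$ together with a $\chi(r)\,\omega\,\mathrm{Im}(\overline{u}u)$ current, choosing $f$ so that $f'>0$ away from the unique maximum $r_{\mathrm{trap}}(\omega,m,\ell)$ of the potential and $f(r_{\mathrm{trap}})=0$; the trapping degeneracy is then localized at $r_{\mathrm{trap}}$ and is removed after one $T$-commutation (equivalently, a factor $\omega^2$). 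The delicate point is the superradiant regime $\omega(\omega-m\upomega_+)<0$: here one exploits the structural fact, special to $|a|<M$, that superradiant frequencies are bounded away from the trapped frequencies, so that one obtains a \emph{non-degenerate} estimate for those modes, which controls the potentially-negative boundary fluxes at $\mathcal{H}^+$ and at infinity. Summing the frequency-localized currents via Plancherel, and handling the time-cutoff inhomogeneity by a pigeonhole/continuity argument in $\tau$, yields \eqref{eq:ebound} and \eqref{eq:morawetz1} simultaneously.

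For the interior estimate \eqref{eq:morawetz2} one commutes additionally with the Killing fields $\Phi$ and $T$: near and inside the ergoregion $K=T+\upomega_+\Phi$ is timelike up to the horizon, so a $K$-energy estimate combined with the Morawetz bound controls $\{r\le R_0\}$ once the higher-order $T^{k_1}\Phi^{k_2}$ data appear on the right-hand side, and the extra $Y$-derivatives are recovered by using the redshift commutation of \cite{redshift} together with \eqref{eq:waveeq1} to trade $Y^2$ for $TY$, $T^2$, $\Phi$ and angular derivatives. Finally, the weighted estimate \eqref{eq:eboundradfield2} is the Dafermos--Rodnianski $r^p$ method near null infinity: writing the equation in the form \eqref{eq:phieq1} and integrating by parts in $\mathcal{R}\cap\{r\ge R\}$ against the weight $r^{-1-\delta}$ applied to $L\phi$, the bulk term towards $\mathcal{I}^+$ has the favourable sign and the finitely many error terms in the bounded region are absorbed by the estimates already proved.

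I expect the main obstacle to be the interaction of trapping with superradiance --- verifying quantitatively that the superradiant and trapped frequency ranges are disjoint for \emph{every} $|a|<M$, and constructing a single multiplier family that is coercive in both regimes --- together with the bookkeeping required to decouple energy boundedness from integrated decay in the presence of the ergoregion, which is exactly what forces the two estimates to be proved in tandem rather than sequentially.
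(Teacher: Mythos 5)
The paper does not prove Theorem~\ref{thm:morawetz}: it is imported verbatim from \cite{part3}, exactly as you do, and your sketch of the Dafermos--Rodnianski--Shlapentokh-Rothman argument (coupled energy boundedness and integrated decay, redshift near $\mathcal{H}^+$, frequency localization with radial multiplier currents, and the disjointness of the superradiant and trapped frequency regimes in the full range $|a|<M$) is an accurate summary of that reference. The only small inaccuracies are that the continuity argument in \cite{part3} runs in the black-hole parameter $a$ (to justify the qualitative future integrability needed before Fourier transforming) rather than in $\tau$, and that \eqref{eq:eboundradfield2} is the far-region part of the Morawetz estimate obtained from a radial multiplier rather than an instance of the $r^p$ hierarchy; neither affects the substance.
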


We note moreover that by definition of $\Sigma_{\tau}$ in Section \ref{sec:foliations}, and the choice $h_0\neq 0$ in particular, we can estimate
\begin{equation*}
 \mathbf{J}^{N}[\psi]\cdot \mathbf{n}_{\tau}\sim (X\psi)^2+r^{-2}(T\psi)^2+r^{-2}|\snabla _{\s^2}\psi|^2 .
\end{equation*}
It follows immediately by combining \eqref{eq:ebound} and \eqref{eq:morawetz1} and applying \eqref{eq:hardyX} with $p=0$ that for all $0\leq \tau_1\leq \tau_2$:
\begin{equation}
\label{eq:eboundradfield}
\int_{N_{\tau_2}} (L\phi)^2+r^{-2}(\underline{L} \phi)^2+r^{-2}|\snabla_{\s^2}\phi|^2\,d\omega d\rho+\int_{I_{s}\cap D^{\tau_2}_{\tau_1}} (\Lbar \phi)^2+r^{-2}(L \phi)^2+r^{-2}|\snabla_{\s^2}\phi|^2\,d\omega d\rho \leq  C \int_{\Sigma_{\tau_1}}  \mathbf{J}^{N}[\psi]\cdot \mathbf{n}_{\tau_1}\, r^2d\omega d \uprho.
\end{equation}

In order to obtain more refined energy decay estimates for the projections $\psi_{\geq \ell}$, we need additional integrated local energy decay estimates for $\psi_{\geq \ell}$.

\begin{proposition}
 Let $\psi$ be a solution to \eqref{eq:waveeq} arising from suitably regular initial data with respect to $\widehat{\mathcal{R}}$. Then the following estimates hold for $\ell\in \N$:
 \begin{enumerate}[\rm (i)]
 \item
For $R_0>2M$ suitably large, there exists a constant $C=C(M,a,R _0)>0$, such that for any $\tau_2\geq \tau_1\geq 0$:
 \begin{equation}
 \begin{split}
 \label{eq:morawetzell1}
\int_{\Sigma_{\tau_2}}& J^N[\psi_{\geq \ell}]\cdot  \mathbf{n}_{\tau_2}\, r^2d\omega d \uprho\\
&+\int_{\tau_1}^{\infty}\left[ \int_{\Sigma_{\tau}\cap \{r\geq R_0\}}  M^{\delta}r^{1-\delta}[  (T\psi_{\geq \ell})^2+(Y\psi_{\geq \ell})^2]+r^{-1}  |\snabla_{\s^2}\psi_{\geq \ell}|^2+M^{\delta}r^{-1-\delta}\psi_{\geq \ell}^2\,d\omega d\uprho\right]d\tau \\
\leq&\: Ca^2\int_{\tau_1}^{\infty}\left[ \int_{\Sigma_{\tau}}  M^{-\delta}r^{-1+\delta} \left[(T^2\psi_{\geq \max \{\ell-2,0\}})^2+M^2(T^3\psi_{\geq \max \{\ell-2,0\}})^2+(T^2\Phi\psi_{\geq \max \{\ell-2,0\}})^2\right]\,d\omega d \uprho\right] d\tau\\
&+C\int_{\Sigma_{\tau_1}} \left[\mathbf{J}^{N}[\psi_{\geq\ell}]\cdot \mathbf{n}_{\tau_1}+a^2\mathbf{J}^{N}[T^2\psi_{\geq \max \{0,\ell-2\}}]\cdot \mathbf{n}_{\tau_1}\right]\, r^2d\omega d \uprho.
\end{split}
 \end{equation}
 \item There exists a constant $C=C(M,a)>0$, such that for any $\tau_2\geq \tau_1\geq 0$:
 \begin{equation}
 \begin{split}
\label {eq:morawetzell2}
\int_{\tau_1}^{\infty}&\left[ \int_{\Sigma_{\tau}}  M^{\delta}r^{1-\delta}[ (T\psi_{\geq \ell})^2+(Y\psi_{\geq \ell})^2]+r^{-1} |\snabla_{\s^2}\psi_{\geq \ell}|^2+M^{\delta}r^{-1-\delta}\psi_{\geq \ell}^2\,d\omega d\uprho\right]d\tau \\
\leq&\: Ca^2 \sum_{k_1+k_2\leq 2} \int_{\tau_1}^{\tau_2}\left[ \int_{\Sigma_{\tau}}  M^{-\delta}r^{-1+\delta}(T^{2+k_1}\Phi^{k_2}\psi_{\geq \max \{\ell-2,0\}})^2\,d\omega d \uprho\right] d\tau\\
&+C\sum_{k_1+k_2\leq 1}\int_{\Sigma_{\tau_1}} M^{k_1} \mathbf{J}^{N}[T^{k_1}\Phi^{k_2}\psi_{\geq\ell}]\cdot \mathbf{n}_{\tau_1}+a^2M^{k_1}\mathbf{J}^{N}[T^{2+k_1}\Phi^{k_2}\psi_{\geq \max \{0,\ell-2\}}]\cdot \mathbf{n}_{\tau_1}\, r^2d\omega d \uprho.
\end{split}
 \end{equation}
 \item Let $n\in \N_0$. For $R_0>2M$ suitably large and $R_1>R_0$, there exists a constant $C=C(M,a,n,R_0,R_1)>0$, such that for any $\tau_1\geq 0$:
  \begin{equation}
 \label{eq:morawetzell3}
  \begin{split}
&\sum_{0\leq k_1+k_2+k_3\leq n}\int_{\tau_1}^{\infty}\left[ \int_{\Sigma_{\tau} \cap \{R_0\leq r \leq R_1\}}|\snabla_{\s^2}^{k_1} T^{k_2}Y^{k_3}\psi_{\geq \ell}|^2 d\omega d\uprho\right]\,d\tau\\
\leq&\: C\sum_{k=0}^n  \left[\int_{\Sigma_{\tau_1}}J^N[T^{k}\psi_{\geq \ell}]\cdot \mathbf{n}_{\tau_1}\, r^2d\omega d \uprho+\sum_{\substack{1\leq l\leq \lceil \ell/2\rceil \\ 0\leq l_1+l_2\leq l}} a^2 \int_{\Sigma_{\tau_1}}J^N[T^{k+2l}T^{l_1}\Phi^{l_2}\psi_{\geq \max\{\ell-2l,0\}}\cdot \mathbf{n}_{\tau_1}\, r^2d\omega d \uprho\right].
\end{split}
 \end{equation}
 \end{enumerate}
\end{proposition}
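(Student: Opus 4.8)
The plan is to obtain all three estimates by combining the unrefined estimates of Theorem \ref{thm:morawetz} with the mode-coupling identity \eqref{eq:introcouplingmodes1}, treating the right-hand side of the latter as an inhomogeneity. First I would observe that $\psi_{\geq \ell}$ does \emph{not} solve the homogeneous wave equation; rather, by projecting \eqref{eq:waveeq1} (equivalently, using the schematic form \eqref{eq:introcouplingmodes1}) onto angular frequencies $\geq \ell$, one finds $\rho^2\square_{g_{M,a}}\psi_{\geq \ell}=a^2 G_{\ell}$, where $G_{\ell}$ is built from $T^2\psi$, $T^2\Phi\psi$ and $T^3\psi$ and, crucially by Lemma \ref{eq:lprojsin}, is supported on angular frequencies $\geq \ell-2$ (the factor $\sin^2\theta$ raises and lowers $\ell$ by at most $2$). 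This is the mechanism by which the boxed terms in \eqref{eq:introcommhier} arise, and it is exactly why the right-hand sides of \eqref{eq:morawetzell1}--\eqref{eq:morawetzell3} involve $\psi_{\geq \max\{\ell-2,0\}}$ with two extra $T$-derivatives and an $a^2$ prefactor.

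The second step is to run the integrated-energy machinery of \cite{part3} on the inhomogeneous equation. Concretely, one re-examines the multiplier identities underlying \eqref{eq:morawetz1}--\eqref{eq:morawetz2} — the Morawetz/redshift currents applied to $\psi_{\geq \ell}$ — and tracks the additional bulk term $\int \rho^{-2}\frac{r^2+a^2}{\Delta}(\text{multiplier})\cdot a^2 G_{\ell}\,d\mu$ produced by the inhomogeneity, using the systematic integration-by-parts scheme of Lemma \ref{lm:intbyparts}. One absorbs half of this error into the coercive bulk term on the left (losing the $M^{\delta}$, $r^{\pm\delta}$ weights that appear in the statements) via Cauchy--Schwarz and the Poincaré inequalities \eqref{eq:poincare1}--\eqref{eq:poincare3} (to control angular derivatives of $\psi_{\geq\ell}$ by the gradient term, since $\ell\geq 1$ gives a spectral gap $\ell(\ell+1)\geq 2$, which is what allows the $r^{-1}|\snabla_{\s^2}\psi_{\geq\ell}|^2$ term to appear without a $\delta$-loss), and leaves the remaining half — the genuinely lower-order-in-$\ell$ piece $a^2\int r^{-1+\delta}(T^2\psi_{\geq\ell-2})^2+\ldots$ — on the right-hand side. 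For the near-horizon estimate one additionally commutes with $\Phi$ (which is Killing and commutes with $\pi_{\ell}$) to handle the ergoregion degeneracy exactly as in the $\Phi$-commuted estimate \eqref{eq:morawetz2}, explaining the $\Phi^{k_2}$ commutators in \eqref{eq:morawetzell1}--\eqref{eq:morawetzell2}. For part (iii), which is a fixed-$r$-region estimate, one iterates: the error term at level $\ell$ feeds into the estimate at level $\ell-2$, and after $\lceil \ell/2\rceil$ iterations the chain terminates at $\psi_{\geq 0}$ (or $\psi_{\geq 1}$), which accounts for the sum over $1\leq l\leq \lceil\ell/2\rceil$ and the weights $T^{k+2l}$ in \eqref{eq:morawetzell3}; here one also commutes with $T^k$ up to order $n$ before running the estimate, as $T$ is Killing.

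The main obstacle I anticipate is \textbf{bookkeeping the weights and the $a^2$ smallness correctly through the absorption step}, rather than any conceptual difficulty: one must verify that every error term generated by $a^2 G_{\ell}$ either (a) carries a strictly better $r$-weight or $T$-count than something already controlled and so can be absorbed with a small constant, or (b) is genuinely of the form displayed on the right-hand side. In particular, one has to be careful that the bulk error near the photon sphere and inside the ergoregion — where the unrefined Morawetz estimate of \cite{part3} already degenerates — does not interact badly with the projection $\pi_{\ell}$; this is handled because $\pi_{\ell}$ commutes with $T$, $\Phi$ and $\snabla_{\s^2}$ (hence with all the relevant multiplier vector fields away from the $r$-dependence) and because the degeneracies of the estimates in \cite{part3} are at fixed $r$ and fixed $(\theta,\varphi_*)$-independent loci, so they are unaffected by spherical projection. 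A secondary point requiring care is that $\pi_\ell$ does not commute with multiplication by the non-constant coefficients $\rho^{-2}$, $\sin^2\theta$ appearing in the equation; this is precisely the source of the coupling and is accounted for by Lemma \ref{eq:lprojsin}, so one simply keeps the commutator terms and observes they are lower order in $\ell$ with an $a^2$ prefactor. Once these are organised, the estimates follow by the same energy-method arguments as Theorem \ref{thm:morawetz}.
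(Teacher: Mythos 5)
Your overall strategy --- viewing $\psi_{\geq\ell}$ as a solution of the inhomogeneous equation $\rho^2\square_{g}\psi_{\geq\ell}=a^2T^2[\pi_{\geq\ell},\sin^2\theta]\psi=:F_{\geq\ell}$, with $F_{\geq\ell}$ supported on angular frequencies $\geq\ell-2$ by Lemma \ref{eq:lprojsin}, and re-running the Morawetz machinery of \cite{part3} with the extra bulk term tracked via Lemma \ref{lm:intbyparts} --- is exactly the paper's. There are, however, two related points where your argument as written would not close.

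First, the inhomogeneity $F_{\geq\ell}$ contains only \emph{two} $T$-derivatives of $\psi$; it is not ``built from $T^2\psi$, $T^2\Phi\psi$ and $T^3\psi$''. The third $T$-derivative and the extra $\Phi$ on the right-hand side of \eqref{eq:morawetzell1} do not come from the structure of the commutator but from how the error term must be handled. Second, your absorption step fails near the photon sphere: the coercive Morawetz bulk for $(T\psi_{\geq\ell})^2$ and $|\snabla_{\s^2}\psi_{\geq\ell}|^2$ carries a cut-off $\zeta$ vanishing on an interval around $r=3M$ (this is the trapping degeneracy, not the ergoregion degeneracy you invoke), so the error $\int O(1)\,F_{\geq\ell}\cdot T\psi_{\geq\ell}$ cannot be absorbed there by Cauchy--Schwarz into the left-hand side. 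The paper's fix is to integrate by parts in $T$ in that bounded-$r$ region, pairing $TF_{\geq\ell}$ and $\Phi F_{\geq\ell}$ against the non-degenerate zeroth-order bulk $r^{-1-\delta}\psi_{\geq\ell}^2$; this is precisely what produces the $T^3\psi_{\geq\max\{\ell-2,0\}}$ and $T^2\Phi\psi_{\geq\max\{\ell-2,0\}}$ terms, together with an additional boundary contribution on $\Sigma_{\tau_1}$. Once this step is inserted, your outline for (ii) (commuting with the Killing fields $T$ and $\Phi$ to remove $\zeta$) and for (iii) (iterating in $\ell$ through $\lceil\ell/2\rceil$ steps, supplemented by elliptic estimates in $\{R_0\leq r\leq R_1\}$ where $T$ is timelike to recover the $Y$ and angular derivatives) agrees with the paper.
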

\begin{proof}
Let $\psi$ be a solution to \eqref{eq:waveeq} arising from suitably regular initial data with respect to $\widehat{\mathcal{R}}$. Then $\psi_{\geq \ell}$ satisfies the following inhomogeneous equation:
\begin{equation*}
\rho^2\square_g\psi_{\geq \ell}=a^2T^2[\pi_{\geq \ell},\sin^2\theta]\psi=:F_{\geq \ell}.
\end{equation*}
We can generalize the Morawetz estimate in Proposition 9.1.1 of \cite{part3} to inhomogeneous wave equations where the inhomogeneity $F_{\geq \ell}$ is given by the above expression and we consider a hyperboloidal foliation by $\Sigma_{\tau}$ rather than an asymptotically flat foliation; see also Proposition 9.8.1 of \cite{part3}. We obtain the following estimate after assuming without loss of generality that $M=1$:
\begin{equation*}
\begin{split}
\int_{\tau_1}^{\infty}&\left[ \int_{\Sigma_{\tau}}  r^{1-\delta}[\zeta (T\psi_{\geq \ell})^2+(Y\psi_{\geq \ell})^2]+r^{-1}\zeta |\snabla_{\s^2}\psi_{\geq \ell}|^2+r^{-1-\delta}\psi_{\geq \ell}^2\,d\omega d\uprho\right]d\tau\\
\leq &\: C\int_{\tau_1}^{\infty}\left[ \int_{\Sigma_{\tau}} O_{\infty}(r^0) F_{\geq \ell}\cdot (T\psi_{\geq \ell} +\Phi \psi_{\geq \ell} +Y\psi_{\geq \ell} +r^{-1}\psi_{\geq \ell} )\,d\omega d\uprho\right]d\tau+C\int_{\Sigma_{\tau_1}}  \mathbf{J}^{N}[\psi_{\geq \ell}]\cdot \mathbf{n}_{0}\, r^2d\omega d \uprho,
\end{split}
\end{equation*}
with $\delta>0$ and $\zeta$ a smooth cut-off function that vanishes in an interval $[-s_{1,3M}+3M,3M+s_{2,3M}]$, with $-s_{1,3M}+3M>r_+$ and $s_{2,3M}<\infty$ depending on $a$ and $M$.

After integrating by parts in $T$ and taking $\epsilon>0$ arbitrarily small, we can further estimate:
\begin{equation*}
\begin{split}
\int_{\tau_1}^{\infty}&\left[ \int_{\Sigma_{\tau}} O_{\infty}(r^0) F_{\geq \ell}\cdot (T\psi_{\geq \ell} +\Phi \psi_{\geq \ell} +Y\psi +\psi_{\geq \ell} )\,d\omega d\uprho\right]d\tau\\
 \leq&\: \epsilon \int_{\tau_1}^{\infty}\left[ \int_{\Sigma_{\tau}}  r^{1-\delta}[\zeta (T\psi_{\geq \ell})^2+(Y\psi_{\geq \ell})^2]+r^{-1}\zeta |\snabla_{\s^2}\psi_{\geq \ell}|^2+r^{-1-\delta}\psi_{\geq \ell}^2\,d\omega d\uprho\right]d\tau\\
 &+C\epsilon^{-1}\int_{\tau_1}^{\infty}\left[ \int_{\Sigma_{\tau}} r^{-1+\delta} F_{\geq \ell}^2\,d\omega d\uprho\right]d\tau\\
 &+C\epsilon^{-1}\int_{\tau_1}^{\infty}\left[ \int_{\Sigma_{\tau}\cap \{-s_{3,3M}+3M\leq r\leq 3M+s_{4,3M}\}} O_{\infty}(r^0) (TF_{\geq \ell}+\Phi F_{\geq \ell})\cdot \psi_{\geq \ell}\,d\omega d\uprho\right]d\tau\\
 &+C\epsilon^{-1}\int_{\Sigma_{\tau_1}\cap \{-s_{3,3M}+3M\leq r\leq 3M+s_{4,3M}\}}|F_{\geq \ell}||\psi_{\geq \ell}|\,d\omega d\uprho,
\end{split}
\end{equation*}
with suitable $s_{1,3M}<s_{3,3M}<(3M-r_+)$ and $s_{2,3M}<s_{4,3M}<\infty$. We have that
\begin{equation*}
\int_{\s^2}F_{\geq \ell}^2 \,d\omega \leq C \int_{\s^2}(T^2\psi_{\geq \ell-2})^2\,d\omega.
\end{equation*}

We obtain \eqref{eq:morawetzell1} by combining the above estimates. The estimate \eqref{eq:morawetzell2} follows by additionally applying the above estimates with $\psi$ replaced by $T\psi$ or $\Phi\psi$ (using that $T$ and $\Phi$ are Killing vector fields), to remove the degenerate factor $\zeta$ on the left-hand side. The estimate \eqref{eq:morawetzell3} follows from standard higher-order elliptic estimates in the far-away region $2M<R_0<r<R_1$, where $T$ is timelike, together with a repeated application of \eqref{eq:morawetzell1}, using moreover that all estimates apply to $T^k\psi$ replacing $\psi$.
\end{proof}

\section{Newman--Penrose charges}
\label{sec:defNPconstants}

In this section we derive conservation laws for weighted derivatives of $\phi$ along $\mathcal{I}^+$ by constructing Newman--Penrose charges in Kerr spacetimes. These Newman--Penrose charges will play a key role in the analysis in the remainder of the paper, both when deriving sharp decay estimates and precise late-time asymptotics for solutions to \eqref{eq:waveeq}.

We first introduce the following \emph{renormalized derivatives} of $\phi$:
\begin{align}
\label{eq:checkphi1}
\chphi^{(1)}:=&\:\left[1+(\alpha+\alpha_{\Phi}\Phi ) r^{-1}+(\beta+\beta_{\Phi}\Phi+\beta_{\Phi^2}\Phi^2 ) r^{-2}\right]\frac{2(r^2+a^2)^2}{\Delta}L\phi,\\
\label{eq:checkphi2}
\chphi^{(2)}:=&\:\left[1+(\gamma+\gamma_{\Phi}\Phi ) r^{-1}\right]\frac{2(r^2+a^2)^2}{\Delta}L\chphi^{(1)},
\end{align}
with 
\begin{align*}
\alpha=&\:-M, && \beta= -\frac{1}{3}M^2-\frac{1}{6}a^2, && \gamma=-2M,\\
\alpha_{\Phi}=&\: a, && \beta_{\Phi}= 0,\: \beta_{\Phi^2}=\frac{1}{3}a^2 && \gamma_{\Phi}=0.
\end{align*}

It will be useful to consider a decompositions into \emph{azimuthal modes} $\psi=\sum_{m\in \Z}(\psi)_m$, where $(\psi)_m$ can be expressed via the following spherical harmonic decomposition:
\begin{equation*}
(\psi)_m=\sum_{\ell=|m|}^{\infty} \psi_{\ell,m}Y_{\ell,m}.
\end{equation*}
Note that $\Phi (\psi)_m=im (\psi)_m$ and the azimuthal modes $(\psi)_m$ are \emph{decoupled}, i.e.\ they independently satisfy \eqref{eq:waveeq}: $\square_{g_{M,a}}(\psi)_m=0$, so it is possible to consider each $m$-th azimuthal mode independently.

We can then split
\begin{equation*}
\phi_1=\pi_1(\phi)=\sum_{m=-1}^1 (\phi_1)_m\quad \textnormal{and}\quad  \phi_2=\pi_2(\phi)=\sum_{m=-2}^2 (\phi_2)_m.
\end{equation*}
with $\Phi(\phi_j)_m=im (\phi_j)_m$ for $j=1,2$. When restricted to fixed azimuthal modes, \eqref{eq:checkphi1} and \eqref{eq:checkphi2} take the form:
\begin{align*}
(\chphi^{(1)})_m:=&\:\left[1+(\alpha+im \alpha_{\Phi}) r^{-1}+(\beta+im \beta_{\Phi}-m^2\beta_{\Phi^2} ) r^{-2}\right]\frac{2(r^2+a^2)^2}{\Delta}L(\phi)_m,\\
(\chphi^{(2)})_m:=&\:\left[1+(\gamma+im \gamma_{\Phi} ) r^{-1}\right]\frac{2(r^2+a^2)^2}{\Delta}L(\chphi^{(1)})_m.
\end{align*}
In a slight abuse of notation, when restricting to fixed azimuthal modes with azimuthal number $m$, we will therefore also use $\alpha$, $\beta$ and $\gamma$ to denote also the complex numbers $\alpha+im \alpha_{\Phi}$, $\beta+im \beta_{\Phi}+m^2\beta_{\Phi^2}$ and $\gamma+im \gamma_{\Phi}$, respectively.

\begin{proposition}
\label{prop:maineqchph}
Let $\psi\in C^{\infty}(\mathcal{R}\to \C)$ denote a solution to \eqref{eq:waveeq} that is supported on a fixed azimuthal mode with azimuthal number $m$, i.e. it satisfies $\Phi \psi =im \psi $. Then:
\begin{equation}
\label{eq:maineqchph1l1}
\begin{split}
4\frac{(r^2+a^2)^2}{\Delta}\underline{L}L\chphi^{(1)}_1=&\:4\frac{(r^2+a^2)^2}{\Delta}\underline{L}L\pi_1(\chphi^{(1)})=\left[ a^2\pi_1(\sin^2\theta T^2\chphi^{(1)})+2a T\Phi\chphi^{(1)}_1\right]\\
&+[-\alpha-\alpha_{\Phi}\Phi+O_{\infty}(r^{-1})] [a^2\pi_1(\sin^2\theta T^2\phi)+2a T\Phi\phi_1]\\
&+[-4r+O_{\infty} (1)]L\chphi^{(1)}_1+O_{\infty}(r^{-1})\chphi^{(1)}_1+O_{\infty}(r^{-1})\Phi\chphi^{(1)}_1 +O_{\infty}(r^{-1})\phi_1
\end{split}
\end{equation}
and
\begin{equation}
\label{eq:maineqchph2l2}
\begin{split}
4\frac{(r^2+a^2)^2}{\Delta}\underline{L}L\chphi^{(2)}_2=&\:\left[ a^2\pi_2(\sin^2\theta T^2\chphi^{(2)})+2a T\Phi\chphi^{(2)}_2\right]\\
&+[M-\gamma-a \Phi+O_{\infty}(r^{-1})][a^2\pi_2(\sin^2\theta T^2\chphi^{(1)})+2a T\Phi\chphi^{(1)}_2]\\
&+[2\beta+2\beta_{\Phi}\Phi+2\beta_{\Phi^2}\Phi^2+\gamma(a\Phi-M)+O_{\infty}(r^{-1})][a^2\pi_2(\sin^2\theta T^2\phi)+2aT\Phi\phi_2]\\
&+[-8r+O_{\infty}(r^0)]L\chphi^{(2)}_2+O_{\infty}(r^{-1}) \chphi^{(2)}_2+O_{\infty}(r^{-1}) \Phi \chphi^{(2)}_2\\
&+O_{\infty}(r^{-1}) \Phi\chphi^{(1)}_2+O_{\infty}(r^{-1}) \chphi^{(1)}_2+O_{\infty}(r^{-1}) \Phi\phi_2+O_{\infty}(r^{-1}) \phi_2,
\end{split}
\end{equation}
where we allow the constants in the terms in $O_{\infty}(r^{-k})$, $k=0,1$, to depend also on $m$.
\end{proposition}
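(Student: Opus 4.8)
The plan is to derive both identities by the same mechanism: commute the operator $4\frac{(r^2+a^2)^2}{\Delta}\underline{L}L$ through the weights defining $\chphi^{(1)}$ and $\chphi^{(2)}$, substitute the wave equation each time a factor $\underline{L}L$ lands on $\phi$ (resp.\ on $\chphi^{(1)}$), and project onto the relevant spherical harmonic sector at the end. Since the azimuthal modes decouple and satisfy \eqref{eq:waveeq} independently, I would first fix an azimuthal number $m$, so that the renormalization multipliers become honest smooth functions of $r$: $\mathcal{A}_1:=1+\alpha r^{-1}+\beta r^{-2}$ and $\mathcal{A}_2:=1+\gamma r^{-1}$, with $\alpha,\beta,\gamma$ the complex numbers described after \eqref{eq:checkphi2}, and $\chphi^{(1)}=\mathcal{A}_1\phi^{(1)}$, $\chphi^{(2)}=\mathcal{A}_2\cdot\tfrac{2(r^2+a^2)^2}{\Delta}L\chphi^{(1)}$, where $\phi^{(1)}=\tfrac{2(r^2+a^2)^2}{\Delta}L\phi$. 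The only inputs are the commutation relations for $T,\Phi,L,\underline{L}$ (in particular $L=T+\tfrac{a}{r^2+a^2}\Phi-\underline{L}$, $\underline{L}=-\tfrac{\Delta}{2(r^2+a^2)}\partial_r$, $[\underline{L},L]=\tfrac{ar\Delta}{(r^2+a^2)^3}\Phi$), the equation \eqref{eq:phieq1} for $\phi$ — equivalently Proposition \ref{eq:ho} with $n\le 2$ — and the identity $\slashed{\Delta}_{\s^2}\pi_\ell=-\ell(\ell+1)\pi_\ell$.

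For \eqref{eq:maineqchph1l1} I would write the right-hand side of \eqref{eq:phieq1} as $\tfrac{\Delta}{(r^2+a^2)^2}\bigl(a^2\sin^2\theta\,T^2\phi+2a\,T\Phi\phi+\slashed{\Delta}_{\s^2}\phi\bigr)+O_{\infty}(r^{-3})\phi+a\,O_{\infty}(r^{-3})\Phi\phi$, and then apply $\tfrac{2(r^2+a^2)^2}{\Delta}L$: the field $L$ either differentiates the weight $\tfrac{\Delta}{(r^2+a^2)^2}$, reproducing the bracket $a^2\sin^2\theta\,T^2\phi+2a\,T\Phi\phi+\slashed{\Delta}_{\s^2}\phi$ now with an $r$-dependent coefficient, or passes inside, reproducing the same bracket with $\phi^{(1)}$ in place of $\phi$. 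Multiplying by $\mathcal{A}_1$ and carrying out the remaining $\underline{L}L$, every term in which $\underline{L}$ or $L$ hits the total weight $\mathcal{A}_1\tfrac{2(r^2+a^2)^2}{\Delta}$ rather than $\phi$ is an explicit function of $r$; the crucial computation is that the $O(r)$ part of the resulting coefficient of the $T^2$-on-$\phi$ bracket cancels exactly when $\alpha=-M$ and $\alpha_\Phi=a$ (the $\sin^2\theta$-part and the $T\Phi$-part separately fixing the two constants), after which this coefficient equals $-\alpha-\alpha_\Phi\Phi+O_{\infty}(r^{-1})$, and $\beta$ (with $\beta_\Phi=0$, $\beta_{\Phi^2}=\tfrac13 a^2$) is pinned down at the next order, while all remaining contributions are $O_{\infty}(r^{-1})$ acting on $\chphi^{(1)}$, $\Phi\chphi^{(1)}$ or $\phi$. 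Finally I would project onto $\ell=1$: $\pi_1$ commutes with $T,\Phi,L,\underline{L}$, and $\slashed{\Delta}_{\s^2}\chphi^{(1)}_1=-2\chphi^{(1)}_1$ exactly cancels the zeroth-order potential coefficient $n(n+1)=2$, leaving only an $O_{\infty}(r^{-1})\chphi^{(1)}_1$ term, whereas the $\sin^2\theta$-coupled terms must be kept in the packaged form $\pi_1(\sin^2\theta\,T^2(\cdot))$ because $\pi_1$ does not commute with multiplication by $\sin^2\theta$. Recording the unprojected version of this computation also yields the equation for $\underline{L}L\chphi^{(1)}$ with arbitrary angular content, which is needed for the next step.

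For \eqref{eq:maineqchph2l2} I would iterate the same argument with $\chphi^{(1)}$ in place of $\phi$, the multiplier $\mathcal{A}_2$, the potential constant $n(n+1)=6$, and projection onto $\ell=2$, where $\slashed{\Delta}_{\s^2}\pi_2=-6\pi_2$ again cancels the potential. Now the $T^2$-brackets cascade over three levels — on $\chphi^{(2)}$, on $\chphi^{(1)}$, and on $\phi$ — the last two appearing because the equation for $\chphi^{(1)}$ already carries a $T^2$-on-$\phi$ bracket (which, when $L$ acts on it together with its weight, partly turns into a $T^2$-on-$\chphi^{(1)}$ bracket) and because differentiating the weight $\tfrac{\Delta}{(r^2+a^2)^2}$ once more generates a further $T^2$-on-$\chphi^{(1)}$ bracket. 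Demanding that the $O(r)$ parts of the coefficients of the $\chphi^{(1)}$- and $\phi$-brackets vanish forces $\gamma=-2M$, $\gamma_\Phi=0$, and the surviving $O(1)$ coefficients work out to $M-\gamma-a\Phi+O_{\infty}(r^{-1})$ and $2\beta+2\beta_\Phi\Phi+2\beta_{\Phi^2}\Phi^2+\gamma(a\Phi-M)+O_{\infty}(r^{-1})$, respectively; the commutator contribution $[\underline{L},L]=\tfrac{ar\Delta}{(r^2+a^2)^3}\Phi$, weighted by $\tfrac{(r^2+a^2)^2}{\Delta}$, becomes $O_{\infty}(r^{-1})\Phi(\cdot)$, and the coefficient of $L\chphi^{(2)}_2$ is $-8r+O_{\infty}(r^0)$ rather than $-4r$ because the weight $\tfrac{2(r^2+a^2)^2}{\Delta}$ has now been applied twice in total.

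The hard part is not conceptual but organizational: the computation is long, and one must check that, for the stated values of $\alpha,\beta,\gamma,\alpha_\Phi,\beta_{\Phi^2}$, the $O(r)$ (and in a few places $O(1)$) coefficients of every $T^2$-bracket come out exactly as displayed and that everything else collapses into $O_{\infty}(r^{-1})$ remainders. I expect the bookkeeping of the $\sin^2\theta$-coupling — deciding which terms stay as $\pi_\ell(\sin^2\theta\,T^2(\cdot))$ and which are absorbed — together with the three-level cascade in the $\ell=2$ case to be the most error-prone, so I would carry out the entire calculation on fixed azimuthal modes and restore the $\Phi$-dependence only at the end.
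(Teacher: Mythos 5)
Your overall architecture matches the paper's: fix the azimuthal mode so the multipliers are functions of $r$ alone, split $4\underline{L}\chphi^{(j)}$ into the part where $\underline{L}$ hits the weight and the part producing $4\underline{L}L$ of the lower-level quantity, substitute the wave equation, apply $L$, absorb the commutator $[\underline{L},L]$, and project so that $\slashed{\Delta}_{\s^2}\pi_\ell=-\ell(\ell+1)\pi_\ell$ cancels the potential. However, the step you single out as ``the crucial computation'' is misidentified, and it is exactly the step that justifies the stated values of $\alpha,\beta,\gamma$. In the normalized equation $4\tfrac{(r^2+a^2)^2}{\Delta}\underline{L}L\chphi^{(1)}$ the coefficient of the $T^2$-on-$\phi$ bracket has no $O(r)$ part: it arises solely from $L$ hitting the multiplier $1+\alpha r^{-1}+\beta r^{-2}$ and equals $-\alpha-2\beta r^{-1}+O_{\infty}(r^{-2})$, which is $O(1)$ and \emph{survives} into the final identity as the displayed factor $[-\alpha-\alpha_{\Phi}\Phi+O_{\infty}(r^{-1})]$. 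Likewise, at the second level the coefficients of the $\chphi^{(1)}$- and $\phi$-brackets are $O(1)$ and are not required to vanish; they appear explicitly in \eqref{eq:maineqchph2l2} as $M-\gamma-a\Phi+\ldots$ and $2\beta+\ldots+\gamma(a\Phi-M)+\ldots$. So there is nothing of order $r$ in those coefficients for $\alpha$ or $\gamma$ to cancel.

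What actually pins down the constants is the requirement that the coefficients of the \emph{undifferentiated} terms $\phi_\ell$ and $\chphi^{(1)}_\ell$ tend to zero as $r\to\infty$ after projection, since only $O_{\infty}(r^{-1})\phi_1$, $O_{\infty}(r^{-1})\chphi^{(1)}_2$, $O_{\infty}(r^{-1})\phi_2$ are permitted in \eqref{eq:maineqchph1l1}--\eqref{eq:maineqchph2l2}. Concretely, for $\ell=1$ the $\slashed{\Delta}_{\s^2}\phi$ part of the bracket contributes $+2\alpha\phi_1$ upon projection, the zeroth-order $\Phi\phi$ term contributes $-2iam\,\phi_1$, and the potential contributes $+2M\phi_1$; the resulting $O(1)$ coefficient $2\alpha-2iam+2M$ vanishes iff $\alpha=-M+iam$. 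For $\ell=2$ the analogous $O(1)$ coefficient of $\chphi^{(1)}_2$ works out to $4\gamma+8M$, forcing $\gamma=-2M$, and only then does the $O(1)$ coefficient of $\phi_2$ (which comes to $-12\beta-4M^2-2a^2-4a^2m^2$) determine $\beta=-\tfrac13M^2-\tfrac16a^2-\tfrac13a^2m^2$. In particular $\beta$ is fixed by the $\ell=2$ equation, not ``at the next order'' within the $\ell=1$ computation, where it is unconstrained. As written, your plan would stall precisely where you try to verify the renormalization constants, so this part of the argument needs to be replaced by the cancellation mechanism above.
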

\begin{proof}
We have that
\begin{equation*}
\chphi^{(1)}=(1+\alpha r^{-1}+\beta r^{-2})\frac{2(r^2+a^2)^2}{\Delta}L\phi,
\end{equation*}
and therefore
\begin{equation*}
\begin{split}
4\underline{L}\chphi^{(1)}=&\:\overbrace{2\underline{L}\left((1+\alpha r^{-1}+\beta r^{-2})\frac{2(r^2+a^2)^2}{\Delta}\right)(1+\alpha r^{-1}+\beta r^{-2})^{-1}\frac{\Delta}{(r^2+a^2)^2} \chphi^{(1)}}^{=:T_1}\\
&+\overbrace{(1+\alpha r^{-1}+\beta r^{-2})\frac{2(r^2+a^2)^2}{\Delta}4\Lbar L\phi}^{=:T_2}.
\end{split}
\end{equation*}
We can write
\begin{equation*}
T_1= \frac{\Delta}{(r^2+a^2)^2}[-4r+2(\alpha+2M)+O_{\infty}(r^{-1})]\chphi^{(1)},
\end{equation*}
where we allow the terms in $O_{\infty}(r^{-1})$ to depend on $\alpha$ and $\beta$.

Hence,
\begin{equation*}
L(T_1)=\frac{\Delta}{(r^2+a^2)^2}[-4r+2(\alpha+2M)+O_{\infty}(r^{-1})]L\chphi^{(1)}+\frac{\Delta}{(r^2+a^2)^2} \left[2-(2\alpha+12M)r^{-1}+O_{\infty}(r^{-2})\right] \chphi^{(1)}.
\end{equation*}
Furthermore, we can rewrite \eqref{eq:phieq1} as follows:
\begin{equation*}
\begin{split}
4\underline{L}L\phi=&\:\frac{\Delta}{(r^2+a^2)^2}\left[ a^2\sin^2\theta T^2\phi+2a T\Phi\phi+  \slashed{\Delta}_{\s^2}\phi+2a  \frac{r}{r^2+a^2} \Phi \phi\right]\\ \nonumber
&+\frac{\Delta}{(r^2+a^2)^2}[-2Mr^{-1}-a^2r^{-2}+O_{\infty} (r^{-3})]\phi.
\end{split}
\end{equation*}
Therefore,
\begin{equation*}
T_2=2(1+\alpha r^{-1}+\beta r^{-2})\left[a^2\sin^2\theta T^2\phi+2a T\Phi\phi+  \slashed{\Delta}_{\s^2}\phi+2a  \frac{r}{r^2+a^2}\Phi\phi+(-2Mr^{-1}-a^2r^{-2}+O_{\infty} (r^{-3}))\phi\right].
\end{equation*}
We then obtain
\begin{equation*}
\begin{split}
L(T_2)=&\:\frac{\Delta}{(r^2+a^2)^2}\left[ a^2\sin^2\theta T^2\chphi^{(1)}+2a T\Phi\chphi^{(1)}+2a  \frac{r}{r^2+a^2} \Phi \chphi^{(1)}+ \slashed{\Delta}_{\s^2}\chphi^{(1)}\right]\\
&+\frac{\Delta}{(r^2+a^2)^2}[-2Mr^{-1}-a^2r^{-2}+O_{\infty} (r^{-3})]\chphi^{(1)}\\
&+\frac{\Delta}{(r^2+a^2)^2}[-\alpha-2\beta r^{-1}+O_{\infty}(r^{-2})] [a^2\sin^2\theta T^2\phi+2a T\Phi\phi+  \slashed{\Delta}_{\s^2}\phi]\\
&+\frac{\Delta}{(r^2+a^2)^2}[-2a-4a\alpha r^{-1}+O_{\infty}(r^{-2})]\Phi \phi+\frac{\Delta}{(r^2+a^2)^2}\left[2M+(2a^2+4 \alpha M)r^{-1}+O_{\infty}(r^{-2})\right]\phi
\end{split}
\end{equation*}
Combining the expressions for $L(T_1)$ and $L(T_2)$ and commuting $L$ and $\Lbar$, we obtain:
\begin{equation}
\label{eq:maineqchph1}
\begin{split}
4\frac{(r^2+a^2)^2}{\Delta}\underline{L}L\chphi^{(1)}=&\:\left[ a^2\sin^2\theta T^2\chphi^{(1)}+2a T\Phi\chphi^{(1)}+6a  \frac{r}{r^2+a^2} \Phi \chphi^{(1)}\right]\\
&+[-4r+2(\alpha+2M)+O_{\infty}(r^{-1})]L\chphi^{(1)}+\left[2+\slashed{\Delta}_{\s^2}\right]\chphi^{(1)}\\
&+[(-2\alpha-14M)r^{-1}+O_{\infty}(r^{-2})]\chphi^{(1)}\\
&+[-\alpha-2\beta r^{-1}+O_{\infty}(r^{-2})] [a^2\sin^2\theta T^2\phi+2a T\Phi\phi+  \slashed{\Delta}_{\s^2}\phi]\\
&+[-2a-4a\alpha r^{-1}+O_{\infty}(r^{-2})]\Phi \phi+\left[2M+(2a^2+4 \alpha M)r^{-1}+O_{\infty}(r^{-2})\right]\phi
\end{split}
\end{equation}
In order to guarantee that the right-hand side above either contains a $T$-derivative or vanishes when we act with $\pi_1$ on both sides and take $r\to \infty$, we use that
\begin{equation*}
\alpha=-M+i a m.
\end{equation*}

Now let
\begin{equation*}
\chphi^{(2)}=(1+\gamma r^{-1})\frac{2(r^2+a^2)^2}{\Delta}L\chphi^{(1)},
\end{equation*}
so
\begin{equation*}
4\underline{L}\chphi^{(2)}=\overbrace{2\underline{L}\left((1+\gamma r^{-1})\frac{2(r^2+a^2)^2}{\Delta}\right)(1+\gamma r^{-1})^{-1}\frac{\Delta}{(r^2+a^2)^2} \chphi^{(2)}}^{=:T_3}+\overbrace{(1+\gamma r^{-1})\frac{2(r^2+a^2)^2}{\Delta}4\Lbar L\chphi^{(1)}}^{=:T_4}.
\end{equation*}
Note that
\begin{equation*}
L(T_3)=\frac{\Delta}{(r^2+a^2)^2}[-4r+O_{\infty}(r^0)]L\chphi^{(2)}+\frac{\Delta}{(r^2+a^2)^2} \left[2+O_{\infty}(r^{-1})\right] \chphi^{(2)},
\end{equation*}
where the terms in $O_{\infty}(r^{-1})$ depend on $\gamma$.

By \eqref{eq:maineqchph1} with $\alpha=-M+iam$, we have that:
\begin{equation*}
\begin{split}
T_4=&\:2(1+\gamma r^{-1})[a^2\sin^2\theta T^2\check{\phi}^{(1)}+2aT\Phi \check{\phi}^{(1)}+6a r(r^2+a^2)^{-1}\Phi  \check{\phi}^{(1)}]\\
&+\Delta (r^2+a^2)^{-2}[-4r+2(M+iam)+O_{\infty}(r^{-1})]\check{\phi}^{(2)}+2(1+\gamma r^{-1})[2+\slashed{\Delta}_{\s^2}]\check{\phi}^{(1)}\\
&+2(1+\gamma r^{-1})[(-12M-2iam)r^{-1}+O_{\infty}(r^{-2})]\check{\phi}^{(1)}\\
&+2(1+\gamma r^{-1})[M-iam-2\beta r^{-1}+O_{\infty}(r^{-2})][a^2\sin^2\theta T^2\phi+2aT\Phi\phi+\slashed{\Delta}_{\s^2}\phi] \\
&+2(1+\gamma r^{-1})[-2a+(4Ma-4ima^2)r^{-1}+O_{\infty}(r^{-2})]\Phi\phi\\
&+2(1+\gamma r^{-1})[2M+(2a^2-4M^2+4iamM)r^{-1}+O_{\infty}(r^{-2})]\phi
\end{split}
\end{equation*}
and hence,
\begin{equation*}
\begin{split}
\frac{(r^2+a^2)^2}{\Delta} L(T_4)=&\:[a^2\sin^2\theta T^2\check{\phi}^{(2)}+2aT\Phi \check{\phi}^{(2)}+6a r(r^2+a^2)^{-1}\Phi  \check{\phi}^{(2)}]\\
&+ [-4r+O_{\infty}(r^0)]L\check{\phi}^{(2)}+[2+\slashed{\Delta}_{\s^2}+O_{\infty}(r^{-1})]\check{\phi}^{(2)}\\
&-[\gamma +O_{\infty}(r^{-1})][a^2\sin^2\theta T^2\check{\phi}^{(1)}+2aT\Phi \check{\phi}^{(1)}]-[6a+O_{\infty}(r^{-1}]\Phi \check{\phi}^{(1)}\\
&+[2+O_{\infty}(r^{-1})]\check{\phi}^{(2)}-[\gamma+O_{\infty}(r^{-1})](2+\slashed{\Delta}_{\s^2})\check{\phi}^{(1)}+[12M+2iam+O_{\infty}(r^{-1})]\check{\phi}^{(1)}\\
&+[M-iam+O_{\infty}(r^{-1})](a^2\sin^2\theta T^2\check{\phi}^{(1)}+2aT\Phi\check{\phi}^{(1)}+\slashed{\Delta}_{\s^2}\check{\phi}^{(1)})\\
&+[-2a+O_{\infty}(r^{-1})]\Phi \check{\phi}^{(1)}+[2M+O_{\infty}(r^{-1})]\check{\phi}^{(1)}\\
&+[-\gamma(M-iam)+2\beta+O_{\infty}(r^{-1})](a^2\sin^2\theta T^2\phi+2aT\Phi\phi+\slashed{\Delta}_{\s^2}\phi)\\
&+[2a \gamma-4Ma+4ima^2+O_{\infty}(r^{-1})]\Phi \phi\\
&+[-2M\gamma-(2a^2-4M^2+4iamM)+O_{\infty}(r^{-1})]\phi.
\end{split}
\end{equation*}
Combining the above expression with the expression for $L(T_3)$, using that $\Phi \phi=im\phi$ (and commuting $L$ and $\underline{L}$), we obtain
\begin{equation}
\label{eq:maineqchph2}
\begin{split}
4\frac{(r^2+a^2)^2}{\Delta}\underline{L}L\chphi^{(2)}=&\:\left[ a^2\sin^2\theta T^2\chphi^{(2)}+2a T\Phi\chphi^{(2)}+10a  \frac{r}{r^2+a^2} \Phi \chphi^{(2)}\right]\\
&+[-8r+O_{\infty}(r^0)]L\chphi^{(2)}+\left[6+\slashed{\Delta}_{\s^2}\right]\chphi^{(2)}+O_{\infty}(r^{-1}) \chphi^{(2)}\\
&+[M-aim-\gamma+O_{\infty}(r^{-1})][a^2\sin^2\theta T^2\chphi^{(1)}+2a T\Phi\chphi^{(1)}]\\
&+[-\gamma(2+\slashed{\Delta}_{\s^2})+14M-6iam+(M-iam)\slashed{\Delta}_{\s^2}+O_{\infty}(r^{-1})] \check{\phi}^{(1)}\\
&+[2\beta-M\gamma+iam\gamma+O_{\infty}(r^{-1})](a^2\sin^2\theta T^2\phi+2aT\Phi\phi)\\
&+[(-M\gamma+iam\gamma+2\beta)\slashed{\Delta}_{\s^2}-2M\gamma-2a^2+4M^2-4iamM+2iam\gamma-4iamM\\
&-4m^2a^2+O_{\infty}(r^{-1})]\phi
\end{split}
\end{equation}

In order to guarantee that the terms involving $\chphi^{(2)}$ and $\chphi^{(1)}$ on the right-hand side above either contains a $T$-derivative or vanishes when we act with $\pi_2$ on both sides and take $r\to \infty$, we use that:
\begin{equation*}
\gamma=-2M.
\end{equation*}
With the above value of $\gamma$, it follows that if we moreover use that \begin{equation*}
\beta=-\frac{1}{3}M^2-\frac{1}{6}a^2-\frac{1}{3}a^2 m^2,
\end{equation*}
then all the terms involving $\phi$ on the right-hand side above either contain a $T$-derivative or vanish when we act with $\pi_2$ on both sides and take $r\to \infty$.
\end{proof}

We define the following special linear combinations of derivatives of $\check{\phi}^{(j)}$, $j=0,1,2$:
\begin{align}
\label{eq:defP0}
\frac{(r^2+a^2)^2}{\Delta}P_0:=&\:\frac{(r^2+a^2)^2}{\Delta}L\phi_0-\frac{1}{4}a^2\pi_0(\sin^2\theta T\phi),\\
\label{eq:defP1}
\frac{(r^2+a^2)^2}{\Delta}P_1:=&\:\frac{(r^2+a^2)^2}{\Delta}L\chphi_1^{(1)}-\frac{1}{4}\left[ a^2\pi_1(\sin^2\theta T\chphi^{(1)})+2a \Phi\chphi^{(1)}_1\right]\\ \nonumber
&-\frac{1}{4}[-\alpha-\alpha_{\Phi}\Phi] [a^2\pi_1(\sin^2\theta T\phi)+2a \Phi\phi_1],\\
\label{eq:defP2}
\frac{(r^2+a^2)^2}{\Delta}P_2:=&\:\frac{(r^2+a^2)^2}{\Delta}L\chphi_2^{(2)}-\frac{1}{4}\left[ a^2\pi_2(\sin^2\theta T\chphi^{(2)})+2a \Phi\chphi^{(2)}_2\right]\\ \nonumber
&-\frac{1}{4}[M-\gamma-a \Phi][a^2\pi_2(\sin^2\theta T\chphi^{(1)})+2a \Phi\chphi^{(1)}_2]\\ \nonumber
&-\frac{1}{4}[2\beta+2\beta_{\Phi^2}\Phi^2+\gamma(a\Phi-M)][a^2\pi_2(\sin^2\theta T\phi)+2a\Phi\phi_2].
\end{align}

\begin{corollary}
Let $\psi\in C^{\infty}(\mathcal{R}\to \C)$ denote a solution to \eqref{eq:waveeq} that is supported on a fixed azimuthal mode with azimuthal number $m$, i.e. it satisfies $\Phi \psi =im \psi $. Then:
\begin{align}
\label{eq:maineqP0}
4\underline{L}P_0=&\:O_{\infty}(r^{-3})\phi_0+a^2O_{\infty}(r^{-3})T\pi_0(\sin^2\theta \phi )+a^2O_{\infty}(r^{-2})LT\pi_0(\sin^2\theta \phi ),\\
\label{eq:maineqP1}
4\underline{L}P_1=&\:[-4r^{-1}+O_{\infty}(r^{-2})]P_1+O_{\infty}(r^{-3})[\chphi^{(1)}_1+\phi_1]+a^2O_{\infty}(r^{-3})[T\pi_1(\sin^2\theta \chphi^{(1)}) +T\pi_1(\sin^2\theta \phi )]\\ \nonumber
&+a^2O_{\infty}(r^{-2})[LT\pi_1(\sin^2\theta \chphi^{(1)}) +LT\pi_1(\sin^2\theta \phi )],\\
\label{eq:maineqP2}
4\underline{L}P_2=&\:[-8r^{-1}+O_{\infty}(r^{-2})]P_2+O_{\infty}(r^{-3})[\chphi^{(2)}_2+\chphi^{(1)}_2+\phi_2]\\ \nonumber
&+a^2O_{\infty}(r^{-3})[T\pi_2(\sin^2\theta \chphi^{(2)})+T\pi_2(\sin^2\theta \chphi^{(1)}) +T\pi_2(\sin^2\theta \phi )]\\ \nonumber
&+a^2O_{\infty}(r^{-2})[LT\pi_2(\sin^2\theta \chphi^{(2)})+LT\pi_2(\sin^2\theta \chphi^{(1)}) +LT\pi_2(\sin^2\theta \phi )],
\end{align}
where we allow the constants in the terms in $O_{\infty}(r^{-k})$, $k=0,1$, to depend also on $m$.

Furthermore, if $\phi \in C^{4}(\widehat{\mathcal{R}})$, then
\begin{align}
\label{eq:consvP0}
\lim_{\uprho \to \infty} \underline{L}(r^2 P_0)(\tau,\uprho)=&\:0,\\
\label{eq:consvP1}
\lim_{\uprho \to \infty}  \underline{L}(r^2 P_1)(\tau,\uprho,\theta,\varphi_*)=&\:0 ,\\
\label{eq:consvP2}
\lim_{\uprho \to \infty} \underline{L}(r^2 P_2)(\tau,\uprho,\theta,\varphi_*)=&\:0.
\end{align}
\end{corollary}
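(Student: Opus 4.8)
\emph{Proof plan.} The statement has two halves: the evolution equations \eqref{eq:maineqP0}--\eqref{eq:maineqP2}, valid for any smooth solution, and the conservation statements \eqref{eq:consvP0}--\eqref{eq:consvP2}, which use the extra hypothesis $\phi\in C^4(\widehat{\mathcal{R}})$. The first half is obtained by differentiating the defining relations \eqref{eq:defP0}--\eqref{eq:defP2} in the $\underline{L}$ direction and substituting the known $\underline{L}L$-equations; the second half then follows by reading off $r$-weights.

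For the first half I would write $P_0 = L\phi_0 - \tfrac14\tfrac{\Delta}{(r^2+a^2)^2}a^2\pi_0(\sin^2\theta\,T\phi)$, and correspondingly $P_1 = L\chphi^{(1)}_1 - (\textnormal{corr}_1)$, $P_2 = L\chphi^{(2)}_2 - (\textnormal{corr}_2)$ with $(\textnormal{corr}_j)$ the correction terms in \eqref{eq:defP1}--\eqref{eq:defP2}, and apply $4\underline{L}$. For the $L\chphi^{(\ell)}_\ell$ pieces I substitute: for $\ell=0$ the projection $\pi_0$ of the equation for $4\underline{L}L\phi$ in Lemma \ref{lm:maineqs}, which annihilates the $\slashed{\Delta}_{\s^2}\phi$ and $\Phi$-terms and leaves $4\underline{L}L\phi_0 = \tfrac{a^2\Delta}{(r^2+a^2)^2}\pi_0(\sin^2\theta\,T^2\phi) + O_\infty(r^{-3})\phi_0$; for $\ell=1,2$ the identities \eqref{eq:maineqchph1l1} and \eqref{eq:maineqchph2l2} of Proposition \ref{prop:maineqchph}. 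For the correction terms I move $\underline{L}$ inside and use $\underline{L} = T + \tfrac{a}{r^2+a^2}\Phi - L$ to trade $\underline{L}$-derivatives of $\phi$ and $\chphi^{(j)}$ for $T$-, $\Phi$- and $L$-derivatives. The corrections in \eqref{eq:defP0}--\eqref{eq:defP2} are designed precisely so that the $a^2\tfrac{\Delta}{(r^2+a^2)^2}\pi_\ell(\sin^2\theta\,T^2(\cdot))$ contributions produced this way cancel against the leading term of $4\underline{L}L$ of the corresponding quantity, and so that the $\phi_\ell$- and $\chphi^{(j)}_\ell$-valued terms without a $T$-derivative in \eqref{eq:maineqchph1l1}--\eqref{eq:maineqchph2l2} combine with $\underline{L}$ falling on the weight $\tfrac{\Delta}{(r^2+a^2)^2}$ (which is $O_\infty(r^{-3})$) into the asserted $O_\infty(r^{-3})$ terms --- this is exactly what forces the values $\alpha=-M+iam$, $\gamma=-2M$, $\beta=-\tfrac13M^2-\tfrac16a^2-\tfrac13a^2m^2$ already isolated in Proposition \ref{prop:maineqchph}. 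The $-4\ell r\,L\chphi^{(\ell)}_\ell$ terms (absent for $\ell=0$) become $[-4\ell r^{-1}+O_\infty(r^{-2})]L\chphi^{(\ell)}_\ell$ after multiplication by $\tfrac{\Delta}{(r^2+a^2)^2}$, and inserting $L\chphi^{(\ell)}_\ell = P_\ell + (\textnormal{corr}_\ell)$ reproduces the $[-4\ell r^{-1}+O_\infty(r^{-2})]P_\ell$ terms plus further $O_\infty(r^{-3})$ contributions. Throughout one uses that multiplication by $\sin^2\theta$ commutes with $T,\Phi,L,\underline{L}$ and that $\pi_\ell$ commutes with $T,\Phi,L,\underline{L}$, so that $\pi_\ell(\sin^2\theta\,Vu) = V\pi_\ell(\sin^2\theta\,u)$ for $V\in\{T,\Phi,L,\underline{L}\}$; the surviving $T$-type contributions then all carry an extra $L$ and take the form $a^2O_\infty(r^{-2})LT\pi_\ell(\sin^2\theta\,(\cdot))$, while the would-be $\tfrac{a^3\Delta}{(r^2+a^2)^3}\pi_0(\sin^2\theta\,T\Phi\phi)$ term in the $\ell=0$ case vanishes because $\pi_0(\sin^2\theta\,\Phi(\cdot))=0$. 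Collecting powers of $r$ with $\tfrac{\Delta}{(r^2+a^2)^2}=O_\infty(r^{-2})$ and $\tfrac{\Delta}{(r^2+a^2)^3}=O_\infty(r^{-4})$ then yields \eqref{eq:maineqP0}--\eqref{eq:maineqP2}.

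For the second half, assume $\phi\in C^4(\widehat{\mathcal{R}})$. I would first note that $(\tau,x,\theta,\varphi_*)$ and $(u,x,\theta,\varphi_*)$ are related by a coordinate change smooth up to $x=0$ (the candidate $\log\uprho$-terms in $u-\tau$ cancel by \eqref{eq:condh1}), and that in $(u,x,\theta,\varphi_*)$ coordinates $L = -\tfrac{\Delta x^2}{2(r^2+a^2)}\partial_x + \tfrac{a}{r^2+a^2}\partial_{\varphi_*}$ has both coefficients of order $x^2 = O(\uprho^{-2})$; hence $L$ sends any function that is $C^1$ up to $\mathcal{I}^+$ to an $O(\uprho^{-2})$ quantity. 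Consequently $\phi^{(n)}$ and $\chphi^{(n)}$ ($n\leq 2$) are $C^{4-n}$ up to $\mathcal{I}^+$ and bounded on each $\widehat{\Sigma}_\tau$, $L\chphi^{(\ell)}_\ell = O(\uprho^{-2})$ so $P_\ell = O(\uprho^{-2})$, and $LT\pi_\ell(\sin^2\theta\,\chphi^{(k)}), LT\pi_\ell(\sin^2\theta\,\phi) = O(\uprho^{-2})$. Substituting these into \eqref{eq:maineqP0}--\eqref{eq:maineqP2}: the $P_\ell$-terms contribute $O(\uprho^{-1})\cdot O(\uprho^{-2}) = O(\uprho^{-3})$, the $O_\infty(\uprho^{-3})$-terms multiply bounded quantities and are $O(\uprho^{-3})$, and the $a^2O_\infty(\uprho^{-2})LT(\cdot)$-terms are $O(\uprho^{-4})$, so $\underline{L}P_\ell = O(\uprho^{-3})$. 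Since $\underline{L}(r^2) = -\tfrac{r\Delta}{r^2+a^2} = O(\uprho)$,
\begin{equation*}
\underline{L}(r^2 P_\ell) = r^2\underline{L}P_\ell - \tfrac{r\Delta}{r^2+a^2}P_\ell = O(\uprho^{-1}) + O(\uprho)\cdot O(\uprho^{-2}) = O(\uprho^{-1}) \longrightarrow 0 \quad (\uprho\to\infty),
\end{equation*}
which is \eqref{eq:consvP0}--\eqref{eq:consvP2}.

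The main obstacle is the algebra in the first half, especially for $P_2$, whose correction \eqref{eq:defP2} is a nested three-term expression: one must track the $\Phi$-dependence of every coefficient (for a fixed azimuthal mode these become complex constants), match each resulting term against the three admissible shapes on the right of \eqref{eq:maineqP1}--\eqref{eq:maineqP2}, and verify that the engineered cancellations eliminate exactly those terms that would otherwise be $O_\infty(\uprho^{-2})\pi_\ell(\sin^2\theta\,T^2(\cdot))$ and thus too large to be absorbed. This cancellation is also what makes the second half work: a crude estimate of $\underline{L}P_\ell$ gives only $O(\uprho^{-2})$, which $r^2\underline{L}$ does not kill; it is precisely the fact that the leftover $T$-type terms come with an additional $L$ (hence an extra $x^2$) that forces $\underline{L}(r^2P_\ell)\to 0$. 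Conceptually, however, the hard part has already been carried out in Proposition \ref{prop:maineqchph}, where $\alpha,\beta,\gamma$ are pinned down; granting that, the passage to \eqref{eq:maineqP0}--\eqref{eq:maineqP2} and then to \eqref{eq:consvP0}--\eqref{eq:consvP2} is routine.
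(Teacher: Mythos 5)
Your proposal is correct and follows essentially the same route as the paper, whose proof consists of exactly the two steps you carry out: rewriting the equations of Proposition \ref{prop:maineqchph} in terms of $P_0,P_1,P_2$ via the definitions \eqref{eq:defP0}--\eqref{eq:defP2}, and then reading off the $r$-weights (using that $L$ applied to a function $C^1$ up to $\mathcal{I}^+$ is $O(\uprho^{-2})$) to conclude \eqref{eq:consvP0}--\eqref{eq:consvP2}. Your write-up is in fact more detailed than the paper's two-sentence proof, and the key observations you isolate — the exact cancellation of the $\pi_\ell(\sin^2\theta\,T^2(\cdot))$ contributions, the vanishing of $\pi_0(\sin^2\theta\,\Phi(\cdot))$, and the extra factor of $L$ on the surviving $T$-type terms being what makes $r^2\underline{L}P_\ell\to 0$ — are all correct.
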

\begin{proof}
	By the rewriting the equations in Proposition \ref{prop:maineqchph} in terms of the quantities $P_i$, $i=0,1,2$, using the definitions \eqref{eq:defP0}--\eqref{eq:defP2}, we obtain \eqref{eq:maineqP0}--\eqref{eq:maineqP2}.
	
	Now suppose $\phi \in C^{4}(\widehat{\mathcal{R}})$, then the limits on the left-hand sides of \eqref{eq:consvP0}--\eqref{eq:consvP2} are well-defined. By applying \eqref{eq:maineqP0}--\eqref{eq:maineqP2}, we moreover show that these limits must vanish.
\end{proof}

\begin{definition}
Let $\psi$ be a solution to such that  $\phi=\sqrt{r^2+a^2}\psi \in C^{4}(\widehat{\mathcal{R}})$. Then we define the \underline{\emph{Newman--Penrose}} \underline{\emph{charges}} as the following limits:
\begin{align*}
I_0[\psi]=&\:\lim_{\uprho \to \infty} 2r^2 P_0(\tau,\uprho),\\
I_1[\psi](\theta,\varphi_*)=&\:\lim_{\uprho \to \infty} 2r^2 P_1(\tau,\uprho,\theta,\varphi_*),\\
I_2[\psi](\theta,\varphi_*)=&\:\lim_{\uprho \to \infty} 2r^2 P_2(\tau,\uprho,\theta,\varphi_*),
\end{align*}
which are well-defined and \underline{conserved in $\tau$} by \eqref{eq:consvP0}--\eqref{eq:consvP2}.\end{definition}

\section{Hierarchies of $r^p$-weighted energy estimates}
\label{sec:rpest}
In this section, we derive the key weighted hierarchies of estimates that are involved in establishing sharp energy decay estimates.

We will use $R$ to denote the area radius that appears in the definitions of $D^{\tau_2}_{\tau_1}$ and $N_{\tau}$ in Section \ref{sec:foliations}. We will need to take $R>r_+$ appropriately large for the estimates in the sections below to hold.

\subsection{Hierarchies of $r^p$-weighted energy estimates for $\phi^{(n)}$}
\label{sec:rpestmain}
This section is concerned with establishing hierarchies of weighted energy estimates for the quantities $\phi^{(n)}$, which were introduced in Section \ref{sec:maineq}. In Proposition \ref{prop:rpest} below, we show that we can derive weighted energy estimates for a larger range of $n\in \N_0$, provided we restrict to $\phi^{(n)}_{\geq \ell}$ with $\ell$ appropriately large.
\begin{proposition}
\label{prop:rpest}
Let $n\in \N_0$ , $\ell\in \N_0$, such that $\ell \geq n$ and $-4n<p\leq 2$. For $R>r_+$ suitably large, there exists a constant $C=C(M,a,n,\ell,R,p)>0$, such that
\begin{equation}
\begin{split}
\label{eq:rpest}
\int_{N_{\tau_2}}& r^p(L\phi^{(n)}_{\geq \ell})^2+r^{p-4}[|\snabla_{\s^2}\phi^{(n)}_{\geq \ell}|^2-n(n+1)(\phi_{\geq n}^{(n)})^2]\,d\omega d\uprho\\
&+\int_{\tau_1}^{\tau_2}\left[\int_{N_{\tau}} r^{p-1}(L\phi^{(n)}_{\geq \ell})^2+(2-p)r^{p-3}\left(|\snabla_{\s^2}\phi^{(n)}_{\geq \ell}|^2-n(n+1)(\phi_{\geq \ell}^{(n)})^2+\delta_{n0}\delta_{\ell0}a^2\sin^2\theta (T\phi)^2\right)\,d\omega d\uprho\right]d\tau\\
\leq &\: C\int_{N_{\tau_1}} r^p(L\phi^{(n)}_{\geq \ell})^2+r^{p-4}\left( |\snabla_{\s^2}\phi^{(n)}_{\geq \ell}|^2-n(n+1)(\phi_{\geq \ell}^{(n)})^2\right)\,d\omega d\uprho\\
&+C(n+\ell)\int_{\tau_1}^{\tau_2}\left[\int_{N_{\tau}} r^{p-3}a^4(T^2\phi^{(n)}_{\geq \max\{\ell-2,0\}})^2+r^{p-3}a^2(T\Phi \phi^{(n)}_{\geq \ell})^2\,d\omega d\uprho\right]d\tau\\
&+C n \int_{\tau_1}^{\tau_2}\left[\int_{N_{\tau}} r^{p-1}(L \Phi\phi_{\geq \ell}^{(n-1)})^2+ r^{p-3}(LT\phi_{\geq \ell}^{(n-1)})^2\,d\omega d\uprho\right]d\tau\\
&+C n \sum_{m=0}^{n-1}\int_{\tau_1}^{\tau_2}\left[\int_{N_{\tau}} a^2r^{p-3}(\Phi\phi_{\geq \ell}^{(m)})^2+r^{p-3}(\phi_{\geq \ell}^{(m)})^2\,d\omega d\uprho\right]d\tau\\
&+ C\sum_{m=0}^n  \left[\int_{\Sigma_{\tau_1}}J^N[T^{m}\psi_{\geq \ell}]\cdot \mathbf{n}_{\tau_1}\, r^2d\omega d \uprho+a^2(1-\delta_{n0}\delta_{\ell 0})\sum_{\substack{1\leq l\leq \lceil \ell/2\rceil \\ 0\leq l_1+l_2\leq l}} \int_{\Sigma_{\tau_1}}J^N[T^{m+2l}T^{l_1}\Phi^{l_2}\psi_{\geq \max\{\ell-2k,0\}}\cdot \mathbf{n}_{\tau_1}\, r^2d\omega d \uprho\right]
\end{split}
\end{equation}
Furthermore, the estimate \eqref{eq:rpest} also holds with the last line replaced by
\begin{equation}
\label{eq:rpestvar}
C\sum_{m=0}^n  \int_{\Sigma_{\tau_1}}J^N[T^{m}\psi]\cdot \mathbf{n}_{\tau_1}\, r^2d\omega d \uprho.
\end{equation}
\end{proposition}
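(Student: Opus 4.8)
The plan is to establish the weighted energy hierarchy \eqref{eq:rpest} by applying the systematic integration-by-parts Lemma \ref{lm:intbyparts} to the wave equation \eqref{eq:commeq} for $\phi^{(n)}$, using the $r^p$-weighted multiplier $r^p L\phi^{(n)}_{\geq \ell}$. The starting point is the equation \eqref{eq:commeq} of Proposition \ref{eq:ho}, written schematically as $4\underline{L}L\phi^{(n)} = \Delta(r^2+a^2)^{-2}\slashed{\Delta}_{\s^2}\phi^{(n)} + (\text{lower-order and coupling terms})$. Multiplying this equation by $r^p L\phi^{(n)}_{\geq \ell}$ (after projecting to $\geq\ell$) and integrating over $N_\tau$ produces the principal terms: the $\underline{L}L$ term pairs with $L\phi^{(n)}$ to give a total $L$-derivative $L((L\phi^{(n)})^2)$ which, multiplied by $r^p$ and integrated by parts in $\uprho$, yields the bulk term $r^{p-1}(L\phi^{(n)}_{\geq\ell})^2$ on the left (picking up the boundary terms on $N_{\tau_1}, N_{\tau_2}$) plus a favorable sign for $0<p\le 2$; and the angular Laplacian term, after integrating by parts on $\s^2$ and using the Poincaré inequality \eqref{eq:poincare1}, contributes the $r^{p-4}|\snabla_{\s^2}\phi^{(n)}_{\geq\ell}|^2 - n(n+1)(\phi^{(n)}_{\geq\ell})^2$ term — here the $-n(n+1)$ subtraction comes precisely from the potential term $\Delta(r^2+a^2)^{-2}[n(n+1)+O_\infty(r^{-1})]\phi^{(n)}$ in \eqref{eq:commeq}, which is why the combination $|\snabla_{\s^2}\phi^{(n)}_{\geq\ell}|^2 - n(n+1)(\phi^{(n)}_{\geq\ell})^2$ (nonnegative by Poincaré since $\ell\ge n$) is the natural quantity.

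The key steps, in order, are: (1) set up the multiplier identity from \eqref{eq:commeq} via Lemma \ref{lm:intbyparts}(i) with $\mathcal{F}_{\underline{L}}, \mathcal{F}_L$ built from $(L\phi^{(n)}_{\geq\ell})^2$ and $r^p$; (2) integrate by parts in $\uprho$ to generate the $r^{p-1}$ bulk term and the boundary fluxes on $N_{\tau_1}, N_{\tau_2}$, and integrate by parts on $\s^2$ for the angular term, invoking \eqref{eq:poincare1} to handle the potential term when $\ell\ge n$; (3) control the error terms: the $T\Phi\phi^{(n)}$, $T^2\phi^{(n)}$ coupling terms go into the $a^2 r^{p-3}$ bulk terms on the right after Cauchy--Schwarz with an appropriate split of $r$-weights; the $O_\infty(r^{-1})L
\phi^{(n)}$ term is absorbed into the main $r^{p-1}(L\phi^{(n)})^2$ bulk for $R$ large; the lower-order terms $\phi^{(k)}, \Phi\phi^{(k)}$ with $k<n$ (present only when $n\ge 1$) go into the corresponding right-hand side bulk terms; (4) handle the inner region $r\le R$ and the degenerate factor $\delta_{n0}\delta_{\ell 0}$ term by appealing to the integrated local energy decay estimates of Theorem \ref{thm:morawetz} and Propositions \eqref{eq:morawetzell1}--\eqref{eq:morawetzell3} (the refined $\psi_{\geq\ell}$ estimates), which is where the last line of \eqref{eq:rpest} involving $J^N[T^m\psi_{\geq\ell}]$ and the $a^2$ sum over $l\le\lceil\ell/2\rceil$ enters; (5) for the variant \eqref{eq:rpestvar}, simply bound $\psi_{\geq\ell}$ and all the coupled lower modes by the full $\psi$ using $L^2$-boundedness of $\pi_{\geq\ell}$ and the plain estimates \eqref{eq:morawetz1}--\eqref{eq:morawetz2}, discarding the mode-refinement.

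I expect the main obstacle to be the careful bookkeeping of the coupling/error terms in step (3), in particular tracking which $r$-weight each term in \eqref{eq:commeq} carries and verifying that after Cauchy--Schwarz every error lands in one of the listed right-hand side integrals with the correct power $r^{p-3}$ (or $r^{p-1}$ for the $L\Phi\phi^{(n-1)}$ term), and that the $a^2$ and $a^4$ powers come out right — notably the $a^4 r^{p-3}(T^2\phi^{(n)}_{\geq\max\{\ell-2,0\}})^2$ term, where one factor of $a^2$ comes from the $a^2\sin^2\theta T^2\phi^{(n)}$ coefficient in \eqref{eq:commeq} and the extra $a^2$ from the $\sin^2\theta$-commutator structure $\pi_{\geq\ell}(\sin^2\theta \cdot)$ that shifts the mode down by two. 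A secondary technical point is that the condition $-4n < p \le 2$ must be tracked: the lower bound $-4n$ reflects that each of the $n$ applications of $r^2L$ used to build $\phi^{(n)}$ from $\phi$ buys a window of width $4$ in the allowed $p$-range (the $-4nr^{-1}L\phi^{(n)}$ term in \eqref{eq:commeq} must be absorbable), while $p\le 2$ is the usual Dafermos--Rodnianski ceiling ensuring the $(2-p)$-weighted bulk term has the right sign; I would verify these sign conditions explicitly at the point where the $r$-integration by parts is performed.
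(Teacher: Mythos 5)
Your overall strategy coincides with the paper's: multiply the commuted equation \eqref{eq:commeq}, projected to $\geq \ell$, by the $r^p$-weighted multiplier $-\tfrac{1}{2}r^{p-2}\tfrac{(r^2+a^2)^2}{\Delta}L\phi^{(n)}_{\geq \ell}$, organize the result into total $L$, $\underline{L}$, $\slashed{\textnormal{div}}_{\s^2}$ and $\Phi$ derivatives plus a bulk term, apply Lemma \ref{lm:intbyparts}, use the Poincar\'e inequality \eqref{eq:poincare1} (valid since $\ell\geq n$) for the combination $|\snabla_{\s^2}\phi^{(n)}_{\geq\ell}|^2-n(n+1)(\phi^{(n)}_{\geq\ell})^2$, absorb the error terms by weighted Young/Hardy inequalities, and appeal to the (refined) integrated local energy decay estimates for the $r\leq R$ region and the variant \eqref{eq:rpestvar}. (One small imprecision: the principal pairing $2L\phi^{(n)}\cdot\underline{L}L\phi^{(n)}$ is a total $\underline{L}$-derivative of $(L\phi^{(n)})^2$, not a total $L$-derivative; the $r^{p-1}(L\phi^{(n)})^2$ bulk comes from $\underline{L}$ hitting the weight, and the $N_{\tau_i}$ fluxes arise through the $\underline{L}$-flux entries in Lemma \ref{lm:intbyparts}. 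This does not affect the structure of the argument.)

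There is, however, one genuine gap, in the case $n=\ell=0$. You propose to send the terms $-\tfrac{1}{2}r^{p-2}L\phi\cdot a^2\sin^2\theta\,T^2\phi$ and $-a r^{p-2}L\phi\cdot T\Phi\phi$ to the right-hand side by Cauchy--Schwarz, but for $n=\ell=0$ the corresponding right-hand-side bulk terms in \eqref{eq:rpest} carry the prefactor $C(n+\ell)=0$, and the last line of \eqref{eq:rpest} contains only $\sum_{m=0}^{0}J^N[T^m\psi]$, i.e.\ no $T$-commuted energy. Cauchy--Schwarz would produce $\int\int a^4 r^{p-3}(T^2\phi)^2$, which at the endpoint $p=2$ is not controlled by \eqref{eq:eboundradfield2} (one loses the $\delta$), and for $p<2$ is only controlled at the cost of $J^N[T\psi]$-energies on the right, yielding a strictly weaker statement than claimed. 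The paper instead treats these two terms \emph{exactly}: it writes $T=L+\underline{L}-\tfrac{a}{r^2+a^2}\Phi$ and integrates by parts in $T$ and $\Phi$, so that $-\tfrac{1}{2}r^{p-2}L\phi\cdot a^2\sin^2\theta T^2\phi$ becomes a sum of flux contributions (absorbed into $\mathcal{F}^{r^pL}_L$, $\mathcal{F}^{r^pL}_{\underline{L}}$ and controlled by \eqref{eq:eboundradfield}) plus the \emph{good-signed} bulk $\tfrac{1}{8}(2-p)\tfrac{\Delta}{r^2+a^2}r^{p-3}a^2\sin^2\theta(T\phi)^2$ — this is precisely the origin of the $\delta_{n0}\delta_{\ell0}a^2\sin^2\theta(T\phi)^2$ term on the \emph{left}-hand side of \eqref{eq:rpest}; similarly, the $T\Phi\phi$ term is handled by integrating by parts in $\Phi$ and $\underline{L}$ and re-inserting the wave equation \eqref{eq:phieq1} for the resulting $\Phi\phi\,\underline{L}L\phi$ product. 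Your step (4) attributes the $\delta_{n0}\delta_{\ell0}$ term to the integrated local energy decay estimates, which cannot supply it in the far region; without the integration by parts in $T$ and $\Phi$ the $n=\ell=0$ case of the proposition as stated does not close.
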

\begin{proof}
We apply $\pi_{\geq \ell}$ and then multiply both sides of \eqref{eq:commeq} by $-\frac{1}{2}r^{p-2}\frac{(r^2+a^2)^2}{\Delta}L\phi^{(n)}_{\geq \ell}$ to obtain:
\begin{equation*}
\begin{split}
0=&-\frac{1}{2}r^{p-2}L\phi^{(n)}_{\geq  \ell}\Bigg[-4\frac{(r^2+a^2)^2}{\Delta}\underline{L}L\phi^{(n)}_{\geq  \ell}+a^2 T^2 \pi_{\geq  \ell}(\sin^2\theta \phi^{(n)})+2a T\Phi\phi^{(n)}_{\geq  \ell}+2(1+2n)a  \frac{r}{r^2+a^2} \Phi \phi^{(n)}_{\geq  \ell}\\
&+ \slashed{\Delta}_{\s^2}\phi^{(n)}_{\geq  \ell}+[n(n+1)+O_{\infty}(r^{-1})] \phi^{(n)}_{\geq  \ell}-[4nr^{-1}+O_{\infty}(r^{-2})]\frac{(r^2+a^2)^2}{\Delta} L \phi^{(n)}_{\geq  \ell}\\
&+a \sum_{m=0}^{n-1}O_{\infty}(r^0)\Phi\phi^{(m)}_{\geq  \ell}+n\sum_{k=0}^{n-1}O_{\infty}(r^0)\phi^{(m)}_{\geq  \ell}\Bigg]\\
=:&\:L(\mathcal{F}^{r^pL}_L)+\Lbar(\mathcal{F}^{r^pL}_{\Lbar})+\slashed{\rm div}_{\s^2} \mathcal{F}_{\snabla}^{r^pL}+\Phi(\mathcal{F}^{r^pL}_{\Phi})+\mathcal{J}^{r^p_L}.
\end{split}
\end{equation*}
We will determine $\mathcal{F}^{r^pL}_L$, $\mathcal{F}^{r^pL}_{\underline{L}}$ and $\mathcal{J}^{r^p_L}$ below so that the second equality above holds.\\
\\
\underline{$2r^{p-2}\frac{(r^2+a^2)^2}{\Delta}L\phi^{(n)}_{\geq  \ell}\cdot \underline{L}L\phi^{(n)}_{\geq  \ell}$:}\\
\\
We can write
\begin{equation*}
2r^{p-2}\frac{(r^2+a^2)^2}{\Delta}L\phi^{(n)}_{\geq  \ell}\cdot \underline{L}L\phi^{(n)}_{\geq  \ell}=\underline{L}\left(r^{p-2}\frac{(r^2+a^2)^2}{\Delta}(L\phi^{(n)}_{\geq  \ell})^2\right)+\left(\frac{p}{2}+O_{\infty}(r^{-1})\right)r^{p-1}(L\phi^{(n)}_{\geq  \ell})^2.
\end{equation*}
\\
-\underline{$\frac{1}{2}r^{p-2}L\phi^{(n)}_{\geq  \ell}\cdot ( \slashed{\Delta}_{\s^2}+n(n+1))\phi^{(n)}_{\geq  \ell}$:}\\
\\
We can write
\begin{equation*}
\begin{split}
-\frac{1}{2}r^{p-2}L\phi^{(n)}_{\geq  \ell}\cdot ( \slashed{\Delta}_{\s^2}+n(n+1))\phi^{(n)}_{\geq  \ell}=&\:\frac{1}{4}L\left(r^{p-2}|\snabla_{\s^2}\phi^{(n)}_{\geq  \ell}|^2-n(n+1)r^{p-2}(\phi_{\geq  \ell}^{(n)})^2\right)\\
&+\frac{1}{8}(2-p)r^{p-3}\frac{\Delta }{r^2+a^2}\left[|\snabla_{\s^2}\phi^{(n)}_{\geq  \ell}|^2-n(n+1)(\phi_{\geq  \ell}^{(n)})^2\right].
\end{split}
\end{equation*}
\\
\underline{$-\frac{1}{2}r^{p-2}L\phi^{(n)}_{\geq  \ell}\cdot a^2 T^2 \pi_{\geq  \ell}(\sin^2\theta \phi^{(n)})$:}\\
\\
We only rewrite this term if $n=0$ and $\ell=0$, as $\pi_{\geq 0}=\textnormal{id}$ (for $n\neq 0$ or $\ell\neq 0$ we directly group it with $J^{r^PL}$. We can write
\begin{equation*}
\begin{split}
-\frac{1}{2}r^{p-2}L\phi \cdot a^2  \sin^2\theta T^2 \phi=&\:-T\left(\frac{1}{2}r^{p-2}L\phi \cdot a^2 \sin^2\theta T  \phi\right)+\frac{1}{4}r^{p-2} a^2 \sin^2\theta L((T\phi)^2)\\
=&\:-(L+\underline{L})\left(\frac{1}{2}r^{p-2}L\phi \cdot a^2 \sin^2\theta T  \phi\right)+L\left(\frac{1}{4}r^{p-2} a^2 \sin^2\theta (T\phi)^2\right)\\
&+\Phi\left(\frac{a^2}{2(r^2+a^2)}r^{p-2}L\phi \cdot a^2 \sin^2\theta T  \phi\right)+\frac{1}{8}(2-p)\frac{\Delta}{r^2+a^2}r^{p-3} a^2 \sin^2\theta(T\phi)^2.
\end{split}
\end{equation*}
\\
\underline{$-\frac{1}{2}r^{p-2}L\phi^{(n)}_{\geq  \ell}\cdot 2a T\Phi\phi^{(n)}_{\geq \ell}$:}\\
\\
Again, we only rewrite this term if $n=0$ and $\ell=0$:
\begin{equation*}
\begin{split}
-a r^{p-2}L \phi\cdot T\Phi \phi=&\:-a r^{p-2}L \phi\cdot (L+\Lbar -\frac{a}{r^2+a^2}\Phi)\Phi \phi=-\Phi\left(\frac{a}{2}r^{p-2} (L\phi)^2+\frac{a^2}{r^2+a^2} L\phi \Phi \phi\right)\\
&+r^{p-2}\frac{a^2}{2(r^2+a^2)}L((\Phi\phi)^2)-\Lbar(ar^{p-2} L\phi \Phi \phi)+\Lbar(ar^{p-2} L\phi)\Phi \phi\\
=&\:-\Phi\left(\frac{a}{2}r^{p-2} (L\phi)^2+\frac{a^2}{r^2+a^2} L\phi \Phi \phi\right)+L\left(r^{p-2}\frac{a^2}{2(r^2+a^2)}(\Phi\phi)^2\right)+ar^{p-2}\Phi \phi \Lbar L\phi\\
&+\left[\frac{1}{2}(4-p)a^2r^{p-5}+O_{\infty}(r^{p-6})\right](\Phi\phi)^2+O_{\infty}(r^{p-3}) L\phi \Phi\phi.
\end{split}
\end{equation*}
By using \eqref{eq:phieq1}, we can further write:
\begin{equation*}
\begin{split}
ar^{p-2}\Phi \phi \underline{L}L\phi=&\:\frac{1}{4}a^3r^{p-2}\sin^2\theta \frac{\Delta}{(r^2+a^2)^2}\Phi \phi T^2\phi+\frac{1}{2}a^2  \frac{\Delta}{(r^2+a^2)^2}r^{p-2}\Phi \phi T\Phi\phi+  \frac{a}{4}\frac{\Delta}{(r^2+a^2)^2}r^{p-2}\Phi \phi \slashed{\Delta}_{\s^2}\phi\\
&+\frac{1}{2}a^2  \frac{r^{p-1}\Delta}{(r^2+a^2)^3} (\Phi \phi)^2+ O_{\infty}(r^{p-6})\phi \Phi \phi.
\end{split}
\end{equation*}
We have that
\begin{equation*}
\begin{split}
\frac{1}{4}a^3r^{p-2}\sin^2\theta& \frac{\Delta}{(r^2+a^2)^2}\Phi \phi T^2\phi+\frac{1}{2}a^2  \frac{\Delta}{(r^2+a^2)^2}r^{p-2}\Phi \phi T\Phi\phi= T\left(\frac{1}{4}a^3r^{p-2}\sin^2\theta \frac{\Delta}{(r^2+a^2)^2}\Phi \phi T\phi \right)\\
&+T\left(\frac{1}{4}a^2  \frac{\Delta}{(r^2+a^2)^2}r^{p-2}(\Phi \phi)^2\right)+\Phi\left(\frac{1}{8}a^3r^{p-2}\sin^2\theta \frac{\Delta}{(r^2+a^2)^2}(T\phi)^2\right)
\end{split}
\end{equation*}
and
\begin{equation*}
 \frac{a}{4}\frac{\Delta}{(r^2+a^2)^2}r^{p-2}\Phi \phi \slashed{\Delta}_{\s^2}\phi=\slashed{\rm div}_{\s^2}\left( \frac{a}{4}\frac{\Delta}{(r^2+a^2)^2}r^{p-2}\Phi \phi \snabla_{\s^2}\phi\right)+\Phi\left( \frac{a}{8}\frac{\Delta}{(r^2+a^2)^2}r^{p-2}| \snabla_{\s^2}\phi|^2\right).
\end{equation*}

We combine the above identities to obtain:
\begin{align*}
\mathcal{F}^{r^pL}_L=&\:\frac{1}{4}r^{p-2}(|\snabla_{\s^2}\phi^{(n)}_{\geq  \ell}|^2-n(n+1)(\phi_{\geq  \ell}^{(n)})^2)+\delta_{n0}\delta_{\ell0}\Bigg[\frac{1}{4}a^2 \sin^2\theta r^{p-2}  (T\phi)^2-\frac{1}{2}r^{p-2} a^2 \sin^2\theta L\phi  T  \phi \\
&+r^{p-2}\frac{3a^2}{4(r^2+a^2)}(\Phi\phi)^2+\frac{1}{4}a^3r^{p-2}\sin^2\theta \frac{\Delta}{(r^2+a^2)^2}\Phi \phi T\phi\Bigg],\\
\mathcal{F}^{r^pL}_{\Lbar}=&\:r^{p-2}\frac{(r^2+a^2)^2}{\Delta}(L\phi^{(n)}_{\geq  \ell})^2+\delta_{n0}\delta_{\ell 0}\Bigg[-\frac{1}{2}r^{p-2}a^2 \sin^2\theta L\phi  T  \phi+\frac{1}{4}a^2  \frac{\Delta}{(r^2+a^2)^2}r^{p-2}(\Phi \phi)^2\\
&+\frac{1}{4}a^3r^{p-2}\sin^2\theta \frac{\Delta}{(r^2+a^2)^2}\Phi \phi T\phi\Bigg],\\
\mathcal{J}^{r^pL}=&\:\left(\frac{p+4n}{2}+O_{\infty}(r^{-1})\right)r^{p-1}(L\phi^{(n)}_{\geq  \ell})^2+\frac{1}{8}(2-p)r^{p-3}\frac{\Delta }{r^2+a^2}[|\snabla_{\s^2}\phi^{(n)}_{\geq  \ell}|^2-n(n+1)(\phi_{\geq  \ell}^{(n)})^2]\\
&+O_{\infty}(r^{p-3})\phi^{(n)}_{\geq  \ell} L\phi^{(n)}_{\geq \ell}+O_{\infty}(r^{p-3})\Phi\phi^{(n)}_{\geq  \ell} L\phi^{(n)}_{\geq \ell}\\
&+a (1-\delta_{n0} \delta_{\ell 0})\sum_{m=0}^{n-1}O_{\infty}(r^{p-2})\Phi\phi^{(m)}_{\geq \ell} L\phi^{(n)}_{\geq  \ell}+(1-\delta_{n0}\delta_{\ell 0})\sum_{m=0}^{n-1}O_{\infty}(r^{p-2})\phi^{(m)}_{\geq  \ell} L\phi^{(n)}_{\geq \ell}\\
&+\delta_{n0}\delta_{\ell 0}\left[\frac{1}{8}(2-p)\frac{\Delta}{r^2+a^2}r^{p-3} a^2 \sin^2\theta(T\phi)^2+\left(\frac{1}{2}(5-p)r^{p-5}+O_{\infty}(r^{p-6})\right)(\Phi \phi)^2+ O_{\infty}(r^{p-6})\phi \Phi \phi\right]\\
&+(1-\delta_{n0}\delta_{\ell 0})\left[-\frac{1}{2}a^2r^{p-2}L\phi^{(n)}_{\geq  \ell}  T^2 \pi_{\geq \ell}(\sin^2\theta \phi^{(n)})-ar^{p-2}L\phi^{(n)}_{\geq  \ell} T\Phi\phi^{(n)}_{\geq  \ell}\right].
\end{align*}

We will now apply Lemma \ref{lm:intbyparts}. We deal with the boundary term at $r=R$ by a standard averaging argument: we multiply the $\mathcal{F}^{r^pL}_{\square}$ terms with a suitable smooth cut-off function $\chi: [r_+,\infty)\to 0$, with $\chi(\uprho)=0$ for $r\leq R-M$ and $\chi(\uprho)=1$ for $r\geq R$, see for example the proof of Proposition 6.5 of \cite{paper4}. Note that when $n\geq 1$, the flux terms on $N_{\tau}$ and $I_{s}$ coming from the terms in $\mathcal{F}^{r^pL}_L$ and $\mathcal{F}^{r^pL}_{\Lbar}$ are non-negative definite for $R>0$ suitably large and $-4n<p\leq 2$. When $n=0$, we can easily estimate the flux terms without a good sign by applying \eqref{eq:eboundradfield}.

The terms without a sign in $\mathcal{J}^{r^pL}$ are estimated by applying a weighted Young's inequality. In the $n=\ell=0$ case we additionally apply \eqref{eq:morawetz1} and \eqref{eq:hardyL}. For example: we estimate
\begin{equation*}
r^{p-3}|\Phi\phi||L\phi^{(n)}|+r^{p-3}|\phi||L\phi^{(n)}|\leq \epsilon r^{p-1}(L\phi)^2+ C\epsilon^{-1} r^{p-5}(\Phi\phi)^2+ C\epsilon^{-1}r^{p-5}\phi^2
\end{equation*}
and apply \eqref{eq:morawetz1} to estimate the $(\Phi\phi)^2$ term for $p\leq 2$, whereas we estimate the $\phi$ term using \eqref{eq:hardyL} (after multiplying with the cut-off $\chi^2$):
\begin{equation*}
\begin{split}
\int_{\tau_1}^{\tau_2}&\int_{\Sigma_{\tau}\cap\{r\geq R-M\}} \epsilon^{-1}r^{p-5}\chi^2\phi^2\,d\omega d\uprho \leq C\int_{\tau_1}^{\tau_2}\int_{\Sigma_{\tau}\cap\{r\geq R-M\}} \epsilon^{-1}r^{p-3}\chi^2 (L\phi)^2+r^{p-7}\chi^2 (T\phi)^2\\
&+r^{p-7}\chi^2 (\Phi \phi)^2+r^{p-3}(\chi')^2 \phi^2\,d\omega d\uprho,
\end{split}
\end{equation*}
where we absorb the $(L\phi)^2$ term into the left-hand side and control the remaining terms with \eqref{eq:eboundradfield2} and \eqref{eq:morawetz1}.

In the $n\geq 1$ case, we instead apply the higher-order integrated local energy decay estimate \eqref{eq:morawetzell3} and we moreover estimate via \eqref{eq:hardyL}:
\begin{equation*}
\int_{\tau_1}^{\tau_2} \int_{N_{\tau}} r^{p-5}(\phi^{(n)})^2\,d\omega d\uprho d\tau\leq C\int_{\tau_1}^{\tau_2} \int_{N_{\tau}} r^{p-3}(L\phi^{(n)}_{\geq \ell})^2+r^{p-3}(LT\phi^{(n-1)}_{\geq \ell})^2+r^{p-3}(L\Phi \phi^{(n-1)}_{\geq \ell})^2\,d\omega d\uprho d\tau+\ldots,
\end{equation*}
where $\ldots$ denotes boundary terms at $r=R$ that can easily be absorbed via an averaging argument and \eqref{eq:morawetzell3}, as above.

The estimate \eqref{eq:rpest} then follows. Similarly, we obtain the alternate version of \eqref{eq:rpest}, with the right-hand side \eqref{eq:rpestvar} by applying \eqref{eq:morawetz2} instead of \eqref{eq:morawetzell3}.
\end{proof}
\subsection{Hierarchies of $r^p$-weighted energy estimates for $P_0$, $P_1$ and $P_2$}
\label{sec:rpestPi}
When restricting to $\psi_{\ell}$, with $\ell=0,1,2$, we can obtain weighted hierarchies of energy estimates with larger weights in $r$ compared to those derived in Section \ref{sec:rpestmain} by considering the quantities $P_i$, $i=0,1,2$ defined in Section \ref{sec:defNPconstants}.
\begin{proposition}
\label{prop:rpestNpquant}
Let $0<p<4$. Then the following estimates hold: for $R>r_+$ suitably large, there exists a constant $C(M,a,R,p)>0$ such that:
\begin{itemize}
\item[for $\ell=0$:]
\begin{equation*}
\begin{split}
\int_{N_{\tau_2}}& r^{p} (P_0)^2\,d\omega d\uprho+\int_{\tau_1}^{\tau_2}\left[ \int_{N_{\tau}} r^{p-1} (P_0^2+(L\phi_0)^2)\,d\omega d\uprho\right]d\tau\leq C\int_{N_{\tau_1}} r^{p} (P_0)^2\,d\omega d\uprho\\
&+C\int_{\tau_1}^{\tau_2}\left[ \int_{N_{\tau}} a^2 r^{p-3}(LT \phi)^2+r^{p-5}(T\phi)^2\,d\omega d\uprho\right]d\tau+C\int_{\Sigma_{\tau_1}}\mathbf{J}^T[\psi]\cdot \mathbf{n}_{\tau_1}\,r^2d\omega d\uprho,
\end{split}
\end{equation*}
\item[for $\ell=1$:]
\begin{equation*}
\begin{split}
\int_{N_{\tau_2}}& r^{p} (P_1)^2\,d\omega d\uprho+\int_{\tau_1}^{\tau_2}\left[ \int_{N_{\tau}} r^{p-1} (P_1^2+(L\check{\phi}_1^{(1)})^2)\,d\omega d\uprho\right]d\tau\leq C\int_{N_{\tau_1}} r^{p} (P_1)^2\,d\omega d\uprho\\
&+Ca^2 \sum_{i=0}^1\int_{\tau_1}^{\tau_2}\left[ \int_{N_{\tau}} r^{p-3}(LT \phi^{(i)}_{\geq 3})^2+r^{p-5}(T \phi^{(i)}_{\geq 3})^2\,d\omega d\uprho\right]d\tau\\
&+C\sum_{j=0}^1\int_{\tau_1}^{\tau_2}\left[ \int_{N_{\tau}} r^{p-1}(L(rL)^jT\phi_1)^2+r^{p-5}(T\phi_1)^2\,d\omega d\uprho\right]d\tau+C\sum_{m=0}^1\int_{\Sigma_{\tau_1}}\mathbf{J}^T[T^m\psi]\cdot \mathbf{n}_{\tau_1}\,r^2d\omega d\uprho,
\end{split}
\end{equation*}
\item[for $\ell=2$:]
\begin{equation*}
\begin{split}
\int_{N_{\tau_2}}& r^{p} (P_2)^2\,d\omega d\uprho+\int_{\tau_1}^{\tau_2}\left[ \int_{N_{\tau}} r^{p-1} (P_2^2+(L\check{\phi}_2^{(2)})^2)\,d\omega d\uprho\right]d\tau\leq C\int_{N_{\tau_1}} r^{p} (P_2)^2\,d\omega d\uprho\\
&+Ca^2\sum_{i=0}^2\int_{\tau_1}^{\tau_2}\left[ \int_{N_{\tau}} r^{p-3}(LT \phi^{(i)}_{\geq 4})^2+r^{p-5}(T\phi^{(i)}_{\geq 4})^2\,d\omega d\uprho\right]d\tau\\
&+C\sum_{i=0}^1\sum_{j=0}^1\int_{\tau_1}^{\tau_2}\left[ \int_{N_{\tau}} r^{p-1}(L(rL)^jT\phi^{(i)}_2)^2+r^{p-5}(T\phi_2)^2\,d\omega d\uprho\right]d\tau\\
&+Ca^2\sum_{j=0}^2\int_{\tau_1}^{\tau_2}\left[ \int_{N_{\tau}} r^{p+1} (L(rL)^jT\phi_0)^2+r^{p-5}(T\phi_0)^2\,d\omega d\uprho\right]d\tau+C\sum_{m=0}^2\int_{\Sigma_{\tau_1}}\mathbf{J}^T[T^m\psi]\cdot \mathbf{n}_{\tau_1}\,r^2d\omega d\uprho.
\end{split}
\end{equation*}
\end{itemize}
\end{proposition}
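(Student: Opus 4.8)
The plan is to mimic the structure of the proof of Proposition~\ref{prop:rpest}, but now using the renormalized quantities $P_0$, $P_1$, $P_2$ in place of $L\phi_{\geq \ell}^{(n)}$, exploiting that by Corollary~\eqref{eq:maineqP0}--\eqref{eq:maineqP2} these quantities satisfy transport equations of the form $4\underline{L}P_\ell = (-4\ell\,r^{-1}+O_\infty(r^{-2}))P_\ell + (\textnormal{error terms})$, where the error terms carry \emph{three} powers of $r^{-1}$ relative to $P_\ell$ (or two powers, for the $LT$-type terms), rather than only the $O(r^{-3})$ decay seen for $\underline{L}L\phi^{(n)}$. This extra decay is precisely what allows the weight range to be pushed to $0<p<4$ instead of $p\leq 2$. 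For a fixed azimuthal mode $m$ I would multiply both sides of the equation for $P_\ell$ by $-\tfrac{1}{2}r^{p-2}\frac{(r^2+a^2)^2}{\Delta}P_\ell$ (or, equivalently, contract with the $r^p$-weighted $L$-multiplier as in Section~\ref{sec:rpestmain}), rewrite the product $2r^{p-2}\frac{(r^2+a^2)^2}{\Delta}P_\ell\,\underline{L}P_\ell$ as $\underline{L}\!\left(r^{p-2}\frac{(r^2+a^2)^2}{\Delta}P_\ell^2\right)+(\tfrac{p}{2}-2\ell+O_\infty(r^{-1}))r^{p-1}P_\ell^2$, and then put everything into the schematic divergence form $\underline{L}(\mathcal F_{\underline L})+L(\mathcal F_L)+\slashed{\rm div}_{\s^2}\mathcal F_{\snabla}+\Phi(\mathcal F_\Phi)+\mathcal J=0$ so that Lemma~\ref{lm:intbyparts}(ii) applies.

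The bulk term $\mathcal J$ will contain the good term $(\tfrac{p}{2}-2\ell+O_\infty(r^{-1}))r^{p-1}P_\ell^2$, which has a favourable sign provided $p<4\ell+\cdots$, hence the restriction $0<p<4$ is sharp for $\ell=1$ (and more than enough for $\ell=0$). The remaining error contributions in $\mathcal J$ come from pairing $r^{p-2}\frac{(r^2+a^2)^2}{\Delta}P_\ell$ against the $O_\infty(r^{-3})$ and $O_\infty(r^{-2})$ terms on the right-hand side of \eqref{eq:maineqP0}--\eqref{eq:maineqP2}: by Cauchy--Schwarz with a small parameter $\epsilon$, each such pairing is bounded by $\epsilon\, r^{p-1}P_\ell^2$ (absorbed into the left) plus $C\epsilon^{-1}$ times a term like $r^{p-5}(\check\phi_\ell^{(j)})^2$, $r^{p-5}(T\pi_\ell(\sin^2\theta\,\check\phi^{(j)}))^2$, or $r^{p-3}(LT\pi_\ell(\sin^2\theta\,\check\phi^{(j)}))^2$. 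These are then handled exactly as in Proposition~\ref{prop:rpest}: the zeroth-order terms $r^{p-5}(\check\phi_\ell^{(j)})^2$ are converted via the Hardy inequality \eqref{eq:hardyX} (or \eqref{eq:hardyL}) into $r^{p-3}$-weighted $L$-derivative terms — which by unwinding the definition of $\check\phi^{(j)}$ become $r^{p-1}(L(rL)^k\cdots)^2$-type terms, matching the right-hand sides in the statement — picking up boundary terms at $r=R$ that are absorbed by a standard cut-off/averaging argument together with \eqref{eq:eboundradfield}, \eqref{eq:morawetz1}, \eqref{eq:morawetzell3}. The angular-mode-coupling terms $\pi_\ell(\sin^2\theta\,\psi)$ I would expand via Lemma~\ref{eq:lprojsin} into a finite sum of $\psi_{\ell-2},\psi_\ell,\psi_{\ell+2}$; for $\ell=0$ this produces only $\phi_0$ and $\phi_2$-type terms, for $\ell=1$ only $\phi_1$ and $\phi_3$-type (hence the $\phi^{(i)}_{\geq 3}$ terms), and for $\ell=2$ it produces $\phi_0$, $\phi_2$ and $\phi_4$-type contributions — which is exactly why the $\ell=2$ estimate carries the extra line with $r^{p+1}(L(rL)^jT\phi_0)^2$.

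The main obstacle, and the one requiring the most care, is the $\ell=2$ case: the $\phi_0$-coupling term on the right-hand side of \eqref{eq:maineqP2} enters multiplied by $a^2\pi_2(\sin^2\theta\,T\phi_0)$ etc., and since $P_2$ is built from $\check\phi^{(2)}_2 = (2(r^2+a^2)^2\Delta^{-1}L)^2(\cdots)\phi$, i.e.\ two extra $r^2L$-weights, the natural $r$-weight at which the $\phi_0$ term appears is four powers \emph{higher} than for $\phi_2$ itself; this forces the $r^{p+1}$-weighted $\phi_0$ term in the statement (rather than $r^{p-1}$) and one must check this weight is still consistent with the $0<p<4$ range, i.e.\ that $r^{p+1}(L(rL)^jT\phi_0)^2$ is a quantity one can afford on the right-hand side (it is, because it will later be controlled by a separate, longer hierarchy for $\phi_0$ obtained from the $\ell=0$ estimate above and the Newman--Penrose conservation of $r^2P_0$, combined with the extra $T$-derivative making $T\phi_0$ decay faster). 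Once the signs of all the good bulk and flux terms are verified — using that the $\mathcal F_{\underline L}$, $\mathcal F_L$ fluxes are, after the cut-off, dominated by $r^{p-2}\frac{(r^2+a^2)^2}{\Delta}P_\ell^2 \sim r^p P_\ell^2$ plus lower-order pieces controlled by \eqref{eq:eboundradfield} — the three displayed estimates follow by assembling the divergence identity, invoking Lemma~\ref{lm:intbyparts}(ii), and feeding the error terms through the Hardy inequalities and the integrated local energy decay estimates of Section~\ref{sec:prelimen}.
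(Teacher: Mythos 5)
Your strategy is the same as the paper's: for each fixed azimuthal mode, contract the first-order $\underline{L}$-transport equations \eqref{eq:maineqP0}--\eqref{eq:maineqP2} with an $r^p$-weighted multiple of $P_\ell$, integrate via Lemma \ref{lm:intbyparts}, absorb the coupling terms by Cauchy--Schwarz, and convert the resulting zeroth-order errors into the $L(rL)^kT$-quantities of the statement via the Hardy inequality \eqref{eq:hardyL} and the definitions of $\check{\phi}^{(j)}$, with Lemma \ref{eq:lprojsin} governing which modes $\phi_{\ell\pm2}$ appear. Your reading of the $\ell=2$ case is also the paper's: the $\phi_0$-coupling enters through $\check{\phi}^{(2)}_0\sim (r^2L)^2\phi_0$ and lands at a higher $r$-weight (two powers, $r^{p+1}$ versus $r^{p-1}$, once the Cauchy--Schwarz is done --- not four, as you state), and is only affordable because the extra $T$-derivative buys faster decay later.

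The one computation you actually display, however, is wrong, and it is the one on which the range $0<p<4$ rests. Since $4\underline{L}P_\ell=[-4\ell\, r^{-1}+O_\infty(r^{-2})]P_\ell+(\textnormal{errors})$ and $\underline{L}(r)=-\tfrac{\Delta}{2(r^2+a^2)}=-\tfrac12+O(r^{-1})$, the multiplier identity reads
\begin{equation*}
0=\underline{L}\left(r^pP_\ell^2\right)+\left[\frac{p+4\ell}{2}+O_\infty(r^{-1})\right]r^{p-1}P_\ell^2+(\textnormal{errors}),
\end{equation*}
so the good bulk coefficient is $\tfrac{p}{2}+2\ell$, not $\tfrac{p}{2}-2\ell$: the $-4\ell r^{-1}P_\ell$ term in the transport equation \emph{helps}, exactly as the $-4nr^{-1}L\phi^{(n)}$ term does in Proposition \ref{prop:rpest}. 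With your coefficient the bulk term would be negative for $\ell=1,2$ throughout $0<p<4$ and the estimate could not close; your positivity criterion ``$p<4\ell$'' is moreover inconsistent even with your own formula, since $\tfrac{p}{2}-2\ell>0$ iff $p>4\ell$. Consequently the restriction $p<4$ has nothing to do with the sign of this term. It is imposed by the error analysis: the Hardy inequality applied with exponent $p-5$ degenerates like $(p-4)^{-2}$ and produces a boundary term $\sim r^{p-4}\phi_\ell^2$ at infinity that only vanishes for $p<4$, and the residual $\int r^{p-5}(T\phi)^2$ terms are only controlled by the $r^{-1-\delta}$-weighted boundedness \eqref{eq:eboundradfield2} when $p<4$. (Morally, $p=4$ is also the threshold beyond which $\int_{N_\tau}r^pP_\ell^2\,d\omega d\uprho$ diverges for data with non-vanishing Newman--Penrose charge, since $r^2P_\ell\to I_\ell$.) With the coefficient corrected and the origin of the upper bound reassigned, the rest of your outline goes through as in the paper.
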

\begin{proof}
Note first of all that for any function $f\in L^2(\s^2)$, we can decompose $f=\sum_{m\in \Z}f_m$, with $\Phi f_m=im f_m$ such that
\begin{equation*}
\int_{\s^2} |f|^2\,d\omega=\sum_{m\in \Z} \int_{\s^2} |f_m|^2\,d\omega.
\end{equation*}

We take the azimuthal modes $\psi_m$ to satisfy the above equality and multiply both sides of \eqref{eq:maineqP0}--\eqref{eq:maineqP2} with $-\frac{1}{2}P_j$, $j=0,1,2$, sum over $m=-j,\ldots,j$, respectively, and apply Lemma \ref{eq:lprojsin} to obtain the following (schematic) identities:
\begin{align*}
0=&\:\underline{L}(r^p P_0^2)+\left[\frac{p}{2}+O_{\infty}(r^{-1})\right]r^{p-1}P_0^2+O_{\infty}(r^{p-3})P_0\left[\phi_0+a^2rLT\phi_{0}+a^2T\phi_{0}+a^2rLT\phi_{2}+a^2T\phi_{2}\right],\\
0=&\:\underline{L}(r^p P_1^2)+\left[\frac{p+4}{2}+O_{\infty}(r^{-1})\right]r^{p-1}P_1^2+O_{\infty}(r^{p-3})P_1\left[\check{\phi}^{(1)}_1+\phi_1+a^2rLT\check{\phi}^{(1)}_{1,3}+a^2rLT{\phi}_{1,3}+a^2T\check{\phi}^{(1)}_{1,3}+a^2T\phi_{1,3}\right],\\
0=&\:\underline{L}(r^p P_2^2)+\left[\frac{p+8}{2}+O_{\infty}(r^{-1})\right]r^{p-1}P_2^2+O_{\infty}(r^{p-3})P_2\sum_{i=1}^2\Big[\check{\phi}^{(i)}_{2}+\phi_{2}+a^2rLT\check{\phi}^{(2)}_{0,2,4}+a^2rLT{\phi}_{0,2,4}\\
&+a^2T\check{\phi}^{(i)}_{0,2,4}+a^2T\phi_{0,2,4}\Big].
\end{align*}

Note that we can estimate (suppressing $\Phi$ derivatives):
\begin{align*}
(L\phi_0)^2\leq&\: CP_0^2+ Cr^{-4}(T\phi_{2})^2,\\
(L\check{\phi}^{(1)}_1)^2\leq&\: CP_1^2+ C\sum_{i=0}^1 r^{-4}(\phi_1^{(i)})^2+r^{-4}(T\phi_{3}^{(i)})^2+r^{-4}(T\phi_{1}^{(i)})^2\\
\leq &\: CP_1^2+ C\sum_{i=0}^1 r^{-4}(\phi_1^{(i)})^2+r^{-4}(T\phi_{3}^{(i)})^2+r^{-4}(T\phi_{1})^2+(LT\phi_1)^2,\\
(L\check{\phi}^{(2)}_2)^2\leq&\: CP_2^2+  C\sum_{i=0}^2 r^{-4}( \phi_{2}^{(i)})^2+r^{-4}(T\phi_{0}^{(i)})^2+ r^{-4}(T\phi_{2}^{(i)})^2+r^{-4}(T\phi_{4}^{(i)})^2\\
\leq &\: CP_2^2+  C[r^{-4}(\check{\phi}_{2}^{(2)})^2+r^{-4}(T\phi_{4}^{(2)})^2]\\
&+C\sum_{i=0}^1 r^{-4}(\phi_{2}^{(i)})^2+r^{-4}(T\phi_{2,4}^{(i)})^2+(LT\phi_2^{(1)})^2+[r^4 (L^2T\phi_0)^2+r^2(LT\phi_0)^2+r^{-4}(T\phi_0)^2].
\end{align*}

Furthermore, we apply \eqref{eq:hardyL}, to estimate for suitably large $R>0$
\begin{align*}
\int_{N_{\tau}}r^{p-5}\phi_0^2\,d\omega d\uprho\leq&\: C\int_{N_{\tau}}r^{p-3}(L\phi_0)^2+r^{p-7}(T\phi_0)^2\,d\omega d\uprho+\ldots,\\
\int_{N_{\tau}}r^{p-5}[(\check{\phi}_1^{(1)})^2+\phi_1^2]\,d\omega d\uprho\leq&\: C\int_{N_{\tau}}r^{p-3}(L\check{\phi}_1^{(1)})^2+r^{p-3}(LT\phi_1)^2+r^{p-7}(T\phi_1)^2\,d\omega d\uprho+\ldots,\\
\int_{N_{\tau}}r^{p-5}[(\check{\phi}_2^{(2)})^2+(\phi_2^{(1)})^2+\phi_2^2]\,d\omega d\uprho\leq&\: C\int_{N_{\tau}}r^{p-3}(L\check{\phi}_2^{(2)})^2+r^{p-3}(LT\phi^{(1)}_2)^2+r^{p-3}(LT\phi_2)^2+r^{p-7}(T\phi_2)^2\,d\omega d\uprho+\ldots,
\end{align*}
where $\ldots$ denotes omitted terms arising from averaging the boundary terms at $r=R$ via a cut-off function.

We conclude the proof by applying Lemma \ref{lm:intbyparts} together with the above identities and estimates.
\end{proof}
\begin{remark}
Observe that the integrals of $r^{p-5}(T\phi)^2$ on the right-hand sides of of the estimates in Proposition \ref{prop:rpestNpquant} can be further estimated for $p<4$ using the energy boundedness property in \eqref{eq:eboundradfield2}.
\end{remark}

\subsection{Additional hierarchies for higher-order derivatives}
In this section we extend the hierarchies of $r^p$-weighted estimates established in Sections \ref{sec:rpestmain} and \ref{sec:rpestPi} to higher-order quantities of the form $\snabla_{\s^2}^{\alpha}(rL)^kT^m\phi^{(n)}$ and $(rL)^kT^mP_j$ by commuting the wave operator $\square_g$ with the vector fields $rL$, $T$ and the angular derivative operator $\snabla_{\s^2}$. 

Note that $T$ is a Killing vector field, so $T^k\psi$ is a solution to \eqref{eq:waveeq} for any $k\in \N_0$ and the estimates derived in Sections \ref{sec:rpestmain} and \ref{sec:rpestPi} immediately apply when $\phi$ is replaced by $T^k\phi$.

The operator $\square_g$, however,  does \emph{not} commute with $rL$ and $\snabla_{\s^2}$, so the aim of the this section is to show that nevertheless analogues of the $r^p$-weighted energy estimates derived in Sections \ref{sec:rpestmain} and \ref{sec:rpestPi} still hold for $\snabla_{\s^2}^{\alpha}(rL)^k\phi^{(n)}$.

The higher-order hierarchies of $r^p$-weighted estimates derived in this section are essential for dealing with the coupling of spherical harmonic modes when proving improved energy decay estimates for the projections $\psi_{\geq \ell}$, with $\ell\geq 1$.

\subsubsection{Commuting with $(rL)^k$}
\label{sec:commrL}
We will show in this section that $r^p$-weighted energy estimates that are analogous to those derived in Sections \ref{sec:rpestmain} and \ref{sec:rpestPi} still hold when $\phi$ is replaced by the higher-order quantities $(rL)^k\phi$, with $k\in \N$.
\begin{lemma}
\label{lm:commeqho}
Let $\psi\in C^{\infty}(\mathcal{R})$ be a solution to \eqref{eq:waveeq}. Let $n,k\in \N_0$. Then $(rL)^k\phi^{(n)}$ satisfies the following equation
\begin{equation}
\label{eq:commeqho}
\begin{split}
4\underline{L}L(rL)^{k}\phi^{(n)}= &\:a^2\sin^2\theta \frac{\Delta}{(r^2+a^2)^2}T^2 (rL)^k\phi^{(n)}+2a  \frac{\Delta}{(r^2+a^2)^2}T\Phi (rL)^k\phi^{(n)}+\Delta (r^2+a^2)^{-2} \slashed{\Delta}_{\s^2}(rL)^k\phi^{(n)}\\
&+ (-kr^{-2}+O_{\infty}(r^{-3}))[a^2\sin^2\theta T^2 (rL)^{k-1}\phi^{(n)}+2a  T\Phi (rL)^{k-1}\phi^{(n)}+\slashed{\Delta}_{\s^2}(rL)^{k-1}\phi^{(n)}]\\
&-\left[\left(4n+\frac{1}{2}k\right)r^{-1}+O_{\infty}(r^{-2})\right] L(rL)^k \phi^{(n)}+r^{-2}\left[n(n+1)+\left(2n+\frac{1}{2}\right)k+O_{\infty}(r^{-1})\right] (rL)^k\phi^{(n)}\\
&+\sum_{j=1}^{k-1}O_{\infty}(r^{-2})(rL)^{j}\phi^{(n)}+\sum_{j=0}^kO_{\infty}(r^{-3})\Phi (rL)^{j}\phi^{(n)}\\
&+k(k-1)\sum_{j=0}^{k-2} O_{\infty}(r^{-2})[a^2\sin^2\theta T^2 (rL)^{j}\phi^{(n)}+2a  T\Phi (rL)^{j}\phi^{(n)}+\slashed{\Delta}_{\s^2}(rL)^{j}\phi^{(n)}]\\
&+a \sum_{m=0}^{n-1}\sum_{j=0}^kO_{\infty}(r^{-2})\Phi (rL)^j\phi^{(m)}+n\sum_{m=0}^{n-1}\sum_{j=0}^kO_{\infty}(r^{-2})(rL)^{j}\phi^{(m)}.
\end{split}
\end{equation}
Furthermore, for $\psi$ supported on a fixed azimuthal mode, we have that, schematically,
\begin{align}
\label{eq:maineqP0ho}
4\underline{L}(rL)^kP_0=&\:-k\left[\frac{1}{2}r^{-1}+O_{\infty}(r^{-2})\right](rL)^kP_0+k\sum_{j=0}^{k-1}O_{\infty}(r^{-1})(rL)^{j}P_0+\sum_{j=0}^kO_{\infty}(r^{-3})(rL)^{j}\phi_0\\ \nonumber
&+\sum_{j=0}^{k+1}O_{\infty}(r^{-3})T(rL)^j\pi_0(\sin^2\theta \phi ),\\
\label{eq:maineqP1ho}
4\underline{L}(rL)^kP_1=&\:\left[-\left(4+\frac{k}{2}\right)r^{-1}+O_{\infty}(r^{-2})\right](rL)^kP_1+k\sum_{j=0}^{k-1} O_{\infty}(r^{-1})(rL)^{j}P_1+\sum_{j=0}^kO_{\infty}(r^{-3})[(rL)^j\chphi^{(1)}_1+(rL)^j\phi_1]\\ \nonumber
&+ \sum_{j=0}^{k+1}O_{\infty}(r^{-3})[(rL)^jT\pi_1(\sin^2\theta \chphi^{(1)}) +(rL)^jT\pi_1(\sin^2\theta \phi )],\\
\label{eq:maineqP2ho}
4\underline{L}(rL)^kP_2=&\:\left[-\left(8+\frac{k}{2}\right)r^{-1}+O_{\infty}(r^{-2})\right](rL)^kP_2+k\sum_{j=0}^{k-1} O_{\infty}(r^{-1})(rL)^{j}P_2\\ \nonumber
&+\sum_{j=0}^kO_{\infty}(r^{-3})[(rL)^j\chphi^{(2)}_2+(rL)^j\chphi^{(1)}_2+(rL)^j\phi_2]\\ \nonumber
&+ \sum_{j=0}^{k+1}O_{\infty}(r^{-3})[(rL)^jT\pi_2(\sin^2\theta \chphi^{(2)}) +(rL)^jT\pi_2(\sin^2\theta \chphi^{(2)}) +(rL)^jT\pi_2(\sin^2\theta \phi )]. 
\end{align}
\end{lemma}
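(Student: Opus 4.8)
\textbf{Proof plan for Lemma \ref{lm:commeqho}.}

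The strategy is a double induction: on $n$ for the equations \eqref{eq:commeq} and \eqref{eq:maineqP0}--\eqref{eq:maineqP2} (already established in Proposition \ref{eq:ho} and the Corollary following Proposition \ref{prop:maineqchph}), which serve as the $k=0$ base case, and then on $k$ for the statements \eqref{eq:commeqho} and \eqref{eq:maineqP0ho}--\eqref{eq:maineqP2ho}. Since $rL$ is the basic commutator, the heart of the matter is to understand how the operator $4\underline{L}L$ interacts with a single application of $rL$. The plan is to compute the commutator $[\,4\underline{L}L,\,rL\,]$ acting on an arbitrary function, using the structure equations in Section \ref{sec:vf}: in particular $L(r)=\frac{\Delta}{2(r^2+a^2)}=\frac{1}{2}+O_\infty(r^{-1})$, $\underline{L}(r)=-\frac{\Delta}{2(r^2+a^2)}=-\frac{1}{2}+O_\infty(r^{-1})$, and $[\underline{L},L]=\frac{ar\Delta}{(r^2+a^2)^3}\Phi = O_\infty(r^{-3})\Phi$. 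Expanding $4\underline{L}L(rL\,u) = 4\underline{L}\big(rL(Lu) + (Lr)(Lu)\big)$ and reorganising, one obtains $rL(4\underline{L}Lu)$ plus correction terms of the schematic form $O_\infty(r^0)\underline{L}Lu + O_\infty(r^{-1})Lu + O_\infty(r^{-1})\underline{L}u + O_\infty(r^{-3})\Phi u$; crucially, every term carrying an ``extra'' power loss also carries one fewer factor of $L$ or one fewer power of $r$, which is exactly the bookkeeping reflected in the lower-order sums in \eqref{eq:commeqho}.

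Having the one-step commutator, I would then apply the induction hypothesis \eqref{eq:commeqho} for exponent $k$ to the right-hand side of $4\underline{L}L(rL)^{k+1}\phi^{(n)} = (rL)\big(4\underline{L}L(rL)^k\phi^{(n)}\big) + [\,4\underline{L}L,\,rL\,](rL)^k\phi^{(n)}$, substitute the already-known expression for $4\underline{L}L(rL)^k\phi^{(n)}$, commute $rL$ past each term on that right-hand side (again generating only lower-order-in-$(rL)$ corrections, controlled by the structure equations and by $L$ applied to the various $O_\infty(r^{-m})$ coefficients, which produces $O_\infty(r^{-m-1})\cdot r = O_\infty(r^{-m})$ weights, consistently), and then collect terms by $(rL)$-degree. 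The coefficients of the principal terms — the $-\big(4n+\frac12 k\big)r^{-1}$ in front of $L(rL)^k\phi^{(n)}$ and the $n(n+1)+(2n+\frac12)k$ in front of $(rL)^k\phi^{(n)}$ — are precisely what one gets by tracking the linear-in-$k$ accumulation of the $-\frac12 r^{-1}L$ and $+\frac12 r^{-2}(\cdot)$ contributions from each commutator step; I would verify these two numerical coefficients explicitly (this is the one genuinely non-schematic computation) and declare everything else schematic $O_\infty$. The factor $k(k-1)$ in front of the second-order-in-lowering term records that it only appears after two commutations.

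For the $P_j$ equations \eqref{eq:maineqP0ho}--\eqref{eq:maineqP2ho}, the plan is identical but cleaner, since the base equations \eqref{eq:maineqP0}--\eqref{eq:maineqP2} are already first-order in $\underline{L}$ (of the form $4\underline{L}P_j = c_j r^{-1}P_j + \text{lower order}$, with $c_0=0$, $c_1=-4$, $c_2=-8$); commuting $rL$ through $4\underline{L}$ uses $[\underline{L}, rL] = (\underline{L}r)L + r[\underline{L},L] = -\tfrac12 L + O_\infty(r^{-1})L + O_\infty(r^{-2})\Phi$, so each step shifts the coefficient of $(rL)^k P_j$ by $-\tfrac12 r^{-1}$, giving the stated $-(c_j/(-1) + \tfrac k2)$, i.e.\ $-(4+\tfrac k2)$ and $-(8+\tfrac k2)$ for $j=1,2$ and just $-\tfrac k2$ for $j=0$; the inhomogeneous terms involving $\phi_j$, $\chphi^{(j)}_j$ and $T\pi_j(\sin^2\theta\,\cdots)$ simply pick up at most one extra $(rL)$ (hence the index running to $k+1$ in the last sums) and possibly extra $r^{-1}$ weights that are harmless since those coefficients already decay. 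One must restrict to a fixed azimuthal mode, exactly as in Proposition \ref{prop:maineqchph} and its Corollary, so that $\Phi$-derivatives may be absorbed and the angular-mode-coupling terms are the controlled $\pi_j(\sin^2\theta\,\cdot)$ expressions of Lemma \ref{eq:lprojsin}. The main obstacle — really the only place care is needed — is the combinatorial/numerical bookkeeping of the principal coefficients through the induction: making sure the $-\big(4n+\tfrac12 k\big)$, $n(n+1)+\big(2n+\tfrac12\big)k$ and $k(k-1)$ factors are reproduced exactly and that no ``extra'' term of too-high order in $(rL)$ or too-slowly-decaying in $r$ sneaks in; everything else is a routine, if lengthy, application of the Leibniz rule together with the $O_\infty$-calculus and the structure equations of Section \ref{sec:vf}.
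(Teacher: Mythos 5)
Your proposal is correct and follows essentially the same route as the paper: the paper's proof consists precisely of the one-step commutation identities $\underline{L}L(rL)^{k}\phi^{(n)}=L(r\underline{L}L(rL)^{k-1}\phi^{(n)})-\tfrac{1}{2}\tfrac{\Delta}{r^2+a^2}r^{-1}L(rL)^{k}\phi^{(n)}+\cdots$ and $\underline{L}(rL)^kP_j=rL(\underline{L}(rL)^{k-1}P_j)-\tfrac{1}{2}\tfrac{\Delta}{r^2+a^2}r^{-1}(rL)^kP_j+O_{\infty}(r^{-2})\Phi(rL)^{k-1}P_j$, followed by induction in $k$ with base case from Proposition \ref{eq:ho} and the corollary to Proposition \ref{prop:maineqchph}, exactly as you outline. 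Your bookkeeping of the $-\tfrac12 r^{-1}$ shift per commutation (giving $-(c_j+\tfrac{k}{2})r^{-1}$ for the $P_j$ equations) and of the lower-order sums matches the paper's computation.
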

\begin{proof}
We have that
\begin{equation*}
\begin{split}
\underline{L}L(rL)^{k}\phi^{(n)}=&\:L(r\underline{L}L(rL)^{k-1}\phi^{(n)})-\frac{1}{2}\frac{\Delta}{r^2+a^2} r^{-1} L(rL)^k\phi-L\left(\frac{1}{2}\frac{\Delta}{r^2+a^2} r^{-1}\right)(rL)^k\phi+[\Lbar,L](rL)^k\phi\\
=&\:L(r\underline{L}L(rL)^{k-1}\phi^{(n)})-\frac{1}{2}\frac{\Delta}{r^2+a^2} r^{-1} L(rL)^k\phi+\left[\frac{1}{4}kr^{-2}+O_{\infty}(r^{-3})\right](rL)^k\phi+O_{\infty}(r^{-3})(rL)^k\Phi\phi.
\end{split}
\end{equation*}
Furthermore,
\begin{equation*}
\begin{split}
\underline{L}(rL)^kP_j=&\:rL(\underline{L}(rL)^{k-1}P_j)-\frac{1}{2}\frac{\Delta}{r^2+a^2}r^{-1} (rL)^kP_j+r[\underline{L},L] (rL)^{k-1}P_j\\
=&\:rL(\underline{L}(rL)^{k-1}P_j)-\frac{1}{2}\frac{\Delta}{r^2+a^2}r^{-1} (rL)^kP_j+O_{\infty}(r^{-2})(rL)^{k-1}\Phi P_j
\end{split}
\end{equation*}
The equations in the proposition then follow by an easy induction argument, where we suppress the $\Phi$ derivatives when acting on functions supported on a fixed azimuthal mode.
\end{proof}

\begin{proposition}
\label{prop:commrL}
Let $n,\ell\in \N_0$, with $\ell\geq n$, $K\in \N_0$ and $-4n<p\leq 2$. For $R>r_+$ suitably large, there exists a constant $C=C(M,a,n,\ell,K,R,p)>0$, such that 
\begin{equation}
\begin{split}
\label{eq:rpestLk}
\int_{N_{\tau_2}}&r^p(L\phi^{(n)}_{\geq \ell})^2+r^{p-4}\left(|\snabla_{\s^2} \phi^{(n)}_{\geq \ell}|^2-n(n+1) (\phi^{(n)}_{\geq \ell})^2\right)\,d\omega d\uprho+K\sum_{k=1}^K\int_{N_{\tau_2}} r^p(L(rL)^k\phi^{(n)}_{\geq \ell})^2+r^{p-4}|\snabla_{\s^2} (rL)^k\phi^{(n)}_{\geq \ell}|^2\,d\omega d\uprho\\
&+\int_{\tau_1}^{\tau_2}\left[\int_{N_{\tau}} r^{p-1}(L\phi^{(n)}_{\geq \ell})^2+(2-p)r^{p-3}\left(|\snabla_{\s^2} \phi^{(n)}_{\geq \ell}|^2-n(n+1) (\phi^{(n)}_{\geq \ell})^2\right)\,d\omega d\uprho\right]d\tau\\
&+K\sum_{k=1}^K\int_{\tau_1}^{\tau_2}\left[\int_{N_{\tau}} r^{p-1}(L(rL)^k\phi^{(n)}_{\geq \ell})^2+(2-p)r^{p-3}|\snabla_{\s^2}(rL)^k \phi^{(n)}_{\geq \ell}|^2\,d\omega d\uprho\right]d\tau\\
\leq \: &C\int_{N_{\tau_1}}r^p(L\phi^{(n)}_{\geq \ell})^2+r^{p-4}\left(|\snabla_{\s^2} \phi^{(n)}_{\geq \ell}|^2-n(n+1) (\phi^{(n)}_{\geq \ell})^2\right)\,d\omega d\uprho\\
&+CK\sum_{k=1}^K\int_{N_{\tau_1}} r^p(L(rL)^k\phi^{(n)}_{\geq \ell})^2+r^{p-4}|\snabla_{\s^2} (rL)^k\phi^{(n)}_{\geq \ell}|^2\,d\omega d\uprho\\
&+C(n+\ell)\int_{\tau_1}^{\tau_2}\left[\int_{N_{\tau}} r^{p-3}a^4((rL)^KT^2\phi^{(n)}_{\geq \max \{ \ell-2,0\}})^2+r^{p-3}a^2((rL)^KT\Phi \phi^{(n)}_{\geq  \ell})^2\,d\omega d\uprho\right]d\tau\\
&+CK \sum_{k=0}^{K-1}\int_{\tau_1}^{\tau_2}\left[\int_{N_{\tau}} r^{p-3}a^4((rL)^kT^2\phi^{(n)}_{\geq \max \{ \ell-2,0\}})^2+r^{p-3}a^2((rL)^kT\Phi \phi^{(n)}_{\geq  \ell})^2\,d\omega d\uprho\right]d\tau\\
&+Cn \int_{\tau_1}^{\tau_2}\left[\int_{N_{\tau}} r^{p-1}(L\Phi\phi_{\geq \ell}^{(n-1)})^2+ r^{p-1}(LT \phi_{\geq \ell}^{(n-1)})^2\,d\omega d\uprho\right]d\tau\\
&+Cn \sum_{k=0}^{K}\sum_{m=0}^{n-1}\int_{\tau_1}^{\tau_2}\left[\int_{N_{\tau}} a^2r^{p-3}((rL)^k\Phi\phi_{\geq  \ell}^{(m)})^2+r^{p-3}((rL)^k\phi_{\geq  \ell}^{(m)})^2\,d\omega d\uprho\right]d\tau\\
&+C\delta_{\ell 0 }\delta_{n 0}K \sum_{i=1}^2 \sum_{k=0}^{K-1}\int_{N_{\tau_i}} a^2r^{p-2}(L(rL)^kT\phi)^2\,d\omega d\uprho \\
&+ C\sum_{m=0}^{n+K}  \left[\int_{\Sigma_{\tau_1}}J^N[T^{m}\psi_{\geq \ell}]\cdot \mathbf{n}_{\tau_1}\, r^2d\omega d \uprho+a^2(1-\delta_{\ell 0} \delta_{n 0})\sum_{\substack{1\leq l\leq \lceil \ell/2\rceil \\ 0\leq l_1+l_2\leq l}} \int_{\Sigma_{\tau_1}}J^N[T^{m+2l}T^{l_1}\Phi^{l_2}\psi_{\geq \max\{\ell-2l,0\}}\cdot \mathbf{n}_{\tau_1}\, r^2d\omega d \uprho\right].
\end{split}
\end{equation}
Furthermore, the estimate \eqref{eq:rpestLk} also holds with the last line replaced by
\begin{equation*}
C\sum_{m=0}^{n+K} \int_{\Sigma_{\tau_1}}J^N[T^{m}\psi]\cdot \mathbf{n}_{\tau_1}\, r^2d\omega d \uprho.
\end{equation*}
\end{proposition}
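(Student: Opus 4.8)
The plan is to adapt the argument for Proposition \ref{prop:rpest} to the commuted equation \eqref{eq:commeqho} of Lemma \ref{lm:commeqho}, running an induction on the number $k$ of $rL$-commutations, with base case $k=0$ being Proposition \ref{prop:rpest} itself. Fix $n,\ell\in\N$ (or $\N_0$) with $\ell\geq n$ and $-4n<p\leq 2$; the inductive hypothesis is that \eqref{eq:rpestLk} holds with $K$ replaced by any $j<k$. Inside the step, the lower-order quantities $(rL)^j\phi^{(n)}_{\geq\ell}$ with $j<k$ appearing on the right of \eqref{eq:commeqho} will be absorbed or controlled via this hypothesis, whereas the irreducible error terms — the $a^2$-weighted $T^2$- and $T\Phi$-commuted fluxes, the lower harmonic-order pieces $\phi^{(m)}$ with $m<n$, and the initial-data fluxes — are simply retained on the right-hand side, matching the form of the statement. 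Summing the resulting estimates over $k=1,\dots,K$ accounts for the prefactor $K$ and yields \eqref{eq:rpestLk}.

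For the inductive step, project \eqref{eq:commeqho} to $\pi_{\geq\ell}$ and multiply by the Dafermos--Rodnianski multiplier $-\tfrac12 r^{p-2}\tfrac{(r^2+a^2)^2}{\Delta}L(rL)^k\phi^{(n)}_{\geq\ell}$. Repeating the term-by-term manipulations in the proof of Proposition \ref{prop:rpest} — the $\underline{L}L$ term produces $\underline{L}\big(r^{p-2}\tfrac{(r^2+a^2)^2}{\Delta}(L(rL)^k\phi^{(n)}_{\geq\ell})^2\big)$ plus a coercive bulk $\sim r^{p-1}(L(rL)^k\phi^{(n)}_{\geq\ell})^2$ (with positive coefficient, since the coefficient of $L(rL)^k\phi^{(n)}$ in \eqref{eq:commeqho} is $-[(4n+\tfrac k2)r^{-1}+O_{\infty}(r^{-2})]$), the $\slashed{\Delta}_{\s^2}+n(n+1)$ term produces $L$ of the weighted $|\snabla_{\s^2}(rL)^k\phi^{(n)}_{\geq\ell}|^2-n(n+1)((rL)^k\phi^{(n)}_{\geq\ell})^2$ flux together with its $(2-p)$-coercive bulk, and in the $n=\ell=0$ case the $a^2\sin^2\theta\,T^2\phi$ and $2aT\Phi\phi$ terms are rewritten exactly as there using $\pi_0$ — one obtains an identity of the form $L(\mathcal{F}_L)+\underline{L}(\mathcal{F}_{\underline{L}})+\slashed{\rm div}_{\s^2}\mathcal{F}_{\snabla}+\Phi(\mathcal{F}_\Phi)+\mathcal{J}=0$ to which Lemma \ref{lm:intbyparts}(i) applies. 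The fluxes $\mathcal{F}_L,\mathcal{F}_{\underline{L}}$ coincide, up to the same lower-order pieces as in Proposition \ref{prop:rpest}, with $r^{p-2}\tfrac{(r^2+a^2)^2}{\Delta}(L(rL)^k\phi^{(n)}_{\geq\ell})^2$ and $\tfrac14 r^{p-2}(|\snabla_{\s^2}(rL)^k\phi^{(n)}_{\geq\ell}|^2-n(n+1)((rL)^k\phi^{(n)}_{\geq\ell})^2)$; the latter integrates to a non-negative quantity on each sphere precisely because $\ell\geq n$, by the Poincar\'e inequality \eqref{eq:poincare1}, and for $n\geq1$, $-4n<p\leq2$ and $R$ large the boundary fluxes on $N_\tau$ and $I_s$ are non-negative definite, while for $n=0$ the sign-indefinite contributions are controlled by \eqref{eq:eboundradfield}, as in Proposition \ref{prop:rpest}.

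It remains to estimate the terms of $\mathcal{J}$ without a favourable sign. Besides the ``boxed'' $a^4 r^{p-3}(T^2\phi^{(n)}_{\geq\max\{\ell-2,0\}})^2$ and $a^2 r^{p-3}(T\Phi\phi^{(n)}_{\geq\ell})^2$ contributions — produced, as in Proposition \ref{prop:rpest}, by grouping $a^2T^2\pi_{\geq\ell}(\sin^2\theta\phi^{(n)})$ and $2aT\Phi\phi^{(n)}_{\geq\ell}$ into $\mathcal{J}$ and retained on the right — the genuinely new cross terms are of schematic form $O_{\infty}(r^{p-4})(rL)^j\phi^{(n)}_{\geq\ell}\cdot L(rL)^k\phi^{(n)}_{\geq\ell}$ and $O_{\infty}(r^{p-4})\Phi(rL)^j\phi^{(n)}_{\geq\ell}\cdot L(rL)^k\phi^{(n)}_{\geq\ell}$ with $j\leq k$, together with the $\Phi$- and lower-$n$ analogues from the last lines of \eqref{eq:commeqho}. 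These are treated by a weighted Young inequality: the $(L(rL)^k\phi^{(n)}_{\geq\ell})^2$ factor is absorbed into the coercive $r^{p-1}$-bulk on the left, and the leftover low-weight terms $r^{p-5}((rL)^j\phi^{(n)}_{\geq\ell})^2$ are handled for $j\geq1$ by rewriting $(rL)^j\phi^{(n)}_{\geq\ell}=rL((rL)^{j-1}\phi^{(n)}_{\geq\ell})$, so that $r^{p-5}((rL)^j\phi^{(n)}_{\geq\ell})^2=r^{p-3}(L(rL)^{j-1}\phi^{(n)}_{\geq\ell})^2\leq r^{p-1}(L(rL)^{j-1}\phi^{(n)}_{\geq\ell})^2$, whose spacetime integral is bounded by the inductive hypothesis at level $j-1$, and for $j=0$ by the Hardy inequality \eqref{eq:hardyL} together with \eqref{eq:eboundradfield2} and \eqref{eq:morawetz1}, exactly as the $r^{p-5}\phi^2$ term is treated in Proposition \ref{prop:rpest}. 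The boundary terms at $r=R$ are dealt with by multiplying the fluxes by a cutoff $\chi$ supported in $\{r\geq R-M\}$ and equal to $1$ in $\{r\geq R\}$; the commutator errors, supported in $\{R-M\leq r\leq R\}$, are absorbed using the higher-order integrated local energy decay estimate \eqref{eq:morawetzell3} when $n\geq1$ (and \eqref{eq:morawetz2} for the variant with right-hand side \eqref{eq:rpestvar}), yielding the initial-data terms on the last line; in the $n=\ell=0$ case the $a^2$-flux pieces of $\mathcal{F}_L,\mathcal{F}_{\underline{L}}$ generate the boundary integrals $a^2 r^{p-2}(L(rL)^kT\phi)^2$ on $N_{\tau_1},N_{\tau_2}$ appearing in the statement.

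The main obstacle is not conceptual — there is no new geometric phenomenon beyond Proposition \ref{prop:rpest} — but rather the bookkeeping: one must verify that every term generated by commuting $\square_{g_{M,a}}$ with $(rL)^k$ (and with $\Phi$) is either coercive, absorbable into a coercive bulk, controllable by the inductive hypothesis at a strictly lower level of $rL$-commutation, or already present on the right-hand side of \eqref{eq:rpestLk}, and that the flux positivity persists throughout the range $-4n<p\leq 2$ under the constraint $\ell\geq n$. The $n=\ell=0$ case, in which the angular-coupling term must be integrated by parts in $T$ and distributed among the $L$-, $\underline{L}$- and $\Phi$-fluxes exactly as in Proposition \ref{prop:rpest}, requires the most care.
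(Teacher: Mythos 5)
Your overall architecture — induction on the number of $rL$-commutations with Proposition \ref{prop:rpest} as base case, the same multiplier $-\tfrac12 r^{p-2}\tfrac{(r^2+a^2)^2}{\Delta}L(rL)^{k}\phi^{(n)}_{\geq\ell}$ applied to \eqref{eq:commeqho}, Lemma \ref{lm:intbyparts}, the cutoff/averaging at $r=R$, and the retention of the $a^2$-weighted $T^2$-, $T\Phi$- and lower-$n$ terms on the right — matches the paper's proof. However, there is a genuine gap in your classification of the "new" error terms. The dangerous commutator term in \eqref{eq:commeqho} is $(-kr^{-2}+O_{\infty}(r^{-3}))\slashed{\Delta}_{\s^2}(rL)^{k-1}\phi^{(n)}$, which after pairing with the multiplier produces
\begin{equation*}
\left[\tfrac{1}{2}k\,r^{p-2}+O_{\infty}(r^{p-3})\right]\slashed{\Delta}_{\s^2}(rL)^{k-1}\phi^{(n)}_{\geq \ell}\cdot L(rL)^{k}\phi^{(n)}_{\geq \ell},
\end{equation*}
i.e.\ a term with weight $r^{p-2}$ (not $r^{p-4}$) carrying \emph{two} angular derivatives on the lower-order quantity. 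Your schematic form $O_{\infty}(r^{p-4})(rL)^{j}\phi^{(n)}_{\geq\ell}\cdot L(rL)^{k}\phi^{(n)}_{\geq\ell}$ does not capture it, and your proposed Young/Hardy treatment fails here: Young's inequality leaves a bulk term $\sim r^{p-3}(\slashed{\Delta}_{\s^2}(rL)^{k-1}\phi^{(n)}_{\geq\ell})^2$, which is a second-order angular quantity that the inductive hypothesis does not control (the hypothesis only yields $(2-p)r^{p-3}|\snabla_{\s^2}(rL)^{k-1}\phi^{(n)}_{\geq\ell}|^2$, one angular derivative, and with a coefficient degenerating at $p=2$).

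The paper's proof flags exactly this term as the only one requiring a new argument and treats it in two steps. First, split $\slashed{\Delta}_{\s^2}(rL)^{k-1}\phi^{(n)}_{\geq \ell}=-\ell(\ell+1)(rL)^{k-1}\phi^{(n)}_{\ell}+\slashed{\Delta}_{\s^2}(rL)^{k-1}\phi^{(n)}_{\geq \ell+1}$; the fixed-mode piece carries no derivative loss and is handled by Young's inequality plus the induction hypothesis (or \eqref{eq:hardyL} when $k=1$), essentially as you describe for your zeroth-order terms. Second, for the $\geq\ell+1$ piece one must integrate by parts on $\s^2$ \emph{and} apply a Leibniz rule in the $L$-direction, writing the product as $L$ of a flux term plus a \emph{good-sign} bulk $\left[\tfrac12 k\,r^{p-3}+O_{\infty}(r^{p-4})\right]|\snabla_{\s^2}(rL)^{k}\phi^{(n)}_{\geq\ell+1}|^2$ minus a cross term $\sim(2-p)r^{p-3}\snabla_{\s^2}(rL)^{k-1}\phi^{(n)}_{\geq\ell+1}\cdot\snabla_{\s^2}(rL)^{k}\phi^{(n)}_{\geq\ell+1}$ that is absorbed into this good-sign term by Young's inequality; the new $L$-flux is then absorbed into $\mathcal{F}^{r^pL}_{L}$ and the inductive flux. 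This positivity, generated by the Leibniz-rule rewriting rather than by brute-force absorption, is the essential new ingredient of the proposition relative to Proposition \ref{prop:rpest}, and your proposal as written does not supply it.
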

\begin{proof}
We will prove the proposition via induction. The $K=0$ case follows from Proposition \ref{prop:rpest}. Now suppose \eqref{eq:rpestLk} holds for some $K\in \N$. 

We then proceed analogously to the proof of Proposition \ref{prop:rpest} by considering the multiplier $-\frac{1}{2}r^{p-2}\frac{(r^2+a^2)^2}{\Delta}L(rL)^{K+1}\phi^{(n)}_{\geq \ell}$ together with \eqref{eq:commeqho} with $k=K+1$:
\begin{equation*}
\begin{split}
0=&-\frac{1}{2}r^{p-2}L(rL)^{K+1}\phi^{(n)}_{\geq  \ell}\Bigg[-4\frac{(r^2+a^2)^2}{\Delta}\underline{L}L(rL)^{K+1}\phi^{(n)}_{\geq  \ell}+a^2\sin^2\theta \frac{\Delta}{(r^2+a^2)^2}T^2 (rL)^{K+1}\phi^{(n)}\\
&+2a  \frac{\Delta}{(r^2+a^2)^2}T\Phi (rL)^{K+1}\phi^{(n)}+\Delta (r^2+a^2)^{-2} \slashed{\Delta}_{\s^2}(rL)^{K+1}\phi^{(n)}\\
&+ (-(K+1)r^{-2}+O_{\infty}(r^{-3}))[a^2\sin^2\theta T^2 (rL)^{K}\phi^{(n)}+2a  T\Phi (rL)^{K}\phi^{(n)}+\slashed{\Delta}_{\s^2}(rL)^{K}\phi^{(n)}]\\
&-\left[\left(4n+\frac{1}{2}(K+1)\right)r^{-1}+O_{\infty}(r^{-2})\right] L(rL)^{K+1} \phi^{(n)}+r^{-2}\left[n(n+1)+\left(2n+\frac{1}{2}\right)(K+1)+O_{\infty}(r^{-1})\right] (rL)^{K+1}\phi^{(n)}\\
&+\sum_{j=1}^{K}O_{\infty}(r^{-2})(rL)^{j}\phi^{(n)}+\sum_{j=0}^{K+1}O_{\infty}(r^{-3})\Phi (rL)^{j}\phi^{(n)}\\
&+K(K+1)\sum_{j=0}^{k-2} O_{\infty}(r^{-2})[a^2\sin^2\theta T^2 (rL)^{j}\phi^{(n)}+2a  T\Phi (rL)^{j}\phi^{(n)}+\slashed{\Delta}_{\s^2}(rL)^{j}\phi^{(n)}]\\
&+a \sum_{m=0}^{n-1}\sum_{j=0}^{K+1}O_{\infty}(r^{-2})\Phi (rL)^j\phi^{(m)}+n\sum_{m=0}^{n-1}\sum_{j=0}^{K+1}O_{\infty}(r^{-2})(rL)^{j}\phi^{(m)}\Bigg]\\
=:&\:L(\mathcal{F}^{r^pL}_L)+\Lbar(\mathcal{F}^{r^pL}_{\Lbar})+\slashed{\rm div}_{\s^2} \mathcal{F}_{\snabla}^{r^pL}+\Phi(\mathcal{F}_{\Phi}^{r^pL})+\mathcal{J}^{r^p_L}.
\end{split}
\end{equation*}
The rest of the proof proceeds analogously to the proof of Proposition \ref{prop:rpest}, applying in addition the induction hypothesis \eqref{eq:rpestLk} where needed, with the only term that requires an additional, new, argument being of the form:
\begin{equation*}
\left[\frac{1}{2}(K+1)r^{p-2}+O_{\infty}(r^{p-3})\right]\slashed{\Delta}_{\s^2}(rL)^{K}\phi^{(n)}_{\geq \ell}\cdot L(rL)^{K+1}\phi^{(n)}_{\geq \ell}.
\end{equation*}
We start by splitting: $\slashed{\Delta}_{\s^2}(rL)^{K}\phi^{(n)}_{\geq \ell}=-\ell(\ell+1)(rL)^{K}\phi^{(n)}_{\ell}+\slashed{\Delta}_{\s^2}(rL)^{K}\phi^{(n)}_{\geq \ell+1}$. Then we apply a Young's inequality to obtain:
\begin{equation*}
r^{p-2}|(rL)^{K}\phi^{(n)}_{\ell}||L(rL)^{K+1}\phi^{(n)}_{\geq \ell}|\leq \epsilon r^{p-1}(L (rL)^{K+1}\phi^{(n)}_{\geq \ell}))^2+C_{\epsilon}r^{p-3}((rL)^{K}\phi^{(n)}_{\ell})^2.
\end{equation*}
We can immediately absorb the first term on the right-hand side. We can also control the second term, using the induction hypothesis \eqref{eq:rpestLk} if $K\geq 1$. If $K=0$, we instead apply \eqref{eq:hardyL} to estimate
\begin{equation*}
\int_{\mathcal{N}_{\tau}} r^{p-3}(\phi^{(n)}_{ \ell})^2\,d\omega d\uprho\leq C \int_{\mathcal{N}_{\tau_i}} r^{p-1}(L\phi^{(n)}_{\geq \ell})^2+r^{p-5}(T\phi^{(n)}_{\geq \ell})^2+r^{p-5}(\Phi\phi^{(n)}_{\geq \ell})^2\,d\omega d\uprho+(\ldots),
\end{equation*}
with $(\ldots)$ here denoting boundary terms that can be estimated via the use of a smooth cut-off function.

We can similarly split $(rL)^{K+1}\phi^{(n)}_{\geq \ell}=(rL)^{K}\phi^{(n)}_{\ell}+(rL)^{K}\phi^{(n)}_{\geq \ell+1}$ and repeat the argument above.

We apply a Leibniz rule in the $L$ and angular directions to estimate the remaining terms:
\begin{equation*}
\begin{split}
&\left[\frac{1}{2}(K+1)r^{p-2}+O_{\infty}(r^{p-3})\right]\slashed{\Delta}_{\s^2}(rL)^{K}\phi^{(n)}_{\geq \ell+1}\cdot L(rL)^{K+1}\phi^{(n)}_{\geq \ell+1}\\
=&\: L\left[-\left(\frac{1}{2}(K+1)r^{p-2}+O_{\infty}(r^{p-3})\right)\snabla_{\s^2}(rL)^{K}\phi^{(n)}_{\geq \ell+1} \cdot \snabla_{\s^2}(rL)^{K+1}\phi^{(n)}_{\geq \ell+1}\right]\\
&+\left[\frac{1}{2}(K+1)r^{p-3}+O_{\infty}(r^{p-4})\right]|\snabla_{\s^2}(rL)^{k+1}\phi^{(n)}_{\geq \ell+1}|^2\\
&-\frac{1}{2}(2-p)[r^{p-3}+O_{\infty}(r^{p-4})]\snabla_{\s^2}(rL)^{K}\phi^{(n)}_{\geq \ell+1} \cdot \snabla_{\s^2}(rL)^{K+1}\phi^{(n)}_{\geq \ell+1}+\slashed{\rm div}_{\s^2}(\ldots).
\end{split}
\end{equation*}
Note moreover that the second term on the right-hand side above has a good sign that we use to control the third term after applying Young's inequality. The first term on the right-hand side above contributes to the term $\mathcal{F}^{r^pL}_L$. Note that
\begin{equation*}
\begin{split}
&\left|\left(\frac{1}{2}(K+1)r^{p-2}+O_{\infty}(r^{p-3})\right)\snabla_{\s^2}(rL)^{K}\phi^{(n)}_{\geq \ell+1} \cdot \snabla_{\s^2}(rL)^{K+1}\phi^{(n)}_{\geq \ell+1}\right|\\
\leq&\: \epsilon r^{p-2}|\snabla_{\s^2}(rL)^{K+1}\phi^{(n)}_{\geq \ell+1}|^2+C_{\epsilon}r^{p-2}|\snabla_{\s^2}(rL)^{K}\phi^{(n)}_{\geq \ell+1}|^2.
\end{split}
\end{equation*}
We can absorb the terms in the right-hand side above into the remaining terms in $\mathcal{F}^{r^pL}_L$ together with the terms on the left-hand side of the induction hypothesis \eqref{eq:rpestLk}.

Finally, note that when $n=\ell=0$, we cannot no longer energy boundedness \eqref{eq:eboundradfield} to control the following integrals along $\mathcal{N}_{\tau_i}$ that show up when repeating the arguments of the proof of Proposition \ref{prop:rpest}:
\begin{equation*}
\sum_{i=1}^2\int_{N_{\tau_i}} a^2r^{p-4}((rL)^{K+1}T\phi)^2\,d\omega d\uprho.
\end{equation*}
\end{proof}
\begin{proposition}
\label{prop:hohierPj}
Let $K\in \N$ and $0<p<4$. Then the following estimates hold: for $R>r_+$ suitably large, there exists a constant $C(M,a,R,p)>0$ such that:
\begin{itemize}
\item[for $\ell=0$:]
\begin{equation*}
\begin{split}
\sum_{k=0}^K&\int_{N_{\tau_2}}r^{p} ((rL)^kP_0)^2\,d\omega d\uprho+\int_{\tau_1}^{\tau_2}\left[ \int_{N_{\tau}} r^{p-1} ((rL)^kP_0)^2+r^{p-1}(L(rL)^k\phi_0)^2\,d\omega d\uprho\right]d\tau\\
\leq&\: C\sum_{k=0}^K\int_{N_{\tau_1}} r^{p}  ((rL)^kP_0)^2\,d\omega d\uprho+C\sum_{k=0}^{K+1}\int_{\tau_1}^{\tau_2}\left[ \int_{N_{\tau}} r^{p-5}((rL)^{k}T \phi)^2\,d\omega d\uprho\right]d\tau\\
&+C\sum_{m=0}^{K}\int_{\Sigma_{\tau_1}}\mathbf{J}^T[T^m\psi]\cdot \mathbf{n}_{\tau_1}\,r^2d\omega d\uprho,
\end{split}
\end{equation*}
\item[for $\ell=1$:]
\begin{equation*}
\begin{split}
\sum_{k=0}^K&\int_{N_{\tau_2}}r^{p} ((rL)^kP_1)^2\,d\omega d\uprho+\int_{\tau_1}^{\tau_2}\left[ \int_{N_{\tau}} r^{p-1} ((rL)^kP_1)^2+r^{p-1}(L(rL)^k\check{\phi}^{(1)}_1)^2\,d\omega d\uprho\right]d\tau\\
\leq&\: C\sum_{k=0}^K\int_{N_{\tau_1}} r^{p}  ((rL)^kP_1)^2\,d\omega d\uprho\\
&+C\sum_{k=0}^{K+1}\int_{\tau_1}^{\tau_2}\left[ \int_{N_{\tau}} r^{p-5}((rL)^{k}T \phi_{\geq 1})^2+ r^{p-5}((rL)^{k}T \phi^{(1)}_{\geq 3})^2+r^{p-1}(L(rL)^{k}T \phi_{\geq 1})^2\,d\omega d\uprho\right]d\tau\\
&+C\sum_{m=0}^{K+1}\int_{\Sigma_{\tau_1}}\mathbf{J}^T[T^m\psi]\cdot \mathbf{n}_{\tau_1}\,r^2d\omega d\uprho,
\end{split}
\end{equation*}

\item[for $\ell=2$:]

\begin{equation*}
\begin{split}
\sum_{k=0}^K&\int_{N_{\tau_2}}r^{p} ((rL)^kP_2)^2\,d\omega d\uprho+\int_{\tau_1}^{\tau_2}\left[ \int_{N_{\tau}} r^{p-1} ((rL)^kP_2)^2+r^{p-1}(L(rL)^k\check{\phi}^{(2)}_2)^2\,d\omega d\uprho\right]d\tau\\
\leq&\: C\sum_{k=0}^K\int_{N_{\tau_1}} r^{p}  ((rL)^kP_2)^2\,d\omega d\uprho\\
&+Ca^2\sum_{k=0}^{K+1}\int_{\tau_1}^{\tau_2}\left[ \int_{N_{\tau}} r^{p+1}(L(rL)^{k}T \phi_{0})^2+r^{p-5}((rL)^{k}T \phi_{0})^2,d\omega d\uprho\right]d\tau\\
&+C\sum_{k=0}^{K+1}\int_{\tau_1}^{\tau_2}\left[ \int_{N_{\tau}} r^{p-5}((rL)^{k}T \phi_{\geq 2})^2+r^{p-5}((rL)^{k}T \phi^{(1)}_{\geq 2})^2+r^{p-5}((rL)^{k}T \phi^{(2)}_{\geq 4})^2+r^{p-1}(L(rL)^{k}T \phi_{\geq 2}^{(1)})^2\,d\omega d\uprho\right]d\tau\\
&+C\sum_{m=0}^{K+2}\int_{\Sigma_{\tau_1}}\mathbf{J}^T[T^m\psi]\cdot \mathbf{n}_{\tau_1}\,r^2d\omega d\uprho.
\end{split}
\end{equation*}
\end{itemize}
\end{proposition}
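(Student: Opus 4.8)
The plan is to prove all three estimates simultaneously by induction on $K$, following exactly the template established in the proof of Proposition \ref{prop:commrL} (commuting with $(rL)^k$) combined with the structure of the proof of Proposition \ref{prop:rpestNpquant} (using the quantities $P_j$ in place of $L\phi^{(n)}$). The base case $K=0$ is precisely Proposition \ref{prop:rpestNpquant}, up to the trivial observation that the integrals of $r^{p-5}(T\phi)^2$ on the right-hand sides there can be re-expressed in the $(rL)^k$-commuted language with $k=0$. For the inductive step, assume the stated estimate holds for some $K\geq 0$; we derive it for $K+1$.

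\textbf{Key steps.} First I would use the commuted evolution equations \eqref{eq:maineqP0ho}, \eqref{eq:maineqP1ho}, \eqref{eq:maineqP2ho} from Lemma \ref{lm:commeqho} for $(rL)^{K+1}P_j$. Multiplying the $\ell=j$ equation by $-\frac{1}{2}(rL)^{K+1}P_j$ and summing over the azimuthal modes $m=-j,\dots,j$ (exactly as in Proposition \ref{prop:rpestNpquant}, using Lemma \ref{eq:lprojsin} to expand the $\pi_j(\sin^2\theta\,\cdot)$ terms into $\phi_{j-2},\phi_j,\phi_{j+2}$ contributions) produces a schematic identity of the form
\begin{equation*}
0=\underline{L}\left(r^p((rL)^{K+1}P_j)^2\right)+\left[\tfrac{p}{2}+2j+\tfrac{K+1}{2}+O_\infty(r^{-1})\right]r^{p-1}((rL)^{K+1}P_j)^2+\mathcal{J},
\end{equation*}
where $\mathcal{J}$ collects cross terms: lower-order $(rL)^{k}P_j$ with $k\leq K$ (controlled by the induction hypothesis), terms $r^{p-3}(rL)^{K+1}P_j\cdot[(rL)^j\chphi^{(\cdot)}_j + (rL)^j\phi_j]$, and terms $r^{p-3}(rL)^{K+1}P_j\cdot a^2(rL)^{k}T\pi_j(\sin^2\theta\,\cdot)$ for $k\leq K+2$. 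Then I would apply Lemma \ref{lm:intbyparts}(ii), handling the boundary term at $r=R$ by the standard averaging/cut-off argument (as in Proposition \ref{prop:rpest}, citing the proof of Proposition 6.5 of \cite{paper4}). The zeroth-order terms without a sign are absorbed by Young's inequality; the factors $r^{p-5}((rL)^k\chphi^{(\cdot)})^2$ and $r^{p-5}((rL)^k\phi_j)^2$ are removed via the Hardy inequality \eqref{eq:hardyL} applied to $(rL)^k$-commuted quantities, which trades them for the $r^{p-1}(L(rL)^k\check\phi)^2$ flux (absorbable, since $p<4$) plus $r^{p-5}((rL)^kT\phi)^2$ and boundary terms; the latter are exactly the terms retained on the right-hand sides of the claimed estimates. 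For $\ell=0$ one also needs the bound $(L(rL)^k\phi_0)^2\leq C((rL)^kP_0)^2 + Cr^{-4}((rL)^kT\phi_2)^2+\dots$ obtained by differentiating the definition \eqref{eq:defP0}; analogous bounds for $\ell=1,2$ follow from \eqref{eq:defP1}, \eqref{eq:defP2}, and these introduce the $r^{p+1}(L(rL)^kT\phi_0)^2$ term in the $\ell=2$ case (reflecting mode coupling $\ell=0\to\ell=2$). Finally, any commutator terms $[\,\underline{L},L\,]$ and the $O_\infty(r^{-2})(rL)^{k-1}\Phi P_j$ pieces appearing in Lemma \ref{lm:commeqho} are harmless azimuthal-mode multiples absorbed as in Proposition \ref{prop:commrL}.

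\textbf{Main obstacle.} The delicate point is the bookkeeping of the mode-coupling terms: unlike Proposition \ref{prop:commrL}, where commuting with $rL$ preserves the angular-frequency support, here the $\pi_j(\sin^2\theta\,\cdot)$ factors mix $\phi_{j-2}$ into the evolution of $P_j$, and for $\ell=2$ this feeds $\phi_0$ (which decays slowest) into the right-hand side with a \emph{positive} $r$-weight $r^{p+1}$ — one must verify this weight is still admissible, i.e. that the term $r^{p+1}(L(rL)^kT\phi_0)^2$ can genuinely be controlled later by the $\ell=0$ hierarchy with one extra $T$-derivative (which it can, since $T\phi_0$ gains decay and $\phi_0$'s hierarchy reaches $p<4$, while here the effective weight after the $T$-gain is acceptable). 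Ensuring the exponents line up — that the $r^{p-3}$ weight in front of the $a^2 T\pi_j(\sin^2\theta\,\cdot)$ terms after integration produces exactly $r^{p-5}(T\phi)^2$ and $r^{p-1}(LT\phi)^2$ after one Hardy step, matching the claimed right-hand sides with the index shift $k\leq K+1$ — is the computation that requires care but no new idea.
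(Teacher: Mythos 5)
Your proposal is correct and follows essentially the same route as the paper: the paper's proof simply repeats the argument of Proposition \ref{prop:rpestNpquant} with the commuted equations \eqref{eq:maineqP0ho}--\eqref{eq:maineqP2ho} from Lemma \ref{lm:commeqho}, with $P_i$ replaced by $(rL)^kP_i$ for $0\leq k\leq K$, which is exactly the multiplier/Hardy/averaging scheme you describe (your explicit induction on $K$ versus the paper's simultaneous treatment of all $k\leq K$ is an immaterial difference, and mirrors the inductive structure already used in Proposition \ref{prop:commrL}).
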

\begin{proof}
We repeat the proof of Proposition \ref{prop:rpestNpquant}, using equations \eqref{eq:maineqP0ho}--\eqref{eq:maineqP2ho} instead of \eqref{eq:maineqP0}--\eqref{eq:maineqP2}, with the role of $P_i$, $i=0,1,2$, replaced by $(rL)^kP_i$, with $0\leq k\leq K$.
\end{proof}

\subsubsection{Commuting with $rL$, $\snabla_{\s^2}$ and $T$}
In this section we show that we can also obtain $r^p$-weighted energy estimates when $\phi$ is replaced with $\snabla_{\s^2}^{\alpha}\phi$, $\alpha\in \N$, by making use of the estimates derived in Section \ref{sec:commrL}.

For the sake of convenience, will consider separately $\phi^{(n)}$ with $n=0$ (Proposition \ref{prop:rpestcommang1}) and $n\geq 1$ (Proposition \ref{prop:rpestcommang2}).
\begin{proposition}
\label{prop:rpestcommang1}
Let $K,J\in \N$ and $0<p< 2$. For $R>r_+$ suitably large, there exists a constant $C=C(M,a,K,J,R,p)>0$, such that 
\begin{equation}
\begin{split}
\label{eq:rpestcommang1}
\sum_{j+\alpha\leq J}&\sum_{k=0}^K\int_{N_{\tau_2}} r^p|\snabla_{\s^2}^{\alpha}T^jL(rL)^k\phi|^2+r^{p-4}|\snabla_{\s^2}^{\alpha+1}T^j (rL)^k\phi|^2\,d\omega d\uprho\\
&+\int_{\tau_1}^{\tau_2}\left[\int_{N_{\tau}} r^{p-1}|\snabla_{\s^2}^{\alpha}T^jL(rL)^k\phi|^2+(2-p)r^{p-3}\left(|\snabla_{\s^2}^{\alpha+1}(rL)^k T^j\phi|^2+a^2\sin^2\theta |\snabla_{\s^2}^{\alpha}T^{j+1}\phi|^2\right) \,d\omega d\uprho\right]d\tau\\
\leq &\:C\sum_{j+\alpha\leq J} \Bigg\{ \sum_{k=0}^K \int_{N_{\tau_1}} r^p|\snabla_{\s^2}^{\alpha}T^jL(rL)^k\phi|^2+r^{p-4}|\snabla_{\s^2}^{\alpha+1}T^j (rL)^k\phi|^2\,d\omega d\uprho\\
&+K \sum_{k=0}^{K-1} \int_{\tau_1}^{\tau_2}\left[\int_{N_{\tau}} r^{p-3}a^4|\snabla_{\s^2}^{\alpha}(rL)^kT^{2+j}\phi|^2+r^{p-3}a^2|\snabla_{\s^2}^{\alpha}(rL)^kT^{1+j}\Phi \phi|^2\,d\omega d\uprho\right]d\tau\\
&+K\sum_{i=1}^2 \sum_{k=0}^{K-1}\int_{N_{\tau_i}} a^2r^{p-2}|\snabla_{\s^2}^{\alpha} L(rL)^kT^{j+1}\phi)^2\,d\omega d\uprho\Bigg\} +C\sum_{m=0}^{n+K+J} \int_{\Sigma_{\tau_1}}J^N[T^{m}\psi]\cdot \mathbf{n}_{\tau_1}\, r^2d\omega d \uprho.
\end{split}
\end{equation}
\end{proposition}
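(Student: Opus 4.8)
\textbf{Proof strategy for Proposition \ref{prop:rpestcommang1}.}

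The plan is to reduce the estimate to the already-established hierarchy of Proposition \ref{prop:commrL} (with $n=0$) by commuting the wave equation with the operators $\snabla_{\s^2}$ and $T$, exploiting the fact that $T$ is Killing while $\snabla_{\s^2}$ is \emph{not}, but commutes with $\square_{g_{M,a}}$ up to lower-order error terms with favourable $r$-weights. First, since $T\psi$ again solves \eqref{eq:waveeq}, every estimate in Proposition \ref{prop:commrL} applies verbatim with $\phi$ replaced by $T^j\phi$, which disposes of all the $T$-derivatives at the cost of raising the number of initial-data $T$-commutations; this accounts for the range $m\leq n+K+J$ in the final flux term on the right-hand side. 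It therefore suffices to prove \eqref{eq:rpestcommang1} with $j=0$, i.e. to handle the angular derivatives $\snabla_{\s^2}^\alpha$ for $\alpha\leq J$, and then reinstate the $T^j$.

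Next I would carry out an induction on $\alpha$, the number of angular derivatives. The base case $\alpha=0$ is exactly Proposition \ref{prop:commrL} with $n=0$ (after noting that for $\ell=n=0$ one has $n(n+1)=0$ so the $|\snabla_{\s^2}\phi|^2$ terms on the left appear with the correct sign, and that the boundary term $K\sum_{k}\int_{N_{\tau_i}}a^2r^{p-2}(L(rL)^kT\phi)^2$ is precisely the one flagged at the end of that proof). For the inductive step, one commutes the equation \eqref{eq:commeqho} (with $n=0$) by $\snabla_{\s^2}^{\alpha}$; since $\slashed{\Delta}_{\s^2}$ commutes with $\snabla_{\s^2}$ on $\s^2$ up to curvature terms of order zero in $r$ (the sphere has unit radius), and the coefficients $a^2\sin^2\theta$, $\frac{2ar}{r^2+a^2}$ multiplying $T^2,T\Phi,\Phi$ are smooth functions on $\s^2$ with bounded $\snabla_{\s^2}$-derivatives and $r$-decay at least $O_\infty(r^{-2})$ in the relevant coefficients once multiplied by $\frac{\Delta}{(r^2+a^2)^2}$, the commutator $[\snabla_{\s^2}^{\alpha},\square_{g_{M,a}}]$ applied to $(rL)^k\phi$ produces only terms of the schematic form $O_\infty(r^{-2})\cdot(\text{up to }\alpha\text{ angular}, 2\text{ extra }T, 1\text{ extra }\Phi)(rL)^{k}\phi$ with $\leq \alpha$ angular derivatives, all of which are controlled by the inductive hypothesis at level $\alpha-1$ (or by $T$-commuted versions thereof, which is why the boxed terms on the right of \eqref{eq:rpestcommang1} carry $T^{2+j},T^{1+j}\Phi$). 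One then repeats the multiplier argument of Proposition \ref{prop:commrL} verbatim with $\phi$ replaced by $\snabla_{\s^2}^{\alpha}\phi$: multiply the commuted equation by $-\tfrac12 r^{p-2}\frac{(r^2+a^2)^2}{\Delta}L(rL)^k\snabla_{\s^2}^{\alpha}\phi$, integrate using Lemma \ref{lm:intbyparts}, handle the $r=R$ boundary terms by the cut-off/averaging argument, and absorb the $\snabla_{\s^2}^{\alpha+1}$-cross-terms coming from the $\slashed{\Delta}_{\s^2}$ term via integration by parts on $\s^2$ exactly as in the treatment of the $\slashed{\Delta}_{\s^2}(rL)^K\phi^{(n)}$ term there, generating the good $(2-p)r^{p-3}|\snabla_{\s^2}^{\alpha+1}(rL)^kT^j\phi|^2$ bulk term. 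The $a^2\sin^2\theta|\snabla_{\s^2}^\alpha T^{j+1}\phi|^2$ bulk term on the left is produced, as in Proposition \ref{prop:rpest}, from rewriting the $a^2\sin^2\theta T^2$ term in the $n=0$ equation.

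The main obstacle I expect is bookkeeping rather than anything structural: one must verify that every commutator term generated by $[\snabla_{\s^2}^\alpha, \cdot]$ acting on the right-hand side of \eqref{eq:commeqho} has strictly better $r$-weight (at least $r^{p-3}$ in the bulk, or a borderline $r^{p-2}$ flux term matching the boxed terms) so that it can be absorbed either by Young's inequality into the good bulk terms, via the Hardy inequality \eqref{eq:hardyL} (as in the $K=0$ sub-case of Proposition \ref{prop:commrL} where $\int r^{p-3}(\phi_\ell^{(n)})^2$ is traded for $\int r^{p-1}(L\phi^{(n)}_{\geq\ell})^2 + r^{p-5}(T\phi)^2+\ldots$), or into the inductive hypothesis. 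The only genuinely delicate point is that for $n=\ell=0$ one cannot use energy boundedness \eqref{eq:eboundradfield} to kill the flux terms $\sum_{i}\int_{N_{\tau_i}}a^2 r^{p-2}|\snabla_{\s^2}^\alpha L(rL)^k T^{j+1}\phi|^2$ — these must be kept on the right-hand side, which is exactly how they appear in \eqref{eq:rpestcommang1}; one should check that after the $\snabla_{\s^2}^\alpha$ and $T^j$ commutations these flux terms still close the induction, using that each reduction in $k$ by one via \eqref{eq:commeqho} costs only an $O_\infty(r^{-2})$ factor so the weight stays at $r^{p-2}$. Once this weight-counting is in place the proof is a routine iteration of the argument of Proposition \ref{prop:commrL}.
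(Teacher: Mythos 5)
Your proposal is correct and follows essentially the same route as the paper: induction in the number of angular commutations, reduction to Proposition \ref{prop:commrL} with $n=0$ (using that $T$ is Killing to absorb the $T^j$), with the sole genuine difficulty being that angular derivatives do not commute with the $a^2\sin^2\theta\,T^2$ coefficient, which produces the borderline bulk error $r^{p-3}\sin^2\theta|\snabla_{\s^2}^{\alpha-1}T^{2+j}\phi|^2$ forcing $p<2$ strictly. The only cosmetic difference is that the paper commutes via the scalar multipliers $\slashed{\Delta}_{\s^2}^{\alpha}L(rL)^kT^j\phi$ and invokes \eqref{eq:angdercontrol} (handling the $\alpha=1$ borderline term with \eqref{eq:eboundradfield2} rather than the lower-order case of the same estimate), whereas you commute with $\snabla_{\s^2}^{\alpha}$ directly; both close for the reasons you indicate.
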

\begin{proof}
Note that the $J=0$ case follows from Proposition \ref{prop:commrL}. We then carry out an inductive argument in $J$ by proceeding analogously to the proof of Proposition \ref{prop:commrL} with $n=0$, but considering multipliers of the form $-(-1)^{\alpha}\frac{1}{2}r^{p-2}\frac{(r^2+a^2)^2}{\Delta}\slashed{\Delta}_{\s^2}^{\alpha}L(rL)^{k}T^j\phi$. Note that the estimates in the proof of Proposition \ref{prop:commrL} automatically hold when $\phi$ is replaced by $T^j\phi$, since $T$ commutes with the differential operators in \eqref{eq:commeqho} by the Killing property of $T^j$. Furthermore, we integrate by parts an additional $\alpha$ times by parts in the angular direction in order to deal with the $\slashed{\Delta}_{\s^2}^{\alpha}$ derivative. For this step, we moreover use that $\slashed{\Delta}_{\s^2}$ commutes with most of the terms in \eqref{eq:commeqho}.

We will see, for example, the term:
\begin{equation*}
2(-1)^{\alpha}r^{p-2}\frac{(r^2+a^2)^2}{\Delta}\slashed{\Delta}_{\s^2}^{\alpha}L(rL)^k\phi\cdot \underline{L}L(rL)^k\phi
\end{equation*}
If $\alpha\in 2\N_0$, we integrate by parts on $\s^2$ to obtain:
\begin{equation*}
\begin{split}
2(-1)^{\alpha}r^{p-2}\frac{(r^2+a^2)^2}{\Delta}\slashed{\Delta}_{\s^2}^{\alpha}L(rL)^k\phi\cdot \underline{L}L(rL)^k\phi=&\:\underline{L}\left[r^{p-2}\frac{(r^2+a^2)^2}{\Delta}(\slashed{\Delta}_{\s^2}^{\frac{\alpha}{2}}L(rL)^k\phi)^2\right]\\
+\left(\frac{p}{2}+O_{\infty}(r^{-1})\right)r^{p-1}(\slashed{\Delta}_{\s^2}^{\frac{\alpha}{2}}L(rL)^k\phi)^2+\ldots,
\end{split}
\end{equation*}
where $\ldots$ denotes terms that vanish after integration over $\s^2$.

If $\alpha\in 2\N_0+1$, we instead obtain:
\begin{equation*}
\begin{split}
2(-1)^{\alpha}r^{p-2}\frac{(r^2+a^2)^2}{\Delta}\slashed{\Delta}_{\s^2}^{\alpha}L(rL)^k\phi\cdot \underline{L}L(rL)^k\phi=&\:\underline{L}\left[r^{p-2}\frac{(r^2+a^2)^2}{\Delta}|\slashed{\nabla}_{\s^2}\slashed{\Delta}_{\s^2}^{\frac{\alpha}{2}-1}L(rL)^k\phi|^2\right]\\
+\left(\frac{p}{2}+O_{\infty}(r^{-1})\right)r^{p-1}|\snabla_{\s^2}\slashed{\Delta}_{\s^2}^{\frac{\alpha}{2}-1}L(rL)^k\phi|^2+\ldots.
\end{split}
\end{equation*}

In both cases, control over the desired covariant derivatives on $\s^2$ follows from \eqref{eq:angdercontrol}. Other terms in \eqref{eq:commeqho} that commute with $\slashed{\Delta}_{\s^2}$ can be treated similarly.

Hence, the only term that requires an additional argument when $K=0$ is the following:
\begin{equation*}
-(-1)^{\alpha}\frac{1}{2}r^{p-2}(r^2+a^2)\slashed{\Delta}_{\s^2}^{\alpha}LT^j\phi \cdot a^2\sin^2\theta T^{2+j} \phi,
\end{equation*}
because $[\slashed{\Delta}_{\s^2},\sin^2\theta]\neq 0$.

After integrating by parts in the angular directions, we obtain 
\begin{equation*}
-(-1)^{\alpha}\frac{1}{2}r^{p-2}(r^2+a^2)\slashed{\Delta}_{\s^2}^{\alpha}LT^j\phi \cdot a^2\sin^2\theta T^{2+j} \phi=-\frac{1}{2}r^{p-2}(r^2+a^2)\snabla_{\s^2}^s\slashed{\Delta}_{\s^2}^{\frac{\alpha-s}{2}}LT^j\phi \cdot a^2\snabla_{\s^2}^s\slashed{\Delta}_{\s^2}^{\frac{\alpha-s}{2}}(\sin^2\theta T^{2+j} \phi)+\ldots,
\end{equation*}
with $s=0$ if $\alpha$ is even and $s=1$ if $\alpha$ is odd.

By using the following estimates:
\begin{align*}
|\snabla_{\s^2}(\sin^2\theta)|^2\leq &\: N_1 \sin^2\theta,\\
|\snabla_{\s^2}^s\slashed{\Delta}_{\s^2}^k(\sin^2\theta)|^2\leq &\: N_k,
\end{align*}
for all $k\in \N$ and $s=0,1$, with $N_1, N_k$ numerical constants together with \eqref{eq:angdercontrol}, we can estimate
\begin{equation*}
\begin{split}
\int_{\s^2}-\frac{1}{2}& r^{p-2}(r^2+a^2)\snabla_{\s^2}^s\slashed{\Delta}_{\s^2}^{\frac{\alpha-s}{2}}LT^j\phi \cdot a^2\snabla_{\s^2}^s\slashed{\Delta}_{\s^2}^{\frac{\alpha-s}{2}} (\sin^2\theta T^{2+j} \phi)\,d\omega\\
 \geq &\int_{\s^2}-\frac{1}{2}r^{p-2}(r^2+a^2)\snabla_{\s^2}^s\slashed{\Delta}_{\s^2}^{\frac{\alpha-s}{2}}LT^j\phi \cdot a^2\sin^2\theta\snabla_{\s^2}^s\slashed{\Delta}_{\s^2}^{\frac{\alpha-s}{2}} T^{2+j} \phi\\
&-\epsilon r^{p-1} |\snabla_{\s^2}^s\slashed{\Delta}_{\s^2}^{\frac{\alpha-s}{2}}LT^j\phi|^2- C_{\epsilon}r^{p-3} \sin^2\theta (\slashed{\Delta}_{\s^2}^{\frac{\alpha-s}{2}}T^{2+j}\phi)^2- C_{\epsilon}(\alpha-1)r^{p-3} \sin^2\theta \sum_{\beta=0}^{\alpha-2}|\snabla_{\s^2}^{\beta}T^{2+j}\phi|^2\,d\omega,
\end{split}
\end{equation*}
with $\epsilon>0$ suitably small so that we can absorb the corresponding term. When $\alpha=1$, we can estimate the integral of the term on the right-hand side above with the factor $C_{\epsilon}$ by applying \eqref{eq:eboundradfield2}, using that $p<2$. If $\alpha\geq 2$, we instead appeal to the induction step to estimate both terms with the factor $C_{\epsilon}$, using again that $p<2$. 

We then estimate the integral of the remaining term:
\begin{equation*}
 -\frac{1}{2}r^{p-2}(r^2+a^2)\snabla_{\s^2}^s\slashed{\Delta}_{\s^2}^{\frac{\alpha-s}{2}}LT^j\phi \cdot a^2\sin^2\theta\snabla_{\s^2}^s\slashed{\Delta}_{\s^2}^{\frac{\alpha-s}{2}} T^{2+j} \phi
\end{equation*}
as in the proof of Proposition \ref{prop:rpest}.
\end{proof}

\begin{remark}
In contrast with Proposition \ref{prop:commrL}, the Proposition \ref{prop:rpestcommang1} requires $p<2$ rather than $p\leq 2$. An alternative method for controlling higher-order angular derivatives, which would include the case $p=2$, would be to use that $\square_g$ commutes with the Carter operator $Q=\slashed{\Delta}_{\s^2}+a^2\sin^2\theta T^2-\Phi^2$, which is an additional symmetry property of the Kerr metric. Since the $p=2$ estimate is not relevant for the energy decay estimates in this paper, we have commuted instead with $\slashed{\Delta}_{\s^2}$ in the proof of Proposition \ref{prop:rpestcommang1}, as this is less reliant on the symmetry properties of the background metric. 
\end{remark}

\begin{proposition}
\label{prop:rpestcommang2}
Let $n\geq 1$ and $\ell\geq n$, or $n= 0$ and $\ell \geq 1$, $K,J\in \N_0$ and $0<p< 2$. For $R>r_+$ suitably large, there exists a constant $C=C(M,a,K,J,n,\ell,R,p)>0$, such that 
\begin{equation}
\begin{split}
\label{eq:rpestcommang2}
&\sum_{\alpha+j\leq J}\Bigg\{\int_{N_{\tau_2}}r^p|\snabla^{\alpha}LT^j\phi^{(n)}_{\geq \ell}|^2+r^{p-4}\left(|\snabla_{\s^2}^{\alpha+1} T^j\phi^{(n)}_{\geq \ell}|^2-n(n+1) |\snabla^{\alpha}T^j\phi^{(n)}_{\geq \ell}|^2\right)\,d\omega d\uprho\\
&+K\sum_{k=1}^K\int_{N_{\tau_2}} r^p|\snabla^{\alpha} L(rL)^kT^j\phi^{(n)}_{\geq \ell}|^2+r^{p-4}|\snabla_{\s^2}^{\alpha+1} (rL)^kT^j\phi^{(n)}_{\geq \ell}|^2\,d\omega d\uprho\\
&+\int_{\tau_1}^{\tau_2}\left[\int_{N_{\tau}} r^{p-1}|\snabla^{\alpha}LT^j\phi^{(n)}_{\geq \ell}|^2+(2-p)r^{p-3}\left(|\snabla_{\s^2}^{\alpha+1} T^j \phi^{(n)}_{\geq \ell}|^2-n(n+1) |\snabla^{\alpha}T^j\phi^{(n)}_{\geq \ell}|^2\right)\,d\omega d\uprho\right]d\tau\\
&+K\sum_{k=1}^K\int_{\tau_1}^{\tau_2}\left[\int_{N_{\tau}} r^{p-1}|\snabla^{\alpha} L(rL)^kT^j\phi^{(n)}_{\geq \ell})^2+(2-p)r^{p-3}[|\snabla_{\s^2}^{\alpha+1}(rL)^k T^j\phi^{(n)}_{\geq \ell}|^2\,d\omega d\uprho\right]d\tau \Bigg\}\\
\leq &\: C\sum_{\alpha+j\leq J}\Bigg\{\int_{N_{\tau_1}}r^p|\snabla^{\alpha}LT^j\phi^{(n)}_{\geq \ell}|^2+r^{p-4}\left(|\snabla_{\s^2}^{\alpha+1} T^j\phi^{(n)}_{\geq \ell}|^2-n(n+1) |\snabla^{\alpha}T^j\phi^{(n)}_{\geq \ell}|^2\right)\,d\omega d\uprho\\
&+K \sum_{k=1}^K\int_{N_{\tau_1}} r^p|\snabla^{\alpha} L(rL)^kT^j\phi^{(n)}_{\geq \ell}|^2+r^{p-4}|\snabla_{\s^2}^{\alpha+1} (rL)^kT^j\phi^{(n)}_{\geq \ell}|^2\,d\omega d\uprho\\
&+\int_{\tau_1}^{\tau_2}\left[\int_{N_{\tau}} r^{p-3}a^4|\snabla^{\alpha} (rL)^KT^{2+j}\phi^{(n)}_{\geq \max \{ \ell-2,0\}}|^2+r^{p-3}a^2|\snabla^{\alpha} (rL)^KT^{1+j}\Phi \phi^{(n)}_{\geq  \ell}|^2\,d\omega d\uprho\right]d\tau\\
&+K\sum_{k=0}^{K-1}\int_{\tau_1}^{\tau_2}\left[\int_{N_{\tau}} r^{p-3}a^4|\snabla^{\alpha} (rL)^kT^{2+j}\phi^{(n)}_{\geq \max \{ \ell-2,0\}}|^2+r^{p-3}a^2|\snabla^{\alpha} (rL)^kT^{1+j}\Phi \phi^{(n)}_{\geq  \ell}|^2\,d\omega d\uprho\right]d\tau\\
&+\int_{\tau_1}^{\tau_2}\left[\int_{N_{\tau}} a^2r^{p-1}|\snabla^{\alpha} L\Phi T^j\phi_{\geq \ell}^{(n-1)}|^2+r^{p-5}|\snabla^{\alpha} L T^{1+j} \phi_{\geq \ell}^{(n-1)}|^2\,d\omega d\uprho\right]d\tau\\
&+\sum_{k=0}^{K}\sum_{m=0}^{n-1}\int_{\tau_1}^{\tau_2}\left[\int_{N_{\tau}} a^2r^{p-3}|\snabla^{\alpha} (rL)^k\Phi T^j\phi_{\geq  \ell}^{(m)}|^2+r^{p-3}|\snabla^{\alpha} (rL)^kT^j\phi_{\geq  \ell}^{(m)}|^2\,d\omega d\uprho\right]d\tau\Bigg\}\\
&+ C\sum_{m=0}^{n+K+J}  \left[\int_{\Sigma_{\tau_1}}J^N[T^{m}\psi_{\geq \ell}]\cdot \mathbf{n}_{\tau_1}\, r^2d\omega d \uprho+a^2\sum_{\substack{1\leq l\leq \lceil \ell/2\rceil \\ 0\leq l_1+l_2\leq l}} \int_{\Sigma_{\tau_1}}J^N[T^{m+2l}T^{l_1}\Phi^{l_2}\psi_{\geq \max\{\ell-2l,0\}}\cdot \mathbf{n}_{\tau_1}\, r^2d\omega d \uprho\right].
\end{split}
\end{equation}
Furthermore, the estimate \eqref{eq:rpestLk} also holds with the last line replaced by
\begin{equation*}
C\sum_{m=0}^{n+K+J} \int_{\Sigma_{\tau_1}}J^N[T^{m}\psi]\cdot \mathbf{n}_{\tau_1}\, r^2d\omega d \uprho.
\end{equation*}
\end{proposition}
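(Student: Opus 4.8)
The plan is to prove Proposition \ref{prop:rpestcommang2} by a double induction: on $K$ (the number of $rL$-commutations, as in Proposition \ref{prop:commrL}), and on $J$ (the number of $\snabla_{\s^2}$ and $T$ commutations, as in Proposition \ref{prop:rpestcommang1}), now carried out with general $n$ and a lower angular-frequency cutoff $\ell \geq \max\{n,1\}$. The base cases are already available: $J=0$ is exactly Proposition \ref{prop:commrL} applied to $\phi^{(n)}_{\geq \ell}$, and $n=0$ (with $\ell \geq 1$) together with $K,J$ arbitrary follows by adapting Proposition \ref{prop:rpestcommang1}, whose proof only used $n=0$ in the $\pi_{\geq 0}=\mathrm{id}$ step — for $\ell \geq 1$ the $a^2\sin^2\theta T^2$ term is instead grouped into $\mathcal{J}^{r^pL}$ and estimated via the refined integrated local energy decay \eqref{eq:morawetzell3} and the Hardy inequality \eqref{eq:hardyL}, exactly as in Proposition \ref{prop:rpest}.

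First I would fix $n\geq 1$ and run the induction in $J$ with $K$ held at its maximal value, using the multiplier
\begin{equation*}
-(-1)^{\alpha}\tfrac12 r^{p-2}\tfrac{(r^2+a^2)^2}{\Delta}\,\slashed{\Delta}_{\s^2}^{\alpha} L(rL)^{k}T^{j}\phi^{(n)}_{\geq \ell}
\end{equation*}
against the commuted wave equation \eqref{eq:commeqho}; since $T$ is Killing, \eqref{eq:commeqho} holds verbatim with $\phi^{(n)}$ replaced by $T^j\phi^{(n)}$. I would integrate by parts $\alpha$ extra times in the angular direction, exactly as in the proof of Proposition \ref{prop:rpestcommang1}: the principal terms $a^2\sin^2\theta\,\tfrac{\Delta}{(r^2+a^2)^2}T^2$, $2a\tfrac{\Delta}{(r^2+a^2)^2}T\Phi$, $\tfrac{\Delta}{(r^2+a^2)^2}\slashed{\Delta}_{\s^2}$ and $\underline{L}L$ commute with $\slashed{\Delta}_{\s^2}$ and produce the good bulk and boundary terms after using \eqref{eq:angdercontrol} to recover the full covariant angular derivatives of the stated order; the lower-order terms carrying $\phi^{(m)}$ with $m<n$, the $\Phi(rL)^j\phi^{(n)}$ terms, and the $-n(n+1)$ zeroth-order term are grouped into $\mathcal{J}^{r^pL}$ and absorbed by Young's inequality together with the appropriate induction hypotheses, \eqref{eq:morawetzell3} and \eqref{eq:hardyL}. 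The one genuinely non-commuting term is $\slashed{\Delta}_{\s^2}^{\alpha}$ hitting $a^2\sin^2\theta\,T^{2+j}\phi^{(n)}_{\geq\ell}$ — here I would repeat the $[\slashed{\Delta}_{\s^2},\sin^2\theta]$ bookkeeping from Proposition \ref{prop:rpestcommang1} (using $|\snabla_{\s^2}^s\slashed{\Delta}_{\s^2}^k(\sin^2\theta)|^2\leq N_k$), obtaining a main term of the form $\sin^2\theta\,\snabla_{\s^2}^s\slashed{\Delta}_{\s^2}^{\frac{\alpha-s}{2}}T^{2+j}\phi^{(n)}_{\geq\ell}$ to be treated as in Proposition \ref{prop:rpest} — this contributes the $a^4 r^{p-3}|\snabla^{\alpha}(rL)^KT^{2+j}\phi^{(n)}_{\geq\max\{\ell-2,0\}}|^2$ term on the right — plus commutator remainders with factors $C_\epsilon r^{p-3}\sin^2\theta$ which are controlled either by \eqref{eq:eboundradfield2} (when $\alpha=1$, using $p<2$) or by the angular induction step (when $\alpha\geq 2$).

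With the $J$-induction in hand at the top level $K$, I would then run the induction in $K$ exactly as in Proposition \ref{prop:commrL}: the extra terms generated by passing from $(rL)^K$ to $(rL)^{K+1}$ in \eqref{eq:commeqho} — principally the $\slashed{\Delta}_{\s^2}(rL)^K\phi^{(n)}_{\geq\ell}$ term against $L(rL)^{K+1}\phi^{(n)}_{\geq\ell}$ — are handled by splitting into $\ell$-mode and $\geq\ell+1$-mode parts, using the Poincar\'e inequality \eqref{eq:poincare1} and a Leibniz rule in the $L$ and angular directions to produce a good bulk sign that absorbs the cross terms, with the $K=0$ fallback using \eqref{eq:hardyL}. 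Finally, I would invoke Lemma \ref{lm:intbyparts}(ii) with the smooth radial cutoff $\chi$ (as in Proposition \ref{prop:rpest}) to convert the divergence identity into the stated integral inequality, noting that for $n\geq 1$ and $-4n<p\leq 2$ all the $\mathcal{F}^{r^pL}_L$, $\mathcal{F}^{r^pL}_{\underline{L}}$ flux terms on $N_\tau$ and $I_s$ are non-negative definite; the alternative form of the last line follows by using \eqref{eq:morawetz2} in place of \eqref{eq:morawetzell3}. I expect the main obstacle to be the careful bookkeeping in the $J$-induction of the $[\slashed{\Delta}_{\s^2},\sin^2\theta]$ commutator interacting with the lower-order $\phi^{(m)}$, $m<n$, terms and with the mode-coupling cutoff $\ell\mapsto\max\{\ell-2,0\}$ — i.e. making sure that every remainder term produced lands in one of the already-controlled quantities on the right-hand side of \eqref{eq:rpestcommang2} or of a strictly earlier induction step, so that the induction genuinely closes.
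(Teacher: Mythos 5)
Your proposal is correct and follows essentially the same route as the paper: the same multiplier $-(-1)^{\alpha}\tfrac12 r^{p-2}\tfrac{(r^2+a^2)^2}{\Delta}\slashed{\Delta}_{\s^2}^{\alpha}L(rL)^{k}T^{j}\phi^{(n)}_{\geq \ell}$, an induction in $J$ built on top of Proposition \ref{prop:commrL}, angular integration by parts recovered via \eqref{eq:angdercontrol}, and Lemma \ref{lm:intbyparts} to close the estimate. The one place where you do more work than necessary is the $[\slashed{\Delta}_{\s^2},\sin^2\theta]$ bookkeeping imported from Proposition \ref{prop:rpestcommang1}: that delicate analysis is only needed in the $n=\ell=0$ case, where the $a^2\sin^2\theta\,T^2$ term must be integrated by parts in $T$; here (with $\ell\geq 1$ or $n\geq 1$) that term is simply grouped into $\mathcal{J}^{r^pL}$ and dispatched by a straightforward Young's inequality, producing directly the $a^4 r^{p-3}|\snabla_{\s^2}^{\alpha}(rL)^kT^{2+j}\phi^{(n)}_{\geq\max\{\ell-2,0\}}|^2$ terms on the right-hand side, so the commutator never has to be tracked. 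Your detour still lands in controlled terms, so this is an inefficiency rather than a gap.
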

\begin{proof}
We then carry out an inductive argument in $J$ by repeating the arguments in the proof of Proposition \ref{prop:commrL}, but considering multipliers of the form $-(-1)^{\alpha}\frac{1}{2}r^{p-2}\frac{(r^2+a^2)^2}{\Delta}\slashed{\Delta}_{\s^2}^{\alpha}L(rL)^{k}T^j\phi^{(n)}_{\leq \ell}$ and additionally integrating by parts in the angular directions. In contrast with the proof of Proposition \ref{prop:rpestcommang1}, the fact that $[\snabla_{\s^2},\sin^2\theta]\neq 0$ does not affect the argument, as all terms involving factors of $\sin^2\theta$ are estimated using a straightforward Young's inequality.
\end{proof}
\subsubsection{Additional hierarchies for $T$-derivatives}
We can obtain control over \emph{additional} $r^p$-weighted hierarchies when considering $T\phi$ rather than $\phi$, by using the favourable commutation properties of $rL$ and $\snabla_{\s^2}$ that follow from Propositions \ref{prop:rpestcommang1}, \ref{prop:rpestcommang2} and \ref{prop:hohierPj}. These will be important for proving \emph{better} energy decay rates for \emph{additional} $T$-derivatives.

The relevant key lemma is the following:
\begin{lemma}
\label{lm:addrpestTder}
Let $n,K,J\in \N_0$ and $\ell\geq n$. Then there exists a constant $C(M,a,R,p,n,K,J,\ell)>0$ such that for all $p\in \R$:
\begin{equation}
\label{eq:Tderimprov}
\begin{split}
\sum_{k=0}^K\sum_{\alpha=0}^J&\int_{\s^2}r^{p+1}|\snabla_{\s^2}^{\alpha}L(rL)^{k}T\phi^{(n)}_{\geq \ell}|^2\,d\omega\\
\leq&\: C K\sum_{\alpha=0}^J\sum_{k=0}^{K}\sum_{\gamma+\beta\leq 1}\int_{\s^2}  r^{p-1}|\snabla_{\s^2}^{\alpha+\gamma}L (rL)^{k+\beta}\phi^{(n)}_{\geq \ell}|^2+ r^{p-3}|\snabla_{\s^2}^{\gamma+\alpha+1} (rL)^{k+\beta}\phi^{(n)}_{\geq \ell}|^2+  r^{p-3}|\snabla_{\s^2}^{\gamma+\alpha} (rL)^{k+\beta}\phi^{(n)}_{\geq \ell}|^2\,d\omega\\
&+C\sum_{\alpha=0}^J\sum_{k=0}^{K}\int_{\s^2} a^4r^{p-3}|\snabla_{\s^2}^{\alpha} (rL)^kT^2\phi^{(n)}_{\geq \max \{ \ell-2,0\}}|^2+ a^2r^{p-3}|\snabla_{\s^2}^{\alpha}(rL)^kT \Phi\phi^{(n)}_{\geq  \ell }|^2\,d\omega\\
&+C\sum_{\alpha=0}^J\sum_{k=0}^{K}\sum_{m=0}^{n-1}\int_{\s^2} a^2r^{p-5}|\snabla_{\s^2}^{\alpha}(rL)^k\Phi\phi_{\geq  \ell}^{(m)}|^2+r^{p-5}|\snabla_{\s^2}^{\alpha}(rL)^k\phi_{\geq  \ell}^{(m)}|^2\,d\omega.
\end{split}
\end{equation}
Furthermore,
\begin{equation}
\label{eq:TderimprovPj}
\begin{split}
\sum_{k=0}^K\int_{\s^2}r^{p+1}((rL)^{k}TP_j)^2\,d\omega\leq&\: C \sum_{k=0}^{K+1}\int_{\s^2}  r^{p-1}((rL)^{k}P_j)^2+\sum_{i=0}^jr^{p-5}((rL)^k\check{\phi}^{(i)}_j)^2\,d\omega\\
&+C\sum_{k=0}^{K+1}\sum_{i=0}^j\int_{\s^2} a^4r^{p-5}[((rL)^kT{\phi}^{(i)}_{\geq j})^2+((rL)^kT{\phi}^{(i)}_{\leq j-2})^2]\,d\omega.
\end{split}
\end{equation}
\end{lemma}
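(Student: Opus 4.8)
\textbf{Proof proposal for Lemma \ref{lm:addrpestTder}.}

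The plan is to prove \eqref{eq:Tderimprov} and \eqref{eq:TderimprovPj} by a direct, purely pointwise-in-$r$ manipulation of the relevant wave equations, using them to trade a $T$-derivative for $L$-, angular-, and lower-order terms, and then applying Young's inequality to distribute the weights. The key observation is that the quantity $T\phi^{(n)}$ appears \emph{algebraically} in the equations \eqref{eq:commeqho} (respectively \eqref{eq:maineqP0ho}--\eqref{eq:maineqP2ho}), not just differentiated, so one should not integrate but simply solve for $T$ at a fixed sphere. Concretely, I would first rewrite \eqref{eq:commeqho} with $k$ replaced by $k$ so that $\underline{L}L(rL)^kT\phi^{(n)}$, or rather $L(rL)^kT\phi^{(n)}$ after using $\underline{L}=-\tfrac{\Delta}{2(r^2+a^2)}Y$, is expressed in terms of angular Laplacians, radial $L$-derivatives, $T^2$ and $T\Phi$ of lower-order or lower-$\ell$ quantities, and lower-order $\phi^{(m)}$ terms. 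The point is that the coefficient of $L(rL)^kT\phi^{(n)}$ on the right-hand side of the schematic equation, after multiplying through by $\tfrac{(r^2+a^2)^2}{\Delta}$, is $[-4nr^{-1}-\tfrac12 k r^{-1}+O_\infty(r^{-2})]$ (from \eqref{eq:commeqho}) — this is $O(r^{-1})$ and can be absorbed for $R$ large, while everything else carries a power of $r$ at least one better, or is of the desired schematic form.

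The key steps, in order, are: (i) apply $T$ to \eqref{eq:commeqho} and rearrange using $\underline{L}L = -\tfrac{\Delta}{2(r^2+a^2)}YL$ so that the left side reads $-\tfrac{2(r^2+a^2)^2}{\Delta}YL(rL)^kT\phi^{(n)}_{\geq\ell}$, equivalently we obtain $L(rL)^kT\phi^{(n)}_{\geq\ell}$ is equal to $\tfrac{\Delta}{2(r^2+a^2)^2}$ times [angular Laplacian terms $+$ $T^2, T\Phi$ of lower-$\ell$/lower-order $+$ $L(rL)^{j}T\phi^{(n)}$ for $j<k$ $+$ lower-order $\phi^{(m)}$], i.e.\ we solve the wave equation for the highest-$L$-and-$T$ term; (ii) multiply by $r^{p+1}$, integrate over $\s^2$, and bound each resulting term by Young's inequality, distributing weights so that an angular derivative at weight $r^{p+1}\cdot\tfrac{\Delta}{(r^2+a^2)^2}\sim r^{p-3}$ matches the $r^{p-3}|\snabla_{\s^2}^{\alpha+1}(rL)^{k+\beta}\phi^{(n)}_{\geq\ell}|^2$ term on the right, the $T^2$ and $T\Phi$ lower-order contributions give the $a^4 r^{p-3}$ and $a^2 r^{p-3}$ terms, and the lower-order $\phi^{(m)}$ terms give the $r^{p-5}$ lines (the extra $r^{-2}$ coming from the coefficients $O_\infty(r^{-2})$ multiplying those terms in \eqref{eq:commeqho}); (iii) commute additionally with $\slashed{\Delta}_{\s^2}^{\alpha/2}$ (or $\snabla_{\s^2}\slashed{\Delta}_{\s^2}^{(\alpha-1)/2}$) exactly as in the proof of Proposition \ref{prop:rpestcommang2}, noting $[\slashed{\Delta}_{\s^2},\sin^2\theta]\neq0$ is harmless here because all $\sin^2\theta$ factors are simply bounded pointwise and absorbed by Young — no integration by parts in $\tau$ is needed since this is a fixed-sphere estimate; (iv) an easy induction on $k$ from $k=0$ to $k=K$, using that at stage $k$ the terms $L(rL)^jT\phi^{(n)}$ with $j<k$ are already controlled by the right-hand side (their weights are even better by the structure of \eqref{eq:commeqho}), gives the sum over $0\le k\le K$. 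For \eqref{eq:TderimprovPj} one repeats the same argument verbatim using \eqref{eq:maineqP0ho}--\eqref{eq:maineqP2ho} in place of \eqref{eq:commeqho}; here the coefficient of $(rL)^kTP_j$ is $[-(4\ell+\tfrac k2+\text{something})r^{-1}+O_\infty(r^{-2})]$ for $\ell\in\{0,1,2\}$, again $O(r^{-1})$, and the $T$ hitting the $\pi_j(\sin^2\theta\,T(\cdots))$ terms produces the $a^4 r^{p-5}$ contributions with $T$ of $\phi^{(i)}$ at neighbouring angular frequencies, while the non-$T$ terms $(rL)^j\check\phi^{(i)}_j$ and $(rL)^jP_j$ give the first two families; the extra index $k=K+1$ on the right accounts for the one $(rL)$ that is implicitly hidden when one rewrites $L(rL)^k = (rL)^{k+1}\cdot r^{-1}$ up to lower order.

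The main obstacle I expect is purely bookkeeping rather than conceptual: one must verify that after solving the equation for the top-order term and multiplying by $r^{p+1}$, \emph{every} term that appears genuinely carries a weight no worse than those listed on the right-hand side of \eqref{eq:Tderimprov} and \eqref{eq:TderimprovPj}, and in particular that the coefficients multiplying $L(rL)^kT\phi^{(n)}_{\geq\ell}$ itself (and $(rL)^kTP_j$ itself) are indeed $O(r^{-1})$ so that, after multiplication by $r^{p+1}$, they become $O(r^p)$ and can be absorbed into $\epsilon$ times the left-hand side for $R$ chosen large depending on $p$. The fact that \eqref{eq:Tderimprov} is asserted for \emph{all} $p\in\R$ (no restriction $p<2$ or $p\le 2$) is consistent with this, precisely because the estimate is fiberwise on each $\s^2=S^2_{\tau,\uprho}$ and involves no $\uprho$-integration, so the borderline-$p$ subtleties that forced $p<2$ in Proposition \ref{prop:rpestcommang1} (boundary terms at infinity, Hardy inequalities) simply do not arise. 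A secondary care point is the lower-order induction in $k$ for the $P_j$: one must check that the terms $(rL)^jP_j$ with $j\le k$ and the $(rL)^j\check\phi^{(i)}_j$ terms all fit into the stated right-hand side, which follows because the schematic coefficients in \eqref{eq:maineqP0ho}--\eqref{eq:maineqP2ho} in front of these are $O_\infty(r^{-1})$ and $O_\infty(r^{-3})$ respectively, giving after the $r^{p+1}$ weight exactly $r^p$ (absorbed) and $r^{p-2}$, which is better than the $r^{p-1}$ appearing on the right of \eqref{eq:TderimprovPj}.
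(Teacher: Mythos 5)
Your high-level strategy is the right one --- this is a fixed-sphere estimate obtained by trading the $T$-derivative for $L$-, $\underline{L}$- and angular derivatives via the wave equation, with no $\uprho$-integration, which is exactly why no restriction on $p$ is needed --- and your weight bookkeeping for the individual terms is essentially correct. But the central algebraic step (i) does not work as written. The equation \eqref{eq:commeqho} applied to $T\phi^{(n)}$ has left-hand side $4\underline{L}L(rL)^{k}T\phi^{(n)}$, which is a \emph{radial derivative} of $L(rL)^{k}T\phi^{(n)}$, not a multiple of it; rewriting $\underline{L}=-\tfrac{\Delta}{2(r^2+a^2)}Y$ does not let you ``solve for'' $L(rL)^{k}T\phi^{(n)}$ at a fixed sphere --- that would require integrating in the radial direction, which you explicitly (and correctly) want to avoid. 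Moreover, even if such a step were available, having first applied $T$ to the equation you would find $T^{3}(rL)^{k}\phi^{(n)}$ and $\slashed{\Delta}_{\s^2}(rL)^{k}T\phi^{(n)}$ (with no smallness factor) on the right-hand side, neither of which is controlled by the right-hand side of \eqref{eq:Tderimprov}; and your scheme produces no source for the $\beta=1$ terms $L(rL)^{k+1}\phi^{(n)}_{\geq\ell}$ that visibly must appear there.

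The missing ingredient is the splitting $T=L+\underline{L}-\tfrac{a}{r^2+a^2}\Phi$, applied \emph{after} commuting $T$ past $L(rL)^{k}$: since $T$ is Killing and commutes with $L$ and $rL$, one has
\begin{equation*}
L(rL)^{k}T\phi^{(n)}=TL(rL)^{k}\phi^{(n)}=\underline{L}L(rL)^{k}\phi^{(n)}+LL(rL)^{k}\phi^{(n)}-\frac{a}{r^2+a^2}\Phi L(rL)^{k}\phi^{(n)},
\end{equation*}
and similarly $T(rL)^{k}P_j=\underline{L}(rL)^{k}P_j+r^{-1}(rL)^{k+1}P_j-\tfrac{a}{r^2+a^2}\Phi(rL)^{k}P_j$. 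Only the first summand is evaluated via the wave equation --- \eqref{eq:commeqho} (resp.\ \eqref{eq:maineqP0ho}--\eqref{eq:maineqP2ho}) for $\phi^{(n)}$ itself, \emph{not} for $T\phi^{(n)}$ --- and every term there carries an explicit factor $\Delta(r^2+a^2)^{-2}=O(r^{-2})$ or better, yielding after squaring and weighting by $r^{p+1}$ precisely the $r^{p-3}$ and $r^{p-5}$ families (with $T^{2}$, not $T^{3}$). The second summand equals $r^{-1}L(rL)^{k+1}\phi^{(n)}+O(r^{-2})(rL)^{k+1}\phi^{(n)}$, which is the origin of the $r^{p-1}$ terms with $\beta=1$ (and of the index $K+1$ in \eqref{eq:TderimprovPj}), and the third is an angular derivative with an $O(r^{-2})$ coefficient. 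With this substitution your steps (ii)--(iv), including the commutation with $\slashed{\Delta}_{\s^2}^{\alpha}$ and integration by parts on $\s^2$ for $J\geq 1$, go through as you describe.
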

\begin{proof}
We first split:
\begin{align*}
T L (rL)^k \phi^{(n)}=&\:\underline{L} L (rL)^k \phi^{(n)}+L L (rL)^k \phi^{(n)}-\frac{a}{r^2+a^2}\Phi L (rL)^k \phi^{(n)},\\
T (rL)^k P_j=&\: \underline{L}(rL)^k P_j+ r^{-1} (rL)^{k+1}P_j-\frac{a}{r^2+a^2}\Phi(rL)^k P_j
\end{align*}
and then we square both sides of the equations together with Lemma \ref{lm:commeqho} to estimate $(\underline{L} L (rL)^k \phi^{(n)})^2$ and $(\underline{L}(rL)^k P_j)^2$ and obtain \eqref{eq:Tderimprov} with $J=0$.

Now take $J\geq 1$. Then we repeat the above procedure, but rather than squaring both sides of Lemma \ref{lm:commeqho}, we multiply the right-hand side of Lemma \ref{lm:commeqho} with $(-1)^{\alpha}\slashed{\Delta}_{\s^2}^{\alpha}$ acting on the right-hand side of Lemma \ref{lm:commeqho} and integrate by parts in the angular directions, using moreover \eqref{eq:angdercontrol}.
\end{proof}

\section{Energy decay estimates}
\label{sec:edecay}
In this section, we convert the hierarchy of $r^p$-weighted estimates established in Propositions \ref{prop:rpestcommang1}, \ref{prop:rpestcommang2} and \ref{prop:hohierPj} into energy decay estimates. We outline below the strategy for obtaining both sharp energy decay rates for $\psi_{\ell}$ with $\ell=0,1,2$ and suitably strong energy decay estimates for $\psi_{\geq 3}$:
\begin{enumerate}[A)]
\item We first obtain a preliminary energy decay estimate for the full solution $\psi$ in Proposition \ref{prop:edecaystep1} (\textbf{Proposition \ref{prop:edecay}}).
\item We then consider the restriction $\psi_{\geq 1}$ and obtain a corresponding improved energy decay estimate using the preliminary energy decay established in Proposition \ref{prop:edecaystep1} to deal with the contributions of the mode $\psi_0$ that play a role in the estimate. We then consider the further restrictions $\psi_{\geq 2}$ and $\psi_{\geq 3}$ and keep improving the energy decay estimates successively. (\textbf{Proposition \ref{prop:edecay}}).
\item Subsequently, we restrict to the single spherical harmonic modes $\psi_0$, $\psi_1$ and $\psi_2$ and improve the energy decay estimate even further. The corresponding equations are coupled with 
the remaining spherical harmonic modes and the terms that appear in the estimates due to this coupling are controlled using the energy decay estimates already established in Proposition \ref{prop:edecay}. We arrive at energy estimates that are sharp when considering initial data with \textbf{non-vanishing} Newman--Penrose charges $I_{\ell}$, $\ell=0,1,2$. (\textbf{Proposition \ref{prop:almostsharpedecay}})
\end{enumerate}
By applying the time-inversion theory from Section \ref{sec:timeinv} below, we will perform an additional step:
\begin{enumerate}[D)]
\item Applying the energy decay estimates Proposition \ref{prop:almostsharpedecay} to the time-integral $T^{-1}\psi$ (Proposition \ref{prop:mainpropTinv}) allows us to obtain sharp energy decay estimates when considering initial data with \textbf{vanishing} Newman--Penrose charges, in particular, compactly supported initial data (\textbf{Corollary \ref{cor:sharpdecayvanishingNP}}).
\end{enumerate}
\begin{proposition}
\label{prop:edecaystep1}
Let $K\in \N_0$ and $\delta>0$ be arbitrarily small. Then there exists a constant $C=C(M,a,K,R,\delta)>0$,
\begin{equation}
\label{eq:edecaystep1}
\begin{split}
\int_{\Sigma_{\tau}}&J^N[T^K\psi]\cdot \mathbf{n}_{\tau}\,r^2d\omega d\uprho+\sum_{m=1}^2\sum_{k_1+k_2+\alpha\leq K}(1+\tau)^{-m-2(K-k_2)}\int_{N_{\tau}}r^{m-\delta}|\snabla_{\s^2}^{\alpha} L(rL)^{k_1}T^{k_2}\phi|^2d\omega d\uprho\\
\leq&\: C(1+\tau)^{-2+\delta-2K}E_{\geq 0,K,\delta}[\psi]
\end{split}
\end{equation}
with
\begin{equation*}
\begin{split}
E_{\geq 0,K,\delta}[\psi]:=&\:\sum_{0\leq j=j_1+j_2\leq 1}\sum_{k_1+k_2+\alpha\leq K} \int_{N_{0}} r^{2-\delta-j}|\snabla_{\s^2}^{\alpha} L(rL)^{k_1}T^{j_1+2k_2}\Phi^{j_2}\phi|^2+r^{-2-\delta-j}|\snabla_{\s^2}^{1+\alpha} (rL)^{k_1}T^{j_1+2k_2}\Phi^{j_2}\phi|^2\,d\omega d\uprho\\
&+\sum_{0\leq k_1+k_2\leq 2+2K} \int_{\Sigma_{0}}J^N[T^{k_1}\Phi^{k_2} \psi]\cdot \mathbf{n}_{0}\,r^2d\omega d\uprho.
\end{split}
\end{equation*}
\end{proposition}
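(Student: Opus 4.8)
\textbf{Proof strategy for Proposition \ref{prop:edecaystep1}.}

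The plan is to feed the $r^p$-weighted hierarchies from Section \ref{sec:rpest} into the standard Dafermos--Rodnianski mean-value iteration, but carried out simultaneously for $\phi$ and all of its commuted descendants $\snabla_{\s^2}^{\alpha}L(rL)^{k_1}T^{k_2}\phi$, and combined with the energy boundedness and integrated local energy decay estimates of Theorem \ref{thm:morawetz}. First I would set up the base case $K=0$. Apply Proposition \ref{prop:rpestcommang1} (with $n=0$, arbitrary fixed $K_0,J_0$ large enough to absorb all the commuted quantities appearing on the right-hand side of \eqref{eq:edecaystep1}) with $p=2-\delta$; the boxed angular-derivative term on the right-hand side of that proposition is controlled, after choosing the commutation orders $K_0,J_0$ large, using the same proposition at lower order together with \eqref{eq:eboundradfield2}. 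Since $T$ is Killing, the entire package also holds with $\phi$ replaced by $T^{k_2}\phi$, so the terms with $T$-derivatives come for free by the Killing property. This gives a spacetime integral bound on $\int_{\tau_1}^{\tau_2}\int_{N_\tau} r^{1-\delta}|\snabla_{\s^2}^{\alpha}L(rL)^{k_1}T^{k_2}\phi|^2$ in terms of the $r^{2-\delta}$-weighted data norm plus Morawetz flux terms at $\tau_1$, i.e. in terms of $E_{\geq 0,0,\delta}[\psi]$.

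Next comes the iteration. From the $p=2-\delta$ estimate and the pigeonhole (mean-value) principle applied along a dyadic sequence $\tau_j\sim 2^j$, one extracts a sequence of times along which the $r^{1-\delta}$-weighted flux through $N_{\tau_j}$ decays like $(1+\tau_j)^{-1}$. Then running Proposition \ref{prop:rpestcommang1} again with $p=1-\delta$ between consecutive dyadic times upgrades this: the $r^{-\delta}$-weighted flux (equivalently, the non-degenerate energy flux on $N_\tau$ after a Hardy inequality, \eqref{eq:hardyX}) decays like $(1+\tau)^{-2+\delta}$. Combining with the energy boundedness estimate \eqref{eq:ebound}, the integrated local energy decay \eqref{eq:morawetz1}, and the relation $\mathbf{J}^N[\psi]\cdot\mathbf{n}_\tau\sim (X\psi)^2+r^{-2}(T\psi)^2+r^{-2}|\snabla_{\s^2}\psi|^2$ noted after Theorem \ref{thm:morawetz}, one propagates the decay of the far-away flux to the full energy $\int_{\Sigma_\tau}\mathbf{J}^N[\psi]\cdot\mathbf{n}_\tau\,r^2\,d\omega d\uprho$, obtaining the $(1+\tau)^{-2+\delta}$ rate for $K=0$. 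The intermediate-weight statements (the $m=1$ and $m=2$ terms with their respective powers of $(1+\tau)$) are simply the two rungs of the hierarchy recorded along the way: the $r^{2-\delta}$-flux itself is bounded (rate $(1+\tau)^0$ relative to the data, i.e. $m=2$ with no extra $\tau$-gain beyond what the iteration produces), and the $r^{1-\delta}$-flux picks up one power. Here one uses that the $r^{2-\delta}$-weighted flux of $\phi$ translates, after a Hardy inequality in $\uprho$, into control of $r^{-2-\delta}|\snabla_{\s^2}(rL)^{k_1}\phi|^2$ etc., matching the data norm $E_{\geq 0,K,\delta}$.

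For general $K$, I would commute the whole argument with $T^K$ (legitimate since $T$ is Killing) and also exploit the fact, visible in Lemma \ref{lm:addrpestTder} and Proposition \ref{prop:rpestcommang1}, that each extra $T$-derivative buys an extra power of $r$ in the hierarchy — i.e. $r^{p+1}|L(rL)^kT\phi|^2$ is controlled by the $r^{p-1}$-level quantities without a $T$. This is precisely the mechanism that converts $2K$ extra $T$-derivatives into $2K$ extra powers of $\tau$-decay: one runs the extended hierarchy of length increased by $2K$ for $T^K\psi$, and the mean-value iteration then yields the $(1+\tau)^{-2+\delta-2K}$ rate, with the intermediate weighted fluxes $r^{m-\delta}|\snabla_{\s^2}^{\alpha}L(rL)^{k_1}T^{k_2}\phi|^2$ decaying at rate $m+2(K-k_2)$ — the $k_2$ dependence reflecting that the $k_2$ already-spent $T$-derivatives are what generated the extra length. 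The main obstacle I anticipate is bookkeeping: tracking which commuted quantities ($\snabla_{\s^2}^{\alpha}$, $(rL)^{k_1}$, $T^{k_2}$, $\Phi$) appear on the right-hand sides of Propositions \ref{prop:rpestcommang1} and \ref{prop:rpestcommang2} and verifying that they all close within the data norm $E_{\geq 0,K,\delta}[\psi]$ as written — in particular that the commutation orders $J_0,K_0$ chosen at the start are large enough, and that the $\Phi$-commuted data terms on the last line of $E_{\geq 0,K,\delta}$ suffice to absorb the integrated-local-energy-decay error terms from \eqref{eq:morawetz2}. No genuinely new analytic difficulty arises here since $\ell$-mode coupling plays no role for the full solution $\psi$ (only the projections $\psi_{\geq\ell}$ see it, and those are handled in the later propositions); the content is entirely the assembly of the pieces already established.
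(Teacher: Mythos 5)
Your proposal is correct and follows essentially the same route as the paper: the $p=1$ and $p=2-\delta$ rungs of the (commuted) $r^p$-hierarchy, the dyadic mean-value iteration combined with energy boundedness and the Morawetz estimates, and -- for $K\geq 1$ -- the conversion of extra $T$-derivatives into extra $r$-weights via Lemma \ref{lm:addrpestTder} and Proposition \ref{prop:rpestcommang1}, extending the hierarchy to length $2+2K$. The only step you leave implicit is the use of the interpolation inequality (Lemma \ref{lm:interpol}) to remove the residual $r^{-\delta}$ degeneracy when passing from the weighted fluxes to the non-degenerate $N$-energy; a Hardy inequality alone does not accomplish this.
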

\begin{proof}
We consider first the case $K=0$. We obtain energy decay by applying Proposition \ref{prop:rpest} with $n=\ell=0$ in combination with the following ingredients:
\begin{enumerate}[1)]
\item the mean value theorem along dyadic time intervals (the ``pigeonhole principle''),\\
\item an \emph{interpolation inequality} (Lemma \ref{lm:interpol}) ,\\
\item \emph{energy boundedness estimate} in the form \eqref{eq:ebound}, \eqref{eq:eboundradfield} and \eqref{eq:eboundradfield2},\\
\item an integrated local energy decay estimate \eqref{eq:morawetz2}. 
\end{enumerate}
Let $\{\tau_i\}$ be a dyadic sequence, then Proposition \ref{prop:rpest} with $p=1$ together with \eqref{eq:eboundradfield2} implies that:
\begin{equation}
\label{eq:rpfirststep}
\int_{\tau_i}^{\tau_{i+1}}\int_{N_{\tau}} (L\phi)^2+r^{-2}(\underline{L}\phi)^2+r^{-2}|\snabla_{\s^2}\phi|^2\,d\omega d\uprho d\tau \leq C \int_{N_{\tau_i}} r (L\phi)^2+r^{-3}|\snabla_{\s^2}\phi|^2\,d\omega d\uprho+C\int_{\Sigma_{\tau_i}} J^N[\psi]\cdot \mathbf{n}_{\tau_i}\,r^2 d\omega d\uprho.
\end{equation}
By combining \eqref{eq:rpfirststep} with \eqref{eq:morawetz1}, applying the mean value theorem and then applying  \eqref{eq:ebound}, we obtain $\tau^{-1}$ decay for the $N$-energy, with a loss of $T$ and $\Phi$ derivatives on the right-hand side. We can improve this energy decay rate by considering the spacetime integral of the right-hand side of \eqref{eq:rpfirststep}, applying Proposition \ref{prop:rpest} with $p=2-\delta$ together with \eqref{eq:eboundradfield2} and repeating the arguments above together with the interpolation inequality in Lemma \ref{lm:interpol} to obtain \eqref{eq:edecaystep1} with $K=0$. Note that the decay rate corresponds to the total length of the hierarchy of $r^p$-weighted estimates applied minus $\delta$. 

Now consider $K=1$. The above arguments hold automatically for $\psi$ replaced by $T\psi$, but we can improve the energy decay even further by considering $p=3-\delta$ in the following way: we apply Lemma \ref{lm:addrpestTder} with $p=1-\delta$ and $J=K=0$, and instead of Proposition \ref{prop:rpest}, we appeal to the $r^p$-estimates in Proposition \ref{prop:rpestcommang1} with $p=1$, $J=1$, and $K=1$. When then obtain $\tau^{-3+\delta}$ decay. Subsequently, we consider Proposition \ref{prop:rpestcommang1} with $p=2-\delta$. As the length of our hierarchy is now $4$, we are left  $\tau^{-4+\delta}$ decay of the $N$-energy. In this process, we automatically obtain control over additional $r$-weighted quantities with slower decay rates.

We can treat the general $K$ case inductively, by repeatedly applying Lemma \ref{lm:addrpestTder} with $n=\ell=0$ and $p=1-\delta$ and suitably high values of $J$ and $K$, together with Proposition \ref{prop:rpestcommang1} with $p=1$ and $p=2-\delta$ and suitably high values of $J$ and $K$. We obtain in this way a hierarchy of length $2+2K$ and hence, $\tau^{-2-2K}$ decay for the $N$-energy.
\end{proof}

\begin{proposition}
\label{prop:edecay}
Let $K\in \N_0$ and $\delta>0$ be arbitrarily small. Then there exists a constant $C=C(M,a,K,R,\delta)>0$,
\begin{align}
\label{eq:decayl1}
\int_{\Sigma_{\tau}}&J^N[T^K\psi_{\geq 1}]\cdot \mathbf{n}_{\tau}\,r^2d\omega d\uprho+\sum_{m=1}^2\sum_{k_1+k_2+\alpha\leq K}(1+\tau)^{-m-2(K-k_2)+\delta}\int_{N_{\tau}}r^{m-\delta}|\snabla_{\s^2}^{\alpha} L(rL)^{k_1}T^{k_2}\phi_{\geq 1}|^2d\omega d\uprho\\ \nonumber
&+\sum_{m=0}^2\sum_{k_1+k_2+\alpha\leq K}(1+\tau)^{-2-m-2(K-k_2)+\delta}\int_{N_{\tau}}r^{m-\delta}|\snabla_{\s^2}^{\alpha} L(rL)^{k_1}T^{k_2}\phi^{(1)}_{\geq 1}|^2d\omega d\uprho\\ \nonumber
\leq&\: C(1+\tau)^{-4+\delta-2K}E_{\geq 1,K,\delta}[\psi],\\
\label{eq:decayl2}
\int_{\Sigma_{\tau}}&J^N[T^K\psi_{\geq 2}]\cdot \mathbf{n}_{\tau}\,r^2d\omega d\uprho+\sum_{m=1}^2\sum_{k_1+k_2+\alpha\leq K}(1+\tau)^{-m-2(K-k_2)+\delta}\int_{N_{\tau}}r^{m-\delta}|\snabla_{\s^2}^{\alpha} L(rL)^{k_1}T^{k_2}\phi_{\geq 2}|^2d\omega d\uprho\\ \nonumber
&+\sum_{m=0}^2\sum_{j=1}^2\sum_{k_1+k_2+\alpha\leq K}(1+\tau)^{-2j-m-2(K-k_2)+\delta}\int_{N_{\tau}}r^{m-\delta}|\snabla_{\s^2}^{\alpha} L(rL)^{k_1}T^{k_2}\phi^{(j)}_{\geq 2}|^2d\omega d\uprho\\ \nonumber
\leq&\: C(1+\tau)^{-6+\delta-2K}E_{\geq 2,K,\delta}[\psi],\\
\label{eq:decayl3}
\int_{\Sigma_{\tau}}&J^N[T^K\psi_{\geq 3}]\cdot \mathbf{n}_{\tau}\,r^2d\omega d\uprho+\sum_{m=1}^2\sum_{k_1+k_2+\alpha\leq K}(1+\tau)^{-m-2(K-k_2)+\delta}\int_{N_{\tau}}r^{m-\delta}|\snabla_{\s^2}^{\alpha} L(rL)^{k_1}T^{k_2}\phi_{\geq 3}|^2d\omega d\uprho\\ \nonumber
&+\sum_{m=0}^2\sum_{j=1}^3\sum_{k_1+k_2+\alpha\leq K}(1+\tau)^{-2j-m-2(K-k_2)+\delta}\int_{N_{\tau}}r^{m-\delta}|\snabla_{\s^2}^{\alpha} L(rL)^{k_1}T^{k_2}\phi^{(j)}_{\geq 3}|^2d\omega d\uprho\\ \nonumber
\leq&\: C(1+\tau)^{-8+\delta-2K}E_{\geq 3,K,\delta}[\psi],
\end{align}
with
\begin{align*}
E_{\geq \ell,K,\delta}[\psi]:=&\sum_{m_1\leq M_{T,\ell}, m_2\leq M_{\Phi,\ell}}\Bigg\{\sum_{j=0}^{\ell} \sum_{k+\alpha\leq K+3-j} \int_{N_{0}} r^{2-\delta}|\snabla_{\s^2}^{\alpha} L(rL)^{k}T^{m_1}\Phi^{m_2}\phi^{(j)}_{\geq j}|^2\\
&+r^{-2-\delta}|\snabla_{\s^2}^{1+\alpha} (rL)^{k_1}T^{m_1}\Phi^{m_2}\phi^{(j)}_{\geq j}|^2\,d\omega d\uprho+\int_{\Sigma_{0}}J^T[ T^{m_1}\Phi^{m_2}\psi]\cdot \mathbf{n}_{0}\,r^2d\omega d\uprho\Bigg\},
\end{align*}
where $\ell=1,2,3$ and $M_{T,\ell}$ and $M_{\Phi,\ell}$, are suitably large integers.
\end{proposition}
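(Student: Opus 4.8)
\textbf{Proof strategy for Proposition \ref{prop:edecay}.}
The plan is to establish the three energy decay estimates \eqref{eq:decayl1}--\eqref{eq:decayl3} inductively in the angular frequency threshold $\ell$, exploiting the fact that the only coupling between spherical harmonic modes is the right-hand side of the inhomogeneous equation $\rho^2\square_g\psi_{\geq \ell}=a^2T^2[\pi_{\geq \ell},\sin^2\theta]\psi$, which, by Lemma \ref{eq:lprojsin}, only involves $T^2\psi_{\geq \max\{\ell-2,0\}}$. For $\ell=1$, the coupling term involves $T^2\psi_0$, and the decay of $T^2\psi$ (hence $T^2\psi_0$) is already controlled by Proposition \ref{prop:edecaystep1}. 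So I would first run the $r^p$-weighted hierarchy argument of Proposition \ref{prop:edecaystep1} with $\psi$ replaced by $\psi_{\geq 1}$, now using the \emph{extended} hierarchies: Proposition \ref{prop:rpestcommang2} (for $n=0$, $\ell=1$, giving control over the boxed coupling term $a^2(r^2L)^k T^2\phi_{\geq -1}=a^2(r^2L)^kT^2\phi$) together with Proposition \ref{prop:hohierPj} for $\ell=1$ (the $P_1$-hierarchy extending $p$ up to $4$), and Lemma \ref{lm:addrpestTder} to trade $T$-derivatives for extra $r$-weights. The coupling terms on the right-hand sides of these propositions — involving $T$-derivatives of $\phi_0$ and of $\phi^{(i)}_{\geq 3}$ — are all estimated by feeding in \eqref{eq:edecaystep1} (for the $\phi_0$ pieces) and the already-established \eqref{eq:eboundradfield2} integrated decay (for the $\phi^{(i)}_{\geq 3}$ pieces, which come with favourable $r$-weights). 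Because the $P_1$-hierarchy has total length $2+2=4$ (for $\ell=1$) plus $2K$ after commuting with $K$ powers of $T$, the pigeonhole/mean-value argument along a dyadic sequence $\{\tau_i\}$, combined with the interpolation inequality (Lemma \ref{lm:interpol}) and energy boundedness \eqref{eq:ebound}, yields the rate $(1+\tau)^{-4+\delta-2K}$.

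For $\ell=2$ I would repeat the same scheme with $\psi_{\geq 2}$: the coupling term now involves $T^2\psi_{\geq 0}$, but more precisely only $T^2\psi_0$ and $T^2\psi_{\ge 2}$ survive the commutator $[\pi_{\geq 2},\sin^2\theta]$; the $T^2\psi_0$ contribution is again handled by \eqref{eq:edecaystep1}, while $T^2\psi_{\geq 2}$ is absorbed by bootstrapping within the estimate itself (since $T^2$ gains four powers of $\tau$-decay, which beats the loss from the coupling). I would use Proposition \ref{prop:hohierPj} for $\ell=2$, noting in particular that its right-hand side contains the term $\int r^{p+1}(L(rL)^kT\phi_0)^2$ with the \emph{worst} $r$-weight $r^{p+1}$ rather than $r^{p-3}$ — this is the place where the mode coupling genuinely limits the length of the hierarchy. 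Crucially, to control $\int r^{p+1}(L(rL)^kT\phi_0)^2$ for $p$ close to $4$, i.e.\ $\int r^{5-\delta}(L(rL)^kT\phi_0)^2$, I need the sharp statement \eqref{eq:edecaystep1} at weight $m=2$ for $T^{k+1}\phi_0$ — that is, I need the $\phi_0$-estimate \emph{at the top of its own hierarchy}, which Proposition \ref{prop:edecaystep1} with $K\to K+1$ supplies, giving $(1+\tau)^{-2(K+1)}$. Running the $\ell=2$ hierarchy of length $2+4=6$ (plus $2K$) then produces the rate $(1+\tau)^{-6+\delta-2K}$; one checks the coupling term indeed respects this, since $T\phi_0$ decaying like $(1+\tau)^{-(K+1)+\epsilon}$ at worst, integrated against the $r^{5-\delta}$-weighted flux and paired with the length-$6$ hierarchy, closes.

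For $\ell=3$ there is a genuine simplification: the coupling term for $\psi_{\geq 3}$ involves $T^2\psi_{\geq 1}$, so I use the already-proven \eqref{eq:decayl1} together with a bootstrap on $T^2\psi_{\geq 3}$. Here the $r^p$-weighted hierarchy of Proposition \ref{prop:rpestcommang2} with $n=0,\ell=3$ can be combined with the $P_\ell$-type construction — but in fact for $\ell=3$ one does not even need a Newman--Penrose charge $P_3$; the boxed coupling term $a^2(r^2L)^kT^2\phi_{\geq 1}$ already decays at rate $(1+\tau)^{-4-2+\delta}$ by \eqref{eq:decayl1}, which is fast enough to extend the hierarchy freely all the way to length $8+2K$, yielding $(1+\tau)^{-8+\delta-2K}$. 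I would set $M_{T,\ell}$ and $M_{\Phi,\ell}$ large enough at each stage to absorb the finitely many derivative losses incurred in the integrated-energy estimates \eqref{eq:morawetzell1}--\eqref{eq:morawetzell3} and in exchanging $T$-derivatives for $(rL)$- and $\snabla_{\s^2}$-derivatives via the wave equation.

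\textbf{Main obstacle.} The delicate point is the $\ell=2$ case: because the $P_2$-hierarchy (Proposition \ref{prop:hohierPj}, $\ell=2$) carries the coupling term $\int_{N_\tau}r^{p+1}(L(rL)^jT\phi_0)^2$ with a \emph{positive} $r$-weight exceeding the hierarchy's natural scale, one must have sharp, top-of-hierarchy control of $\phi_0$ with the full $(1+\tau)^{-2}$ rate (and of $T^k\phi_0$ with $(1+\tau)^{-2-2k}$) \emph{before} one can close the $\ell=2$ estimate — the argument is therefore not a uniform induction but a carefully staged one, $\psi\rightsquigarrow\psi_{\geq 1}\rightsquigarrow\psi_0$ (sharp, via Proposition \ref{prop:almostsharpedecay} afterwards, but here only the non-sharp but top-of-hierarchy rate from Proposition \ref{prop:edecaystep1} with extra $T$'s is needed) $\rightsquigarrow\psi_{\geq 2}$. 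Verifying that the decay rate of $T\phi_0$ coming from \eqref{eq:edecaystep1}, when inserted into the $\int r^{p+1}(\cdots)^2$ coupling term and processed through the mean-value theorem, produces exactly a consistent (non-circular) bound matching the length-$6$ hierarchy — rather than a strictly weaker one that would force a shorter hierarchy — is the computation that must be done with care, and is where the precise numerology ``$-6+\delta-2K$'' for $\psi_{\geq 2}$ (equal to the $\psi_1$ rate, as emphasised in Section \ref{sec:introelliptic}) comes from.
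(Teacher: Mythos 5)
Your overall architecture --- a staged induction in the angular threshold $\ell$, the use of the refined integrated estimates \eqref{eq:morawetzell1}--\eqref{eq:morawetzell3} so that only $T$-derivatives of the full solution appear on the right-hand side, the pigeonhole/interpolation step, and the observation that the coupling term only involves $T^2$ of lower modes and therefore decays fast enough --- matches the paper's proof. However, you have attached the argument to the wrong hierarchy, and this is a genuine gap. The paper's proof of this proposition never uses the Newman--Penrose quantities $P_\ell$ or Propositions \ref{prop:rpestNpquant}/\ref{prop:hohierPj}: those are reserved for the \emph{single-mode} estimates of Proposition \ref{prop:almostsharpedecay}, where they supply one further power of $\tau$ (rates $-3-2\ell$ rather than $-2-2\ell$). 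Indeed $P_\ell$ is built from $\pi_\ell(\check{\phi}^{(\ell)})$ and is simply not defined for the projections $\psi_{\geq\ell}$ appearing in \eqref{eq:decayl1}--\eqref{eq:decayl3}; and if one could somehow run a $P_1$-hierarchy for $\psi_{\geq1}$ it would produce $\tau^{-5}$, not the stated $\tau^{-4}$. What actually yields the length-$(2+2\ell)$ hierarchy for $\psi_{\geq\ell}$ is the commuted $\phi^{(n)}_{\geq\ell}$-hierarchy of Propositions \ref{prop:rpest}, \ref{prop:commrL} and \ref{prop:rpestcommang2} for each $n=0,\dots,\ell$ (each step contributing length $2$, i.e.\ the range $0<p\leq2$), glued together by the Hardy inequality \eqref{eq:hardyL}.

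As a consequence, the ``main obstacle'' you isolate for $\ell=2$ --- the term $\int r^{p+1}(L(rL)^jT\phi_0)^2$ with $p$ near $4$ --- does not occur in this proposition at all; it belongs to the $P_2$-hierarchy and hence to the proof of \eqref{eq:almostsharpedecay2}. The coupling term that actually arises here is $a^2\int r^{p-3}((r^2L)^2T^2\phi)^2\lesssim a^2\sum_k\int r^{p+3}(L(rL)^kT^2\phi)^2$, and it is closed not by any sharp, staged control of $\phi_0$ but by applying Lemma \ref{lm:addrpestTder} twice and then Proposition \ref{prop:edecaystep1} with $K=2$: the two extra $T$-derivatives supply exactly the additional $\tau^{-4}$ needed to match the length-$6$ hierarchy, with no circularity to check. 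Relatedly, your claim that the $r^{p-3}(LT\phi^{(i)}_{\geq3})^2$ terms in the $P_1$-hierarchy can be absorbed by the integrated decay \eqref{eq:eboundradfield2} would fail: for $p$ near $4$ these carry weights of order $r^{1-\delta}(L(r^2LT\phi_{\geq3}))^2$, far beyond what \eqref{eq:eboundradfield2} controls. Your treatment of $\ell=3$ (coupling with $T^2\psi_{\geq1}$, estimated via \eqref{eq:decayl1} with two extra $T$-commutations) is the one place where your account coincides with the paper's.
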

\begin{proof}
We consider first \eqref{eq:decayl1}. Let $K=0$.  Note that the estimates of Proposition \ref{prop:edecaystep1} still hold when we replace $\phi$ in the left-hand side by $\phi_{\geq 1}$, by orthogonality of the spherical harmonic modes. When considering $\phi_{\geq 1}$, we can further estimate via \eqref{eq:hardyL}:
\begin{equation*}
\begin{split}
\int_{\tau_i}^{\tau_{i+1}}&\int_{N_{\tau}} r^{2-\delta}[(L\phi_{\geq 1})^2+r^{-4}|\snabla_{\s^2} \phi_{\geq 1}|^2] \,d\omega d\uprho d\tau\\
 \leq &\:C\sum_{m_1+m_2\leq 1}\int_{\tau_i}^{\tau_{i+1}}\int_{N_{\tau}} r^{-\delta}\left[(L\phi_{\geq 1}^{(1)})^2+ (LT^{m_1}\Phi^{m_2}\phi_{\geq 1})^2+r^{-2}|\snabla_{\s^2 }T^{m_1}\Phi^{m_2}\phi_{\geq 1}|^2\right]\,d\omega d\uprho d\tau.
 \end{split}
\end{equation*}
We then apply Proposition \ref{prop:rpest} with $n=\ell=1$ and $p=1-\delta$ to obtain:
\begin{equation}
\label{eq:rpn1prelim}
\begin{split}
\int_{\tau_{i}}^{\tau_{i+1}}&\int_{N_{\tau}} r^{2-\delta}[(L\phi_{\geq 1})^2+r^{-4}|\snabla_{\s^2} \phi_{\geq 1}|^2] \,d\omega d\uprho d\tau\leq C\int_{N_{\tau_i}} r^{1-\delta}(L\phi_{\geq 1}^{(1)} )^2+r^{-1-\delta}[|\snabla_{\s^2}\phi_{\geq 1}^{(1)}|^2-2(\phi_{\geq 1}^{(1)})^2]  \,d\omega d\uprho\\
&+C\sum_{m_1+m_2\leq 1}\int_{\tau_{i}}^{\tau_{i+1}}\int_{N_{\tau}} r^{2-\delta} (LT T^{m_1}\Phi^{m_2}\phi)^2 \,d\omega d\uprho d\tau+C\int_{\tau_{i}}^{\tau_{i+1}}\int_{N_{\tau}} r^{-2-\delta} [(\Phi \phi_{\geq 1})^2 +(\phi_{\geq 1})^2] \,d\omega d\uprho d\tau\\
&+C\sum_{m_1+m_2\leq 1}E_{0,0,\delta}[T^{m_1}\Phi^{m_2}\psi]+C\sum_{m=0}^1\int_{\Sigma_{\tau_i}}J^N[T^m\psi]\cdot \mathbf{n}_{i}\,r^2d\omega d\uprho.
\end{split}
\end{equation}
We apply \eqref{eq:hardyL} to estimate further the third integral on the right-hand side of \eqref{eq:rpn1prelim}:
\begin{equation*}
\begin{split}
\int_{\tau_{i}}^{\tau_{i+1}}&\int_{N_{\tau}} r^{-2-\delta} [(\Phi \phi_{\geq 1})^2 +(\phi_{\geq 1})^2] \,d\omega d\uprho d\tau\leq C\int_{\tau_{i}}^{\tau_{i+1}}\int_{N_{\tau}} r^{-\delta} [(L\Phi \phi_{\geq 1})^2 +(L\phi_{\geq 1})^2] \,d\omega d\uprho d\tau\\
&+C\sum_{m_1+m_2=1}\int_{\tau_{i}}^{\tau_{i+1}}\int_{N_{\tau}} r^{-4-\delta} [(\Phi T^{m_1}\Phi^{m_2}\Phi \phi_{\geq 1})^2 +(T^{m_1}\Phi^{m_2}\phi_{\geq 1})^2] \,d\omega d\uprho d\tau.
\end{split}
\end{equation*}
Both terms can be controlled by applying the estimates in the proof of Proposition \ref{prop:edecaystep1}.

Note that the second integral on the right-hand side of  \eqref{eq:rpn1prelim} does not involve $\phi_{\geq 1}$, but rather the full function $\phi$, so we cannot appeal to Proposition \ref{prop:rpest} with $n=1$. However, as it involves $T\phi$ rather than $\phi$, we can apply Lemma \ref{lm:addrpestTder} with $p=1$ and $J=K=0$ to control it using Proposition \ref{prop:edecaystep1} with $K=1$. We are then left with $\tau^{-3+\delta}$ energy decay for the $N$-energy of $\psi_{\geq 1}$.

In order to obtain $\tau^{-4+\delta}$ decay, we apply Proposition \ref{prop:rpest} with $n=\ell=1$ and $p=2-\delta$ to control the right-hand side of \eqref{eq:rpn1prelim}. Note that the energy boundedness estimate \eqref{eq:ebound} and local integrated energy decay estimates \eqref{eq:morawetz1} and \eqref{eq:morawetz2} have $N$-energies of $\psi$ on the right-hand side, for which we have merely established $\tau^{-2+\delta}$ decay, so after applying the mean-value theorem along dyadic intervals, we are limited to  $\tau^{-3+\delta}$ energy decay for $\psi_{\geq 1}$. 

We therefore appeal instead to the \emph{refined} energy boundedness estimate \eqref{eq:morawetzell1} and the refined integrated local integrated energy decay estimates \eqref{eq:morawetzell2} and \eqref{eq:morawetzell3} for $\psi_{\geq 1}$, together with Proposition \ref{prop:rpest} with $n=0$ and $\ell=1$, so that we do not see the $N$-energy of the full solution $\psi$, but rather the $N$-energy of $T$-derivatives of $\psi$, which decay faster in time.

With the above observations, taking into account the additional loss of $T$ and $\Phi$ derivatives due to the use of the refined energy boundedness and local integrated energy decay estimates, we obtain \eqref{eq:decayl1} with $K=0$. Note that we have extended the total hierarchy of $r^p$-weighted estimates by $2$, which results in $\tau^{-4+\delta}$ decay for the $N$-energy of $\psi_{\geq 1}$, compared to $\tau^{-2+\delta}$-decay for $\psi_{\geq 0}$.

Proposition \ref{prop:rpest} with $n=\ell=1$ and $p=2-\delta$, $p=1-\delta$ and $p=-\delta$ provides moreover boundedness of the higher-order weighted energies:
\begin{equation*}
	\int_{\mathcal{N}_{\tau}} (1+\tau)^{2-m} r^{m-\delta} (L\phi^{(1)})^2\,d\omega d\uprho
\end{equation*}
with $m=0,1,2$.

We deal with the $K\geq 1$ cases via an inductive argument, applying repeatedly Lemma \ref{lm:addrpestTder} with $n=\ell=1$ and $p=1-\delta$ and suitably high values of $J$ and $K$, together with Proposition \ref{prop:rpestcommang2} with $p=1-\delta$ and $p=2-\delta$ and suitably high values of $J$ and $K$.

Now consider \eqref{eq:decayl2}. The estimates for $\psi_{\geq 1 }$ still apply to $\psi_{\geq 2}$. As in the $\ell \geq 1$ case, we apply \eqref{eq:hardyL} to estimate
\begin{equation*}
\begin{split}
\int_{\tau_i}^{\tau_{i+1}}&\int_{N_{\tau}} r^{2-\delta}[(L\phi_{\geq 2}^{(1)})^2+r^{-4}|\snabla_{\s^2} \phi_{\geq 2}^{(1)}|^2] \,d\omega d\uprho d\tau\\
 \leq &\:C\sum_{m_1+m_2\leq 1}\int_{\tau_i}^{\tau_{i+1}}\int_{N_{\tau}} r^{-\delta}\left[(L\phi_{\geq 2}^{(2)})^2+ (LT^{m_1}\Phi^{m_2}\phi_{\geq 2}^{(1)})^2+r^{-2}|\snabla_{\s^2 }T^{m_1}\Phi^{m_2}\phi_{\geq 2}^{(1)}|^2\right]\,d\omega d\uprho d\tau.
 \end{split}
\end{equation*}
We then apply Proposition \ref{prop:rpest} with $n=\ell=2$ and $p=1-\delta$ and $p=2-\delta$ to extend the hierarchy of $r^p$-weighted estimates by 2 (in the $K=0$ case). Note that we see in particular on the right-hand side the term:
\begin{equation*}
\int_{\tau_i}^{\tau_{i+1}}\int_{N_{\tau}} r^{p-3}a^4(T^2\phi^{(2)})^2\,d\omega d\uprho\leq C\sum_{k=0}^1\int_{\tau_i}^{\tau_{i+1}}\int_{N_{\tau}} r^{p+3}a^4(L(rL)^kT^2\phi)^2\,d\omega d\uprho
\end{equation*}
By applying Lemma \ref{lm:addrpestTder} twice, it follows that we can estimate the resulting integral with the left-hand side of \eqref{eq:edecaystep1} with $K=2$ and additional commutations with $T$ and $\Phi$.

We avoid $N$-energies of the full solution $\psi$ appearing on the right-hand side of Proposition \ref{prop:rpest} with $n=0,1$ and $\ell=2$, at the expense of losing additional $T$ and $\Phi$ derivatives on the right-hand side. Similarly, we apply \eqref{eq:morawetzell1}, \eqref{eq:morawetzell2} and \eqref{eq:morawetzell3} to ensure we only see $N$-energies of sufficiently many $T$-derivatives of the full solution $\psi$ on the right-hand side, which decay suitably fast. The rest of the argument proceeds in a similar manner.

Finally, we prove \eqref{eq:decayl3} by naturally following the strategy outlined above. Note that we exploit here the fact that the equation for $\psi_{\geq 3}$ is only coupled with $\psi_{\geq 1}$, and the terms involving $T^2\phi_{\geq 1}$ when $K=0$ can be estimated via \eqref{eq:decayl1} with $K=2$. The estimates for higher $K$ then follow inductively as above. 
\end{proof}
\begin{remark}
In principle, one can keep precise track of the number of additional $T$ and $\Phi$ derivatives on the right-hand sides of the inequalities in Proposition \ref{prop:edecay}, but we do not expect this number to be sharp. In order to arrive at a sharp number of $T$ and $\Phi$ derivatives, one can instead decompose $\psi$ into a bounded frequency part with respect to a time frequency $\omega$ and azimuthal mode $m$, for which there will not be a loss of $T$ and $\Phi$ derivatives, and a high-frequency part (supported on suitably large frequencies $\omega$ and $m$) that is decoupled with lower $\ell$ modes, as these are supported on lower values of $m$. These two parts can be treated separately. We do not carry out this procedure in the present paper.
\end{remark}

\begin{proposition}
\label{prop:almostsharpedecay}
Let $K\in \N_0$ and $\delta>0$ be arbitrarily small. Then there exists a constant\\ $C=C(M,a,K,R,\delta)>0$, such that
\begin{align}
\label{eq:almostsharpedecay0}
\int_{\Sigma_{\tau}}&J^N[T^K\psi_{0}]\cdot \mathbf{n}_{\tau}\,r^2d\omega d\uprho+\sum_{m=1}^2\sum_{k_1+k_2\leq K}(1+\tau)^{-m-2(K-k_2)+\delta}\int_{N_{\tau}}r^{m-\delta}(L(rL)^{k_1}T^{k_2}\phi_{0})^2d\omega d\uprho\\ \nonumber
\leq&\: C(1+\tau)^{-3+\delta-2K}E_{0,K,\delta}[\psi],\\
\label{eq:almostsharpedecay1}
\int_{\Sigma_{\tau}}&J^N[T^K\psi_{1}]\cdot \mathbf{n}_{\tau}\,r^2d\omega d\uprho+\sum_{m=1}^2\sum_{k_1+k_2\leq K}(1+\tau)^{-m-2(K-k_2)+\delta}\int_{N_{\tau}}r^{m-\delta}(L(rL)^{k_1}T^{k_2}\phi_1)^2d\omega d\uprho\\ \nonumber
&+\sum_{m=0}^2\sum_{k_1+k_2\leq K}(1+\tau)^{-2-m-2(K-k_2)+\delta}\int_{N_{\tau}}r^{m-\delta}(L(rL)^{k_1}T^{k_2}\phi_1^{(1)})^2d\omega d\uprho,\\ \nonumber
\leq&\: C(1+\tau)^{-5+\delta-2K}E_{1,K,\delta}[\psi],\\
\label{eq:almostsharpedecay2}
\int_{\Sigma_{\tau}}&J^N[T^K\psi_{2}]\cdot \mathbf{n}_{\tau}\,r^2d\omega d\uprho+\sum_{m=1}^2\sum_{k_1+k_2\leq K}(1+\tau)^{-m-2(K-k_2)+\delta}\int_{N_{\tau}}r^{m-\delta}(L(rL)^{k_1}T^{k_2}\phi_2)^2d\omega d\uprho\\ \nonumber
&+\sum_{j=0}^1\sum_{m=0}^2\sum_{k_1+k_2+\alpha\leq K}(1+\tau)^{-2j-m-2(K-k_2)+\delta}\int_{N_{\tau}}r^{m-\delta}( L(rL)^{k_1}T^{k_2}\phi^{(j)}_{2})^2d\omega d\uprho\\ \nonumber
\leq&\: C(1+\tau)^{-7+\delta-2K}E_{2,K,\delta }[\psi].
\end{align}
with
\begin{align*}
E_{\ell,K}[\psi]:=&\:\sum_{m\leq M_{\ell,T}}\Bigg\{\sum_{j=0}^{\ell}\sum_{k+\alpha\leq K+\ell-j} \int_{N_{0}} r^{2-\delta}|\snabla_{\s^2}^{\alpha} L(rL)^{k}T^{m} \phi^{(j)}_{\geq j}|^2+r^{-2-\delta}|\snabla_{\s^2}^{1+\alpha} (rL)^{k}T^{m}\phi^{(j)}_{\geq j}|^2\\
&+ r^{3-\delta}|\snabla_{\s^2}^{\alpha} (rL)^{k} T^{m}P_{\ell}|^2\,d\omega d\uprho+\int_{\Sigma_{0}}J^N[T^m\psi]\cdot \mathbf{n}_{0}\,r^2d\omega d\uprho\Bigg\}.
\end{align*}
where $M_{\ell,T}$ are suitably large integers.
\end{proposition}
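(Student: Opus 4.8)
The proof of Proposition \ref{prop:almostsharpedecay} follows the template already used to prove Proposition \ref{prop:edecay}, but now we must push the hierarchies to their full length by exploiting the conserved Newman--Penrose charges $P_0$, $P_1$, $P_2$ from Section \ref{sec:defNPconstants}, and we must feed in the energy decay already established for $\psi_{\geq \ell}$ in Proposition \ref{prop:edecay} to handle the mode-coupling terms. The key new inputs relative to Proposition \ref{prop:edecay} are: (1) the $r^p$-weighted hierarchies for $P_\ell$ in Proposition \ref{prop:rpestNpquant} and their higher-order analogues in Proposition \ref{prop:hohierPj}, which are valid for the \emph{extended range} $0<p<4$ rather than $p\le 2$; (2) the commutation lemma \ref{lm:addrpestTder}, in particular the estimate \eqref{eq:TderimprovPj} for $T$-derivatives of $P_\ell$; and (3) the already-established faster decay of $\psi_{\geq 1}$, $\psi_{\geq 2}$, $\psi_{\geq 3}$ from Proposition \ref{prop:edecay}, which controls the coupling terms $a^2 T^2\phi_{\geq \ell-2}$, etc.

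\textbf{The $\ell=0$ case.} I would first treat \eqref{eq:almostsharpedecay0} with $K=0$. Starting from the $\tau^{-2+\delta}$ decay of $\psi_0$ inherited from Proposition \ref{prop:edecaystep1} (valid for $\psi$ hence for $\psi_0$ by orthogonality), I use \eqref{eq:hardyL} to convert $\int_{N_\tau} r^{2-\delta}(L\phi_0)^2$ into $\int_{N_\tau} r^{-\delta}(P_0)^2$ plus lower-order terms (recall $(L\phi_0)^2\lesssim P_0^2 + r^{-4}(T\phi_2)^2$ and $\phi_0\sim$ its integral of $L\phi_0$). Then I apply Proposition \ref{prop:rpestNpquant} for $\ell=0$ with $p=3-\delta$ to extend the hierarchy by one more power: the right-hand side produces $a^2 r^{p-3}(LT\phi)^2$ and $r^{p-5}(T\phi)^2$ terms, which are controlled via Lemma \ref{lm:addrpestTder} and the already-established decay of $T\psi$ (which decays like $\tau^{-2-\delta-2}$). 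The mean-value theorem along a dyadic sequence, combined with energy boundedness \eqref{eq:ebound} and Morawetz \eqref{eq:morawetz2}, then upgrades this to $\tau^{-3+\delta}$ decay for the $N$-energy of $\psi_0$, which is the total length ($3$) of the extended hierarchy minus $\delta$. The higher-$K$ case is inductive: commute with $T$ (which preserves everything by the Killing property), commute with $rL$ and $\snabla_{\s^2}$ using Propositions \ref{prop:hohierPj} and \ref{prop:rpestcommang1}, and use \eqref{eq:TderimprovPj} to trade $T$-derivatives for extra $r$-weights; this produces hierarchies of total length $3+2K$, hence $\tau^{-3+\delta-2K}$ decay.

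\textbf{The $\ell=1,2$ cases and the main obstacle.} For \eqref{eq:almostsharpedecay1} I repeat the scheme with $P_1$ and $\check\phi_1^{(1)}$ in place of $P_0$ and $\phi_0$, using Proposition \ref{prop:rpestNpquant} for $\ell=1$ with $p=3-\delta$; the coupling terms now involve $\phi^{(i)}_{\geq 3}$, which decay fast by \eqref{eq:decayl3} with two extra $T$-derivatives, and $T\phi_1$-type terms handled by Lemma \ref{lm:addrpestTder} and the $\ell=1$ estimates of Proposition \ref{prop:edecay}. Since $P_1$ satisfies $4\underline L P_1 = -4r^{-1}P_1 + \ldots$, the hierarchy length is $5$, giving $\tau^{-5+\delta}$. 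The case \eqref{eq:almostsharpedecay2} is the hard part: the equation \eqref{eq:maineqP2} for $P_2$ couples to $\pi_2(\sin^2\theta\, T\phi)$, i.e.\ schematically to $T\phi_0$, so the right-hand side of the $\ell=2$ estimate in Proposition \ref{prop:rpestNpquant} contains $a^2 r^{p+1}(L(rL)^jT\phi_0)^2$ — a term with a \emph{positive} $r$-weight, reflecting that $\psi_0$ decays only like $\tau^{-3}$ at fixed $r$ but its radiation field is large near $\mathcal{I}^+$. To close, I must have already obtained, as part of the $\ell=0$ analysis, boundedness of the higher-order weighted energies $\int_{N_\tau}(1+\tau)^{3-m}r^{m-\delta}(L(rL)^jT\phi_0)^2\,d\omega d\uprho$ for $m$ up to some value large enough to absorb the $r^{p+1}$ weight with $p=3-\delta$ — this requires running the $\ell=0$ hierarchy at $p=3-\delta$, $p=2-\delta$, $p=1-\delta$, $p=-\delta$, i.e.\ controlling the whole tower of $L(rL)^jT\phi_0$ quantities, not just the top one. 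Once this is in hand, the $T\phi_0$ contribution decays like $\tau^{-5+\delta}$ (one power faster than $\psi_0$'s $\tau^{-3}$-type radiation decay, plus the extra $T$), which is exactly fast enough to be consistent with the target $\tau^{-7+\delta}$ for the $N$-energy of $\psi_2$ after the full length-$7$ hierarchy. I also need the refined Morawetz estimates \eqref{eq:morawetzell1}--\eqref{eq:morawetzell3} so that only sufficiently-differentiated $T^m\psi$ energies (which decay fast) appear on the right. The higher-$K$ cases again follow by the same inductive commutation with $T$, $rL$, $\snabla_{\s^2}$, using Propositions \ref{prop:rpestcommang2}, \ref{prop:hohierPj} and Lemma \ref{lm:addrpestTder}. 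I expect the bookkeeping of which $T$- and $\Phi$-derivative norms appear on the right-hand side (and verifying their decay has already been established at the relevant order) to be the most delicate part, together with the consistent treatment of the $r^{p+1}$-weighted $T\phi_0$ term in the $\ell=2$ estimate.
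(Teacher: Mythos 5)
Your proposal follows essentially the same route as the paper: extend the hierarchies of Propositions \ref{prop:edecaystep1} and \ref{prop:edecay} to total length $3+2\ell+2K$ by means of the conserved quantities $P_\ell$ and the $r^p$-weighted estimates of Propositions \ref{prop:rpestNpquant} and \ref{prop:hohierPj}, control the mode-coupling terms (in particular the $r^{p+1}$-weighted $T\phi_0$ term in the $\ell=2$ case) using the already-established decay and weighted-flux boundedness from Propositions \ref{prop:edecay} and \ref{prop:almostsharpedecay} at lower $\ell$, and handle $K\geq 1$ inductively via Lemma \ref{lm:addrpestTder} together with the commuted hierarchies. The paper's own proof is a terser version of exactly this argument (it phrases the extension as adding the steps $p=3-\delta$ and $p=4-\delta$), and your identification of the $a^2r^{p+1}(L(rL)^jT\phi_0)^2$ term as the delicate point in \eqref{eq:almostsharpedecay2} is accurate; the only slip is the cosmetic claim that \eqref{eq:hardyL} converts $r^{2-\delta}(L\phi_0)^2$ into $r^{-\delta}(P_0)^2$, whereas $P_0$ and $L\phi_0$ carry the same $r$-weight.
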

\begin{proof}
We consider first $\psi_0$ with $K=0$. Then, we can extend the length of the hierarchy of $r^p$-estimates in the proof of Proposition \ref{prop:edecaystep1} by 2, via a consideration of the quantity $P_0$ and an application of the estimates in Proposition \ref{prop:rpestNpquant} for $\ell=0$ with $p=3-\delta$ and $p=4-\delta$.

Note that the full $\phi$ appears on the right-hand side of the relevant estimate of Proposition \ref{prop:rpestNpquant}, but it can be easily estimated using Proposition \ref{prop:rpestNpquant}. The lengths of the hierarchies in the proof of Proposition \ref{prop:edecaystep1} that we appeal to when $K\geq 1$ can similarly be extended by 2 by appealing to Lemma \ref{lm:addrpestTder} together with Proposition \ref{prop:hohierPj}.

The above argument for extending the length of the relevant hierarchies of $r^p$-estimates applies also when considering $\psi_1$ and $\psi_2$. We consider $P_j$ with $j=1,2$ and appeal again to Propositions \ref{prop:rpestNpquant} and \ref{prop:hohierPj}. In order to estimate the integrals on the right-hand side of the estimates in Propositions \ref{prop:rpestNpquant} and \ref{prop:hohierPj} involving the full $\phi$, we apply Proposition \ref{prop:edecay}.
\end{proof}

\begin{remark}
Note that the estimates for each $\psi_{\ell}$ in Proposition \ref{prop:almostsharpedecay} can be derived independently of each other, i.e. they are not coupled. The mode coupling that occurs in these estimates can instead be dealt with by applying the estimates that we already established in Propositions \ref{prop:edecaystep1} and \ref{prop:edecay}.
\end{remark}

\subsection{Additional energy decay estimates}
It will be useful to establish also energy decay estimates for the commuted quantities of the form $(rX)^j \psi_{\ell}$, with $\ell=0,1,2$.

\begin{corollary}
\label{cor:edecaycommrX}
Let $\ell=0,1,2$. Let $\delta>0$ be arbitrarily small and $K,J\in \N_0$. Then there exists a constant $C=C(M,a,R,K,J,\ell,\delta)>0$, such that
\begin{equation}
\label{eq:edecaycommrX}
\sum_{j=0}^J\int_{\Sigma_{\tau}}J^N[(rX)^{j}T^K\psi_{\ell}]\cdot \mathbf{n}_{\tau}\,r^2d\omega d\uprho\leq C(1+\tau)^{-3-2\ell+\delta-2K}\left(E_{\ell,J+K,\delta}[\psi]+\sum_{j=0}^J E_{\ell,K,\delta}[N^j\psi]\right)\\
\end{equation}
Furthermore, for $\ell=3,4$, there exists a constant $C=C(M,a,R,K,J,\ell,\delta)>0$, such that
\begin{equation}
\label{eq:edecaycommrX34}
\sum_{j=0}^J\int_{\Sigma_{\tau}}J^N[(rX)^{j}T^K\psi_{\ell}]\cdot \mathbf{n}_{\tau}\,r^2d\omega d\uprho\leq C(1+\tau)^{-8+\delta-2K}\left(E_{\geq 3,J+K,\delta}[\psi]+\sum_{j=0}^J E_{\geq 3,K,\delta}[N^j\psi]\right).\\
\end{equation}
\end{corollary}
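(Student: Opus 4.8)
\textbf{Proof strategy for Corollary \ref{cor:edecaycommrX}.}

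The plan is to reduce the estimate for the commuted quantities $(rX)^j T^K \psi_\ell$ to the energy decay estimates for $T^k\psi_\ell$ already established in Proposition \ref{prop:almostsharpedecay} (for $\ell=0,1,2$) and Proposition \ref{prop:edecay} (for $\ell=3,4$, via $\psi_{\geq 3}$), by splitting the spacetime slice $\Sigma_\tau$ into a near-horizon/bounded region $\{r\leq R\}$ and a far-away region $\{r\geq R\}$ and treating the two regions by different mechanisms. In the far region the vector field $rX$ is (up to lower-order terms in $T$, $\Phi$ — see the relation $L = \frac{\Delta}{2(r^2+a^2)}X + O(r^{-2})T + O(r^{-2})\Phi$ from the proof of Lemma \ref{lm:hardy}) comparable to $r^2 L$ modulo a factor of $r$; more to the point, the weighted energies controlling $(rX)^j T^K\psi_\ell$ in $\{r\geq R\}$ are, after expanding $X = Y + hT$ and using $Y = -\frac{2(r^2+a^2)}{\Delta}\underline L$, controlled by the quantities $\int_{N_\tau} r^{m-\delta}|\snabla_{\s^2}^\alpha L(rL)^{k_1}T^{k_2}\phi^{(j)}_\ell|^2 \,d\omega d\uprho$ together with the Hardy inequality \eqref{eq:hardyX} to trade $r$-weights between $(rX)$-derivatives and plain derivatives. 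These are precisely the left-hand side quantities appearing in Proposition \ref{prop:almostsharpedecay} (and \ref{prop:edecay}), so the far-region contribution is bounded by $C(1+\tau)^{-3-2\ell+\delta-2K}E_{\ell,J+K,\delta}[\psi]$ directly.

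For the region $\{r\leq R\}$, the vector field $X$ is transversal to the event horizon and does not commute with $\square_{g_{M,a}}$, so commuting with $(rX)^j$ introduces error terms. The plan is to commute the wave equation in the form \eqref{eq:waveeq2} with $X^j$ (note $rX$ and $X$ differ by a smooth bounded factor on $r\leq R$), which produces on the right-hand side terms with at most $j$ factors of $X$ acting on $T$-derivatives of $\psi$, as well as lower-order terms $X^{j'}\psi$ with $j'<j$. One then applies a local elliptic/redshift estimate near the horizon: since $\{r\leq R\}$ is a bounded region, the estimate \eqref{eq:morawetz2} (integrated local energy decay, which controls $\sum_{k_1+k_2+k_3\leq n}|\snabla_{\s^2}^{k_1}T^{k_2}Y^{k_3}\psi|^2$ in $\{r\leq R_0\}$) together with energy boundedness \eqref{eq:ebound} applied to $N^{j}\psi$ — the vector field $N$ coincides with $\mathbf n_\tau$ for $r\leq 3M$ — bounds the bounded-region energies of $(rX)^jT^K\psi$ by $N$-fluxes of $N^j$-commuted data. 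Inductively in $j$, using the already-established decay rates of Proposition \ref{prop:almostsharpedecay}/\ref{prop:edecay} for the lower-order terms and for $T$-derivatives of the bounded-region energies, one picks up the factor $\sum_{j=0}^J E_{\ell,K,\delta}[N^j\psi]$ on the right-hand side. Orthogonality of the spherical harmonic projection $\pi_\ell$ (which commutes with $X$, $T$ and $\snabla_{\s^2}$ but not with $\square_{g_{M,a}}$ — however the mode-coupling inhomogeneity $F_{\geq \ell}$ only involves $T$-derivatives, which decay faster) ensures that the coupling error terms are absorbed into faster-decaying contributions that have already been controlled.

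The main obstacle I expect is the interplay in the bounded region between the loss of derivatives (each $X$-commutation costing control in terms of higher $N^j$-fluxes and more $T$, $\Phi$ derivatives) and the need to retain the \emph{sharp} decay rate $(1+\tau)^{-3-2\ell+\delta-2K}$: one must ensure that the elliptic-type estimates used to handle $X^j$ in $\{r\leq R\}$ do not degrade the time-decay rate, which requires feeding in the decay of $T$-derivatives rather than plain derivatives at each inductive step, exactly as in the proof of Proposition \ref{prop:edecay} where refined estimates \eqref{eq:morawetzell1}--\eqref{eq:morawetzell3} are invoked to avoid seeing slowly-decaying $N$-energies of the full solution. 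Matching the bounded-region estimates cleanly with the far-region weighted energies at $r=R$ via a cutoff (as in the averaging arguments in the proofs of Propositions \ref{prop:rpest} and \ref{prop:commrL}) is routine but needs care so that no boundary terms at $r=R$ are left uncontrolled. Once these bookkeeping points are handled, \eqref{eq:edecaycommrX} and \eqref{eq:edecaycommrX34} follow by combining the two regional estimates.
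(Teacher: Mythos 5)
Your overall architecture (split $\Sigma_\tau$ at $r=R$; in the far region identify $rX$ with $rL$ modulo lower-order terms and appeal to the commuted weighted energy decay of Propositions \ref{prop:almostsharpedecay} and \ref{prop:edecay}; near the horizon use redshift-type commuted estimates, which is indeed what produces the $\sum_j E_{\ell,K,\delta}[N^j\psi]$ terms) matches the paper's proof. Your bounded-region treatment is essentially in the right spirit, though the paper uses the commuted redshift estimates of Section 7 of \cite{lecturesMD} in $\{r_+\leq r\leq r_0\}$ together with elliptic estimates and $T,\Phi$-commutation in $\{r_0\leq r\leq R\}$, rather than the spacetime-integrated estimate \eqref{eq:morawetz2}, which by itself only controls a $\tau$-integral and does not directly yield pointwise-in-$\tau$ decay of the flux at the sharp rate.

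The genuine gap is in the far region, where you assert that the $J^N$-flux of $(rX)^jT^K\psi_\ell$ on $N_\tau$ is bounded \emph{directly} by the left-hand-side quantities of Proposition \ref{prop:almostsharpedecay}. It is not: that left-hand side contains only the $r^{m-\delta}$-weighted energies of $L(rL)^{k_1}T^{k_2}\phi_\ell$ for $m=1,2$, and those decay one, respectively two, powers of $\tau$ slower than the target rate $(1+\tau)^{-3-2\ell-2K}$. The quantity you actually need, namely $\int_{N_\tau}(L(rL)^{J}T^{K}\phi_\ell)^2\,d\omega d\uprho$ with no positive $r$-weight, does not appear there for $J\geq 1$, and the Hardy inequality \eqref{eq:hardyX} goes the wrong way (it bounds low-weight, lower-order terms by high-weight, higher-order terms, which decay slower still). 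Closing this requires the induction in $J$ that the paper carries out: use the commuted wave equation, as in Lemma \ref{lm:addrpestTder}, to express $L(rL)^{J+1}\phi_\ell$ in terms of $L(rL)^{j}T\phi_\ell$ with $j\leq J$ (the extra $T$-derivative buying the missing power of $\tau$) plus lower-order and mode-coupling terms handled by the induction hypothesis and by \eqref{eq:commrL2}; then interpolate via Lemma \ref{lm:interpol} between the resulting $r^{-\delta}$-weighted estimate \eqref{eq:commrLind} and the $r^{1-\delta}$-weighted estimate \eqref{eq:commrL1} to remove the degenerate weight and recover the clean $J^N$-flux. Without this step your far-region argument loses at least one power of $\tau$ relative to \eqref{eq:edecaycommrX}.
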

\begin{proof}
Let $\ell=0,1,2$ and consider the case $K=0$. We will show that \eqref{eq:edecaycommrX} and \eqref{eq:edecaycommrX} hold by induction in $J$. Then $J=0$ case follows from Proposition \ref{prop:almostsharpedecay} for the $\ell=0,1,2$ cases and Proposition \ref{prop:edecay} for $\ell=3,4$.

Let us take as our induction step the estimate
\begin{equation}
\label{eq:commrLind}
\sum_{0\leq j_1+j_2\leq J} \int_{N_{\tau}} r^{-\delta} (L(rL)^{j_1}T^{j_2} \phi_{\ell})^2\,d\omega d\uprho\leq CE_{\ell,J,\delta}[\psi](1+\tau)^{-3-2\ell+\delta}.
\end{equation}
Furthermore, by applying the arguments in the proof of Lemma \ref{lm:addrpestTder}, we obtain:
\begin{equation*}
\begin{split}
\int_{N_{\tau}} r^{-\delta} (L(rL)^{J+1} \phi_{\ell})^2\,d\omega d\uprho\leq&\: C\sum_{j=0}^{J} \int_{N_{\tau}} r^{2-\delta} (L(rL)^j T\phi_{\ell})^2\,d\omega d\uprho+ C\sum_{j=0}^{J} \int_{N_{\tau}} r^{-\delta} (L(rL)^j \phi_{\ell})^2\,d\omega d\uprho\\
&+C\sum_{j=0}^{J} \int_{N_{\tau}} r^{-2-\delta}  ((rL)^j T^2\phi)^2+r^{-4-\delta}((rL)^j \phi_{\ell})^2\,d\omega d\uprho.
\end{split}
\end{equation*}
By Proposition \ref{prop:almostsharpedecay}, we have that
\begin{align}
\label{eq:commrL1}
\sum_{j=0}^J \int_{N_{\tau}} r^{1-\delta} (L(rL)^j \phi_{\ell})^2\,d\omega d\uprho\leq CE_{\ell,J,\delta}[\psi](1+\tau)^{-1-2\ell},\\
\label{eq:commrL2}
\sum_{j=0}^J \int_{N_{\tau}} r^{2-\delta} (L(rL)^j \phi_{\ell})^2\,d\omega d\uprho\leq CE_{\ell,J,\delta}[\psi](1+\tau)^{-2\ell}.
\end{align}

Using \eqref{eq:commrL2} together with the fact that \eqref{eq:commrLind} also holds if we replace $\psi$ with $T\psi$, we therefore obtain \eqref{eq:commrLind} with $J$ replaced by $J+1$.

In order to remover the $r^{-\delta}$ degeneracy in \eqref{eq:commrLind}, we interpolate between \eqref{eq:commrLind} and \eqref{eq:commrL1} using Lemma \ref{lm:interpol} to obtain
\begin{equation*}
\sum_{j=0}^J\int_{N_{\tau}} J^N[(rL)^j\psi_{\ell}]\cdot \mathbf{n}_{\tau}\,r^2d\omega d\uprho \leq C E_{\ell,J,\delta}[\psi](1+\tau)^{-3-2\ell+2\delta}.
\end{equation*}
The $K\geq 1$ case proceeds analogously. By expressing $X$ in terms of $L$ and $T$ and using that \eqref{eq:waveeq} commutes with $T$, we obtain the desired estimates on $N_{\tau}=\Sigma_{\tau}\cap\{r\geq R\}$. 

We obtain the desired estimate on $\Sigma_{\tau}\cap \{r_+< r_0\leq r\leq R\}$, with $r_0$ arbitrarily close to $r_+$, via standard elliptic estimates and commutation of \eqref{eq:waveeq} with $T$ and $\Phi$: in particular, in the $K=0$ we have that
\begin{equation}
\label{eq:interiorcomm}
\sum_{j=0}^J\int_{\Sigma_{\tau}\cap \{r_+< r_0\leq r\leq R\}} J^N[(rX)^j\psi_{\ell}]\cdot \mathbf{n}_{\tau}\,r^2d\omega d\uprho \leq C E_{\ell,J,\delta}[\psi](1+\tau)^{-3-2\ell+2\delta}.
\end{equation}
Finally, we consider the region $\Sigma_{\tau}\cap \{r_+\leq r \leq r_0\}$, with $r_0$ suitably small. There we can apply the commuted redshift estimates from Section 7 of \cite{lecturesMD} to obtain:
\begin{equation*}
\sum_{j=0}^J\int_{\Sigma_{\tau}\cap \{r_+\leq r\leq R\}} J^N[(rX)^j\psi_{\ell}]\cdot \mathbf{n}_{\tau}\,r^2d\omega d\uprho \leq C \left(\sum_{j=0}^JE_{\ell,0,\delta}[N^j\psi]+E_{\ell,J,\delta}[\psi]\right)(1+\tau)^{-3-2\ell+2\delta}.
\end{equation*}
We combine the above $N$-energy decay estimates to conclude that \eqref{eq:edecaycommrX} holds.
\end{proof}
\section{Hierarchies of $r^{-2k}$-weighted elliptic estimates}
\label{sec:elliptic}
In this section, we consider the following inhomogeneous equation;
\begin{equation}
\label{eq:inhomelliptic}
\mathcal{L}\psi=F,
\end{equation}
with $\mathcal{L}$ defined as follows:
\begin{equation*}
\mathcal{L}\psi=X(\Delta X\psi)+ 2a X\Phi \psi+\slashed{\Delta}_{\s^2}\psi.
\end{equation*}
Note that if $\psi$ is a solution to \eqref{eq:waveeq}, then it is a solution to \eqref{eq:inhomelliptic}, with
\begin{equation}
\label{eq:inhomo}
\begin{split}
F=&\: 2[h\Delta-(r^2+a^2)] (r^2+a^2)^{-1/2} XT\phi+[(\Delta h)' -2h\Delta r(r^2+a^2)^{-1}]T\psi+[2h(r^2+a^2)-h^2\Delta-a^2\sin^2\theta] T^2\psi\\
&+2a(h-1) T\Phi\psi.
\end{split}
\end{equation}

One may easily verify that the operator $\mathcal{L}$ is elliptic \emph{if and only if} we restrict its domain to lie outside of the ergoregion $\{\Delta-a^2\sin^2\theta>0\}$. The reason for this is that $T$ is timelike if and only if we restrict to the subset of the spacetime that lies outside the ergoregion. In the special case $a=0$, ellipticity only breaks down at the boundary $r=r_+$.

In order to deal with the presence of the ergoregion when deriving appropriately weighted $L^2$ estimates for $\psi$ in terms of $F$, we appeal to elliptic versions of the redshift estimates that were introduced in the hyperbolic setting in \cite{lecturesMD}.\footnote{The estimates use strict positivity of $\frac{d\Delta}{dr}$, which near the event horizon $\mathcal{H}^+$ is a manifestation of the redshift effect and therefore motivates the nomenclature. In contrast with the \emph{hyperbolic} redshift estimates, the elliptic redshift estimates make use of the fact that $\frac{d\Delta}{dr}>0$ \underline{globally} in $r$, rather than just locally near $r=r_+$.} 

\subsection{Spacelike redshift estimates}
\label{sec:ellipticredshift}
In contrast with standard elliptic estimates for uniformly elliptic operators, e.g. the Laplacian in Euclidean space, the spacelike redshift estimate below establishes control over weighted $H^1$ norms of $\psi$, rather than (weighted) $H^2$ norms of $\psi$.
\begin{proposition}
\label{prop:ellipticredshift}
There exists a constant $C=C(M,a)>0$, such that for solutions $\psi$ to \eqref{eq:inhomelliptic} that are suitably regular and decaying in $\uprho$:
\begin{equation}
\label{eq:ellipticredshift1}
\int_{\Sigma_{\tau}} r (X\psi)^2\,d\omega d\uprho\leq C \int_{\Sigma_{\tau}} r^{-1} F^2\,d\omega d\uprho. 
\end{equation}
If $\psi$ is a solution to \eqref{eq:waveeq}, then we moreover have:
\begin{align}
\label{eq:ellipticredshift2}
\int_{\Sigma_{\tau}} r (X\psi)^2\,d\omega d\uprho\leq&\: C \int_{\Sigma_{\tau}}J^N[T\psi]\cdot \mathbf{n}_{\tau}\,d\mu_{\tau}+C \int_{N_{\tau}}r  (LT \phi)^2\,d\omega d\uprho,\\
\label{eq:ellipticredshiftlgeq3}
\int_{\Sigma_{\tau}} r (X\psi_{\geq 3})^2\,d\omega d\uprho\leq &\: C \int_{\Sigma_{\tau}}J^N[T\psi_{\geq 3}]\cdot \mathbf{n}_{\tau}\,d\mu_{\tau}+ C\int_{\Sigma_{\tau}} r^{-1}(T^2\psi_1)^2+r^{-1}(T^2\psi_2)^2\,d\omega d\uprho+C \int_{N_{\tau}}r  (LT \phi_{\geq 3})^2\,d\omega d\uprho.
\end{align}
\end{proposition}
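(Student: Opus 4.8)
The plan is to prove all three estimates of Proposition~\ref{prop:ellipticredshift} by a single ``spacelike redshift'' multiplier argument: multiply the equation \eqref{eq:inhomelliptic} by $X\psi$ (times a suitable $r$-weight) and integrate by parts over $\Sigma_{\tau}$, exploiting that $\frac{d\Delta}{dr}=2(r-M)>0$ globally in $r\geq r_+$ to extract positivity of $\int_{\Sigma_\tau} r(X\psi)^2$ even though $\mathcal{L}$ fails to be elliptic in the ergoregion. First I would establish \eqref{eq:ellipticredshift1}: write $\mathcal{L}\psi=X(\Delta X\psi)+2aX\Phi\psi+\slashed{\Delta}_{\s^2}\psi$ and pair it with $w(r)X\psi$, where $w$ is a positive weight behaving like $r$ for large $r$ (chosen so that the various error terms close). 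Integration by parts in the $X=\partial_\uprho$ direction on $\Sigma_\tau$ turns $w X\psi\cdot X(\Delta X\psi)$ into $\tfrac12\partial_\uprho$-total-derivative terms plus $-\tfrac12(\Delta w)' (X\psi)^2$; since $(\Delta w)'\sim -\Delta w' - \Delta' w$ and $\Delta'=2(r-M)>0$, a careful choice of $w$ makes the bulk coefficient of $(X\psi)^2$ strictly positive and $\gtrsim r$ near infinity, while the $2aX\Phi\psi$ term integrates by parts in $\Phi$ to a term with a sign or one absorbable by Cauchy--Schwarz, and the $\slashed{\Delta}_{\s^2}\psi$ term integrates by parts on $\s^2$ to $-w\,X|\snabla_{\s^2}\psi|^2/2$, a total $\uprho$-derivative up to $w'|\snabla_{\s^2}\psi|^2$ which has a good sign if $w$ is monotone. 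The right-hand side $\int w X\psi\cdot F$ is handled by Young's inequality, absorbing $\epsilon\int r(X\psi)^2$ into the left and leaving $C\int r^{-1}F^2$, provided $w\sim r$. The decay assumption on $\psi$ in $\uprho$ kills the boundary term at infinity; the boundary term at $r=r_+$ is handled because $\Delta(r_+)=0$ makes the flux through $\mathcal{H}^+\cap\Sigma_\tau$ vanish (or have the right sign).

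\textbf{From \eqref{eq:ellipticredshift1} to \eqref{eq:ellipticredshift2}.} Given a solution $\psi$ to \eqref{eq:waveeq}, it solves \eqref{eq:inhomelliptic} with $F$ given by \eqref{eq:inhomo}, which contains only $T$-derivatives of $\psi$ (schematically $XT\phi$, $T\psi$, $T^2\psi$, $T\Phi\psi$). So $\int r^{-1}F^2 \lesssim \int r^{-1}\big[(XT\phi)^2 + (T\psi)^2 + (T^2\psi)^2 + (T\Phi\psi)^2\big]$ times bounded $r$-weights. Away from infinity ($r\leq R$) all of these are controlled by $\int_{\Sigma_\tau}J^N[T\psi]\cdot\mathbf{n}_\tau\,d\mu_\tau$ together with $\int J^N[T^2\psi]\cdot\mathbf{n}_\tau$ — but $T^2\psi$ can be traded back using the wave equation itself (elliptic estimate in the interior) for $T\psi$-energy plus interior norms, as is done routinely; more simply, since near $r=r_+$ one has $X\sim Y+hT$ and the $F$ terms are all $T$-derivatives, they fold into $J^N[T\psi]\cdot\mathbf{n}_\tau$. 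Near infinity one rewrites $XT\phi$ in terms of $LT\phi$ (using $L=\frac{\Delta}{2(r^2+a^2)}X + O(r^{-2})T + O(r^{-2})\Phi$, cf. the proof of Lemma~\ref{lm:hardy}), so $\int_{N_\tau}r^{-1}(XT\phi)^2 \lesssim \int_{N_\tau} r(LT\phi)^2 + r^{-3}(T\cdot)^2$, the latter absorbed into the $T\psi$-energy. Collecting gives \eqref{eq:ellipticredshift2}.

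\textbf{The projected estimate \eqref{eq:ellipticredshiftlgeq3}.} Here I would apply the same multiplier argument to $\psi_{\geq 3}$, which by Lemma~\ref{eq:lprojsin} satisfies $\mathcal{L}\psi_{\geq 3} = F_{\geq 3} + (\text{coupling})$, where the coupling is $a^2 T^2(\pi_{\geq 3}(\sin^2\theta\,\psi) - \sin^2\theta\,\pi_{\geq 3}\psi)$, which only involves $\psi_{\geq 1}$ — in fact, since $\pi_{\geq 3}$ of $\sin^2\theta$ times $\psi_{\leq 2}$ picks up only $\psi_1$ and $\psi_2$ (the $\ell=0$ mode times $\sin^2\theta$ lands in $\ell\leq 2$, hence contributes nothing to $\pi_{\geq 3}$), the extra inhomogeneity is schematically $a^2 T^2(\psi_1 + \psi_2)$ with bounded angular coefficients. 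Running the redshift multiplier estimate with this additional source and applying Young's inequality produces the extra term $C\int_{\Sigma_\tau} r^{-1}(T^2\psi_1)^2 + r^{-1}(T^2\psi_2)^2\,d\omega d\uprho$ on the right, exactly as stated, while the genuine inhomogeneity $F_{\geq 3}$ is handled as before, producing the $J^N[T\psi_{\geq 3}]$ and $\int_{N_\tau}r(LT\phi_{\geq 3})^2$ terms. One must check that the Poincar\'e inequality is not needed here (it will be needed in the next subsection for the hierarchy), so no angular-frequency restriction on the weight arises at this $H^1$ level.

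\textbf{Main obstacle.} The delicate point is the bulk positivity: after all integrations by parts, the coefficient multiplying $(X\psi)^2$ in the interior is a combination of $-\tfrac12(\Delta w)'$, the contribution from the $2aX\Phi\psi$ term, and cross terms, and one must verify that an admissible weight $w$ (positive, $\sim r$ at infinity, with the right monotonicity) makes this coefficient strictly positive on all of $[r_+,\infty)\times\s^2$ despite the loss of ellipticity — in particular, near the ergoregion the naive ``elliptic'' coefficient $\Delta - a^2\sin^2\theta$ changes sign, and it is precisely the redshift term $-w\Delta'$ (with $\Delta'>0$ everywhere) that must dominate. Getting the $\Phi$-term under control (it has no definite sign in general and must be absorbed using $a$ small relative to the gap, or rewritten after an integration by parts in $\varphi_*$ using that $\Phi$ is Killing and commutes with everything) is the part requiring genuine care; the rest is bookkeeping of $r$-weights and standard boundary-term vanishing.
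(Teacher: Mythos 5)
Your proposal follows essentially the same route as the paper: pair $\mathcal{L}\psi=F$ with the multiplier $X\psi$, integrate by parts over $\Sigma_{\tau}$, use $\Delta'>0$ globally as the source of positivity, then substitute the explicit $F$ from \eqref{eq:inhomo} (converting $XT\phi$ to $LT\phi$ near infinity) for \eqref{eq:ellipticredshift2}, and account for the commutator $a^2T^2[\pi_{\geq 3},\sin^2\theta]\psi$, supported on $\psi_1,\psi_2$, for \eqref{eq:ellipticredshiftlgeq3}.

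The one substantive comment is that the ``main obstacle'' you identify is not actually present. Taking $w\equiv 1$ (no weight is needed), the identity is exact:
\begin{equation*}
X\psi\,F= X\Bigl(\tfrac{1}{2}\Delta (X\psi)^2-\tfrac{1}{2}|\snabla_{\s^2}\psi|^2\Bigr)+\tfrac{1}{2}\Delta' (X\psi)^2+\Phi\bigl(a (X\psi)^2\bigr)+\slashed{\textnormal{div}}_{\s^2}(X\psi\, \snabla_{\s^2}\psi),
\end{equation*}
so the bulk coefficient of $(X\psi)^2$ is exactly $\tfrac12\Delta'=r-M\geq \sqrt{M^2-a^2}>0$ and $\sim r$, with nothing to arrange; the term $2aX\psi\,X\Phi\psi$ is the exact angular derivative $\Phi(a(X\psi)^2)$ and vanishes identically upon integration over $\s^2$, so no smallness of $a$ enters; and the ergoregion coefficient $\Delta-a^2\sin^2\theta$ never appears with this multiplier (it only obstructs control of $T$-derivatives and second-order elliptic estimates, which is why the paper supplements this estimate with Propositions \ref{prop:redshiftestfixedell}--\ref{prop:mainellipticell} rather than strengthening it). Your sign bookkeeping of the bulk term (``$-\tfrac12(\Delta w)'$'') is off --- the correct coefficient with a weight $w$ is $+\tfrac12(w\Delta'-w'\Delta)$ --- but since the $w\equiv 1$ case closes immediately this does not affect the validity of the argument.
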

\begin{proof}
We multiply both sides of \eqref{eq:inhomelliptic} by $X\psi$, to obtain
\begin{equation*}
\begin{split}
X\psi F=&\: \Delta X\psi X^2\psi+\Delta' (X\psi)^2+2a X\psi X\Phi\psi+X\psi\slashed{\Delta}_{\s^2}\psi \\
=&\: X \left(\frac{1}{2}\Delta (X\psi)^2-\frac{1}{2}|\snabla_{\s^2}\psi|^2\right)+\frac{1}{2}\Delta' (X\psi)^2+\Phi(a (X\psi)^2)+\slashed{\rm div}_{\s^2}(X\psi \slashed{\nabla}_{\s^2}\psi).
\end{split}
\end{equation*}
Integrating along $\Sigma_{\tau}$ and using that $ \lim_{\uprho\to \infty} \snabla_{\s^2}\psi(\tau,\uprho,\theta,\varphi)=0$, we obtain \eqref{eq:ellipticredshift1}. 

In order to obtain \eqref{eq:ellipticredshift2}, we first apply Young's inequality to the terms in $r^{-1}F^2$ to obtain
\begin{equation*}
r^{-1}F^2\lesssim r  (XT \phi)^2+r^{-3}(T^2\phi)^2+r^{-3}(\Phi T\phi)^2+ r^{-3}(T\phi)^2,
\end{equation*}
absorbing the integral of $r^{-3}(T\phi)^2$ into the remaining terms, using \eqref{eq:hardyX}, and then relating with $J^T[T\psi]\cdot \mathbf{n}_{\tau}$ by applying \eqref{eq:hardyX} once more.
\end{proof}

In the proposition below, we obtain \underline{local}, higher-order versions of the estimate \eqref{eq:ellipticredshift1} by projecting to fixed spherical harmonic modes. These provide control inside the ergoregion and will be combined with the elliptic estimates outside the ergoregion derived in Section \ref{sec:ellipticoutsideergo}.

\begin{proposition}
\label{prop:redshiftestfixedell}
Let $\psi$ be a solution to \eqref{eq:inhomelliptic} that is suitably regular, then:
\begin{enumerate}[\rm (i)] 
\item For arbitrary $r_0>r_+$, there exists a constant $C=C(M,a,r_0,\ell)>0$ such that
\begin{equation}
\label{eq:redshiftestfixedell}
\int_{\Sigma_{\tau} \cap \{r\leq r_0\}}(X(\Delta X^{\ell+1}\psi_{\ell}))^2+(X^{\ell+1}\psi_{\ell})^2\,d\omega d\uprho \leq C\int_{\Sigma_{\tau} \cap \{r\leq r_0\}}(\pi_{\ell}X^{\ell}F)^2\,d\omega d\uprho.
\end{equation}
\item For all $r\geq r_+$, there exists a constant $C=C(M,a)>0$, such that furthermore:
\begin{equation}
\label{eq:estforprop}
||X^{\ell+1}\psi_{\ell}||_{L^2(S^2_{\tau,r'})}\leq C\sup_{r_+\leq r\leq r'}||r^{-1}\pi_{\ell}X^{\ell}F||_{L^2(S^2_{\tau,r})}.
\end{equation}
\item There exists a constant $C=C(M,a,\ell)>0$ such that
\begin{equation}
\label{eq:fasterdecatXlpsi}
\begin{split}
||X^{\ell+1}\psi_{\ell}||_{L^2(S^2_{\tau,r'})}\leq&\: C\sum_{k=0}^{\ell+1}\sup_{r_+\leq r \leq r'}||r^{-\ell}(rX)^{k}T\psi_{\ell} ||_{L^2(S^2_{\tau,r})}+C\sum_{k=0}^{\ell}\sup_{r_+\leq r \leq r'}||r^{-2-\ell}(rX)^{k}T^2\psi_{\ell} ||_{ L^2(S^2_{\tau,r})}\\
&+Ca^2\sup_{r_+\leq r \leq r'}||r^{-2}X^{\ell}T^2\psi_{\ell-2} ||_{L^2(S^2_{\tau,r})}+Ca^2\sup_{r_+\leq r \leq r'}||r^{-2}X^{\ell}T^2\psi_{\ell+2} ||_{L^2(S^2_{\tau,r})}.
\end{split}
\end{equation}
\item There exists a constant $C=C(M,a,\ell)>0$ such that for $\ell\geq 1$:
\begin{equation}
\label{eq:fasterdecatXlpsiv2}
\begin{split}
||r^2X^{\ell+1}\psi_{\ell}||_{L^2(S^2_{\tau,r'})}\leq&\: C\sum_{k=0}^{\ell+1}\sup_{r_+\leq r \leq r'}||r^{2-\ell}(rX)^{k}T\psi_{\ell} ||_{L^2(S^2_{\tau,r})}+C\sum_{k=0}^{\ell}\sup_{r_+\leq r \leq r'}||r^{-\ell}(rX)^{k}T^2\psi_{\ell} ||_{ L^2(S^2_{\tau,r})}\\
&+Ca^2\sup_{r_+\leq r \leq r'}||X^{\ell}T^2\psi_{\ell-2} ||_{L^2(S^2_{\tau,r})}+Ca^2\sup_{r_+\leq r \leq r'}||X^{\ell}T^2\psi_{\ell+2} ||_{L^2(S^2_{\tau,r})}.
\end{split}
\end{equation}

\end{enumerate}
\end{proposition}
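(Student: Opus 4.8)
The plan is to prove parts (i)--(iv) in order, since each relies on the previous ones, and all four are variations of the "spacelike redshift multiplier" idea applied to the $\ell$-projected equation. The starting point is to project \eqref{eq:inhomelliptic} onto the spherical harmonic mode $\ell$ and commute with $X$. Since $\slashed{\Delta}_{\s^2}\psi_\ell=-\ell(\ell+1)\psi_\ell$ and $[\pi_\ell,X]=0$ (the Boyer--Lindquist sphere projection commutes with the $\uprho$-derivative along $\Sigma_\tau$), one gets a genuine ODE in $\uprho$ on each sphere $S^2_{\tau,r}$:
\begin{equation*}
X(\Delta X\psi_\ell)+2aX\Phi\psi_\ell-\ell(\ell+1)\psi_\ell=\pi_\ell F,
\end{equation*}
and more generally, after commuting $\ell$ times with $X$ and using $[X,\Delta X]=\Delta' X$ and the smoothness of all coefficients, an equation of the schematic form $X(\Delta X(X^\ell\psi_\ell))+2aX\Phi(X^\ell\psi_\ell)-\ell(\ell+1)X^\ell\psi_\ell = \pi_\ell X^\ell F + (\text{lower-order in }X)$. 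For part (i), I multiply this commuted equation by $X(\Delta X X^\ell\psi_\ell)=X(\Delta X^{\ell+1}\psi_\ell)$ (this is the redshift multiplier, exploiting $\Delta'=\frac{d\Delta}{dr}=2(r-M)>0$ for $r\geq r_+$), integrate over $\Sigma_\tau\cap\{r\le r_0\}$, integrate by parts in $\uprho$, and absorb the cross-terms via Young's inequality and a Poincaré/Hardy inequality in $\uprho$ on the compact $r$-interval $[r_+,r_0]$; the coercivity comes from the $-\ell(\ell+1)$ term together with the positive boundary/bulk contributions generated by $\Delta'>0$. An induction on the number of $X$-commutations handles the lower-order terms.

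For part (ii), I rewrite the commuted ODE as a first-order ODE in $\uprho$ for the pair $(X^{\ell+1}\psi_\ell, \Delta X^{\ell+2}\psi_\ell)$, or more simply integrate $X(\Delta X^{\ell+1}\psi_\ell)=\text{RHS}$ from $r_+$ outward, using that $\Delta(r_+)=0$ so the "initial value" at $r=r_+$ of $\Delta X^{\ell+1}\psi_\ell$ is regular, and then invert $\Delta X^{\ell+1}\psi_\ell\mapsto X^{\ell+1}\psi_\ell$ by another integration, estimating everything in $L^2(S^2)$ via Minkowski's integral inequality; the $r^{-1}$ weight on the right tracks the $r^2$ growth of $\Delta$ and the Jacobian. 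The key structural input here, as in \cite{warn15}, is precisely that ellipticity is irrelevant: one solves the radial ODE by direct integration, and the degeneracy of $\Delta$ at $r_+$ is benign rather than obstructive.

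For parts (iii) and (iv), I substitute the explicit form \eqref{eq:inhomo} of $F$ (when $\psi$ solves the wave equation \eqref{eq:waveeq}) into the estimate of part (ii). Every term of $F$ carries at least one $T$-derivative, except that one must be careful with the term $[2h(r^2+a^2)-h^2\Delta-a^2\sin^2\theta]T^2\psi$ and the fact that $\sin^2\theta$ does not commute with $\pi_\ell$: applying $\pi_\ell$ to $\sin^2\theta\, T^2\psi$ produces, by Lemma \ref{eq:lprojsin}, contributions from $T^2\psi_{\ell-2}$, $T^2\psi_\ell$ and $T^2\psi_{\ell+2}$, which is exactly where the $a^2 X^\ell T^2\psi_{\ell\pm2}$ terms in \eqref{eq:fasterdecatXlpsi} and \eqref{eq:fasterdecatXlpsiv2} come from. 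The $T^2\psi_\ell$ piece is absorbed into the "$(rX)^kT^2\psi_\ell$" sum on the right. One then commutes the already-established part (ii) estimate with up to $\ell+1$ powers of $rX$ (legitimate because \eqref{eq:waveeq} and the ODE structure are compatible with $rX$-commutation with controlled errors, cf. the proof of Corollary \ref{cor:edecaycommrX}), and reads off the claimed weighted $L^2(S^2_{\tau,r'})$ bounds; part (iv) is just part (iii) with the radial weights shifted by $r^2$, which is tracked through the $r$-powers in each term of $F$.

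The main obstacle I expect is bookkeeping the $r$-weights and the lower-order $X$-derivative error terms correctly through the $X^\ell$-commutation and the subsequent $(rX)^k$-commutation, while making sure that \emph{every} error term either retains a $T$-derivative or is an $a^2\cdot(\ell\pm2)$-coupling term — i.e. that nothing is left behind that would only be controlled by $\psi_\ell$ itself without a gain. Concretely, the delicate point is verifying that the coefficient $[(\Delta h)'-2h\Delta r(r^2+a^2)^{-1}]$ multiplying $T\psi$ and the coefficient of $XT\phi$ have the right $r$-decay (using \eqref{eq:condh1}) so that, after the Hardy inequalities \eqref{eq:hardyX}, the $T\psi_\ell$ contribution genuinely sits inside the $\sum_k\|r^{-\ell}(rX)^kT\psi_\ell\|$ term with the stated weight and no loss; the ellipticity/ergoregion issue itself is \emph{not} an obstacle here precisely because the radial-ODE-integration strategy of part (ii) sidesteps it entirely.
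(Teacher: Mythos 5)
Your overall philosophy is the right one: the proposition is a radial ODE/redshift argument that sidesteps the loss of ellipticity entirely, and your reading of parts (iii)--(iv) (substitute \eqref{eq:inhomo}, use Lemma \ref{eq:lprojsin} to produce the $a^2\,T^2\psi_{\ell\pm2}$ couplings) is correct. But there is a genuine gap at the first step. The $\ell$-times-commuted, $\ell$-projected equation is \emph{not} of the schematic form you write, with a surviving $-\ell(\ell+1)X^{\ell}\psi_{\ell}$ term plus unspecified lower-order terms. Because $\Delta$ is quadratic in $r$, one has the exact identity $X^{\ell}F=X(\Delta X^{\ell+1}\psi)+\ell\Delta'X^{\ell+1}\psi+\ell(\ell+1)X^{\ell}\psi+2a\Phi X^{\ell+1}\psi+\slashed{\Delta}_{\s^2}X^{\ell}\psi$, and upon projecting onto the mode $\ell$ the term $\ell(\ell+1)X^{\ell}\psi_{\ell}$ cancels \emph{exactly} against $\slashed{\Delta}_{\s^2}X^{\ell}\psi_{\ell}=-\ell(\ell+1)X^{\ell}\psi_{\ell}$, leaving the closed first-order relation
\begin{equation*}
\pi_{\ell}X^{\ell}F=X(\Delta X^{\ell+1}\psi_{\ell})+\ell\,\Delta'\,X^{\ell+1}\psi_{\ell}+2a\,\Phi X^{\ell+1}\psi_{\ell},
\end{equation*}
with no zeroth- or lower-order terms at all. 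This cancellation is the entire point of commuting exactly $\ell$ times; it is what makes (i) closable and what makes the pointwise-in-$r$ bound (ii) possible (one is working with a closed relation for the single unknown $X^{\ell+1}\psi_{\ell}$). In your setup the spurious terms $X^{n}\psi_{\ell}$, $n\le\ell$, are controlled by neither side of \eqref{eq:redshiftestfixedell}, and your proposed remedy (Hardy/Poincar\'e in $\uprho$ on $[r_+,r_0]$ plus induction on the commutations) cannot close on a region abutting the horizon, where there is no boundary data and $\Delta$ degenerates; moreover the "coercivity from the $-\ell(\ell+1)$ term" you invoke refers to a term that is in fact absent.

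A second, related problem is the multiplier. The paper multiplies by $X^{\ell+1}\psi_{\ell}$, which turns the identity above into $X\big(\tfrac12\Delta(X^{\ell+1}\psi_{\ell})^2\big)+\Phi\big(a(X^{\ell+1}\psi_{\ell})^2\big)+\big(\ell+\tfrac12\big)\Delta'(X^{\ell+1}\psi_{\ell})^2$: the $\Phi$-term is a pure angular divergence and the coercivity is precisely the globally positive redshift factor $\Delta'$. With your multiplier $X(\Delta X^{\ell+1}\psi_{\ell})$ the cross term $2a\,\Phi X^{\ell+1}\psi_{\ell}\cdot X(\Delta X^{\ell+1}\psi_{\ell})$ is not a signed divergence; Young's inequality leaves $a^2(\Phi X^{\ell+1}\psi_{\ell})^2$, which near $r=r_+$ cannot be absorbed (that absorption is only available for $r\ge R_0$ large, as in Proposition \ref{prop:auxellipticell}). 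Finally, for (ii) the paper does not integrate the ODE directly — dividing by $\Delta^{\ell+1}$ at $r_+$ is delicate — but reuses the same multiplier identity with a sup/Young argument to bound $\int_{\s^2}\Delta\,(X^{\ell+1}\psi_{\ell})^2\,d\omega$ at each radius; your direct-integration idea is essentially what the paper does for part (iv), after multiplying \eqref{eq:commidell} by $\Delta^{\ell}$.
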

\begin{proof}
It follows easily by induction that for all $n\geq 0$:
\begin{equation*}
X^{n+1}(\Delta X\psi)=X(\Delta X^{n+1}\psi)+n \Delta' X^{n+1}\psi+n(n+1)X^n\psi.
\end{equation*}
Hence,
\begin{equation}
\label{eq:commidn}
X^nF=X(\Delta X^{n+1}\psi)+n\Delta' X^{n+1}\psi+n(n+1)X^n\psi+ 2a {X}\Phi X^n\psi+\slashed{\Delta}_{\s^2}(X^n\psi).
\end{equation}
A cancellation occurs in \eqref{eq:commidn} when $n=\ell$ after projecting onto the $\ell$-th spherical harmonic mode:
\begin{equation}
\label{eq:commidell}
X^{\ell}{\pi}_{\ell}F=X(\Delta X^{\ell+1}\psi_{\ell})+\ell \Delta'X^{\ell+1}\psi_{\ell}+2a \Phi X^{\ell+1}\psi_{\ell}
\end{equation}
Now consider the multiplier $X^{\ell+1}\psi_{\ell}$ to obtain:
\begin{equation*}
\begin{split}
X^{\ell+1}\psi_{\ell}\cdot X^{\ell}{\pi}_{\ell}F=& \Delta X^{\ell+1}\psi_{\ell}X^{\ell+2}\psi_{\ell}+(\ell+1)\Delta'(X^{\ell+1}\psi_{\ell})^2 +2a X^{\ell+1}\psi_{\ell} X^{\ell+1}\Phi \psi_{\ell}\\
=&\: X\left(\frac{1}{2}  \Delta (X^{\ell+1}\psi_{\ell})^2\right)+\Phi(a (X^{\ell+1} \psi_{\ell})^2)+\left(\ell+\frac{1}{2}\right) \Delta' (X^{\ell+1}\psi_{\ell})^2.
\end{split}
\end{equation*}
Hence, we can integrate the above inequality to obtain for all $r_0\geq r_+$:
\begin{equation}
\label{eq:redshiftest}
\begin{split}
\int_{\s^2} &\frac{1}{2}  \Delta (X^{\ell+1}\psi_{\ell})|_{r=r_0,\tau=\tau'}^2\,d\omega+\left(\ell+\frac{1}{2}\right)\int_{\Sigma_{\tau'}\cap\{r\leq r_0\}} \Delta'(X^{\ell+1}\psi_{\ell})^2\,d\omega d\uprho\leq  \int_{\Sigma_{\tau'}\cap\{ r\leq r_0\}} |X^{\ell+1}\psi_{\ell}\cdot X^{\ell}{\pi}_{\ell}F|\,d\omega d\uprho\\
\leq&\:   \int_{\Sigma_{\tau'}\cap\{ r\leq r_0\}} \epsilon \Delta' (X^{\ell+1}\psi_{\ell})^2+\frac{1}{2\epsilon} \frac{1}{\Delta'}(X^{\ell}{\pi}_{\ell}F)^2\,d\omega d\uprho.
\end{split}
\end{equation}
Then, \eqref{eq:redshiftestfixedell} immediately follows after absorbing the $(X^{\ell+1}\psi_{\ell})^2$ term on the very right-hand side of \eqref{eq:redshiftest} into the left-hand side (taking $\epsilon>0$ suitably small) and including a $(X(\Delta X^{\ell+1}\psi_{\ell})^2$ term by using:
\begin{equation*}
(X(\Delta X^{\ell+1}\psi_{\ell})^2\leq C  (\Delta'X^{\ell+1}\psi_{\ell})^2+C a^2(\Phi X^{\ell+1}\psi_{\ell})^2+(X^{\ell}{\pi}_{\ell}F)^2.
\end{equation*}
It moreover follows that for $r'\in [r_+,r_0]$:
\begin{equation*}
\begin{split}
\int_{\s^2}  \Delta(r') (X^{\ell+1}\psi_{\ell})|_{r=r',\tau=\tau'}^2\,d\omega\leq&\:  C\int_{\Sigma_{\tau'}\cap\{ r\leq r'\}} r^{-1}(X^{\ell}{\pi}_{\ell}F)^2\,d\omega d\rho\\
\leq &\: C \Delta(r')r'^{-1} \int_{\s^2}\sup_{r_+\leq r \leq r'} r^{-1}(X^{\ell}{\pi}_{\ell}F)^2|_{\tau=\tau'}\,d\omega\\
\end{split}
\end{equation*}
and hence we obtain \eqref{eq:estforprop} for the range $r'\in [r_+,r_0]$, with $r_0>r_+$ arbitrarily large, with a constant $C$ depending moreover on $r_0$.

In order to prove \eqref{eq:estforprop} for $r' \in (r_0,\infty)$, we appeal again to \eqref{eq:redshiftest} and Young's inequality to estimate:
\begin{equation*}
\begin{split}
\int_{\s^2} &\frac{1}{2} \sup_{r_0\leq \tilde{r}\leq r'} \Delta (X^{\ell+1}\psi_{\ell})|_{r=\tilde{r},\tau=\tau'}^2\,d\omega \leq \int_{\s^2} \frac{1}{2}  \Delta (X^{\ell+1}\psi_{\ell})|_{r=r_0}^2\,d\omega+ \int_{\Sigma_{\tau'}\cap\{ r_0\leq \tilde{r}\leq r'\}} |X^{\ell+1}\psi_{\ell}\cdot X^{\ell}{\pi}_{\ell}F|\,d\omega d\uprho\\
\leq&\:   \int_{\s^2} \frac{1}{2}  \Delta (X^{\ell+1}\psi_{\ell})|_{r=r_0,\tau=\tau'}^2\,d\omega+\frac{1}{4(r'-r_0)} \int_{\Sigma_{\tau'}\cap\{ r_0\leq \tilde{r}\leq r'\}} \Delta (X^{\ell+1}\psi_{\ell})^2\,d\omega d\uprho\\
&+ (r'-r_0)\int_{\Sigma_{\tau'}\cap\{ r_0\leq \tilde{r}\leq r'\}} \Delta^{-1}(X^{\ell}{\pi}_{\ell}F)^2\,d\omega d\uprho\\
\leq&\:  \int_{\s^2} \frac{1}{2}  \Delta (X^{\ell+1}\psi_{\ell})|_{r=r_0,\tau=\tau'}^2\,d\omega+ \frac{1}{4}\int_{\s^2} \sup_{r_0\leq \tilde{r}\leq r'} \Delta (X^{\ell+1}\psi_{\ell})|_{r=\tilde{r},\tau=\tau'}^2\,d\omega\\
&+ (r'-r_0)^2\int_{\s^2}  \sup_{r_0\leq \tilde{r}\leq r'} \Delta^{-1}(X^{\ell}{\pi}_{\ell}F)^2|_{r=r',\tau=\tau'}\,d\omega.
\end{split}
\end{equation*}
Hence, for all  $r' \in [r_0,\infty)$
\begin{equation*}
\begin{split}
\int_{\s^2}& \Delta (X^{\ell+1}\psi_{\ell})|_{r=r',\tau=\tau'}^2\,d\omega\leq 4(r'-r_0)^2\int_{\s^2}  \sup_{r_0\leq \tilde{r}\leq r'} \Delta^{-1}(X^{\ell}{\pi}_{\ell}F)^2|_{r=r',\tau=\tau'}\,d\omega\\
&+ C \Delta(r_0) \int_{\s^2}\sup_{r_+\leq r \leq r_0} r^{-2}(X^{\ell}{\pi}_{\ell}F)^2|_{\tau=\tau'}\,d\omega.
\end{split}
\end{equation*}
from which we can conclude \eqref{eq:estforprop}.

We obtain \eqref{eq:fasterdecatXlpsi} by applying \eqref{eq:estforprop} and inserting for $F$ the right-hand side of \eqref{eq:inhomo}.

Finally, we observe that we can rearrange the terms in \eqref{eq:commidell} and multiply by $\Delta^{\ell}$ as follows:
\begin{equation*}
\Delta^{\ell}X^{\ell}{\pi}_{\ell}F=X(\Delta^{\ell+1} X^{\ell+1}\psi_{\ell})+2a \Delta^{\ell} \Phi X^{\ell+1}\psi_{\ell}.
\end{equation*}
Suppose now that $\ell\geq 1$. Then we integrate to obtain for $\uprho\geq r_0$:
\begin{equation*}
|\Delta^{\ell+1} X^{\ell+1}\psi_{\ell}|(\tau,\uprho,\theta,\varphi_*)\leq C r^{2\ell+1} ||X^{\ell+1}\psi_{\ell}||_{L^{\infty}(\Sigma_{\tau})}+C\Delta^{\ell}|| rX^{\ell}{\pi}_{\ell}F||_{L^{\infty}(\Sigma_{\tau})},
\end{equation*}
and therefore
\begin{equation}
\label{eq:ellipticestrXellpsi}
||r X^{\ell+1}\psi_{\ell}||_{L^{\infty}(\Sigma_{\tau})}\leq C||X^{\ell+1}\psi_{\ell}||_{L^{\infty}(\Sigma_{\tau})}+|| rX^{\ell}{\pi}_{\ell}F||_{L^{\infty}(\Sigma_{\tau})}.
\end{equation}
Repeating the above step, using now \eqref{eq:ellipticestrXellpsi}, we obtain for $\ell\geq 1$:
\begin{equation*}
|\Delta^{\ell+1} X^{\ell+1}\psi_{\ell}|(\tau,\uprho,\theta,\varphi_*)\leq C r^{2\ell} \Delta^{\ell}  ||rX^{\ell+1}\psi_{\ell}||_{L^{\infty}(\Sigma_{\tau})}+\Delta^{\ell} || rX^{\ell}{\pi}_{\ell}F||_{L^{\infty}(\Sigma_{\tau})}
\end{equation*}
and \eqref{eq:fasterdecatXlpsiv2} follows by applying a standard Sobolev inequality on $\s^2$.
\end{proof}

\subsection{Elliptic estimates outside ergoregion}
\label{sec:ellipticoutsideergo}
In this section, we derive elliptic estimates outside the ergoregion, making use of the uniform ellipticity of $\mathcal{L}$ when restricted to a region where $r$ is sufficiently large. In order to couple these estimates with the spacelike redshift estimates from Section \ref{sec:ellipticredshift}, we restrict to fixed spherical harmonic modes. 
\begin{proposition}
\label{prop:auxellipticell}
Let $\ell\in \N_0$ and $k\in \R$, with $k<\ell+\frac{1}{2}$. Let $\psi$ be a solution to \eqref{eq:inhomelliptic}, such that $r\psi\in C^{\infty}(\widehat{\Sigma}_{\tau})$. Then, for $R_0>2M\geq r_+$ suitably large, there exists a constant $C=C(M,a,R_0,\ell,k)>0$, such that
\begin{equation}
\begin{split}
\label{eq:auxellipticell}
\int_{\Sigma_{\tau}\cap \{r\geq R_0\}} r^{2k+2}(X^{\ell+1}\psi_{\ell})^2\,d\omega d\uprho+ r^{2k}(X(\Delta X^{\ell+1}\psi_{\ell}))^2\,d\omega d\uprho\leq &\: C\int_{\Sigma_{\tau}}r^{2k}(X^{\ell}{\pi}_{\ell}F)^2 \,d\omega d\uprho.
\end{split}
\end{equation}
\end{proposition}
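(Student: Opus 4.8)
The plan is to exploit the uniform ellipticity of $\mathcal{L}$ in the region $\{r \geq R_0\}$, where $R_0$ is chosen large enough that $\Delta - a^2\sin^2\theta > 0$ and, more quantitatively, that the zeroth-order angular contribution $\slashed\Delta_{\s^2}$ applied to $\psi_\ell$ is strong enough to dominate cross terms via a Poincar\'e inequality. The starting point is the commuted identity \eqref{eq:commidell},
\begin{equation*}
X^{\ell}{\pi}_{\ell}F=X(\Delta X^{\ell+1}\psi_{\ell})+\ell \Delta'X^{\ell+1}\psi_{\ell}+2a \Phi X^{\ell+1}\psi_{\ell},
\end{equation*}
which already carries a cancellation of the would-be lower-order $X^{\ell}\psi_\ell$ term. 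I would multiply this identity by $r^{2k}$, square both sides, and integrate over $\Sigma_\tau \cap \{r \geq R_0\}$. Expanding the square of the left-hand side $(X^{\ell}\pi_\ell F)^2$ produces the principal term $(X(\Delta X^{\ell+1}\psi_\ell))^2$ plus cross terms; the cross term between $X(\Delta X^{\ell+1}\psi_\ell)$ and $\ell\Delta'X^{\ell+1}\psi_\ell$ is the one to integrate by parts in $\uprho$, producing a favorable bulk term $\sim r^{2k}\Delta\Delta'(\ell + \tfrac12 - k)(X^{\ell+1}\psi_\ell)^2 \sim r^{2k+2}(X^{\ell+1}\psi_\ell)^2$ with a positive coefficient precisely because $k < \ell + \tfrac12$, together with a boundary term at $r = R_0$ and a boundary term as $\uprho \to \infty$ that vanishes by the assumed decay of $r\psi$ on $\widehat\Sigma_\tau$.

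The key structural point is that $\Delta' = 2r - 2M > 0$ globally and $\Delta \sim r^2$ for large $r$, so the integration-by-parts gain is genuinely of order $r^{2k+2}$, i.e.\ two powers of $r$ better than the $X^{\ell+1}\psi_\ell$ weight one naively controls; this is what lets us absorb the remaining cross terms and the $\Phi$-derivative term. The $\Phi$-derivative cross term $a\, r^{2k}\, \Phi X^{\ell+1}\psi_\ell \cdot X(\Delta X^{\ell+1}\psi_\ell)$ I would handle by a weighted Young's inequality, putting most of it on the $(X(\Delta X^{\ell+1}\psi_\ell))^2$ side (harmless, since that term sits on the left of \eqref{eq:auxellipticell} with the same weight $r^{2k}$ but we have the $r^{2k+2}$ surplus to spare after rescaling) and the remainder as $a^2 r^{2k-2}(\Phi X^{\ell+1}\psi_\ell)^2$. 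Since $\psi$ is real and supported on the $\ell$-th harmonic (or decomposed into azimuthal modes $m$ with $|m|\leq \ell$), $(\Phi X^{\ell+1}\psi_\ell)^2 = \sum_m m^2 (X^{\ell+1}(\psi_\ell)_m)^2 \leq \ell^2 (X^{\ell+1}\psi_\ell)^2$, so after multiplying by $r^{2k-2}$ this is lower order relative to the $r^{2k+2}$ gain and can be absorbed for $R_0$ large. The boundary term at $r = R_0$ is controlled by the spacelike redshift estimate \eqref{eq:redshiftestfixedell} of Proposition \ref{prop:redshiftestfixedell}, applied on $\Sigma_\tau \cap \{r \leq r_0\}$ for some $r_0 > R_0$ — more precisely one uses a trace/averaging argument in $\uprho$ over a dyadic shell near $R_0$ combined with \eqref{eq:redshiftestfixedell} to bound $\int_{\s^2}(X(\Delta X^{\ell+1}\psi_\ell))^2 + (X^{\ell+1}\psi_\ell)^2|_{r=R_0}$ by $\int_{\Sigma_\tau\cap\{r\leq r_0\}}(\pi_\ell X^\ell F)^2 \leq \int_{\Sigma_\tau} r^{2k}(\pi_\ell X^\ell F)^2$, the last step being trivial since $r$ is bounded on that region.

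Assembling these pieces gives, on $\Sigma_\tau \cap \{r \geq R_0\}$, control of $r^{2k+2}(X^{\ell+1}\psi_\ell)^2$ and $r^{2k}(X(\Delta X^{\ell+1}\psi_\ell))^2$ by $\int_{\Sigma_\tau} r^{2k}(X^\ell \pi_\ell F)^2$, which is \eqref{eq:auxellipticell}. The main obstacle I anticipate is bookkeeping the interplay between the weight $k$ and the angular frequency $\ell$: one must verify that after the integration by parts the coefficient $(\ell + \tfrac12 - k)$ of the good bulk term is strictly positive (hence the hypothesis $k < \ell + \tfrac12$), and that all error terms generated — in particular the $\Phi$-derivative term and the derivatives of the weight $r^{2k}$ hitting $\Delta$ — are genuinely of lower order in $r$ so they can be absorbed for $R_0$ chosen large depending on $\ell$ and $k$; this is the step that fixes the admissible range of $k$. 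A secondary technical point is justifying the vanishing of the boundary term at null infinity, which follows from $r\psi \in C^\infty(\widehat\Sigma_\tau)$ together with the equation \eqref{eq:inhomelliptic} to trade $\uprho$-derivatives for the controlled quantities, exactly as in the propagation-of-regularity discussion earlier in the paper.
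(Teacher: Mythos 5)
Your strategy coincides with the paper's: square the commuted identity \eqref{eq:commidell} against the weight $r^{2k}$ (with a cut-off supported in $\{r\geq R_0-M\}$), integrate the cross term by parts in $\uprho$, absorb the $\Phi$-term and the lower-order errors by taking $R_0$ large, and control the transition region near $r=R_0$ via the spacelike redshift estimate \eqref{eq:redshiftestfixedell}. For $\ell\geq 1$ this closes as you describe: the diagonal term $\ell^2(\Delta')^2r^{2k}(X^{\ell+1}\psi_\ell)^2$ from expanding the square, combined with the integrated-by-parts cross term, yields the bulk coefficient $4\ell\left(\ell+\tfrac12-k\right)r^{2k+2}+O(r^{2k+1})$, positive for $k<\ell+\tfrac12$.

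There is, however, a genuine gap at $\ell=0$, which the proposition includes (with $k<\tfrac12$). Every mechanism you invoke to produce the good bulk term $r^{2k+2}(X^{\ell+1}\psi_\ell)^2$ carries an explicit factor of $\ell$: the cross term is $2\ell\Delta'r^{2k}X^{\ell+1}\psi_\ell\cdot X(\Delta X^{\ell+1}\psi_\ell)$ and the diagonal term is $\ell^2(\Delta')^2r^{2k}(X^{\ell+1}\psi_\ell)^2$. For $\ell=0$ both vanish identically, so your argument controls only $r^{2k}(X(\Delta X\psi_0))^2$ and gives nothing for $r^{2k+2}(X\psi_0)^2$. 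The missing ingredient is the Hardy inequality \eqref{eq:hardyX} applied to $\Delta X^{\ell+1}\psi_\ell$:
\[
\frac{(2k-1)^2}{4}(1-2\epsilon)\int_{\Sigma_\tau} r^{2k-2}\Delta^2\chi^2(X^{\ell+1}\psi_\ell)^2\,d\omega d\uprho\leq (1-\epsilon)\int_{\Sigma_\tau}\chi^2 r^{2k}(X(\Delta X^{\ell+1}\psi_\ell))^2\,d\omega d\uprho+C_\epsilon\int_{\Sigma_\tau}(\chi')^2\Delta^2(X^{\ell+1}\psi_\ell)^2\,d\omega d\uprho,
\]
and since $r^{2k-2}\Delta^2\sim r^{2k+2}$ this supplies the $\ell=0$ bulk term (with coefficient $\tfrac14(2k-1)^2>0$ for $k\neq\tfrac12$) and upgrades the general coefficient to $\left(k-2\ell-\tfrac12\right)^2$, which is how the paper closes the estimate uniformly in $\ell$. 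A secondary point: the hypothesis $k<\ell+\tfrac12$ is not only about bulk positivity. The total derivative $X\bigl(\ell\chi^2\Delta'\Delta r^{2k}(X^{\ell+1}\psi_\ell)^2\bigr)$ produced by your integration by parts leaves a boundary contribution at $\uprho=\infty$ of size $r^{2k+3}(X^{\ell+1}\psi_\ell)^2=O(r^{2k-2\ell-1})$ (since $r\psi\in C^\infty(\widehat\Sigma_\tau)$ gives $X^{\ell+1}\psi_\ell=O(r^{-\ell-2})$ and generically no better), which vanishes precisely when $k<\ell+\tfrac12$; this is where the threshold is genuinely sharp, rather than in the sign of the bulk coefficient, which after the Hardy step is nonnegative for all $k\neq 2\ell+\tfrac12$.
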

\begin{proof}
Denote
\begin{equation*}
G_n:=X^nF- 2a \Phi X^{n+1}\psi.
\end{equation*}
Then we can write
\begin{equation*}
G_n=X(\Delta X^{n+1}\psi)+n\Delta' X^{n+1}\psi+n(n+1)X^n\psi+\slashed{\Delta}_{\s^2}(X^n\psi),
\end{equation*}
so that
\begin{equation*}
X(\Delta X^{\ell+1}\psi_{\ell})+\ell \Delta' X^{\ell+1}\psi=\pi_{\ell}G_{\ell}.
\end{equation*}
Let $\chi: [r_+,\infty)$ be a cut-off function such that $\chi(r)=1$ for $r\geq R_0$ and $\chi(r)=0$ for $r\leq R_0-M$, with $R_0>r_++M$ to be chosen suitably large. Then for any $k\in \N$
\begin{equation*}
\chi^2 r^{2k}(X(\Delta X^{\ell+1}\psi_{\ell}))^2+\chi^2 \ell^2(\Delta' )^2 r^{2k}(X^{\ell+1}\psi_{\ell})^2+\chi^2 2\ell \Delta' r^{2k}X^{\ell+1}\psi X(\Delta X^{\ell+1}\psi_{\ell})=\chi r^{2k}(\pi_{\ell}G_{\ell})^2.
\end{equation*}
We can further estimate
\begin{equation*}
\begin{split}
\chi^2 2\ell \Delta' r^{2k}X^{\ell+1}\psi_{\ell} X(\Delta X^{\ell+1}\psi_{\ell})=&\:2\ell \chi^2  (\Delta')^2 r^{2k}  (X^{\ell+1}\psi_{\ell})^2+\ell \chi^2 \Delta'\Delta r^{2k} X ((X^{\ell+1}\psi_{\ell})^2)\\
=&\: \ell\left[2(\Delta')^2r^{2k}- (\Delta' \Delta r^{2k})' \right]\chi^2  (X^{\ell+1}\psi_{\ell})^2-2\ell \chi' \chi \Delta' \Delta(X^{\ell+1}\psi_{\ell})^2\\
&+X(\ell \chi^2\Delta'\Delta r^{2k} (X^{\ell+1}\psi_{\ell})^2)\\
=&\: -\ell[4k-2+O(r^{-1})] r^{2k+2}(X^{\ell+1}\psi_{\ell})^2-2\ell \chi' \chi\Delta' \Delta(X^{\ell+1}\psi_{\ell})^2\\
&+X(\ell \chi \Delta' r^{2k} (X^{\ell+1}\psi_{\ell})^2).
\end{split}
\end{equation*}
Note that
\begin{equation*}
\int_{\Sigma_{\tau}}X(\ell \chi^2 \Delta' \Delta r^{2k} (X^{\ell+1}\psi_{\ell})^2) \,d\omega d\uprho=0
\end{equation*}
if $\ell=0$ or $k<\ell+\frac{1}{2}$.

By \eqref{eq:hardyX} we can moreover estimate for $k<\ell+\frac{1}{2}$ and $\epsilon>0$ arbitrarily small:
\begin{equation*}
\begin{split}
\frac{(2k-1)^2}{4}(1-2\epsilon)&\int_{\Sigma_{\tau}} r^{2k-2} \Delta^2 \chi^2(X^{\ell+1}\psi_{\ell})^2\,d\omega d\uprho \leq (1-\epsilon)\int_{\Sigma_{\tau}}\chi^2 r^{2k}(X(\Delta X^{\ell+1}\psi_{\ell}))^2\,d\omega d\uprho\\
&+C_{\epsilon} \int_{\Sigma_{\tau}}  (\chi')^2\Delta^2(X^{\ell+1}\psi_{\ell})^2 \,d\omega d\uprho.
\end{split}
\end{equation*}

Finally, observe that
\begin{equation*}
r^{2k}(\pi_{\ell}G_{\ell})^2\leq C r^{2k}(X^{\ell}{\pi}_{\ell}F)^2+ a^2 r^{2k}(\Phi X^{\ell+1}\psi_{\ell})^2.
\end{equation*}

We then combine the above estimates to obtain the following integral inequality:
\begin{equation*}
\begin{split}
\int_{\Sigma_{\tau}}& \chi \epsilon r^{2k}(X(\Delta X^{\ell+1}\psi_{\ell}))^2+\left[4\ell^2+(1-2k)2\ell+\frac{1}{4}(2k-1)^2-2\epsilon\frac{(2k-1)^2}{4} +O_{\infty}(r^{-1})\right]\chi^2r^{2k+2}(X^{\ell+1}\psi_{\ell})^2\,d\omega d\uprho\\
\leq &\: C_{\epsilon,R_0,\ell}\int_{\Sigma_{\tau}\cap \{R_0-M\leq r\leq R_0\}}  (X^{\ell+1}\psi_{\ell})^2 \,d\omega d\uprho+C\int_{\Sigma_{\tau}}r^{2k}(X^{\ell}{\pi}_{\ell}F)^2 \,d\omega d\uprho.
\end{split}
\end{equation*}

We can choose $R_0$ suitably large and $\epsilon>0$ suitably small in order to make the left-hand side non-negative definite, provided
\begin{equation*}
0<(2\ell)^2+(1-2k)2\ell+\frac{1}{4}(2k-1)^2=\left(k-2\ell-\frac{1}{2}\right)^2,
\end{equation*}
which follows in particular from the condition $k<\ell+\frac{1}{2}$.

We conclude \eqref{eq:auxellipticell} by applying additionally \eqref{eq:redshiftestfixedell}.
\end{proof}

In the proposition below, we obtain additional control over arbitrarily many $X$-derivatives of $\psi_{\ell}$, starting from the estimates established in Proposition \ref{prop:auxellipticell}.

\begin{proposition}
\label{prop:mainellipticell}
Let $J,\ell\in \N_0$. Restrict $-\frac{1}{2}<k<\ell+\frac{1}{2}$ when $\ell\geq 1$ and $k>-\frac{1}{2}$ when $\ell=0$. Let $\psi$ be a solution to \eqref{eq:inhomelliptic}, such that $r\psi\in C^{\infty}(\widehat{\Sigma}_{\tau})$. Then, for $R_0>2M\geq r_+$ suitably large, there exists a constant $C=C(M,a,R_0,\ell,k,J)>0$, such that
\begin{equation}
\begin{split}
\label{eq:mainellipticell1}
\sum_{n=0}^{\ell+J}& \int_{\Sigma_{\tau}}r^{-2k+2} (X(rX)^n \psi_{\ell})^2+\ell^2r^{-2k} ((rX)^n \psi_{\ell})^2\,d\omega d\uprho\leq C \sum_{n=0}^{\ell+J}\int_{\Sigma_{\tau}} r^{-2k}( (rX)^{n}({\pi}_{\ell}F))^2\,d\omega d\uprho.
\end{split}
\end{equation}
If $\psi$ is a solution to \eqref{eq:waveeq}, then
\begin{equation}
\begin{split}
\label{eq:mainellipticell2}
\sum_{n=0}^{\ell+J}& \int_{\Sigma_{\tau}}r^{-2k+2} (X(rX)^n \psi_{\ell})^2+\ell^2r^{-2k} ((rX)^n \psi_{\ell})^2\,d\omega d\uprho\leq C \sum_{n=0}^{\ell+J}\int_{\Sigma_{\tau}} r^{-2k+4} (X T (rX)^n \psi_{\ell})^2\\
&+r^{-2k+2} (T^2 (rX)^n \psi_{\ell})^2+ a^2r^{-2k} (T^2 (rX)^n \psi_{\ell-2})^2+a^2r^{-2k} (T^2 (rX)^n \psi_{ \ell+2})^2\,d\omega d\uprho.
\end{split}
\end{equation}
\end{proposition}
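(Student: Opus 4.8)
## Proof proposal for Proposition \ref{prop:mainellipticell}

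The plan is to prove \eqref{eq:mainellipticell1} by induction on $J$, with the base case $J=0$ essentially supplied by the combination of the spacelike redshift estimate \eqref{eq:redshiftestfixedell} of Proposition \ref{prop:redshiftestfixedell} (which controls the region $r\le R_0$, including inside the ergoregion) and the exterior elliptic estimate \eqref{eq:auxellipticell} of Proposition \ref{prop:auxellipticell} (which controls $r\ge R_0$ via uniform ellipticity). The matching of the two regions at $r=R_0$ is handled by the averaging/cut-off mechanism already built into Proposition \ref{prop:auxellipticell}, whose right-hand side contains only a zeroth-order term on the collar $\{R_0-M\le r\le R_0\}$, absorbed back using \eqref{eq:redshiftestfixedell}. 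The subtlety at the base case is that \eqref{eq:auxellipticell} only gives control of $X(\Delta X^{\ell+1}\psi_\ell)$ and $r\,X^{\ell+1}\psi_\ell$ with the single weight exponent $k$; to get the full sum $\sum_{n=0}^{\ell}$ on the left of \eqref{eq:mainellipticell1} one runs a \emph{downward} induction on $n$ from $n=\ell$ to $n=0$, i.e. one uses the commuted identity \eqref{eq:commidn} (with $n<\ell$), in which the term $\slashed{\Delta}_{\s^2}(X^n\psi_\ell)=-\ell(\ell+1)X^n\psi_\ell$ does \emph{not} cancel, to solve algebraically for the lower-order derivatives $X^n\psi_\ell$ in terms of $X(\Delta X^{n+1}\psi_\ell)$, $X^{n+1}\psi_\ell$ and $\pi_\ell X^n F$. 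This is precisely the step labelled C.) in the introduction: another spacelike-redshift multiplier estimate (integrating $X^{n+1}\psi_\ell\cdot X^n(\mathcal L\psi_\ell)$ over $\Sigma_\tau\cap\{r\le R_0\}$) together with the exterior elliptic bound controls each successive lower-order level.

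For the induction step $J\rightsquigarrow J+1$, I would commute \eqref{eq:commidell} with one further factor of $rX$. The point is that $rX$ commutes with $\mathcal L$ up to lower-order terms with \emph{good} $r$-weights: schematically $[\mathcal L, rX]$ produces terms of the form $r X(\Delta X\cdot)$ with an extra $r^{-1}$ and angular/$\Phi$ terms, all of which are lower order in the derivative count and come with an $r$-weight that is one power better, so they can be absorbed using the induction hypothesis (and, where an integration by parts against the top-order term is needed, using a Young's inequality with the good-sign term $\Delta'\,(X(rX)^{n}\psi_\ell)^2$ coming from strict positivity of $\frac{d\Delta}{dr}$ globally in $r$). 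The $2aX\Phi$ term is handled as in Proposition \ref{prop:auxellipticell}: on a fixed azimuthal mode $\Phi\mapsto im$, and the resulting $a^2 m^2 r^{-2k}(X^{\ell+1}\psi_\ell)^2$-type term is absorbed into the left-hand side for $R_0$ large and $\epsilon$ small exactly as in the proof of \eqref{eq:auxellipticell}; since the full statement is an $L^2(\s^2)$ estimate one then sums over $m$ using orthogonality. The restriction $-\tfrac12<k<\ell+\tfrac12$ for $\ell\ge1$ (and $k>-\tfrac12$ for $\ell=0$) enters exactly where it does in Proposition \ref{prop:auxellipticell}: the lower bound $k>-\tfrac12$ is needed for the Hardy inequality \eqref{eq:hardyX} to have the right sign and for the boundary term $\int_{\Sigma_\tau}X(\cdots)\,d\omega d\uprho$ to vanish at $\uprho\to\infty$ (using $r\psi\in C^\infty(\widehat\Sigma_\tau)$, hence the decay $X^{\ell+1}\psi_\ell=O(r^{-\ell-2})$), while the upper bound $k<\ell+\tfrac12$ makes the coefficient $(k-2\ell-\tfrac12)^2$ of the zeroth-order term strictly positive after the manipulations; one checks that these constraints are stable under the Poincaré-inequality cross-term estimates needed at each induction level, which is where the $\ell$-dependence of the allowed range comes from.

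Finally, \eqref{eq:mainellipticell2} follows from \eqref{eq:mainellipticell1} by substituting for $\pi_\ell F$ the explicit expression \eqref{eq:inhomo}, applying $(rX)^n$ and $\pi_\ell$, and using Lemma \ref{eq:lprojsin} to expand $\pi_\ell(\sin^2\theta\,T^2\psi)$ as a combination of $T^2\psi_{\ell-2}$, $T^2\psi_\ell$ and $T^2\psi_{\ell+2}$; the $T^2\psi_\ell$ contribution has the weight $r^{-2k+2}$ (matching the listed $(T^2(rX)^n\psi_\ell)^2$ term) while the $\psi_{\ell\pm2}$ contributions carry the genuinely $a^2$-weighted coefficients, and the $\partial_x\partial_\tau\phi$ and $T\partial_\tau$ terms in \eqref{eq:inhomo} produce the $r^{-2k+4}(XT(rX)^n\psi_\ell)^2$ term after rewriting $\phi=\sqrt{r^2+a^2}\psi$ and absorbing $\sqrt{r^2+a^2}$-prefactors; the $2a(h-1)T\Phi\psi$ term is lower order in $r$-weight and absorbed. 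The main obstacle I anticipate is \emph{bookkeeping the commutator $[\mathcal L,(rX)^n]$ carefully enough} to see that every error term genuinely has a strictly better $r$-weight (so the induction closes) and a strictly lower derivative order (so it is covered by the induction hypothesis or by \eqref{eq:redshiftestfixedell}), while simultaneously keeping the sign condition on the zeroth-order coefficient intact as $n$ and $J$ vary — this is the delicate interplay between the weight range $-\tfrac12<k<\ell+\tfrac12$, the Hardy inequality, and the Poincaré cross-term absorptions, and it is the reason the statement is phrased with a full sum $\sum_{n=0}^{\ell+J}$ on both sides rather than a single top-order bound.
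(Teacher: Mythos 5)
Your proposal follows essentially the same route as the paper: the base case $J=0$ combines the spacelike redshift estimate \eqref{eq:redshiftestfixedell} with the exterior elliptic estimate \eqref{eq:auxellipticell} and a downward cascade in $n$ via \eqref{eq:commidn} to recover the lower-order derivatives, the step $J\rightsquigarrow J+1$ uses the $rX$-commuted multiplier identity at level $n=\ell+J+1$ (the paper reads this off directly from \eqref{eq:commidn} and \eqref{eq:mainloelliptic} rather than recomputing $[\mathcal L,rX]$, and disposes of the $2aX\Phi$ term as a total $\Phi$-derivative rather than by azimuthal decomposition), and \eqref{eq:mainellipticell2} follows by expanding $\pi_\ell F$ via \eqref{eq:inhomo} and Lemma \ref{eq:lprojsin}. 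The only imprecision is in attributing the constraint $k<\ell+\tfrac12$: it arises from the vanishing of the boundary term $X(\ell\chi^2\Delta'\Delta r^{2k}(X^{\ell+1}\psi_\ell)^2)$ at infinity (equivalently $m<2\ell$ in the paper's normalization), not from positivity of $(k-2\ell-\tfrac12)^2$, which holds generically.
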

\begin{proof}
Let $n\in \N_0$. By \eqref{eq:commidn} it follows that
\begin{equation}
\label{eq:mainloelliptic}
\begin{split}
r^{m}X^{n+1}\psi_{\ell} X^n{\pi}_{\ell}F=&\:\Delta r^{m} X^{n+1}\psi_{\ell} X^{n+2}\psi_{\ell}+(n+1)\Delta' r^{m}(X^{n+1}\psi_{\ell})^2-\big[\ell(\ell+1)\\
&-n(n+1)\big]r^{m}X^{n+1}\psi_{\ell}X^n\psi_{\ell}+ 2a r^{m} X^{n+1}\psi_{\ell} \Phi X^{n+1}\psi_{\ell}\\
=&\: \Delta r^{m} X^{n+1}\psi_{\ell} X^{n+2}\psi_{\ell}+(n+1)\Delta' r^{m}(X^{n+1}\psi_{\ell})^2\\
&-X\left(\frac{1}{2}[\ell(\ell+1)-n(n+1)]r^{m}(X^n\psi_{\ell})^2\right)\\
&+\frac{1}{2}m[\ell(\ell+1)-n(n+1)]r^{m-1}(X^n\psi_{\ell})^2+ \Phi(a r^{m} (X^{n+1}\psi_{\ell})^2).
\end{split}
\end{equation}

We integrate over $\Sigma_{\tau }\cap \{r\geq R_0\}$, where $R_0\geq r_+$ will be chosen appropriately large:
\begin{equation*}
-\int_{\Sigma_{\tau}\cap \{r\geq R_0\}}X\left(\frac{1}{2}(\ell(\ell+1)-n(n+1))r^{m}(X^n\psi_{\ell})^2\right)\,d\omega d\uprho=\int_{S^2_{\tau,R_0}}\frac{1}{2}(\ell(\ell+1)-n(n+1))r^{m}(X^n\psi_{\ell})^2\,d\omega
\end{equation*}
if $m<2n+2$.\\
\\

\underline{$J=0$:}\\
\\
We consider first the case $J=0$. If $\ell=0$, then \eqref{eq:mainellipticell1} follows immediately from  \eqref{eq:auxellipticell}. Suppose $\ell\geq 1$ and $n\leq \ell-1$. Then we estimate in $\{r\geq R_0\}$:
\begin{align*}
|\Delta  r^{m} X^{n+1}\psi_{\ell} X^{n+2}\psi_{\ell}|\leq&\: \frac{1}{2}\epsilon \Delta r^{m-1}(X^{n+1}\psi_{\ell})^2+\frac{1}{2\epsilon} \Delta r^{m+1}( X^{n+2}\psi_{\ell})^2,\\
|r^{m}X^{n+1}\psi_{\ell} X^n{\pi}_{\ell}F|\leq &\: \frac{1}{2}\epsilon r^{m+1}(X^{n+1}\psi_{\ell})^2+\frac{1}{2\epsilon} r^{m-1}(X^n{\pi}_{\ell}F)^2.
\end{align*}

In the case when $m\leq 0$ and $n=\ell-1$, we need to additionally apply \eqref{eq:commidn} in $\{r\geq R_0\}$ to further estimate:
\begin{equation*}
\begin{split}
-\frac{1}{2}m&[\ell(\ell+1)-(\ell-1)\ell]r^{m-1}(X^{\ell-1}\psi_{\ell})^2=- \ell mr^{m-1}(X^{\ell-1}\psi_{\ell})^2\\
\leq &\:-m(1+\epsilon) \ell[r^2+O(r)] r^{m-1}(X^{\ell}\psi_{\ell})^2+C_{\epsilon}r^{2m+2}(X^{\ell+1}\psi_{\ell})^2+C_{\epsilon} r^{m-1}(X^{\ell-1}{\pi}_{\ell}F)^2\\
\leq &\:-\frac{1}{2}m(1+\epsilon) \ell[1+O(r^{-1})] \Delta' r^{m}(X^{\ell}\psi_{\ell})^2+C_{\epsilon}r^{m+1}(X^{\ell+1}\psi_{\ell})^2+C_{\epsilon} r^{m-1}(X^{\ell-1}{\pi}_{\ell}F)^2.
\end{split}
\end{equation*}
Hence, for $m>-2$, $R_0$ appropriately large and $\epsilon>0$ appropriately small, we can absorb the first term on the very right-hand side above into the $(X^{\ell}\psi_{\ell})^2=(X^{n+1}\psi_{\ell})^2$ term on the very right-hand side of \eqref{eq:mainloelliptic} with $n=\ell-1$.

Combining the above estimates and restricting $-2<m<2(\ell-1)+2$, we therefore obtain:
\begin{equation*}
\begin{split}
\int_{\Sigma_{\tau}\cap\{r\geq R_0\}} r^{m+1}(X^{\ell}\psi_{\ell})^2+ r^{m-1}(X^{\ell-1}\psi_{\ell})^2\,d\omega d\uprho \leq&\: C\int_{\Sigma_{\tau}\cap \{r\geq R_0\}}r^{m+3}(X^{\ell+1}\psi_{\ell})^2\,d\omega d\uprho\\
&+C\int_{\Sigma_{\tau}\cap \{r \geq R_0\}} r^{m-1}(X^{\ell-1}{\pi}_{\ell}F)^2\,d\omega d\uprho.
\end{split}
\end{equation*}

We can now apply \eqref{eq:auxellipticell} to estimate the right-hand side further and obtain
\begin{equation*}
\begin{split}
\int_{\Sigma_{\tau}\cap \{r\geq R_0\}} r^{m+1}(X^{\ell}\psi_{\ell})^2+ r^{m-1}(X^{\ell-1}\psi_{\ell})^2\,d\omega d\uprho \leq C\int_{\Sigma_{\tau}} r^{m-1}(X^{\ell-1}{\pi}_{\ell}F)^2+r^{m+1}(X^{\ell}{\pi}_{\ell}F)^2\,d\omega d\uprho.
\end{split}
\end{equation*}
We can easily remove the restriction to $r\geq R_0$ above by applying additionally \eqref{eq:hardyX} together with \eqref{eq:redshiftestfixedell}.

Suppose $\ell\geq 2$. By \eqref{eq:commidn}, we can moreover estimate, for all $n\leq \ell-2$:
\begin{equation*}
(X^{n}\psi_{\ell})^2\leq Cr^2 (X^{n+1}\psi_{\ell})^2+Cr^4(X^{n+2}\psi_{\ell})^2+ (X^{n}{\pi}_{\ell}F)^2,
\end{equation*}
so we can in fact control all lower-order derivatives:
\begin{equation*}
\begin{split}
\sum_{l=0}^{\ell}\int_{\Sigma_{\tau}} r^{m+1-2l}(X^{\ell-l}\psi_{\ell})^2\,d\omega d\uprho \leq C\sum_{l=0}^{\ell}\int_{\Sigma_{\tau}} r^{m+1-2l}(X^{\ell-l}{\pi}_{\ell}F)^2\,d\omega d\uprho,
\end{split}
\end{equation*}
for $-2<m<2\ell$. By rearranging terms, taking $k:=\ell-\frac{m+1}{2}$ and expanding the terms in $(rX)^n{\pi}_{\ell}F$, we conclude that \eqref{eq:mainellipticell1} must hold for $J=0$.
\\
\\

\underline{$J\geq 1$:}\\
\\
We consider the case $J\geq 1$. We will carry out an induction argument in $J$.  First of all, we have obtained above the $J=0$ case. Suppose \eqref{eq:mainellipticell1} holds for some $J\in \N_0$. We will show that it also holds for $J$ replaced with $J+1$. Consider \eqref{eq:mainloelliptic} with $n= \ell+J+1$. Then we write
\begin{equation*}
\begin{split}
 \Delta r^{m}& X^{n+1}\psi_{\ell} X^{n+2}\psi_{\ell}+(n+1)\Delta' r^{m}(X^{n+1}\psi_{\ell})^2\\
 =&\:\left[\left(\ell+J+\frac{3}{2}\right)r\Delta'  -\frac{m}{2}\Delta\right]r^{m-1}(X^{\ell+J+2}\psi_{\ell})^2+X\left(\frac{1}{2}\Delta r^{m} (X^{\ell+J+2}\psi_{\ell})^2\right).
 \end{split}
\end{equation*}
One can easily verify that the factor in front of the first term on the right-hand side is certainly strictly positive when $m<2(\ell+J+1)+2$. We can therefore integrate the right-hand side of \eqref{eq:mainloelliptic} over $\Sigma_{\tau}$ to obtain for $m<2(\ell+J+1)+2$:
\begin{equation*}
\begin{split}
\int_{\Sigma_{\tau}} r^{m+1} (X^{\ell+J+2}\psi_{\ell})^2\,d\omega d\uprho\leq &\:C\int_{\Sigma_{\tau}} r^{m-1} (X^{\ell+J+1}\psi_{\ell})^2\,d\omega d\uprho+ C\int_{S^2_{\tau,r_+}} (X^{\ell+J+1}\psi_{\ell})^2\,d\omega \\
&+C\int_{\Sigma_{\tau}}r^{m-1}(X^{\ell+J+1}{\pi}_{\ell}F)^2\,d\omega d\uprho.
\end{split}
\end{equation*}
We then arrive at \eqref{eq:mainellipticell1} with $J$ replaced by $J+1$ by taking $k=\ell-\frac{m+1}{2}$ and applying the induction assumption.

We finally obtain \eqref{eq:mainellipticell2} by expanding out the terms in $(rX)^n{\pi}_{\ell}F$.
\end{proof}

\subsection{Application of the hierarchy of elliptic estimates}
\label{sec:applelliptichier}
We establish below improved energy decay for energies restricted to $\psi_0$, $\psi_1$ and $\psi_2$ containing additional $r$-weights with negative powers, via an application of an $r^{-2k}$-weighted hierarchy of elliptic estimates from Proposition \ref{prop:mainellipticell}. These are important for establishing almost-sharp pointwise decay in regions of bounded $r$.
\begin{proposition}
\label{prop:edecayelliptic}
Let $\delta>0$ be arbitrarily small and $K,J\in \N_0$. Then there exists a constant $C=C(M,a,R,K,J,\ell,\delta)>0$, such that
\begin{align}
\label{eq:addedecayl0}
\sum_{n=0}^{J}& \int_{\Sigma_{\tau}} r^{-2} J^N[(rX)^nT^K\psi_{0}] \cdot \mathbf{n}_{\tau}\,r^2d\omega d\uprho\\ \nonumber
\leq &\: C \sum_{\ell\in \{0,2\}}(1+\tau)^{-5+\delta-2K}\left[ E_{\ell,J+K+1,\delta}[\psi]+\sum_{j=0}^JE_{\ell,K+1,\delta}[N^j\psi]\right],\\ 
\label{eq:addedecayl1}
\sum_{n=0}^{1+J}& \int_{\Sigma_{\tau}} r^{-3+\eta} J^N[(rX)^nT^K\psi_{1}] \cdot \mathbf{n}_{\tau}\,r^2d\omega d\uprho\\ \nonumber
\leq &\: C (1+\tau)^{-8+\eta+\delta-2K}\left(E_{1,J+K+3,\delta}[\psi]+\sum_{j=0}^{1+J} E_{1,K+2,\delta}[N^j\psi]\right),\\ 
\label{eq:addedecayl2}
\sum_{n=0}^{2+J}& \int_{\Sigma_{\tau}} r^{-1-\eta} J^N[(rX)^nT^K\psi_{2}] \cdot \mathbf{n}_{\tau}\,r^2d\omega d\uprho\\ \nonumber
\leq&\: C (1+\tau)^{-8+\eta+\delta-2K}\left(E_{2,J+K+5,\delta}[\psi]+E_{0,J+K+5,\delta}[\psi]+\sum_{j=0}^{2+J} E_{2,K+3,\delta}[N^j\psi]+\sum_{j=0}^{2+J} E_{0,K+3,\delta}[N^j\psi]\right),\\
\label{eq:addedecayl3}
\int_{\Sigma_{\tau}}& r^{-1} J^N[T^K\psi_{\geq 3}] \cdot \mathbf{n}_{\tau}\,r^2d\omega d\uprho\leq C (1+\tau)^{-9+\eta+\delta-2K}\left(E_{\geq 3, K+1,\delta}[\psi]+E_{1,K+2,\delta}[\psi]+E_{2,K+2,\delta}[\psi]\right).
\end{align}
\end{proposition}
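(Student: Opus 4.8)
The plan is to combine the hierarchy of $r^{-2k}$-weighted elliptic estimates of Proposition~\ref{prop:mainellipticell} with the (almost-sharp) energy decay estimates of Propositions~\ref{prop:edecay} and~\ref{prop:almostsharpedecay} and Corollary~\ref{cor:edecaycommrX}. The elliptic estimate \eqref{eq:mainellipticell2} lets us trade negative $r$-weights on the left-hand side for $T$-derivatives (which decay faster in $\tau$) and for modes $\psi_{\ell\pm2}$ on the right-hand side; the latter is the manifestation of the angular mode coupling that forces us to treat $\ell=2$ together with $\ell=0$. The key is to iterate \eqref{eq:mainellipticell2} starting from the smallest admissible value $k=\frac12-\delta$ up to the largest admissible value of $k$ for the given $\ell$, each step gaining one power of $T$-decay, and to then invoke the $\tau$-decay already established for the $T$-commuted energies. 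Since $J^N[(rX)^n\,\cdot\,]\cdot\mathbf{n}_\tau \sim (X(rX)^n\,\cdot\,)^2 + r^{-2}(T(rX)^n\,\cdot\,)^2 + r^{-2}|\snabla_{\s^2}(rX)^n\,\cdot\,|^2$, estimating the weighted energy in \eqref{eq:addedecayl0}--\eqref{eq:addedecayl3} amounts (away from $\mathcal H^+$) to estimating exactly the weighted gradient term appearing on the left of \eqref{eq:mainellipticell2}, plus lower-order terms handled by Hardy's inequality \eqref{eq:hardyX}, with the contributions in $\{r_+\le r\le R_0\}$ and near $\mathcal H^+$ absorbed by interior elliptic estimates and commuted redshift estimates exactly as in the proof of Corollary~\ref{cor:edecaycommrX}.

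Concretely, I would proceed mode by mode. For $\ell=3$ (equation \eqref{eq:addedecayl3}): here $k$ is restricted to $(\tfrac12,\tfrac72)$ but the right-hand side of \eqref{eq:mainellipticell2} couples to $\psi_1$, which decays more slowly than $\psi_{\ge3}$ and hence limits the length of the hierarchy. Apply \eqref{eq:mainellipticell2} once, with $k=1-\tfrac\eta2-\delta$ say (so $r^{-2k}=r^{-2+\eta+2\delta}$ on the left, giving an $r^{-1}$-weighted energy after including the $r^2$ volume factor and a further $\delta$-interpolation), bounding $r^{-2k+4}(XT\psi_3)^2+r^{-2k+2}(T^2\psi_3)^2$ by \eqref{eq:decayl3} applied to $T\psi$ and $T^2\psi$, and $a^2r^{-2k}(T^2\psi_1)^2$ by \eqref{eq:decayl1} with $K=2$; the $\psi_5$ term is handled by \eqref{eq:decayl3} as well. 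The slowest term is the $\psi_1$ one, which after using $E_{1,K+2,\delta}[\psi]$ yields the rate $(1+\tau)^{-9+\eta+\delta-2K}$ (since $T^2\psi_1$ carries $(1+\tau)^{-8-\delta-2K}$-type energy decay via \eqref{eq:almostsharpedecay1} with two extra $T$'s, and two powers of $r^{-1}$ are converted). For $\ell=0$ (equation \eqref{eq:addedecayl0}): iterate \eqref{eq:mainellipticell2} from $k=\tfrac12-\delta$; the right-hand side couples to $\psi_{2}$ through the $a^2 r^{-2k}(T^2\psi_2)^2$ term (there is no $\psi_{-2}$), and to $T\psi_0$, $T^2\psi_0$. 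Using \eqref{eq:almostsharpedecay0} for the $\psi_0$-terms (applied to $T$- and $T^2$-derivatives) and \eqref{eq:decayl2} for the $\psi_2$-term, a hierarchy of length appropriate to reach the stated $(1+\tau)^{-5+\delta-2K}$ rate, with the $E_{2,\cdot}$ and $E_{0,\cdot}$ energies on the right-hand side; the commuted estimate \eqref{eq:edecaycommrX} (with $\ell\in\{0,2\}$) supplies the $N^j\psi$ contributions from the region near $\mathcal H^+$.

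For $\ell=1$ and $\ell=2$ (equations \eqref{eq:addedecayl1}, \eqref{eq:addedecayl2}) the scheme is the same but one iterates \eqref{eq:mainellipticell2} more times: for $\ell=1$ the admissible range is $-\tfrac12<k<\tfrac32$, so two iterations (from $k=\tfrac12-\delta$ down to $k$ near $-\tfrac12$, picking up the $\eta$-room to reach $r^{-3+\eta}$), each gaining a $T$-power, combined with \eqref{eq:almostsharpedecay1} applied to $T^m\psi$ and with \eqref{eq:edecaycommrX} for the interior/horizon pieces, gives $(1+\tau)^{-8+\eta+\delta-2K}$; for $\ell=2$ one iterates through the longer range $-\tfrac12<k<\tfrac52$, at each step bounding the coupling terms $a^2r^{-2k}(T^2(rX)^n\psi_0)^2$ and $a^2r^{-2k}(T^2(rX)^n\psi_4)^2$ by \eqref{eq:addedecayl0} (for the $\psi_0$-contribution, which is why $E_{0,\cdot}$ appears) and by \eqref{eq:decayl3}/\eqref{eq:edecaycommrX34} (for $\psi_4$), and the self-coupling $T$-terms by \eqref{eq:almostsharpedecay2}. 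The main obstacle — and the place where the argument is genuinely tight — is precisely this coupling to $\psi_0$ in the $\ell=2$ case: because $\psi_0$ decays only like $(1+\tau)^{-2+\delta}$ pointwise (one power slower than $\psi_2$), the $\psi_0$-generated inhomogeneity caps the available $T$-decay and hence the length of the usable elliptic hierarchy for $\ell=2$ at exactly the value producing $(1+\tau)^{-8+\eta+\delta-2K}$ rather than a faster rate; one must bookkeep the number of extra $T$- and $\Phi$-derivatives carefully (reflected in the shifted indices $J+K+5$, $K+3$ in the energies on the right of \eqref{eq:addedecayl2}) and use the refined, mode-dependent integrated decay already built into $E_{\ell,\cdot}$ so that only $T$-derivatives of the full solution — never the full solution itself — enter. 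The remaining steps (conversion of $X$-weighted gradient control to $J^N\cdot\mathbf n_\tau$ away from $\mathcal H^+$ via \eqref{eq:hardyX}, interior elliptic estimates on $\{r_+<r_0\le r\le R_0\}$, and commuted redshift estimates near $\mathcal H^+$) are routine and identical to those in the proof of Corollary~\ref{cor:edecaycommrX}.
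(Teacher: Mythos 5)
Your overall strategy for \eqref{eq:addedecayl0}--\eqref{eq:addedecayl2} matches the paper's: apply the elliptic hierarchy \eqref{eq:mainellipticell2} to trade negative $r$-weights for $T$-derivatives, invoke the mode-wise energy decay of Propositions \ref{prop:edecay}, \ref{prop:almostsharpedecay} and Corollary \ref{cor:edecaycommrX}, interpolate via Lemma \ref{lm:interpol}, and identify the $\psi_0$-coupling as the obstruction that caps the $\ell=2$ rate. Two concrete points, however, need repair. First, for \eqref{eq:addedecayl3} you apply \eqref{eq:mainellipticell2} to $\psi_{\geq 3}$, but Proposition \ref{prop:mainellipticell} is a fixed-mode statement: its constant depends on $\ell$ (through the choice of $R_0$ and the length-$(\ell+J)$ induction in $X$-derivatives), so it cannot be summed over $\ell\geq 3$ without a uniformity argument you do not supply. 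The paper avoids this entirely by using the spacelike redshift estimate \eqref{eq:ellipticredshiftlgeq3}, which holds directly for the projection $\psi_{\geq 3}$ and produces exactly the $r(X\psi_{\geq 3})^2$ flux needed for the $r^{-1}$-weighted energy, with the $T^2\psi_1$ and $T^2\psi_2$ inhomogeneities then controlled by \eqref{eq:almostsharpedecay1}--\eqref{eq:almostsharpedecay2}.

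Second, for \eqref{eq:addedecayl2} your claim of iterating through the range $-\tfrac12<k<\tfrac52$ is inconsistent with the target weight $r^{-1-\eta}$ (which corresponds to a \emph{single} application with $k=\tfrac12+O(\eta)$) and with your own, correct, remark that the $\psi_0$-coupling forbids further iterations; the paper indeed applies \eqref{eq:mainellipticell2} exactly once with $k=\tfrac12+\delta$. More substantively, the coupling term $a^2\int r^{-1-2\delta}(T^2(rX)^n\psi_0)^2\,d\omega d\uprho$ cannot be bounded directly by \eqref{eq:addedecayl0}, because the weight $r^{-1-2\delta}$ dominates the weight $r^{-2}$ appearing there for large $r$. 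The paper bridges this mismatch by passing through $\|T^2(rX)^n\psi_0\|_{L^\infty(\Sigma_\tau)}^2$ and the geometric-mean estimate \eqref{eq:pointwapp3}, which interpolates between the $r^{-2}$-weighted and unweighted energies of $T^2\psi_0$; without this (or an equivalent device) your chain of inequalities for the $\ell=2$ case does not close. The remaining ingredients you list (Hardy, interior elliptic estimates, commuted redshift near $\mathcal H^+$) are used as in Corollary \ref{cor:edecaycommrX} and are fine.
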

\begin{proof}
We start by considering $\psi_0$. By \eqref{eq:mainellipticell2} with $k=1$, together with \eqref{eq:hardyX} we have that
\begin{equation}
\begin{split}
	\sum_{n=0}^{J} \int_{\Sigma_{\tau}} r^{-2} J^N[(rX)^n\psi_{0}] \cdot \mathbf{n}_{\tau}\,r^2d\omega d\uprho \leq&\:  C\sum_{n=0}^{J} \int_{\Sigma_{\tau}} (X(rX)^n\psi_0)^2+ r^{-2} (T(rX)^n\psi_0)^2 d\omega d\uprho\\
	\leq &\: C\sum_{n=0}^{J} \int_{\Sigma_{\tau}} r^2(XT(rX)^n\psi_0)^2+ (T(rX)^n\psi_0)^2+ r^{-2}(T^2(rX)^n\psi_2)^2 d\omega d\uprho\\
	\leq &\: C\sum_{n=0}^{J} \int_{\Sigma_{\tau}}(J^N[(rX)^nT\psi_0]+J^N[(rX)^nT\psi_2])\cdot \mathbf{n}_{\tau}\,r^2 d\omega d\uprho.    
	\end{split}
\end{equation}
We conclude that \eqref{eq:addedecayl0} must hold by applying \eqref{eq:edecaycommrX}
  for $\ell=0$ and $\ell=2$.
  
We apply \eqref{eq:mainellipticell2} with $k=\frac{3}{2}-\eta$ and then $k=\frac{1}{2}-\eta$, with $\eta>0$ appropriately small, for $\ell=1$ to obtain:
\begin{equation*}
\begin{split}
\sum_{n=0}^{1+J}& \int_{\Sigma_{\tau}} r^{-3+\eta} J^N[(rX)^n\psi_{1}] \cdot \mathbf{n}_{\tau}\,r^2d\omega d\uprho \\
\leq&\: C\sum_{n=0}^{1+J} \int_{\Sigma_{\tau}} r^{-1+\eta} (X(rX)^n\psi_1)^2+ r^{-3+\eta} ((rX)^n\psi_1)^2+r^{-3+\eta} (T(rX)^n\psi_1)^2 d\omega d\uprho\\
\leq&\: C \sum_{n=0}^{1+J}  \int_{\Sigma_{\tau}} r^{1+\eta} (X(rX)^nT\psi_1)^2+ r^{-1+\eta} ((rX)^nT\psi_1)^2+r^{-3+\eta} (T^2(rX)^n\psi_{1})^2\\
&+r^{-3+\eta} (T^2(rX)^n\psi_{3})^2 d\omega d\uprho\\
\leq&\: C \sum_{n=0}^{1+J}  \int_{\Sigma_{\tau}} r^{3+\eta} (X(rX)^nT^2\psi_1)^2+ r^{1+\eta} ((rX)^nT^2\psi_1)^2+r^{-3+\eta} (T^2(rX)^n\psi_{ 3})^2 d\omega d\uprho\\
\leq&\: C\sum_{n=0}^{1+J} \int_{\Sigma_{\tau}} \left(r^{1+\eta} J^N[(rX)^nT^2\psi_{1}]+J^N[(rX)^nT\psi_{ 3}]\right) \cdot \mathbf{n}_{\tau}\,r^2d\omega d\uprho.
\end{split}
\end{equation*}
By combining the energy decay estimates with $r^{2-\delta}$ weights in \eqref{eq:decayl3} and \eqref{eq:almostsharpedecay1} with the energy decay estimates from Corollary \ref{cor:edecaycommrX} via the interpolation inequality in Lemma \ref{lm:interpol}, we therefore obtain:
\begin{equation*}
\sum_{n=0}^{1+J} \int_{\Sigma_{\tau}} r^{-3+\eta} J^N[(rX)^nT^K\psi_{1}] \cdot \mathbf{n}_{\tau}\,r^2d\omega d\uprho\leq C (1+\tau)^{-8-2\ell+\delta+\eta-2K}\left(E_{1,J+K+3,\delta}[\psi]+\sum_{j=0}^{1+J} E_{1,K+2,\delta}[N^j\psi]\right).
\end{equation*}

We can similarly apply \eqref{eq:mainellipticell2} with $k=\frac{1}{2}+\delta$ for $\ell=2$ to obtain:
\begin{equation*}
\begin{split}
\sum_{n=0}^{2+J}& \int_{\Sigma_{\tau}} r^{-1-\delta} J^N[(rX)^n\psi_{2}] \cdot \mathbf{n}_{\tau}\,r^2d\omega d\uprho \\
\leq&\: C \sum_{n=0}^{2+J}  \int_{\Sigma_{\tau}} r^{1-\delta} (X(rX)^nT\psi_2)^2+r^{-1-\delta} ((rX)^nT\psi_2)^2\\
&+r^{-1-\delta} (T^2(rX)^n\psi_{2})^2+r^{-1-\eta} (T^2(rX)^n\psi_{0})^2+r^{-1-\delta} (T^2(rX)^n\psi_{4})^2 \,\omega d\uprho\\
\leq&\:\sum_{n=0}^{2+J} \int_{\Sigma_{\tau}} r^{1-\delta} J^N[(rX)^nT\psi_{2}] \cdot \mathbf{n}_{\tau}+ J^N[(rX)^nT\psi_{4}] \cdot \mathbf{n}_{\tau}\,r^2d\omega d\uprho\\
&+\sum_{n=0}^{2+J} \int_{\Sigma_{\tau}} r^{-1-\delta} (T^2(rX)^n\psi_{0})^2\,d\omega d\uprho.
\end{split}
\end{equation*}
Note that by \eqref{eq:pointwapp3} with $h=\psi_0$, we can further estimate:
\begin{equation*}
\begin{split}
	\sum_{n=0}^{2+J} \int_{\Sigma_{\tau}} r^{-1-\delta} (T^2(rX)^n\psi_{0})^2\,d\omega d\uprho\leq &\: C\sum_{n=0}^{2+J} ||T^2(rX)^n\psi_{0}||^2_{L^{\infty}(\Sigma_{\tau})}\\
	\leq &\: C \sqrt{\int_{\Sigma_{\tau}} r^{-2}J^N[T^2\psi_0]\cdot \mathbf{n}_{\tau}\,r^2d\omega d\uprho }\sqrt{\int_{\Sigma_{\tau}} J^N[T^2\psi_0]\cdot \mathbf{n}_{\tau}\,r^2d\omega d\uprho }.
	\end{split}
\end{equation*}

Hence, after applying \eqref{eq:edecaycommrX}, \eqref{eq:edecaycommrX34} and \eqref{eq:addedecayl0}, we conclude that
\begin{equation*}
\begin{split}
\sum_{n=0}^{2+J}& \int_{\Sigma_{\tau}} r^{-1-\delta} J^N[(rX)^nT^K\psi_{2}] \cdot \mathbf{n}_{\tau}\,r^2d\omega d\uprho\\
\leq&\: C (1+\tau)^{-8-2\ell-2K}\left(E_{2,J+K+5,\delta}[\psi]+E_{0,J+K+5,\delta}[\psi]+\sum_{j=0}^{2+J} E_{2,K+3,\delta}[N^j\psi]+\sum_{j=0}^{2+J} E_{0,K+3,\delta}[N^j\psi]\right).
\end{split}
\end{equation*}

Finally, we note that by \eqref{eq:ellipticredshiftlgeq3} with $\psi$ replaced by $T^K\psi$, we can estimate
\begin{equation*}
\begin{split}
\int_{\Sigma_{\tau}} r^{-1} J^N[T^K\psi_{\geq 3}] \cdot \mathbf{n}_{\tau}\,r^2d\omega d\uprho\leq &\: C\int_{\Sigma_{\tau}} J^N[T^{K+1}\psi_{\geq 3}] \cdot \mathbf{n}_{\tau}\,r^2d\omega d\uprho+C\int_{N_{\tau}}r (L T^{K+1}\phi_{\geq 3})^2 \cdot \mathbf{n}_{\tau}\,r^2d\omega d\uprho\\
&+C\int_{\Sigma_{\tau}} (J^N[T^{K+2}\psi_1]+J^N[T^{K+2}\psi_2])\cdot \mathbf{n}_{\tau}\,r^2d\omega d\uprho.
\end{split}
\end{equation*}
Hence, \eqref{eq:addedecayl3} follows by combining the above equation with the energy decay estimates \eqref{eq:decayl3}, \eqref{eq:almostsharpedecay1} and \eqref{eq:almostsharpedecay2}.
\end{proof}
\begin{remark}
In the proof of Proposition \ref{prop:edecayelliptic}, we applied  \eqref{eq:mainellipticell2} with $k=\frac{1}{2}+\delta$ in the $\ell=2$ case. Note that \eqref{eq:mainellipticell2} in fact applies with $\frac{1}{2}-\eta\leq k\leq \frac{5}{2}-\eta$ when $\ell=2$. However, as we already applied an almost-sharp decay estimate for $T^2\psi_0$, which does not further improve by considering additional weights in $r^{-1}$, we cannot exploit the full hierarchy of elliptic estimates to improve the above decay rate when considering $r^{-5+\eta} J^N[(rX)^n\psi_{2}]$ instead of $r^{-1-\eta} J^N[(rX)^n\psi_{2}]$. \textbf{This is a manifestation of angular mode coupling limiting the sharp decay rate of the $\ell=2$ mode.}
\end{remark}
\section{Pointwise decay estimates}
\label{sec:poinwdecay}
We apply the energy decay estimates of Section \ref{sec:edecay} and \ref{sec:applelliptichier} to obtain $L^{\infty}$ estimates for $\psi$, $r\psi$ and various higher-order quantities of the form $(rX)^J T^K\psi$. These pointwise estimates will be used in subsequent sections to determine the \underline{precise} late-time behaviour of $\psi_{\geq \ell}$ with $\ell=0,1,2$.
\begin{proposition}
\label{prop:pointwisedecayl0}
Let $\delta>0$ be arbitrarily small and $K,J\in \N_0$. Then there exists a constant $C=C(M,a,R,\delta,K,J)>0$, such that
\begin{align}
\label{eq:pointwl01}
||T^K\phi_0||_{L^{\infty}(\Sigma_{\tau})}\leq&\: C (1+\tau)^{-1-K+2\delta}\sqrt{E_{0,K,\delta}[\psi]},\\
\label{eq:pointwl01b}
|| (rX)^JT^K\phi_0||_{L^{\infty}(\Sigma_{\tau})}\leq&\: C (1+\tau)^{-1-K+2\delta}\sqrt{\sum_{\ell\in\{0,2\}} E_{\ell,K+J,\delta}[\psi]+\sum_{j=0}^{J}E_{\ell,K,\delta}[N^j\psi]},\\
\label{eq:pointwl02}
||\sqrt{r} T^K\psi_0||_{L^{\infty}(\Sigma_{\tau})}\leq&\: C (1+\tau)^{-\frac{3}{2}-K+2\delta}\sqrt{E_{0,K,\delta}[\psi]},\\
\label{eq:pointwl03}
|| T^K\psi_0||_{L^{\infty}(\Sigma_{\tau})}\leq &\: C (1+\tau)^{-2-K+2\delta}\sqrt{\sum_{\ell\in\{0,2\}}E_{\ell,K+1,\delta}[\psi]},\\
\label{eq:pointwl04}
|| (rX)T^K\psi_0||_{L^{\infty}(\Sigma_{\tau})}\leq&\: C (1+\tau)^{-2-K+2\delta}\sqrt{\sum_{\ell\in\{0,2\}} E_{\ell,K+2,\delta}[\psi]+\sum_{j=0}^{1}E_{\ell,K+1,\delta}[N^j\psi]},
\end{align}
\end{proposition}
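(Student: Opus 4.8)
The plan is to convert the (weighted) energy decay estimates of Section \ref{sec:edecay} and Section \ref{sec:applelliptichier} into pointwise bounds by a standard fundamental-theorem-of-calculus argument in $\uprho$ combined with Sobolev embedding on $\s^2$; this is precisely the content of Appendix \ref{sec:apppoint} (referenced as Appendix \ref{sec:apppoint} in the introduction), so I would cite the pointwise lemmas there rather than reproving them. Since $T$ is Killing, every estimate below is applied to $T^K\psi$ in place of $\psi$, so I fix $K$ and suppress it where convenient. The main new feature compared to the $a=0$ case is that some of the energy inputs for $\psi_0$ are coupled to $\psi_2$ (and vice versa) — this coupling is already absorbed into the right-hand sides of Corollary \ref{cor:edecaycommrX} and Proposition \ref{prop:edecayelliptic}, so the pointwise proof itself is routine; the only bookkeeping is to make sure the correct combination of energy functionals $E_{\ell,\cdot,\delta}$ appears.

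\textbf{Key steps.} First, for \eqref{eq:pointwl01}: starting from $\phi_0=\sqrt{r^2+a^2}\,\psi_0$ and the identity $\phi_0^2(\tau,\uprho,\cdot)=-2\int_{\uprho}^{\infty}\phi_0 \,L\phi_0\,(\ldots)\,d\uprho'$ (using that $\phi_0\to 0$ only fails at the radiation-field level, so one integrates from a large but finite radius and controls the boundary term by the $\mathcal{I}^+$ flux in \eqref{eq:eboundradfield}), Cauchy--Schwarz gives $\|\phi_0\|_{L^2(\s^2)}^2$ pointwise in $\uprho$ in terms of $\int_{N_\tau}(L\phi_0)^2\,d\omega d\uprho$ and $\int_{N_\tau}r^{-2}\phi_0^2\,d\omega d\uprho$, the latter being handled by \eqref{eq:hardyL}; one then upgrades $L^2(\s^2)$ to $L^\infty(\s^2)$ via the Sobolev inequality on $\s^2$, which costs two angular derivatives — hence the need for $E_{0,K,\delta}$ to contain $\snabla_{\s^2}$-commuted norms, which it does. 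The decay rate $(1+\tau)^{-1-K+2\delta}$ comes from feeding in \eqref{eq:almostsharpedecay0}: the energy $\int_{N_\tau}r^{m-\delta}(L T^{k_2}\phi_0)^2$ with $m=1$ decays like $(1+\tau)^{-2-2(K-k_2)+\delta}\cdot(1+\tau)^{1-\delta}$ after the interpolation built into Proposition \ref{prop:almostsharpedecay}, and a second $\sqrt{\cdot}$ from Cauchy--Schwarz halves the exponent; I would simply track this and cite Lemma \ref{lm:interpol} for the removal of the $r^{-\delta}$ degeneracy. Second, \eqref{eq:pointwl01b} is identical but with $(rX)^J$ inserted; the input is now Corollary \ref{cor:edecaycommrX}, which for $\ell=0$ produces the coupled $\ell\in\{0,2\}$ right-hand side and the redshift-commuted terms $E_{\ell,K,\delta}[N^j\psi]$. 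Third, \eqref{eq:pointwl02}: here one does the FTC argument directly on $\psi_0$ (not $\phi_0$), using $r\psi_0^2(\tau,\uprho)=-\int_\uprho^\infty X(r'\psi_0^2)\,d\uprho' + (\text{interior contribution})$, so that the weight $\sqrt r$ survives; the $\sqrt r$ improvement over \eqref{eq:pointwl01} is exactly the gain from $\phi_0=\sqrt{r^2+a^2}\psi_0$. Fourth, \eqref{eq:pointwl03} and \eqref{eq:pointwl04}: these use the \emph{negative} $r$-weighted energies of Proposition \ref{prop:edecayelliptic} (estimate \eqref{eq:addedecayl0}), which give an extra $(1+\tau)^{-1}$ at the price of the coupling to $\psi_2$ and one extra order in the $E$-functionals ($K+1$, resp.\ $K+2$, and the $\sum_{j=0}^1 E_{\ell,\cdot}[N^j\psi]$ term for the $(rX)$-commuted version); one splits $\Sigma_\tau$ into $\{r\ge R\}$, where \eqref{eq:addedecayl0} plus FTC-in-$\uprho$ applies, and $\{r_+\le r\le R\}$, where one uses Sobolev in all of $(\uprho,\theta,\varphi_*)$ together with the elliptic/redshift energy bound already contained in \eqref{eq:addedecayl0}.

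\textbf{Main obstacle.} The genuinely delicate point is not any single pointwise estimate but getting the $r$-weight/time-decay bookkeeping to close consistently: one must choose, for each target estimate, the correct member of the $r^p$-hierarchy (i.e.\ which $m$ in $r^{m-\delta}$) so that the FTC-in-$\uprho$ integral is finite and simultaneously the time decay is as claimed, and then interpolate away the $r^{-\delta}$ loss using Lemma \ref{lm:interpol} without degrading the exponent beyond the stated $2\delta$. A secondary subtlety is the boundary term at large $\uprho$ in the FTC argument for $\phi_0$-type quantities: since $r\psi$ does not vanish at $\mathcal{I}^+$ in general, one integrates from a finite radius and must absorb the resulting $\mathcal{I}^+$-flux term, which is controlled by \eqref{eq:eboundradfield}/\eqref{eq:eboundradfield2} — this is why the $\phi_0$ estimate \eqref{eq:pointwl01} has a worse rate than the $\psi_0$ estimate \eqref{eq:pointwl03}. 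Beyond that, everything reduces to Cauchy--Schwarz, the Hardy inequalities of Lemma \ref{lm:hardy}, Sobolev embedding on $\s^2$, and direct substitution of the energy decay rates from Propositions \ref{prop:almostsharpedecay}, \ref{prop:edecayelliptic} and Corollary \ref{cor:edecaycommrX}, so I would present the $K=0$, $J=0$ case of \eqref{eq:pointwl01} in full and then indicate the (routine) modifications for the remaining four estimates.
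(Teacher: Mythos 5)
Your proposal follows essentially the same route as the paper: the proof there consists precisely of invoking the pointwise lemmas of Appendix \ref{sec:apppoint} (\eqref{eq:pointwapp1}--\eqref{eq:pointwapp3}, i.e.\ FTC in $\uprho$ plus Cauchy--Schwarz and Hardy) with $f=\phi_0$, $h=\psi_0$ and feeding in \eqref{eq:almostsharpedecay0}, \eqref{eq:addedecayl0} and \eqref{eq:edecaycommrX}, exactly as you describe. Two cosmetic slips worth noting: in \eqref{eq:pointwapp1} the integration for $\phi_0$ runs \emph{outward} from a finite interior radius, so the boundary term is an interior $L^2$ norm handled by Hardy and the Morawetz estimate rather than an $\mathcal{I}^+$ flux, and since $\phi_0$ is the $\ell=0$ projection no Sobolev embedding on $\s^2$ is actually needed for this mode.
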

\begin{proof}
We obtain \eqref{eq:pointwl01} and \eqref{eq:pointwl02} by applying \eqref{eq:pointwapp1} and \eqref{eq:pointwapp2} with $f=\phi_0$ and $h=\psi_0$ together with the energy decay estimate \eqref{eq:almostsharpedecay0}. 

The estimate \eqref{eq:pointwl03} follows by applying \eqref{eq:pointwapp3} with $h=\psi_0$ in combination with \eqref{eq:almostsharpedecay0} and \eqref{eq:addedecayl0}. In order to obtain \eqref{eq:pointwl01b} and \eqref{eq:pointwl04}, we  instead apply the higher-order energy decay estimate \eqref{eq:edecaycommrX} in combination with \eqref{eq:pointwapp3}.
\end{proof}

\begin{proposition}
\label{prop:pointwisedecayl1}
Let $\delta>0$ be arbitrarily small and $K,J\in \N_0$. Then there exists a constant $C=C(M,a,R,K,J,\delta)>0$, such that for $m=0,1$:
\begin{align}
\label{eq:pointwl11}
||T^K\phi_1||_{L^{\infty}(\Sigma_{\tau})}\leq&\: C (1+\tau)^{-2-K+2\delta}\sqrt{E_{1,K,\delta}[\psi]},\\
\label{eq:pointwl12}
||\sqrt{r} T^K\psi_1||_{L^{\infty}(\Sigma_{\tau})}\leq&\: C (1+\tau)^{-\frac{5}{2}-K+2\delta}\sqrt{E_{1,K,\delta}[\psi]},\\
\label{eq:pointwl1b}
||T^K\phi^{(1)}_1||_{L^{\infty}(N_{\tau})}&+(1+\tau)^{\frac{1}{2}(1+\delta)}||r^{-\frac{1}{2}(1+\delta)}T^K\phi^{(1)}_1||_{L^{\infty}(N_{\tau})}\\ \nonumber
\leq&\: C (1+\tau)^{-1-K+2\delta}\sqrt{E_{1,K,\delta}[\psi]},\\
\label{eq:pointwl1c}
||(rX)^J T^K\phi^{(m)}_1||_{L^{\infty}(N_{\tau})}&+(1+\tau)^{\frac{1}{2}(1+\delta)}||r^{-\frac{1}{2}(1+\delta)}(rX)^JT^K\phi^{(m)}_1||_{L^{\infty}(N_{\tau})}\\ \nonumber
\leq&\: C (1+\tau)^{-2+j-K+2\delta}\sqrt{E_{1,K+J,\delta}[\psi]+\sum_{j=0}^{J} E_{1,K,\delta}[N^j\psi]},\\
\label{eq:pointwl13}
r^{-1}|| T^K\psi_1||_{L^{\infty}(\Sigma_{\tau})}&+|| XT^K\psi_1||_{L^{\infty}(\Sigma_{\tau})}\\ \nonumber
\leq&\: C (1+\tau)^{-4-K+2\delta}\sqrt{E_{1,K+3,\delta}[\psi]+\sum_{j=0}^{1} E_{1,K+2,\delta}[N^j\psi]},\\
\label{eq:pointwl14}
|| rX^2T^K\psi_1||_{L^{\infty}(\Sigma_{\tau})}\leq&\: C (1+\tau)^{-4-K+2\delta} \sqrt{E_{1,K+4,\delta}[\psi]+\sum_{j=0}^2E_{1,K+2,\delta}[N^j\psi]}.
\end{align}
\end{proposition}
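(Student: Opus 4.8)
\textbf{Proof proposal for Proposition \ref{prop:pointwisedecayl1}.}

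The plan is to deduce all seven estimates from the energy decay estimates of Sections \ref{sec:edecay} and \ref{sec:applelliptichier} together with the pointwise conversion lemmas in Appendix \ref{sec:apppoint} (the estimates \eqref{eq:pointwapp1}, \eqref{eq:pointwapp2}, \eqref{eq:pointwapp3} referenced in the proof of Proposition \ref{prop:pointwisedecayl0}). First I would establish \eqref{eq:pointwl11} and \eqref{eq:pointwl12} by applying \eqref{eq:pointwapp1} and \eqref{eq:pointwapp2} with $f=\phi_1$ and $h=\psi_1$, feeding in the sharp energy decay estimate \eqref{eq:almostsharpedecay1} for $\psi_1$ (with $K$ replaced by $K$); since the $r^{2-\delta}$-weighted flux of $L(rL)^{k_1}T^{k_2}\phi_1$ decays like $(1+\tau)^{-5+\delta-2K}$, the square-root loses half a power and one converts to $L^\infty$ on $\s^2$ via the Sobolev estimates \eqref{eq:poincare3}, commuting with $\snabla_{\s^2}$ using Proposition \ref{prop:almostsharpedecay} applied to angularly-commuted data — but note that for fixed $\ell=1$ the spherical harmonics are finite-dimensional, so angular Sobolev is free. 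The rates $-2-K$ and $-\frac52-K$ come out exactly from the length-$(5+2K)$ hierarchy encoded in $E_{1,K,\delta}[\psi]$.

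Next I would handle the higher-order $\phi^{(1)}_1$ estimates \eqref{eq:pointwl1b} and \eqref{eq:pointwl1c}. Here the relevant input is the second line of \eqref{eq:almostsharpedecay1}, which controls $\sum_m (1+\tau)^{-2-m-2(K-k_2)+\delta}\int_{N_\tau} r^{m-\delta}(L(rL)^{k_1}T^{k_2}\phi_1^{(1)})^2$, i.e. two powers of $\tau$ worse than the $\phi_1$-estimate, consistent with $\phi^{(1)}_1$ being an $r^2L$-derivative. For \eqref{eq:pointwl1b} one integrates from $N_\tau$ using the fundamental theorem of calculus in $\uprho$ (Appendix \ref{sec:apppoint}): the $r^{0-\delta}$ flux gives the $L^\infty$ bound without $r$-weight, while interpolating with the $r^{2-\delta}$ flux yields the $r^{-\frac12(1+\delta)}$-weighted statement with the stated $(1+\tau)^{\frac12(1+\delta)}$ prefactor. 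For \eqref{eq:pointwl1c} I would instead invoke the commuted energy decay estimate \eqref{eq:edecaycommrX} of Corollary \ref{cor:edecaycommrX} with $\ell=1$ (which controls $\sum_{j\le J}J^N[(rX)^jT^K\psi_1]$ and, via the $r^p$-hierarchy for $\phi^{(1)}$, the quantities $(rX)^J\phi^{(m)}_1$), noting that each $(rX)$-commutation costs one power $r^j$ but is matched by the $r^j$ weight in the flux; the index $j$ appearing in the exponent $-2+j-K$ should read $J$ (the number of $(rX)$-commutations) — this is the one place I would double-check the bookkeeping.

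Finally, \eqref{eq:pointwl13} and \eqref{eq:pointwl14} are the improved-by-$r^{-1}$ statements near bounded $r$, and these require the hierarchy of elliptic estimates: I would apply \eqref{eq:pointwapp3} with $h=\psi_1$ (so that the $L^\infty$ norm is controlled by $\sqrt{\int r^{-2}J^N[\cdot]}\cdot\sqrt{\int J^N[\cdot]}$, a geometric mean of a negatively-weighted and a standardly-weighted energy), then estimate the negatively-weighted factor by \eqref{eq:addedecayl1} from Proposition \ref{prop:edecayelliptic}, which gives $(1+\tau)^{-8+\eta+\delta-2K}$ for $\sum_{n\le 1+J}\int r^{-3+\eta}J^N[(rX)^nT^K\psi_1]$, and the standardly-weighted factor by \eqref{eq:almostsharpedecay1} or \eqref{eq:edecaycommrX}. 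The geometric mean of $(1+\tau)^{-8+\eta-2K}$ and $(1+\tau)^{-5+\delta-2K}$ (after accounting for the correct weights and taking $\eta$ small) produces the $(1+\tau)^{-4-K+2\delta}$ rate; the $XT^K\psi_1$ and $rX^2T^K\psi_1$ versions follow by additionally invoking the pointwise elliptic estimates \eqref{eq:fasterdecatXlpsi}--\eqref{eq:fasterdecatXlpsiv2} of Proposition \ref{prop:redshiftestfixedell}, which express $X^2\psi_1$ through $T$-derivatives of $\psi_1$ (and $\psi_{-1}=0$, $\psi_3$, but those have even faster decay), so no new obstacle arises. I expect the main obstacle to be purely organisational: correctly tracking which energy $E_{1,\bullet,\delta}$ with how many $T$-, $\Phi$- and $N$-commutations feeds each estimate, and ensuring the number of derivatives on the right-hand side is consistent with the inputs from Propositions \ref{prop:almostsharpedecay}, \ref{prop:edecayelliptic} and Corollary \ref{cor:edecaycommrX}; there is no essentially new analytic difficulty beyond what was already resolved for $\psi_0$ in Proposition \ref{prop:pointwisedecayl0}.
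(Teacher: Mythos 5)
Your treatment of \eqref{eq:pointwl11}, \eqref{eq:pointwl12}, \eqref{eq:pointwl1b} and \eqref{eq:pointwl1c} coincides with the paper's: \eqref{eq:pointwapp1}--\eqref{eq:pointwapp2} and \eqref{eq:pointwapp1b} applied to $f=\phi_1$, $\phi_1^{(1)}$ and $h=\psi_1$, fed by \eqref{eq:almostsharpedecay1} and its $(rX)$-commuted version (your citation of Corollary \ref{cor:edecaycommrX} for \eqref{eq:pointwl1c} is in fact the one consistent with the energies $E_{1,K+J,\delta}[\psi]+\sum_{j\le J}E_{1,K,\delta}[N^j\psi]$ appearing there). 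One small correction: the index $j$ in the exponent $-2+j-K$ of \eqref{eq:pointwl1c} should be read as $m$, the order of $\phi_1^{(m)}$, not as $J$; the $(rX)$-commutations cost derivatives on the data but no time decay, whereas each application of $r^2L$ costs one power of $\tau$, exactly as in the pattern \eqref{eq:pointwl11} versus \eqref{eq:pointwl1b} and in the analogous $\ell=2$ estimate \eqref{eq:pointwl2c}.

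There is, however, a genuine gap in your argument for \eqref{eq:pointwl13}--\eqref{eq:pointwl14}. The geometric-mean lemma \eqref{eq:pointwapp3} bounds $\int_{S^2}h^2\,d\omega$ by $\sqrt{A}\sqrt{B}$ with $A=\int r^{-2}J^N[h]\cdot\mathbf{n}\,r^2$ and $B=\int J^N[h]\cdot\mathbf{n}\,r^2$ \emph{fixed}; there is no freedom to insert the stronger $r^{-3+\eta}$ weight of \eqref{eq:addedecayl1} into the first factor. Interpolating \eqref{eq:addedecayl1} with \eqref{eq:almostsharpedecay1} gives $A\lesssim(1+\tau)^{-7+\delta-2K}$ and $B\lesssim(1+\tau)^{-5+\delta-2K}$, so \eqref{eq:pointwapp3} yields only $\|T^K\psi_1\|_{L^\infty(\s^2)}\lesssim(1+\tau)^{-3-K+\delta}$ and, moreover, produces no $r^{-1}$ weight: a geometric mean recovers only half of the three-power gap between the negatively weighted and unweighted energy decay rates. (Your stated arithmetic, that the geometric mean of $\tau^{-8}$ and $\tau^{-5}$ gives $\tau^{-4}$ after the square root, is off: it gives $\tau^{-13/4}$.) The correct tool here is \eqref{eq:pointwapp4} with $k=1$ and $h=T^K\psi_1$, $h=rXT^K\psi_1$, $h=(rX)^2T^K\psi_1$: this controls $\int_{S^2}\uprho^{-2}h^2\,d\omega$ directly by $\int(r^{-3}J^N[h]\cdot\mathbf{n}+r^{-5}h^2)\,r^2d\omega d\uprho$, where the zeroth-order term is absorbed by the Poincar\'e inequality \eqref{eq:poincare1} (this is precisely why $h$ must be supported on $\ell\geq1$), and the remaining term is bounded by $r_+^{-\eta}$ times the $r^{-3+\eta}$-weighted energy of \eqref{eq:addedecayl1}, transferring its full $(1+\tau)^{-8+\eta+\delta-2K}$ decay and yielding $r^{-1}|T^K\psi_1|+|XT^K\psi_1|\lesssim(1+\tau)^{-4-K+2\delta}$ after Lemma \ref{lm:interpol}. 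Your alternative suggestion of invoking \eqref{eq:fasterdecatXlpsi}--\eqref{eq:fasterdecatXlpsiv2} for the $X$-derivative versions is also not the right mechanism at this stage: those identities trade $X^2\psi_1$ for $T$-derivatives and produce the \emph{faster} $(1+\tau)^{-5-K}$ rate of Proposition \ref{prop:fasterdecayforpropag} with different $r$-weights, not the $r$-weighted $(1+\tau)^{-4-K}$ bound \eqref{eq:pointwl14}, which follows simply by taking $h=(rX)^2T^K\psi_1$ in \eqref{eq:pointwapp4}.
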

\begin{proof}
We note first of all that by a standard Sobolev inequality on $\s^2$ together with \eqref{eq:poincare3}, the $L^{\infty}(\s^2)$ norm of $\psi_{1}$ and can be uniformly bounded by its $L^2(\s^2)$ norm.

	We obtain \eqref{eq:pointwl11} and \eqref{eq:pointwl12} by applying \eqref{eq:pointwapp1} and \eqref{eq:pointwapp2} with $f=\phi_1$ and $h=\psi_1$ together with the energy decay estimate \eqref{eq:almostsharpedecay1}. Similarly, \eqref{eq:pointwl1b} follows from \eqref{eq:pointwapp1} and \eqref{eq:pointwapp1b} with $f=\phi^{(1)}$ and \eqref{eq:almostsharpedecay1}.
	
	The estimate \eqref{eq:pointwl13} follows by applying \eqref{eq:pointwapp4} with $k=1$, $h=\psi_1$ and $h=rX\psi_1$, together with the energy decay estimates
	 \eqref{eq:almostsharpedecay1} and \eqref{eq:addedecayl1} and Lemma \ref{lm:interpol}. The estimate \eqref{eq:pointwl14} follows by additionally taking $h=(rX)^2\psi_1$.
	 
	 Finally, \eqref{eq:pointwl1c} follows as \eqref{eq:pointwl1b} but via the higher-order energy decay estimates \eqref{eq:addedecayl1}.
	\end{proof}

\begin{proposition}
\label{prop:pointwisedecayl2}
Let $\delta>0$ be arbitrarily small and $K,J\in \N_0$. Then there exists a constant $C=C(M,a,R,K,J,\delta)>0$, such that for $m=0,1,2$:
\begin{align}
\label{eq:pointwl21}
||r T^K\psi_2||_{L^{\infty}(\Sigma_{\tau})}\leq&\: C (1+\tau)^{-3-K+2\delta}\sqrt{E_{2,K,\delta}[\psi]},\\
\label{eq:pointwl22}
||\sqrt{r} T^K\psi_2||_{L^{\infty}(\Sigma_{\tau})}\leq&\: C (1+\tau)^{-\frac{7}{2}-K+2\delta}\sqrt{E_{2,K,\delta}[\psi]},\\
\label{eq:pointwl2b}
||T^K\phi^{(m)}_2||_{L^{\infty}(N_{\tau})}+&(1+\tau)^{\frac{1}{2}+\delta}||r^{-\frac{1}{2}(1+\delta)}T^K\phi^{(m)}_2||_{L^{\infty}(N_{\tau})}\leq C (1+\tau)^{-1-(2-m)-K+2\delta}\sqrt{E_{2,K,\delta}[\psi]},\\
\label{eq:pointwl2c}
||(rX)^JT^K\phi^{(m)}_2||_{L^{\infty}(N_{\tau})}+&(1+\tau)^{\frac{1}{2}+\delta}||r^{-\frac{1}{2}(1+\delta)}(rX)^J T^K\phi^{(m)}_2||_{L^{\infty}(N_{\tau})}\\ \nonumber
\leq&\: C (1+\tau)^{-1-(2-m)-K+2\delta}\sqrt{E_{2,K+J,\delta}[\psi]+\sum_{j=0}^J E_{2,K,\delta}[N^j\psi]},\\
\label{eq:pointwl23}
\sum_{n=0}^{J+2}|| (rX)^nT^K\psi_2||_{L^{\infty}(\Sigma_{\tau})}\leq&\: C (1+\tau)^{-4-K+2\delta}\Bigg(\sqrt{E_{2,K+J+5,\delta}[\psi]+\sum_{j=0}^{2} E_{2,K+J+3,\delta}[N^j\psi]}\\ \nonumber
&+\sqrt{E_{0,K+J+5,\delta}[\psi]+\sum_{j=0}^{2} E_{0,K+J+3,\delta}[N^j\psi]\Bigg)}.
\end{align}
\end{proposition}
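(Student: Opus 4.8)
The statement to prove is Proposition \ref{prop:pointwisedecayl2}, which collects pointwise decay estimates for $\psi_2$, $r\psi_2$ and the higher-order quantities $\phi_2^{(m)}$, $(rX)^nT^K\psi_2$. The overall strategy mirrors exactly the proofs of Propositions \ref{prop:pointwisedecayl0} and \ref{prop:pointwisedecayl1}: convert the weighted energy decay estimates of Sections \ref{sec:edecay} and \ref{sec:applelliptichier} into $L^\infty$ bounds via the fundamental theorem of calculus in $\uprho$ (the pointwise estimates of Appendix \ref{sec:apppoint}, which I will invoke in the schematic forms \eqref{eq:pointwapp1}--\eqref{eq:pointwapp4}) together with the Sobolev embedding $H^2(\s^2)\hookrightarrow L^\infty(\s^2)$ and the Poincar\'e-type inequality \eqref{eq:poincare3}, which lets the $L^\infty(\s^2)$ norm of $\psi_2$ be controlled by its $L^2(\s^2)$ norm (this is the step that makes the projection to a single angular frequency $\ell=2$ essential, just as for $\ell=1$ in Proposition \ref{prop:pointwisedecayl1}).

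First I would establish \eqref{eq:pointwl21} and \eqref{eq:pointwl22}: apply \eqref{eq:pointwapp1} and \eqref{eq:pointwapp2} with $f=\phi_2=\sqrt{r^2+a^2}\psi_2$ and $h=r\psi_2$ (or $\sqrt{r}\psi_2$), inserting the almost-sharp energy decay estimate \eqref{eq:almostsharpedecay2}; the weights $(1+\tau)^{-3-K+2\delta}$ and $(1+\tau)^{-7/2-K+2\delta}$ come out of the total length $7+2K$ of the hierarchy in \eqref{eq:almostsharpedecay2}, exactly as in the $\ell=1$ case where the length $5+2K$ produced the exponents $-2-K$ and $-5/2-K$. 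Next, \eqref{eq:pointwl2b} follows from \eqref{eq:pointwapp1} (and its variant \eqref{eq:pointwapp1b}) applied now to $f=\phi_2^{(m)}$, $m=0,1,2$, using the bounds in \eqref{eq:almostsharpedecay2} for the $r^{m-\delta}$-weighted energies of $L(rL)^{k_1}T^{k_2}\phi_2^{(j)}$; here the extra $r^{-1/2(1+\delta)}$-weighted norm on $N_\tau$ is obtained in the standard way by distributing one $r$-weight when integrating in $\uprho$. The higher-order version \eqref{eq:pointwl2c} is then the same argument applied after commuting with $(rX)^J$, invoking \eqref{eq:edecaycommrX} in place of \eqref{eq:almostsharpedecay2}, plus the redshift contributions $\sum_{j}E_{2,K,\delta}[N^j\psi]$ coming from the near-horizon region as in Corollary \ref{cor:edecaycommrX}.

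The estimate \eqref{eq:pointwl23} is the substantive one and the main obstacle. Here I would apply \eqref{eq:pointwapp4} with $k=2$ to $h=(rX)^n\psi_2$ for $0\le n\le J+2$, which requires control of the negatively $r$-weighted energies $\int_{\Sigma_\tau}r^{-1-\eta}J^N[(rX)^nT^K\psi_2]\cdot\mathbf{n}_\tau\,r^2\,d\omega d\uprho$ from \eqref{eq:addedecayl2}, interpolated (via Lemma \ref{lm:interpol}) against the $r^{2-\delta}$-weighted energies in \eqref{eq:edecaycommrX} to remove the $r^{-\eta}$ degeneracy. The key point --- and the reason both $E_{2,\dots}$ and $E_{0,\dots}$ appear on the right-hand side --- is that \eqref{eq:addedecayl2} itself is limited by the $\ell=0$ mode through the angular mode coupling: its proof uses the almost-sharp pointwise bound \eqref{eq:pointwl03} for $T^2\psi_0$, which does not improve with more $r^{-1}$-weights, so one cannot push the elliptic hierarchy of Proposition \ref{prop:mainellipticell} past $k=\tfrac12+\delta$ for $\ell=2$. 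Consequently the decay rate saturates at $(1+\tau)^{-4-K+2\delta}$ rather than anything faster, and the $E_{0,\dots}$ terms must be carried along. Assembling these ingredients --- \eqref{eq:edecaycommrX}, \eqref{eq:edecaycommrX34}, \eqref{eq:addedecayl0}, \eqref{eq:addedecayl2}, the interpolation lemma, and \eqref{eq:pointwapp4} --- with careful bookkeeping of how many extra $T$, $\Phi$ and $N$ commutations each step costs, yields \eqref{eq:pointwl23} with the stated energy norms $E_{2,K+J+5,\delta}$, $E_{0,K+J+5,\delta}$ and the redshift corrections. I do not expect any genuinely new idea beyond what was used for $\ell=0,1$; the work is in tracking the coupling-induced appearance of the $\ell=0$ energies and the loss of derivatives.
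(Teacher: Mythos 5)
Your proposal follows the paper's proof essentially verbatim: \eqref{eq:pointwl21}--\eqref{eq:pointwl2c} via \eqref{eq:pointwapp1}, \eqref{eq:pointwapp1b} and \eqref{eq:pointwapp2} combined with \eqref{eq:almostsharpedecay2} and the higher-order/commuted energy decay estimates, and \eqref{eq:pointwl23} via \eqref{eq:pointwapp4} together with \eqref{eq:addedecayl2}, \eqref{eq:edecaycommrX} and Lemma \ref{lm:interpol}, including the correct identification of the $\ell=0$ coupling as the source of the $E_{0,\cdot}$ norms and of the saturation of the rate at $(1+\tau)^{-4-K}$. The only slip is that \eqref{eq:pointwapp4} should be applied with $k=0$ (as the paper does), not $k=2$: with $k=2$ the left-hand side carries a factor $\uprho'^{-4}$ and the right-hand side would require $r^{-5}$-weighted energies with improved decay that \eqref{eq:addedecayl2} does not provide, whereas with $k=0$ the required $r^{-1}$-weighted energy is exactly what interpolating \eqref{eq:addedecayl2} against \eqref{eq:edecaycommrX} yields.
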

\begin{proof}
Note that as in the $\ell=1$ case the $L^{\infty}(\s^2)$ norm of $\psi_{2}$ and can be uniformly bounded by its $L^2(\s^2)$ norm.

	We obtain \eqref{eq:pointwl21} and \eqref{eq:pointwl22} by applying \eqref{eq:pointwapp1} and \eqref{eq:pointwapp2} with $f=\phi_2$ and $h=\psi_2$ together with the energy decay estimate \eqref{eq:almostsharpedecay2}. Similarly, \eqref{eq:pointwl2b} follows from \eqref{eq:pointwapp1} and \eqref{eq:pointwapp1b} with $f=\phi^{(m)}$, $m=1,2$, and \eqref{eq:almostsharpedecay1}.
	
	By applying \eqref{eq:pointwapp4} for $k=0$, together with \eqref{eq:addedecayl2}, \eqref{eq:edecaycommrX} and Lemma \ref{lm:interpol}, we obtain \eqref{eq:pointwl23}.
	
	Finally, \eqref{eq:pointwl2c} follows as \eqref{eq:pointwl2b}, but via the higher-order energy decay estimates \eqref{eq:addedecayl1}.
\end{proof}

\begin{proposition}
\label{prop:pointwisedecayl3}
Let $\delta>0$ be arbitrarily small and $K,J\in \N_0$. Let $Q$ denote the Carter operator:
\begin{equation*}
Q=\slashed{\Delta}_{\s^2}+a^2\sin^2\theta T^2-\Phi^2.
\end{equation*}

Then there exists a constant $C=C(M,a,R,K,J,\delta)>0$, such that for $m=0,1,2,3$:
\begin{align}
\label{eq:pointwl31}
||T^K\phi_{\geq 3}||_{L^{\infty}(\Sigma_{\tau})}\leq&\: C (1+\tau)^{-\frac{7}{2}-K+2\delta}\sqrt{\sum_{j_1+j_2+j_3\leq 1}E_{\geq 3,K,\delta}[Q^{j_1}T^{2j_2}\Phi^{2j_3}\psi]},\\
\label{eq:pointwl32}
||\sqrt{r} T^K\psi_{\geq 3}||_{L^{\infty}(\Sigma_{\tau})}\leq&\: C (1+\tau)^{-4-K+2\delta}\sqrt{\sum_{j_1+j_2+j_3\leq 1}E_{\geq 3,K,\delta}[Q^{j_1}T^{2j_2}\Phi^{2j_3}\psi]},\\
\label{eq:pointwl33}
||T^K\psi_{\geq 3}||_{L^{\infty}(\Sigma_{\tau})}\leq&\: C (1+\tau)^{-\frac{9}{2}-K+2\delta}\Bigg[\sum_{j_1+j_2+j_3\leq 1}E_{\geq 3,K+1,\delta}[Q^{j_1}T^{2j_2}\Phi^{2j_3}\psi]+E_{1,K+2,\delta}[Q^{j_1}T^{2j_2}\Phi^{2j_3}\psi]\\ \nonumber
&+E_{2,K+2,\delta}[Q^{j_1}T^{2j_2}\Phi^{2j_3}\psi]\Bigg].
\end{align}
We also have the following additional estimates for $\ell=3,4$:
\begin{align}
\label{eq:pointwl3c}
||(rX)^JT^K\phi^{(m)}_{\ell}||_{L^{\infty}(N_{\tau})}+&(1+\tau)^{\frac{1}{2}+\delta}||r^{-\frac{1}{2}(1+\delta)}(rX)^J T^K\phi^{(m)}_{\ell}||_{L^{\infty}(N_{\tau})}\\ \nonumber
\leq&\: C (1+\tau)^{-\frac{7}{2}+m-K+2\delta}\sqrt{E_{\geq 3,K+J,\delta}[\psi]+\sum_{j=0}^J E_{\geq 3,K,\delta}[N^j\psi]},\\
\label{eq:pointwl34}
||r^2 XT^K\psi_{\ell}||_{L^{\infty}(\Sigma_{\tau})}\leq &\: C (1+\tau)^{-\frac{7}{2}-K+2\delta}\Bigg[\sqrt{E_{\geq 3,K+1,\delta}[\psi]}+\sum_{j=0}^1\sqrt{E_{\geq 3,K,\delta}[\psi]}\Bigg].
\end{align}
\end{proposition}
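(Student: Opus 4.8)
\textbf{Proof plan for Proposition \ref{prop:pointwisedecayl3}.}

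The strategy mirrors exactly the one used in Propositions \ref{prop:pointwisedecayl0}--\ref{prop:pointwisedecayl2}: one feeds the various energy decay estimates for $\psi_{\geq 3}$ and $\psi_{\ell}$, $\ell = 3,4$, obtained in Sections \ref{sec:edecay} and \ref{sec:applelliptichier}, into the fundamental-theorem-of-calculus-type pointwise bounds \eqref{eq:pointwapp1}, \eqref{eq:pointwapp1b}, \eqref{eq:pointwapp2}, \eqref{eq:pointwapp3}, \eqref{eq:pointwapp4} from Appendix \ref{sec:apppoint}, together with the Sobolev estimate on $\s^2$ and the Poincar\'e inequalities \eqref{eq:poincare1}--\eqref{eq:poincare3}. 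The only genuinely new ingredient compared to the $\ell = 0,1,2$ cases is that the $L^\infty(\s^2)$ norm of $\psi_{\geq 3}$ cannot be controlled by a \emph{fixed} finite number of bare angular derivatives with $\ell$-independent constants in a way that is efficient; instead one controls it by $Q$-commuted energies, exploiting that $Q$ commutes with $\square_{g_{M,a}}$ (as noted in the Remark following Proposition \ref{prop:rpestcommang1}) and that on $\psi_{\geq 3}$ the operator $Q$ is comparable to $\slashed{\Delta}_{\s^2} + a^2 \sin^2\theta\, T^2 - \Phi^2$ with the angular part bounded below — this is precisely why the energy norms on the right-hand sides of \eqref{eq:pointwl31}--\eqref{eq:pointwl33} carry the commutators $Q^{j_1} T^{2j_2}\Phi^{2j_3}$.

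Concretely, first I would establish \eqref{eq:pointwl31} and \eqref{eq:pointwl32}: apply \eqref{eq:pointwapp1} and \eqref{eq:pointwapp2} with $f = \phi_{\geq 3}$, $h = \psi_{\geq 3}$, and bound the resulting $L^2(\Sigma_\tau)$ (respectively weighted) norms by the energy decay estimate \eqref{eq:decayl3} (with the extra factor $(1+\tau)^{-8+\delta}$ for $K=0$, i.e.\ $(1+\tau)^{-8-2K+\delta}$ in general). To pass from $L^2(\s^2)$ to $L^\infty(\s^2)$ at fixed $(\tau,\uprho)$ one invokes the Sobolev embedding on $\s^2$; since $\psi_{\geq 3}$ is supported on $\ell \geq 3$, the needed angular derivatives (two suffice) are recovered from one power of $\slashed{\Delta}_{\s^2}$ via \eqref{eq:angdercontrol}, and $\slashed{\Delta}_{\s^2} = Q - a^2\sin^2\theta\, T^2 + \Phi^2$ (schematically, up to the commutator of $\sin^2\theta$ with $\slashed{\Delta}_{\s^2}$, which produces only lower-order angular terms that are themselves reabsorbed using Poincar\'e on $\ell\geq 3$), which is why $Q^{j_1}T^{2j_2}\Phi^{2j_3}\psi$ with $j_1 + j_2 + j_3 \leq 1$ appears. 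The decay rates $-\frac72 - K$ and $-4-K$ then follow by taking square roots of $(1+\tau)^{-8-2K+\delta}$ — note the gain of $\sqrt r$ in \eqref{eq:pointwl32} comes from using the $r^{2-\delta}$-weighted term on the left-hand side of \eqref{eq:decayl3} in \eqref{eq:pointwapp2}, as in the lower-$\ell$ cases.

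For \eqref{eq:pointwl33} one applies \eqref{eq:pointwapp3} with $h = \psi_{\geq 3}$, which requires in addition the $r^{-1}$-weighted energy decay estimate \eqref{eq:addedecayl3}; combining \eqref{eq:decayl3} and \eqref{eq:addedecayl3} (both commuted with $Q^{j_1}T^{2j_2}\Phi^{2j_3}$, and the latter also contributing the $E_{1,K+2,\delta}$ and $E_{2,K+2,\delta}$ terms via the coupling to $\psi_1,\psi_2$ visible on the right of \eqref{eq:addedecayl3}) yields the $(1+\tau)^{-9+\eta+\delta-2K}$-type spacetime bound, whose square root gives the stated $-\frac92 - K$ rate. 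For the near-$\mathcal{I}^+$ estimates \eqref{eq:pointwl3c} with $\ell = 3,4$, one uses \eqref{eq:pointwapp1} and \eqref{eq:pointwapp1b} applied to $\phi^{(m)}_\ell$, with the input energy decay coming from \eqref{eq:decayl3} for the higher-order quantities $\phi^{(m)}_{\geq 3}$ (the shift $-\frac72 + m$ reflecting the $(1+\tau)^{-2j}$ gain per $\phi^{(j)}$ factor in \eqref{eq:decayl3}), and the commuted-$(rX)^J$ and $N^j\psi$ energies supplied by Corollary \ref{cor:edecaycommrX} (\eqref{eq:edecaycommrX34}). Finally \eqref{eq:pointwl34} follows by applying \eqref{eq:pointwapp4} (with a suitable $k$) to $h = r^2 X\psi_\ell$, interpolating via Lemma \ref{lm:interpol} between the $r^{2-\delta}$-weighted energy in \eqref{eq:decayl3} and the $\ell=3,4$ estimate \eqref{eq:edecaycommrX34}.

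The main obstacle I anticipate is the bookkeeping around the angular mode coupling when reducing $L^\infty(\s^2)$ control to $Q$-commuted energies: the operator $Q$ does not act diagonally on the spherical-harmonic index $\ell$ because of the $a^2\sin^2\theta\, T^2$ term, so one must be careful that commuting with $Q$ does not reintroduce modes $\psi_{\geq 1}$ (it does, weakly, through $\sin^2\theta$, but those contributions come with two extra $T$'s and are handled by \eqref{eq:decayl1}), and one must track exactly how many $T$ and $\Phi$ commutations are spent — as flagged in the Remark after Proposition \ref{prop:edecay}, these counts $M_{T,\ell}, M_{\Phi,\ell}$ are not claimed sharp, so it suffices to verify they are finite. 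Everything else is a routine, if tedious, combination of the already-established energy hierarchies with the pointwise lemmas of Appendix \ref{sec:apppoint}.
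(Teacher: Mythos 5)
Your proposal follows the paper's proof essentially step for step: the $L^{\infty}(\s^2)$ norm of $\psi_{\geq 3}$ is reduced to $L^2(\s^2)$ norms of $\psi_{\geq 3}$, $Q\psi_{\geq 3}$, $T^2\psi_{\geq 3}$, $\Phi^2\psi_{\geq 3}$ via the Sobolev embedding, \eqref{eq:angdercontrol} and the identity $\slashed{\Delta}_{\s^2}=Q-a^2\sin^2\theta\,T^2+\Phi^2$ (the paper does this pointwise before integrating, so no commutation of $\sin^2\theta$ with $\slashed{\Delta}_{\s^2}$ is actually needed), and then the appendix lemmas are combined with \eqref{eq:decayl3}, \eqref{eq:addedecayl3} and \eqref{eq:edecaycommrX34} exactly as you describe for \eqref{eq:pointwl31}, \eqref{eq:pointwl32}, \eqref{eq:pointwl3c} and \eqref{eq:pointwl34}.

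The one concrete misstep is your treatment of \eqref{eq:pointwl33}. You invoke \eqref{eq:pointwapp3}, whose right-hand side is the geometric mean of the $r^{-2}$-weighted and the unweighted energies. No $r^{-2}$-weighted energy decay estimate for $\psi_{\geq 3}$ is available — the elliptic hierarchy only yields the $r^{-1}$-weighted estimate \eqref{eq:addedecayl3} — so the best you can do is bound $\int r^{-2}J^N$ crudely by $\int r^{-1}J^N\lesssim(1+\tau)^{-9+\delta-2K}$ and pair it with $\int J^N\lesssim(1+\tau)^{-8+\delta-2K}$, giving $\|T^K\psi_{\geq 3}\|_{L^2(\s^2)}\lesssim(1+\tau)^{-\frac{17}{4}-K+\delta}$, a quarter power short of the claimed $-\frac{9}{2}-K$. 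The paper instead uses \eqref{eq:pointwapp4} with $k=0$, which requires only the $r^{-1}$-weighted energy (the residual $r^{-1}h^2$ term being absorbed by Poincar\'e for $\ell\geq 3$) and closes at the stated rate; you already use this lemma correctly for \eqref{eq:pointwl34}, so the fix is immediate.
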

\begin{proof}
In order to bound the $L^{\infty}(\s^2)$ norm of $\psi_{\geq 3}$ we consider the Carter operator $Q$ and note that:
\begin{equation*}
\begin{split}
||\psi_{\geq 3}||^2_{L^{\infty}(\s^2)}\leq &\: C \int_{\s^2} \psi_{\geq 3}^2+|\snabla_{\s^2}\psi_{\geq 3}|^2+|\snabla_{\s^2}^2\psi_{\geq 3}|^2\,d\omega\\
\leq &\: C \int_{\s^2} \psi_{\geq 3}^2+(\slashed{\Delta}_{\s^2}\psi_{\geq 3})^2\,d\omega\\
\leq &\: C \int_{\s^2} \psi_{\geq 3}^2+(Q\psi_{\geq 3})^2+(T^2\psi_{\geq 3})^2+(\Phi^2\psi_{\geq 3})^2\,d\omega.
\end{split}
\end{equation*}
Furthermore, since $[T,\square_g]=[\Phi,\square_g]=[Q,\square_g]=0$, all estimates derived for $\psi$ automatically hold for $\psi$ replaced by $Q\psi$, $T^2\psi$ or $\Phi^2\psi$. It therefore remains only to establish control of the $L^2(\s^2)$-norm of $\psi_{\geq 3}$.

We obtain \eqref{eq:pointwl31} and \eqref{eq:pointwl32} by applying \eqref{eq:pointwapp1} and \eqref{eq:pointwapp2} with $f=\phi_{\geq 3}$ and $h=\psi_{\geq 3}$ together with the energy decay estimate \eqref{eq:decayl3}. The estimate \eqref{eq:pointwl34} follows similarly by taking $h=rX \psi_{3}$ or $h=rX \psi_{4}$ and applying the energy decay estimate \eqref{eq:edecaycommrX} with $\ell=3,4$, respectively.

Finally, \eqref{eq:pointwl33} and \eqref{eq:pointwl3c} follow from \eqref{eq:pointwapp4} with $k=0$ and $h=\psi_{\geq 3}$,  and \eqref{eq:pointwapp1}, \eqref{eq:pointwapp1b}, respectively,  combined with \eqref{eq:addedecayl3}.
\end{proof}

In the following proposition we establish \underline{improved} decay rates for certain $X$-derivatives of $\psi_{\ell}$ compared to the decay of $\psi_{\ell}$ itself. These estimates will be crucial for propagating asymptotics of $\psi_{\ell}$ from a region near infinity to the rest of the spacetime.

\begin{proposition}
\label{prop:fasterdecayforpropag}
Let $\delta>0$ be arbitrarily small and $K\in \N_0$. Then there exists a constant $C=C(M,a,R,K,\delta)>0$, such that:
\begin{align}
\label{eq:fasterdecayl0}
||XT^K\psi_0||_{L^{\infty}(\Sigma_{\tau})}\leq&\: C (1+\tau)^{-3-K+2\delta}\left[\sum_{\ell\in\{0,2\}} \sqrt{E_{\ell,K+2,\delta}[\psi]}+\sum_{j=0}^1\sqrt{E_{\ell,K+1,\delta}[N^j\psi]}\right],\\
\label{eq:fasterdecayl1}
||X^2T^K\psi_1||_{L^{\infty}(\Sigma_{\tau})}\leq&\:  C (1+\tau)^{-5-K+2\delta}\Bigg[\sqrt{E_{1,K+5,\delta}[\psi]}+\sum_{j=0}^2\sqrt{E_{1,K+3,\delta}[N^j\psi]}+\sqrt{E_{\geq 3,K+2,\delta}[\psi]}\\ \nonumber
&+\sum_{j=0}^1\sqrt{E_{\geq 3,K+2,\delta}[N^j\psi]}\Bigg],\\
\label{eq:fasterdecayl2}
||r^2X^3T^K\psi_2||_{L^{\infty}(\Sigma_{\tau})}+&||rX^2T^K\psi_2||_{L^{\infty}(\Sigma_{\tau})}\\ \nonumber
\leq &\:   C(1+\tau)^{-5-K+2\delta}\Bigg[\sqrt{E_{2,K+7,\delta}[\psi]}+\sum_{j=0}^{3} \sqrt{E_{2,K+4,\delta}[N^j\psi]}\\ \nonumber
&+\sqrt{E_{0,K+7,\delta}[\psi]}+\sum_{j=0}^{3} \sqrt{E_{0,K+4,\delta}[N^j\psi]}\\ \nonumber
&+\sqrt{E_{\geq 3,K+2,\delta}[\psi]}+\sum_{j=0}^1\sqrt{E_{\geq 3,K+2,\delta}[N^j\psi]} \Bigg].
\end{align}
\end{proposition}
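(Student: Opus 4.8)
\textbf{Proof proposal for Proposition \ref{prop:fasterdecayforpropag}.}

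The plan is to exploit the key structural fact that the \emph{spatial} derivatives $X^{\ell+1}\psi_{\ell}$ (and more generally $X^{n}\psi_{\ell}$ with $n > \ell$) are controlled by the inhomogeneity $F = F[T\psi]$ in the elliptic reformulation \eqref{eq:inhomelliptic}, rather than by $\psi_{\ell}$ itself; since $F$ contains only $T$-derivatives, these spatial derivatives inherit the \emph{faster} time decay of $T$-derivatives. Concretely, for $\ell=0$ one commutes \eqref{eq:waveeq} once and uses the estimate \eqref{eq:fasterdecatXlpsi} from Proposition \ref{prop:redshiftestfixedell} with $\ell=0$, which bounds $\|X\psi_0\|_{L^2(S^2_{\tau,r})}$ pointwise in $r$ by supremal norms of $r^0(rX)^kT\psi_0$, $r^{-2}(rX)^kT^2\psi_0$ (with $k\le 1$) and $a^2 r^{-2}T^2\psi_2$ — note that $\psi_{-2}$ is absent. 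Feeding in the already-established pointwise decay rates \eqref{eq:pointwl03}, \eqref{eq:pointwl04} for $T^K\psi_0$ and the $\ell=2$ analogue \eqref{eq:pointwl21} for $T^{K+2}\psi_2$, together with a Sobolev inequality on $\s^2$ (using \eqref{eq:poincare3}), yields \eqref{eq:fasterdecayl0}. The one extra power of $(1+\tau)^{-1}$ relative to $\|\psi_0\|_{L^\infty}\sim (1+\tau)^{-2+2\delta}$ comes precisely from the $T$-derivative on the right-hand side.

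For $\ell=1$ I would apply \eqref{eq:fasterdecatXlpsi} with $\ell=1$, which controls $\|X^2\psi_1\|_{L^2(S^2)}$ by supremal norms of $r^{-1}(rX)^kT\psi_1$ ($k\le 2$), $r^{-3}(rX)^kT^2\psi_1$ ($k\le 1$), and the coupling term $a^2 r^{-2}X\, T^2\psi_3$ (the $\psi_{-1}$ term again being absent). Here one inserts the sharp decay \eqref{eq:pointwl13}, \eqref{eq:pointwl14} for $T$-commuted and $rX$-commuted $\psi_1$, and the $\psi_{\geq 3}$ estimates — specifically \eqref{eq:pointwl34} with $\ell=3$ (or \eqref{eq:pointwl3c}) for $r^2 X T^{K+2}\psi_3$, so that the $a^2 r^{-2}XT^2\psi_3$ term decays like $(1+\tau)^{-\frac{7}{2}-(K+2)+2\delta}$, which is more than enough. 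The dominant contribution is the $r^{-1}(rX)^kT\psi_1$ term, decaying at rate $(1+\tau)^{-4-K+2\delta}\cdot(1+\tau)^{-1}$, giving the claimed $(1+\tau)^{-5-K+2\delta}$. For $\ell=2$ the structure is the same using \eqref{eq:fasterdecatXlpsi} and \eqref{eq:fasterdecatXlpsiv2} with $\ell=2$: $\|X^3\psi_2\|$ and $\|r^2 X^3\psi_2\|$ are controlled by $(rX)^kT\psi_2$, $(rX)^kT^2\psi_2$, and the coupling terms $a^2 X^2 T^2\psi_0$ and $a^2 X^2 T^2\psi_4$. One feeds in \eqref{eq:pointwl23} for the $\psi_2$ pieces, \eqref{eq:pointwl34} with $\ell=4$ for $\psi_4$, and — crucially — \eqref{eq:fasterdecayl0} (already proved, applied with $K$ replaced by $K+2$ and commuted with $rX$, which is why the energy $E_{0,\cdot}$ appears on the right) to control $X^2 T^{K+2}\psi_0 = X\cdot XT^{K+2}\psi_0$: since $XT^j\psi_0$ decays like $(1+\tau)^{-3-j+2\delta}$, the term $a^2 X^2 T^{K+2}\psi_0$ decays like $(1+\tau)^{-5-K+2\delta}$ after one more $X$. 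The $rX^2 T^K\psi_2$ bound follows from \eqref{eq:fasterdecatXlpsiv2} analogously.

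The main obstacle — and the reason this proposition is singled out — is the $\ell=2$ case: the coupling term $a^2 X^2 T^2\psi_0$ does \emph{not} decay faster than $\psi_0$ decays (one cannot beat the $\ell=0$ rate by taking more $X$-derivatives of an $\ell=0$ mode, since \eqref{eq:fasterdecatXlpsi} for $\ell=0$ only buys a single $T$). Thus the decay rate $(1+\tau)^{-5-K+2\delta}$ for $X^3 T^K\psi_2$ is genuinely limited by the $\ell=0$ mode, not by $\psi_2$'s own (faster) behaviour, which is exactly the "angular mode coupling limiting the sharp decay rate" phenomenon flagged in the remark after Proposition \ref{prop:edecayelliptic}. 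One must therefore be careful to track that it is $XT^K\psi_0$ — with its improved rate from \eqref{eq:fasterdecayl0} — and not $\psi_0$ itself that enters, and that the bookkeeping of $N$-commuted energies $E_{0,\cdot}[N^j\psi]$ (needed because the elliptic-redshift estimates degenerate and require the commuted redshift vector field near $\mathcal{H}^+$, cf. the proof of Corollary \ref{cor:edecaycommrX}) closes consistently. A secondary technical point is converting the pointwise-in-$r$ $L^2(S^2_{\tau,r})$ bounds from Proposition \ref{prop:redshiftestfixedell} into genuine $L^\infty(\Sigma_\tau)$ bounds; this is handled, as elsewhere in the paper, by the Sobolev inequality on $\s^2$ combined with \eqref{eq:poincare3} (the low angular modes $\ell=0,1,2$ have uniformly bounded $L^\infty$-to-$L^2$ ratios on $\s^2$), together with the supremum already being taken over $r$ in the statements of \eqref{eq:fasterdecatXlpsi}–\eqref{eq:fasterdecatXlpsiv2}.
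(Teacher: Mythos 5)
Your proposal follows essentially the same route as the paper's proof: both rest on the spacelike-redshift estimates \eqref{eq:fasterdecatXlpsi} and \eqref{eq:fasterdecatXlpsiv2} of Proposition \ref{prop:redshiftestfixedell}, into which one substitutes the pointwise decay bounds of Section \ref{sec:poinwdecay} (and, for $\ell=2$, the already-established \eqref{eq:fasterdecayl0} with $K\mapsto K+2$ to handle the $\psi_0$ coupling term), and your identification of the $\ell=0\to\ell=2$ coupling as the rate-limiting mechanism matches the paper's own remark. The only loose step in your write-up --- asserting that $X^2T^{K+2}\psi_0$ inherits the $(1+\tau)^{-5-K+2\delta}$ rate of $XT^{K+2}\psi_0$ ``after one more $X$'' without an explicit commuted estimate --- is treated with the same brevity in the paper, whose proof simply records the coupling term as $a^2\|XT^2\psi_0\|_{L^{\infty}(\Sigma_{\tau})}$.
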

\begin{proof}
Recall from \eqref{eq:fasterdecatXlpsi} that
\begin{equation*}
\begin{split}
||X^{\ell+1}\psi_{\ell}||_{L^2(S^2_{\tau,r'})}\leq&\: C\sum_{k=0}^{\ell+1}\sup_{r_+\leq r \leq r'}||r^{-\ell}(rX)^{k}T\psi_{\ell} ||_{L^2(S^2_{\tau,r})}+C\sum_{k=0}^{\ell}\sup_{r_+\leq r \leq r'}||r^{-2-\ell}(rX)^{k}T^2\psi_{\ell} ||_{ L^2(S^2_{\tau,r})}\\
&+Ca^2\sup_{r_+\leq r \leq r'}||r^{-2}X^{\ell}T^2\psi_{\ell-2} ||_{L^2(S^2_{\tau,r})}+Ca^2\sup_{r_+\leq r \leq r'}||r^{-2}X^{\ell}T^2\psi_{\ell+2} ||_{L^2(S^2_{\tau,r})}.
\end{split}
\end{equation*}
Hence,
\begin{equation*}
\begin{split}
||XT^K\psi_0||_{L^{\infty}(\Sigma_{\tau})}\leq&\: C\sum_{k=0}^{1}||(rX)^{k}T\psi_{0} ||_{L^{\infty}(\Sigma_{\tau})}+C||r^{-2}T^2\psi_{0} ||_{L^{\infty}(\Sigma_{\tau})}+Ca^2||r^{-2}T^2\psi_{2} ||_{L^{\infty}(\Sigma_{\tau})}\\
\leq &\:  C (1+\tau)^{-3-K+2\delta}\sqrt{\sum_{\ell\in\{0,2\}} E_{\ell,K+2,\delta}[\psi]+\sum_{j=0}^1E_{\ell,K+1,\delta}[N^j\psi]}.
\end{split}
\end{equation*}
Furthermore, by \eqref{eq:pointwl13} and \eqref{eq:pointwl14}.
\begin{equation*}
\begin{split}
||X^2T^K\psi_1||_{L^{\infty}(\Sigma_{\tau})}\leq&\: C\sum_{k=0}^{2}||r^{-1}(rX)^{k}T\psi_{1} ||_{L^{\infty}(\Sigma_{\tau})}+C\sum_{k=0}^1||r^{-3}(rX)^{k}T^2\psi_{1} ||_{L^{\infty}(\Sigma_{\tau})}+Ca^2||r^{-2}XT^2\psi_{3} ||_{L^{\infty}(\Sigma_{\tau})}\\
\leq &\:  C (1+\tau)^{-5-K+2\delta}\Bigg[\sqrt{E_{1,K+5,\delta}[\psi]}+\sum_{j=0}^2\sqrt{E_{1,K+3,\delta}[N^j\psi]}+\sqrt{E_{\geq 3,K+2,\delta}[\psi]}\\
&+\sum_{j=0}^1\sqrt{E_{\geq 3,K+2,\delta}[N^j\psi]}\Bigg].
\end{split}
\end{equation*}
We apply \eqref{eq:fasterdecatXlpsiv2} and \eqref{eq:pointwl23} to obtain:
\begin{equation*}
\begin{split}
||r^2X^3T^K\psi_2||_{L^{\infty}(\Sigma_{\tau})}\leq&\: C\sum_{k=0}^{3}||(rX)^{k}T\psi_{2} ||_{L^{\infty}(\Sigma_{\tau})}+C\sum_{k=0}^2||r^{-2}(rX)^{k}T^2\psi_{2} ||_{L^{\infty}(\Sigma_{\tau})}\\
&+Ca^2||XT^2\psi_{0} ||_{L^{\infty}(\Sigma_{\tau})}+Ca^2||XT^2\psi_{4} ||_{L^{\infty}(\Sigma_{\tau})}\\
\leq &\:   C(1+\tau)^{-5-K+2\delta}\Bigg[\sqrt{E_{2,K+7,\delta}[\psi]}+\sum_{j=0}^{3} \sqrt{E_{2,K+4,\delta}[N^j\psi]}\\ 
&+\sqrt{E_{0,K+7,\delta}[\psi]}+\sum_{j=0}^{3} \sqrt{E_{0,K+4,\delta}[N^j\psi]}\\
&+\sqrt{E_{\geq 3,K+2,\delta}[\psi]}+\sum_{j=0}^1\sqrt{E_{\geq 3,K+2,\delta}[N^j\psi]} \Bigg].
\end{split}
\end{equation*}

We obtain an estimate for $||r X^2T^K\psi_2||_{L^{\infty}(\Sigma_{\tau})}$ by integrating from $\rho=\infty$.
\end{proof}

\section{Elliptic theory of time inversion}
\label{sec:timeinv}
The aim of this section is to construct a solution $\widetilde{\psi}$ to \eqref{eq:waveeq}, such that
\begin{equation*}
T\widetilde{\psi}=\psi,
\end{equation*}
with $\psi$ another solution to \eqref{eq:waveeq} arising from smooth, compactly supported initial data on $\Sigma_0$. We will denote $T^{-1}\psi=\widetilde{\psi}$. This propositions in this section are self-contained and independent from the estimates in Sections \ref{sec:rpest}--\ref{sec:poinwdecay}.

The construction of $\widetilde{\psi}$ relies on the invertibility of the differential operator $\mathcal{L}$:
\begin{equation*}
\mathcal{L}f=X(\Delta Xf)+2aX\Phi f+\slashed{\Delta}_{\s^2}f,
\end{equation*}
which was introduced in Section \ref{sec:elliptic},  acting on suitable Hilbert spaces.

Let $\mathbf{H}_k$ denote the completion of the space
\begin{equation*}
C^{\infty}_{\rm rad}(\Sigma_0):=\{f\in C^{\infty}(\Sigma_0)\,|\,(r^2+a^2)^{\frac{1}{2}}f\in C^{\infty}(\widehat{\Sigma}_0)\},
\end{equation*}
with respect to the norm $||\cdot||_k$, defined as follows:
\begin{equation*}
||f||_k^2:=\sum_{k_1+k_2\leq k} \int_{\Sigma_0} |\snabla_{\s^2}^{k_1}(rX)^{k_2}f|^2\,d\omega d\uprho.
\end{equation*}

Let $D_k(\mathcal{L})$ denote the closure of $C^{\infty}_{\rm rad}(\Sigma_0)$ under the norm:
\begin{equation*}
||f||_{k}+||\mathcal{L}f||_{k}.
\end{equation*}
Then
\begin{equation*}
\mathcal{L}: D_k(\mathcal{L})\to \mathbf{H}_k
\end{equation*}
is a densely defined, closed, linear operator.

\subsection{Inverting $\mathcal{L}$}
In this section, we will establish invertibility and regularity properties of $\mathcal{L}$. The strategy for obtaining invertibility of $\mathcal{L}$ can be compared to the strategy of obtaining invertibility of the resolvent operators considered in Section 4 of \cite{warn15}. In particular, as in \cite{warn15}, the use of (elliptic) redshift estimates will be important.

 The following proposition contains the key estimates that are relevant for invertibility:
\begin{proposition}
\label{prop:keyestimateTinv}
Let $(r^2+a^2)^{\frac{1}{2}}f\in C^{\infty}(\widehat{\Sigma}_0)$. Then
\begin{equation}
\label{eq:keytimeinvest}
\int_{\Sigma_0}f^2+r^2 (Xf)^2+|\snabla_{\s^2}f|^2 \,d\omega d\uprho\leq C \int_{\Sigma_0} (\mathcal{L}f)^2\,d\omega d\uprho.
\end{equation}
More generally,
\begin{equation}
\label{eq:keytimeinvestho}
||f||_{k}+||\snabla_{\s^2}f||_{k}+||r Xf||_{k}\leq C||\mathcal{L}f||_{k}.
\end{equation}
Furthermore, $\mathbf{H}_{k+1}\subset D_{k}(\mathcal{L})$ and the equation \eqref{eq:keytimeinvestho} holds also for all $f\in D_{k}(\mathcal{L})$.
\end{proposition}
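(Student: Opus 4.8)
The core inequality is \eqref{eq:keytimeinvest}; the higher-order version \eqref{eq:keytimeinvestho} and the domain statement follow by commuting and a density argument. The plan is to derive \eqref{eq:keytimeinvest} by combining three ingredients: (i) the spacelike redshift estimate \eqref{eq:ellipticredshift1} of Proposition \ref{prop:ellipticredshift}, which already gives $\int_{\Sigma_0} r(Xf)^2\,d\omega d\uprho \leq C\int_{\Sigma_0} r^{-1}(\mathcal{L}f)^2\,d\omega d\uprho$ and hence controls $\int_{\Sigma_0} r^2(Xf)^2$ after noting $r^{-1}\leq Cr$ is the wrong direction—so in fact one must be slightly more careful and re-run the redshift multiplier computation from the proof of Proposition \ref{prop:ellipticredshift} keeping the weight that produces $r^2(Xf)^2$ on the left, paying with $\int (\mathcal{L}f)^2$ (no negative $r$-power) on the right; (ii) control of the angular derivatives $|\snabla_{\s^2}f|^2$: multiplying $\mathcal{L}f = X(\Delta Xf) + 2aX\Phi f + \slashed{\Delta}_{\s^2}f$ by $f$ and integrating by parts on $\Sigma_0$ produces, after using $\lim_{\uprho\to\infty}\snabla_{\s^2}f = 0$ and $\lim_{\uprho\to\infty} f$ finite (from $(r^2+a^2)^{1/2}f\in C^\infty(\widehat\Sigma_0)$), the identity $\int_{\Sigma_0} \Delta (Xf)^2 + |\snabla_{\s^2}f|^2\,d\omega d\uprho = -\int_{\Sigma_0} f\,\mathcal{L}f\,d\omega d\uprho + (\text{harmless }\Phi\text{-term})$, where the $2aX\Phi f$ term integrates to a pure $\Phi$-divergence; and (iii) a Poincaré/Hardy step to absorb the zeroth-order term $\int_{\Sigma_0} f^2$: here one applies the Hardy inequality \eqref{eq:hardyX} with a suitable power $p$ to bound $\int_{\Sigma_0} f^2\,d\omega d\uprho$ by $\int_{\Sigma_0} r^2(Xf)^2\,d\omega d\uprho$ plus a boundary term at $\uprho = r_+$, and that boundary term over $S^2_{0,r_+}$ is in turn controlled by the redshift estimate \eqref{eq:estforprop} (or directly by integrating $X(f^2)$ from $r_+$ outward together with the $r^2(Xf)^2$ bound).

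\textbf{Assembling \eqref{eq:keytimeinvest}.} First I would run the redshift multiplier argument to get $\int_{\Sigma_0} r^2(Xf)^2\,d\omega d\uprho \leq C\int_{\Sigma_0}(\mathcal{L}f)^2\,d\omega d\uprho$; note that this already requires choosing the multiplier weight so that the coefficient of $(Xf)^2$ stays uniformly positive for all $r\geq r_+$, which is exactly the ``global'' strict positivity of $\Delta' = \frac{d\Delta}{dr}$ highlighted in the footnote of Section \ref{sec:elliptic}. Then the integration-by-parts identity in (ii) gives $\int_{\Sigma_0}|\snabla_{\s^2}f|^2\,d\omega d\uprho \leq C\|f\|_{L^2(\Sigma_0)}\|\mathcal{L}f\|_{L^2(\Sigma_0)}$, which after Young's inequality is controlled once $\|f\|_{L^2(\Sigma_0)}$ is; and step (iii) closes the loop by bounding $\|f\|_{L^2(\Sigma_0)}^2$ in terms of $\int r^2(Xf)^2$ and a boundary term at $r_+$ which is itself estimated by the already-proven bound on $\int r^2 (Xf)^2$ plus $\int(\mathcal{L}f)^2$ via \eqref{eq:estforprop} with $\ell = 0$ (keeping in mind the $\ell=0$ case of Proposition \ref{prop:redshiftestfixedell} does not require projection). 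Summing the three gives \eqref{eq:keytimeinvest}.

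\textbf{Higher-order estimate and domain.} For \eqref{eq:keytimeinvestho} I would commute the equation $\mathcal{L}f = g$ with the operators $rX$ and $\snabla_{\s^2}$. Commuting with $\snabla_{\s^2}$ is benign since $\slashed{\Delta}_{\s^2}$, $X$, $\Phi$ all essentially commute with angular derivatives up to curvature terms on $\s^2$ that are lower order and absorbable; commuting with $rX$ produces, as in \eqref{eq:commidn} and the proof of Proposition \ref{prop:mainellipticell}, an equation of the same shape for $(rX)^{k_2}f$ with inhomogeneity $(rX)^{k_2}g$ plus a finite sum of lower-order terms $(rX)^{j}f$, $j<k_2$, with bounded coefficients (near $r_+$ the elliptic-redshift positivity of $\Delta'$ again keeps the zeroth-order terms on the good side). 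An induction on $k = k_1+k_2$, feeding \eqref{eq:keytimeinvest} into each commuted equation and absorbing the lower-order terms by the inductive hypothesis, yields \eqref{eq:keytimeinvestho} for all $f\in C^\infty_{\rm rad}(\Sigma_0)$. Finally, the inclusion $\mathbf{H}_{k+1}\subset D_k(\mathcal{L})$ and the validity of \eqref{eq:keytimeinvestho} on all of $D_k(\mathcal{L})$ follow by density: given $f\in \mathbf{H}_{k+1}$, take $f_j\in C^\infty_{\rm rad}(\Sigma_0)$ with $f_j\to f$ in $\|\cdot\|_{k+1}$; the estimate \eqref{eq:keytimeinvestho} applied to $f_j - f_{j'}$ shows $\mathcal{L}f_j$ is Cauchy in $\mathbf{H}_k$, hence $f\in D_k(\mathcal{L})$ and \eqref{eq:keytimeinvestho} passes to the limit.

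\textbf{Main obstacle.} I expect the genuinely delicate point to be controlling the zeroth-order term $\int_{\Sigma_0} f^2$ and the associated boundary term at the horizon sphere $S^2_{0,r_+}$ — there is no coercivity at $r = r_+$ where $\Delta$ vanishes, so one cannot simply integrate by parts the $f\,\mathcal{L}f$ product to gain an $f^2$ term with a good sign, and the Hardy inequality \eqref{eq:hardyX} only buys $L^2$ control modulo exactly that horizon boundary term. Getting a \emph{closed} estimate therefore hinges on the elliptic-redshift control of $\|f\|_{L^2(S^2_{0,r_+})}$ via \eqref{eq:estforprop}, i.e. on the strict positivity of $\Delta'$ at $r_+$; the bookkeeping of which powers of $r$ may be spent where (so that nothing worse than $\int(\mathcal{L}f)^2$ with no negative weight appears on the right) is the part that needs the most care.
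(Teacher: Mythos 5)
Your core estimate is essentially the paper's own argument: the single multiplier $r\,Xf$ (i.e.\ the identity \eqref{eq:keytimeinvid} with $q=1$) produces \emph{both} the coefficient $\tfrac12(\Delta'-\Delta r^{-1})r=\tfrac12(r^2-a^2)\sim r^2$ in front of $(Xf)^2$ \emph{and} the bulk term $\tfrac12|\snabla_{\s^2}f|^2$ with a good sign, plus a good-signed horizon flux of $|\snabla_{\s^2}f|^2$. So your separate step (ii) with multiplier $f$ is redundant; moreover your claim there that the $2aX\Phi f$ contribution "integrates to a pure $\Phi$-divergence" is not quite right — $\int 2af\,X\Phi f = -\int 2a(Xf)(\Phi f)$ leaves a genuine cross term (it is absorbable via Cauchy--Schwarz once $\int r^2(Xf)^2$ is controlled, but it is not zero). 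More importantly, the point you single out as the "main obstacle" is not one: in the Hardy inequality \eqref{eq:hardyX} with $p=0$ on $[r_+,\infty)$ the boundary contribution at $\uprho=r_+$ enters with the coefficient $-2r_+f^2(r_+)\le 0$, so it can simply be dropped, and the term at infinity vanishes since $f=O(r^{-1})$. No appeal to \eqref{eq:estforprop} is needed — and \eqref{eq:estforprop} would not apply anyway, since it controls $X^{\ell+1}\psi_\ell$ on spheres, not $f$ itself. The higher-order step (commuting with $X$ and $\slashed{\Delta}_{\s^2}$, using the cancellation $X^{k_2}(\mathcal L f-k_2(k_2+1)f)=X(\Delta X^{k_2+1}f)+k_2\Delta'X^{k_2+1}f+2a\Phi X^{k_2+1}f+\slashed{\Delta}_{\s^2}X^{k_2}f$ and inducting) matches the paper.

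The genuine gap is in your density argument for the domain statement. You start from $f\in\mathbf{H}_{k+1}$, approximate by $f_j$ in $\|\cdot\|_{k+1}$, and claim \eqref{eq:keytimeinvestho} applied to $f_j-f_{j'}$ shows $\mathcal{L}f_j$ is Cauchy in $\mathbf{H}_k$. This runs the estimate backwards: \eqref{eq:keytimeinvestho} bounds $f$ \emph{by} $\mathcal{L}f$, so Cauchy-ness of $\{f_j\}$ in $\|\cdot\|_{k+1}$ gives no control on $\mathcal{L}(f_j-f_{j'})$ — that would require $\|\mathcal{L}g\|_k\lesssim\|g\|_{k+1}$, which fails because $\mathcal{L}$ is second order (one would need $\|g\|_{k+2}$). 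The correct argument, and the one that is actually used later in Proposition \ref{prop:Linvenest}, goes the other way: take $f\in D_k(\mathcal{L})$ with its defining graph-norm approximating sequence, so that $\mathcal{L}f_j$ is Cauchy in $\mathbf{H}_k$ by construction; then \eqref{eq:keytimeinvestho} applied to $f_j-f_{j'}$ shows $\{f_j\}$ is Cauchy in $\|\cdot\|_{k+1}$, hence $f\in\mathbf{H}_{k+1}$ and the estimate passes to the limit. In other words, the substantive content is the elliptic regularity gain $D_k(\mathcal{L})\subset\mathbf{H}_{k+1}$ (one derivative, consistent with the degenerate redshift estimates giving only $H^1$-level control), and your proposed argument does not establish either inclusion.
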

\begin{proof}
Let $q\in \R$. After applying the Leibniz rule, we obtain the following identity:
\begin{equation}
\label{eq:keytimeinvid}
\begin{split}
r^qXf \mathcal{L}f=&\:\frac{1}{2} X( r^q\Delta(Xf)^2)+\frac{1}{2}\left(\Delta'-q\Delta r^{-1}\right) r^q(Xf)^2+a\Phi(r^q(Xf)^2)+\textnormal{div}_{\s^2}(r^qXf \snabla_{\s^2}f)\\
&-\frac{1}{2}X(r^q|\snabla_{\s^2}f|^2)+\frac{1}{2}q r^{q-1}|\snabla_{\s^2}f|^2.
\end{split}
\end{equation}
Integrating the above identity with $q=1$ over $\Sigma_0$ and using that $(r^2+a^2)^{\frac{1}{2}}f\in C^{\infty}(\widehat{\Sigma}_0)$, we therefore obtain:
\begin{equation*}
\int_{\Sigma_0} \frac{1}{2}(r^2-a^2) (Xf)^2+\frac{1}{2}|\snabla_{\s^2}f|^2 \,d\omega d\uprho+\frac{1}{2}\int_{\Sigma_0\cap\{\uprho=r_+\}}r|\snabla_{\s^2}f|^2\,d\omega\leq \int_{\Sigma_0} r |Xf| |\mathcal{L}f|\,d\omega d\uprho
\end{equation*}
By applying a weighted Young's inequality to right-hand side and absorbing the resulting $(Xf)^2$ term into the left-hand side, we are left with:
\begin{equation}
\label{eq:keytimeinvest0}
\int_{\Sigma_0}r^2 (Xf)^2+|\snabla_{\s^2}f|^2 \,d\omega d\uprho+\int_{\Sigma\cap\{\uprho=r_+\}}|\snabla_{\s^2}f|^2\,d\omega\leq C \int_{\Sigma_0}  (\mathcal{L}f)^2\,d\omega d\uprho.
\end{equation}
By applying \eqref{eq:hardyX} we obtain \eqref{eq:keytimeinvest}.

In order to obtain \eqref{eq:keytimeinvestho}, we first observe that:
\begin{equation*}
X^{k_2} (\mathcal{L}f-k_2(k_2+1)f)=X(\Delta X^{k_2+1}f)+k_2\Delta' X^{k_2+1}f+ 2a \Phi X^{k_2+1}f+\slashed{\Delta}_{\s^2} X^{k_2}f.
\end{equation*}
We then multiply both sides by $(-1)^{k_1}\slashed{\Delta}_{\s^2}^{k_1}r^{2k_2+1}X^{k_2+1} f$ and apply the Leibniz rule multiple times to obtain:
\begin{equation*}
\begin{split}
r^{2k_2+1}&\snabla_{\s^2}^s \slashed{\Delta}_{\s^2}^{\frac{k_1-s}{2}}X^{k_2+1} f \cdot \snabla_{\s^2}^s \slashed{\Delta}_{\s^2}^{\frac{k_1-s}{2}}X^{k_2}( \mathcal{L}f-k_2(k_2+1)f)=\left(k_2+\frac{1}{2}\right)[\Delta'-\Delta r^{-1}] r^{2k_2+1}|\snabla_{\s^2}^s \slashed{\Delta}_{\s^2}^{\frac{k_1-s}{2}}X^{k_2+1}f|^2\\
&+\left(k_2+\frac{1}{2}\right)r^{2k_2}|\snabla_{\s^2}^{1-s} \slashed{\Delta}_{\s^2}^{\frac{k_1+s}{2}}X^{k_2}f|^2-\frac{1}{2}X(r^{2k_2+1}|\snabla_{\s^2}^{1-s} \slashed{\Delta}_{\s^2}^{\frac{k_1+s}{2}}X^{k_2}f|^2)+\ldots
\end{split}
\end{equation*}
where $s=0$ if $k_1$ is even and $s=1$ if $k_1$ is odd, and the terms in $\ldots$ on the right-hand side are total derivatives that vanish after integrating over $\Sigma_0$. We therefore obtain:
\begin{equation}
\label{eq:esthoder}
\begin{split}
\int_{\Sigma_0}& r^{2k_2+2} |\snabla_{\s^2}^s \slashed{\Delta}_{\s^2}^{\frac{k_1-s}{2}}X^{k_2+1}f|^2+ r^{2k_2} |\snabla_{\s^2}^{1-s} \slashed{\Delta}_{\s^2}^{\frac{k_1+s}{2}}X^{k_2}f|^2\,d\omega d\uprho\\
\leq&\:  C \int_{\Sigma_0}r^{2k_2} |\snabla_{\s^2}^s \slashed{\Delta}_{\s^2}^{\frac{k_1-s}{2}}X^{k_2}(\mathcal{L}f-k_2(k_2+1)f)|^2\,d\omega d\uprho.
\end{split}
\end{equation}
We then obtain \eqref{eq:keytimeinvestho} via \eqref{eq:angdercontrol} and a straightforward induction argument in $k_2$.

Now suppose $f\in D_k(\mathcal{L})$. By definition, there exists a sequence $\{f_j\}$ such that $(r^1+a^2)^{\frac{1}{2}}f_k\in C^{\infty}(\widehat{\Sigma}_0)$ and $f_j \to f$, $\mathcal{L}f_k\to \mathcal{L}f$ with respect to $||\cdot||_k$. Applying the estimates above to the difference $f_j-f_i$, $j,i\in \N_0$, we can conclude that $\{f_j\}$ is Cauchy with respect to $||\cdot||_{k+1}$ and hence the limit $f\in \mathbf{H}_{k+1}$. It follows moreover that \eqref{eq:keytimeinvestho} holds for $f\in D_k(\mathcal{L})$.
\end{proof}

Let  $\mathcal{L}^*: D_0(\mathcal{L}^*) \to \mathbf{H}_0$ denote the Hilbert space adjoint operator of $\mathcal{L}$ with respect to the standard $L^2$ norm on $[r_+,\infty)\times \s^2$, i.e. $v\in D_0(\mathcal{L}^*)$ if and only if: there exists a $f\in \mathbf{H}_0$ such that for all $u \in C^{\infty}_c({\Sigma}_0)$
\begin{equation*}
\la \mathcal{L}u, v\ra_{L^2([r_+,\infty)\times \s^2)}=\la u, f\ra_{L^2([r_+,\infty)\times \s^2)}
\end{equation*}
and $\mathcal{L}^*v:=f$.

We moreover define the operator $\mathcal{L}^{\dag}: D_0(\mathcal{L}^{\dag}) \to \mathbf{H}_k$, with $\mathcal{L}^{\dag}u=\mathcal{L}u$ and where $D_0(\mathcal{L}^{\dag})$ is the closure of 
\begin{equation*}
C^{\infty}_{\rm rad,*}(\Sigma_0)=\{f\in C^{\infty}_{\rm rad}(\Sigma_0)\,|\, f|_{r=r_+}=0\}
\end{equation*}
under the norm $||u||_0+||\mathcal{L}^{\dag}u||_0$.
\begin{lemma}
\label{lm:adjoints}
The following identity holds: $\mathcal{L}^{\dag}=\mathcal{L}^*$, with $D_0(\mathcal{L}^{\dag})=D_0(\mathcal{L}^*)$.
\end{lemma}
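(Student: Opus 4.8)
The statement $\mathcal{L}^{\dag}=\mathcal{L}^{*}$ is an identity of unbounded operators, so I must prove two inclusions of graphs: $\mathcal{L}^{\dag}\subseteq \mathcal{L}^{*}$ (which is the ``easy'' direction, amounting to an integration-by-parts computation plus a density argument) and $\mathcal{L}^{*}\subseteq \mathcal{L}^{\dag}$ (the ``hard'' direction, which requires showing that no extra elements sneak into the adjoint domain, i.e.\ an elliptic regularity statement up to the boundary $\{\uprho=r_+\}$ together with the vanishing trace condition). The main tool throughout is the systematic integration-by-parts identity \eqref{eq:keytimeinvid} with $q=1$ and its higher-order analogue, and the key elliptic estimate \eqref{eq:keytimeinvestho}.

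First I would establish $\mathcal{L}^{\dag}\subseteq \mathcal{L}^{*}$. Take $v\in D_0(\mathcal{L}^{\dag})$, so there is a sequence $v_j\in C^{\infty}_{\rm rad,*}(\Sigma_0)$ with $v_j\to v$ and $\mathcal{L}v_j\to \mathcal{L}^{\dag}v$ in $||\cdot||_0$. For any $u\in C^{\infty}_c(\Sigma_0)$, a direct integration by parts using the explicit form $\mathcal{L}=X(\Delta X\,\cdot)+2aX\Phi\,\cdot+\slashed{\Delta}_{\s^2}\,\cdot$ gives
\begin{equation*}
\la \mathcal{L}u,v_j\ra_{L^2([r_+,\infty)\times\s^2)}=\la u,\mathcal{L}v_j\ra_{L^2([r_+,\infty)\times\s^2)}+\textnormal{(boundary terms at }\uprho=r_+\textnormal{)},
\end{equation*}
where the boundary terms vanish because $v_j|_{\uprho=r_+}=0$ (note $\Delta(r_+)=0$ kills the other potential boundary contribution from $X(\Delta X\cdot)$, and $u$ is compactly supported, so there is no contribution at $\uprho=\infty$). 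Passing to the limit $j\to\infty$ shows $\la\mathcal{L}u,v\ra=\la u,\mathcal{L}^{\dag}v\ra$ for all such $u$, hence $v\in D_0(\mathcal{L}^{*})$ and $\mathcal{L}^{*}v=\mathcal{L}^{\dag}v$. This half uses only the symmetry of $\mathcal{L}$ under the $L^2$ pairing modulo a boundary term that is controlled by the defining vanishing condition of $C^{\infty}_{\rm rad,*}(\Sigma_0)$.

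The substantive direction is $\mathcal{L}^{*}\subseteq\mathcal{L}^{\dag}$. Let $v\in D_0(\mathcal{L}^{*})$ with $\mathcal{L}^{*}v=:f\in \mathbf{H}_0$; then $\mathcal{L}v=f$ in the sense of distributions on $(r_+,\infty)\times\s^2$, since $C^{\infty}_c$ is dense there. Because $\mathcal{L}$ is uniformly elliptic on any region $\{r_0\le \uprho\le R\}$ with $r_0>r_+$ (the loss of ellipticity is only in the ergoregion, but here $T$ is spacelike away from $r=r_+$ and $\mathcal{L}$ is a genuine second-order elliptic operator on the spacelike hypersurface with principal part $X(\Delta X\cdot)+\slashed{\Delta}_{\s^2}$, whose degeneracy is exactly at $\uprho=r_+$ where $\Delta=0$), interior elliptic regularity gives $v\in H^2_{\rm loc}((r_+,\infty)\times\s^2)$, and I would bootstrap using $[\mathcal{L},\snabla_{\s^2}^{k_1}]$ and $[\mathcal{L},X^{k_2}]$ commutators as in the proof of Proposition~\ref{prop:keyestimateTinv} to obtain $v\in \mathbf{H}_k$ for all $k$ in the interior. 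The crux is controlling $v$ and $Xv$ up to the degenerate boundary $\uprho=r_+$, and — more delicately — showing $v|_{\uprho=r_+}=0$ so that $v$ can be approximated in the graph norm by functions in $C^{\infty}_{\rm rad,*}(\Sigma_0)$. For this I would run the weighted identity \eqref{eq:keytimeinvid} with $q=1$ on the slab $\{r_+\le\uprho\le R\}$ applied to (a smoothed cutoff of) $v$: the term $\tfrac12\int_{\{\uprho=r_+\}}r_+|\snabla_{\s^2}v|^2\,d\omega$ and the bulk terms $\int r^2(Xv)^2+|\snabla_{\s^2}v|^2$ come out with good signs, controlled by $\int r|Xv||f|$, exactly as in \eqref{eq:keytimeinvest0}. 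A Hardy inequality \eqref{eq:hardyX} then upgrades this to control of $\int v^2$ up to the boundary. The boundary trace $v|_{\uprho=r_+}$ is then identified by testing $\la\mathcal{L}u,v\ra=\la u,f\ra$ against $u\in C^{\infty}_{\rm rad}(\Sigma_0)$ that do \emph{not} vanish at $r_+$: comparing with the integration by parts from the first half, which now produces a boundary term proportional to $\int_{\{\uprho=r_+\}} u\,\Delta X v\,d\omega$ plus $\int_{\{\uprho=r_+\}} (\textnormal{something})\cdot v\,d\omega$, and using $\Delta(r_+)=0$, forces the surviving boundary pairing to vanish for all such $u$, yielding the missing condition. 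Once $v\in\bigcap_k\mathbf{H}_k$ with vanishing trace at $r_+$, a standard mollification-and-cutoff argument produces the approximating sequence in $C^{\infty}_{\rm rad,*}(\Sigma_0)$ converging to $v$ in $||\cdot||_0+||\mathcal{L}\cdot||_0$, so $v\in D_0(\mathcal{L}^{\dag})$ and $\mathcal{L}^{\dag}v=f=\mathcal{L}^{*}v$. I expect the genuinely fiddly point to be the boundary-term bookkeeping at $\uprho=r_+$ in the integration by parts — making precise that the degenerate weight $\Delta$ exactly suppresses the ``Neumann'' boundary data while leaving enough structure to pin down the ``Dirichlet'' data — and the approximation near the degenerate boundary, which is where the weighted Hardy inequality \eqref{eq:hardyX} and the sign of the $\{\uprho=r_+\}$ term in \eqref{eq:keytimeinvid} do the essential work.
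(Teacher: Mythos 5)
Your architecture is genuinely different from the paper's, and the difference is where the trouble lies. The paper proves the equivalent dual statement $(\mathcal{L}^{\dag})^{*}=\mathcal{L}$ (recovering $\mathcal{L}^{*}=\mathcal{L}^{\dag}$ by taking adjoints of closed, densely defined operators), so its hard direction only asks that every $v$ satisfying the weak adjoint identity be approximable by $C^{\infty}_{\rm rad}$ in graph norm --- no boundary condition has to be extracted. Your direct route $\mathcal{L}^{*}\subseteq\mathcal{L}^{\dag}$ forces you to prove that every $v\in D_0(\mathcal{L}^{*})$ has vanishing trace at $\uprho=r_+$, and your proposed mechanism for this does not work. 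Carrying out the integration by parts carefully, all boundary terms at $\uprho=r_+$ coming from $X(\Delta X\cdot)$ carry a factor of $\Delta$ and vanish; the \emph{only} surviving boundary contribution comes from the first-order-in-$X$ cross term $2aX\Phi$ and equals $2a\int_{\s^2}u(r_+)\,\Phi v(r_+)\,d\omega$. Demanding that this vanish for all admissible $u$ yields only $\Phi v|_{r_+}=0$ (axisymmetry of the trace), not $v|_{r_+}=0$. This is not bookkeeping you can fix: a smooth, compactly supported $v$ that is \emph{constant} near $r_+$ satisfies $\la\mathcal{L}u,v\ra=\la u,\mathcal{L}v\ra$ for every smooth test function $u$, hence lies in $D_0(\mathcal{L}^{*})$, yet it cannot lie in $D_0(\mathcal{L}^{\dag})$, because \eqref{eq:keytimeinvest} shows the graph norm controls $\int r^2(Xf)^2\,d\omega d\uprho$ and hence the trace at $r_+$, which vanishes identically on the graph closure of $C^{\infty}_{\rm rad,*}(\Sigma_0)$. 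So your step identifying the Dirichlet datum fails, and the same computation shows that the natural condition singled out by the adjoint pairing at $r_+$ is $\Phi v|_{r_+}=0$ rather than $v|_{r_+}=0$ --- a point you should confront explicitly, since it bears on the statement itself and not only on your proof of it.

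There is a second, independent gap: your regularity bootstrap rests on the claim that $\mathcal{L}$ is uniformly elliptic on $\{r_0\leq\uprho\leq R\}$ for any $r_0>r_+$, with degeneracy only at $\uprho=r_+$. This is false for $a\neq 0$. The term $2aX\Phi$ is second order and belongs to the principal symbol; restricted to the $(\xi_{\uprho},\xi_{\varphi_*})$ variables the symbol is $\Delta\xi_{\uprho}^2+2a\xi_{\uprho}\xi_{\varphi_*}+\sin^{-2}\theta\,\xi_{\varphi_*}^2$, which is definite if and only if $\Delta-a^2\sin^2\theta>0$. Ellipticity therefore fails precisely in the ergoregion, an open subset of $\{r>r_+\}$ --- exactly as the paper states at the start of Section \ref{sec:elliptic}. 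Interior elliptic regularity is thus unavailable there, and your bootstrap to $v\in\mathbf{H}_k$ does not go through in the ergoregion (where the operator has a hyperbolic direction and weak solutions need not be regular). The paper's proof is designed to avoid ever needing pointwise regularity of $v$, by passing to $(\mathcal{L}^{\dag})^{*}$ and mollifying in the graph norm; any direct argument must likewise find a mechanism --- for instance a weak-solution version of the spacelike redshift multiplier identity \eqref{eq:keytimeinvid} --- that does not pass through ellipticity inside the ergoregion.
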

\begin{proof}
We will establish the equivalent statement: $(\mathcal{L}^{\dag})^*=\mathcal{L}$ with $D_0((\mathcal{L}^{\dag})^*)=D_0(\mathcal{L})$.

By definition of the adjoint $(\mathcal{L}^{\dag})^*$, we have that for all $u\in C^{\infty}_{\rm rad,*}(\Sigma_0)$ and $v\in D_0((\mathcal{L}^{\dag})^*)$:
\begin{equation}
\label{eq:adjid1}
\la \mathcal{L}^{\dag}u, v\ra_{L^2}=\la u, (\mathcal{L}^{\dag})^* v\ra_{L^2}.
\end{equation}
Now, let $u\in C^{\infty}_{\rm rad,*}(\Sigma_0)$ and $v\in C^{\infty}_{\rm rad}(\Sigma_0)$. Then we can integrate by parts and use that $u|_{r=r_+}=0$ and $u,v$ are compactly supported to obtain:
\begin{equation}
\label{eq:adjid2}
\la \mathcal{L}^{\dag}u, v\ra_{L^2}=\la u, \mathcal{L} v\ra_{L^2}.
\end{equation}

Since $\mathcal{L}$ and $\mathcal{L}^{\dag}$ are closed operators, by construction, we have that \eqref{eq:adjid2} must also hold for $u\in D_0(\mathcal{L}^{\dag})$ and $v\in D_0(\mathcal{L})$. Applying both \eqref{eq:adjid1} and \eqref{eq:adjid2}, we can therefore infer that:
\begin{equation*}
D_0(\mathcal{L})\subseteq D_0((\mathcal{L}^{\dag})^*)
\end{equation*}
and $(\mathcal{L}^{\dag})^*|_{D_0(\mathcal{L})}=\mathcal{L}$.

In order to conclude that $D_0(\mathcal{L})= D_0((\mathcal{L}^{\dag})^*)$, it remains to show that for any element $v\in D_0((\mathcal{L}^{\dag})^*)$, there exists a sequence $(v_k)$, with $v_k\in C_{\rm rad}^{\infty}(\Sigma_0)$ such that
\begin{equation}
\label{eq:convadj}
||v-v_k||_{L^2([r_+,\infty)\times \s^2)}+||\mathcal{L}(v-v_k)||_{L^2([r_+,\infty)\times \s^2)}\to 0 \quad \textnormal{as $k\to \infty$.}
\end{equation}
First, observe that if $v\in D_0((\mathcal{L}^{\dag})^*)$, then by \eqref{eq:adjid1} combined with \eqref{eq:adjid2}, $\mathcal{L}v\in \mathbf{H}_0$ exists in a weak sense.  Then, we can take $v_k$ to be a convolution of $v$ with a suitable mollifier to conclude that \eqref{eq:convadj} must hold.
\end{proof}

\begin{proposition}
\label{prop:Linvenest}
The operator $\mathcal{L}: D_k(\mathcal{L})\to\mathbf{H}_k$ is invertible and the inverse $\mathcal{L}^{-1}:\mathbf{H}_k\to  D_k(\mathcal{L})$ satisfies
\begin{equation}
\label{eq:Linvenest}
||\mathcal{L}^{-1}(F)||_{ k}+||rX\mathcal{L}^{-1}(F)||_{k}+||\slashed{\nabla}_{\s^2}\mathcal{L}^{-1}(F)||_{k}\leq C ||F||_{ k},
\end{equation}
for any $F\in \mathbf{H}_k$.
\end{proposition}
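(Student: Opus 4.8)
The statement asserts that $\mathcal{L}: D_k(\mathcal{L})\to \mathbf{H}_k$ is bijective with the bound \eqref{eq:Linvenest}. Injectivity is already in hand: by Proposition \ref{prop:keyestimateTinv}, every $f\in D_k(\mathcal{L})$ satisfies $\|f\|_k\leq C\|\mathcal{L}f\|_k$, so $\mathcal{L}f=0$ forces $f=0$. This same a priori estimate, applied to the closure $D_k(\mathcal{L})$, shows that the range of $\mathcal{L}$ is \emph{closed} in $\mathbf{H}_k$: if $\mathcal{L}f_j\to F$ in $\mathbf{H}_k$, then $\{f_j\}$ is Cauchy in $\|\cdot\|_{k+1}$ by \eqref{eq:keytimeinvestho}, hence converges to some $f\in \mathbf{H}_{k+1}\subseteq D_k(\mathcal{L})$, and by closedness of $\mathcal{L}$ we get $\mathcal{L}f=F$. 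So the only thing left is \textbf{surjectivity}, i.e.\ that the range is all of $\mathbf{H}_k$, not merely a closed proper subspace.

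For surjectivity I would use the standard functional-analytic dichotomy: a closed operator with closed range is surjective if and only if its Hilbert-space adjoint is injective. It suffices to treat $k=0$ and then bootstrap. Concretely: since $\mathrm{ran}(\mathcal{L})$ is closed in $\mathbf{H}_0$, we have $\mathbf{H}_0 = \mathrm{ran}(\mathcal{L})\oplus \mathrm{ran}(\mathcal{L})^{\perp}$, and $\mathrm{ran}(\mathcal{L})^{\perp}=\ker(\mathcal{L}^*)$. By Lemma \ref{lm:adjoints}, $\mathcal{L}^*=\mathcal{L}^{\dag}$, the operator acting by the same differential expression but on the closure of $C^{\infty}_{\rm rad,*}(\Sigma_0)$ (functions vanishing at $r=r_+$). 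So I must show: if $v\in D_0(\mathcal{L}^{\dag})$ and $\mathcal{L}^{\dag}v=0$, then $v=0$. This is an a priori estimate for $\mathcal{L}^{\dag}$ analogous to \eqref{eq:keytimeinvest}, but now with the extra boundary condition $v|_{r=r_+}=0$ available. Rerunning the multiplier computation \eqref{eq:keytimeinvid} with $q=1$ and multiplier $rXv$: integrating over $\Sigma_0$, the total-derivative term $\tfrac12 X(r\Delta (Xv)^2)$ produces a boundary term at $r=r_+$ which vanishes because $\Delta(r_+)=0$ (no boundary condition even needed there), the $\Phi$-term integrates to zero by periodicity, and the term $-\tfrac12 X(r|\snabla_{\s^2}v|^2)$ gives $+\tfrac12 \int_{\{r=r_+\}} r_+|\snabla_{\s^2}v|^2\,d\omega$ with a favorable sign; the $\tfrac12(\Delta'-\Delta r^{-1})r(Xv)^2$ interior term is coercive for large $r$ and can be made coercive near $r=r_+$ precisely because $\Delta'(r_+)>0$ (the redshift positivity, strict subextremality $|a|<M$). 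One then absorbs the $\mathcal{L}^{\dag}v=0$ right-hand side trivially and concludes $\int_{\Sigma_0} r^2(Xv)^2+|\snabla_{\s^2}v|^2\,d\omega d\uprho=0$, which with a Hardy inequality \eqref{eq:hardyX} (using the decay/integrability forced by $v\in \mathbf{H}_0$) gives $v\equiv 0$. Hence $\ker(\mathcal{L}^*)=\{0\}$ and $\mathcal{L}$ is surjective onto $\mathbf{H}_0$.

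Having established bijectivity of $\mathcal{L}:D_0(\mathcal{L})\to\mathbf{H}_0$, the bound \eqref{eq:Linvenest} at level $k=0$ is just a restatement of \eqref{eq:keytimeinvest}/\eqref{eq:keytimeinvestho} applied to $f=\mathcal{L}^{-1}(F)$. For general $k$: given $F\in\mathbf{H}_k\subseteq\mathbf{H}_0$, let $f=\mathcal{L}^{-1}(F)\in D_0(\mathcal{L})$; the higher-order a priori estimate \eqref{eq:keytimeinvestho} (which, as noted in Proposition \ref{prop:keyestimateTinv}, holds for all $f\in D_k(\mathcal{L})$ and in fact can be read as controlling $\|f\|_{k+1}$ by $\|\mathcal{L}f\|_k$ once one knows $f$ lies in the right space) combined with the nesting $\mathbf{H}_{k+1}\subseteq D_k(\mathcal{L})$ upgrades $f$ to $D_k(\mathcal{L})$ and yields \eqref{eq:Linvenest}. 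One should phrase this as a finite induction on $k$: assuming $\mathcal{L}^{-1}$ maps $\mathbf{H}_{k-1}$ into $D_{k-1}(\mathcal{L})$ with the stated bound, and $F\in\mathbf{H}_k$, the estimate \eqref{eq:esthoder} with the appropriate values of $(k_1,k_2)$, $k_1+k_2\le k$, together with \eqref{eq:angdercontrol}, promotes the known control of lower-order norms to control of $\|f\|_{k+1}$.

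\textbf{Main obstacle.} The routine parts (injectivity, closed range, the bootstrap) are immediate from Proposition \ref{prop:keyestimateTinv}. The genuine work is the surjectivity step, i.e.\ showing $\ker(\mathcal{L}^*)=0$, and the subtlety there is entirely at the event horizon: the adjoint $\mathcal{L}^{\dag}$ lives on functions with $v|_{r=r_+}=0$, and one must check that the integration-by-parts identity \eqref{eq:keytimeinvid} produces boundary contributions at $r=r_+$ with the right sign (or that vanish), crucially using $\Delta(r_+)=0$ and $\Delta'(r_+)>0$. A secondary technical point is justifying the vanishing of boundary terms at $\uprho\to\infty$ for elements of the abstract completions $\mathbf{H}_k$ and $D_k(\mathcal{L})$ rather than just smooth compactly-supported functions — this is handled by the density/approximation argument already used at the end of the proof of Proposition \ref{prop:keyestimateTinv} and in Lemma \ref{lm:adjoints}.
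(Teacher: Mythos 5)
Your proposal is correct and follows essentially the same route as the paper: injectivity and closed range from the coercive estimate of Proposition \ref{prop:keyestimateTinv}, surjectivity by showing $\ker\mathcal{L}^*=\{0\}$ via the identification $\mathcal{L}^*=\mathcal{L}^{\dag}$ of Lemma \ref{lm:adjoints}, and a finite induction in $k$ for the higher norms. Two small remarks: for $\ker\mathcal{L}^{\dag}=\{0\}$ you do not actually need to rerun the multiplier computation with the boundary condition $v|_{r=r_+}=0$ in hand, since $D_0(\mathcal{L}^{\dag})\subseteq D_0(\mathcal{L})$ and the estimate \eqref{eq:keytimeinvest} therefore applies verbatim (this is the paper's one-line argument; your recomputation reaches the same conclusion); and in the bootstrap for $k\geq 1$ the paper is a bit more careful than your sketch, re-running the duality/existence argument for the commuted operators $\snabla_{\s^2}^{k_1}X^{k_2}(\mathcal{L}-k_2(k_2+1))$ to establish that the weak solution genuinely lies in the space where \eqref{eq:esthoder} applies, rather than invoking the a priori estimate on $\mathcal{L}^{-1}(F)$ directly, which would presuppose the regularity being proved.
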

\begin{proof}
We first consider $k=0$. By Proposition \ref{prop:keyestimateTinv}, we have that $\mathcal{L}$ is injective, i.e. $\ker \mathcal{L}=\{0\}$. We will conclude that $\mathcal{L}$ is bijective by showing that $\textnormal{Ran}\, \mathcal{L}=\mathbf{H}_0=L^2([r_+,\infty)\times \s^2)$.

Let $v\in (\textnormal{Ran}\, \mathcal{L})^{\perp}$ and $u\in C_{\rm rad}^{\infty}(\Sigma_0)$. Then
\begin{equation*}
0=\la v, \mathcal{L}u \ra_{L^2}= \la \mathcal{L}^* v,u \ra_{L^2},
\end{equation*}
so $v\in D_0(\mathcal{L}^*)$ and moreover $v\in \ker \mathcal{L}^*$. Hence, $(\textnormal{Ran}\, \mathcal{L})^{\perp} \subseteq  \ker \mathcal{L}^*$. Furthermore, if $v\in  \ker \mathcal{L}^*$, then we can similarly conclude that $v\in (\textnormal{Ran}\, \mathcal{L})^{\perp}$ and therefore  $(\textnormal{Ran}\, \mathcal{L})^{\perp} =  \ker \mathcal{L}^*$.

By Lemma \ref{lm:adjoints}, we have that $\mathcal{L}^*=\mathcal{L}^{\dag}$. By definition of $\mathcal{L}^{\dag}$, we moreover have that $\mathcal{L}^{\dag}|_{C_{\rm rad, *}^{\infty} (\Sigma_0) }=\mathcal{L}|_{C_{\rm rad, *}^{\infty} (\Sigma_0) }$, so we can apply Proposition \ref{prop:keyestimateTinv} again to obtain $\ker \mathcal{L}^*=\{0\}$.

Note, by Proposition \ref{prop:keyestimateTinv} together with the fact that $\mathcal{L}$ is closed, it follows that $\textnormal{Ran}\, \mathcal{L}$ is closed and hence
\begin{equation*}
\textnormal{Ran}\, \mathcal{L}=(\textnormal{Ran}\, \mathcal{L})^{\perp \perp}=( \ker \mathcal{L}^*)^{\perp}=\mathbf{H}_0.
\end{equation*}
We conclude that $\mathcal{L}: D_0(\mathcal{L})\to \mathbf{H_0}$ is bijective and we denote its inverse by $\mathcal{L}^{-1}$. The estimate \eqref{eq:Linvenest} with $k=0$ follows immediately by taking $f=\mathcal{L}^{-1}(F)$ in \eqref{eq:keytimeinvestho}.

We can obtain the $k>0$ case by an inductive argument. Suppose we have established $\mathcal{L}^{-1}(F)\in \mathbf{H}_K$ and \eqref{eq:Linvenest} with $k=K$ for some $K\in \N_0$. Then we can repeat the argument above where in the proof of Lemma \ref{lm:adjoints} we replace $\mathcal{L}$ by the operator $\snabla_{\s^2}^{k_1}X^{k_2}(\mathcal{L}-k_2(k_2+1))$, with $k_1+k_2=1$, applying \eqref{eq:esthoder} instead of \eqref{eq:keytimeinvest} to conclude that in fact $\snabla_{\s^2}^{k_1}(rX)^{k_2}\mathcal{L}^{-1}(F)$, with $k_1+k_2=1$ lies in $\mathbf{H}_K$ and \eqref{eq:Linvenest} holds for $\mathcal{L}^{-1}(F)$ replaced by  $\snabla_{\s^2}^{k_1}X^{k_2}\mathcal{L}^{-1}(F)$. Hence,  $\mathcal{L}^{-1}(F)\in \mathbf{H}_{K+1}$ and \eqref{eq:Linvenest} holds for $k=K+1$.
\end{proof}

By Proposition \ref{prop:Linvenest} together with a standard Sobolev inequality, we immediately obtain:

\begin{corollary}
\label{cor:regularitytimeinv}
Let $F\in C^{\infty}_{\rm rad}(\Sigma_0)$. Then $\mathcal{L}^{-1}(F)\in \bigcap_{k\in \N_0} \mathbf{H}_k$ and in particular $\mathcal{L}^{-1}(F)\in C^{\infty}(\Sigma_0)$.
\end{corollary}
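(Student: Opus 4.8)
The statement to prove is Corollary \ref{cor:regularitytimeinv}: if $F\in C^\infty_{\rm rad}(\Sigma_0)$, then $\mathcal{L}^{-1}(F)\in\bigcap_{k\in\N_0}\mathbf{H}_k$ and in particular $\mathcal{L}^{-1}(F)\in C^\infty(\Sigma_0)$.

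The plan is the following. First I would observe that $F\in C^\infty_{\rm rad}(\Sigma_0)$ is a strong assumption in the sense relevant here: by the definition of the norms $\|\cdot\|_k$ and the space $C^\infty_{\rm rad}(\Sigma_0)$ (which requires $(r^2+a^2)^{1/2}F\in C^\infty(\widehat{\Sigma}_0)$, i.e. smoothness up to and including null infinity in the conformal picture), all the weighted norms $\|F\|_k=\sum_{k_1+k_2\le k}\int_{\Sigma_0}|\snabla_{\s^2}^{k_1}(rX)^{k_2}F|^2\,d\omega d\uprho$ are finite for every $k\in\N_0$. Hence $F\in\mathbf{H}_k$ for all $k$. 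Then Proposition \ref{prop:Linvenest} applies for each fixed $k$: since $F\in\mathbf{H}_k$, the inverse $\mathcal{L}^{-1}(F)$ lies in $D_k(\mathcal{L})\subseteq\mathbf{H}_{k+1}$ and in particular $\mathcal{L}^{-1}(F)\in\mathbf{H}_k$, with the quantitative bound $\|\mathcal{L}^{-1}(F)\|_k+\|rX\mathcal{L}^{-1}(F)\|_k+\|\snabla_{\s^2}\mathcal{L}^{-1}(F)\|_k\le C\|F\|_k$. Crucially, Proposition \ref{prop:Linvenest} also guarantees that the element $\mathcal{L}^{-1}(F)$ produced for different values of $k$ is the same function — this follows because $\mathcal{L}^{-1}$ on $\mathbf{H}_0$ is already uniquely determined, and its restriction to $\mathbf{H}_k\subset\mathbf{H}_0$ agrees with the $\mathbf{H}_k$-inverse by uniqueness of solutions to $\mathcal{L}u=F$ in $D_0(\mathcal{L})$ (injectivity of $\mathcal{L}$ from Proposition \ref{prop:keyestimateTinv}). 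Therefore $\mathcal{L}^{-1}(F)\in\bigcap_{k\in\N_0}\mathbf{H}_k$.

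The second and final step is to convert membership in all the $\mathbf{H}_k$ spaces into genuine smoothness. Write $u=\mathcal{L}^{-1}(F)$. The norm $\|\cdot\|_k$ controls $\snabla_{\s^2}$-derivatives and $(rX)$-derivatives of $u$ in $L^2(\Sigma_0)$; together with the bounds on $\snabla_{\s^2}u$ and $rXu$ from \eqref{eq:Linvenest}, one obtains $L^2$ control of all mixed derivatives $\snabla_{\s^2}^{k_1}(rX)^{k_2}u$ up to arbitrarily high order. On any compact subset of the interior $\{r_+\le r<\infty\}$ of $\Sigma_0$ (including the boundary $r=r_+$, where $X=\partial_\uprho$ is a regular vector field, and where $\Delta$ vanishes only simply so the relevant elliptic structure from the redshift estimates stays under control), these weighted $L^2$-derivative bounds are equivalent to standard local Sobolev bounds $\|u\|_{H^m_{\rm loc}}<\infty$ for every $m$. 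A standard Sobolev embedding ($H^m_{\rm loc}\hookrightarrow C^{m-2}_{\rm loc}$ on a $3$-dimensional manifold) then gives $u\in C^\infty(\Sigma_0)$. This is exactly the ``standard Sobolev inequality'' invoked in the statement of the corollary.

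I do not expect any genuine obstacle here; this is a short wrap-up corollary. The only point requiring a word of care is the coherence of $\mathcal{L}^{-1}(F)$ across the scale of spaces $\mathbf{H}_k$ — but this is immediate from the injectivity of $\mathcal{L}$ established in Proposition \ref{prop:keyestimateTinv} (any solution of $\mathcal{L}u=F$ in $D_0(\mathcal{L})$ is unique), so the various inverses are restrictions of a single function. The proof is therefore: (i) note $F\in\mathbf{H}_k$ for all $k$ by definition of $C^\infty_{\rm rad}(\Sigma_0)$; (ii) apply Proposition \ref{prop:Linvenest} for each $k$ and use uniqueness to conclude $\mathcal{L}^{-1}(F)\in\bigcap_k\mathbf{H}_k$; (iii) apply local Sobolev embedding to upgrade to $C^\infty(\Sigma_0)$.

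\begin{proof}
Let $F\in C^{\infty}_{\rm rad}(\Sigma_0)$. By the definition of $C^{\infty}_{\rm rad}(\Sigma_0)$, the function $F$ is smooth on $\Sigma_0$ with $(r^2+a^2)^{1/2}F$ smooth with respect to the differentiable structure on the conformal compactification $\widehat{\Sigma}_0$. In particular, for every $k_1,k_2\in \N_0$ the derivative $\snabla_{\s^2}^{k_1}(rX)^{k_2}F$ is a bounded function on $\Sigma_0$ that moreover extends smoothly to $\widehat{\Sigma}_0$, so it is square-integrable with respect to $d\omega d\uprho$. Hence $||F||_k<\infty$ for all $k\in \N_0$, i.e.\ $F\in \mathbf{H}_k$ for all $k\in \N_0$.

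Fix $k\in \N_0$. By Proposition \ref{prop:Linvenest}, $\mathcal{L}:D_k(\mathcal{L})\to \mathbf{H}_k$ is invertible, so there exists a unique $u_k\in D_k(\mathcal{L})$ with $\mathcal{L}u_k=F$, and by Proposition \ref{prop:keyestimateTinv} we have $u_k\in \mathbf{H}_{k+1}$ together with the estimate
\begin{equation*}
||u_k||_{k}+||rXu_k||_{k}+||\slashed{\nabla}_{\s^2}u_k||_{k}\leq C||F||_{k}.
\end{equation*}
Since $\mathbf{H}_k\subset \mathbf{H}_0$ and $D_k(\mathcal{L})\subset D_0(\mathcal{L})$, the element $u_k$ also solves $\mathcal{L}u_k=F$ in $D_0(\mathcal{L})$; by the injectivity of $\mathcal{L}:D_0(\mathcal{L})\to \mathbf{H}_0$ established in Proposition \ref{prop:keyestimateTinv}, we conclude that $u_k=u_0=:u=\mathcal{L}^{-1}(F)$ for all $k$. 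Therefore $u\in \bigcap_{k\in \N_0}\mathbf{H}_{k}$, and moreover $||u||_k\leq C_k||F||_k<\infty$ for every $k$.

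It remains to deduce that $u\in C^{\infty}(\Sigma_0)$. The finiteness of $||u||_k$ for all $k$, together with the bounds on $rXu$ and $\snabla_{\s^2}u$ from \eqref{eq:Linvenest} applied iteratively, provides $L^2(\Sigma_0)$ control of all mixed derivatives $\snabla_{\s^2}^{k_1}(rX)^{k_2}u$ of arbitrarily high order. On any compact subset of $\Sigma_0$ (including a neighbourhood of the boundary $\{\uprho=r_+\}$, where $X=\partial_{\uprho}$ is a regular, non-vanishing vector field), these weighted estimates are equivalent to bounds on $||u||_{H^m_{\rm loc}}$ for every $m\in \N_0$. A standard Sobolev embedding on the $3$-dimensional manifold $\Sigma_0$ then yields $u\in C^{\infty}_{\rm loc}(\Sigma_0)$, i.e.\ $\mathcal{L}^{-1}(F)\in C^{\infty}(\Sigma_0)$.
\end{proof}
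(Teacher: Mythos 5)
Your proposal is correct and follows exactly the route the paper intends: the paper states the corollary as an immediate consequence of Proposition \ref{prop:Linvenest} together with a standard Sobolev inequality, and you have simply filled in the (routine) details — finiteness of $\|F\|_k$ for all $k$ from the conformal smoothness of $(r^2+a^2)^{1/2}F$, coherence of $\mathcal{L}^{-1}(F)$ across the scale $\mathbf{H}_k$ via injectivity, and the local Sobolev embedding. No gaps.
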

\subsection{Decay towards $\mathcal{I}^+$}
In this section, we will establish additional $r$-decay estimates of $\mathcal{L}^{-1}(F)$ (with suitably decaying $F$) towards $\mathcal{I}^+$. We will derive elliptic analogues of the $r^p$-weighted estimates from Section \ref{sec:rpest}. 

We first define the following higher-order quantities:
\begin{align*}
f^{(n)}:=&\:((r^2+a^2)X)^n f,\\
\check{f}^{(0)}:=&\:f,\\
\check{f}^{(1)}:=&\:[1+(\alpha+\alpha_{\Phi}\Phi)r^{-1}+(\beta+\beta_{\Phi^2}\Phi^2)r^{-2}](r^2+a^2)Xf,\\
\check{f}^{(2)}:=&\:[1+\gamma r^{-1}](r^2+a^2)X\check{f}^{(1)},
\end{align*}
where the coefficients $\alpha,\alpha_{\Phi},\beta,\beta_{\Phi},\gamma,\gamma_{\Phi}$ are chosen as in Section \ref{sec:defNPconstants}.

We moreover observe that
\begin{equation*}
[\mathcal{L},\pi_{\ell}]=0,
\end{equation*}
so we can independently derive additional decay estimates for $\mathcal{L}^{-1}(F_{\ell})=(\mathcal{L}^{-1}(F))_{\ell}$.

We define the following auxiliary operators:
\begin{align*}
\widehat{\mathcal{L}}:=&\: (r^2+a^2)^{\frac{1}{2}} \mathcal{L}(r^2+a^2)^{-\frac{1}{2}} ,\\
\widehat{\mathcal{L}}^{(n)}:=&\: (r^2+a^2) (X(r^2+a^2))^n (r^2+a^2)^{-1}  \widehat{\mathcal{L}},\\
\check{\mathcal{L}}^{(0)}:=&\:  (r^2+a^2) \mathcal{L}(r^2+a^2)^{-\frac{1}{2}},\\
\check{\mathcal{L}}^{(1)}:=&\: (r^2+a^2)^{\frac{3}{2}}  X(r^2+a^2)\left[1+(\alpha+\alpha_{\Phi}\Phi)r^{-1}+(\beta+\beta_{\Phi^2}\Phi^2)r^{-2}\right] (r^2+a^2)^{-1} \widehat{\mathcal{L}},\\
\check{\mathcal{L}}^{(2)}:=&\: (r^2+a^2)^{\frac{3}{2}}X(r^2+a^2)[1+\gamma r^{-1}](r^2+a^2)^{-1}\check{\mathcal{L}}^{(1)}.
\end{align*}

\begin{lemma}
\label{lm:hoeqtin}
Let $n\in \N_0$, then
\begin{equation}
\label{eq:tinweightho}
\begin{split}
(r^2+a^2)^{-1} \widehat{\mathcal{L}}^{(n)}f:=&\:X \left( \Delta (r^2+a^2)^{-1}Xf^{(n)} \right)-2n[1+O_{\infty}(r^{-1})]Xf^{(n)}+2a (r^2+a^2)^{-1}X \Phi  f^{(n)}\\
&+(r^2+a^2)^{-1}[\slashed{\Delta}_{\s^2}-n(n+1)]f^{(n)}+O_{\infty}(r^{-3})[f^{(n)}+\Phi f^{(n)}]\\
&+n \sum_{m=0}^{n-1}O_{\infty}(r^{-2})[f^{(m)}+\Phi f^{(m)}].
\end{split}
\end{equation}
Furthermore, for $n=0,1,2$:
\begin{equation}
\label{eq:tinweightho2}
\begin{split}
(r^2+a^2)^{-\frac{3}{2}}\check{\mathcal{L}}^{(n)}f_n:=&\:X \left( \Delta (r^2+a^2)^{-1}Xf^{(n)}_n \right)-2n[1+O_{\infty}(r^{-1})]Xf^{(n)}_n+2a (r^2+a^2)^{-1}X \Phi  f^{(n)}_n\\
&+\sum_{m=0}^{n}O_{\infty}(r^{-3})[f^{(m)}_n+\Phi f^{(m)}_n].
\end{split}
\end{equation}
\end{lemma}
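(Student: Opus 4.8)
The plan is to prove the two identities \eqref{eq:tinweightho} and \eqref{eq:tinweightho2} by a direct computation that conjugates the elliptic operator $\mathcal{L}$ by the weight $(r^2+a^2)^{1/2}$ and by the higher-order radial operators $(X(r^2+a^2))^n$, exactly mirroring the derivation of the hyperbolic analogues in Lemma \ref{lm:maineqs} and Proposition \ref{eq:ho}. The starting point is the definition $\mathcal{L}g = X(\Delta X g) + 2aX\Phi g + \slashed{\Delta}_{\s^2}g$. First I would compute $\widehat{\mathcal{L}}f = (r^2+a^2)^{1/2}\mathcal{L}((r^2+a^2)^{-1/2}f)$ by expanding $X((r^2+a^2)^{-1/2}f) = (r^2+a^2)^{-1/2}Xf - r(r^2+a^2)^{-3/2}f$, then applying $\Delta X$ and $X$ again, keeping careful track of the terms that are total $X$-derivatives versus the zeroth-order terms, and collecting the coefficient of $f$ in the form $\sqrt{r^2+a^2}\,\tfrac{d}{dr}(r(r^2+a^2)^{-3/2}\Delta)$, which is $O_\infty(r^{-3})$ after multiplying through by $(r^2+a^2)^{-1}$; this gives the $n=0$ case of \eqref{eq:tinweightho}, since $\widehat{\mathcal{L}}^{(0)} = \widehat{\mathcal{L}}$ and $f^{(0)}=f$.

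Next, for the general $n$ case of \eqref{eq:tinweightho}, I would argue by induction on $n$, computing
\begin{equation*}
\widehat{\mathcal{L}}^{(n+1)}f = (r^2+a^2)\,X(r^2+a^2)\,(r^2+a^2)^{-1}\widehat{\mathcal{L}}^{(n)}f
\end{equation*}
and commuting the operator $(r^2+a^2)X$ past $X(\Delta(r^2+a^2)^{-1}X(\cdot))$. The key commutator computation is that, writing $Y=\partial_r$ (recall $X = Y + hT$, but on $\Sigma_0$ quantities we use the $\uprho$-derivative directly and the $T$-terms are harmless since $T$ commutes with $\mathcal{L}$), one has
\begin{equation*}
[(r^2+a^2)X,\; X(\Delta(r^2+a^2)^{-1}X(\cdot))] = -2\big(1+O_\infty(r^{-1})\big) X\big((r^2+a^2)X(\cdot)\big) + \big(\text{lower order}\big),
\end{equation*}
which is precisely what produces the $-2(n+1)[1+O_\infty(r^{-1})]Xf^{(n+1)}$ term and shifts $n(n+1)\mapsto (n+1)(n+2)$ in the angular/zeroth-order piece after $f^{(n+1)} = (r^2+a^2)X f^{(n)}$. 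The $\Phi$-commutators are all $O_\infty(r^{-3})$ or $O_\infty(r^{-2})$ and contribute to the last two lines of \eqref{eq:tinweightho}; the $n\sum_{m=0}^{n-1}$ terms arise from commuting through the lower-order remainders at the previous induction step, as in the proof of Proposition \ref{eq:ho}.

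For \eqref{eq:tinweightho2} I would proceed the same way but additionally exploit the cancellation built into the choice of coefficients $\alpha,\alpha_\Phi,\beta,\beta_{\Phi^2},\gamma$ from Section \ref{sec:defNPconstants}. Concretely, $\check{\mathcal{L}}^{(1)}$ is $\check{\mathcal{L}}^{(0)}$ conjugated by $(r^2+a^2)^{1/2}X(r^2+a^2)$ and dressed with the renormalization factor $1+(\alpha+\alpha_\Phi\Phi)r^{-1}+(\beta+\beta_{\Phi^2}\Phi^2)r^{-2}$; the point of these coefficients — the same as in the wave-equation setting, where they were chosen so that the right-hand side of the $\chphi^{(j)}$ equations contains only $T$-derivatives or $O_\infty(r^{-3})$ terms after projecting to the relevant mode — is that here, after projecting to $\pi_n$, the terms that would otherwise be $O_\infty(r^{-1})f^{(m)}_n$ and $O_\infty(r^{-2})f^{(m)}_n$ get upgraded to $O_\infty(r^{-3})$, since there is no $T$ in the elliptic operator to absorb them. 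I would verify this mode-by-mode for $n=0,1,2$ using that $\Phi$ acts as $im$ on azimuthal modes and that $\slashed{\Delta}_{\s^2}$ acts as $-n(n+1)$ on $\pi_n$, plugging in the stated numerical values of $\alpha,\beta,\gamma$ and checking the cancellations term by term, exactly as in the proof of Proposition \ref{prop:maineqchph}. The main obstacle is bookkeeping: one must carry along the precise form of every $O_\infty(r^{-k})$ remainder through the iterated conjugations and confirm that no term of size worse than advertised survives; but this is entirely parallel to the already-completed computations in Lemma \ref{lm:maineqs}, Proposition \ref{eq:ho}, and Proposition \ref{prop:maineqchph}, with the simplification that all $T$-derivatives drop out (they are not present in $\mathcal{L}$), so no genuinely new difficulty arises.
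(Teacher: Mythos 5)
Your proposal is correct and follows essentially the same route as the paper: an explicit computation of $\widehat{\mathcal{L}}=(r^2+a^2)^{1/2}\mathcal{L}(r^2+a^2)^{-1/2}$ for the $n=0$ case, an induction in $n$ analogous to Proposition \ref{eq:ho} for \eqref{eq:tinweightho}, and the observation that the renormalization coefficients $\alpha,\beta,\gamma$ produce the same cancellations as in Proposition \ref{prop:maineqchph} to yield \eqref{eq:tinweightho2}. The paper's proof is terser but identical in strategy, so no further comment is needed.
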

\begin{proof}
We can write
\begin{equation}
\label{eq:Lpsiviaphi}
\begin{split}
\mathcal{L}((r^2+a^2)^{-\frac{1}{2}} f)=&\:X\left(\Delta X((r^2+a^2)^{-1/2} f)\right)+2a X(r^2+a^2)^{-1/2} \Phi f)+(r^2+a^2)^{-1/2}\slashed{\Delta}_{\s^2}f\\
=&\:X\left((r^2+a^2)^{1/2} \Delta (r^2+a^2)^{-1}X f \right)-X\left( \Delta r (r^2+a^2)^{-3/2} f \right)+2a (r^2+a^2)^{-1/2}X \Phi  f\\
&+(r^2+a^2)^{-1/2}\slashed{\Delta}_{\s^2}f-2a r (r^2+a^2)^{-3/2} \Phi f\\
=&\: (r^2+a^2)^{1/2}X \left( \Delta (r^2+a^2)^{-1}Xf \right)+2a (r^2+a^2)^{-1/2}X \Phi  f-2a r (r^2+a^2)^{-3/2} \Phi f\\
&+(r^2+a^2)^{-1/2}\slashed{\Delta}_{\s^2}f-\left( \Delta r (r^2+a^2)^{-3/2} \right)'f.
\end{split}
\end{equation}
Hence, \eqref{eq:tinweightho} follows for $n=0$. The $n>0$ case then follows by an induction argument that is analogous to the proof of Proposition \ref{eq:ho}. Equation \eqref{eq:tinweightho2} follows by observing that cancellations occur when considering the modified quantities $\check{f}^{(n)}$ as in the proof of Proposition \ref{prop:maineqchph}.
\end{proof}
The following proposition contains the elliptic analogues of $r^p$-weighted energy estimates:
\begin{proposition}
\label{prop:ellipticrp}
Let $n\in\{0,1,2\}$, then for $f\in C^{\infty}(\widehat{\Sigma}_0)$ and $\delta\in (0,1)$, there exists a constant $C_{\delta}(M,a)>0$, such that
\begin{equation}
\label{eq:ellipticrp}
\begin{split}
\int_{\Sigma_0}& r^{1-\delta}(Xf^{(n)}_{\geq n+1})^2+r^{-1-\delta}|\snabla_{\s^2}f^{(n)}_{\geq n+1}|^2+r^{-1-\delta}(f^{(n)}_{\geq n+1})^2\,d\omega d\uprho\\
\leq&\: C_{\delta}\int_{\Sigma_0} r^{-1-\delta}\left(\widehat{\mathcal{L}}^{(n)} f_{\geq n+1}-n\sum_{m=0}^{n-1}O_{\infty}(r^0)[f^{(m)}_{\geq n+1}+\Phi f^{(m)}_{\geq n+1}]\right)^2\,d\omega d\uprho\\
&+C_{\delta}\sum_{m=0}^{n}\int_{\Sigma_0}((rX)^m\mathcal{L}((r^2+a^2)^{-\frac{1}{2}} f_{\geq n+1}))^2\,d\omega d\rho
\end{split}
\end{equation}
and
\begin{equation}
\label{eq:ellipticrpmod}
\begin{split}
\int_{\Sigma_0} r^{3-\delta}  (X\check{f}_n^{(n)})^2+r^{-1-\delta}(\check{f}_n^{(n)})^2 \,d\omega d\uprho \leq&\: C \int_{\Sigma_0} r^{-1-\delta}\left(\check{\mathcal{L}}^{(n)} f-n \sum_{m=0}^{n-1}O_{\infty}(r^0)[f_n^{(m)}+\Phi f_n^{(m)}]\right)^2\,d\omega\\
&+ C\sum_{m=0}^{n}\int_{\Sigma_0}((rX)^m\mathcal{L}((r^2+a^2)^{-\frac{1}{2}} f_n))^2\,d\omega d\rho.
\end{split}
\end{equation}
\end{proposition}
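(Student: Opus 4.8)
The strategy is to mimic the $r^p$-weighted hierarchy estimates of Section \ref{sec:rpest} at the level of the elliptic operators, using the multiplier identities recorded in Lemma \ref{lm:hoeqtin}. I would start from \eqref{eq:tinweightho}, which expresses $(r^2+a^2)^{-1}\widehat{\mathcal{L}}^{(n)}f$ in divergence-plus-lower-order form. Contract this identity with the multiplier $r^{-1-\delta}(r^2+a^2)Xf^{(n)}_{\geq n+1}$ (the elliptic analogue of the $r^{p-2}\frac{(r^2+a^2)^2}{\Delta}L\phi^{(n)}$ multiplier used in the proof of Proposition \ref{prop:rpest}): the leading term $X(\Delta(r^2+a^2)^{-1}Xf^{(n)})$ produces, after a Leibniz rule, a total $X$-derivative $\tfrac12 X(r^{-1-\delta}\Delta(Xf^{(n)})^2)$ plus a bulk term $\tfrac12(1-\delta)r^{-2-\delta}\Delta(r^2+a^2)^{-1}\cdots$ that is $\sim r^{1-\delta}(Xf^{(n)})^2$ with a \emph{good} sign since $\delta<1$; the $-2n[1+O_\infty(r^{-1})]Xf^{(n)}$ term contributes $-2n r^{-1-\delta}(r^2+a^2)(Xf^{(n)})^2$, which again has a favourable sign for $n\geq 0$ after choosing $R$ large; the angular term $(r^2+a^2)^{-1}(\slashed{\Delta}_{\s^2}-n(n+1))f^{(n)}$, after projecting onto $\geq n+1$ and integrating by parts on $\s^2$ using the Poincaré inequality \eqref{eq:poincare1}, yields $\sim r^{-1-\delta}(|\snabla_{\s^2}f^{(n)}_{\geq n+1}|^2 - n(n+1)(f^{(n)}_{\geq n+1})^2)\geq \frac{2n+2-n(n+1)}{\cdots}\cdots$ — here one uses $\ell(\ell+1)-n(n+1)\geq (n+1)(n+2)-n(n+1)=2(n+1)>0$ for $\ell\geq n+1$, so this term too has a good sign; the $X\Phi$ term is a total $\Phi$-derivative and drops out; the remaining $O_\infty(r^{-3})[\cdots]$ and $n\sum O_\infty(r^{-2})[\cdots]$ terms are moved to the right-hand side and absorbed via a weighted Young's inequality (absorbing $f^{(n)}$ itself via the Hardy inequality \eqref{eq:hardyX} with a subcritical weight, which is exactly where $\delta>0$ is needed).

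For the boundary term at $\uprho=r_+$ and the lower-order terms $((rX)^m\mathcal{L}((r^2+a^2)^{-1/2}f))^2$, I would handle them as follows: at the horizon the weight $\Delta(r_+)=0$, so the $X$-boundary term vanishes identically there — this is the analogue of the ``no boundary term at $\mathcal{H}^+$'' phenomenon and it is why no spacelike-redshift input is needed in this particular estimate (contrast with Proposition \ref{prop:keyestimateTinv}). At $\uprho=\infty$ the decay hypothesis $f\in C^\infty(\widehat{\Sigma}_0)$ combined with the requirement $\delta>0$ forces the boundary term to vanish (or be controlled by the left-hand side). The quantity $\widehat{\mathcal{L}}^{(n)}f_{\geq n+1}$ appearing on the right is, up to the explicitly subtracted $n\sum_{m<n} O_\infty(r^0)[f^{(m)}+\Phi f^{(m)}]$ terms, exactly $(r^2+a^2)(X(r^2+a^2))^n (r^2+a^2)^{-1}\widehat{\mathcal{L}}f$, i.e. $n$ weighted $X$-derivatives hitting $(r^2+a^2)^{1/2}\mathcal{L}((r^2+a^2)^{-1/2}f)$; expanding $((r^2+a^2)X)^n$ in terms of $(rX)^m$ for $m\leq n$ produces the sum $\sum_{m=0}^n ((rX)^m\mathcal{L}((r^2+a^2)^{-1/2}f))^2$ on the right-hand side of \eqref{eq:ellipticrp}. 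One then runs an induction on $n\in\{0,1,2\}$: the $f^{(m)}_{\geq n+1}$ with $m<n$ on both sides are controlled by the $n-1$ case together with the Hardy inequality \eqref{eq:hardyX}.

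The second estimate \eqref{eq:ellipticrpmod} is proved the same way but starting from \eqref{eq:tinweightho2}, using the \emph{modified} quantities $\check{f}^{(n)}_n$ and the modified operators $\check{\mathcal{L}}^{(n)}$, and projecting onto the single mode $\ell=n$. The point of the modification (the coefficients $\alpha,\beta,\gamma$ from Section \ref{sec:defNPconstants}) is precisely that in \eqref{eq:tinweightho2} the dangerous zeroth-order and angular terms $(r^2+a^2)^{-1}[\slashed{\Delta}_{\s^2}-n(n+1)]f^{(n)}$ — which vanish identically on the mode $\ell=n$ — have been cancelled, leaving only $O_\infty(r^{-3})$ remainders; this gains two powers of $r$ in the weight, hence the $r^{3-\delta}$ rather than $r^{1-\delta}$ on the left of \eqref{eq:ellipticrpmod}. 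The multiplier is now $r^{3-\delta}(r^2+a^2)X\check{f}^{(n)}_n$, the total-derivative term is $\tfrac12 X(r^{3-\delta}\Delta(X\check f^{(n)}_n)^2)$, the bulk term is $\sim r^{3-\delta}(X\check f^{(n)}_n)^2$ with good sign since $3-\delta<2n+2$ for... actually one checks the sign of the coefficient $\tfrac12(2n+2-(3-\delta))$ directly, which is positive for $n\geq 1$ and also works for $n=0$ after using the $-2n[\cdots]Xf^{(n)}$ term is absent but the $r^{-1-\delta}(\check f^{(n)}_n)^2$ term on the left comes from Hardy \eqref{eq:hardyX}; the $O_\infty(r^{-3})$ remainders and the $n\sum_{m<n}$ terms go to the right-hand side, with the latter absorbed by induction.

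\textbf{Main obstacle.} The delicate point is bookkeeping the \emph{signs} of all the bulk terms simultaneously — verifying that for each admissible $n\in\{0,1,2\}$ and the chosen weight exponent ($1-\delta$, resp. $3-\delta$) every one of: the $X$-derivative bulk term, the $-2nXf^{(n)}$ term, and the angular/Poincaré term (via $\ell(\ell+1)-n(n+1)>0$ on the relevant mode range) comes out with a coercive sign, so that the inessential $O_\infty$ remainders can genuinely be absorbed rather than merely bounded. Equally, one must be careful that the right-hand side of \eqref{eq:ellipticrp} is written in the precise form stated — in particular that the $n$-dependent subtraction $n\sum_{m=0}^{n-1}O_\infty(r^0)[f^{(m)}_{\geq n+1}+\Phi f^{(m)}_{\geq n+1}]$ is exactly the lower-order tail produced by $(X(r^2+a^2))^n$ acting on $\widehat{\mathcal{L}}f$ via the last line of \eqref{eq:tinweightho}, so that these terms are available to close the induction with \eqref{eq:hardyX}. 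No new ideas beyond Section \ref{sec:rpest} are needed; it is the careful transcription of those arguments to the elliptic ($\Delta$-weighted rather than $\underline L L$) setting, with the horizon boundary term automatically vanishing, that constitutes the proof.
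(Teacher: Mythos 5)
Your overall strategy is the right one and matches the paper's: multiply the identities of Lemma \ref{lm:hoeqtin} by an $r$-weighted $X$-derivative, use the Poincar\'e inequality \eqref{eq:poincare1} to make the angular term coercive on the range $\ell\geq n+1$, use the modified quantities $\check f^{(n)}_n$ and $\check{\mathcal{L}}^{(n)}$ to gain two powers of $r$ on the single mode $\ell=n$, and close with Hardy \eqref{eq:hardyX} and an induction in $n$. But there are two concrete problems. First, your multiplier is miscalibrated by one power of $r$. With $w\approx r^{-1-\delta}(r^2+a^2)\approx r^{1-\delta}$ and $v=\Delta(r^2+a^2)^{-1}$, the Leibniz computation gives $wXf\cdot X(vXf)=\tfrac12X\bigl(wv(Xf)^2\bigr)+\tfrac12(wv'-w'v)(Xf)^2$, and since $w'v\approx(1-\delta)r^{-\delta}$ dominates $wv'=O(r^{-1-\delta})$, the bulk term is $\approx-\tfrac{1-\delta}{2}r^{-\delta}(Xf^{(n)})^2$: it has the wrong sign for your sign convention and, more importantly, the weight $r^{-\delta}$, one full power short of the $r^{1-\delta}(Xf^{(n)})^2$ claimed on the left of \eqref{eq:ellipticrp}. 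The paper's multiplier is $-(r^2+a^2)^{p/2}Xf^{(n)}$ with $p=2-\delta$ (and $p=4-\delta$ for \eqref{eq:ellipticrpmod}), which produces the coercive bulk $\tfrac{p}{2}r^{p-1}(Xf^{(n)})^2=\tfrac{2-\delta}{2}r^{1-\delta}(Xf^{(n)})^2$. No absorption argument can recover the missing power of $r$ from your choice.

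Second, your assertion that ``no spacelike-redshift input is needed'' is wrong, and it explains why you misread the role of the second family of terms on the right-hand side. The coefficients of the bulk terms are of the form $\tfrac{p}{2}r^{p-1}+O_\infty(r^{p-2})$ and are coercive only for $r\geq R$ with $R$ large; in the bounded-$r$ region (which contains the ergoregion and a neighbourhood of the horizon) the $O_\infty$ corrections and the cross terms $O(r^{p-3})\Phi f\,Xf$, $O(r^{p-3})f\,Xf$ have no sign. The paper therefore inserts a cut-off $\chi$ supported in $\{r\geq R-M\}$ and controls everything in $\{r\leq R\}$ by the global estimates \eqref{eq:keytimeinvest} and \eqref{eq:keytimeinvestho} of Proposition \ref{prop:keyestimateTinv} --- that is precisely what the terms $\sum_{m=0}^{n}\int((rX)^m\mathcal{L}((r^2+a^2)^{-1/2}f))^2$ on the right-hand sides of \eqref{eq:ellipticrp} and \eqref{eq:ellipticrpmod} are for. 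They are not merely an algebraic re-expansion of $\widehat{\mathcal{L}}^{(n)}f$ in terms of $(rX)^m$, as you suggest. (Your observation that the $\Delta$-weighted $X$-boundary term vanishes at $\mathcal{H}^+$ is correct, but it addresses only the boundary contribution, not the loss of bulk coercivity at finite $r$.) Fix the multiplier weight and add the cut-off plus the appeal to Proposition \ref{prop:keyestimateTinv}, and the argument closes as in the paper.
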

\begin{proof}
Let $\chi$ be a smooth cut-off function such that $\chi(r)=1$ for $r\geq R$ and $\chi(r)=0$ for $r\leq R-M$, where $R>r_++M$ will be chosen suitably large. Consider the multiplier $- (r^2+a^2)^{\frac{p}{2}}  X f$. Then by \eqref{eq:tinweightho} we obtain
\begin{equation*}
\begin{split}
-(r^2+a^2)^{\frac{p}{2}-1} X f  \cdot \widehat{\mathcal{L}} f=&-\frac{1}{2} \Delta^{-1}(r^2+a^2)^{\frac{p+2}{2}} X\left( \Delta^2 (r^2+a^2)^{-2}(Xf)^2 \right)-a  (r^2+a^2)^{\frac{p-4}{2}}\Phi ((Xf)^2)\\
&+\slashed{\rm div}_{\s^2 }(\ldots)-\frac{1}{2} (r^2+a^2)^{\frac{p-2}{2}}X(|\snabla_{\s^2 }f|^2)+O(r^{p-3})\chi \Phi f X f+O(r^{p-3}) f X f\\
=&\:\left[\frac{p}{2}r^{p-1}+O(r^{p-2})\right] (X f)^2 +\frac{1}{2}(p-2)[r^{p-3}+O(r^{p-4})] |\snabla_{\s^2 }f|^2\\
&+O(r^{p-3}) \Phi fX f+O(r^{p-3}) fX f+\frac{1}{2} (r_+^2+a^2)^{\frac{p-2}{2}}|\snabla_{\s^2 }f|^2|_{\mathcal{H}^+}\\
&-\frac{1}{2}X\left(  \Delta (r^2+a^2)^{\frac{p-2}{2}} (X f)^2\right)-a (r^2+a^2)^{\frac{p-2}{2}}\Phi ((X f)^2)+\slashed{\rm div}_{\s^2 }(\ldots).
\end{split}
\end{equation*}

We integrate over $\Sigma_0$, applying \eqref{eq:hardyX} and \eqref{eq:keytimeinvest} to control terms in bounded $r$ regions, to obtain: for $1\leq p\leq 2$
\begin{equation}
\label{eq:zerothordrptimeinv}
\int_{\Sigma_0}r^{p-1}  (Xf)^2+(2-p)r^{p-3} |\snabla_{\s^2}f|^2+r^{-2}f^2 \,d\omega d\uprho\leq C \int_{\Sigma_0} r^{p-3}( \widehat{\mathcal{L}} f)^2+(\mathcal{L}((r^2+a^2)^{-\frac{1}{2}} f)^2\,d\omega d\uprho.
\end{equation}
Note that when replacing $f$ with $f_0=\pi_0 f$, we can in fact take $p=4-\delta$ to obtain:
\begin{equation}
\label{eq:zerothordrptimeinvl0}
\int_{\Sigma_0}r^{3-\delta}  (Xf_0)^2+r^{-1-\delta}f_0^2 \,d\omega d\uprho\leq C \int_{\Sigma_0} r^{-1-\delta}( \check{\mathcal{L}}f_0)^2+(\mathcal{L}((r^2+a^2)^{-\frac{1}{2}} f_0)^2\,d\omega d\uprho.
\end{equation}

For $n>0$ we consider the multiplier $- (r^2+a^2)^{\frac{p}{2}} X f^{(n)}$ together with $(r^2+a^2)^{-1} \widehat{\mathcal{L}}^{(n)}f$, and we proceed analogously to above to obtain:
\begin{equation}
\label{eq:zerothordrptimeinv2}
\begin{split}
\int_{\Sigma_0}& r^{p-1} (Xf^{(n)})^2+(2-p)r^{p-3} [|\snabla_{\s^2}f^{(n)}|^2-n(n+1)(f^{(n)})^2] \,d\omega d\uprho \\
\leq&\: C \int_{\Sigma_0} r^{p-3}\left(\widehat{\mathcal{L}}^{(n)} f-\sum_{m=0}^{n-1}O_{\infty}(r^0)[f^{(m)}+\Phi f^{(m)}]\right)^2+\sum_{m=0}^n((rX)^m\mathcal{L}((r^2+a^2)^{-\frac{1}{2}} f))^2\,d\omega d\uprho,
\end{split}
\end{equation}
for $0<p<2$.

Suppose $n=1$. Then for $f_{\geq 2}$, the left-hand side of \eqref{eq:zerothordrptimeinv2} is non-negative definite by \eqref{eq:poincare1}. Furthermore,  if we consider \eqref{eq:tinweightho2} with the multiplier $- (r^2+a^2)^{\frac{p}{2}} \chi  X \check{f}_1^{(1)}$, we can take $p=4-\delta$ to obtain:
\begin{equation}
\label{eq:zerothordrptimeinv3}
\begin{split}
\int_{\Sigma_0} r^{3-\delta}  (X\check{f}_1^{(1)})^2+r^{-1-\delta}(\check{f}_1^{(1)})^2 \,d\omega d\uprho \leq&\: C \int_{\Sigma_0} r^{-1-\delta}\left(\check{\mathcal{L}}^{(1)} f-O_{\infty}(r^0)[f_1+\Phi f_1]\right)^2\,d\omega\\
&+ C\sum_{m=0}^{1}\int_{\Sigma_0}((rX)^m\mathcal{L}((r^2+a^2)^{-\frac{1}{2}} f_1))^2\,d\omega d\rho. 
\end{split}
\end{equation}

Similarly, for $n=2$ and $f_{\geq 3}$, \eqref{eq:zerothordrptimeinv2} has a non-negative definite left-hand side. For $f_2$, we consider \eqref{eq:tinweightho2} with the multiplier $- (r^2+a^2)^{\frac{p}{2}} \chi  X \check{f}_2^{(2)}$ with $p=4-\delta$ to obtain:
\begin{equation}
\label{eq:zerothordrptimeinv4}
\begin{split}
\int_{\Sigma_0} r^{3-\delta}  (X\check{f}_2^{(2)})^2+r^{-1-\delta}(\check{f}_2^{(2)})^2 \,d\omega d\uprho \leq&\: C \int_{\Sigma_0} r^{-1-\delta}\left(\check{\mathcal{L}}^{(2)} f-O_{\infty}(r^0)\sum_{m=0}^1[f_2^{(m)}+\Phi f_2^{(m)}]\right)^2\,d\omega\\
&+ C\sum_{m=0}^{2}\int_{\Sigma_0}((rX)^m\mathcal{L}((r^2+a^2)^{-\frac{1}{2}} f_2))^2\,d\omega d\rho.
\end{split}
\end{equation}
\end{proof}

We can moreover obtain $r^p$-weighted energy estimates for higher-order derivatives with respect to $\snabla_{\s^2}$ and $rX$:

\begin{corollary}
\label{cor:ellipticrpcommrX}
Let $n\in\{0,1,2\}$ and $k\in \N_0$, then for $f\in C^{\infty}(\widehat{\Sigma}_0)$ and $\delta\in (0,1)$, there exists a constant $C_{\delta}(M,a,k)>0$, such that
\begin{equation}
\label{eq:ellipticrpcommrX}
\begin{split}
\sum_{k\leq k_1+k_2\leq k+1}&\int_{\Sigma_0} r^{-1-\delta}|\snabla_{\s^2}^{k_1}(rX)^{k_2}f^{(n)}_{\geq n+1}|^2\,d\omega d\uprho\\
\leq&\: C_{\delta}\sum_{k_1+k_2=k}\Bigg\{\int_{\Sigma_0} r^{-1-\delta}\left|\snabla_{\s^2}^{k_1}(rX)^{k_2}\left[\widehat{\mathcal{L}}^{(n)} f_{\geq n+1}-\sum_{m=0}^{n-1}O_{\infty}(r^0)(f^{(m)}_{\geq n+1}+\Phi f^{(m)}_{\geq n+1})\right]\right|^2\,d\omega\\
&+C_{\delta}\sum_{m=0}^{n}\int_{\Sigma_0}|\snabla_{\s^2}^{k_2}(rX)^{m+k_2}\mathcal{L}((r^2+a^2)^{-\frac{1}{2}} f_{\geq n+1})|^2\,d\omega d\rho\Bigg\}.
\end{split}
\end{equation}
and
\begin{equation}
\label{eq:ellipticrpmodcommrX}
\begin{split}
\int_{\Sigma_0} &r^{3-\delta}  (X(rX)^{k}\check{f}_n^{(n)})^2+r^{-1-\delta}((rX)^{k}\check{f}_n^{(n)})^2 \,d\omega d\uprho \\
\leq&\: C \int_{\Sigma_0} r^{-1-\delta}\left((rX)^k\check{\mathcal{L}}^{(n)} f-n O_{\infty}(r^0)\sum_{m=0}^{n-1}[(rX)^kf_n^{(m)}+\Phi (rX)^kf_n^{(m)}]\right)^2\,d\omega\\
&+ C\sum_{m=0}^{n}\int_{\Sigma_0}((rX)^{m+k}\mathcal{L}((r^2+a^2)^{-\frac{1}{2}} f_n))^2\,d\omega d\rho.
\end{split}
\end{equation}

\end{corollary}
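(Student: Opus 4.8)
The plan is to obtain both \eqref{eq:ellipticrpcommrX} and \eqref{eq:ellipticrpmodcommrX} by induction on $k$, running the same multiplier argument as in the proof of Proposition \ref{prop:ellipticrp} but now applied to the higher-order equations of Lemma \ref{lm:hoeqtin} after commutation with $\snabla_{\s^2}^{k_1}(rX)^{k_2}$. The base case $k=0$ is precisely \eqref{eq:ellipticrp} and \eqref{eq:ellipticrpmod}, on rewriting $r^{1-\delta}(Xf^{(n)})^2=r^{-1-\delta}(rXf^{(n)})^2$ and $r^{3-\delta}(X\check f^{(n)}_n)^2=r^{-1-\delta}(r^2X\check f^{(n)}_n)^2$. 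For the induction step I would first commute \eqref{eq:tinweightho} (resp.\ \eqref{eq:tinweightho2}) with each $\snabla_{\s^2}^{k_1}(rX)^{k_2}$, $k_1+k_2=k$. The key structural observation is that $\slashed{\Delta}_{\s^2}$ commutes with $X$, $\Phi$ and with every purely $r$-dependent coefficient appearing in \eqref{eq:tinweightho}, so that no $\sin^2\theta$-type obstruction of the kind encountered for $\square_g$ arises here; the only commutators come from moving $(rX)^{k_2}$ past the $r$-weights, and each such commutator produces a term of exactly the same schematic form but with one extra factor of $r^{-1}$, hence controlled by the induction hypothesis.

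Next I would apply the multiplier $-(-1)^{k_1}(r^2+a^2)^{\frac{p}{2}}\chi\,\snabla_{\s^2}^{k_1}(rX)^{k_2+1}f^{(n)}$, with $\chi$ the same cutoff vanishing for $r\le R-M$ that is used in Proposition \ref{prop:ellipticrp}, taking $p=1-\delta$ for \eqref{eq:ellipticrpcommrX} and $p=4-\delta$ for \eqref{eq:ellipticrpmodcommrX}, integrate over $\Sigma_0$, and integrate by parts both in the $X$-direction and, using \eqref{eq:angdercontrol}, in the angular directions. This yields a bulk term with favourable sign controlling $r^{p-1}|\snabla_{\s^2}^{k_1}(rX)^{k_2+1}f^{(n)}|^2$ together with the angular term $r^{p-3}|\snabla_{\s^2}^{k_1+1}(rX)^{k_2}f^{(n)}|^2$; for the unmodified quantity one first projects onto $f_{\ge n+1}$ so that, by the Poincar\'e inequality \eqref{eq:poincare1}, the term $-n(n+1)(f^{(n)})^2$ has the sign needed for absorption, exactly as in Proposition \ref{prop:ellipticrp}. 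The cross terms and the lower-order terms $\sum_{m=0}^{n-1}O_{\infty}(r^{0})[f^{(m)}+\Phi f^{(m)}]$ inherited from Lemma \ref{lm:hoeqtin} are handled by weighted Young inequalities and the Hardy inequality \eqref{eq:hardyX}, while the contributions supported in the bounded region $r_+\le r\le R$, including the boundary terms at $\mathcal{H}^+$, are absorbed using \eqref{eq:keytimeinvestho} and \eqref{eq:esthoder} from Proposition \ref{prop:keyestimateTinv}.

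For the modified estimate \eqref{eq:ellipticrpmodcommrX} the same scheme applies with $\check f^{(n)}_n$ in place of $f^{(n)}$ and $\check{\mathcal{L}}^{(n)}$ in place of $\widehat{\mathcal{L}}^{(n)}$: because $\check f^{(n)}_n$ differs from $f^{(n)}_n$ only through $r$-dependent weights and fixed azimuthal multipliers, the cancellations producing \eqref{eq:tinweightho2} (no $\slashed{\Delta}_{\s^2}-n(n+1)$ remainder) survive commutation with $(rX)^k$, which is what lets us push $p$ up to $4-\delta$; the terms $(rX)^m\mathcal{L}((r^2+a^2)^{-\frac{1}{2}}f)$ on the right-hand side are carried along unchanged from the $k=0$ estimate.

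The step I expect to be the main obstacle is the commutator bookkeeping in the induction: one must check that \emph{every} term generated by commuting $\snabla_{\s^2}^{k_1}(rX)^{k_2}$ through \eqref{eq:tinweightho}/\eqref{eq:tinweightho2}, including those already carrying the nested $\sum_{m<n}$ sums, comes with strictly more $r$-decay and strictly fewer $(rX)$-derivatives than the term being estimated, so that the induction hypothesis closes and the admissible range of $p$ (together with the restriction to $f_{\ge n+1}$ needed for the favourable Poincar\'e constant) is preserved at each stage. Everything else is a routine repetition of the argument of Proposition \ref{prop:ellipticrp}.
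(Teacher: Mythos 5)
Your proposal follows essentially the same route as the paper's (very terse) proof: commute the higher-order identities of Lemma \ref{lm:hoeqtin} with $(rX)^{k_2}$ and angular derivatives, apply the multiplier $-(-1)^{k_1}(r^2+a^2)^{\frac{p}{2}}X\slashed{\Delta}_{\s^2}^{k_1}(rX)^{k_2}f$, integrate by parts on $\s^2$ via \eqref{eq:angdercontrol}, and rerun the argument of Proposition \ref{prop:ellipticrp}, with the lower-order commutator terms closed by induction in $k$. The only slip is the choice $p=1-\delta$ for \eqref{eq:ellipticrpcommrX}: to produce the weight $r^{-1-\delta}((rX)^{k+1}f^{(n)})^2=r^{1-\delta}(X(rX)^{k}f^{(n)})^2$ from the bulk term $\sim\frac{p}{2}r^{p-1}(X(rX)^{k}f^{(n)})^2$ one needs $p=2-\delta$ (still within the admissible range $0<p<2$ of \eqref{eq:zerothordrptimeinv2}); $p=4-\delta$ for \eqref{eq:ellipticrpmodcommrX} is correct.
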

\begin{proof}
We consider $(rX)^{k_2}\widehat{\mathcal{L}}^{(n)}f$ with the multiplier $-(-1)^{k_1} (r^2+a^2)^{\frac{p}{2}}  X \slashed{\Delta}_{\s^2}^{k_1}(rX)^{k_2}f$, and $(rX)^{k_2}\check{\mathcal{L}}^{(n)}f_n$ with the multiplier $- (r^2+a^2)^{\frac{p}{2}}  X (rX)^{k_2}f_n$ and proceed to integrate by parts over $\s^2$, as in the proof of Proposition \ref{prop:keyestimateTinv}, and apply the arguments of Proposition \ref{prop:ellipticrp}.
\end{proof}

In order to establish stronger decay properties of $\mathcal{L}^{-1}(F)$, we introduce the following norms: let $0<\delta<1$, then
\begin{align*}
||f||_{k,0,0, \delta}^2:=&\:\sum_{k_2\leq k} \int_{\Sigma_0} r^{-1-\delta}((rX)^{k_2}f_0)^2\,d\omega d\uprho,\\
||f||_{k,1, m, \delta}^2:=&\:\sum_{k_2\leq k} \sum_{n=0}^m\int_{\Sigma_0} r^{-1-\delta}((rX)^{k_2}f^{(n)}_{1})^2\,d\omega d\uprho\quad 0\leq m\leq 1,\\
||f||_{k,2, m,\delta}^2:=&\:\sum_{k_2\leq k} \sum_{n=0}^m\int_{\Sigma_0} r^{-1-\delta}((rX)^{k_2}f^{(n)}_{2})^2\,d\omega d\uprho \quad 0\leq m\leq 2,\\
||f||_{k,3,m, \delta}^2:=&\:\sum_{k_1+k_2\leq k} \sum_{n=0}^m\int_{\Sigma_0} r^{-1-\delta}|\snabla_{\s^2}^{k_1}(rX)^{k_2}f^{(n)}_{\geq 3}|^2\,d\omega d\uprho\quad 0\leq m\leq 3.
\end{align*}
We denote with $\mathbf{H}_{k,n,m,\delta}$ the completions of:  $\pi_n C^{\infty}(\widehat{\Sigma}_0)$ for $n\in \{0,1,2\}$ and $\pi_{\geq 3} C^{\infty}(\widehat{\Sigma})$ for $n=3$ under the norms $||\cdot ||_{k,3, m,\delta}$ defined above.

Let $D_{k,n,m,\delta}$, with $n\in \{0,1,2\}$, denote the completion of $\pi_n C^{\infty}(\widehat{\Sigma}_0)$ under the graph norm $||\cdot ||_{k,n,m,\delta}$ with respect to the operator $A$, where $A=\check{\mathcal{L}}^{(0)}$ for $n=0$, and when $n\in\{1,2\}$: $A=\widehat{\mathcal{L}}^{(m)}$ for $m\leq n-1$ and $A=\check{\mathcal{L}}^{(n)}$, for $m=n$.

Similarly, let $D_{k,3,m,\delta}$, denote the completion of $\pi_{\geq 3} C^{\infty}(\widehat{\Sigma}_0)$ under the graph norm $||\cdot ||_{k,3,m,\delta}$, with respect to the operator $A$, where $A=\widehat{\mathcal{L}}^{(m)}$.

Then
\begin{equation*}
A: D_{k,n,m,\delta}\to \mathbf{H}_{k,n,m,\delta}
\end{equation*}
is a densely defined, closed, linear operator.

\begin{proposition}
\label{prop:rdecayTinv}
Let $K\in\N_0$ and let $(r^2+a^2)^{-\frac{1}{2}}F\in \mathbf{H}_K$. Then there exists a unique $f\in (r^2+a^2)^{\frac{1}{2}}\mathbf{H}_K$, such that
\begin{equation*}
\widehat{\mathcal{L}}f=F.
\end{equation*}
If moreover,
\begin{align*}
(r^2+a^2)^{\frac{1}{2}}\check{F}_0\in \mathbf{H}_{K,0,0,\delta},\\
F_1\in \mathbf{H}_{K,1,0,\delta}\quad \textnormal{and}\quad (r^2+a^2)^{\frac{1}{2}}\check{F}_1^{(1)}\in \mathbf{H}_{K,1,0,\delta},\\
F_2\in \mathbf{H}_{K,2,1,\delta}\quad \textnormal{and}\quad (r^2+a^2)^{\frac{1}{2}}\check{F}_2^{(2)}\in \mathbf{H}_{K,2,0,\delta},\\
F_{\geq 3}\in \mathbf{H}_{K,3,3,\delta},
\end{align*}
then we can estimate
\begin{align}
\label{eq:rdecayTinv0}
||f_0||_{K,0,0,\delta}+||r^2Xf_0||_{K,0,0,\delta}\leq&\: C||(r^2+a^2)^{\frac{1}{2}}\check{F}_0||_{K,0,0,\delta},\\
\label{eq:rdecayTinv1}
||f_1||_{K,1,1,\delta}+||r^2Xf_1||_{K,1,1,\delta}\leq&\: C\left[ ||F_1||_{K,1,0,\delta}+||(r^2+a^2)^{\frac{1}{2}}\check{F}_1||_{K,1,0,\delta}\right],\\
\label{eq:rdecayTinv2}
||f_2||_{K,2,2,\delta}+||r^2Xf_2||_{K,2,2,\delta}\leq&\: C\left[ ||F_2||_{K,2,1,\delta}+||(r^2+a^2)^{\frac{1}{2}}\check{F}_2||_{K,2,0,\delta}\right],\\
\label{eq:rdecayTinv3}
||f_{\geq 3}||_{K+1,3,3,\delta}\leq&\: C||F_{\geq 3}||_{K,3,3,\delta}.
\end{align}
\end{proposition}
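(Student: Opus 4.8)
The proof proceeds in two stages: first establishing existence and uniqueness of $f$ solving $\widehat{\mathcal{L}}f = F$, then upgrading to the weighted decay estimates \eqref{eq:rdecayTinv0}--\eqref{eq:rdecayTinv3}. For the first stage, observe that $\widehat{\mathcal{L}} = (r^2+a^2)^{\frac{1}{2}}\mathcal{L}(r^2+a^2)^{-\frac{1}{2}}$, so solving $\widehat{\mathcal{L}}f = F$ is equivalent to solving $\mathcal{L}g = (r^2+a^2)^{-\frac{1}{2}}F$ with $g = (r^2+a^2)^{-\frac{1}{2}}f$. Since $(r^2+a^2)^{-\frac{1}{2}}F \in \mathbf{H}_K$ by assumption, Proposition \ref{prop:Linvenest} gives a unique $g = \mathcal{L}^{-1}((r^2+a^2)^{-\frac{1}{2}}F) \in D_K(\mathcal{L})$, and hence a unique $f = (r^2+a^2)^{\frac{1}{2}}g \in (r^2+a^2)^{\frac{1}{2}}\mathbf{H}_K$. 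The estimate \eqref{eq:Linvenest} moreover controls $||g||_K + ||rXg||_K + ||\snabla_{\s^2}g||_K$, which translates into control of the corresponding weighted norms of $f$; this is the $r$-weighted input we will need to absorb error terms below.

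For the second stage, the plan is to combine the decomposition $f = f_0 + f_1 + f_2 + f_{\geq 3}$ (using $[\mathcal{L},\pi_\ell]=0$, so each $f_\ell$ separately satisfies the corresponding projected equation) with the elliptic $r^p$-weighted estimates of Proposition \ref{prop:ellipticrp} and Corollary \ref{cor:ellipticrpcommrX}. I would proceed mode by mode, from lowest to highest $\ell$. For $\ell = 0$: apply \eqref{eq:ellipticrpmod} and \eqref{eq:ellipticrpmodcommrX} with $n=0$, where $\check{\mathcal{L}}^{(0)}f_0 = (r^2+a^2)^{\frac{1}{2}}\check{F}_0$ up to the identification built into the definitions; the lower-order terms $\sum_{m=0}^{n-1}(\ldots)$ are absent when $n=0$, and the terms $((rX)^m\mathcal{L}((r^2+a^2)^{-\frac{1}{2}}f_0))^2 = ((rX)^m\pi_0((r^2+a^2)^{-\frac{1}{2}}F))^2$ are already controlled by $||(r^2+a^2)^{\frac{1}{2}}\check{F}_0||_{K,0,0,\delta}$ (modulo trivially comparable weights). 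This yields \eqref{eq:rdecayTinv0}, with the $r^2 X f_0$ bound coming from the $r^{3-\delta}(X\check f_0^{(0)})^2$ term on the left of \eqref{eq:ellipticrpmodcommrX}. For $\ell = 1$: I would first run \eqref{eq:ellipticrp}/\eqref{eq:ellipticrpcommrX} with $n=0$ on $f_1$ to control $f_1^{(0)} = f_1$ in $||\cdot||_{K,1,0,\delta}$ using $||F_1||_{K,1,0,\delta}$, then run \eqref{eq:ellipticrpmod}/\eqref{eq:ellipticrpmodcommrX} with $n=1$ to control $\check f_1^{(1)}$ (hence $f_1^{(1)}$) using $||(r^2+a^2)^{\frac{1}{2}}\check F_1^{(1)}||_{K,1,0,\delta}$, where the lower-order error $\sum_{m=0}^{0}O_\infty(r^0)[f_1^{(m)}+\Phi f_1^{(m)}]$ is absorbed by the already-established control of $f_1 = f_1^{(0)}$ together with the extra $r$-decay from the first stage; combining gives \eqref{eq:rdecayTinv1}. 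The $\ell = 2$ case is analogous but with one more layer: control $f_2 = f_2^{(0)}$ via $n=0$, then $f_2^{(1)}$ via $n=1$ (the error involves only $f_2^{(0)}$, $\Phi f_2^{(0)}$), then $\check f_2^{(2)}$ via $n=2$ (the error involves $f_2^{(0)}, f_2^{(1)}$), using $||F_2||_{K,2,1,\delta}$ for the first two and $||(r^2+a^2)^{\frac{1}{2}}\check F_2^{(2)}||_{K,2,0,\delta}$ for the last, yielding \eqref{eq:rdecayTinv2}. For $\ell \geq 3$: apply \eqref{eq:ellipticrp}/\eqref{eq:ellipticrpcommrX} with $n = 0, 1, 2, 3$ successively to $f_{\geq 3}$, each time absorbing the lower-order errors into the previously controlled $f_{\geq 3}^{(m)}$, $m < n$; since we need up to $f_{\geq 3}^{(3)}$ and each elliptic estimate costs a derivative on the source side, this is where the index shifts from $K$ to $K+1$ in \eqref{eq:rdecayTinv3}, and the $\snabla_{\s^2}$-commuted versions from Corollary \ref{cor:ellipticrpcommrX} are needed here because for $\ell \geq 3$ one cannot trade angular derivatives for the spectral gap as cleanly (the Poincaré inequality \eqref{eq:poincare1} only bounds $f_{\geq n+1}$ below, which is why the estimates in Proposition \ref{prop:ellipticrp} are stated for $f^{(n)}_{\geq n+1}$).

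The main obstacle, and the step requiring the most care, is the bookkeeping of the lower-order error terms $n\sum_{m=0}^{n-1}O_\infty(r^{-2})[f^{(m)} + \Phi f^{(m)}]$ that appear on the right-hand sides of \eqref{eq:ellipticrp}, \eqref{eq:ellipticrpmod} and their commuted analogues. One must check that at each stage of the induction on $n$ these are genuinely controlled by quantities already estimated: the point is that $O_\infty(r^{-2})$ weights combined with the $r^{-1-\delta}$ weight in the norm give $r^{-5-\delta}$, which is more than enough to be dominated by $r^{-1-\delta}((rX)^{k_2}f^{(m)})^2$ after a Hardy inequality \eqref{eq:hardyX}, or alternatively by the first-stage control $||f||_K \lesssim ||F||_K$ with its $r^0$-type weights; but one must also verify the $\Phi$-derivative terms cause no loss, which follows because $\Phi$ commutes with $\mathcal{L}$, $\widehat{\mathcal{L}}$, $\check{\mathcal{L}}^{(n)}$ and with $\pi_\ell$, $rX$, $\snabla_{\s^2}$, so every estimate automatically holds for $\Phi f$ replacing $f$. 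A secondary subtlety is matching the precise polynomial weights $w_i$ hidden inside the definitions of $\check f^{(n)}$ and $\check{\mathcal{L}}^{(n)}$ (with coefficients $\alpha, \beta, \gamma$ as in Section \ref{sec:defNPconstants}) so that the cancellation producing \eqref{eq:tinweightho2} from \eqref{eq:tinweightho} is exploited — this is what makes the $p = 4-\delta$ weight admissible for the $\check f^{(n)}_n$ estimates rather than merely $p < 2$, and is essential for obtaining the sharp $r^{3-\delta}(X\check f_n^{(n)})^2$ control on the left, hence the $r^2 X f_\ell$ bounds; but this cancellation is precisely the content of Lemma \ref{lm:hoeqtin}, which may be invoked directly. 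Finally, a density argument extends all estimates from $f \in C^\infty(\widehat{\Sigma}_0)$ (or $\pi_\ell C^\infty$) to the closure spaces $D_{k,n,m,\delta}$, exactly as in the last paragraph of the proof of Proposition \ref{prop:keyestimateTinv}.
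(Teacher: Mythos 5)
Your overall architecture matches the paper's: existence and uniqueness via Proposition \ref{prop:Linvenest}, followed by a mode-by-mode, level-by-level (in $n$) application of the elliptic $r^p$-weighted hierarchy of Proposition \ref{prop:ellipticrp} and Corollary \ref{cor:ellipticrpcommrX}, with the lower-order errors $\sum_{m<n}O_\infty(r^{-2})[f^{(m)}+\Phi f^{(m)}]$ absorbed by the previously controlled levels. Your bookkeeping of which hypothesis on $F$ feeds which estimate ($\|F_\ell\|$ for the unmodified levels, $\|(r^2+a^2)^{1/2}\check F_\ell^{(\ell)}\|$ for the top level via the cancellation of Lemma \ref{lm:hoeqtin}) is also consistent with the paper.

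There is, however, one genuine gap, and it is precisely the point the paper flags as the crux. The estimates of Proposition \ref{prop:ellipticrp} and Corollary \ref{cor:ellipticrpcommrX} are \emph{a priori} estimates proved for $f\in C^\infty(\widehat{\Sigma}_0)$: the integrations by parts that produce them discard boundary terms at $x=0$ ($\mathcal{I}^+$), which vanish only because of the assumed smoothness on the conformal compactification. The solution $f$ you obtain from Proposition \ref{prop:Linvenest} is only known to satisfy $(r^2+a^2)^{-1/2}f\in\mathbf{H}_K$, and this gives \emph{no} decay of $f$ or of $r^2Xf$ towards $\mathcal{I}^+$ --- so you cannot simply ``apply \eqref{eq:ellipticrpmod} to $f_0$'' as written. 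Your closing density argument does not repair this: the pattern from the last paragraph of Proposition \ref{prop:keyestimateTinv} extends an estimate from smooth functions to elements of the graph-norm closure $D_{K,n,m,\delta}$, but it presupposes that $f$ \emph{lies} in that closure, i.e.\ that there is a sequence of smooth $f_j\to f$ with $Af_j\to Af$ in the weighted norm --- which is exactly what needs to be shown. The paper closes this by mollification: set $(f_\ell)_j:=f_\ell*\eta_{1/j}$, note that $\mathcal{L}(f_\ell)_j=F_\ell*\eta_{1/j}$ converges to $F_\ell$ in the weighted norm \emph{because of the hypothesis that $F$ lies in the weighted space}, then apply the a priori estimate to the differences $(f_\ell)_j-(f_\ell)_i$ to conclude that $\{(f_\ell)_j\}$ is Cauchy in $\|\cdot\|_{K,n,m,\delta}$, whence $f_\ell$ belongs to the weighted space and inherits the estimate. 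You should insert this step (or an equivalent argument establishing invertibility of the restricted operators $A:D_{K,n,m,\delta}\to\mathbf{H}_{K,n,m,\delta}$) before invoking the elliptic hierarchy; the rest of your argument then goes through.
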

\begin{proof}
The existence and uniqueness of $f\in (r^2+a^2)^{\frac{1}{2}}\mathbf{H}_K$ follows directly from Proposition \ref{prop:Linvenest}. The improved estimates \eqref{eq:rdecayTinv0}--\eqref{eq:rdecayTinv3} would follow from Corollary \ref{cor:ellipticrpcommrX}, if we knew \emph{a priori} that $f$ and its appropriately weighted derivatives decayed sufficiently fast towards $\mathcal{I}^+$. Since this decay does not follow from the fact that $f\in (r^2+a^2)^{\frac{1}{2}}\mathbf{H}_K$, we proceed by deriving invertibility of the operators $A: D_{k,m,\delta}\to \mathbf{H}_{k,m,\delta}$, as defined above.

We consider first \eqref{eq:rdecayTinv0} with $K=0$. Consider the restricted inverse operator:
\begin{equation*}
\widehat{\mathcal{L}}^{-1}: \mathbf{H}_{0,0,0,\delta}\to \mathcal{L}^{-1}(\mathbf{H}_{0,0,0,\delta})\subseteq (r^2+a^2)^{\frac{1}{2}}\mathbf{H}_0,
\end{equation*}
Let $F_0\in \mathbf{H}_{0,0,0,\delta}$ and write $f_0:=\widehat{\mathcal{L}}^{-1}(F_0)$. Let $\{(f_0)_j\}$ be a sequence in $C^{\infty}(\widehat{\Sigma_0})$ defined as follows:
\begin{equation*}
(f_0)_j=f_0* \eta_{\frac{1}{j}},
\end{equation*}
with $\eta_{\frac{1}{j}}$ a standard mollifier defined on an extension of $\widehat{\Sigma}_0$ (where we view $\widehat{\Sigma}_0$ as a compact subset of $\R\times \s^2$). Note that $\{(r^2+a^2)^{-\frac{1}{2}} (f_0)_j\}$ converges to $(r^2+a^2)^{-\frac{1}{2}} f_0$ with respect to $||\cdot||_0$.

By linearity of $\mathcal{L}$, we also have that $(F_0)_j:=\mathcal{L}(f_0)_j=F_0*\eta_{\frac{1}{j}}$, so by $F_0\in \mathbf{H}_{0,0,0,\delta}$, it is straightforward to show that:
\begin{equation*}
||F_0-(F_0)_j||_{0,0,0,\delta}\to 0\:\textnormal{ as $j\to\infty$}.
\end{equation*}
Hence, $\{(F_0)_j\}$ is Cauchy with respect to $||\cdot||_{0,0,0,\delta}$ and we can apply \eqref{eq:ellipticrpmod} to the difference $(f_0)_j-(f_0)_i$, $i,j\in \N_0$, to obtain (in particular) the Cauchy property of $\{(f_0)_j\}$ with respect to $||\cdot||_{0,0,0,\delta}$. It follows that $f_0\in \mathbf{H}_{0,0,0,\delta}$ and we can conclude that \eqref{eq:rdecayTinv0} holds with $K=0$.

In order to derive \eqref{eq:rdecayTinv1} with $K=0$, we first establish invertibility of $\widehat{\mathcal{L}}: D_{1,0,0,\delta}\to \mathbf{H}_{1,0,0,\delta}$ as outlined above and then consider the operator $\check{\mathcal{L}}^{(1)}: D_{1,0,1,\delta}\to \mathbf{H}_{1,0,1,\delta}$. Similarly, we obtain \eqref{eq:rdecayTinv2} by first considering $\widehat{\mathcal{L}}: D_{2,0,0,\delta}\to \mathbf{H}_{2,0,0,\delta}$ , then  $\widehat{\mathcal{L}}^{(1)}: D_{2,0,1,\delta}\to \mathbf{H}_{2,0,1,\delta}$, and finally $\check{\mathcal{L}}^{(2)}: D_{2,0,2,\delta}\to \mathbf{H}_{2,0,2,\delta}$. In order to derive \eqref{eq:rdecayTinv3}, we analogously establish successively invertibility of $\widehat{\mathcal{L}}^{(n)}$ for $n=0,1,2,3$.

Finally, the $K>0$ cases follow via an inductive argument as in the proof of Proposition \ref{prop:Linvenest}.
\end{proof}

\subsection{Construction of time integral data}
If $\psi$ is a solution to \eqref{eq:waveeq}, then it follows from \eqref{eq:waveeq2} that the restriction $\psi|_{\Sigma_0}$ satisfies the inhomogeneous equation
\begin{equation*}
\widehat{\mathcal{L}}\psi|_{\Sigma_0}=F[T\psi|_{\Sigma_0}],
\end{equation*}
with
\begin{equation*}
(r^2+a^2)^{-\frac{1}{2}}F[f]:=2[h\Delta-(r^2+a^2)] Xf+[(\Delta h)'-2r]f+[2h(r^2+a^2)-h^2\Delta-a^2\sin^2\theta] Tf+2a(h-1) \Phi f.
\end{equation*}
\begin{proposition}
\label{prop:mainpropTinv}
Consider initial data $(\psi|_{\Sigma_0},T\psi_{\Sigma_0})$ for \eqref{eq:waveeq}, with $(\phi|_{\Sigma_0},T\phi_{\Sigma_0}) \in (C^{\infty}(\widehat{\Sigma}))^2$.
\begin{enumerate}[\rm (i)]
\item
Then there exists a unique solution to \eqref{eq:waveeq}, denoted $T^{-1}\psi$, such that $T^{-1}\psi\in C^{\infty}(\mathcal{R})$, 
\begin{equation*}
T(T^{-1}\psi)=\psi
\end{equation*}
and $r^{-1}T^{-1}\psi|_{\Sigma_0}\in L^2(\Sigma_0)$ and $T^{-1}\psi|_{\Sigma_0}\in \dot{H}^1(\Sigma_0)$.
\item
If we moreover assume that
\begin{equation}
\label{eq:vanishnp}
(I_0[\psi],I_1[\psi],I_2[\psi])=(0,0,0),
\end{equation}
then the energy norms appearing on the right-hand sides of the estimates of Section \ref{sec:poinwdecay} are finite for all $K\in \N_0$ and $\delta>0$ if we replace $\psi$ with $T^{-1}\psi$.
\end{enumerate}
\end{proposition}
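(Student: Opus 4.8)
\textbf{Proof plan for Proposition \ref{prop:mainpropTinv}.}

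The plan is to realize $T^{-1}\psi$ as the solution to \eqref{eq:waveeq} whose Cauchy data on $\Sigma_0$ are $(\widetilde{\Psi},\widetilde{\Psi}')$, where $\widetilde{\Psi}' := \psi|_{\Sigma_0}$ and $\widetilde{\Psi}$ is constructed by inverting the elliptic operator $\widehat{\mathcal{L}}$. Concretely, for part (i) I would first apply \eqref{eq:waveeq2} to the unknown solution $\widetilde{\psi}:=T^{-1}\psi$ and demand $T\widetilde{\psi}=\psi$: this forces $\widehat{\mathcal{L}}\widetilde{\psi}|_{\Sigma_0} = F[T\widetilde{\psi}|_{\Sigma_0}] = F[\psi|_{\Sigma_0}]$, with $F$ as displayed just above the proposition. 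Since $\psi|_{\Sigma_0}$ is smooth with respect to the conformal structure, $(r^2+a^2)^{-\frac12}F[\psi|_{\Sigma_0}]$ is a smooth function on $\widehat{\Sigma}_0$ vanishing fast enough to lie in $\mathbf{H}_k$ for every $k$ (the coefficient $h\Delta-(r^2+a^2)=O(r^0)$ and $(\Delta h)'-2r=O(r)$ combine with one $X$-derivative to produce something in $(r^2+a^2)^{1/2}\mathbf{H}_k$; one checks this directly from $\hat h$ smoothness on $[0,r_+^{-1}]$). Then Proposition \ref{prop:Linvenest} and Corollary \ref{cor:regularitytimeinv} give a unique $\widetilde{\Psi}\in \bigcap_k \mathbf{H}_k\subset C^\infty(\Sigma_0)$ with $\widehat{\mathcal{L}}\widetilde{\Psi}=F[\psi|_{\Sigma_0}]$, and \eqref{eq:Linvenest} with $k=0$ gives exactly $r^{-1}\widetilde{\Psi}\in L^2(\Sigma_0)$ and $X\widetilde{\Psi},\snabla_{\s^2}\widetilde{\Psi}\in L^2(\Sigma_0)$, i.e. $\widetilde{\Psi}\in\dot H^1(\Sigma_0)$. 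Finally I would define $T^{-1}\psi$ to be the (smooth, by the propagation of regularity proposition in Section \ref{sec:equations}) solution to \eqref{eq:waveeq} with data $(\widetilde{\Psi},\psi|_{\Sigma_0})$, and verify $T(T^{-1}\psi)=\psi$: the function $T(T^{-1}\psi)$ solves \eqref{eq:waveeq} with data $(T(T^{-1}\psi)|_{\Sigma_0}, T^2(T^{-1}\psi)|_{\Sigma_0}) = (\psi|_{\Sigma_0}, \ldots)$, where the first component matches by construction and the second matches because $\widehat{\mathcal{L}}\widetilde{\Psi}=F[\psi|_{\Sigma_0}]$ encodes precisely the relation $T^2(T^{-1}\psi)|_{\Sigma_0}=T\psi|_{\Sigma_0}$ once the wave equation is used to solve for the second time derivative; uniqueness for \eqref{eq:waveeq} then gives $T(T^{-1}\psi)=\psi$. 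Uniqueness of $T^{-1}\psi$ among solutions with $r^{-1}(\cdot)|_{\Sigma_0}\in L^2$, $(\cdot)|_{\Sigma_0}\in\dot H^1$ follows because the difference of two such is a solution whose data lie in $\mathbf{H}_1$ with vanishing $F$, hence in $\ker\widehat{\mathcal{L}}=\{0\}$ by Proposition \ref{prop:keyestimateTinv}, so the data vanish and thus the solution does.

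For part (ii), the point is to upgrade the $\mathbf{H}_k$-regularity of $\widetilde{\Psi}=T^{-1}\psi|_{\Sigma_0}$ to the weighted norms $||\cdot||_{K,n,m,\delta}$ that control the energies $E_{\ell,K,\delta}$, $E_{\geq 3,K,\delta}$ appearing in Section \ref{sec:poinwdecay}. I would invoke Proposition \ref{prop:rdecayTinv}: it suffices to check that the right-hand side $F=F[\psi|_{\Sigma_0}]$, after the projections and renormalizations, satisfies $(r^2+a^2)^{1/2}\check F_0\in\mathbf{H}_{K,0,0,\delta}$, $F_1\in\mathbf{H}_{K,1,0,\delta}$ and $(r^2+a^2)^{1/2}\check F_1^{(1)}\in\mathbf{H}_{K,1,0,\delta}$, $F_2\in\mathbf{H}_{K,2,1,\delta}$ and $(r^2+a^2)^{1/2}\check F_2^{(2)}\in\mathbf{H}_{K,2,0,\delta}$, and $F_{\geq 3}\in\mathbf{H}_{K,3,3,\delta}$. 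This is exactly where the hypothesis \eqref{eq:vanishnp} enters: the definitions of the renormalized derivatives $\check\phi^{(1)},\check\phi^{(2)}$ and the quantities $P_0,P_1,P_2$ in Section \ref{sec:defNPconstants} were chosen so that the would-be obstructions to these weighted integrability statements are precisely (multiples of) $I_0[\psi],I_1[\psi],I_2[\psi]$ — the boundary terms at $\mathcal{I}^+$ in the relevant Hardy/$r^p$ chains. When $I_\ell[\psi]=0$, $\ell=0,1,2$, the quantities $r^2 P_\ell$ tend to $0$ along $\mathcal{I}^+$, which translates (via the relation between $F$, $\psi|_{\Sigma_0}$, $T\psi|_{\Sigma_0}$ and $P_\ell$) into the claimed membership of $F$ in the weighted spaces, with the required gain of one power of $r$ relative to the generic case. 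Granting this, Proposition \ref{prop:rdecayTinv} yields $\widetilde{\Psi}$ and its $X$-, $\snabla_{\s^2}$-derivatives in $\mathbf{H}_{K,n,m,\delta}$ for all $K$, and since $E_{\ell,K,\delta}[T^{-1}\psi]$, $E_{\geq3,K,\delta}[T^{-1}\psi]$ are (by inspection of their definitions in Propositions \ref{prop:almostsharpedecay}, \ref{prop:edecay}) finite linear combinations of exactly these weighted norms of $\widetilde{\Psi}=T^{-1}\psi|_{\Sigma_0}$ together with the $J^N$-energies controlled by \eqref{eq:Linvenest}, we conclude finiteness for every $K\in\N_0$ and $\delta>0$.

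The main obstacle is the bookkeeping in part (ii): one must trace, for each of $\ell=0,1,2$, how the expression $F=F[\psi|_{\Sigma_0}]$ — which contains $XT^{-1}\psi|_{\Sigma_0}$, $T^{-1}\psi|_{\Sigma_0}$, $\psi|_{\Sigma_0}$ and $T\psi|_{\Sigma_0}$ — relates after renormalization $\check F^{(j)}$ to the conserved charges $P_\ell$, and verify that the only potential divergence of the weighted $L^2$ integrals is the one measured by $I_\ell[\psi]$. This requires re-running the integration-by-parts identities of Section \ref{sec:defNPconstants} and Proposition \ref{prop:ellipticrp}/Corollary \ref{cor:ellipticrpcommrX} "with the boundary terms kept", identifying the boundary term at $\mathcal{I}^+$ with (a nonzero multiple of) the Newman--Penrose charge, and then setting it to zero. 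There is also the routine but slightly delicate check that the mollification argument in the proof of Proposition \ref{prop:rdecayTinv} applies to our specific $F$, i.e. that $F*\eta_{1/j}\to F$ in each weighted norm — this is immediate from $F\in C^\infty(\widehat{\Sigma}_0)$ with the correct decay, which itself follows from smoothness of $\hat h$ and of the conformal data. The rougher-data remark after Theorem \ref{thm:intro1} indicates how one weakens the hypothesis, but for the stated proposition compact support (hence $I_\ell[\psi]=0$) is exactly what makes all these boundary terms vanish.
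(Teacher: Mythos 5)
Your proposal is correct and follows essentially the same route as the paper: construct $T^{-1}\psi|_{\Sigma_0}=\mathcal{L}^{-1}(F[\psi|_{\Sigma_0}])$ via Proposition \ref{prop:Linvenest} and Corollary \ref{cor:regularitytimeinv}, solve the Cauchy problem with data $(T^{-1}\psi|_{\Sigma_0},\psi|_{\Sigma_0})$, get uniqueness from injectivity of $\mathcal{L}$, and for (ii) verify the hypotheses of Proposition \ref{prop:rdecayTinv}, which is where \eqref{eq:vanishnp} enters. The only cosmetic difference is that the paper establishes the link between $F$ and the charges through the direct pointwise identities $F_0=2r^2P_0+O_{\infty}(r^{-1})[\ldots]$ and their analogues for $\check{F}_1^{(1)},\check{F}_2^{(2)}$ (equations \eqref{eq:F0}--\eqref{eq:F2}), rather than by re-running integration-by-parts with boundary terms as you suggest, but the mechanism is the same.
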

\begin{proof}
We can rewrite $F_0[\psi|_{\Sigma_0}]$ as follows in terms of $\phi|_{\Sigma_0}$:
\begin{equation}
\label{eq:F0}
\begin{split}
F_0[\psi|_{\Sigma_0}]=&\: 2[h\Delta-(r^2+a^2)] X\phi_0|_{\Sigma_0}+[(\Delta h)'-2(h\Delta)r (r^2+a^2)^{-1}] \phi_0|_{\Sigma_0}+[2h(r^2+a^2)-h^2\Delta] T\phi_0|_{\Sigma_0}\\
&-a^2\sin^2\theta\pi_0(T\phi|_{\Sigma_0})\\
=&\:2r^2 P_0+O_{\infty}(r^{-1})[\phi^{(1)}_0|_{\Sigma_0}+T\phi_0|_{\Sigma_0}+\phi_0|_{\Sigma_0}].
\end{split}
\end{equation}

By combining the equations in Proposition \ref{prop:maineqchph} with the expressions in Lemma \ref{lm:hoeqtin}, we obtain moreover:
\begin{align}
\label{eq:F1}
\check{F}^{(1)}_1[\psi|_{\Sigma_0}]=&\:2[r^2+O_{\infty}(r)] P_1|_{\Sigma_0}+O_{\infty}(r^{-1})\left[\sum_{m=0}^1\sum_{j=0}^1 \phi_1^{(2)}+T^j\phi^{(1)}_{2m+1}+T^j\phi_{2m+1}\right],\\
\label{eq:F2}
\check{F}^{(2)}_2[\psi|_{\Sigma_0}]=&\:2[r^2+O_{\infty}(r)] P_2|_{\Sigma_0}+O_{\infty}(r^{-1})\left[\sum_{m=0}^2\sum_{j=0}^1  \phi_2^{(3)}+T^j\phi^{(2)}_{2m}+T^j\phi^{(1)}_{2m}+\phi_{2m}\right],
\end{align}

Denote:
\begin{equation*}
T^{-1} \psi|_{\Sigma_0}:=\mathcal{L}^{-1}(F[\psi|_{\Sigma_0}]).
\end{equation*}
Then by the regularity established in Corollary \ref{cor:regularitytimeinv}, $T^{-1}\psi|_{\Sigma_0}\in C^{\infty}(\Sigma_0)$ and we denote with $T^{-1}\psi$ the solution to \eqref{eq:waveeq} with initial data $(T^{-1}\psi|_{\Sigma_0},\psi_{\Sigma_0})$. We then have that
\begin{equation*}
T(T^{-1}\psi)=\psi.
\end{equation*}
Furthermore, by the injectivity properties of $\mathcal{L}$ following from Proposition \ref{prop:Linvenest}, $T^{-1}\psi$ must the unique solution to \eqref{eq:waveeq} satisfying the conditions: $T(T^{-1}\psi)=\psi$, $r^{-1}T^{-1}\psi|_{\Sigma_0}\in L^2(\Sigma_0)$ and $T^{-1}\psi|_{\Sigma_0}\in \dot{H}^1(\Sigma_0)$.

Furthermore, by the above expressions $(r^2+a^2)^{\frac{1}{2}}F_0, (r^2+a^2)^{\frac{1}{2}}\check{F}_1, (r^2+a^2)^{\frac{1}{2}}\check{F}_2\in C^{\infty}(\widehat{\Sigma}_0)$, if \eqref{eq:vanishnp} holds, so we can apply Proposition \ref{prop:rdecayTinv} to conclude that for $T^{-1}\phi:=(r^2+a^2)^{\frac{1}{2}}T^{-1}\psi$:
\begin{align*}
(T^{-1}\phi)_0|_{\Sigma_0},\quad r^2X(T^{-1}\phi)_0|_{\Sigma_0}\in \mathbf{H}_{K,0,0,\delta},\\
(T^{-1}\phi)_1|_{\Sigma_0},\quad r^2X(T^{-1}\phi)_1|_{\Sigma_0}\in \mathbf{H}_{K,1,1,\delta},\\
(T^{-1}\phi)_2|_{\Sigma_0},\quad r^2X(T^{-1}\phi)_2|_{\Sigma_0}\in \mathbf{H}_{K,2,2,\delta},\\
(T^{-1}\phi)_{\geq 3}|_{\Sigma_0},\quad rX (T^{-1}\phi)_{\geq 3}|_{\Sigma_0}\in \mathbf{H}_{K,3,3,\delta}.
\end{align*}
for all $K\geq 0$ and $\delta>0$ and hence all the energies appearing in the estimates of Section \ref{sec:poinwdecay} are finite if we replace $\psi$ with $T^{-1}\psi$.
\end{proof}

\begin{corollary}
\label{cor:sharpdecayvanishingNP}
Consider initial data $(\psi|_{\Sigma_0},T\psi_{\Sigma_0})$ for \eqref{eq:waveeq}, with $(\phi|_{\Sigma_0},T\phi_{\Sigma_0}) \in (C^{\infty}(\widehat{\Sigma}))^2$, such that \eqref{eq:vanishnp} holds. 

Then the energy decay estimates in Section \ref{sec:edecay} and the pointwise decay estimates in Section \ref{sec:poinwdecay} hold when $\psi$ is replaced with $T^{-1}\psi$ and we can moreover express:
\begin{equation*}
T^{-1}\psi(\tau,\uprho,\theta,\varphi_*)=-\int_{\tau}^{\infty}\psi(\tau',\uprho,\theta,\varphi_*)\,,d\tau'.
\end{equation*}.
\end{corollary}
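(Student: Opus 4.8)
\textbf{Proof plan for Corollary \ref{cor:sharpdecayvanishingNP}.}

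The plan is to combine the construction of $T^{-1}\psi$ in Proposition \ref{prop:mainpropTinv} with the observation that all the energy quantities $E_{\ell,K,\delta}[\,\cdot\,]$, $E_{\geq \ell,K,\delta}[\,\cdot\,]$ entering the estimates of Sections \ref{sec:edecay} and \ref{sec:poinwdecay} are finite when evaluated on $T^{-1}\psi$. First I would invoke Proposition \ref{prop:mainpropTinv}: under the vanishing hypothesis \eqref{eq:vanishnp}, the time integral $T^{-1}\psi$ exists as a smooth solution to \eqref{eq:waveeq} on $\mathcal{R}$ with $T(T^{-1}\psi)=\psi$, and part (ii) of that proposition states precisely that the right-hand sides of all the estimates in Section \ref{sec:poinwdecay} are finite with $\psi$ replaced by $T^{-1}\psi$. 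The only additional input needed is that the same is true for the energy decay estimates of Section \ref{sec:edecay}; but those estimates (Propositions \ref{prop:edecaystep1}--\ref{prop:edecayelliptic}) have right-hand sides built from the same families $E_{\ell,K,\delta}$, $E_{\geq \ell,K,\delta}$, $E_{\ell,K,\delta}[N^j(\cdot)]$, and Proposition \ref{prop:rdecayTinv} gives membership of $(T^{-1}\phi)_\ell|_{\Sigma_0}$, $r^2 X (T^{-1}\phi)_\ell|_{\Sigma_0}$ (and the $\geq 3$ analogues) in the weighted spaces $\mathbf{H}_{K,\ell,m,\delta}$ for every $K$ and $\delta$, which are exactly the data norms controlling those $E$-quantities. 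Since the redshift-commuted data $N^j T^{-1}\psi|_{\Sigma_0}$ are also smooth and compactly supported in the conformal sense (they are obtained from the smooth $T^{-1}\psi|_{\Sigma_0}$ by applying the smooth vector field $N$), the quantities $E_{\ell,K,\delta}[N^j T^{-1}\psi]$ are likewise finite. Therefore every estimate of Sections \ref{sec:edecay} and \ref{sec:poinwdecay}, applied verbatim to the solution $T^{-1}\psi$, has a finite right-hand side, which is the content of the first assertion.

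Next I would prove the integral representation
\begin{equation*}
T^{-1}\psi(\tau,\uprho,\theta,\varphi_*)=-\int_{\tau}^{\infty}\psi(\tau',\uprho,\theta,\varphi_*)\,d\tau'.
\end{equation*}
The point is that the naive candidate $g(\tau,\uprho,\theta,\varphi_*):=-\int_{\tau}^{\infty}\psi(\tau',\uprho,\theta,\varphi_*)\,d\tau'$ is well-defined: by the pointwise decay estimates of Section \ref{sec:poinwdecay} applied to $\psi$ itself (which decays, e.g. $|\psi|\lesssim (1+\tau)^{-2+\epsilon}$ after summing the $\ell=0,1,2,\geq 3$ contributions, using the non-vanishing-$I_\ell$ estimates of Section \ref{sec:edecay}), the integrand is absolutely integrable in $\tau'\in[\tau,\infty)$ for each fixed $(\uprho,\theta,\varphi_*)$, and the same for all coordinate derivatives of $\psi$, so $g\in C^\infty(\mathcal{R})$ and one may differentiate under the integral sign. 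Thus $T g=\psi$, and since $\square_{g_{M,a}}$ commutes with $T=\partial_v$, the function $h:=\square_{g_{M,a}} g$ satisfies $T h = \square_{g_{M,a}}\psi = 0$; combined with $h\to 0$ along $\Sigma_\tau$ as $\tau\to\infty$ (again from decay of $g$ and its derivatives) this forces $h\equiv 0$, so $g$ solves \eqref{eq:waveeq}. Finally I would check that $g$ satisfies the uniqueness conditions characterizing $T^{-1}\psi$ in Proposition \ref{prop:mainpropTinv}(i): $r^{-1}g|_{\Sigma_0}\in L^2(\Sigma_0)$ and $g|_{\Sigma_0}\in \dot H^1(\Sigma_0)$. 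This follows by Minkowski's integral inequality, $\|r^{-1}g|_{\Sigma_0}\|_{L^2}\le \int_0^\infty \|r^{-1}\psi|_{\Sigma_{\tau'}}\|_{L^2(\Sigma_0)}\,d\tau'$ (after transporting $\Sigma_{\tau'}$ back to $\Sigma_0$ along the $T$-flow, which is an isometry by Lemma \ref{lm:propSigmat}), and the right-hand side is finite because the $\Sigma_\tau$-energy of $\psi$ (hence, via \eqref{eq:hardyX}, a weighted $r^{-1}$ $L^2$ norm of $\psi$) decays like $(1+\tau)^{-3+\delta}$ after the almost-sharp estimates; similarly for $\dot H^1$ using the energy decay. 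By the injectivity part of Proposition \ref{prop:Linvenest} (equivalently the uniqueness in Proposition \ref{prop:mainpropTinv}(i)), $g$ must coincide with $T^{-1}\psi$.

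The main obstacle is the bookkeeping in the first step: one must verify that every data norm $E_{\ell,K,\delta}$, $E_{\geq\ell,K,\delta}$ and their $N^j$-commuted versions appearing throughout Sections \ref{sec:edecay}--\ref{sec:poinwdecay} is actually dominated by the weighted Sobolev norms of $T^{-1}\psi|_{\Sigma_0}$ that Proposition \ref{prop:rdecayTinv} controls — i.e. that the $r$-weights and orders of derivatives match up. Here one uses that the $E$-norms are built from $r^{2-\delta}$-weighted $L^2$ norms of $L(rL)^k T^m \Phi^{m'}\phi^{(j)}$ and $r^{3-\delta}$-weighted norms of $(rL)^k T^m P_\ell$, all of which, on $\Sigma_0$, can be rewritten in terms of $X$-derivatives and the quantities $\check{F}_\ell^{(n)}$, $P_\ell$ via the definitions in Section \ref{sec:defNPconstants} and Lemma \ref{lm:hoeqtin}; the vanishing of $I_\ell[\psi]$ is exactly what makes $(r^2+a^2)^{1/2}\check F_\ell^{(n)}\in C^\infty(\widehat\Sigma_0)$ (equation \eqref{eq:F0}--\eqref{eq:F2}), which in turn feeds Proposition \ref{prop:rdecayTinv}. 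Once this matching is done the rest is immediate, so I expect the corollary to be essentially a clean corollary of the machinery already assembled, with the integral-representation part being the only genuinely new (but routine) computation.
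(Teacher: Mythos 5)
Your proposal is correct and follows the route the paper intends: part (ii) of Proposition \ref{prop:mainpropTinv} together with Proposition \ref{prop:rdecayTinv} gives finiteness of all the $E$-norms (and their $N^j$-commuted versions) for $T^{-1}\psi$, and the integral representation follows by checking that $g:=-\int_{\tau}^{\infty}\psi\,d\tau'$ is well-defined via the decay estimates for $\psi$, solves \eqref{eq:waveeq}, and satisfies the uniqueness conditions of Proposition \ref{prop:mainpropTinv}(i). The paper states the corollary without proof, and your argument — including the key observation that the mode-decomposed energy decay $(1+\tau)^{-3+\delta}$ (rather than the crude $(1+\tau)^{-2+\delta}$ for the full solution) is what makes the $\tau'$-integrals of the $L^{2}$ and $\dot{H}^{1}$ norms converge — supplies exactly the missing details.
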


\subsection{Time-inverted Newman--Penrose charges}
\label{sec:timinvNP}
In this section, we will express the Newman--Penrose charges $I_0[T^{-1}\psi]$, $I_{1m}[T^{-1}\psi]$, with $|m|\leq 1$ and $I_{2m}[T^{-1}\psi]$, with $|m|\leq 2$, which are defined in Section \ref{sec:defNPconstants},  in terms of integrals over $F[\psi|_{\Sigma_0}]$, where
\begin{equation}
\label{eq:timeinvimportant}
\begin{split}
\mathcal{L}(T^{-1}\psi|_{\Sigma_0})=&\:(r^2+a^2)^{-\frac{1}{2}}F[\psi|_{\Sigma_0}]\\
=&\: 2[h\Delta-(r^2+a^2)] X\psi|_{\Sigma_0}+[(\Delta h)'-2r] \psi|_{\Sigma_0}+[2h(r^2+a^2)-h^2\Delta-a^2\sin^2\theta] T\psi|_{\Sigma_0}\\
&+2a(h-1) \Phi \psi|_{\Sigma_0}.
\end{split}
\end{equation}

\begin{definition}
We define the time-inverted Newman--Penrose charges $I^{(1)}_{\ell}[\psi]$, with $\ell=0,1,2$ as follows:
\begin{equation*}
I^{(1)}_{\ell}[\psi]:=I_{\ell}[T^{-1}\psi].
\end{equation*}
\end{definition}

\begin{proposition}
\label{prop:TinvNPconst}
Consider initial data $(\psi|_{\Sigma_0},T\psi|_{\Sigma_0})$ for \eqref{eq:waveeq}, with $(\phi|_{\Sigma_0},T\phi_{\Sigma_0}) \in (C^{\infty}(\widehat{\Sigma}))^2$, such that moreover $I_0[\psi]=0$. Then we can express:
\begin{equation*}
\begin{split}
I_0^{(1)}[\psi]=I_0[T^{-1}\psi]=&M\int_{r_+}^{\infty} (r^2+a^2)^{-\frac{1}{2}}\pi_0F(\uprho')\, d\uprho'-\frac{1}{2}\lim_{\uprho'\to \infty}\uprho' \pi_0 F(\uprho')-\lim_{\uprho'\to \infty}[\Delta h -2(r^2+a^2)]\phi_0(\uprho')\\
&-\frac{1}{2}a^2 \lim_{\uprho' \to \infty}\pi_0(\sin^2\theta  \phi)(\uprho').
\end{split}
\end{equation*}
In particular, if $(\phi|_{\Sigma_0},T\phi_{\Sigma_0}) \in (C_c^{\infty}({\Sigma}))^2$, then
\begin{equation*}
I_0^{(1)}[\psi]=M\int_{r_+}^{\infty} (r^2+a^2)^{-\frac{1}{2}} F_0[\psi|_{\Sigma_0}](\uprho')\, d\uprho.
\end{equation*}
Furthermore, if $(\phi|_{\Sigma_0},T\phi_{\Sigma_0}) \in (C_c^{\infty}({\Sigma}))^2$, then
\begin{align*}
I_{1m}^{(1)}[\psi]=I_{1m}[T^{-1}\psi]=&\frac{1}{2}M\int_{r_+}^{\infty}e^{-2i a m \int_r^{\infty} \Delta^{-1} dr'} \Delta X((r^2+a^2)^{-\frac{1}{2}} F_{1m}[\psi|_{\Sigma_0}])(\uprho')\,d\uprho',\\
I_{2m}^{(1)}[\psi]=I_{2m}[T^{-1}\psi]=&  \frac{1}{5}M\int_{r_+}^{\infty}e^{-2i a m \int_r^{\infty} \Delta^{-1} dr'} \Delta^2 X^2((r^2+a^2)^{-\frac{1}{2}} F_{2m}[\psi|_{\Sigma_0}])(\uprho')\,d\uprho'.
\end{align*}
\end{proposition}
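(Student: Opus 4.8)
The strategy is to track how the conserved Newman--Penrose quantities $P_\ell$ are expressed in terms of initial data once we replace $\psi$ by $T^{-1}\psi$, and then to integrate the defining transport equations \eqref{eq:maineqP0}--\eqref{eq:maineqP2} in the outgoing null direction $\underline{L}$ from the event horizon out to $\mathcal{I}^+$. First I would recall from Proposition \ref{prop:mainpropTinv} and Corollary \ref{cor:sharpdecayvanishingNP} that, under the hypothesis $I_0[\psi]=0$ (and, for the $\ell=1,2$ statements, the stronger assumption of compactly supported data so that in particular $I_1[\psi]=I_2[\psi]=0$), the function $T^{-1}\psi|_{\Sigma_0}=\mathcal{L}^{-1}((r^2+a^2)^{-1/2}F[\psi|_{\Sigma_0}])$ is smooth with the $r$-decay recorded in Proposition \ref{prop:rdecayTinv}. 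This decay is exactly what guarantees that the limits $I_\ell[T^{-1}\psi]=\lim_{\uprho\to\infty}2r^2P_\ell[T^{-1}\psi]$ are finite, so the quantities on the left-hand side of the claimed identities are well-defined.

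For $\ell=0$: by the definition \eqref{eq:defP0} of $P_0$ and the relation \eqref{eq:F0} expressing $F_0[\psi|_{\Sigma_0}]$ in terms of $\phi_0|_{\Sigma_0}$ and $P_0$, evaluated now on $T^{-1}\psi$, one gets a pointwise identity on $\Sigma_0$ of the schematic form $2r^2 P_0[T^{-1}\psi] = (r^2+a^2)^{-1/2}\pi_0 F[\psi|_{\Sigma_0}] - 2[\Delta h - 2(r^2+a^2)]X(\ldots)$ plus lower-order terms, where I use that $T(T^{-1}\psi)=\psi$, so that the combination $F[T(T^{-1}\psi)]=F[\psi]$ appears. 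The key structural point is that the equation \eqref{eq:maineqP0} shows $\underline{L}(r^2 P_0)\to 0$ at $\mathcal{I}^+$; to actually compute the limit, I would instead work on $\Sigma_0$ and use the elliptic equation $\mathcal{L}(T^{-1}\psi|_{\Sigma_0})=(r^2+a^2)^{-1/2}F[\psi|_{\Sigma_0}]$ directly. Projecting to $\ell=0$ and integrating the resulting ODE in $\uprho$ from $r_+$ to a large radius, the term $X(\Delta X(T^{-1}\psi_0))$ integrates to a boundary term, the $2aX\Phi$ term vanishes on $\ell=0$, and the factor $M$ emerges from the asymptotics $\Delta h - 2(r^2+a^2) = -2Mr + O(1)$ built into the foliation assumption \eqref{eq:condh1}; the boundary contributions at $r=r_+$ and the limit at infinity produce the stated four terms. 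When the data are compactly supported, the limits $\lim_{\uprho\to\infty}\uprho\,\pi_0 F$, $\lim[\Delta h-2(r^2+a^2)]\phi_0$ and $\lim\pi_0(\sin^2\theta\,\phi)$ all vanish, leaving only $M\int_{r_+}^\infty (r^2+a^2)^{-1/2}F_0\,d\uprho'$.

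For $\ell=1,2$: here I would use \eqref{eq:F1}--\eqref{eq:F2}, which express $\check F^{(1)}_1$ and $\check F^{(2)}_2$ in terms of $P_1$, $P_2$ and faster-decaying remainders, together with the renormalized-derivative structure of $\check\phi^{(1)},\check\phi^{(2)}$. Restricting to a fixed azimuthal mode $m$, the operator $\mathcal{L}$ acting on $\ell=\ell'$ modes becomes, after the weight rescalings of Lemma \ref{lm:hoeqtin}, essentially $X(\Delta(r^2+a^2)^{-1}X\cdot)$ plus a $2aX\Phi$ term which, on the $m$-th azimuthal mode, contributes the phase $e^{-2iam\int_r^\infty\Delta^{-1}dr'}$ — this is exactly the integrating factor that makes the operator $\Delta X(e^{-2iam\int\Delta^{-1}}\,\cdot\,)$ (for $\ell=1$) or $\Delta^2 X^2(e^{-2iam\int\Delta^{-1}}\,\cdot\,)$ (for $\ell=2$) a total derivative. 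Integrating $\ell+1$ times in $\uprho$ from $r_+$ to $\infty$ against this integrating factor, using compact support to kill all boundary terms at both ends, and reading off the normalization constants $\tfrac12$ (for $\ell=1$) and $\tfrac15$ (for $\ell=2$) from the precise coefficients in \eqref{eq:checkphi1}--\eqref{eq:checkphi2} and the $\ell(\ell+1)$ factors in the angular Laplacian, yields the claimed formulas. The relevant cancellations and the identification of the constants $\alpha,\beta,\gamma$ are precisely those already carried out in Proposition \ref{prop:maineqchph}, so I would invoke that computation rather than redo it.

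\textbf{Main obstacle.} The hard part is bookkeeping: one must verify that every remainder term generated by \eqref{eq:F0}--\eqref{eq:F2} and by the commutations in Lemma \ref{lm:hoeqtin} either is a total $X$-derivative (hence contributes only boundary terms, which vanish for compactly supported data) or decays fast enough in $\uprho$ that it integrates to zero in the limit; this uses the sharp $r$-weighted bounds of Proposition \ref{prop:rdecayTinv} in an essential way. A secondary subtlety is keeping the azimuthal phase $e^{-2iam\int_r^\infty\Delta^{-1}dr'}$ correctly attached through the change from $\varphi_*$ to Boyer--Lindquist $\varphi$ and through the repeated $\Delta X$ integrations, so that the exponent is the one stated and not, say, twice or half of it. I do not expect any genuinely new analytic difficulty beyond what is already established earlier in the paper.
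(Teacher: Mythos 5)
Your overall route is the one the paper takes: project the elliptic equation $\mathcal{L}(T^{-1}\psi|_{\Sigma_0})=(r^2+a^2)^{-1/2}F[\psi|_{\Sigma_0}]$ onto a fixed $(\ell,m)$, use the integrating factor $e^{-2iam\int_r^{\infty}\Delta^{-1}dr'}$ to write $\Delta^{\ell}X^{\ell}F_{\ell m}$ as a total $X$-derivative of $\Delta^{\ell+1}X^{\ell+1}(T^{-1}\psi)_{\ell m}$, integrate $\ell+2$ times in $\uprho$ (once from $r_+$, the rest from $\infty$ using the decay from Proposition \ref{prop:rdecayTinv}), expand $(T^{-1}\phi)_{\ell m}$ in powers of $\uprho^{-1}$, and read off $I_{\ell m}[T^{-1}\psi]$ from the definitions of $P_{\ell}$ and $\check\phi^{(\ell)}$ with $T(T^{-1}\phi)=\phi$. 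That is exactly the paper's proof, and your identification of the phase as the integrating factor coming from the $2aX\Phi$ term is correct.

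There is, however, one concrete error in your $\ell=0$ discussion: you claim the factor $M$ in $M\int_{r_+}^{\infty}(r^2+a^2)^{-1/2}F_0\,d\uprho'$ "emerges from the asymptotics $\Delta h-2(r^2+a^2)=-2Mr+O(1)$." This is false on both counts. By \eqref{eq:condh1} one has $\Delta h-2(r^2+a^2)=-\hat h(r^{-1})\to -h_0$, which is $O(1)$ and depends on the choice of foliation; it produces the separate, foliation-dependent boundary term $-\lim_{\uprho'\to\infty}[\Delta h-2(r^2+a^2)]\phi_0(\uprho')$ when you convert $r^2X$ into $2(r^2+a^2)^2\Delta^{-1}L$ plus $T$- and $\Phi$-terms at infinity. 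The universal prefactor $M$ instead comes from the subleading coefficient in $\Delta^{-1}=\uprho^{-2}+2M\uprho^{-3}+O_{\infty}(\uprho^{-4})$ (and, for $\ell=1,2$, in $\Delta^{-\ell-1}$) appearing in the repeated radial integrations, combined for $\ell\geq 1$ with the constants $\alpha=-M+iam$ and $\gamma=-2M$ from \eqref{eq:checkphi1}--\eqref{eq:checkphi2}; e.g.\ for $\ell=1$ the paper's computation gives $\tfrac{2}{3}MQ_{1m}$ from $(r^2X)^2(T^{-1}\phi)_{1m}$ and $-\tfrac{1}{6}MQ_{1m}$ from the $\alpha$-correction, summing to $\tfrac12 MQ_{1m}$. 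If you look for $M$ in the foliation function $h$ you will not find it, and the foliation-dependent pieces would fail to cancel in your bookkeeping; once this is corrected the rest of your argument goes through as in the paper.
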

\begin{proof}
We will suppress in the notation below the restriction $|_{\Sigma_0}$. We first consider the projection onto $\ell=0$. The quantity $(T^{-1}\psi)_0$ satisfies:
\begin{equation*}
X(\Delta X(T^{-1}\psi)_0)=(r^2+a^2)^{-\frac{1}{2}}F_0
\end{equation*}
and hence, by integrating from $\uprho=r_+$, we obtain
\begin{equation*}
X(T^{-1}\psi)_0(\uprho)=\Delta^{-1}\int_{r_+}^{\uprho} (r^2+a^2)^{-\frac{1}{2}}F_0(\uprho')\,d\uprho'.
\end{equation*}
Integrating the above expression again, starting from $\uprho=\infty$ and using that $\lim_{\uprho\to \infty}(T^{-1}\psi)_0(\uprho)=0$ by the regularity properties of $(T^{-1}\psi)_0$ following from Proposition \ref{prop:mainpropTinv}, we obtain
\begin{align*}
(T^{-1}\psi)_0(\uprho)=&\:-\int_{\uprho}^{\infty} \Delta^{-1}(\uprho_1)\int_{r_+}^{\uprho_1} (r^2+a^2)^{-\frac{1}{2}}F_0(\uprho_2)\,d\uprho_2 d\uprho_1.
\end{align*}
It will be convenient to denote 
\begin{align*}
Q_0=&\:\int_{r_+}^{\infty} (r^2+a^2)^{-\frac{1}{2}}F_0(\uprho')\, d\uprho',\\
G_0(\uprho)=&\:-\int_{\uprho}^{\infty} (r^2+a^2)^{-\frac{1}{2}}F_0(\uprho')\, d\uprho.
\end{align*}
It follows that
\begin{equation*}
G_0(\uprho)=-\uprho^{-1}\lim_{\uprho'\to \infty}\uprho' F_0(\uprho')+O_{\infty}(\uprho^{-2}),
\end{equation*}
with $\lim_{\uprho'\to \infty}\uprho' F_0(\uprho')$ well-defined by Proposition \ref{prop:mainpropTinv}.
Then we obtain for $\uprho\geq R$, with $R>r_+$:
\begin{align*}
(T^{-1}\phi)_0(\uprho)=&\:-(r^2+a^2)^{\frac{1}{2}}\int_{\uprho}^{\infty} [\uprho_1^{-2}+2M\uprho_1^{-3}+O_{\infty}(\uprho_1^{-4})] [Q_0-\uprho_1^{-1} \lim_{\uprho'\to \infty}\uprho' F_0(\uprho')+O_{\infty}(\uprho_1^{-2})] d\uprho_1\\
=&\: -Q_0(1+M\uprho^{-1})+\frac{1}{2}\lim_{\uprho'\to \infty}\uprho' F_0(\uprho')\uprho^{-1} +O_{\infty}(\uprho^{-2})\\
(r^2XT^{-1}\phi)_0(\uprho)=&\:M Q_0-\frac{1}{2}\lim_{\uprho'\to \infty}\uprho' F_0(\uprho')+O_{\infty}(\uprho^{-1}).
\end{align*}

In order to determine $I_0[T^{-1}\psi]$, we need the following relation between $X$, $L$ and $T$:
\begin{equation*}
X=2(r^2+a^2)\Delta^{-1} L+(h-2(r^2+a^2)\Delta^{-1})T-2a \Delta^{-1}\Phi.
\end{equation*}

We conclude that
\begin{equation*}
I_0[T^{-1}\psi]=M Q_0-\frac{1}{2}\lim_{\uprho'\to \infty}\uprho' F_0(\uprho')-\lim_{\uprho'\to \infty}(\Delta h -2(r^2+a^2))\phi_0(\uprho')-\frac{1}{2}a^2 \lim_{\uprho' \to \infty}\pi_0(\sin^2\theta  \phi)(\uprho').
\end{equation*}

By rearranging the terms in \eqref{eq:commidell}, we obtain the following expression for general $\ell$:
\begin{equation}
\label{eq:keypropeqfixedl}
X(e^{-2i a m \int_r^{\infty} \Delta^{-1} dr'}\Delta^{\ell+1} X^{\ell+1}(T^{-1}\psi)_{\ell m})=e^{-2i a m \int_r^{\infty} \Delta^{-1} dr'} \Delta^{\ell}X^{\ell}F_{\ell m}.
\end{equation}
Now consider the projection onto $\ell=1$ and $m\in \{-1,0,1\}$ and integrate \eqref{eq:keypropeqfixedl} to obtain:
\begin{equation*}
X^{2}(T^{-1}\psi)_{1 m}(\uprho)=\Delta^{-2}e^{2i a m \int_r^{\infty} \Delta^{-1} dr'}\int_{r_+}^{\uprho}e^{-2i a m \int_r^{\infty} \Delta^{-1} dr'} \Delta X ((r^2+a^2)^{-\frac{1}{2}}F_{1 m})(\uprho')\,d\uprho'.
\end{equation*}
Integrating the above expression again, starting from $\uprho=\infty$ and using that $\lim_{\uprho\to \infty}X^kT^{-1}\psi_1(\uprho)=0$ by the regularity properties of $T^{-1}\psi_1$ following from Proposition \ref{prop:mainpropTinv}, we obtain
\begin{align*}
X(T^{-1}\psi)_{1 m}(\uprho)=&-\int_{\rho}^{\infty} \Delta^{-2}e^{2i a m \int_r^{\infty} \Delta^{-1} dr'} \int_{r_+}^{\uprho_1}e^{-2i a m \int_r^{\infty} \Delta^{-1} dr'} \Delta X ((r^2+a^2)^{-\frac{1}{2}}F_{1 m})(\uprho')\,d\uprho_2 d\uprho_1,\\
(T^{-1}\psi)_{1 m}(\uprho)=&\int_{\rho}^{\infty} \int_{\rho_1}^{\infty}\Delta^{-2}e^{2i a m \int_r^{\infty} \Delta^{-1} dr'} \int_{r_+}^{\uprho_2}e^{-2i a m \int_r^{\infty} \Delta^{-1} dr'} \Delta X ((r^2+a^2)^{-\frac{1}{2}}F_{1 m})(\uprho')\,d \uprho_3 d\uprho_2 d\uprho_1
\end{align*}
By assumption of compact support of $\phi|_{\Sigma_0}$, the following expressions are well-defined
\begin{align*}
Q_{1m}=&\:\int_{r_+}^{\infty}e^{-2i a m \int_r^{\infty} \Delta^{-1} dr'} \Delta X((r^2+a^2)^{-\frac{1}{2}}F_{1 m})(\uprho')\,d\uprho'<\infty,\\
G_{1m}(\uprho)=&\:-\int_{\uprho}^{\infty} e^{-2i a m \int_r^{\infty} \Delta^{-1} dr'} \Delta X((r^2+a^2)^{-\frac{1}{2}}F_{1 m})(\uprho')\,d\uprho'
\end{align*}
and it follows that $G_{1m}$ has compact support, so in particular:
\begin{equation*}
G_{1m}(\uprho)=O_{\infty}(\uprho^{-2}).
\end{equation*}
Note moreover that
\begin{equation*}
\left(1-\frac{r_+}{r}\right)e^{-2i a m \int_r^{\infty} \Delta^{-1} dr'}=1+O_{\infty}(\uprho^{-2}).
\end{equation*}
We can therefore express for $\uprho\geq R$:
\begin{equation*}
(T^{-1}\psi)_{1 m}(\uprho)=\int_{\rho}^{\infty} \int_{\rho_1}^{\infty}[\uprho_2^{-4}+4M\uprho_2^{-5}+O_{\infty}(\uprho_2^{-6})]\left[Q_{1m}+O_{\infty}(\uprho_2^{-2})\right]\,d\uprho_2 d\uprho_1,
\end{equation*}
and hence
\begin{equation*}
\begin{split}
(T^{-1}\phi)_{1 m}(\uprho)=&(r^2+a^2)^{\frac{1}{2}}\int_{\rho}^{\infty} \int_{\rho_1}^{\infty}[Q_{1m}\uprho_2^{-4}+4MQ_{1m}\uprho_2^{-5}+O_{\infty}(\uprho_2^{-6})\,d\uprho_2 d\uprho_1\\
=&\frac{1}{6}Q_{1m} \uprho^{-1}+\frac{1}{3}MQ_{1m}\uprho^{-2}+O_{\infty}(\uprho^{-3}).
\end{split}
\end{equation*}
We then obtain:
\begin{align*}
r^2X(T^{-1}\phi)_{1m}(\uprho)=&-\frac{1}{6}Q_{1m}-\frac{2}{3}MQ_{1m} \uprho^{-1}+O_{\infty}(\uprho^{-2}),\\
r^2X(r^2X(T^{-1}\phi)_{1m})(\uprho)=&\: \frac{2}{3}MQ_{1m}+O_{\infty}(\uprho^{-1}).
\end{align*}
By using the compactness of the support of $(\phi,T\phi)$, together with \eqref{eq:keypropeqfixedl} and $\lim_{r\to \infty}(T^{-1}\phi)_1=0$, we obtain:
\begin{equation*}
\begin{split}
I_{1m}[T^{-1}\psi]=&\: \lim_{r\to \infty} r^2X(T^{-1}\check{\phi}^{(1)})_{1m}+i m a (T^{-1}\check{\phi}^{(1)})_{1m}\\
=&\:\lim_{r\to \infty}r^2X(r^2 X (T^{-1}\phi)_{1m})-(\alpha+im \alpha_{\Phi})r^2X(T^{-1}\phi)_{1m}+i m a r^2X(T^{-1}\phi)_{1m}\\
=&\: \frac{1}{2}MQ_{1m}.
\end{split}
\end{equation*}
Finally, we consider the projection onto $\ell=2$ and $m\in \{-,2,\ldots ,2\}$ and integrate \eqref{eq:keypropeqfixedl} to obtain:
\begin{equation*}
X^{3}(T^{-1}\psi)_{2 m}(\uprho)=\Delta^{-3}e^{2i a m \int_r^{\infty} \Delta^{-1} dr'}\int_{r_+}^{\uprho}e^{-2i a m \int_r^{\infty} \Delta^{-2} dr'} \Delta^2 X^2 ((r^2+a^2)^{-\frac{1}{2}}F_{2 m})(\uprho')\,d\uprho'.
\end{equation*}
Integrating the above expression multiple times, starting from $\uprho=\infty$ and using that $\lim_{\uprho\to \infty}X^k(T^{-1}\psi)_2(\uprho)=0$ by the regularity properties of $(T^{-1}\psi)_2$ following from Proposition \ref{prop:mainpropTinv}, we obtain
\begin{equation*}
(T^{-1}\psi)_{2 m}(\uprho)=-\int_{\rho}^{\infty} \int_{\rho_1}^{\infty}\int_{\rho_2}^{\infty}\Delta^{-3}e^{2i a m \int_r^{\infty} \Delta^{-1} dr'} \int_{r_+}^{\uprho_3}e^{-2i a m \int_r^{\infty} \Delta^{-1} dr'} \Delta^2 X^2 ((r^2+a^2)^{-\frac{1}{2}}F_{2 m})(\uprho')\,d \uprho_4 \ldots  d\uprho_1.
\end{equation*}
By assumption of compact support of $\phi|_{\Sigma_0}$, the following expressions are well-defined
\begin{align*}
Q_{2m}=&\:\int_{r_+}^{\infty}e^{-2i a m \int_r^{\infty} \Delta^{-1} dr'} \Delta^2 X^2((r^2+a^2)^{-\frac{1}{2}}F_{2 m})(\uprho')\,d\uprho'<\infty,\\
G_{2m}(\uprho)=&\:-\int_{\uprho}^{\infty} e^{-2i a m \int_r^{\infty} \Delta^{-1} dr'} \Delta^2 X^2((r^2+a^2)^{-\frac{1}{2}}F_{2 m})(\uprho')\,d\uprho'
\end{align*}
and it follows that $G_{2m}$ is compactly supported, so in particular:
\begin{equation*}
G_{2m}(\uprho)=O_{\infty}(\uprho^{-2}).
\end{equation*}
We can therefore express
\begin{equation*}
\begin{split}
(T^{-1}\phi)_{2m}(\uprho)=&-(r^2+a^2)^{\frac{1}{2}}\int_{\rho}^{\infty} \int_{\rho_1}^{\infty}\int_{\rho_2}^{\infty} Q_{2m}\uprho_3^{-6}+6MQ_{2m}\uprho_3^{-7}+O_{\infty}(\uprho_3^{-8})\,d\uprho_ 3d\uprho_2 d\uprho_1\\
=&-\frac{1}{60}Q_{2m} \uprho^{-2}-\frac{1}{20}MQ_{2m}\uprho^{-3}+O_{\infty}(\uprho^{-4}).
\end{split}
\end{equation*}
We then obtain:
\begin{align*}
r^2X(T^{-1}\phi)_{2m}(\uprho)=&\frac{1}{30}Q_{2m}\uprho^{-1}+\frac{3}{20}MQ_{2m} \uprho^{-2}+O_{\infty}(\uprho^{-3}),\\
(r^2X)^2(T^{-1}\phi)_{2m}(\uprho)=&\: -\frac{1}{30}Q_{2m}-\frac{3}{10}MQ_{2m}\uprho^{-1}+O_{\infty}(\uprho^{-2}),\\
(r^2X)^3(T^{-1}\phi)_{2m}(\uprho)=&\: \frac{3}{10}M Q_{2m}+O_{\infty}(\uprho^{-1}).
\end{align*}
We use that $\lim_{r\to \infty} (T^{-1}\phi)_2=\lim_{r\to \infty} r^2X(T^{-1}\phi)_2=0$ together with compactness of the support of $(\phi,T\phi)$ to express:
\begin{equation*}
\begin{split}
I_{2m}[T^{-1}\psi]=&\: \lim_{r\to \infty} r^2X(T^{-1}\check{\phi}^{(2)})_{2m}+i m a (T^{-1}\check{\phi}^{(2)})_{2m}\\
=&\:\lim_{r\to \infty}(r^2X)^3 T^{-1}\phi_{2m}-(\gamma+\alpha+im \alpha_{\Phi})(r^2X)^2 T^{-1}\phi_{2m}+i m a (r^2X)^2(T^{-1}\phi)_{2m}\\
=&\:\frac{1}{5}MQ_{2m}.
\end{split}
\end{equation*}
\end{proof}

\begin{remark}
One can generalize the argument in the proof of Proposition \ref{prop:TinvNPconst} in order to define $I_{\ell m}[T^{-1}\psi]$ for $\ell=1,2$ when the initial data $(\psi|_{\Sigma_0},T\psi|_{\Sigma_0})$ is not compactly supported, but satisfies  $(\phi|_{\Sigma_0},T\phi_{\Sigma_0}) \in (C^{N}(\widehat{\Sigma}))^2$, for some suitably large $N$, together with the conditions: $I_{\ell}[\psi]=0$, as in the $\ell=0$ case. In this setting, the expressions for $I_{\ell m}[T^{-1}\psi]$ will be considerably more complicated than in the case of compactly supported initial data, so we do not pursue this generalization here.
\end{remark}

In the corollary below, we obtain several simplified expressions for the time-inverted Newman--Penrose charges $I_{\ell m}^{(1)}[\psi]$ with $\ell=0,1,2$.
\begin{corollary}
Let $(\phi|_{\Sigma_0},T\phi_{\Sigma_0}) \in (C_c^{\infty}({\Sigma}))^2$.
\begin{enumerate}[\rm (i)]
\item
We can express:
\begin{align*}
I_0[T^{-1}\psi]=&\:   =\frac{1}{4\pi}M(r_+^2+a^2)\int_{\Sigma_0\cap\mathcal{H}^+}\psi\,d\omega+\frac{1}{4\pi}M\int_{\Sigma_0} \mathbf{n}_0(\psi)\,d\mu_0.
\end{align*}
\item
We alternatively express $I_i[T^{-1}\psi]$, with $i=1,2,3$ as integrals along $\mathcal{I}^+$:
\begin{align*}
I_0^{(1)}[\psi]=&\:   M\int_{0}^{\infty} \phi_0|_{\mathcal{I}^+}(\tau)\,d\tau\\
 I_1^{(1)}[\psi](\theta,\varphi_*)=&\:3M \int_{0}^{\infty} r^2X\phi_1|_{\mathcal{I}^+}(\tau,\theta,\varphi_*)\,d\tau, \\
I_2^{(1)}[\psi](\theta,\varphi_*)=&\:6M\int_{0}^{\infty} (r^2X)^2\phi_2|_{\mathcal{I}^+}(\tau,\theta,\varphi_*)\,d\tau.
\end{align*}
\end{enumerate}
\end{corollary}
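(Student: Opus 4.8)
The plan is to reduce both items to the formula derived in Proposition \ref{prop:TinvNPconst}, namely $I_0^{(1)}[\psi]=M\int_{r_+}^{\infty}(r^2+a^2)^{-\frac{1}{2}}F_0[\psi|_{\Sigma_0}](\uprho')\,d\uprho'$ together with the analogous expressions for $I_{1m}^{(1)}$ and $I_{2m}^{(1)}$ as weighted integrals of $F_{\ell m}$ along $\Sigma_0$. For part (i), I would start from the expression \eqref{eq:F0} for $F_0[\psi|_{\Sigma_0}]$ in terms of $\phi_0|_{\Sigma_0}$ and $X\phi_0|_{\Sigma_0}$, multiply by $(r^2+a^2)^{-\frac{1}{2}}$, and recognize the resulting integrand over $\Sigma_0$ as a total divergence plus the normal-derivative flux of $\psi$ through $\Sigma_0$. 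Concretely, the terms linear in $X\psi$ and in $T\psi$ can be rearranged, using the relation $X=Y+hT$, the expression for $\mathbf{n}_0$ in terms of $T$ and $Y$ (equivalently $X$), and the divergence identities in Lemma \ref{lm:volumeforms}(iii), into $\nabla_\alpha(\text{something})$ whose integral over $\Sigma_0$ picks up only a boundary contribution on $\Sigma_0\cap\mathcal H^+=S^2_{v_0,r_+}$ (the contribution at $\uprho=\infty$ vanishing by compact support). The boundary term on the horizon is $\propto (r_+^2+a^2)\int_{\Sigma_0\cap\mathcal H^+}\psi\,d\omega$, and the remaining bulk term is exactly the flux $\int_{\Sigma_0}\mathbf{n}_0(\psi)\,d\mu_0$; matching the constants (which forces the factor $\frac{1}{4\pi}M$ after accounting for the volume form normalisations $d\mu_0=m_1 r\,d\omega d\uprho$ and $d\omega$ on $\s^2$) gives the stated identity.

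For part (ii), the key observation is that by Corollary \ref{cor:sharpdecayvanishingNP} one has $T^{-1}\psi(\tau,\uprho,\theta,\varphi_*)=-\int_\tau^\infty\psi(\tau',\uprho,\theta,\varphi_*)\,d\tau'$, so evaluating the Newman--Penrose charges of $T^{-1}\psi$ at $\mathcal I^+$ and using that these charges are $\tau$-independent (by \eqref{eq:consvP0}--\eqref{eq:consvP2}), I can take the limit $\uprho\to\infty$ inside the $\tau'$-integral. For $\ell=0$: $I_0[T^{-1}\psi]=\lim_{\uprho\to\infty}2r^2 P_0[T^{-1}\psi]$, and since $P_0$ involves $L\phi_0$ and lower-order terms, applying $T^{-1}$ turns $L\phi_0$ schematically into $\int_\tau^\infty$ of the corresponding quantity for $\psi$; the weighting $2r^2$ combined with the relation $2r^2 L\phi_0\to \phi_0^{(1)}$-type limits and the explicit asymptotic expansions of $(T^{-1}\phi)_0$ already computed inside the proof of Proposition \ref{prop:TinvNPconst} (which show $(r^2XT^{-1}\phi)_0\to MQ_0-\tfrac12\lim \uprho'F_0$ etc.) yields, after combining with $I_0^{(1)}[\psi]=MQ_0$ and $\phi_0|_{\mathcal I^+}(\tau)=\lim_{\uprho\to\infty}(T^{-1}\phi)_0'(\tau)$-type identities, the formula $I_0^{(1)}[\psi]=M\int_0^\infty\phi_0|_{\mathcal I^+}(\tau)\,d\tau$. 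The $\ell=1,2$ cases follow the same template, using that $r^2 X\phi_1|_{\mathcal I^+}$ and $(r^2X)^2\phi_2|_{\mathcal I^+}$ are precisely the quantities whose $\tau$-integrals reconstruct $Q_{1m}$ and $Q_{2m}$ (up to the constants $\tfrac12 M$, $\tfrac15 M$ from Proposition \ref{prop:TinvNPconst}, which rescale to $3M$ and $6M$ once one accounts for the $\tfrac16$ and $\tfrac1{30}$ normalisation factors appearing in the asymptotic expansions of $(T^{-1}\phi)_{\ell m}$).

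I expect the main obstacle to be part (i): correctly identifying the total-divergence structure of $(r^2+a^2)^{-\frac12}F_0$ and tracking all the metric-dependent normalisation constants (the volume forms $d\mu_0$ vs.\ $d\omega d\uprho$, the factor relating $\mathbf{n}_0$ to $X$ and $T$ near $\mathcal H^+$, and the $4\pi$ from integrating the $\ell=0$ spherical harmonic) so that the clean coefficient $\frac{1}{4\pi}M$ emerges. This is a bookkeeping computation rather than a conceptual one, but it is where sign errors and stray constants are most likely. Part (ii) is more routine once one commits to interchanging $\lim_{\uprho\to\infty}$ with $\int_\tau^\infty d\tau'$, which is justified by the uniform pointwise decay estimates of Section \ref{sec:poinwdecay} applied to $T^{-1}\psi$ (valid by Corollary \ref{cor:sharpdecayvanishingNP}) guaranteeing the relevant limits exist and the integrals converge.
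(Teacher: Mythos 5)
Your proposal is correct and follows essentially the same route as the paper: for (i) you rewrite $(r^2+a^2)^{-1/2}F$ as the normal flux $-\rho^2 d\tau^\sharp(\psi)$ plus a total $X$-derivative (whose horizon boundary term produces the $(r_+^2+a^2)$ factor since $\Delta(r_+)=0$) plus a $\Phi$-derivative that integrates to zero, and for (ii) you use the identities $(T^{-1}\phi)_0|_{\Sigma_0}(\infty)=-Q_0$, $r^2X(T^{-1}\phi)_{1m}|_{\Sigma_0}(\infty)=-\tfrac16 Q_{1m}$, $(r^2X)^2(T^{-1}\phi)_{2m}|_{\Sigma_0}(\infty)=-\tfrac1{30}Q_{2m}$ together with the decay of these quantities along $\mathcal I^+$ and the fundamental theorem of calculus in $\tau$, exactly as the paper does. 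The constants you track ($\tfrac12 M\cdot 6=3M$, $\tfrac15 M\cdot 30=6M$, and the $\tfrac{1}{4\pi}$ from the $\ell=0$ projection) all match.
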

\begin{proof}
In order to obtain (i), we first observe that
\begin{equation*}
\rho^2d\tau^{\sharp}=(r^2+a^2-h\Delta)X+[a^2\sin^2\theta +h^2\Delta-2h(r^2+a^2)]T+a(1-h)\Phi.
\end{equation*}
We can therefore write
\begin{equation*}
(r^2+a^2)^{-\frac{1}{2}}F[\psi|_{\Sigma_0}]=-\rho^2d\tau^{\sharp}(\psi)|_{\Sigma_0}+X\left[(h\Delta-(r^2+a^2))\psi\right]-a(1-h)\Phi(\psi).
\end{equation*}

Recall from \eqref{eq:normalder} that
\begin{equation*}
-\rho^2\sin \theta d\tau^{\sharp}(\psi)|_{\Sigma_0}=\mathbf{n}_{0}(\psi)\sqrt{\det g|_{\Sigma_{0}}}.
\end{equation*}
The expression in (i) now immediately follows.

We consider now (ii). Note that the quantities $Q_0$ and $Q_{im}$, with $i=1,2$, which are defined in the proof of Proposition \ref{prop:TinvNPconst}, can be expressed as limits of projected radiation fields of $T^{-1}\psi$:
\begin{align*}
(T^{-1}\phi)_0|_{\Sigma_0}(\infty)=&-Q_0,\\
r^2X(T^{-1}\phi)_{1m}|_{\Sigma_0}(\infty)=& -\frac{1}{6}Q_{1m},\\
(r^2X)^2(T^{-1}\phi)_{2m}|_{\Sigma_0}(\infty)=&-\frac{1}{30} Q_{2m} .
\end{align*}
Using that 
\begin{equation*}
(T^{-1}\phi)_0|_{\mathcal{I}^+},r^2X(T^{-1}\phi)_{1m}|_{\mathcal{I}^+},(r^2X)^2(T^{-1}\phi)_{2m}|_{\mathcal{I}^+}\to 0
\end{equation*}
as $\tau\to \infty$, by the decay estimates established in Propositions \ref{prop:pointwisedecayl0}--\ref{prop:pointwisedecayl3} applied to $T^{-1}\psi$, we can integrate $\phi_0|_{\mathcal{I}^+},r^2X\phi_{1m}|_{\mathcal{I}^+},(r^2X)^2\phi_{2m}|_{\mathcal{I}^+}$ to obtain (ii).
\end{proof}

\section{Late-time polynomial tails: the $\ell=0$ projection}
\label{sec:latetimeasympl0}
We derive in this section the precise leading-order behaviour in time of $\psi_{\geq \ell}$, with $\ell=0,1,2$, which will take the form of inverse polynomial tails.
 
\label{sec:asympl0nonzeronp}

We introduce the following additional initial data quantities: for $0<\beta\leq 1$ and $R_0>r_+$ arbitrarily large, we define
\begin{equation*}
D_{0,\beta,K}[\psi]:= \max_{1\leq k\leq K}\left|\left|v^{2+\beta+k}L^k\left(P_0-2I_0[\psi]v^{-2}\right)\right|\right|_{L^{\infty}(\Sigma_0\cap\{r\geq R_0\})}.
\end{equation*}
We moreover introduce the following auxiliary norm on $\Sigma_{\tau}$: for $0<\delta<1$, we define
\begin{equation*}
\begin{split}
S_{0,\delta,K}[\psi]:=&\:\sup_{\tau\geq 0 }\Bigg[(1+\tau)^{1-\delta}\sum_{k_1+k_2\leq K}(1+\tau)^{k_2}|| (rL)^{k_1}T^{k_2}\phi_0||_{L^{\infty}(\Sigma_{\tau})}\\
&+a^2(1+\tau)^{2-\delta}\sum_{\ell\in\{0,2\}}\sum_{k_1+k_2\leq K+1, k_2\leq K}(1+\tau)^{k_2}|| (rL)^{k_1}T^{k_2+1}\phi_{\ell}||_{L^{\infty}(\Sigma_{\tau})}\Bigg].
\end{split}
\end{equation*}

The lemma below establishes boundedness of $S_{0,\delta,K}[\psi]$ in terms of the weighted initial data energy norms defined in Section \ref{sec:edecay}.
\begin{lemma}
\label{lm:Sl0est}
There exists a constant $C=C(M,a,\delta,K)>0$ such that:
\begin{equation*}
S_{0,\delta,K}[\psi]\leq C \sqrt{\sum_{k+j\leq K+2}E_{0,k,\frac{\delta}{2}}[N^j\psi]+E_{2,k,\frac{\delta}{2}}[N^j\psi]}.
\end{equation*}
\end{lemma}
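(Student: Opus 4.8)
\textbf{Proof strategy for Lemma \ref{lm:Sl0est}.}

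The plan is to bound each of the two $L^\infty$-type quantities entering the definition of $S_{0,\delta,K}[\psi]$ directly by the pointwise decay estimates of Section \ref{sec:poinwdecay}, which in turn are controlled by the weighted initial-data energies $E_{\ell,k,\cdot}[\cdot]$. For the first sum, the term $(rL)^{k_1}T^{k_2}\phi_0$ is exactly the quantity appearing on the left-hand side of \eqref{eq:pointwl01b} with $J=k_1$, $K=k_2$ and a slightly smaller $\delta$ (take $\delta/2$, so that $2\cdot\frac{\delta}{2}<\delta$ gives the claimed $(1+\tau)^{-1-k_2+\delta}$ rate with room to spare). Thus for each pair $(k_1,k_2)$ with $k_1+k_2\le K$ one gets
\begin{equation*}
(1+\tau)^{1-\delta+k_2}\,\|(rL)^{k_1}T^{k_2}\phi_0\|_{L^\infty(\Sigma_\tau)}\le C\sqrt{\sum_{\ell\in\{0,2\}}\Big(E_{\ell,k_2+k_1,\frac{\delta}{2}}[\psi]+\sum_{j=0}^{k_1}E_{\ell,k_2,\frac{\delta}{2}}[N^j\psi]\Big)},
\end{equation*}
and since $k_1+k_2\le K$ and $j\le k_1\le K$, every energy on the right is among those in the claimed bound (with $k+j\le K+2$, in fact $\le K$, suffices here). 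Summing over the finitely many $(k_1,k_2)$ and absorbing constants closes this part.

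For the second sum, involving $a^2(1+\tau)^{2-\delta+k_2}\|(rL)^{k_1}T^{k_2+1}\phi_\ell\|_{L^\infty(\Sigma_\tau)}$ with $\ell\in\{0,2\}$, $k_1+k_2\le K+1$ and $k_2\le K$: for $\ell=0$ I would again invoke \eqref{eq:pointwl01b}, now with $K$-index equal to $k_2+1$ and $J=k_1$, which yields the rate $(1+\tau)^{-1-(k_2+1)+\delta}=(1+\tau)^{-2-k_2+\delta}$ — precisely matching the weight $(1+\tau)^{2-\delta+k_2}$ — at the cost of energies $E_{\ell,k_1+k_2+1,\frac{\delta}{2}}[\psi]$ and $E_{\ell,k_2+1,\frac{\delta}{2}}[N^j\psi]$ for $j\le k_1$; here $k_1+k_2+1\le K+2$ and $j+ (k_2+1)\le K+2$, so all of these appear in the asserted right-hand side. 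For $\ell=2$ one uses the corresponding $\ell=2$ pointwise estimate from Proposition \ref{prop:pointwisedecayl2}: $\phi_2=r\psi_2$ decays faster than $\phi_0$, so the bound $\|(rL)^{k_1}T^{k_2+1}\phi_2\|_{L^\infty}\lesssim (1+\tau)^{-3-k_2+2\delta}\sqrt{E_{2,\ldots}}$ is more than enough against the weight $(1+\tau)^{2-\delta+k_2}$ (one loses a power of $\tau$ relative to what is needed, i.e.\ gains decay). Collecting the finitely many terms and the finitely many energy norms, all with total index $k+j\le K+2$, gives the stated estimate.

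The only mild subtlety — and the place I would be most careful — is bookkeeping of which $E$-norms the pointwise estimates actually consume: \eqref{eq:pointwl01b} already mixes $\ell=0$ and $\ell=2$ energies (because of the angular-mode coupling in the elliptic hierarchy behind \eqref{eq:addedecayl0}), and one must check that raising the $K$-index by one (for the $T^{k_2+1}$ factor) and the $J$-index to $k_1$ never pushes the combined index $k+j$ past $K+2$; since $k_1+k_2\le K+1$ with $k_2\le K$, the worst case is $k_1+ (k_2+1)=K+2$ with $j=0$, or $k_2+1+j\le K+2$, both within budget. There is no genuine analytic obstacle here — the lemma is a straightforward consequence of Propositions \ref{prop:pointwisedecayl0} and \ref{prop:pointwisedecayl2} together with the trivial inequality $2\cdot\frac{\delta}{2}<\delta$ — so the ``hard part'' is purely the index arithmetic, which the above case analysis resolves.
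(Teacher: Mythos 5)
Your argument is correct and is exactly the paper's proof: the lemma is obtained by feeding the pointwise decay estimates of Propositions \ref{prop:pointwisedecayl0} and \ref{prop:pointwisedecayl2} (with $\delta$ replaced by $\delta/2$ and the shifted indices $K\mapsto k_2$ or $k_2+1$, $J\mapsto k_1$) into the two sums defining $S_{0,\delta,K}[\psi]$, and checking that every energy norm consumed has combined index $k+j\leq K+2$. Your index bookkeeping, including the worst cases $k_1+(k_2+1)=K+2$ and $(k_2+1)+j\leq K+2$, matches what the paper's one-line proof implicitly relies on.
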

\begin{proof}
The estimate follows immediately by applying the estimates in Proposition \ref{prop:pointwisedecayl0} and \ref{prop:pointwisedecayl2}.
\end{proof}

We will consider in this section the timelike hypersurfaces $\gamma^{\alpha}=\{v-u=v^{\alpha}\}$, with $\alpha<1$ and the following subsets of $\{r\geq R\}$, with $R>r_+$ suitably large:
\begin{equation*}
\mathcal{A}_{\gamma^{\alpha}}=\{(u,v,\theta,\varphi_*)\in \mathcal{R}\cap\{r\geq R\}\,|\, u\leq v-v^{\alpha}\}.
\end{equation*}
Let us moreover introduce the notation $(u,v_{\gamma^{\alpha}}(u),\theta,\varphi_*)$ and $(u_{\gamma^{\alpha}}(v),v,\theta,\varphi_*)$ for points on the curve $\gamma^{\alpha}$.
\begin{figure}[H]
	\begin{center}
\includegraphics[scale=0.7]{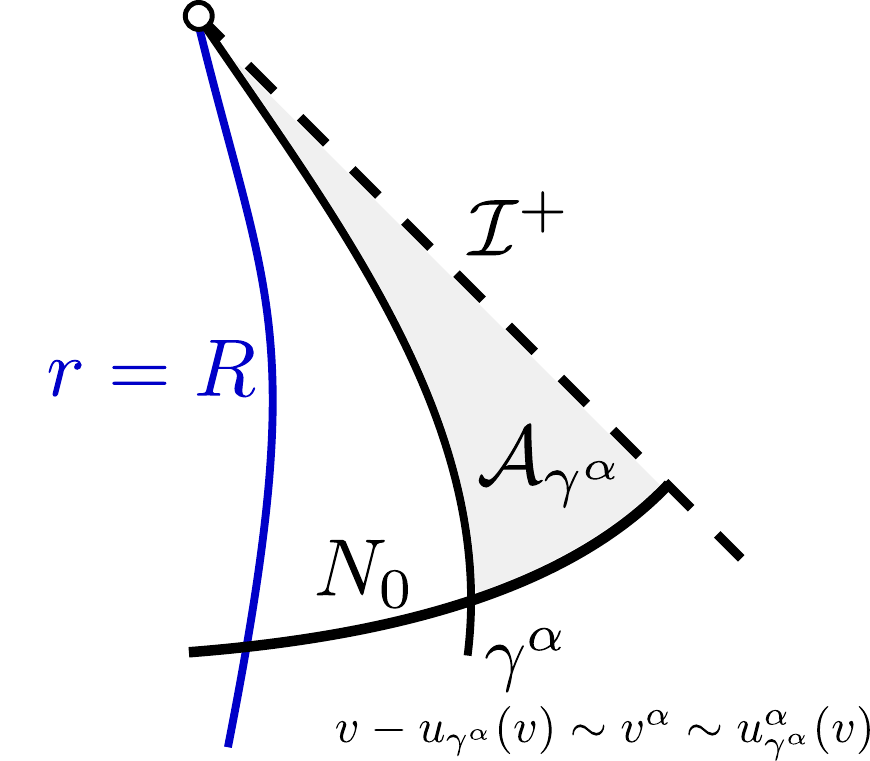}
\end{center}
\vspace{-0.2cm}
\caption{A 2-dimensional representation of the spacetime region $\mathcal{A}_{\gamma^{\alpha}}$.}
	\label{fig:contour}
\end{figure}

\subsection{Asymptotics in $\mathcal{A}_{\gamma^{\alpha}}$ }
\label{sec:asympl01}

The following relation between $r$ and the variables $u$ and $v$ will be important:
\begin{lemma}
Let $R_0>r_+$. Then there exists constants $c=c(M,a,R_0)$ and $C=C(M,a,R_0)$, such that in the region $r\geq R_0$:
\begin{equation}
\label{eq:ruv}
c M\log \left(1+\frac{|v-u|}{M}\right) \leq \left| r -\frac{v-u}{2}\right|\leq C M \log \left(1+\frac{|v-u|}{M}\right).
\end{equation}
\end{lemma}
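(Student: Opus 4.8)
The plan is to extract the relation \eqref{eq:ruv} from the definition of the tortoise coordinate $r_*$ and the identities $v = t + r_*$, $u = t - r_*$, which give $\frac{v-u}{2} = r_*(r)$ on $\mathcal{M}_{M,a}\setminus\mathcal{H}^+$. Since $\tfrac{dr_*}{dr} = \tfrac{r^2+a^2}{\Delta}$, the difference $r_*(r) - r$ is governed by
\begin{equation*}
r_*(r) - r = \int^r \left(\frac{r'^2+a^2}{\Delta(r')} - 1\right)\,dr' + \text{const} = \int^r \frac{2Mr' + a^2 - a^2}{\Delta(r')}\,dr' + \text{const},
\end{equation*}
where I have used $\frac{r^2+a^2}{\Delta} - 1 = \frac{2Mr}{\Delta} = \frac{2Mr}{(r-r_+)(r-r_-)}$. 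The key point is that this integrand behaves like $\tfrac{2M}{r'}$ as $r' \to \infty$ and is integrable and bounded near any fixed $R_0 > r_+$ (recall $r_+ < r_- $ is false; one has $r_- < r_+$, so on $[R_0,\infty)$ with $R_0 > r_+$ the denominator $\Delta$ is bounded below by a positive constant times $r'^2$ for large $r'$ and by a positive constant on compact subsets). Hence $r_*(r) - r = 2M\log r + O_M(1)$ as $r \to \infty$, with the constant and the $O_M(1)$ error controlled in terms of $M$, $a$, and the choice of additive constant in $r_*$; fixing that constant appropriately (or absorbing it) yields $\left|r_*(r) - r\right| \sim M\log(1 + r/M)$ on $r \geq R_0$.

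The remaining step is to convert this into a bound in terms of $|v - u|$ rather than $r$. Since $\tfrac{v-u}{2} = r_*(r)$ and $r_*$ is monotonically increasing with $r_*(r) = r + O(M\log r)$, we have $r_* \sim r$ for large $r$, so $r \geq R_0$ corresponds to $|v-u| = 2r_* $ being bounded below, and conversely $r$ and $\tfrac{v-u}{2}$ are comparable up to the logarithmic correction. Concretely, write $\left|r - \tfrac{v-u}{2}\right| = |r - r_*(r)| \sim M\log(1 + r/M)$, and then replace $r$ by $\tfrac{v-u}{2}$ inside the logarithm using $r = \tfrac{v-u}{2} + O(M\log(1+|v-u|/M))$: since $\log(1 + (x + O(M\log(1+x/M)))/M)$ and $\log(1 + x/M)$ differ by a bounded multiplicative factor for $x \geq 0$ (the logarithmic perturbation of the argument changes $\log$ by at most $O(1)$ once $x \gtrsim M$, and both sides are $\asymp 1$ for $x \lesssim M$), the claimed two-sided bound \eqref{eq:ruv} follows with constants depending only on $M$, $a$, and $R_0$.

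I would organize the write-up as: (1) reduce to estimating $|r - r_*(r)|$ via $\tfrac{v-u}{2} = r_*(r)$; (2) compute $\tfrac{d}{dr}(r_* - r) = \tfrac{2Mr}{\Delta}$ and integrate to get $|r_* - r| \sim M\log(1 + r/M)$ on $[R_0,\infty)$, handling the large-$r$ region (integrand $\sim 2M/r$) and the bounded region (integrand bounded, so contributes $O_{M,a,R_0}(1) \lesssim M\log(1+r/M)$ since $r \geq R_0$) separately; (3) use comparability of $r$ and $\tfrac{v-u}{2}$ to swap the argument of the logarithm. The main obstacle — really the only subtlety — is step (3): making precise that a logarithmically-small additive perturbation of the argument of $\log(1 + \cdot/M)$ only changes the value by a bounded multiplicative constant uniformly down to $r = R_0$; this is elementary but should be stated carefully, e.g. by noting $\log(1+2x) \leq 2\log(1+x)$ for $x \geq 0$ and that $r/2 \leq \tfrac{v-u}{2} \leq 2r$ say for $r \geq R_0$ with $R_0$ large enough. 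Everything else is a routine integral estimate using the explicit form of $\Delta = (r-r_+)(r-r_-)$ and the asymptotics already recorded in Section \ref{sec:foliations}.
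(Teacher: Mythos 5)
Your proposal is correct and follows exactly the paper's (very terse) argument: the paper simply recalls $v-u=2r_*$ and the asymptotics $r_*(r)=r+2M\log r+O_\infty(r^0)$ obtained by integrating $\tfrac{dr_*}{dr}=\tfrac{r^2+a^2}{\Delta}$, and declares the estimate immediate. Your write-up supplies the details the paper omits — in particular the step of swapping $r$ for $\tfrac{v-u}{2}$ inside the logarithm — but the route is the same.
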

\begin{proof}
We recall that $v-u=2r_*$ and
\begin{equation*}
\frac{dr_*}{dr}=\frac{r^2+a^2}{\Delta}.
\end{equation*}
Solutions $r_*$ have the following form:
\begin{equation*}
r_*(r)=r+2M\log r +O_{\infty}(r^0).
\end{equation*}
The estimates in \eqref{eq:ruv} then follows immediately.
\end{proof}

\begin{proposition}
\label{prop:asymplkP0}
Let $\alpha_K>\frac{3+K}{2+K}$. Then there exists $\delta(\alpha_K)>0$ suitably small, such that
\begin{equation}
\label{eq:asymplkP0}
\left|L^{K}P_0(u,v)-2I_0[\psi]L^k(v^{-2})\right|\leq CS_{0,\delta,K}[\psi] v^{-(3+K-2\delta)\alpha_K}+CD_{0,\beta,K}[\psi]v^{-2-K-\beta}.
\end{equation}
\end{proposition}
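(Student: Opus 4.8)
\textbf{Proof strategy for Proposition \ref{prop:asymplkP0}.}

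The plan is to integrate the transport equation for $P_0$ along the ingoing principal null direction $\underline{L}$, starting from the initial hypersurface $\Sigma_0$ and ending at a point $(u,v)\in\mathcal{A}_{\gamma^{\alpha_K}}$, and to control the resulting error terms using the pointwise decay rates encoded in $S_{0,\delta,K}[\psi]$ together with the initial-data bound $D_{0,\beta,K}[\psi]$. First I would recall from \eqref{eq:maineqP0} that $4\underline{L}P_0 = O_{\infty}(r^{-3})\phi_0 + a^2 O_{\infty}(r^{-3})T\pi_0(\sin^2\theta\,\phi) + a^2 O_{\infty}(r^{-2})LT\pi_0(\sin^2\theta\,\phi)$, and apply $L^K$ to this identity; since $[L,\underline{L}]=\frac{ar\Delta}{(r^2+a^2)^3}\Phi$ and $\Phi P_0=0$ (we work on $\ell=0$, hence axisymmetric quantities), the commutator terms are harmless and $L^K$ commutes with $\underline{L}$ up to lower-order terms that already carry the decay we want. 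This gives a transport equation $4\underline{L}(L^K P_0) = (\text{error})$, where the error is schematically $\sum_{k\le K} O_\infty(r^{-3+?}) (rL)^{k}$ applied to $\phi_0$ and to $a^2 T\phi_{\ell}$ with $\ell\in\{0,2\}$ (using Lemma \ref{eq:lprojsin} to rewrite $\pi_0(\sin^2\theta\,\phi)$ in terms of $\phi_0$ and $\phi_2$), together with terms involving $L$-derivatives falling on the $O_\infty(r^{-k})$ coefficients which only improve the $r$-weight.

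Next I would integrate along $\underline{L}$ from $\Sigma_0$ to the point $(u,v)$. The value on $\Sigma_0$ contributes $L^K P_0|_{\Sigma_0}$, which by definition of $D_{0,\beta,K}[\psi]$ equals $2I_0[\psi]L^K(v^{-2})|_{\Sigma_0} + O(D_{0,\beta,K}[\psi] v^{-2-K-\beta})$; since $2I_0[\psi]L^K(v^{-2})$ is itself $\underline{L}$-conserved only up to terms of size $O(r^{-3})v^{-2}$ — indeed $\underline{L}(L^K(v^{-2}))$ involves no $\underline L$-derivative of $v^{-2}$ because $\underline L(v)=0$, so $L^K(v^{-2})$ is literally constant along $\underline{L}$ — the main term propagates exactly and one is left estimating $\int_{u_{\gamma^{\alpha_K}}(v)}^{u} |\text{error}|(u',v)\,du'$ (reparametrising $\underline{L}$ in terms of $u$, noting $\underline{L}(u)=1$). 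On the domain of integration one has $r \gtrsim v-u \gtrsim v^{\alpha_K}$ by \eqref{eq:ruv} and the definition of $\mathcal{A}_{\gamma^{\alpha_K}}$, and also $r\lesssim v$, while the relevant pointwise bounds from $S_{0,\delta,K}[\psi]$ read $|(rL)^{k_1}T^{k_2}\phi_0|\lesssim S_{0,\delta,K}[\psi](1+\tau)^{-1+\delta-k_2}$ and $a^2|(rL)^{k_1}T^{k_2+1}\phi_\ell|\lesssim S_{0,\delta,K}[\psi](1+\tau)^{-2+\delta-k_2}$ with $\tau\sim v$ in this region. The error integrand is therefore bounded by $S_{0,\delta,K}[\psi]\, r^{-3}\cdot r^{\#}\cdot v^{-1+\delta}$ type terms; a careful bookkeeping of the $r$-powers (the $rL$ commutations each cost one factor of $r$, the $O_\infty(r^{-3})$ and $O_\infty(r^{-2})$ coefficients supply the compensating decay, and the remaining negative $r$-weight is converted, using $r\gtrsim v^{\alpha_K}$, into a power of $v^{-\alpha_K}$) yields a bound of the form $S_{0,\delta,K}[\psi]\int \cdots du' \lesssim S_{0,\delta,K}[\psi]\, v^{-(3+K-2\delta)\alpha_K}$ after integrating in $u'$ over a range of length $\lesssim v$ and choosing $\delta=\delta(\alpha_K)$ small enough that the exponent $(3+K-2\delta)\alpha_K$ stays strictly below whatever is needed. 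Combining the $\Sigma_0$ contribution and the error integral gives \eqref{eq:asymplkP0}.

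The main obstacle is the careful accounting of $r$-weights versus $v$-decay in the error integral: one must verify that after all $k\le K$ commutations with $rL$ (each producing a factor of $r$), the intrinsic $r^{-3}$ or $r^{-2}$ decay of the coefficients in \eqref{eq:maineqP0}, combined with the extra smallness from differentiating those coefficients, still leaves a net negative power of $r$ large enough that, upon using $r\gtrsim v^{\alpha_K}$ and integrating $du'$ over a $v$-length interval, the result beats $v^{-2-K}$ — this is exactly where the hypothesis $\alpha_K>\frac{3+K}{2+K}$ enters, ensuring $(3+K)\alpha_K > 2+K + \alpha_K \cdot 1 \ge 2+K$ with room to absorb the $2\delta$ loss. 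A secondary technical point is justifying the commutation of $L^K$ with the integration and with $\underline{L}$: since $[L,\underline L]$ is proportional to $\Phi$ and all quantities here are $\ell=0$ hence $\Phi$-annihilated, these commutators contribute nothing, but one should also track the $L$-derivatives landing on the $O_\infty(r^{-k})$ coefficients and on the weight factors $\frac{(r^2+a^2)^2}{\Delta}$ implicit in passing between $L$ and $rL$ — all of which only improve matters. I would also need to invoke Lemma \ref{lm:Sl0est} at the end if one wants the estimate phrased purely in terms of initial energy norms, but as stated the proposition is already in terms of $S_{0,\delta,K}[\psi]$ and $D_{0,\beta,K}[\psi]$, so that step is optional.
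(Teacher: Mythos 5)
Your proposal matches the paper's proof: apply $L^K$ to \eqref{eq:maineqP0}, bound the resulting right-hand side pointwise by $C\,S_{0,\delta,K}[\psi]\,u'^{-1+\delta}r^{-3-K}$, integrate along $\underline{L}$ from $\Sigma_0$ using $r\gtrsim v-u'\gtrsim v^{\alpha_K}$ in $\mathcal{A}_{\gamma^{\alpha_K}}$ (and that $L^K(v^{-2})$ is constant along $\underline{L}$), and control the $\Sigma_0$ boundary term via $D_{0,\beta,K}[\psi]$. The only slip is writing $u_{\gamma^{\alpha_K}}(v)$ as the lower limit of the error integral where it should be $u_{\Sigma_0}(v)$, consistent with your own (correct) description of starting the integration at $\Sigma_0$.
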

\begin{proof}
Let $K\in \N_0$. By acting with $L^K$ on both sides of \eqref{eq:maineqP0}, we obtain:
\begin{equation}
\label{eq:eqP0asymp}
4\underline{L}L^KP_0=\sum_{k=0}^KO_{\infty}(r^{-3-K})(rL)^k\phi_0+\sum_{k=0}^{K+1}O_{\infty}(r^{-3-K})[(rL)^kT\phi_{2,0}+(rL)^kT\phi_{0}].
\end{equation}
We can estimate
\begin{equation}
\label{eq:Lbarl0}
||\underline{L}L^KP_0||_{L^{\infty}(\Sigma_{\tau)}}\leq C (1+\tau)^{-1+\delta} r^{-3-K}S_{0,\delta,K}[\psi].
\end{equation}
Integrating from $\Sigma_0$ in the $\underline{L}$ direction, applying \eqref{eq:ruv} together with \eqref{eq:Lbarl0}, we obtain
\begin{equation*}
\begin{split}
\left|L^KP_0(u,v)-L^KP_0(u_{\Sigma_0}(v),v)\right|\leq &\:CS_{0,\delta,K}[\psi] \int_{u_{\Sigma_0}(v)}^u (1+\tau)^{-1+\delta}(u',v) r^{-3-K}(u',v)\,du'\\
\leq &\:CS_{0,\delta,K}[\psi] \int_{u_{\Sigma_0}(v)}^u u'^{-1+\delta} (v-u')^{-3-K}\,du'\\
\leq &\:CS_{0,\delta,K}[\psi] v^{-(3+K-2\delta)\alpha_K}\int_{u_{\Sigma_0}(v)}^u u'^{-1-\delta}\,du'\\
\leq &\:CS_{0,\delta,K}[\psi] v^{-(3+K-2\delta)\alpha_K}.
\end{split}
\end{equation*}
Furthermore, by definition of $D_{0,\beta,K}[\psi]$:
\begin{equation*}
\left|L^KP_0(u_{\Sigma_0}(v),v)-2I_0[\psi]L^K(v^{-2})\right|\leq C v^{-2-K-\beta}D_{0,\beta,K}[\psi].
\end{equation*}
By combining the above, we obtain \eqref{eq:Lbarl0}.
\end{proof}

\begin{corollary}
\label{cor:asympLTkphi0}
Let $\alpha_K>\frac{2+K}{3+K}$. Then there exists $\delta(\alpha_K)>0$ suitably small, such that
\begin{equation}
\begin{split}
\label{eq:asympLTkphi0}
&\left|L T^K \phi_0(u,v)-2 I_0[\psi]L^K(v^{-2})\right|\leq CS_{0,\delta,K}[\psi] \Bigg(v^{-(3+K-2\delta)\alpha_K}+u^{-K+\delta}(v-u)^{-3}+u^{-1+\delta} (v-u)^{-3-K}\\
&+u^{-2-K+\delta}(v-u)^{-2}\Bigg)+CD_{0,\beta,K}[\psi]v^{-2-K-\beta}.
\end{split}
\end{equation}
\end{corollary}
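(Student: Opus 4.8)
The plan is to recover $LT^{K}\phi_0$ from the already-established asymptotics of $L^{K}P_0$ in Proposition \ref{prop:asymplkP0} by exploiting two facts: first, the definition \eqref{eq:defP0} relates $P_0$ to $L\phi_0$ up to a term involving $T\phi$ that is lower order near $\mathcal{I}^+$; second, $T$-derivatives of $\phi_0$ decay one power faster than $\phi_0$ itself, so the difference between $LT^{K}\phi_0$ and $L^{K+1}\phi_0$-type quantities can be absorbed into the error terms controlled by $S_{0,\delta,K}[\psi]$. Concretely, recall
\begin{equation*}
\frac{(r^2+a^2)^2}{\Delta}P_0=\frac{(r^2+a^2)^2}{\Delta}L\phi_0-\frac14 a^2\pi_0(\sin^2\theta\, T\phi),
\end{equation*}
so that, near $\mathcal{I}^+$, $L\phi_0 = P_0 + O(r^{-2})a^2 T\phi$, and the term $a^2 T\phi$ is exactly of the type weighted by $S_{0,\delta,K}[\psi]$ with the extra power of $(1+\tau)$ and the extra $r^{-2}$; applying $L^{K}$ and commuting (using $[L,T]=O(r^{-2})\Phi$ from the bracket identities and the fact that $\Phi$ is Killing) converts $L^{K}(L\phi_0)$ into $LT^{K}\phi_0$ plus commutator terms, each of which carries either an extra $T$-derivative, an extra $r^{-1}$ weight, or an extra factor decaying faster in $u$.

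The key steps, in order, would be: (1) Use the wave equation in the form \eqref{eq:phieq1} for $\phi_0$ together with the bracket relations of Section \ref{sec:vf} to express $L^{K}(L\phi_0) - LT^{K}\phi_0$ as a finite sum of terms of the schematic form $O_\infty(r^{-j})(rL)^{k_1}T^{k_2}\phi_{0}$ with $j\geq 2$, $k_2\leq K$, plus $a^2$-weighted terms involving $\phi_2$ coming from the mode-coupling right-hand side $a^2 T^2(\pi_0(\sin^2\theta\phi)-\sin^2\theta\pi_0\phi)$ in \eqref{eq:waveeq1}. (2) Bound each such term pointwise on $\Sigma_\tau$ using the definition of $S_{0,\delta,K}[\psi]$: a term $O_\infty(r^{-j})(rL)^{k_1}T^{k_2}\phi_0$ with $k_2\leq K$ contributes at most $C S_{0,\delta,K}[\psi]\, u^{-1-k_2+\delta} (v-u)^{-j}$ once we pass to $(u,v)$ coordinates via \eqref{eq:ruv}, i.e.\ $r\sim (v-u)/2$ up to logarithmic corrections, and similarly the $a^2\phi_2$ terms contribute the faster-decaying $u^{-2-K+\delta}(v-u)^{-2}$ type bounds. (3) Combine these with Proposition \ref{prop:asymplkP0}, which gives $L^{K}P_0 = 2I_0[\psi]L^{K}(v^{-2}) + O(S_{0,\delta,K}[\psi] v^{-(3+K-2\delta)\alpha_K}) + O(D_{0,\beta,K}[\psi]v^{-2-K-\beta})$; since $L^{K}(v^{-2})$ differs from $L^{K}(v^{-2})$ trivially and $(r^2+a^2)^2/\Delta = \tfrac14 r^2(1+O(r^{-1}))$, the rescaling factor between $P_0$ and $L\phi_0$ is benign and produces only acceptable errors. (4) Collect the exponents: the terms $u^{-K+\delta}(v-u)^{-3}$, $u^{-1+\delta}(v-u)^{-3-K}$, $u^{-2-K+\delta}(v-u)^{-2}$ listed in \eqref{eq:asympLTkphi0} are precisely what emerge from the three classes of commutator/coupling terms above, and the condition $\alpha_K>\frac{2+K}{3+K}$ is what one needs so that $v^{-(3+K-2\delta)\alpha_K}$ dominates (or is dominated consistently by) the $v^{-2-K-\beta}$ term after choosing $\delta$ small depending on $\alpha_K$.

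I expect the main obstacle to be the careful bookkeeping in step (1)–(2): one must verify that \emph{every} commutator generated by moving $T$-derivatives past $L$-derivatives, and every term in the mode-coupling inhomogeneity, either reduces the number of $T$-derivatives without exceeding the budget $k_2\leq K$ allowed in $S_{0,\delta,K}[\psi]$, or gains enough $r$-decay to land in one of the three listed error classes; a naive commutation could in principle produce a term with $K+1$ $T$-derivatives and insufficient $r$-weight, which would force enlarging $S_{0,\delta,K}$. The resolution is that $[L,T]=\frac{ar\Delta}{(r^2+a^2)^3}\Phi$ (from the $[\underline L,L]$ identity together with $T=L+\underline L-\frac{a}{r^2+a^2}\Phi$) carries an $O(r^{-2})$ coefficient and only an azimuthal, not a time, derivative, so each commutation trades a power of $r^{-2}$ for a $\Phi$ (harmless since $\Phi$ is Killing and $S$ can absorb it or it can be handled via the azimuthal-mode decomposition), and the $T^2\phi_{0,2}$ coupling term is covered by the second, $a^2$-weighted piece of $S_{0,\delta,K}[\psi]$. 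Once that is checked, the rest is routine integration using \eqref{eq:ruv} exactly as in the proof of Proposition \ref{prop:asymplkP0}, and the corollary follows by the triangle inequality.
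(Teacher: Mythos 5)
Your proposal is correct and follows essentially the same route as the paper: the paper writes $T^KP_0=L^KP_0+\sum_{k=0}^{K-1}\underline{L}L^kT^{K-1-k}P_0$ (using $T=L+\underline{L}$ on axisymmetric quantities), bounds the $\underline{L}$-terms via the equation \eqref{eq:maineqP0} to get the $u^{-K+\delta}(v-u)^{-3}$ and $u^{-1+\delta}(v-u)^{-3-K}$ errors, and then converts $T^KP_0$ to $LT^K\phi_0$ via \eqref{eq:defP0}, with the $a^2\pi_0(\sin^2\theta T^{K+1}\phi)$ difference absorbed by the second, $a^2$-weighted piece of $S_{0,\delta,K}[\psi]$ to give the $u^{-2-K+\delta}(v-u)^{-2}$ error — exactly the three error classes you identify. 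One small inaccuracy that does not affect your argument: $[L,T]=0$ (the two contributions you cite cancel, as they must since $T=\partial_v$ and the coefficients of $L$ depend only on $r$), and in any case $\Phi$ annihilates $\phi_0$; the genuine source of the $T^{K+1}\phi_{0,2}$ terms is the $a^2\sin^2\theta T\phi$ correction in $P_0$ and the $a^2\sin^2\theta T^2\phi$ term in the wave equation, which you correctly flag as covered by $S_{0,\delta,K}[\psi]$.
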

\begin{proof}
It follows from a straightforward inductive argument that we can express
\begin{equation*}
T^KP_0=L^KP_0+\sum_{k=0}^{K-1} \underline{L} L^{k} T^{K-1-k}P_0.
\end{equation*}
Furthermore, we have that
\begin{equation*}
\begin{split}
\sum_{k=0}^{K-1}|\underline{L} L^{k} T^{K-1-k}P_0|(u,v)\leq&\: C \sum_{k=0}^{K-1}u^{-(K-k)+\delta} r^{-3-k}S_{0,\delta,K}[\psi]\\
\leq &\: S_{0,\delta,K}[\psi][u^{-K+\delta} r^{-3}+u^{-1+\delta}r^{-3-K}].
\end{split}
\end{equation*}
Now, recall that
\begin{equation*}
P_0=L\phi_0-\frac{1}{4}a^2\frac{\Delta}{(r^2+a^2)^2}\pi_0(\sin^2\theta T\phi),
\end{equation*}
so we can estimate
\begin{equation*}
\begin{split}
|LT^K\phi_0(u,v)-T^KP_0(u,v)|\leq &\:CS_{0,\delta,K}[\psi] (v-u)^{-2} u^{-2-K+\delta}.
\end{split}
\end{equation*}
The estimate \eqref{eq:asympLTkphi0} follows by combining the above and applying \eqref{eq:ruv}.
\end{proof}

\begin{corollary}
\label{cor:asympl0nearI}
For $\alpha_K'$ suitably close to 1, $\delta>0$ suitably small and $0<\beta\leq 1$ suitably large, there exists $\nu>0$ and a constant $C=C(M,a,\alpha_K',\nu,\delta)>0$ such that
\begin{equation}
\label{eq:globalasympl0}
\begin{split}
\left|T^K \psi_0(u,v)-4I_0[\psi]T^K\left(\frac{1}{u v}\right)\right|\leq&\: C\left(\sqrt{E_{0,K,\delta}[\psi]} +S_{0,\delta,K}[\psi] +D_{0,\beta,K}[\psi] +|I_0[\psi]|\right)v^{-1}u^{-1-K-\nu},
\end{split}
\end{equation}
for all $(u,v,\theta,\varphi)\in \mathcal{A}_{\gamma^{\alpha_K'}}$.

In particular,
\begin{align}
\label{eq:l0asympinfty}
\left|T^K \phi_0|_{\mathcal{I}^+}-2I_0[\psi]T^K\left(u^{-1}\right)\right|\leq&\: C\left(S_{0,\delta,K}[\psi] +D_{0,\beta,K}[\psi]+|I_0[\psi]|+\sqrt{E_{0,K,\delta}[\psi]}\right)u^{-1-K-\nu},\\
\label{eq:l0asympgamma}
\left|T^K \psi_0|_{\gamma^{\alpha_K'}}-4I_0[\psi]T^K((1+\tau)^{-2})\right|\leq&\: C\left(S_{0,\delta,K}[\psi] +D_{0,\beta,K}[\psi]+|I_0[\psi]|+\sqrt{E_{0,K,\delta}[\psi]}\right)(1+\tau)^{-1-K-\nu}.
\end{align}
\end{corollary}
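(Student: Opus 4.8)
The plan is to propagate the asymptotics for $P_0$ obtained in Proposition \ref{prop:asymplkP0} (together with the asymptotics for $LT^K\phi_0$ from Corollary \ref{cor:asympLTkphi0}) from the timelike hypersurface $\gamma^{\alpha_K'}$ into the whole region $\mathcal{A}_{\gamma^{\alpha_K'}}$ by integrating in the $X$-direction (equivalently along $\Sigma_\tau$ towards $\mathcal{I}^+$), using that the relevant $X$-derivatives of $\psi_0$ decay \emph{faster} in time than $\psi_0$ itself. Concretely, one knows from Corollary \ref{cor:asympLTkphi0} (rewriting $L$ in terms of $X$ and $T$, using $L=\frac{\Delta}{2(r^2+a^2)}X+O(r^{-2})T+O(r^{-2})\Phi$) that $X\phi_0 = \frac{2(r^2+a^2)}{\Delta} L\phi_0 + O(r^{-2})(T+\Phi)\phi_0$, so $r^2 X T^K\phi_0$ agrees with $2 r^2 T^K P_0 + (\textnormal{error})$, and by Proposition \ref{prop:asymplkP0} the former approaches $-2I_0[\psi]T^K(v^{-1})$ up to the claimed faster-decaying remainders. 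Then, since $\phi_0 = (r^2+a^2)^{1/2}\psi_0$ and $\psi_0|_{\mathcal{I}^+}$ is related to $\phi_0|_{\mathcal{I}^+}$, one integrates $X\phi_0$ from a reference radius $r=R$ (where the value is controlled by Corollary \ref{cor:asympLTkphi0} restricted to $\gamma^{\alpha_K'}$ plus the energy estimate $\sqrt{E_{0,K,\delta}[\psi]}$ via a Sobolev estimate along $\Sigma_\tau$) out towards $\mathcal{I}^+$.

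\textbf{Key steps.} First I would record the pointwise upper bounds for $T^{k_2}(rX)^{k_1}\psi_0$ from Proposition \ref{prop:pointwisedecayl0} and Corollary \ref{cor:edecaycommrX}, which show $XT^K\psi_0$ decays like $(1+\tau)^{-3-K+2\delta}$, one power faster than $\psi_0$; this is precisely what makes the $X$-integration convergent and the remainder subleading. Second, starting from the identity $\psi_0(u,v) = \psi_0(u_{\gamma}(v),v) + \int X\psi_0\, d\uprho$ along the appropriate slice, I would substitute the asymptotic profile for $r^2 X T^K\phi_0$ coming from Proposition \ref{prop:asymplkP0}, integrate the leading term $-2I_0[\psi]T^K(v^{-1}) \cdot r^{-2}$ to produce $2I_0[\psi]T^K(v^{-1})u^{-1}$ (using $\int_r^\infty r'^{-2}\,dr' \sim r^{-1}$ and $r \sim \frac{v-u}{2}$, $u$ roughly constant in this region), and collect the error terms. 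Third I would verify that all error contributions — those from $S_{0,\delta,K}[\psi]$, $D_{0,\beta,K}[\psi]$, the boundary term on $\gamma^{\alpha_K'}$, and the $\sqrt{E_{0,K,\delta}[\psi]}$ term — are bounded by $v^{-1}u^{-1-K-\nu}$ for some $\nu>0$, which requires $\alpha_K'$ close enough to $1$ and $\delta,\beta$ chosen correspondingly; the relation \eqref{eq:ruv} is used to convert between $r$-weights and $(v-u)$-weights, absorbing logarithmic losses into the $\nu$. Finally, \eqref{eq:l0asympinfty} follows by taking $r\to\infty$ (i.e. evaluating on $\mathcal{I}^+$, where $\psi_0 \to 0$ but $\phi_0$ stays finite, so one instead tracks $\phi_0|_{\mathcal{I}^+}$ and gets the coefficient $2I_0[\psi]$ rather than $4I_0[\psi]$), and \eqref{eq:l0asympgamma} follows by restricting \eqref{eq:globalasympl0} to $u=v-v^{\alpha_K'}$, on which $uv \sim (1+\tau)^2$ up to lower-order corrections absorbed into $\nu$.

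\textbf{Main obstacle.} The principal difficulty is bookkeeping the competing powers so that a genuine $\nu>0$ survives: the profile estimate in Proposition \ref{prop:asymplkP0} loses a factor $v^{-(3+K-2\delta)\alpha_K}$ which must beat $v^{-1}u^{-1-K}$ after integration in $\uprho$, and simultaneously the boundary data on $\gamma^{\alpha_K'}$ must be shown to match $4I_0[\psi]T^K((uv)^{-1})$ with an error of the right shape — this is where Corollary \ref{cor:asympLTkphi0} must be invoked at $r=R(v)\sim v^{\alpha_K'}$ rather than at fixed $r$, so the $u^{-1+\delta}(v-u)^{-3-K}$ and $u^{-2-K+\delta}(v-u)^{-2}$ terms there have to be re-expressed in $(u,v)$ and checked against the target rate. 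Choosing the free parameters in the correct order ($\alpha_K'$ first, then $\delta$ small depending on $\alpha_K'$, then $\beta$ large, then $\nu$) is the crux; once that is arranged the integration itself is elementary. A secondary point is handling the region of bounded $r$ (near $\gamma^{\alpha_K'}$ but with $r$ not large): there one simply uses the pointwise estimates of Proposition \ref{prop:pointwisedecayl0} directly, since $\psi_0$ and $4I_0[\psi]T^K((uv)^{-1})$ are both $O((1+\tau)^{-2-K+\epsilon})$ and the difference is automatically of the claimed order after shrinking $\nu$.
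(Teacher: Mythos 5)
Your overall skeleton---feed the $P_0$-asymptotics of Proposition \ref{prop:asymplkP0} (via Corollary \ref{cor:asympLTkphi0}) into an integration outward from $\gamma^{\alpha_K}$, and control the boundary term by the energy-based pointwise bounds---is indeed the paper's strategy, but two of your key steps fail as written. First, the profile you integrate is misidentified: throughout $\mathcal{A}_{\gamma^{\alpha_K}}$ one has $X\phi_0\approx \frac{2(r^2+a^2)}{\Delta}L\phi_0\approx 2P_0\approx 4I_0[\psi]\,v^{-2}$, so $r^2XT^K\phi_0\approx 4I_0[\psi]\,r^2L^K(v^{-2})$, which is not "$-2I_0[\psi]T^K(v^{-1})$" (the limit $2r^2P_0\to I_0[\psi]$ concerns $r\to\infty$ at fixed $\tau$ only). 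Consequently the leading term $2I_0[\psi](v-u)T^K\bigl(\frac{1}{uv}\bigr)$ of $T^K\phi_0$ is produced by integrating $2I_0[\psi]L^K(v'^{-2})$ in $v'$ over the \emph{long} stretch from $v_{\gamma^{\alpha_K}}(u)\approx u$ up to $v$, using $\int_{v_1}^{v_2}L^K(v'^{-2})\,dv'=L^{K-1}(v^{-2})\big|_{v_1}^{v_2}$ and $L^{K-1}(v^{-2})-L^{K-1}(v'^{-2})|_{v'=u}=(v-u)T^K((uv)^{-1})$; it does not come from the tail integral $\int_r^{\infty}r'^{-2}dr'\sim r^{-1}$, which would presuppose knowledge of $\phi_0|_{\mathcal{I}^+}$---precisely what is being proven.

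Second, and more seriously, your treatment of the boundary term is circular. You propose to show that the data on $\gamma^{\alpha_K'}$ "matches $4I_0[\psi]T^K((uv)^{-1})$" using Corollary \ref{cor:asympLTkphi0} plus the energy estimates; but that corollary controls only $LT^K\phi_0$, not the undifferentiated quantity, and the available upper bound $\|\sqrt{r}\,T^K\psi_0\|_{L^\infty}\lesssim(1+\tau)^{-\frac{3}{2}-K+2\delta}$ gives on $\gamma^{\alpha'}$ (where $r\sim\tau^{\alpha'}$) only $|T^K\psi_0|\lesssim\tau^{-\frac{3}{2}-\frac{\alpha'}{2}-K+2\delta}$, which for $\alpha'$ close to $1$ is of the \emph{same} order as the leading term $4I_0[\psi]\tau^{-2-K}$ and therefore cannot determine its coefficient. (Your closing remark that the difference is "automatically of the claimed order after shrinking $\nu$" is backwards: the target $\tau^{-2-K-\nu}$ with $\nu>0$ is strictly stronger than the upper bound $\tau^{-2-K+\epsilon}$.) The missing observation is that one must run the argument at the level of the radiation field $\phi_0$, where the boundary term is genuinely subleading: by \eqref{eq:pointwl02}, $|T^K\phi_0|(u,v_{\gamma^{\alpha_K}}(u))\lesssim r^{\frac{1}{2}}u^{-\frac{3}{2}-K+2\delta}\sim u^{\frac{\alpha_K}{2}-\frac{3}{2}-K+2\delta}=o(u^{-1-K})$ precisely because $\alpha_K<1$, so no matching is required and the entire leading term is generated by the $v'$-integral of the $P_0$-profile. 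The estimates \eqref{eq:globalasympl0} and \eqref{eq:l0asympgamma} then follow by dividing by $r$ after shrinking to $\mathcal{A}_{\gamma^{\alpha_K'}}$ with $\alpha_K'>\alpha_K$.
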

\begin{proof}
Observe first of all that for $K\geq 1$:
\begin{equation*}
\int_{v_{\gamma^{\alpha_K}(u)}}^v L^{K}(v^{-2})\,dv'=L^{K-1}(v^{-2})-L^{K-1}(v'^{-2})|_{v'=v_{\gamma^{\alpha_K}(u)}}
\end{equation*}
with
\begin{equation*}
|v_{\gamma^{\alpha_K}(u)}^{-1-K}-u^{-1-K}|\leq C u^{-2-K+\alpha_k}
\end{equation*}
and
\begin{equation*}
L^{K-1}(v^{-2})-L^{K-1}(v'^{-2})|_{v'=u}=T^{K}\left(\frac{1}{u}-\frac{1}{v}\right)=(v-u)T^K\left(\frac{1}{uv}\right).
\end{equation*}
Using the above together with \eqref{eq:asympLTkphi0} and taking $\alpha_K$ suitably close to 1, we can integrate $L T^K\phi_0$ to obtain
\begin{equation*}
\begin{split}
&\left|T^K \phi_0(u,v)-T^K \phi_0(u,v_{\gamma^{\alpha_K}(u)})-2I_0[\psi](v-u)T^K\left(\frac{1}{u v}\right)\right|\\
\leq&\: C\left(S_{0,\delta,K}[\psi] +D_{0,\beta,K}[\psi]+|I_0[\psi]|\right)u^{-1-K-\nu},
\end{split}
\end{equation*}
for some $\nu>0$.

By \eqref{eq:pointwl02}, the following holds: given $\delta'>0$ arbitrarily small, there exists a constant $C>0$ such that
\begin{equation*}
\begin{split}
|T^K\phi_0|(u,v_{\gamma^{\alpha_K}(u)})\leq&\: C r^{\frac{1}{2}}(u,v_{\gamma^{\alpha_K}(u)})u^{-\frac{3}{2}-K+2\delta'}\sqrt{E_{0,K,\delta'}[\psi]}\\
\leq&\: C\sqrt{E_{0,K}[\psi]}u^{-\frac{3}{2}+\frac{1}{2}\alpha_k-K+\delta'},
\end{split}
\end{equation*}
so that
\begin{equation*}
\begin{split}
\left|T^K \phi_0(u,v)-2I_0[\psi]T^K\left(\frac{v-u}{u v}\right)\right|\leq&\: C\left(S_{0,\delta,K}[\psi] +D_{0,\beta,K}[\psi]+|I_0[\psi]|+\sqrt{E_{0,K,\delta}[\psi]}\right)u^{-1-K-\nu},
\end{split}
\end{equation*}
for $\delta>0$ suitably small.

By dividing the above equation by $r$, we moreover obtain the following estimate in the \emph{smaller} region $\mathcal{A}_{\gamma^{\alpha_K'}}$, with $\alpha_K'>\alpha_K$ suitably close to 1, then there exists a constant $\nu>0$ such that
\begin{equation*}
\begin{split}
\left|T^K \psi_0(u,v)-4I_0[\psi]T^K\left(\frac{1}{u v}\right)\right|\leq&\: C\left(S_{0,\delta,K}[\psi] +D_{0,\beta,K}[\psi]+|I_0[\psi]|+\sqrt{E_{0,K,\delta}[\psi]}\right)v^{-1}u^{-1-K-\nu}.
\end{split}
\end{equation*}
\end{proof}

\subsection{Asymptotics in $\mathcal{R}\setminus \mathcal{A}_{\gamma^{\alpha}}$ }
\label{sec:asympl02}
In this section, we \emph{propagate} the late-time asymptotics derived in Corollary \ref{cor:asympl0nearI} to the rest of the spacetime.
\begin{proposition}
\label{prop:asympl0boundr}
Let $K\in \N_0$ and let $0<\alpha<1$ be sufficiently close to $1$. Then there exists a constant $C>0$ such that
\begin{equation}
\label{eq:asympinteriorl0}
\begin{split}
&\left|T^K \psi_0-4I_0[\psi]T^K((1+\tau)^{-2})\right|(\tau,\uprho)\\
\leq&\: C(1+\tau)^{-2-K-\nu}\left(S_{0,\delta,K}[\psi] +D_{0,\beta,K}[\psi]+|I_0[\psi]|+\sum_{\ell\in\{0,2\}} \sqrt{E_{\ell,K+2,\delta}[\psi]}+\sum_{j=0}^1\sqrt{E_{\ell,K+1,\delta}[N^j\psi]} \right)
\end{split}
\end{equation}
for all $(\tau,\uprho,\theta,\varphi_*)\in \mathcal{R}\setminus \mathcal{A}_{\gamma^{\alpha}}$.
\end{proposition}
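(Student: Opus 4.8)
The statement propagates the late-time asymptotics for $T^K\psi_0$ from the ``near-infinity'' region $\mathcal{A}_{\gamma^{\alpha}}$, where they were established in Corollary \ref{cor:asympl0nearI} (specifically the estimate \eqref{eq:l0asympgamma} along $\gamma^{\alpha}$), into the bounded-$r$ region $\mathcal{R}\setminus \mathcal{A}_{\gamma^{\alpha}}$. The natural mechanism is to integrate in the $X$-direction (i.e. along $\Sigma_{\tau}$, decreasing $\uprho$) starting from the curve $\gamma^{\alpha}$, exploiting that $XT^K\psi_0$ decays \emph{faster} in $\tau$ than $T^K\psi_0$ itself. Concretely, the plan is: first, observe that along each $\Sigma_{\tau}$ the intersection $\Sigma_{\tau}\cap \gamma^{\alpha}$ lies at a radius $\uprho = \uprho_{\gamma^{\alpha}}(\tau)$ with $r_* \sim \tfrac12(1+\tau)^{\alpha}$, hence $\uprho_{\gamma^{\alpha}}(\tau) \sim (1+\tau)^{\alpha}$ by \eqref{eq:ruv}; and that on $\gamma^{\alpha}$ the desired asymptotic already holds by \eqref{eq:l0asympgamma}. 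Second, write, for $(\tau,\uprho)\in \mathcal{R}\setminus \mathcal{A}_{\gamma^{\alpha}}$ (so $\uprho \le \uprho_{\gamma^{\alpha}}(\tau)$),
\begin{equation*}
T^K\psi_0(\tau,\uprho) = T^K\psi_0(\tau,\uprho_{\gamma^{\alpha}}(\tau)) - \int_{\uprho}^{\uprho_{\gamma^{\alpha}}(\tau)} (XT^K\psi_0)(\tau,\uprho')\,d\uprho',
\end{equation*}
and note that $T^K\big((1+\tau)^{-2}\big)$ is $\uprho$-independent, so the difference $T^K\psi_0 - 4I_0[\psi]T^K((1+\tau)^{-2})$ at $(\tau,\uprho)$ equals the same difference at $(\tau,\uprho_{\gamma^{\alpha}}(\tau))$ (controlled by \eqref{eq:l0asympgamma}) minus the $X$-integral.

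Third, estimate the $X$-integral using the improved decay rate for $XT^K\psi_0$ from Proposition \ref{prop:fasterdecayforpropag}, namely \eqref{eq:fasterdecayl0}:
\begin{equation*}
\|XT^K\psi_0\|_{L^{\infty}(\Sigma_{\tau})} \le C(1+\tau)^{-3-K+2\delta}\Big[\sum_{\ell\in\{0,2\}}\sqrt{E_{\ell,K+2,\delta}[\psi]} + \sum_{j=0}^1\sqrt{E_{\ell,K+1,\delta}[N^j\psi]}\Big].
\end{equation*}
Since the length of integration is $\uprho_{\gamma^{\alpha}}(\tau) - \uprho \le C(1+\tau)^{\alpha}$, the integral is bounded by $C(1+\tau)^{-3-K+2\delta+\alpha}$ times the energy norms. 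Choosing $\alpha$ sufficiently close to $1$ and $\delta$ sufficiently small (depending on $\alpha$, as in the hypothesis), we get $-3-K+2\delta+\alpha \le -2-K-\nu$ for some $\nu>0$, which matches the claimed rate. The boundary term on $\gamma^{\alpha}$ from \eqref{eq:l0asympgamma} decays like $(1+\tau)^{-1-K-\nu'}$, but evaluated on $\Sigma_\tau \cap \gamma^\alpha$ we in fact want to match the $(1+\tau)^{-2-K-\nu}$ rate; here one should be slightly careful and instead use \eqref{eq:globalasympl0} evaluated at $(u,v)$ with $v = v_{\gamma^\alpha}(u)$, or re-run the argument keeping the full $4I_0[\psi]T^K\big((uv)^{-1}\big)$ expression — on $\gamma^\alpha$ one has $u \sim v \sim (1+\tau)$ up to the logarithmic corrections of \eqref{eq:ruv}, so $T^K((uv)^{-1})$ agrees with $T^K((1+\tau)^{-2})$ to the required order. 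The term $|I_0[\psi]|$ in the estimate absorbs the error from replacing $T^K((uv)^{-1})$ by $T^K((1+\tau)^{-2})$ across this region.

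\textbf{Main obstacle.} The routine part is the $X$-integration estimate; the delicate point is ensuring that the boundary contribution from $\gamma^{\alpha}$ and the replacement of the asymptotic profile $T^K\big((uv)^{-1}\big)$ (valid in $\mathcal{A}_{\gamma^{\alpha}}$) by the cleaner profile $T^K\big((1+\tau)^{-2}\big)$ (the one asserted in the bounded-$r$ region) are both compatible with the claimed rate $(1+\tau)^{-2-K-\nu}$. This requires tracking the logarithmic discrepancy between $\tfrac12(v-u)$ and $r$ from \eqref{eq:ruv}, and checking that on $\Sigma_\tau$ the coordinate $u$ stays comparable to $1+\tau$ (equivalently $v-u \lesssim (1+\tau)^\alpha$ forces $u \ge (1+\tau) - C(1+\tau)^\alpha \sim (1+\tau)$). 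One also needs to verify that $T$-derivatives commute correctly with restriction to $\gamma^\alpha$ and with the $X$-integration, which is straightforward since $T = \partial_\tau$ in the $(\tau,\uprho)$ chart and the endpoint $\uprho_{\gamma^\alpha}(\tau)$ is a smooth function of $\tau$ (producing only lower-order boundary terms, already accounted for by the faster decay of $X$-derivatives). Finally, one should double-check the bookkeeping of which energy norms $E_{\ell,k,\delta}$ appear, so that the right-hand side of \eqref{eq:asympinteriorl0} is exactly as stated; this is purely mechanical given Proposition \ref{prop:fasterdecayforpropag} and Corollary \ref{cor:asympl0nearI}.
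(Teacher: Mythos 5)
Your proposal is correct and follows essentially the same route as the paper: the paper's proof is precisely to integrate $XT^K\psi_0$ in $\uprho$ from $\gamma^{\alpha}$ using the fundamental theorem of calculus, estimate the boundary term via \eqref{eq:l0asympgamma} and the integral via \eqref{eq:fasterdecayl0}. Your additional care about reconciling the profile $T^K((uv)^{-1})$ with $T^K((1+\tau)^{-2})$ on $\gamma^{\alpha}$ (via \eqref{eq:globalasympl0} and the comparability $u\sim v\sim 1+\tau$ there) is a valid and welcome filling-in of a detail the paper leaves implicit.
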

\begin{proof}
We apply the fundamental theorem of calculus, integrating $XT^K\psi_0$ between $\uprho=\uprho'$ and $\uprho=\uprho_{\gamma^{\alpha}(\tau)}$. We use \eqref{eq:l0asympgamma} to estimate the boundary term on $\uprho=\uprho_{\gamma^{\alpha}(\tau)}$ and \eqref{eq:fasterdecayl0} to estimate the contribution of the integral.
\end{proof}

\begin{remark}
One can insert the expression \eqref{eq:globalasympl0} when integrating both sides of \eqref{eq:eqP0asymp} to refine the asymptotics of $\phi_0$ along $\mathcal{I}^+$ and obtain the next-to-leading order logarithmic asymptotics; see Remark \ref{rm:log} and also \cite{logasymptotics} for an application of this argument in the case $a=0$. We do not pursue this refinement in the present paper.
\end{remark}

\subsection{Asymptotics with vanishing Newman--Penrose charges}

The estimates in Section \ref{sec:asympl01} and \ref{sec:asympl02} provide the late-time asymptotics for $\psi$ arising from initial data with $I_0[\psi]\neq 0$. Using the time-integral construction from Section \ref{sec:timeinv}, we can moreover apply these estimates to the setting when $I_0[\psi]=0$, and in particular, to the setting where the initial data of $\psi$ is smooth and compactly supported.

\begin{proposition}
\label{prop:asympl0NP0boundr}
Consider initial data $(\psi|_{\Sigma_0},T\psi|_{\Sigma_0})$ for \eqref{eq:waveeq}, with $(\phi|_{\Sigma_0},T\phi_{\Sigma_0}) \in (C^{\infty}(\widehat{\Sigma}))^2$, such that moreover $I_0[\psi]=0$.  Let $r_0>r_+$. Then there exists a $\nu>0$ and a constant $C=C(M,a,\Sigma_0,r_0,\nu)>0$, such that
\begin{equation*}
\begin{split}
&\left|T^K \psi_0(u,v)-4I_0[T^{-1}\psi]T^{K+1}\left(\frac{1}{u v}\right)\right|\\
\leq&\: Cv^{-1}u^{-2-K-\nu}\left(\sqrt{E_{0,K+1,\delta}[T^{-1}\psi]} +S_{0,\delta,K+1}[T^{-1}\psi] +D_{0,\beta,K+1}[T^{-1}\psi] +|I_0[T^{-1}\psi]|\right)
\end{split}
\end{equation*}
in $\{r\geq r_0\}$, whereas in $\{r\leq r_0\}$, we can express:
\begin{equation*}
\begin{split}
&\left|T^K \psi_0+8I_0[T^{-1}\psi]T^K((1+\tau)^{-3})\right|(\tau,\uprho)\\
\leq&\: C(1+\tau)^{-3-K-\nu}\Bigg(S_{0,\delta,K+1}[T^{-1}\psi] +D_{0,\beta,K+1}[T^{-1}\psi]+|I_0[T^{-1}\psi]|\\
&+\sum_{\ell\in\{0,2\}} \sqrt{E_{\ell,K+3,\delta}[T^{-1}\psi]}+\sum_{j=0}^1\sqrt{E_{\ell,K+2,\delta}[N^jT^{-1}\psi]} \Bigg).
\end{split}
\end{equation*}
\end{proposition}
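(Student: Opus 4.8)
The plan is to derive the stated asymptotics for $T^K\psi_0$ by applying the already-established asymptotics (Corollary \ref{cor:asympl0nearI} and Proposition \ref{prop:asympl0boundr}) not to $\psi$ itself, but to its time integral $\widetilde\psi := T^{-1}\psi$. The starting point is Proposition \ref{prop:mainpropTinv}: since $I_0[\psi]=0$ (note that the hypothesis of the present statement is precisely the vanishing condition used there for the $\ell=0$ mode), $\widetilde\psi$ is a well-defined smooth solution to \eqref{eq:waveeq} with $T\widetilde\psi = \psi$, and by part (ii) of that proposition all the weighted initial-data energy norms $E_{\ell,k,\delta}[T^{-1}\psi]$, $E_{\ell,k,\delta}[N^jT^{-1}\psi]$ appearing in Section \ref{sec:poinwdecay} are finite. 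Consequently the pointwise estimates of Propositions \ref{prop:pointwisedecayl0}--\ref{prop:pointwisedecayl3} apply with $\psi$ replaced by $\widetilde\psi$, and hence by Lemma \ref{lm:Sl0est} the auxiliary quantity $S_{0,\delta,K+1}[T^{-1}\psi]$ is controlled by initial-data energies of $T^{-1}\psi$. The quantity $D_{0,\beta,K+1}[T^{-1}\psi]$ is finite because $T^{-1}\psi$ arises from data in $C^\infty(\widehat\Sigma)$ with vanishing $I_0[\psi]$, so that (by Proposition \ref{prop:TinvNPconst} and the expansions in its proof) $P_0[T^{-1}\psi] - 2I_0[T^{-1}\psi]v^{-2}$ decays with an extra power along $\Sigma_0\cap\{r\ge R_0\}$.

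The next step is to substitute $\widetilde\psi$ for $\psi$ and $I_0[T^{-1}\psi] = I_0^{(1)}[\psi]$ for $I_0[\psi]$ in the conclusions of Corollary \ref{cor:asympl0nearI} and Proposition \ref{prop:asympl0boundr}, applied at level $T^{K+1}$ rather than $T^K$. This yields, in the region $\{r\ge r_0\}\cap \mathcal{A}_{\gamma^{\alpha'_{K+1}}}$,
\begin{equation*}
\left|T^{K+1}\widetilde\psi_0(u,v) - 4I_0[T^{-1}\psi]\,T^{K+1}\!\left(\tfrac{1}{uv}\right)\right| \le C v^{-1} u^{-2-K-\nu}\Big(\sqrt{E_{0,K+1,\delta}[T^{-1}\psi]} + S_{0,\delta,K+1}[T^{-1}\psi] + D_{0,\beta,K+1}[T^{-1}\psi] + |I_0[T^{-1}\psi]|\Big),
\end{equation*}
and since $T^{K+1}\widetilde\psi_0 = T^K(T\widetilde\psi)_0 = T^K\psi_0$, this is exactly the first claimed estimate (after identifying $I_0[T^{-1}\psi] = I_0^{(1)}[\psi]$; the coefficient $-8$ in the near-horizon expansion arises from $-8 = -2\cdot 4$ after recognising that $T^{K+1}((1+\tau)^{-2})$ to leading order behaves like $-2(K+1)\cdots$, but more transparently: applying $T$ to the leading term $4I_0[T^{-1}\psi](1+\tau)^{-2}$ in Proposition \ref{prop:asympl0boundr} produces $-8I_0[T^{-1}\psi](1+\tau)^{-3}$). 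For the region $\{r\le r_0\}$ one uses Proposition \ref{prop:asympl0boundr} with $K$ replaced by $K+1$, again applied to $\widetilde\psi$: in $\mathcal{R}\setminus\mathcal{A}_{\gamma^\alpha}$ one has
\begin{equation*}
\left|T^{K+1}\widetilde\psi_0 - 4I_0[T^{-1}\psi]\,T^{K+1}((1+\tau)^{-2})\right|(\tau,\uprho) \le C(1+\tau)^{-3-K-\nu}\big(\ldots\big),
\end{equation*}
and since $T^{K+1}((1+\tau)^{-2}) = -2T^K((1+\tau)^{-3}) + \text{(lower order)}$ — indeed $\partial_\tau (1+\tau)^{-2} = -2(1+\tau)^{-3}$ exactly — we get $T^{K+1}\widetilde\psi_0 + 8 I_0[T^{-1}\psi] T^K((1+\tau)^{-3})$ controlled by the right-hand side, where the error terms $E_{\ell,K+3,\delta}[T^{-1}\psi]$ and $E_{\ell,K+2,\delta}[N^jT^{-1}\psi]$ come from the energies appearing on the right-hand side of Proposition \ref{prop:asympl0boundr} shifted by one order in $K$.

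A few bookkeeping points require care. First, one must check that the bounded-$r$ region $\{r_+\le r\le r_0\}$ is contained in $\mathcal{R}\setminus\mathcal{A}_{\gamma^\alpha}$ for $\alpha$ close to $1$ (true since $\mathcal{A}_{\gamma^\alpha}\subset\{r\ge R\}$ with $R$ large, so choosing $r_0 < R$ suffices, and the general $r_0$ is reached by combining with the $\{r\ge r_0\}$ estimate over the overlap $\{r_0 \le r \le R\}$ using the faster-decay estimate \eqref{eq:fasterdecayl0} for $XT^K\psi_0$ applied to $\widetilde\psi$). Second, the identification $I_0[T^{-1}\psi] = I_0^{(1)}[\psi]$ is by definition (Section \ref{sec:timinvNP}), and the explicit first-order time-inverted charge expression need not be invoked here — only finiteness and the decay improvement of $P_0$. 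The main obstacle, such as it is, lies in verifying that $D_{0,\beta,K+1}[T^{-1}\psi]$ is genuinely finite for a \emph{suitably large} $0<\beta\le 1$: this amounts to showing that $L^k(P_0[T^{-1}\psi] - 2I_0[T^{-1}\psi]v^{-2})$ decays like $v^{-2-\beta-k}$ along $\Sigma_0\cap\{r\ge R_0\}$, which follows by differentiating the expansion $(r^2XT^{-1}\phi)_0 = MQ_0 - \tfrac12\lim \uprho' F_0 + O_\infty(\uprho^{-1})$ from the proof of Proposition \ref{prop:TinvNPconst}, together with the relation $P_0 = L\phi_0 - \tfrac14 a^2\tfrac{\Delta}{(r^2+a^2)^2}\pi_0(\sin^2\theta T\phi)$ and the smoothness of $(r^2+a^2)^{1/2}F_0[\psi]$ on $\widehat\Sigma_0$. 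Once this is in hand the rest is a direct transcription of the earlier asymptotic analysis with $\psi\rightsquigarrow T^{-1}\psi$ and one extra $T$-derivative.
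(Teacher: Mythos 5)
Your proposal is correct and follows essentially the same route as the paper: the paper's proof simply invokes Propositions \ref{prop:mainpropTinv} and \ref{prop:TinvNPconst} to guarantee that $T^{-1}\psi$ has the regularity and finite weighted energies needed, and then applies Corollary \ref{cor:asympl0nearI} and Proposition \ref{prop:asympl0boundr} with $\psi$ replaced by $T^{-1}\psi$ (and $K$ by $K+1$), exactly as you do. Your additional bookkeeping — the exact identity $T\big(4I_0[T^{-1}\psi](1+\tau)^{-2}\big)=-8I_0[T^{-1}\psi](1+\tau)^{-3}$, the overlap of the two spatial regions, and the finiteness of $D_{0,\beta,K+1}[T^{-1}\psi]$ via the expansion in the proof of Proposition \ref{prop:TinvNPconst} — is consistent with, and more explicit than, what the paper records.
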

\begin{proof}
We apply Propositions \ref{prop:mainpropTinv} and \ref{prop:TinvNPconst} to conclude that $T^{-1}\phi$ has sufficiently high regularity to conclude that all the relevant initial data energy norms for $T^{-1}\psi$ are finite and $I_0[T^{-1}\psi]$ is well-defined. From this it follows that Corollary \ref{cor:asympl0nearI} and Proposition \ref{prop:asympl0boundr} immediately apply when $\psi$ is replaced by $T^{-1}\psi$.
\end{proof}
\section{Late-time polynomial tails: the $\ell=1$ projection}
\label{sec:latetimeasympl1}
We derive in this section the leading-order late-time asymptotics of the restriction $\psi_1$ of $\psi$.

It will be useful for the estimates in this section to define the following quantities: let $\beta>0$ and $0<\delta<1$, then
\begin{align*}
D_{1,\beta,K}[\psi]:=&\:\max_{0\leq k\leq K} ||v^{4+\beta+k}L^k\left(r^{-2}P_1-8I_1[\psi]v^{-4}\right)||_{L^{\infty}(\Sigma_0\cap\{r\geq R_0\})},\\
S_{1,\delta,K}[\psi]:=&\:\sup_{\tau\geq 0}\: (1+\tau)^{1-\delta}\sum_{j=0}^1\sum_{\substack{0\leq k_1+k_2\leq K+1\\ k_2\leq K}}(1+\tau)^{k_2}||(rL)^{k_1}T^{k_2}\check{\phi}_1^{(j)}||_{L^{\infty}(\Sigma_{\tau})}\\
&+\sup_{\tau\geq 0} (1+\tau)^{2-\delta}\sum_{\ell\in\{1,3\}}\sum_{j=0}^1\sum_{\substack{0\leq k_1+k_2\leq K+1\\ k_2\leq K}}(1+\tau)^{k_2} ||(rL)^{k_1}T^{k_2+1}\check{\phi}_{\ell}^{(j)}||_{L^{\infty}(\Sigma_{\tau})}.
\end{align*}

\begin{lemma}
There exists a constant $C=C(M,a,\delta,K)>0$ such that:
\begin{equation*}
S_{1,\delta,K}[\psi]\leq C \sqrt{\sum_{k+j\leq K+2}E_{1,k,\frac{\delta}{2}}[N^j\psi]+E_{\geq 3,k,\frac{\delta}{2}}[N^j\psi]}.
\end{equation*}
\end{lemma}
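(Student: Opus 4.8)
The statement to prove is the bound
\[
S_{1,\delta,K}[\psi]\leq C \sqrt{\sum_{k+j\leq K+2}E_{1,k,\frac{\delta}{2}}[N^j\psi]+E_{\geq 3,k,\frac{\delta}{2}}[N^j\psi]},
\]
which is the $\ell=1$ analogue of Lemma \ref{lm:Sl0est} and should be proved in exactly the same way: by unpacking the definition of $S_{1,\delta,K}[\psi]$ term by term and matching each summand against an already-established pointwise decay estimate from Section \ref{sec:poinwdecay}. The plan is as follows. First I would recall that $S_{1,\delta,K}[\psi]$ controls two families of quantities: (a) the weighted $L^\infty$ norms $(1+\tau)^{1-\delta+k_2}\|(rL)^{k_1}T^{k_2}\check\phi_1^{(j)}\|_{L^\infty(\Sigma_\tau)}$ for $j=0,1$, $k_1+k_2\leq K+1$, $k_2\leq K$; and (b) the norms $(1+\tau)^{2-\delta+k_2}\|(rL)^{k_1}T^{k_2+1}\check\phi_\ell^{(j)}\|_{L^\infty(\Sigma_\tau)}$ for $\ell\in\{1,3\}$, again $j=0,1$. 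For family (a) with $\ell=1$, the relevant estimates are \eqref{eq:pointwl1b} and \eqref{eq:pointwl1c} of Proposition \ref{prop:pointwisedecayl1}, which give $\|(rX)^JT^K\phi_1^{(m)}\|_{L^\infty(N_\tau)}\lesssim (1+\tau)^{-2+j-K+2\delta}\sqrt{E_{1,K+J,\delta}[\psi]+\sum_j E_{1,K,\delta}[N^j\psi]}$ on the far region $N_\tau$; one converts $(rX)$-derivatives to $(rL)$-derivatives by expressing $X$ in terms of $L$, $T$ and $\Phi$ with the usual $O(r^{-2})$ coefficients (as in the proof of Corollary \ref{cor:edecaycommrX}), absorbing the extra $T,\Phi$-derivative losses into the $N^j$ energies and a few more commutations. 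On the bounded-$r$ region one instead uses the $L^\infty$ bounds on $\psi_1$ and its $rX$-derivatives from \eqref{eq:pointwl13}, \eqref{eq:pointwl14}, together with $\phi_1^{(j)}=\big(2(r^2+a^2)^2\Delta^{-1}L\big)^j\phi_1$ and $\check\phi_1^{(j)}$ being a bounded perturbation of $\phi_1^{(j)}$ by powers of $r^{-1}$ and $\Phi$; on a region of bounded $r$ these are all comparable.

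For family (b) one applies the same pointwise estimates but with an extra $T$-derivative, i.e. to $T^{k_2+1}$ rather than $T^{k_2}$; since $T$ is Killing, \eqref{eq:pointwl1b}--\eqref{eq:pointwl1c} hold verbatim with $K$ replaced by $K+1$, producing the improved decay rate $(1+\tau)^{-2+j-(K+1)+2\delta}$, which matches the weight $(1+\tau)^{-2+\delta-k_2}$ demanded by $S_{1,\delta,K}$ (after $j\leq 1$ and choosing the free $\delta$ in the pointwise estimates to be $\delta/4$, say, so that $2\delta<\delta_{\text{target}}$). The $\ell=3$ contribution in family (b) is handled by the analogous higher-$\ell$ estimate \eqref{eq:pointwl3c} of Proposition \ref{prop:pointwisedecayl3}, which supplies the $E_{\geq 3,k,\delta/2}[N^j\psi]$ terms on the right-hand side of the claimed inequality; here the decay rate $(1+\tau)^{-7/2+m-K+2\delta}$ is more than enough, since $7/2>2$. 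Finally I would collect all the resulting bounds, take the supremum over $\tau\geq 0$ and over the finitely many multi-indices appearing in the definition of $S_{1,\delta,K}$, use the elementary inequality $\sqrt{A}+\sqrt{B}\leq \sqrt{2}\sqrt{A+B}$ to combine the energy norms, and absorb all implicit constants (depending only on $M,a,\delta,K$) into a single $C$.

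The only mildly delicate point — and the one I would flag as the main obstacle — is bookkeeping the number of extra $T$- and $\Phi$-derivatives (and $N^j$-commutations) that are lost when passing from the far-region estimates on $\phi_1^{(j)}$ to the global $L^\infty$ bound on $\check\phi_1^{(j)}$: one must check that $K+2$ total derivatives and $j\leq K+2$ redshift commutations suffice, so that the right-hand side genuinely involves only $E_{1,k,\delta/2}[N^j\psi]$ and $E_{\geq 3,k,\delta/2}[N^j\psi]$ with $k+j\leq K+2$, and not a larger order. This is a routine but slightly tedious tracking exercise; since the pointwise estimates of Section \ref{sec:poinwdecay} are already stated with all the requisite derivative counts, it amounts to selecting the right instances of \eqref{eq:pointwl1b}, \eqref{eq:pointwl1c}, \eqref{eq:pointwl13}, \eqref{eq:pointwl14} and \eqref{eq:pointwl3c} and verifying the index inequalities. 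No new analytic input is needed beyond what has been established; the proof is entirely a reduction to the pointwise decay propositions, exactly parallel to the proof of Lemma \ref{lm:Sl0est}.
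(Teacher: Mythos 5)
Your proposal is correct and follows exactly the route the paper takes: the paper's own proof is a one-line appeal to the pointwise decay estimates of Propositions \ref{prop:pointwisedecayl1} and \ref{prop:pointwisedecayl3}, which is precisely the reduction you carry out (with the $X$-to-$L$ conversion and the $\delta/2$ bookkeeping made explicit). No further comment is needed.
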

\begin{proof}
The estimate follows immediately by applying the estimates in Proposition \ref{prop:pointwisedecayl1} and \ref{prop:pointwisedecayl3}.
\end{proof}

\subsection{Asymptotics in $\mathcal{A}_{\gamma^{\alpha}}$ }
\label{sec:asympl11}

\begin{proposition}
Let $\alpha_K>\frac{4+K}{5+K}$. Then there exists $\delta(\alpha_K)>0$ suitably small, such that
\begin{equation}
\label{eq:asymplkP1}
\left|L^{K}(r^{-2}P_1)(u,v,\theta,\varphi_*)-8I_1[\psi](\theta,\varphi_*) L^K(v^{-4})\right|\leq CS_{1,\delta,K}[\psi] v^{-(5+K-2\delta)\alpha_K}+C(D_{1,\beta,K}[\psi]+I_1[\psi])v^{-(4+K+\beta)}.
\end{equation}
\end{proposition}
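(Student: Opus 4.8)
The plan is to mimic the structure of Proposition~\ref{prop:asymplkP0} from the $\ell=0$ case, now applied to the renormalized quantity $r^{-2}P_1$. The starting point is the evolution equation \eqref{eq:maineqP1ho}: acting with $L^K$ and multiplying by $r^{-2}$ (and commuting $L^K$ past the $r^{-2}$ weight, picking up lower-order terms), one obtains schematically
\begin{equation*}
4\underline{L}L^K(r^{-2}P_1)=[-(4+K)r^{-1}+O_\infty(r^{-2})]L^K(r^{-2}P_1)+\sum_{k=0}^{K}O_\infty(r^{-5-K})(rL)^k\check\phi_1^{(1)}+\ldots,
\end{equation*}
where the omitted terms carry at least one $T$-derivative on $\check\phi^{(1)}_{1,3}$ or on $\phi_{1,3}$, together with faster $r$-decay, exactly as in \eqref{eq:eqP0asymp}. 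The key point, analogous to \eqref{eq:Lbarl0}, is that every term on the right-hand side is controlled pointwise by $C(1+\tau)^{-1+\delta}r^{-5-K}S_{1,\delta,K}[\psi]$ plus, crucially, the term $[-(4+K)r^{-1}]L^K(r^{-2}P_1)$ which requires a Grönwall-type argument rather than a plain integration. This is the new structural feature compared to the $\ell=0$ case, where the coefficient multiplying $L^KP_0$ vanishes to leading order.

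I would handle this by integrating the equation along the $\underline{L}$ direction from $\Sigma_0$ toward a point $(u,v)$ in $\mathcal{A}_{\gamma^{\alpha_K}}$. Writing $G(u)=L^K(r^{-2}P_1)(u,v)$ at fixed $v$, the equation reads $\frac{dG}{du}=-\frac{(4+K)}{2}\frac{\Delta}{(r^2+a^2)}r^{-1}G+(\text{error})$, and one integrates the homogeneous part explicitly (the integrating factor is comparable to a positive power of $r(u,v)$ on the relevant range, using \eqref{eq:ruv} to relate $r$ to $v-u$). Since we are in $\mathcal{A}_{\gamma^{\alpha_K}}$, along the integration path $r(u',v)\sim (v-u')/2$ up to logarithmic corrections, and $u'$ ranges from $u_{\Sigma_0}(v)\sim v$ down to $u$; the weight $u'^{-1+\delta}(v-u')^{-5-K}$ integrated against the integrating factor yields $v^{-(5+K-2\delta)\alpha_K}S_{1,\delta,K}[\psi]$, precisely as in the proof of Proposition~\ref{prop:asymplkP0}. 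The contribution of the initial data term on $\Sigma_0$ is bounded by $C(D_{1,\beta,K}[\psi]+|I_1[\psi]|)v^{-(4+K+\beta)}$ directly from the definition of $D_{1,\beta,K}[\psi]$, after noting $L^K(r^{-2}P_1)|_{\Sigma_0}$ differs from $8I_1[\psi]L^K(v^{-4})$ by that amount, and that $L^K(v^{-4})$ is itself comparable to $v^{-4-K}$ so the homogeneous flow multiplies it by a bounded factor along the (short, in $r_*$-length relative to $v$) integration.

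The main obstacle I anticipate is bookkeeping the interplay between the non-trivial friction coefficient $-(4+K)r^{-1}$ and the need to extract the \emph{exact} leading coefficient $8I_1[\psi]L^K(v^{-4})$: one must check that the homogeneous evolution does not distort the coefficient $8I_1[\psi]$ at leading order, i.e. that $\lim_{\uprho\to\infty}\underline{L}(r^2 P_1)=0$ (the conservation law \eqref{eq:consvP1}) is compatible with the renormalization, and that the correction from integrating the friction term against the already-known leading term $v^{-4-K}$ is genuinely lower order — this is where the precise power in the exponent $(5+K-2\delta)\alpha_K$, and the constraint $\alpha_K>\frac{4+K}{5+K}$, get used to absorb that correction into the error. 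A secondary technical point is justifying the commutator manipulations $[L^K,r^{-2}]$ and $[L^K, O_\infty(r^{-1})]$ so that all resulting terms still fit the bound $C(1+\tau)^{-1+\delta}r^{-5-K}S_{1,\delta,K}[\psi]$; this is routine given that each commutator trades a power of $r^{-1}$ for a power of $r$ from an $rL$ derivative, and the definition of $S_{1,\delta,K}[\psi]$ already includes the needed $(rL)^{k_1}$-commuted quantities up to order $K+1$. With these pieces in place, combining the two estimates and applying \eqref{eq:ruv} once more to convert between $(u,v)$ and $(\tau,\uprho)$ weights yields \eqref{eq:asymplkP1}.
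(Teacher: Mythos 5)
Your overall skeleton (integrate $4\underline{L}L^K(r^{-2}P_1)$ in $u$ from $\Sigma_0$, bound the inhomogeneity by $Cu^{-1+\delta}r^{-5-K}S_{1,\delta,K}[\psi]$, and read off the data contribution from $D_{1,\beta,K}[\psi]$) matches the paper, but your starting equation is wrong, and the error is not cosmetic. You assert a surviving friction term $[-(4+K)r^{-1}+O_{\infty}(r^{-2})]L^K(r^{-2}P_1)$ and build a Gr\"onwall/integrating-factor argument around it. In fact the whole point of the $r^{-2}$ renormalization is that this term cancels: from \eqref{eq:maineqP1} one has $4\underline{L}P_1=[-4r^{-1}+O_{\infty}(r^{-2})]P_1+\ldots$, while $4\underline{L}(r^{-2})=\tfrac{4\Delta}{r^2+a^2}r^{-3}=[4r^{-1}+O_{\infty}(r^{-2})]\,r^{-2}$, so the two leading contributions to $4\underline{L}(r^{-2}P_1)$ cancel exactly, leaving only $O_{\infty}(r^{-4})P_1$ plus the coupling terms; commuting with $L^K$ preserves this structure (this is \eqref{eq:LbareqP1asymp}). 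The resulting proof is then a direct integration, structurally identical to Proposition \ref{prop:asymplkP0}, with the extra input \eqref{eq:estP1Lphi1} to convert $(rL)^kP_1$ into quantities controlled by $S_{1,\delta,K}[\psi]$.

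Under your (incorrect) premise the argument would actually fail at the point you yourself flag as the ``main obstacle'' and then leave unresolved. A genuine friction term $-(4+K)r^{-1}$ relative to $4\underline{L}$ gives $\partial_r G=\tfrac{4+K}{2}r^{-1}G+\ldots$, i.e.\ an integrating factor comparable to $r^{(4+K)/2}$. Along the integration path at fixed $v$, $r$ drops from $\sim v/2$ on $\Sigma_0$ to $\sim(v-u)/2$ at the endpoint, so the homogeneous flow would multiply the data term by $\bigl(\tfrac{v-u}{v}\bigr)^{(4+K)/2}$, which in $\mathcal{A}_{\gamma^{\alpha_K}}$ can be as small as $v^{-(1-\alpha_K)(4+K)/2}$ and is certainly not $1+o(1)$. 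The coefficient $8I_1[\psi]$ would therefore \emph{not} be preserved, and no choice of $\alpha_K$ close to $1$ absorbs a multiplicative distortion of the leading term into the additive error. The fix is not a more careful Gr\"onwall estimate but the observation that the friction term is absent from the correct equation.
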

\begin{proof}
Let $K\in \N_0$. By acting with $r^{-2}$ and then with  $L^K$ on both sides of \eqref{eq:maineqP1}, we obtain:
\begin{equation}
\label{eq:LbareqP1asymp}
4\underline{L}L^K(r^{-2}P_1)=\sum_{k=0}^KO_{\infty}(r^{-4-K})(rL)^kP_1+\sum_{j=0}^1\sum_{k=0}^KO_{\infty}(r^{-5-K})(rL)^k\phi^{(j)}_1+\sum_{j=0}^1\sum_{k=0}^{K+1}O_{\infty}(r^{-5-K})(rL)^kT\pi_1(\sin^2\theta \phi^{(j)}).
\end{equation}

We can further estimate for all $k\in \N_0$:
\begin{equation}
\label{eq:estP1Lphi1}
|(rL)^k P_1-r^{-1}(rL)^{k+1}\check{\phi}^{(1)}_1| (u,v,\theta,\varphi_*)\leq Cr^{-2}(u,v)\sum_{j=0}^1a^2||(rL)^kT\check{\phi}_{3}^{(j)}||_{L^{\infty}(\Sigma_{\tau})}+a ||(rL)^k\check{\phi}^{(j)}_1||_{L^{\infty}(\Sigma_{\tau})}
\end{equation}
and hence obtain:
\begin{equation}
\label{eq:Lbarl1}
|\underline{L}L^K(r^{-2}P_1)|(u,v,\theta,\varphi_*)\leq C u^{-1+\delta} r^{-5-K}S_{1,\delta,K}[\psi].
\end{equation}
Integrating from $\Sigma_0$ in the $\underline{L}$ direction, applying \eqref{eq:ruv} together with \eqref{eq:Lbarl0}, we obtain
\begin{equation*}
\begin{split}
\left|L^K(r^{-2}P_1)(u,v,\theta,\varphi_*)-L^K(r^{-2}P_1)(u_{\Sigma_0}(v),v,\theta,\varphi_*)\right|\leq &\:CS_{1,\delta,K}[\psi] \int_{u_{\Sigma_0}(v)}^u (1+\tau)^{-1+\delta}(u',v) r^{-5-K}(u',v)\,du'\\
\leq &\:CS_{1,\delta,K}[\psi] \int_{u_{\Sigma_0}(v)}^u u'^{-1+\delta} (v-u')^{-5-K}\,du'\\
\leq &\:CS_{1,\delta,K}[\psi] v^{-(5+K-2\delta)\alpha_K}\int_{u_{\Sigma_0}(v)}^u u'^{-1-\delta}\,du'\\
\leq &\:CS_{1,\delta,K}[\psi] v^{-(5+K-2\delta)\alpha_K}.
\end{split}
\end{equation*}
Furthermore, we have that
\begin{equation*}
\left|L^K(r^{-2}P_1)(u_{\Sigma_0}(v),v,\theta,\varphi_*)-8I_1[\psi]L^K(v^{-4})\right|\leq C v^{-4-K-\beta}D_{1,\beta,K}[\psi].
\end{equation*}
so we can conclude that
\begin{equation*}
\left|L^K(r^{-2}P_1)(u,v,\theta,\varphi_*)-8I_1[\psi]L^K(v^{-4})\right|\leq CS_{1,\delta,K}[\psi] v^{-(5+K-2\delta)\alpha_K}+CD_{1,\beta,K}[\psi]v^{-4-K-\beta}.
\end{equation*}
\end{proof}

\begin{corollary}
Let $\alpha_K>\frac{4+K}{5+K}$. Then there exists $\delta(\alpha_K)>0$ suitably small, such that
\begin{equation}
\label{eq:asympTKLphi1}
\begin{split}
&\left|LT^K\check{\phi}_1^{(1)}(u,v,\theta,\varphi_*)-2I_1[\psi](v-u)^2L^K(v^{-4})\right|\\
\leq&\: CS_{1,\delta,K}[\psi] [a (v-u)^{-2} u^{-1-K+\delta}+(v-u)^2v^{-(5+K-2\delta)\alpha_K}+u^{-K+\delta} (v-u)^{-3}+u^{-1+\delta}(v-u)^{-3-K}]\\
&+CD_{1,\beta,K}[\psi](v-u)^2v^{-4-K-\beta}.
\end{split}
\end{equation}
\end{corollary}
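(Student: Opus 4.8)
The plan is to derive \eqref{eq:asympTKLphi1} by relating $LT^K\check{\phi}_1^{(1)}$ to $T^K(r^{-2}P_1)$, for which we already have the asymptotic expansion \eqref{eq:asymplkP1}, up to error terms that decay faster. The first step is to convert $T$-derivatives into $L$-derivatives plus lower-order (faster-decaying) remainders: by a straightforward induction, exactly as in the proof of Corollary \ref{cor:asympLTkphi0}, one can write
\begin{equation*}
T^K(r^{-2}P_1)=L^K(r^{-2}P_1)+\sum_{k=0}^{K-1} \underline{L} L^{k} T^{K-1-k}(r^{-2}P_1),
\end{equation*}
and estimate each term $\underline{L} L^{k} T^{K-1-k}(r^{-2}P_1)$ using \eqref{eq:LbareqP1asymp} together with the definition of $S_{1,\delta,K}[\psi]$ (the extra $T$-derivatives buy the improved weight $(1+\tau)^{2-\delta}$), producing a bound of the form $S_{1,\delta,K}[\psi]\,[u^{-K+\delta}r^{-3}+u^{-1+\delta}r^{-3-K}]$, which after \eqref{eq:ruv} gives precisely the third and fourth error terms in \eqref{eq:asympTKLphi1}. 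Then \eqref{eq:asymplkP1} supplies the main term $8I_1[\psi]L^K(v^{-4})$ together with the errors $S_{1,\delta,K}[\psi]v^{-(5+K-2\delta)\alpha_K}$ and $(D_{1,\beta,K}[\psi]+I_1[\psi])v^{-(4+K+\beta)}$; multiplying by the $(v-u)^2\sim r^2$ prefactor that is implicit in passing from $r^{-2}P_1$ to $P_1$ and using \eqref{eq:ruv} to absorb the $\log$-corrections yields the second and fifth terms.

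The second step is to replace $T^K P_1$ (equivalently $r^2\cdot T^K(r^{-2}P_1)$) by $LT^K\check{\phi}_1^{(1)}$. Here the key input is the estimate \eqref{eq:estP1Lphi1}, which states that
\begin{equation*}
|(rL)^k P_1-r^{-1}(rL)^{k+1}\check{\phi}^{(1)}_1|(u,v,\theta,\varphi_*)\leq Cr^{-2}(u,v)\sum_{j=0}^1\left(a^2||(rL)^kT\check{\phi}_{3}^{(j)}||_{L^{\infty}(\Sigma_{\tau})}+a ||(rL)^k\check{\phi}^{(j)}_1||_{L^{\infty}(\Sigma_{\tau})}\right).
\end{equation*}
Applying this with $k=0$ (and after commuting with $T^K$, which is harmless since $T$ is Killing and $S_{1,\delta,K}$ already controls the relevant $T$-commuted norms), the difference between $T^KP_1$ and $LT^K\check{\phi}_1^{(1)}$ is bounded by $Cr^{-2}S_{1,\delta,K}[\psi]\,a\,u^{-1+\delta}$, i.e. $CS_{1,\delta,K}[\psi]\,a\,(v-u)^{-2}u^{-1+\delta}$ after \eqref{eq:ruv}; this is exactly the first error term in \eqref{eq:asympTKLphi1} (with one fewer power of $u$ than the $\ell=0$ case, reflecting the slower decay at $\ell=1$). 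Combining the two steps and collecting all error contributions gives \eqref{eq:asympTKLphi1}.

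I expect the main obstacle to be bookkeeping rather than anything conceptual: one must verify that every remainder term generated — the $\underline{L}$-integrated tail errors, the $P_1$-versus-$\check{\phi}_1^{(1)}$ correction, and the lower-order terms in \eqref{eq:LbareqP1asymp} involving $\phi^{(j)}_1$, $\pi_1(\sin^2\theta\phi^{(j)})$ and the $\ell=3$ mode — is genuinely dominated by one of the five listed error terms for the stated range $\alpha_K>\frac{4+K}{5+K}$ and $\delta$ sufficiently small, uniformly on $\mathcal{A}_{\gamma^{\alpha_K}}$. The delicate point is the interplay of the powers: one needs $(v-u)^2 v^{-(5+K-2\delta)\alpha_K}$ to be an acceptable error, which forces the lower bound on $\alpha_K$ (since near $\gamma^{\alpha_K}$ one has $v-u\sim v^{\alpha_K}$, so this term is $\sim v^{2\alpha_K-(5+K-2\delta)\alpha_K}=v^{-(3+K-2\delta)\alpha_K}$, and one wants this to beat $v^{-2-K}$ up to $\delta$, i.e. $(3+K)\alpha_K>2+K$ — consistent with, and slightly weaker than, the stated hypothesis). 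No new ideas beyond those in the $\ell=0$ analysis (Corollary \ref{cor:asympLTkphi0}) are needed; the $a$-weighted mode-coupling terms are all subleading because of the extra $T$-derivatives they carry.
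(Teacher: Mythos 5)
Your proposal follows the paper's proof essentially verbatim: the same decomposition $r^{-2}T^KP_1=L^K(r^{-2}P_1)+\sum_{k=0}^{K-1}\underline{L}L^k(r^{-2}T^{K-1-k}P_1)$ with the remainders bounded via \eqref{eq:LbareqP1asymp}, the same input \eqref{eq:asymplkP1} for the main term, the same replacement of $T^KP_1$ by $LT^K\check{\phi}_1^{(1)}$ via \eqref{eq:estP1Lphi1} with $k=0$ applied to $T^K\phi$, and \eqref{eq:ruv} to convert $r$ into $v-u$. The only blemishes are two exponent slips: the sum of $\underline{L}$-remainders is bounded by $S_{1,\delta,K}[\psi]\,[u^{-K+\delta}r^{-5}+u^{-1+\delta}r^{-5-K}]$ \emph{before} the $r^2$ rescaling (not $r^{-3}$), and the $P_1$-versus-$L\check{\phi}_1^{(1)}$ correction is $CaS_{1,\delta,K}[\psi]\,r^{-2}u^{-1-K+\delta}$ rather than $u^{-1+\delta}$ (the extra $u^{-K}$ coming from the $T^K$-commuted norms in $S_{1,\delta,K}$) — your surrounding remarks show you intend the correct powers, so these are typos rather than gaps.
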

\begin{proof}
By induction, it follows that we can express
\begin{equation*}
r^{-2}T^KP_1=L^K(r^{-2}P_1)+\sum_{k=0}^{K-1} \underline{L} L^{k} (r^{-2}T^{K-1-k}P_1).
\end{equation*}
Furthermore, by applying \eqref{eq:LbareqP1asymp} with $P_1$ replaced by $T^{K-1-k}P_1$, we obtain:
\begin{equation*}
\begin{split}
\sum_{k=0}^{K-1}|\underline{L} L^{k} T^{K-1-k}P_1|(u,v)\leq&\: C \sum_{k=0}^{K-1}u^{-(K-k)+\delta} r^{-5-k}S_{1,\delta,K}[\psi]\\
\leq &\: S_{1,\delta,K}[\psi][u^{-K+\delta} r^{-5}+u^{-1+\delta}r^{-5-K}].
\end{split}
\end{equation*}
Hence, it follows from \eqref{eq:asymplkP1} that
\begin{equation*}
\begin{split}
\left|T^KP_1(u,v,\theta,\varphi_*)-8I_1[\psi]r^2L^K(v^{-4})\right|\leq&\: CS_{1,\delta,K}[\psi] [r^2v^{-(5+K-2\delta)\alpha_K}+u^{-K+\delta} r^{-3}+u^{-1+\delta}r^{-3-K}]\\
&+CD_{1,\beta,K}[\psi]r^2v^{-4-K-\beta}.
\end{split}
\end{equation*}
By \eqref{eq:estP1Lphi1} with $k=0$ and $\phi$ replaced by $T^K\phi$, we obtain:
\begin{equation*}
\begin{split}
\left|LT^K\check{\phi}_1^{(1)}(u,v,\theta,\varphi_*)-T^KP_1\right|\leq&\: Ca S_{1,\delta,K}[\psi]r^{-2} u^{-1-K+\delta}.
\end{split}
\end{equation*}
By combining the above and applying \eqref{eq:ruv}, we conclude that \eqref{eq:asympTKLphi1} holds.
\end{proof}

The lemma below provides some integral identities that will be useful for integrating (multiple times) in the $L$-direction.
\begin{lemma}
\label{lm:basicintid}
Let $K\in \N_0$ and $m\in \R$, with $m>1$. Then:
\begin{equation}
\label{eq:usefulvint}
\int_{v_1}^{v_2} (v-u)^{m-2} L^K(v^{-m})\,dv=(m-1)^{-1}\left[(v_2-u)^{m-1} T^K\left(\frac{1}{u v^{m-1}}\right)|_{v=v_2}-(v_1-u)^{m-1} T^K\left(\frac{1}{u v^{m-1}}\right)|_{v=v_1}\right].
\end{equation}
\end{lemma}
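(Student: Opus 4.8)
The plan is to prove the identity \eqref{eq:usefulvint} by a direct computation, exploiting the fact that on the curve $\gamma^\alpha$ and more generally in the relevant region, the vector field $L$ acting on functions of $v$ alone coincides (up to the precise relation between $L$, $T$, $\underline{L}$) with $\partial_v$ along outgoing null directions, while $u$ is held fixed. First I would recall that $L(v) = 1$, $\underline{L}(u) = 1$, and that along a constant-$u$ line the operator $L$ reduces to $\partial_v$; hence for a function $f(v)$ one has $L^K f(v) = \partial_v^K f(v) = f^{(K)}(v)$, and in particular $L^K(v^{-m}) = (-1)^K \frac{(m)(m+1)\cdots(m+K-1)}{v^{m+K}}$. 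Likewise, $T^K$ applied to a function of $u$ and $v$ is $\partial_\tau^K$ in $(\tau, \uprho)$-coordinates, but it will be cleaner here to observe that $T = \partial_v + \partial_u$ when acting in $(u,v)$-coordinates (since $T = \partial_v$ in ingoing coordinates and one changes variables), so that $T^K$ applied to $\frac{1}{u v^{m-1}}$ can be expanded by the binomial theorem.

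The key step is to rewrite the left-hand side. Since $u$ is constant under integration in $v$, I would substitute $L^K(v^{-m}) = \partial_v^K(v^{-m})$ and integrate by parts, or more simply guess the antiderivative: one checks that
\begin{equation*}
\partial_v\left[(v-u)^{m-1} T^K\!\left(\frac{1}{u v^{m-1}}\right)\right] = (m-1)(v-u)^{m-2}L^K(v^{-m}),
\end{equation*}
which, once verified, immediately yields \eqref{eq:usefulvint} by the fundamental theorem of calculus. To verify this, I would use that $T^K$ commutes with multiplication by constants and with $\partial_v$ when $u$ is treated as an independent variable, and that $\frac{1}{u v^{m-1}} = \frac{1}{u}\cdot v^{-(m-1)}$, so $T^K\left(\frac{1}{u v^{m-1}}\right)$ is a finite sum $\sum_{j=0}^K \binom{K}{j}\partial_u^j(u^{-1})\,\partial_v^{K-j}(v^{-(m-1)})$. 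Then $\partial_v$ of the product $(v-u)^{m-1}$ times this sum produces two terms by the Leibniz rule: one from differentiating $(v-u)^{m-1}$, giving $(m-1)(v-u)^{m-2}$ times the sum, and one from differentiating the sum. The claim is that the second term, after combining with the structure of the binomial expansion, telescopes and reconstructs exactly $(m-1)(v-u)^{m-2}\partial_v^K(v^{-m})$ — because $\partial_v(v^{-(m-1)}) = -(m-1)v^{-m}$ up to reindexing, and $(v-u)^{m-1}\partial_v^{K+1}(v^{-(m-1)})$ combines with $(m-1)(v-u)^{m-2}$-weighted lower-order terms to give the derivative of $v^{-m}$.

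Rather than the guess-and-verify route, a cleaner organization would be an induction on $K$. The base case $K=0$ is the elementary integral $\int_{v_1}^{v_2}(v-u)^{m-2}v^{-m}\,dv$, which one evaluates by noting $\frac{d}{dv}\left[(v-u)^{m-1}\cdot\frac{1}{u v^{m-1}}\right] = (m-1)(v-u)^{m-2}v^{-m}$ after a short computation using $\frac{d}{dv}\left(\frac{(v-u)^{m-1}}{v^{m-1}}\right) = (m-1)\frac{(v-u)^{m-2}}{v^{m-1}} - (m-1)\frac{(v-u)^{m-1}}{v^m} = (m-1)(v-u)^{m-2}\left(\frac{1}{v^{m-1}} - \frac{v-u}{v^m}\right) = (m-1)(v-u)^{m-2}\frac{u}{v^m}$, and dividing by $u$. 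For the inductive step, one applies $L$ to both sides, using $L = \partial_v$ in the $v$-integration variable, observes that $\partial_{v_2}$ of the left side is $(v_2-u)^{m-2}L^{K}(v_2^{-m})$ with $K$ now replaced by $K$ (the $L^K$ inside passes through), and matches with the right side; the commutation $L^{K+1}(v^{-m}) = L(L^K(v^{-m}))$ and $T^{K+1}(\cdots) = T(T^K(\cdots))$ along constant-$u$ curves closes the induction.

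The main obstacle is bookkeeping: correctly tracking the interplay between $L$ (which is $\partial_v$ at fixed $u$), $T$ (which mixes $\partial_u$ and $\partial_v$), and the boundary evaluation, and making sure the combinatorial factor $(m-1)^{-1}$ and the powers $(v-u)^{m-1}$ come out exactly right rather than off by a shift in $m$ or $K$. I expect this to be purely computational with no conceptual difficulty, so the write-up should be short: set up coordinates, state that $L^K(v^{-m})=\partial_v^K(v^{-m})$ and $T^K$ acts as claimed along constant $u$, verify the $K=0$ identity explicitly, and then either induct on $K$ or directly differentiate the right-hand side to confirm it is the antiderivative of the integrand. I would present the direct antiderivative check as the primary argument since it is the most transparent, relegating the verification of $\partial_v$ of the product to a one-line Leibniz computation.
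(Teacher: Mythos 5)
Your primary argument---exhibiting $(m-1)^{-1}(v-u)^{m-1}\,T^K\!\left(u^{-1}v^{-(m-1)}\right)$ as a $\partial_v$-antiderivative of the integrand at fixed $u$ and invoking the fundamental theorem of calculus---is correct and is essentially the paper's proof read backwards: the paper writes $(v-u)^{m-2}L^K(v^{-m})=T^K\bigl((v-u)^{m-2}v^{-m}\bigr)=LT^K\bigl((m-1)^{-1}(v-u)^{m-1}u^{-1}v^{-(m-1)}\bigr)$ and integrates. The one loose spot is your unverified ``telescoping'' of the binomial expansion; it is replaced cleanly by the two observations that $T(v-u)=0$ (so $T^K$ commutes with multiplication by powers of $v-u$) and that $T$ and $L=\partial_v$ commute on functions of $(u,v)$, which reduce the general-$K$ derivative identity to your correctly computed $K=0$ case---and note that your fallback induction on $K$, as sketched (differentiating in $v_2$), merely recovers the integrand and does not actually advance $K$, so it would not close as written.
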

\begin{proof}
Note that
\begin{equation*}
\begin{split}
\int_{v_1}^{v_2} (v-u)^{m-2} L^K(v^{-m})\,dv=&\:\int_{v_1}^{v_2} T^K((v-u)^{m-2} v^{-m})\,dv\\
=&\: \int_{v_1}^{v_2} LT^K\left((m-1)^{-1}\frac{(v-u)^{m-1}}{u v^{m-1}}\right)\,dv\\
=&(m-1)^{-1}\left[(v_2-u)^{m-1} T^K\left(\frac{1}{u v^{m-1}}\right)|_{v=v_2}-(v_1-u)^{m-1} T^K\left(\frac{1}{u v^{m-1}}\right)|_{v=v_1}\right].
\end{split}
\end{equation*}
\end{proof}

\begin{proposition}
\label{prop:asympl1nearI}
For $\alpha_K''$ suitably close to 1, $\delta>0$ suitably small and $0<\beta\leq 1$ suitably large, there exists $\nu>0$ and a constant $C=C(M,a,\alpha_K'',\nu,\delta)>0$ such that
\begin{align}
\label{eq:asympcheckphi1}
\Big|T^K\check{\phi}^{(1)}_1(u,v,\theta,\varphi_*)&-\frac{2}{3}I_1[\psi](\theta,\varphi_*)(v-u)^3T^K(u^{-1}v^{-3})\Big|\\
\leq&\: C\left(\sqrt{E_{1,K,\delta}[\psi]} +S_{1,\delta,K}[\psi] +D_{1,\beta,K}[\psi] +\sum_{m=-1}^1|I_{1m}[\psi]|\right)u^{-1-K-\nu},\\
\label{eq:asymppsi1}
\Big|r^{-1}T^K\psi_1(u,v,\theta,\varphi_*)&-\frac{8}{3}I_1[\psi](\theta,\varphi_*) T^K(u^{-2}v^{-2})\Big|\\
\leq&\:  C\left(\sqrt{E_{1,K,\delta}[\psi]} +S_{1,\delta,K}[\psi] +D_{1,\beta,K}[\psi] +\sum_{m=-1}^1|I_{1m}[\psi]|\right)v^{-2}u^{-2-K-\nu},\\
\label{eq:asympLpsi1}
\Big|LT^K\psi_1(u,v,\theta,\varphi_*)&-\frac{4}{3}I_1[\psi](\theta,\varphi_*) T^K(u^{-2}v^{-2})\Big|\\
\leq&\:  C\left(\sqrt{E_{1,K,\delta}[\psi]} +S_{1,\delta,K}[\psi] +D_{1,\beta,K}[\psi] +\sum_{m=-1}^1|I_{1m}[\psi]|\right)v^{-2}u^{-2-K-\nu}
\end{align}
in $\mathcal{A}_{\gamma^{\alpha_K''}}$.

In particular, we have that
\begin{align}
\label{eq:asympcheckphiinfty1}
\left|T^K{\check{\phi}^{(1)}_1}|_{\mathcal{I}^+}-\frac{2}{3}I_1[\psi]T^K(u^{-1})\right|\leq&\: C\left(\sqrt{E_{1,K,\delta}[\psi]} +S_{1,\delta,K}[\psi] +D_{1,\beta,K}[\psi] +\sum_{m=-1}^1|I_{1m}[\psi]|\right)u^{-1-K-\nu},\\
\label{eq:asympcheckphiinfty}
\left|T^K\phi_1|_{\mathcal{I}^+}-\frac{2}{3}I_1[\psi] T^K(u^{-2})\right|\leq&\:  C\left(\sqrt{E_{1,K,\delta}[\psi]} +S_{1,\delta,K}[\psi] +D_{1,\beta,K}[\psi] +\sum_{m=-1}^1|I_{1m}[\psi]|\right)u^{-2-K-\nu}
\end{align}
and
\begin{align}
\label{eq:asymppsi1gamma}
\left|r^{-1}T^K\psi_1|_{\gamma^{\alpha_K'}}-\frac{8}{3}I_1[\psi] T^K(\tau^{-4})\right|\leq &\: C\left(\sqrt{E_{1,K,\delta}[\psi]} +S_{1,\delta,K}[\psi] +D_{1,\beta,K}[\psi] +\sum_{m=-1}^1|I_{1m}[\psi]|\right)\tau^{-4-K-\nu},\\
\label{eq:asympLpsi1gamma}
\left|XT^K\psi_1|_{\gamma^{\alpha_K'}}-\frac{8}{3}I_1[\psi] T^K(\tau^{-4})\right|\leq&\:   C\left(\sqrt{E_{1,K,\delta}[\psi]} +S_{1,\delta,K}[\psi] +D_{1,\beta,K}[\psi] +\sum_{m=-1}^1|I_{1m}[\psi]|\right) \tau^{-4-K-\nu}.
\end{align}
\end{proposition}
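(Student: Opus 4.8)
The strategy is the by-now standard ``integrate along null directions from the curve $\gamma^{\alpha_K''}$ to $\mathcal{I}^+$ and back'' argument, mirroring the $\ell=0$ case in Corollary~\ref{cor:asympl0nearI}, but now iterated through the chain $\check\phi^{(1)}_1 \to \check\phi^{(1)}_1$-limit $\to \phi_1 \to \psi_1$. First I would start from the $L$-derivative asymptotic \eqref{eq:asympTKLphi1} for $LT^K\check\phi^{(1)}_1$, which already controls the leading behaviour modulo the weight $(v-u)^2 L^K(v^{-4})$, and integrate in $v$ from $v_{\gamma^{\alpha_K''}}(u)$ to $v$, using the integral identity \eqref{eq:usefulvint} of Lemma~\ref{lm:basicintid} with $m=4$ to convert $\int (v-u)^2 L^K(v^{-4})\,dv$ into $\tfrac13\big[(v-u)^3 T^K(u^{-1}v^{-3})\big]$ up to boundary corrections of size $u^{-1-K+\alpha_K''}$. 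The boundary term $T^K\check\phi^{(1)}_1(u,v_{\gamma^{\alpha_K''}}(u))$ is controlled by the pointwise estimate \eqref{eq:pointwl1b} evaluated on $\gamma^{\alpha_K''}$ (where $r \sim (v-u) \sim u^{\alpha_K''}$), which gives a bound of the form $r^{\frac12(1+\delta)}u^{-1-K+2\delta}\sqrt{E_{1,K,\delta}[\psi]} \lesssim u^{-1-K+\frac12\alpha_K''(1+\delta)+2\delta}\sqrt{E_{1,K,\delta}[\psi]}$; choosing $\alpha_K''$ close to $1$ and $\delta$ small makes this $u^{-1-K-\nu}$ for some $\nu>0$. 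This yields \eqref{eq:asympcheckphi1}, and \eqref{eq:asympcheckphiinfty1} follows by taking $\uprho\to\infty$ (i.e. $v\to\infty$ along fixed $u$), noting $(v-u)^3 T^K(u^{-1}v^{-3}) \to T^K(u^{-1})$.

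Next I would recover $\phi_1$ from $\check\phi^{(1)}_1$. Recall $\check\phi^{(1)}_1 = [1+(\alpha+\alpha_\Phi\Phi)r^{-1}+\cdots]\,\tfrac{2(r^2+a^2)^2}{\Delta}L\phi_1$, so $\tfrac{2(r^2+a^2)^2}{\Delta}L\phi_1 = \check\phi^{(1)}_1 + O_\infty(r^{-1})(\text{lower order})$; equivalently $r^2 L\phi_1 = \tfrac12\check\phi^{(1)}_1 + O_\infty(r^{-1})\cdot(\cdots)$ up to harmless constants. Integrating $L\phi_1$ in $v$ from $\gamma^{\alpha_K''}$ to $v$ and using \eqref{eq:asympcheckphi1} together with the identity \eqref{eq:usefulvint} now with $m=3$ (to handle $\int (v-u)^{-2}\cdot(v-u)^3 T^K(u^{-1}v^{-3})\,dv = \int (v-u) T^K(u^{-1}v^{-3})\,dv$, rewriting $(v-u)T^K(u^{-1}v^{-3}) = T^K((v-u)u^{-1}v^{-3})$ and recognizing this as $L$ of $\tfrac12(v-u)^2 u^{-1} v^{-2}$ plus corrections) produces the claimed $\phi_1$ leading term $\tfrac23 I_1[\psi](v-u)^2 u^{-1}v^{-2}$; dividing by $r\sim\tfrac{v-u}{2}$ gives $r^{-1}\psi_1 \sim \tfrac43 I_1[\psi](v-u)u^{-1}v^{-2}$, and after a further rewriting (again using $(v-u)u^{-1}v^{-2}$ versus $u^{-2}v^{-2}$, where the difference is $u^{-1}v^{-1}(v-u)u^{-1}v^{-1} - u^{-2}v^{-2}(v-u)$-type terms absorbed into the error at the cost of one power of $v^{-1}$) one arrives at the $T^K(u^{-2}v^{-2})$ form in \eqref{eq:asymppsi1}. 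The boundary term on $\gamma^{\alpha_K''}$ is this time controlled by the better pointwise estimate \eqref{eq:pointwl12} for $\sqrt r\,T^K\psi_1$, which decays like $\tau^{-5/2-K}$ and is therefore subleading on $\gamma^{\alpha_K''}$. Estimate \eqref{eq:asympLpsi1} for $LT^K\psi_1$ is immediate from the $L\phi_1$ asymptotics just derived (since $L\psi_1 = r^{-1}L\phi_1 + O(r^{-2})\phi_1$, and the leading coefficient halves: $\tfrac23 \to \tfrac43$ after the factor-of-two bookkeeping). Finally \eqref{eq:asympcheckphiinfty} and the $\gamma^{\alpha_K''}$-restrictions \eqref{eq:asymppsi1gamma}, \eqref{eq:asympLpsi1gamma} follow by specializing to $v\to\infty$ and to $u = u_{\gamma^{\alpha_K''}}(v)$ respectively, using $(1+\tau)\sim v \sim u$ on $\gamma^{\alpha_K''}$ and that $XT^K\psi_1 = \tfrac{\Delta}{2(r^2+a^2)}X = L + O(r^{-2})T + O(r^{-2})\Phi$ applied to $\psi_1$ agrees with $L\psi_1$ to leading order there.

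\textbf{Main obstacle.} The routine-looking but genuinely delicate point is bookkeeping the weight rewritings: at each integration step one replaces a weight like $(v-u)^j T^K(u^{-a}v^{-b})$ by $(v-u)^{j+1}T^K(u^{-a'}v^{-b'})$ plus an error, and one must check that (i) the leading rational coefficient ($\tfrac23$, $\tfrac43$, $\tfrac83$) comes out exactly right — this is where the constants $m-1 = 3, 2, 1$ in Lemma~\ref{lm:basicintid} and the factors $2(r^2+a^2)^2/\Delta$ conspire — and (ii) every discarded error is genuinely $u^{-(\text{target})-\nu}$ uniformly in $\mathcal{A}_{\gamma^{\alpha_K''}}$, which forces $\alpha_K'' $ to be taken ever closer to $1$ as $K$ grows (the exponent constraint is exactly $\alpha_K'' > \frac{4+K}{5+K}$-type, already visible in \eqref{eq:asymplkP1} and \eqref{eq:asympTKLphi1}) and $\delta = \delta(\alpha_K'')$ correspondingly small. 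A secondary subtlety is that the error terms in the equation for $P_1$ couple to $\phi^{(j)}_3$ (the $\ell=3$ mode) and to $\phi^{(j)}_1$ itself with $a^2$-weights; these are handled by $S_{1,\delta,K}[\psi]$, whose finiteness in terms of initial-data energies is exactly the content of the lemma preceding Section~\ref{sec:asympl11} (invoking Propositions~\ref{prop:pointwisedecayl1} and \ref{prop:pointwisedecayl3}), so no new decay input is needed — one only needs to verify that the mode-coupling error decays strictly faster than the $\ell=1$ leading term, which it does by the extra $a^2$ and the extra power of $r^{-1}$ in \eqref{eq:LbareqP1asymp}. Everything else is a direct transcription of the $\ell=0$ argument of Section~\ref{sec:asympl0nonzeronp} with one extra layer of $L$-integration.
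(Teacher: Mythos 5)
Your plan follows the paper's proof essentially step for step: integrate \eqref{eq:asympTKLphi1} in the $L$-direction from $\gamma^{\alpha_K}$ using Lemma \ref{lm:basicintid} with $m=4$, control the boundary term via \eqref{eq:pointwl1b}, then pass from $\check\phi^{(1)}_1$ to $L\phi_1$ and integrate once more (boundary term via \eqref{eq:pointwl12}), shrinking the region $\mathcal{A}_{\gamma^{\alpha}}$ at each stage; this is exactly the paper's argument, and your identification of the exponent constraints and of the role of $S_{1,\delta,K}[\psi]$ in absorbing the $a^2$-weighted coupling to $\phi^{(j)}_3$ is correct.

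One ingredient is missing from your sketch, and it is precisely the point at which the argument is \emph{not} ``a direct transcription of the $\ell=0$ argument'': since $L(\varphi_*)=\tfrac{a}{r^2+a^2}\neq 0$, the azimuthal coordinate drifts along the integral curves of $L$, so the coefficient $I_1[\psi](\theta,\varphi_*)$ appearing in \eqref{eq:asympTKLphi1} is evaluated at a $v'$-dependent angle during the integration, not at the fixed endpoint angle. The paper handles this by the estimates \eqref{eq:varphidiff} and \eqref{eq:NPconstdiffangle}, which bound the total drift by $Cu^{-\alpha_K}$ and hence the resulting error in the leading coefficient by $C\sum_{m}|I_{1m}[\psi]|\,u^{-\alpha_K}$ times the leading weight — subleading, but only after it is explicitly estimated; this is also why the $\sum_m|I_{1m}[\psi]|$ term appears on the right-hand sides. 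Without this step the leading term you extract carries $I_1$ at the wrong angle. (Separately, your intermediate display ``$r^{-1}\psi_1\sim \tfrac43 I_1(v-u)u^{-1}v^{-2}$'' misstates both the quantity and the $u$-power — dividing $\phi_1\sim\tfrac23 I_1(v-u)^2u^{-2}v^{-2}$ by $r\sim(v-u)/2$ gives $\psi_1\sim\tfrac43 I_1(v-u)u^{-2}v^{-2}$, hence $r^{-1}\psi_1\sim\tfrac83 I_1 u^{-2}v^{-2}$ — but your final coefficients are correct, so this is only a bookkeeping slip in the sketch.)
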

\begin{proof}
Let $(u,v,\theta,\varphi_*)$ be a point in $\mathcal{A}_{\gamma^{\alpha_K}}$. Let us denote with $\gamma_L$ the integral curve tangent to $L$ connecting the point $(u,v,\theta,\varphi_*)$ with the curve $\gamma^{\alpha_K}$, where $\dot \gamma_L=L$. Then $(u_{\gamma_L}(v'),\theta_{\gamma_L}(v'))=(u,\theta)$ for all $v_{\gamma^{\alpha_K}(u)}\leq v'\leq v$ and $L(\varphi_*)=\frac{a}{r^2+a^2}$ with $(\varphi_*)_{\gamma_L}(v)=\varphi_*$.

We can therefore obtain for all $v_{\gamma^{\alpha_K}(u)}\leq v'\leq v$:
\begin{equation}
\label{eq:varphidiff}
|\varphi_*-(\varphi_*)_{\gamma_L}(v')|\leq  \int_{v_{\gamma^{\alpha_K}}}^v |a| (r^2+a^2)^{-1}\,dv'\leq C u^{-\alpha_K}.
\end{equation}

Hence, for all $v_{\gamma^{\alpha_K}(u)}\leq v'\leq v$:
\begin{equation}
\label{eq:NPconstdiffangle}
|I_1(\theta,(\varphi_*)_{\gamma_L}(v'))-I_1(\theta,\varphi_*)|\leq C \sum_{m=-1}^1|I_{1m}|u^{-\alpha_K}.
\end{equation}

By combining the above estimate with Lemma \ref{lm:basicintid} for $m=4$, we have that:
\begin{equation*}
\left|\int_{v_{\gamma^{\alpha_K}(u)}}^v2I_1[\psi]|_{\gamma_L}(v'-u)^2L^K(v'^{-4})\,dv'-\frac{2}{3}I_1[\psi](\theta,\varphi_*)(v-u)^3T^K(u^{-1}v^{-3})\right|\leq C\sum_{m=-1}^1|I_{1m}| u^{-4-K+3\alpha_K}.
\end{equation*}
By integrating \eqref{eq:asympTKLphi1}, and applying \eqref{eq:pointwl1b}, we obtain
\begin{equation*}
\begin{split}
|T^K\check{\phi}_1^{(1)}|(u,v_{\gamma^{\alpha_K}(u)},\theta,\varphi_*)\leq&\: C r^{\frac{1}{2}(1+\delta)}(u,v_{\gamma^{\alpha_K}(u)})u^{-\frac{3}{2}-K+2\delta}\sqrt{E_{1,K,\delta}[\psi]}\\
\leq&\: C\sqrt{E_{1,K,\delta}[\psi]}u^{-\frac{3}{2}+\frac{1}{2}\alpha_K-K+2\delta},
\end{split}
\end{equation*}
we then obtain \eqref{eq:asympcheckphi1} and  \eqref{eq:asympcheckphiinfty1} for some $\nu>0$ suitably small. 

Furthermore, since $T^K \check{\phi}_1^{(1)}=2(r^2+O(r))L\phi_1+aKO(r^{-1})T^{K-1}\phi_1$, we have that:
\begin{equation}
\label{eq:asymplphi1}
\begin{split}
&\left|LT^K\phi_1(u,v,\theta,\varphi_*)-\frac{4}{3}I_1[\psi](\theta,\varphi_*)(v-u)T^K(u^{-1}v^{-3})\right|\\
\leq&\:  C\left(\sqrt{E_{1,K,\delta}[\psi]} +S_{1,\delta,K}[\psi] +D_{1,\beta,K}[\psi] +\sum_{m=-1}^1|I_{1m}|\right)(v-u)^{-2}u^{-1-K-\nu}.
\end{split}
\end{equation}

Let $\alpha_K'>\alpha_K$, then we can estimate in $\mathcal{A}_{\gamma^{\alpha_K'}}$ using \eqref{eq:pointwl12}:
\begin{equation*}
\begin{split}
|T^K\phi_1|(u,v_{\gamma^{\alpha_K'}(u)},\theta,\varphi)\leq&\:  C\sqrt{E_{1,K,\delta}[\psi]}u^{-\frac{5}{2}+\frac{1}{2}\alpha'_K-K+\delta},
\end{split}
\end{equation*}
so by integrating \eqref{eq:asymplphi1} and applying \eqref{eq:NPconstdiffangle} and Lemma \ref{lm:basicintid} again, taking $\alpha_K'>\alpha_K$ suitably large, we obtain the following estimate in $\mathcal{A}_{\gamma^{\alpha_K'}}$:
\begin{equation}
\label{eq:asympphi1}
\begin{split}
&\left|T^K\phi_1(u,v,\theta,\varphi_*)-\frac{2}{3}I_1[\psi](\theta,\varphi_*)(v-u)^2 T^K(u^{-2}v^{-2})\right|\\
\leq&\:  C\left(\sqrt{E_{1,K,\delta}[\psi]} +S_{1,\delta,K}[\psi] +D_{1,\beta,K}[\psi] +\sum_{m=-1}^1|I_{1m}|\right)u^{-2-K-\nu}.
\end{split}
\end{equation}
Hence, \eqref{eq:asympcheckphiinfty} follows.

We conclude that \eqref{eq:asymppsi1} must hold in the smaller region $\mathcal{A}_{\gamma^{\alpha_K''}}$, with $\alpha_K''>\alpha_K$ suitably large, by multiplying both sides of \eqref{eq:asympphi1} with $r^{-2}$.

Finally, we combine  \eqref{eq:asympphi1} with \eqref{eq:asymplphi1} and use that
\begin{equation*}
L\psi_1=-\Delta\frac{r^3}{2(r^2+a^2)^{\frac{5}{2}}}r^{-1}\psi_1+\frac{1}{\sqrt{r^2+a^2}}L\phi_1
\end{equation*}
to obtain \eqref{eq:asympLpsi1}. The estimates \eqref{eq:asymppsi1gamma} and \eqref{eq:asymppsi1gamma} then follows immediately.
\end{proof}

\subsection{Asymptotics in $\mathcal{R}\setminus \mathcal{A}_{\gamma^{\alpha}}$ }
In this section, we will extend the late-time asymptotics in Proposition \ref{prop:asympl1nearI} to the rest of the spacetime. In particular, we will demonstrate the presence of oscillations in the late-time behaviour along the null generators of the event horizon $\mathcal{H}^+$.\begin{proposition}
\label{prop:asympboundrl1}
Let $K\in \N_0$ and let $\alpha>0$ be arbitrarily large. Then there exists a constant $C>0$ such that
\begin{align}
\label{eq:asympinteriorl1}
&\left|T^K\psi_1(\tau,\uprho,\theta,\varphi_*)-\frac{8}{3}I_1[\psi](\theta,\varphi_*) \uprho T^K((1+\tau)^{-4})\right|(\tau,\uprho,\theta,\varphi_*)\\ \nonumber
\leq&\: C(1+\tau)^{-4-K-\nu}\Bigg(S_{1,\delta,K}[\psi] +D_{1,\beta,K}[\psi]+\sum_{m=-1}^1|I_{1m}[\psi]|+\sqrt{E_{1,K+5,\delta}[\psi]}+\sum_{j=0}^2\sqrt{E_{1,K+3,\delta}[N^j\psi]}\\ \nonumber
&+\sqrt{E_{\geq 3,K+2,\delta}[\psi]}+\sum_{j=0}^1\sqrt{E_{\geq 3,K+2,\delta}[N^j\psi]}\Bigg),\\
\label{eq:asympinteriorXpsil1}
&\left|XT^K\psi_1(\tau,\uprho,\theta,\varphi_*)-\frac{8}{3}I_1[\psi](
\theta,\varphi_*) T^K((1+\tau)^{-4})\right|(\tau,\uprho,\theta,\varphi_*)\\ \nonumber
\leq&\:  C(1+\tau)^{-4-K-\nu}\Bigg(S_{1,\delta,K}[\psi] +D_{1,\beta,K}[\psi]+\sum_{m=-1}^1|I_{1m}|+\sqrt{E_{1,K+5,\delta}[\psi]}+\sum_{j=0}^2\sqrt{E_{1,K+3,\delta}[N^j\psi]}\\ \nonumber
&+\sqrt{E_{\geq 3,K+2,\delta}[\psi]}+\sum_{j=0}^1\sqrt{E_{\geq 3,K+2,\delta}[N^j\psi]}\Bigg),
\end{align}
for all $(\tau,\uprho,\theta,\varphi_*)\in \mathcal{R} \setminus \mathcal{A}_{\gamma^{\alpha}}$.

In particular, the following oscillatory asymptotics hold along the null generators of $\mathcal{H}^+$:
\begin{equation}
\label{eq:horizonsci}
\begin{split}
&\left|T^K\psi_1|_{\mathcal{H}^+}(v,\theta,\varphi_{\mathcal{H}^+})-\frac{8}{3}r_+\sum_{m=-1}^1I_{1m}[\psi]Y_{1,m}(\theta,\varphi_{\mathcal{H}^+})e^{im \upomega_+ v} T^K((1+\tau)^{-4})\right|\\ 
\leq&\:C(1+\tau)^{-4-K-\nu}\Bigg(S_{1,\delta,K}[\psi] +D_{1,\beta,K}[\psi]+\sum_{m=-1}^1|I_{1m}|+\sqrt{E_{1,K+5,\delta}[\psi]}+\sum_{j=0}^2\sqrt{E_{1,K+3,\delta}[N^j\psi]}\\ 
&+\sqrt{E_{\geq 3,K+2,\delta}[\psi]}+\sum_{j=0}^1\sqrt{E_{\geq 3,K+2,\delta}[N^j\psi]}\Bigg),
\end{split}
\end{equation}
where we use $Y_{\ell,m}(\theta,\varphi_{\mathcal{H}^+})$ denote spherical harmonics with respect to the polar angle $\theta$ and the azimuthal angle $\varphi_{\mathcal{H}^+}$.
\end{proposition}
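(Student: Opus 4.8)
The plan is to \emph{propagate} the late-time asymptotics of Proposition~\ref{prop:asympl1nearI}, which hold in the near-$\mathcal{I}^+$ region $\mathcal{A}_{\gamma^{\alpha''}}$ (with $\alpha''$ close to $1$), into the bulk $\mathcal{R}\setminus\mathcal{A}_{\gamma^{\alpha}}$ by repeated integration in the $\uprho$-direction along the hypersurfaces $\Sigma_\tau$, exactly as in the proof of Proposition~\ref{prop:asympl0boundr} for the $\ell=0$ mode. The only structural difference from the $\ell=0$ case is that $\psi_1$ grows linearly in $\uprho$ (it is $r^{-1}\psi_1$ that decays), so a single integration does not suffice: one first establishes the asymptotics of $XT^K\psi_1$, i.e.\ \eqref{eq:asympinteriorXpsil1}, and then integrates once more to obtain \eqref{eq:asympinteriorl1} for $T^K\psi_1$ itself. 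Throughout we fix $\alpha\geq \alpha''$, so that the curve $\gamma^{\alpha}$ lies in the region covered by Proposition~\ref{prop:asympl1nearI}, choose $\delta,\beta,\nu$ as in Section~\ref{sec:asympl11}, and use that in the chosen coordinates $X=\partial_\uprho$ annihilates functions of $\tau$ alone, so that $XT^K((1+\tau)^{-4})=0$.

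First I would prove \eqref{eq:asympinteriorXpsil1}. For $(\tau,\uprho,\theta,\varphi_*)\in\mathcal{R}\setminus\mathcal{A}_{\gamma^{\alpha}}$ write
\[
XT^K\psi_1(\tau,\uprho,\theta,\varphi_*)=XT^K\psi_1(\tau,\uprho_{\gamma^{\alpha}}(\tau),\theta,\varphi_*)-\int_{\uprho}^{\uprho_{\gamma^{\alpha}}(\tau)}X^2T^K\psi_1(\tau,\uprho',\theta,\varphi_*)\,d\uprho'.
\]
The boundary term is handled by \eqref{eq:asympLpsi1gamma}, which gives $XT^K\psi_1|_{\gamma^{\alpha}}=\tfrac{8}{3}I_1(\theta,\varphi_*)T^K((1+\tau)^{-4})$ up to the stated error, with the angular argument essentially constant along the short integration segment (cf.\ \eqref{eq:varphidiff}--\eqref{eq:NPconstdiffangle}). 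The integrand is controlled by the improved second-$X$-derivative decay: \eqref{eq:fasterdecayl1} of Proposition~\ref{prop:fasterdecayforpropag} together with the $r$-weighted pointwise bound \eqref{eq:pointwl14} of Proposition~\ref{prop:pointwisedecayl1}, which supply the extra power of $(1+\tau)^{-1}$ and the decay in $\uprho'$ needed to absorb the length $\uprho_{\gamma^{\alpha}}(\tau)$ of the integration interval; the $\psi_{\geq 3}$-energies appear at exactly this step, through the angular-mode coupling present in \eqref{eq:fasterdecayl1}. Since $X$ kills $T^K((1+\tau)^{-4})$, the integral produces only error terms, and collecting these yields \eqref{eq:asympinteriorXpsil1}.

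Next I would integrate once more. Writing
\[
T^K\psi_1(\tau,\uprho,\theta,\varphi_*)=T^K\psi_1(\tau,\uprho_{\gamma^{\alpha}}(\tau),\theta,\varphi_*)-\int_{\uprho}^{\uprho_{\gamma^{\alpha}}(\tau)}XT^K\psi_1(\tau,\uprho',\theta,\varphi_*)\,d\uprho',
\]
the boundary term equals $\tfrac{8}{3}I_1(\theta,\varphi_*)\,\uprho_{\gamma^{\alpha}}(\tau)\,T^K((1+\tau)^{-4})$ up to error by \eqref{eq:asymppsi1gamma}, while by \eqref{eq:asympinteriorXpsil1} the integrand equals $\tfrac{8}{3}I_1(\theta,\varphi_*)T^K((1+\tau)^{-4})$ up to error, so the integral contributes $\tfrac{8}{3}I_1(\theta,\varphi_*)(\uprho_{\gamma^{\alpha}}(\tau)-\uprho)T^K((1+\tau)^{-4})$. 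The two $\uprho_{\gamma^{\alpha}}(\tau)$-terms cancel and one is left with $\tfrac{8}{3}I_1(\theta,\varphi_*)\,\uprho\,T^K((1+\tau)^{-4})$ plus errors. The substance of the argument is then the error bookkeeping: one must verify that each contribution — the error on $\gamma^{\alpha}$ inherited from Proposition~\ref{prop:asympl1nearI}, and the products $\uprho_{\gamma^{\alpha}}(\tau)\cdot(\text{integrand error})$ coming from the two integrations — is bounded by $C(1+\tau)^{-4-K-\nu}$ times the displayed combination of $S_{1,\delta,K}$, $D_{1,\beta,K}$, the $|I_{1m}|$ and the initial-data energy norms. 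I expect the main obstacle to be precisely this balancing: the integration interval reaches out to $\uprho_{\gamma^{\alpha}}(\tau)$, which grows in $\tau$, so one must carefully trade the growth of this interval against the $(1+\tau)$- and $r$-decay gains in \eqref{eq:fasterdecayl1} and \eqref{eq:pointwl14} and choose $\alpha$ compatibly, so that the scheme closes with a strictly positive $\nu$.

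Finally, \eqref{eq:horizonsci} follows by specialising \eqref{eq:asympinteriorl1} to the event horizon. Since $\mathcal{A}_{\gamma^{\alpha}}\subset\{r\geq R\}$ with $R>r_+$, the horizon $\mathcal{H}^+=\{r=r_+\}$ lies in $\mathcal{R}\setminus\mathcal{A}_{\gamma^{\alpha}}$ and corresponds to $\uprho=r_+$, with $v=\tau+v_0$ along $\Sigma_\tau\cap\mathcal{H}^+$. On $\mathcal{H}^+$ one has $\varphi_*=\varphi_{\mathcal{H}^+}+\upomega_+ v$ by \eqref{eq:phihor}, and since $Y_{1,m}(\theta,\varphi)=c_m(\theta)e^{im\varphi}$ this yields the pointwise identity
\[
I_1[\psi](\theta,\varphi_*)\big|_{\mathcal{H}^+}=\sum_{m=-1}^{1}I_{1m}[\psi]\,Y_{1,m}(\theta,\varphi_*)\big|_{\mathcal{H}^+}=\sum_{m=-1}^{1}I_{1m}[\psi]\,Y_{1,m}(\theta,\varphi_{\mathcal{H}^+})\,e^{im\upomega_+ v}.
\]
Substituting $\uprho=r_+$ and this identity into \eqref{eq:asympinteriorl1}, and using $(1+\tau)^{-4}\sim v^{-4}$, gives \eqref{eq:horizonsci}. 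The analogous estimate \eqref{eq:asympinteriorXpsil1} at $\uprho=r_+$ likewise records the horizon behaviour of $XT^K\psi_1$.
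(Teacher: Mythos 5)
Your proposal reproduces the paper's argument essentially step for step: the paper likewise obtains \eqref{eq:asympinteriorXpsil1} by integrating $X^2T^K\psi_1$ from $\gamma^{\alpha}$ using \eqref{eq:asympLpsi1gamma} for the boundary term and \eqref{eq:fasterdecayl1} for the integrand, then integrates $XT^K\psi_1$ once more using \eqref{eq:asympinteriorXpsil1} and \eqref{eq:asymppsi1gamma}, and deduces \eqref{eq:horizonsci} from the relation $Y_{1,m}(\theta,\varphi_*)=Y_{1,m}(\theta,\varphi_{\mathcal{H}^+})e^{im\upomega_+v}$ via \eqref{eq:phihor}. The delicate trade-off you flag between the length $\uprho_{\gamma^{\alpha}}(\tau)$ of the integration interval and the decay of the integrand is precisely the point the paper also leaves implicit, so your write-up is, if anything, more candid about where the bookkeeping must close.
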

\begin{proof}
In order to obtain \eqref{eq:asympinteriorXpsil1}, we apply the fundamental theorem of calculus, integrating $X^2T^K\psi_1$ between $\uprho=\uprho'$ and $\uprho=\uprho_{\gamma^{\alpha}(\tau)}$. We use \eqref{eq:asympLpsi1gamma} to estimate the boundary term on $\uprho=\uprho_{\gamma^{\alpha}(\tau)}$ and \eqref{eq:fasterdecayl1} to estimate the contribution of the integral.

The estimate \eqref{eq:asympinteriorl1} then follows by applying the fundamental theorem of calculus again, integrating now $XT^K\psi_1$ between $\uprho=\uprho'$ and $\uprho=\uprho_{\gamma^{\alpha}(\tau)}$ and using the estimates \eqref{eq:asympinteriorXpsil1} and \eqref{eq:asymppsi1gamma}.

Finally, \eqref{eq:horizonsci} follows from \eqref{eq:asympinteriorl1} by using the relation \eqref{eq:phihor} between $\varphi_{\mathcal{H}^+}$ and $\varphi_*$ to conclude that
\begin{align*}
e^{im \varphi_*}=&\:e^{im \varphi_{\mathcal{H}^+}}e^{im \upomega_+ v},\quad \textnormal{and hence}\\
Y_{\ell,m}(\theta,\varphi_*)=&\:Y_{\ell,m}(\theta,\varphi_{\mathcal{H}^+})e^{im \upomega_+ v}.
\end{align*} 
\end{proof}

\subsection{Asymptotics with vanishing Newman--Penrose charges}
We can now apply the time-integral construction from Section \ref{sec:timeinv} to obtain late-time asymptotics for $\psi_1$ arising from initial data that are smooth and compactly supported.
\begin{proposition}
\label{prop:asympl1NP0boundr}
Consider initial data $(\psi|_{\Sigma_0},T\psi|_{\Sigma_0})$ for \eqref{eq:waveeq}, with $(\phi|_{\Sigma_0},T\phi_{\Sigma_0}) \in (C_c^{\infty}(\Sigma_0))^2$.  Let $r_0>r_+$. Then there exists a $\nu>0$ and a constant $C=C(M,a,\Sigma_0,r_0,\nu)>0$, such that
\begin{equation*}
\begin{split}
\Big|r^{-1}T^K\psi_1(u,v,\theta,\varphi_*)&-\frac{8}{3}I_1[T^{-1}\psi](\theta,\varphi_*) T^{K+1}(u^{-2}v^{-2})\Big|\\
\leq&\:  Cv^{-2}u^{-3-K-\nu}\Bigg(S_{1,\delta,K+1}[T^{-1}\psi] +D_{1,\beta,K+1}[\psi]+\sum_{m=-1}^1|I_{1m}[T^{-1}\psi]|\\ \nonumber
&+\sqrt{E_{1,K+6,\delta}[T^{-1}\psi]}+\sum_{j=0}^2\sqrt{E_{1,K+4,\delta}[N^jT^{-1}\psi]}+\sqrt{E_{\geq 3,K+3,\delta}[T^{-1}\psi]}\\
&+\sum_{j=0}^1\sqrt{E_{\geq 3,K+3,\delta}[N^jT^{-1}\psi]}\Bigg)
\end{split}
\end{equation*}
in $\{r\geq r_0\}$, whereas in $\{r\leq r_0\}$, we can express:
\begin{equation*}
\begin{split}
&\left|T^K\psi_1(\tau,\uprho,\theta,\varphi_*)+\frac{32}{3}I_1[T^{-1}\psi](\theta,\varphi_*) \uprho T^K((1+\tau)^{-5})\right|(\tau,\uprho,\theta,\varphi_*)\\
\leq&\: C(1+\tau)^{-5-K-\nu}\Bigg(S_{1,\delta,K+1}[T^{-1}\psi] +D_{1,\beta,K+1}[T^{-1}\psi]+\sum_{m=-1}^1|I_{1m}[T^{-1}\psi]|\\ \nonumber
&+\sqrt{E_{1,K+6,\delta}[T^{-1}\psi]}+\sum_{j=0}^2\sqrt{E_{1,K+4,\delta}[N^jT^{-1}\psi]}+\sqrt{E_{\geq 3,K+3,\delta}[T^{-1}\psi]}\\
&+\sum_{j=0}^1\sqrt{E_{\geq 3,K+3,\delta}[N^jT^{-1}\psi]}\Bigg).
\end{split}
\end{equation*}
In particular, the following oscillatory asymptotics hold along the null generators of $\mathcal{H}^+$:
\begin{equation*}
\begin{split}
&\left|T^K\psi_1|_{\mathcal{H}^+}(v,\theta,\varphi_{\mathcal{H}^+})+\frac{32}{3}r_+\sum_{m=-1}^1I_{1m}[T^{-1}\psi]Y_{1,m}(\theta,\varphi_{\mathcal{H}^+})e^{im \upomega_+ v} T^K((1+\tau)^{-5})\right|\\ \nonumber
\leq&\: C(1+\tau)^{-5-K-\nu}\Bigg(S_{1,\delta,K+1}[T^{-1}\psi] +D_{1,\beta,K+1}[T^{-1}\psi]+\sum_{m=-1}^1|I_{1m}[T^{-1}\psi]|\\ \nonumber
&+\sqrt{E_{1,K+6,\delta}[T^{-1}\psi]}+\sum_{j=0}^2\sqrt{E_{1,K+4,\delta}[N^jT^{-1}\psi]}+\sqrt{E_{\geq 3,K+3,\delta}[T^{-1}\psi]}\\
&+\sum_{j=0}^1\sqrt{E_{\geq 3,K+3,\delta}[N^jT^{-1}\psi]}\Bigg).
\end{split}
\end{equation*}
\end{proposition}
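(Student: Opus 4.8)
The strategy is to reduce the statement for compactly supported data on $\psi$ to the already-established asymptotics for data with (generically) non-vanishing Newman--Penrose charges applied to the time inverse $T^{-1}\psi$. First I would invoke Proposition \ref{prop:mainpropTinv} together with Proposition \ref{prop:TinvNPconst}: since $(\phi|_{\Sigma_0}, T\phi|_{\Sigma_0})\in (C_c^\infty(\Sigma_0))^2$, the Newman--Penrose charges $I_\ell[\psi]$ all vanish, so $T^{-1}\psi$ is a well-defined smooth solution of \eqref{eq:waveeq} with $T(T^{-1}\psi)=\psi$, and all the weighted energy norms $E_{1,k,\delta}[\cdot]$, $E_{\geq 3,k,\delta}[\cdot]$, $E_{1,k,\delta}[N^j\cdot]$, $E_{\geq 3,k,\delta}[N^j\cdot]$, as well as the auxiliary quantities $S_{1,\delta,k}[\cdot]$ and $D_{1,\beta,k}[\cdot]$, are finite when evaluated on $T^{-1}\psi$. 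Moreover $I_{1m}[T^{-1}\psi]=I_{1m}^{(1)}[\psi]$ is well-defined by Proposition \ref{prop:TinvNPconst}, and Corollary \ref{cor:sharpdecayvanishingNP} gives the integral representation $T^{-1}\psi(\tau,\cdot)=-\int_\tau^\infty \psi(\tau',\cdot)\,d\tau'$.

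\textbf{Main step.} With $T^{-1}\psi$ playing the role of $\psi$, I would apply Proposition \ref{prop:asympl1nearI} and Proposition \ref{prop:asympboundrl1} directly to $T^{-1}\psi$, reading off the asymptotics for $T^{K}\psi_1 = T^{K+1}(T^{-1}\psi)_1$. Concretely, in $\{r\ge r_0\}$, the estimate \eqref{eq:asymppsi1} applied to $T^{-1}\psi$ with $K$ replaced by $K+1$ gives
\begin{equation*}
\left|r^{-1}T^{K+1}(T^{-1}\psi)_1(u,v,\theta,\varphi_*)-\tfrac{8}{3}I_1[T^{-1}\psi](\theta,\varphi_*)\, T^{K+1}(u^{-2}v^{-2})\right|\leq C\,v^{-2}u^{-3-K-\nu}(\cdots),
\end{equation*}
where the parenthetical error is exactly the combination of $S_{1,\delta,K+1}[T^{-1}\psi]$, $D_{1,\beta,K+1}$, $\sum_m |I_{1m}[T^{-1}\psi]|$ and the square-roots of the energies $E_{1,K+6,\delta}[T^{-1}\psi]$, etc.\ --- here the extra shift by one in the energy index compared to Proposition \ref{prop:asympboundrl1} arises because one additional $T$-derivative has been applied. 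For $\{r\le r_0\}$, applying \eqref{eq:asympinteriorl1} to $T^{-1}\psi$ with $K\mapsto K+1$ and using $T^{K+1}((1+\tau)^{-4}) = -4\,T^{K}((1+\tau)^{-5})\cdot(1+o(1))$ (more precisely, commuting the $T$-derivative through the explicit polynomial and absorbing lower-order terms into the error) produces the factor $-\tfrac{32}{3}$ in front of $\varrho\, T^K((1+\tau)^{-5})$. The horizon-oscillation statement follows identically from \eqref{eq:horizonsci} applied to $T^{-1}\psi$, using the relation \eqref{eq:phihor} to write $Y_{1,m}(\theta,\varphi_*)=Y_{1,m}(\theta,\varphi_{\mathcal{H}^+})e^{im\upomega_+ v}$; the oscillatory factor survives because a $T$-derivative acting on $e^{im\upomega_+ v}(1+\tau)^{-4}$ produces $-4 e^{im\upomega_+ v}(1+\tau)^{-5}$ to leading order (the contribution of $T(e^{im\upomega_+ v})$ is lower-order relative to the non-axisymmetric part but, as explained in Section \ref{intro:horosc}, it is precisely what prevents the oscillation from decaying an extra power faster).

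\textbf{Bookkeeping and the main obstacle.} The remaining work is purely to track constants and derivative losses: one must verify that $T^K$ commutes with the explicit polynomial tails up to errors of the stated order, that applying $T^{K+1}$ rather than $T^K$ raises each energy index by exactly one in $E_{\ell,\cdot,\delta}$, and that the numerical coefficient transforms as $\tfrac{8}{3}\cdot(-4)=-\tfrac{32}{3}$ under one time differentiation of $(1+\tau)^{-4}$. I expect the main subtlety to be the careful matching of the error-term hypotheses: Proposition \ref{prop:asympl1nearI} and Proposition \ref{prop:asympboundrl1} are stated in terms of $E_{1,K+5,\delta}[\psi]$ etc., and one must check that these norms, evaluated on $T^{-1}\psi$ with the shift $K\mapsto K+1$, are dominated by the energies listed in Proposition \ref{prop:rdecayTinv} (via Proposition \ref{prop:mainpropTinv}(ii)) --- i.e. that the time-inversion theory genuinely supplies finiteness of all the weighted norms appearing on the right-hand sides. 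Since Proposition \ref{prop:mainpropTinv} asserts exactly this (``the energy norms appearing on the right-hand sides of the estimates of Section \ref{sec:poinwdecay} are finite for all $K\in\N_0$ and $\delta>0$''), the argument closes, but the explicit form of the error term in the statement requires writing out which of those norms appear, which is the one genuinely tedious (but routine) part.
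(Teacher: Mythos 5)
Your proposal is correct and follows exactly the paper's route: the paper's proof is precisely to invoke Propositions \ref{prop:mainpropTinv} and \ref{prop:TinvNPconst} for finiteness of the relevant norms and well-definedness of $I_1[T^{-1}\psi]$, and then apply Propositions \ref{prop:asympl1nearI} and \ref{prop:asympboundrl1} to $T^{-1}\psi$ with $K$ shifted to $K+1$. (One small simplification: $T^{K+1}((1+\tau)^{-4})=-4\,T^{K}((1+\tau)^{-5})$ is an exact identity, so no $o(1)$ correction or absorption of lower-order terms is needed there.)
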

\begin{proof}
We apply Propositions \ref{prop:mainpropTinv} and \ref{prop:TinvNPconst} to conclude that $T^{-1}\phi$ has sufficiently high regularity to conclude that all the relevant energies for $T^{-1}\psi$ are finite and $I_1[T^{-1}\psi]$ is well-defined. Then the estimates follows immediately by applying the estimates in Propositions \ref{prop:asympl1nearI} and \ref{prop:asympboundrl1} to $T^{-1}\psi$ instead of $\psi$.
\end{proof}

\section{Late-time polynomial tails: the $\ell=2$ projection}
\label{sec:latetimeasympl2}
In this section we derive the late-time asymptotics for the projection $\psi_2$. In contrast with the $\ell=0,1$ cases, \textbf{the leading-order late-time behaviour of $\psi_2$ is coupled with the leading-order late-time behaviour of $\psi_0$} and will therefore not just depend on the Newman--Penrose charge $I_2[\psi]$, but also on $I_0[\psi]$. As will be shown below, this coupling mechanism will limit the decay rate of $\psi_2$ in regions of bounded $r$.

It will be useful for the estimates in this section to define the following quantities: let $\beta>0$ and $0<\delta<1$, then
\begin{align*}
D_{2,\beta,K}[\psi]:=&\:\max_{0\leq k\leq K} ||v^{6+\beta+k}L^k\left(r^{-4}P_2-32I_2[\psi]v^{-6}\right)||_{L^{\infty}(\Sigma_0\cap\{r\geq R_0\})},\\
S_{2,\delta,K}[\psi]:=&\:\sup_{\tau\geq 0}\: (1+\tau)^{1-\delta}\sum_{j=0}^2\sum_{\substack{0\leq k_1+k_2\leq K+1\\ k_2\leq K}}(1+\tau)^{k_2}||(rL)^{k_1}T^{k_2}\check{\phi}_2^{(j)}||_{L^{\infty}(\Sigma_{\tau})}\\
&+\sup_{\tau\geq 0} (1+\tau)^{1-\delta}\sum_{\
\ell\in\{2,4\}}\sum_{j=0}^2\sum_{\substack{0\leq k_1+k_2\leq K+1\\ k_2\leq K}}(1+\tau)^{k_2} ||(rL)^{k_1}T^{k_2+1}\check{\phi}_{\ell}^{(j)}||_{L^{\infty}(\Sigma_{\tau})}\\
&+\sup_{\tau\geq 0} (1+\tau)^{1-\delta}\sum_{j=0}^1\sum_{\substack{0\leq k_1+k_2\leq K+1\\ k_2\leq K}}(1+\tau)^{k_2} ||(rL)^{k_1}T^{k_2+1}\check{\phi}_0^{(j)}||_{L^{\infty}(\Sigma_{\tau})}\\
&+S_{0,\delta,3+K}[\psi]+|I_0[\psi]|+D_{0,1-\delta,3+K}[\psi].
\end{align*}

\begin{lemma}
There exists a constant $C=C(M,a,\delta,K)>0$ such that:
\begin{equation*}
S_{2,\delta,K}[\psi]\leq C \sum_{k+j\leq K+2}\sqrt{E_{2,k,\frac{\delta}{2}}[N^j\psi]}+\sqrt{E_{\geq 3,k,\frac{\delta}{2}}[N^j\psi]}+\sqrt{E_{0,k+3,\frac{\delta}{2}}[N^j\psi]}+C|I_0[\psi]|+CD_{0,1-\delta,3+K}[\psi].
\end{equation*}
\end{lemma}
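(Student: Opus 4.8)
\textbf{Proof proposal for the bound on $S_{2,\delta,K}[\psi]$.}

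The plan is to bound each of the five groups of terms comprising $S_{2,\delta,K}[\psi]$ separately, using the pointwise decay estimates already established in Section \ref{sec:poinwdecay}. The quantity $S_{2,\delta,K}[\psi]$ is a supremum over $\tau\geq 0$ of weighted $L^\infty(\Sigma_\tau)$-norms of the higher-order quantities $(rL)^{k_1}T^{k_2}\check{\phi}_\ell^{(j)}$ for $\ell=0,2,4$, together with the already-controlled quantities $S_{0,\delta,3+K}[\psi]$, $|I_0[\psi]|$, and $D_{0,1-\delta,3+K}[\psi]$. The first two groups involve the $\ell=2$ modes $\check{\phi}_2^{(j)}$ with $j=0,1,2$ and the $\ell\in\{2,4\}$ modes with an extra $T$-derivative; for these I would invoke \eqref{eq:pointwl2b} and \eqref{eq:pointwl2c} (with an extra $T$ where needed, using that $T$ is Killing so all estimates apply to $T^{m}\psi$), noting that $\check\phi_2^{(j)}$ differs from $\phi_2^{(j)}$ only by lower-order terms of the schematic form $r^{-1}\Phi^{\leq 2}\phi_2^{(i)}$, $i<j$, which decay at least as fast. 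The weights $(1+\tau)^{1-\delta}(1+\tau)^{k_2}$ and $(1+\tau)^{1-\delta}$ are chosen precisely so that the right-hand sides of \eqref{eq:pointwl2b}--\eqref{eq:pointwl2c} — which carry $(1+\tau)^{-1-(2-j)-K+2\delta}$-type decay — absorb them, leaving only $\sqrt{E_{2,k,\delta/2}[N^j\psi]}$-type norms (the passage from $2\delta$ to $\delta/2$ being a harmless relabelling of the small parameter). The $\ell=4$ contributions are handled by \eqref{eq:pointwl3c} with $\ell=4$, again after commuting with $T$.

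For the fourth group, involving $\check{\phi}_0^{(j)}$, $j=0,1$, with one extra $T$-derivative, I would use the $\ell=0$ pointwise estimates \eqref{eq:pointwl01}, \eqref{eq:pointwl01b} (for $\check\phi_0 = \phi_0$) together with the higher-order estimates from Proposition \ref{prop:pointwisedecayl0}, applied to $T^{k_2+1}\psi$ instead of $\psi$. Since only a single $T$-derivative buys one extra power of $(1+\tau)^{-1}$, the weight $(1+\tau)^{1-\delta}(1+\tau)^{k_2}$ is again matched by the available decay, producing $\sqrt{E_{0,k+3,\delta/2}[N^j\psi]}$ norms on the right (the shift $k\mapsto k+3$ in the energy indices reflecting the loss of up to three $T$ and $\Phi$ derivatives incurred when passing through the $\ell=2$ hierarchy of $r^p$-estimates that couples to $\ell=0$). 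The last group — $S_{0,\delta,3+K}[\psi]$, $|I_0[\psi]|$, $D_{0,1-\delta,3+K}[\psi]$ — appears literally on the right-hand side of the claimed inequality, so it requires nothing; only $S_{0,\delta,3+K}[\psi]$ needs to be re-expressed via Lemma \ref{lm:Sl0est}, which bounds it by $\sqrt{\sum_{k+j\leq 5+K}E_{0,k,\delta/2}[N^j\psi]+E_{2,k,\delta/2}[N^j\psi]}$, absorbed into the stated sum.

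The main obstacle — really the only non-bookkeeping point — is tracking the coupling-induced loss of $T$ and $\Phi$ derivatives: the $\ell=2$ quantities are coupled to $\ell=0$ through the $a^2$ terms in \eqref{eq:introcouplingmodes1}, so the relevant energy decay inputs (Proposition \ref{prop:almostsharpedecay}, Corollary \ref{cor:edecaycommrX}, Proposition \ref{prop:edecayelliptic}) for $\psi_2$ involve $E_{0,\cdot,\delta}[\psi]$ at a higher index, and this is exactly why $S_{2,\delta,K}$ was defined to include $S_{0,\delta,3+K}$ and $D_{0,1-\delta,3+K}$ from the outset. I would simply check that the index shifts built into the definition (the ``$3+K$'' and the extra energy commutations in Propositions \ref{prop:pointwisedecayl2}--\ref{prop:pointwisedecayl3}) are consistent with the exponents appearing in those pointwise estimates, so that every weighted norm in $S_{2,\delta,K}[\psi]$ is dominated by a uniformly-in-$\tau$ finite initial-data energy. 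Everything else is a direct substitution of the cited inequalities, exactly as in the one-line proofs of the analogous Lemmas for $S_{0,\delta,K}$ and $S_{1,\delta,K}$.
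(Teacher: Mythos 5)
Your handling of the $\check{\phi}_2^{(j)}$ and $\check{\phi}_4^{(j)}$ groups via \eqref{eq:pointwl2b}, \eqref{eq:pointwl2c} and \eqref{eq:pointwl3c}, and your observation that the last line of the definition of $S_{2,\delta,K}[\psi]$ passes through via Lemma \ref{lm:Sl0est}, match the paper's (one-line) proof, which cites Propositions \ref{prop:pointwisedecayl2} and \ref{prop:pointwisedecayl3} together with Lemma \ref{lm:Sl0est}. However, there is a genuine gap in your fourth group, at the single term $(rL)^{k_1}T^{k_2+1}\check{\phi}_0^{(1)}$. Since $\check{\phi}_0^{(1)}\approx 2r^2L\phi_0$ carries a weight $r^2$ rather than $r$, it is \emph{not} controlled by Proposition \ref{prop:pointwisedecayl0}: the estimates \eqref{eq:pointwl01} and \eqref{eq:pointwl01b} bound only $(rX)^JT^K\phi_0$, and the remaining unbounded factor of $r$ destroys uniformity in $\uprho$ on $\Sigma_\tau$ near infinity. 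No energy-based pointwise estimate of the required strength exists for this quantity, because the $r^p$-hierarchy for the commuted quantity $\phi^{(n)}$ in Proposition \ref{prop:rpest} requires $\ell\geq n$ and hence excludes $\phi_0^{(1)}$.

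The paper closes this step by invoking the precise asymptotics \eqref{eq:auxestLTl0} for $L^jT^m\phi_0$, which follow from the conservation law for $P_0$ and give
\begin{equation*}
r^2LT^{m}\phi_0\approx -2I_0[\psi]\,r^2L^{m+1}(v^{-1})=O\bigl(|I_0[\psi]|\,(1+\tau)^{-m}\bigr)
\end{equation*}
in the far region, with error terms controlled by $S_{0,\delta,\cdot}[\psi]$ and $D_{0,\beta,\cdot}[\psi]$. This is exactly why $|I_0[\psi]|$ and $D_{0,1-\delta,3+K}[\psi]$ must appear \emph{additively} on the right-hand side of the lemma: not merely because they sit in the last line of the definition of $S_{2,\delta,K}[\psi]$, as you suggest, but because they are needed to absorb the leading behaviour of the $j=1$ member of the $\check{\phi}_0$ group. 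Without \eqref{eq:auxestLTl0}, or an equivalent appeal to the $\ell=0$ Newman--Penrose charge, your argument does not close.
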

\begin{proof}
The estimate follows immediately by applying the estimates in Proposition \ref{prop:pointwisedecayl2} and \ref{prop:pointwisedecayl2}, together with Lemma \ref{lm:Sl0est} and also \eqref{eq:auxestLTl0} below to estimate the terms involving $(rL)^{k_1} T^{k_2+1} \check{\phi}_0^{(1)}$.
\end{proof}

\subsection{Asymptotics in $\mathcal{A}_{\gamma^{\alpha}}$ }
\label{sec:asympl21}

We recall here \eqref{eq:maineqP2}, but with the dependence on $\pi_2(\sin^2\theta \check{\phi}^{(2)}_0)$ stated more precisely:
\begin{equation*}
\begin{split}
4\underline{L}P_2=&\:[-8r^{-1}+O_{\infty}(r^{-2})]P_2-[2r^{-1}+O_{\infty}(r^{-2})]\frac{a^2\Delta}{(r^2+a^2)}\pi_2(\sin^2\theta T\check{\phi}^{(2)}_0)+L(\frac{a^2\Delta}{(r^2+a^2)}\pi_2(\sin^2\theta T\check{\phi}^{(2)}_0))\\
&+O_{\infty}(r^{-3})[\chphi^{(2)}_2+\chphi^{(1)}_2+\phi_2]+O_{\infty}(r^{-3})[T\pi_2(\sin^2\theta \chphi_4^{(2)})+T\pi_2(\sin^2\theta \chphi^{(1)}) +T\pi_2(\sin^2\theta \phi )]\\ 
&+O_{\infty}(r^{-2})[LT\pi_2(\sin^2\theta \chphi_4^{(2)})+LT\pi_2(\sin^2\theta \chphi^{(1)}) +LT\pi_2(\sin^2\theta \phi )]
\end{split}
\end{equation*}
Due to the slower decay properties of $\check{\phi}^{(2)}_0$, it will be convenient to put some of the terms involving $\check{\phi}^{(2)}_0$ on the left-hand side of the above equation by introducing the modified quantity:
\begin{equation*}
\widetilde{P}_2=P_2-\frac{1}{4}r^{4}L\left(r^{-4}\frac{a^2\Delta}{(r^2+a^2)}\pi_2(\sin^2\theta T\check{\phi}^{(2)}_0)\right).
\end{equation*}
Then:
\begin{equation}
\label{eq:tildeP2}
\begin{split}
4\underline{L}(r^{-4}\widetilde{P}_2)=&\: L^2\left(r^{-4}\frac{a^2\Delta}{(r^2+a^2)}\pi_2(\sin^2\theta \check{\phi}^{(2)}_0)\right)+O_{\infty}(r^{-6})P_2\\
&+O_{\infty}(r^{-7})[\chphi^{(2)}_2+\chphi^{(1)}_2+\phi_2]+O_{\infty}(r^{-7})[T\pi_2(\sin^2\theta \chphi_4^{(2)})+T\pi_2(\sin^2\theta \chphi^{(1)}) +T\pi_2(\sin^2\theta \phi )]\\ 
&+O_{\infty}(r^{-7})[rLT\pi_2(\sin^2\theta \chphi_4^{(2)})+rLT\pi_2(\sin^2\theta \chphi^{(1)}) +rLT\pi_2(\sin^2\theta \phi )].
\end{split}
\end{equation}

In order to determine the late-time asymptotics of $\psi_2$, it will be necessary to appeal to the late-time asymptotics of $\psi_0$ obtained in Section \ref{sec:asympl0nonzeronp}. We derive in the lemma below the necessary estimates for $\psi_0$.
\begin{lemma}
\label{lm:estcouplingl0}
Let $K\in \N_0$ and $\alpha_K>\frac{6+K}{7+K}$. Then there exists $\delta(\alpha_K)>0$ suitably small, such that for all $j,m\in \N_0$ with $j+m=K$, we have that in $\mathcal{A}_{\gamma^{\alpha_K}}$:
\begin{align}
\label{eq:estcouplingl0a}
\Bigg|\frac{1}{4}L^{j+1}&\left[ r^{-4}\frac{a^2\Delta}{(r^2+a^2)^2} T^m\check{\phi}^{(2)}_0\right]-2a^2I_0[\psi]L^KT^m\Bigg(24(v-u)^{-2}v^{-4}-24(v-u)^{-4}v^{-2}\Bigg|\\ \nonumber
\leq&\:  C(r^{-6-j}u^{-2-m+\delta}+r^{-6-K}u^{-1+\delta}+v^{-(7-K+2\delta)\alpha_{K-1}}) S_{2,\delta,K-1}[\psi],\\ \nonumber
\label{eq:estcouplingl0b}
\Bigg|\frac{1}{4}L^{j+2}&\left[ r^{-4}\frac{a^2\Delta}{(r^2+a^2)^2}T^m \check{\phi}^{(2)}_0\right]-2a^2I_0[\psi]\underline{L}L^KT^m\Bigg(-96(v-u)^{-1}v^{-5}-24(v-u)^{-2}v^{-4}\\
&+16(v-u)^{-3}v^{-3}+24(v-u)^{-4}v^{-2}\Bigg)\Bigg|\\ \nonumber
\leq&\:C(r^{-7-j}u^{-2-m+\delta}+r^{-7-K}u^{-1+\delta}+v^{-(8-K+2\delta)\alpha_{K}}) S_{2,\delta,K}[\psi],\\
\label{eq:estcouplingl0c}
\Bigg|\frac{1}{4}\frac{a^2\Delta}{(r^2+a^2)^2}&  T^{K+1}\check{\phi}^{(2)}_0-2 a^2I_0[\psi]L^K\left(\frac{3}{2}(v-u)^2v^{-4}-(v-u) v^{-3}\right)\Bigg|\\ \nonumber
\leq&\: C(r^{-3-K}u^{-1+\delta}+r^{-2}u^{-2-K+\delta}+v^{-(4-K+2\delta)\alpha_{K-1}}) S_{2,\delta,K-1}[\psi].
\end{align}
Furthermore, for all $k\leq K$, we can estimate
\begin{equation}
\label{eq:estcouplingl02}
\begin{split}
r^{-6}|(rL)^k(r^{-2} T\check{\phi}^{(2)}_0)|\leq CS_{2,\delta,K}[\psi][(r^{-5}v^{-3}+r^{-7}u^{-2+\delta}+r^{-8}u^{-1+\delta})
\end{split}
\end{equation}
\end{lemma}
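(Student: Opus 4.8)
The plan is to prove Lemma \ref{lm:estcouplingl0} by first establishing sharp pointwise asymptotics for the rescaled $\ell=0$ higher-order quantity $\check{\phi}^{(2)}_0 = (r^2+a^2)X\check{\phi}^{(1)}_0$ in the region $\mathcal{A}_{\gamma^{\alpha_K}}$, and then integrating/differentiating in the $L$ direction. The starting point is the $\ell=0$ asymptotics already derived in Section \ref{sec:latetimeasympl0}: from Corollary \ref{cor:asympl0nearI} we know that $T^K\psi_0 \approx 4I_0[\psi]T^K((uv)^{-1})$ in $\mathcal{A}_{\gamma^{\alpha_K'}}$, and from Corollary \ref{cor:asympLTkphi0} we have asymptotics for $LT^K\phi_0$ and $T^KP_0$. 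What is needed here is to propagate these down the $r^2L$-hierarchy: since $\check{\phi}^{(1)}_0 = [1+\alpha r^{-1}+\beta r^{-2}]2(r^2+a^2)^2\Delta^{-1}L\phi_0$ agrees with $2r^2 P_0$ plus lower-order corrections, and $\check{\phi}^{(2)}_0 = [1+\gamma r^{-1}]2(r^2+a^2)^2\Delta^{-1}L\check{\phi}^{(1)}_0$, one simply needs to express $\check{\phi}^{(2)}_0$ in terms of $P_0$, $L^2\phi_0$ and error terms, then use the already-known asymptotics $P_0 \approx 2I_0v^{-2}$ (Proposition \ref{prop:asymplkP0}) and integrate $LT^K\phi_0 \approx 2I_0(v-u)T^K(u^{-1}v^{-2})$ twice more in $v$ using Lemma \ref{lm:basicintid}-type identities. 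The explicit rational-function coefficients ($24(v-u)^{-2}v^{-4} - 24(v-u)^{-4}v^{-2}$ etc.) come from repeatedly applying $L$ (equivalently $T$ along the relevant curves) to the leading term $I_0(uv)^{-1}$ and keeping track of the conversion between $u$, $v$ and $r \sim \tfrac{v-u}{2}$ via \eqref{eq:ruv}.

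Concretely, first I would write $\tfrac{1}{4}L^{j+1}[r^{-4}\tfrac{a^2\Delta}{(r^2+a^2)^2}T^m\check{\phi}^{(2)}_0]$ in terms of $L$-derivatives of $r^{-4}T^m P_0$ and $r^{-4}T^m L\phi_0$ modulo faster-decaying terms, using the definitions \eqref{eq:defP0}--\eqref{eq:checkphi2} and the equation \eqref{eq:maineqP0} for $\underline{L}P_0$ to trade $\underline{L}$-derivatives for $r$-weighted error terms. Then I would substitute the asymptotic expansion of $P_0$ from Proposition \ref{prop:asymplkP0} (which gives $L^KP_0 - 2I_0L^K(v^{-2})$ bounded by $S_{0,\delta,K}v^{-(3+K-2\delta)\alpha_K} + D_{0,\beta,K}v^{-2-K-\beta}$) and of $\phi_0$ from Corollary \ref{cor:asympl0nearI} and Proposition \ref{prop:asympl0boundr}. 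The error terms $r^{-6-j}u^{-2-m+\delta}$, $r^{-6-K}u^{-1+\delta}$, $v^{-(7-K+2\delta)\alpha_{K-1}}$ come from the standard bookkeeping: the $r^{-\bullet}u^{-1+\delta}$ piece from integrating the $\underline{L}$-error along characteristics (as in the proof of Proposition \ref{prop:asymplkP0}), the $r^{-\bullet}u^{-2-m+\delta}$ piece from the $T$-derivative improvement in the energy norms feeding $S_{2,\delta,K-1}$, and the $v^{-(\bullet)\alpha}$ piece from the $\gamma^\alpha$-boundary matching. Statement \eqref{eq:estcouplingl0b} follows from \eqref{eq:estcouplingl0a} by applying one more $L$ and using $\underline{L}L^{j+2} = L(\underline{L}L^{j+1}) + [\underline{L},L]L^{j+1}$ together with \eqref{eq:maineqP0}; \eqref{eq:estcouplingl0c} is the ``$T^{K+1}$ applied directly'' version, obtained by the same substitution but keeping the leading $(v-u)^2 v^{-4}$ term (the undifferentiated $r^{-4}\cdot r^4$ weight produces the $(v-u)^2$ factor). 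Finally \eqref{eq:estcouplingl02} is the crude $L^\infty$ bound and follows immediately from the definition of $S_{2,\delta,K}[\psi]$, which already contains $S_{0,\delta,3+K}[\psi]$ controlling all the $(rL)^{k}T^{m}\check{\phi}^{(j)}_0$ norms, plus \eqref{eq:ruv}.

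The main obstacle I anticipate is the algebraic bookkeeping of the precise rational coefficients. Unlike the error estimates, which are routine once the hierarchy structure is in place, the constants $24$, $96$, $16$ in front of $(v-u)^a v^b$ must be computed exactly, since they feed directly into the final leading-order coefficient of $\psi_2$ in Theorem \ref{thm:intro1}. This requires carefully tracking: (i) the exact coefficient in $P_0 \to 2I_0 v^{-2}$ and the relation $2r^2 P_0 \to I_0$; (ii) the exact form of $L^k$ acting on $v^{-2}$ versus $T^k$ acting on $(uv)^{-1}$, differing by lower-order terms in $(v-u)^{-1}$; (iii) the conversion factor $\tfrac{a^2\Delta}{(r^2+a^2)^2} \approx a^2 r^{-2}$ and the extra $r^{-4}$ weight, which combined with $r\sim\tfrac{v-u}{2}$ produces the specific powers of $(v-u)$. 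A clean way to organize this is to first prove the $K=0$, $m=0$ case by direct computation — expanding $\check{\phi}^{(2)}_0$ to the needed order in $r^{-1}$ and substituting $\phi_0 \approx -2I_0 \tfrac{v-u}{uv} \cdot (r^2+a^2)^{-1/2}$ — and then obtain general $K$, $m$ by commuting with $L$ and $T$ and invoking \eqref{eq:maineqP0ho} for the error propagation. The identities in Lemma \ref{lm:basicintid}, applied with various values of $m$, handle all the required $v$-integrations.
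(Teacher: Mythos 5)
Your proposal matches the paper's proof in essence: the paper likewise unpacks $\check{\phi}^{(2)}_0$ through the hierarchy definitions into $r$-weighted $L$-derivatives of $\phi_0$, applies the Leibniz rule to $L^{j+1}$, and substitutes the auxiliary asymptotics $L^{j}T^m\phi_0\approx -2I_0[\psi]L^{j+m}(v^{-1})$ (obtained from Proposition \ref{prop:asymplkP0} together with the argument of Corollary \ref{cor:asympLTkphi0}), with the explicit rational coefficients emerging exactly as you describe from $r\sim\tfrac{v-u}{2}$. The only cosmetic difference is that no $v$-integration is needed here — one passes up the hierarchy by differentiating $\phi_0$ rather than integrating $LT^K\phi_0$ — but you also describe that route, so the argument goes through as planned.
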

\begin{proof}
We can write
\begin{equation*}
\begin{split}
\frac{1}{4} L\left(r^{-4}\frac{a^2\Delta}{(r^2+a^2)^2}T^m\check{\phi}^{(2)}_0\right)=&\:\frac{1}{2} L\left(a^2\left[r^{-4}+O_{\infty}(r^{-5})\right]LT^m\check{\phi}^{(1)}_0\right)\\
=&\: L\left(a^2\left[r^{-2}+O_{\infty}(r^{-3})\right] L^2T^m{\phi}_0\right)+L\left(a^2\left[r^{-3}+O_{\infty}(r^{-4})\right] LT^m{\phi}_0\right)\\
=&\: a^2(r^{-2}+O_{\infty}(r^{-3})) L^3T^m{\phi}_0-\frac{3}{2}a^2(r^{-4}+O_{\infty}(r^{-5}))\sin^2\theta LT^m{\phi}_0.
\end{split}
\end{equation*}
And hence,
\begin{equation*}
\begin{split}
\frac{1}{4}&L^{K}\left[L\left(r^{-4}\frac{a^2\Delta}{(r^2+a^2)^2}\check{\phi}^{(2)}_0\right)\right]=a^2\sum_{k=0}^K\frac{K!}{k!(K-k)!} L^k(r^{-2}+O_{\infty}(r^{-3}))L^{3+K-k}T^m{\phi}_0\\
&-\frac{3}{2}a^2\sum_{k=0}^K \frac{K!}{k!(K-k)!} L^k(r^{-4}+O_{\infty}(r^{-5})) L^{1+K-k}T^m{\phi}_0
\end{split}
\end{equation*}
By \eqref{eq:asymplkP0} together with the arguments in the proof of Corollary \ref{cor:asympLTkphi0}, we moreover have that for all $j\geq 1$ and $\beta\geq 0$:
\begin{equation}
\label{eq:auxestLTl0}
\begin{split}
\left|L^{j}T^m\phi_0(u,v)+2I_0[\psi]L^{j+m}(v^{-1})\right|\leq&\: CS_{0,\delta,j+m-1}[\psi]\left[ r^{-1-j}u^{-2-m+\delta}+u^{-1+\delta} r^{-1-j-m}+v^{-(2+j+m-2\delta)\alpha_{j+m-1}}\right]\\
&+CD_{0,\beta,K}[\psi]v^{-1-j-m-\beta}
\end{split}
\end{equation}
in $\mathcal{A}_{\gamma^{\alpha_{j+m-1}}}$. By combining the above, we obtain
\begin{equation*}
\begin{split}
\Bigg|\frac{1}{4}&L^{j+1}\left[ r^{-4}\frac{a^2\Delta}{(r^2+a^2)^2}T^m\check{\phi}^{(2)}_0\right]-2a^2I_0[\psi]T^mL^j\Bigg(24(v-u)^{-2}v^{-4}-24(v-u)^{-4}v^{-2}\Bigg|\\
\leq&\: C(r^{-6-j}u^{-2-m+\delta}+r^{-6-K}u^{-1+\delta}+v^{-(7-K+2\delta)\alpha_{K-1}}) S_{2,\delta,K-1}[\psi].
\end{split}
\end{equation*}
and hence
\begin{equation*}
\begin{split}
\Bigg|\frac{1}{4}&L^{j+2}\left[ r^{-4}\frac{a^2\Delta}{(r^2+a^2)^2}T^m\check{\phi}^{(2)}_0\right]-2a^2I_0[\psi]L^jT^m\Bigg(24(v-u)^{-2}v^{-4}-24(v-u)^{-4}v^{-2}\Bigg|\\
\leq&\: C(r^{-7-j}u^{-2-m+\delta}+r^{-7-K}u^{-1+\delta}+v^{-(8-K+2\delta)\alpha_{K}}) S_{2,\delta,K}[\psi].
\end{split}
\end{equation*}

We obtain \eqref{eq:estcouplingl0c} and \eqref{eq:estcouplingl02} in an analogous manner, appealing again to the arguments in the proof of Corollary \ref{cor:asympLTkphi0}.
\end{proof}

Let $f$ be a suitably regular function on $\s^2$. Then recall that by \eqref{eq:l2projl0}, we can write:
\begin{equation*}
\pi_2(\sin^2\theta \pi_0(f))=-\frac{4}{3}\sqrt{\frac{\pi}{5}}\pi_0(f)Y_{2,0}(\theta).
\end{equation*}

\begin{proposition}
Let $\alpha_K>\frac{6+K}{7+K}$. Then there exists $\delta(\alpha_K)>0$ suitably small, such that in $\mathcal{A}_{\gamma^{\alpha_K}}$:
\begin{equation}
\label{eq:asymplkP2}
\begin{split}
&\Bigg|L^K(r^{-4}\widetilde{P}_2)-32I_2[\psi]L^K(v^{-6})+\frac{8}{3}\sqrt{\frac{\pi}{5}}a^2I_0[\psi]Y_{2,0}L^K(v^{-6})\\\
&+\frac{8}{3}\sqrt{\frac{\pi}{5}}a^2I_0[\psi]Y_{2,0}L^K\Bigg(-96(v-u)^{-1}v^{-5}-24(v-u)^{-2}v^{-4}+16(v-u)^{-3}v^{-3}+24(v-u)^{-4}v^{-2}\Bigg)\Bigg|\\
\leq&\: C v^{-(8-K+2\delta)\alpha_{K}+1}S_{2,\delta,K}[\psi]+CD_{2,\beta,K}[\psi]v^{-6-K-\beta}.
\end{split}
\end{equation}
\end{proposition}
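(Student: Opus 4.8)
The plan is to mimic the proof of Proposition for $\ell=1$ (the estimate \eqref{eq:asymplkP1}), i.e.\ integrate the evolution equation \eqref{eq:tildeP2} for $r^{-4}\widetilde{P}_2$ from $\Sigma_0$ in the $\underline{L}$ direction, after first commuting with $L^K$. Acting with $L^K$ on \eqref{eq:tildeP2} produces a transport equation of the schematic form
\begin{equation*}
4\underline{L}L^K(r^{-4}\widetilde{P}_2)=L^{K+2}\left(r^{-4}\frac{a^2\Delta}{(r^2+a^2)}\pi_2(\sin^2\theta \check{\phi}^{(2)}_0)\right)+\sum_{k=0}^K O_{\infty}(r^{-6-K})(rL)^kP_2+(\textnormal{lower-order terms in }\phi,\chphi).
\end{equation*}
First I would estimate the right-hand side in $L^\infty(\Sigma_\tau)$. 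The genuinely new ingredient compared to the $\ell=1$ case is the leading term $L^{K+2}(r^{-4}\frac{a^2\Delta}{(r^2+a^2)}\pi_2(\sin^2\theta\check{\phi}^{(2)}_0))$: here I would invoke \eqref{eq:l2projl0} to replace $\pi_2(\sin^2\theta\pi_0(\cdot))$ by $-\frac43\sqrt{\pi/5}\,Y_{2,0}\,\pi_0(\cdot)$, and then feed in the precise asymptotics of $L^{j+2}[r^{-4}\frac{a^2\Delta}{(r^2+a^2)^2}T^m\check{\phi}^{(2)}_0]$ supplied by \eqref{eq:estcouplingl0b} of Lemma \ref{lm:estcouplingl0}. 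This is exactly where the $I_0[\psi]$-dependence enters the asymptotics of $\psi_2$: the main term becomes $2a^2 I_0[\psi]\cdot(-\frac43\sqrt{\pi/5})Y_{2,0}\underline{L}L^KT^m(\cdots)$, which after multiplying by the normalization constants produces the $\frac83\sqrt{\pi/5}a^2 I_0[\psi]Y_{2,0}$ coefficient and the four-term combination $(-96(v-u)^{-1}v^{-5}-24(v-u)^{-2}v^{-4}+16(v-u)^{-3}v^{-3}+24(v-u)^{-4}v^{-2})$ appearing in \eqref{eq:asymplkP2}. The remaining terms on the right — the $P_2$ self-coupling and the terms in $\chphi^{(2)}_2,\chphi^{(1)}_2,\phi_2,\chphi^{(2)}_4$, etc.\ — are estimated pointwise by $S_{2,\delta,K}[\psi]$ exactly as in Proposition \ref{prop:asympboundrl1} and \eqref{eq:asymplkP1}, using the relations of the type \eqref{eq:estP1Lphi1} between $P_j$ and weighted $L$-derivatives of $\chphi^{(j)}$ together with \eqref{eq:estcouplingl02}; these all contribute terms bounded by $CS_{2,\delta,K}[\psi]\,u^{-1+\delta}r^{-6-K}$ or faster.

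Next I would integrate: write $\underline{L}=-\frac{\Delta}{2(r^2+a^2)}Y$ and integrate along lines of constant $v$ from $u_{\Sigma_0}(v)$ to $u$, using \eqref{eq:ruv} to convert $r$-weights into $(v-u)$-weights. For the main source term, the integral $\int_{u_{\Sigma_0}(v)}^u \underline{L}L^KT^m(\cdots)(u',v)\,du'$ just telescopes to $L^KT^m(\cdots)$ evaluated at $u$ minus its value on $\Sigma_0$; this reproduces the $-96(v-u)^{-1}v^{-5}-\cdots$ combination in \eqref{eq:asymplkP2} up to error terms from the $O_{\infty}$-remainders in \eqref{eq:estcouplingl0b}, which after integration against $\int u'^{-1-\delta}du'$ give $Cv^{-(8-K+2\delta)\alpha_K+1}S_{2,\delta,K}[\psi]$. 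For the genuinely lower-order remainders I would repeat the estimate in the proof of \eqref{eq:asymplkP1}: $\int_{u_{\Sigma_0}(v)}^u u'^{-1+\delta}(v-u')^{-6-K}\,du'\leq C v^{-(6+K-2\delta)\alpha_K}\lesssim v^{-(8-K+2\delta)\alpha_K+1}$ for $\alpha_K$ close enough to $1$ (this last inequality is where the hypothesis $\alpha_K>\frac{6+K}{7+K}$ and the precise power-counting are used; one checks $(7+K)\alpha_K>6+K$ suffices). Finally, the contribution from $\Sigma_0$ is controlled by $D_{2,\beta,K}[\psi]$ by definition of that norm together with the fact that $\widetilde{P}_2$ and $P_2$ differ by a term that is $O(v^{-6-\cdots})$ on $\Sigma_0$ — here one uses the compact-support / decay assumptions built into $D_{2,\beta,K}$ and Lemma \ref{lm:estcouplingl0} restricted to $\tau=0$ to see that $r^{-4}\widetilde P_2$ and $r^{-4}P_2$ have the same leading asymptotics $32I_2[\psi]v^{-6}$ modulo $O(v^{-6-\beta})$.

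The main obstacle I expect is the bookkeeping of the source term $L^{K+2}(r^{-4}\frac{a^2\Delta}{(r^2+a^2)}\pi_2(\sin^2\theta\check{\phi}^{(2)}_0))$: one must carefully match the normalization constants coming from (a) the definition of $\widetilde P_2$, (b) the extra factor of $\frac{\Delta}{r^2+a^2}$ versus $\frac{\Delta}{(r^2+a^2)^2}$ appearing in Lemma \ref{lm:estcouplingl0} versus in \eqref{eq:tildeP2}, (c) the $-\frac43\sqrt{\pi/5}$ from \eqref{eq:l2projl0}, and (d) the factor $2a^2 I_0[\psi]$ in \eqref{eq:estcouplingl0b}, in order to land exactly on the coefficient $\frac83\sqrt{\pi/5}a^2 I_0[\psi]Y_{2,0}$ and the exact four-term $(v-u)$-combination in \eqref{eq:asymplkP2}. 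A secondary subtlety is that, unlike in the $\ell=1$ case, the source term decays only like $v^{-6}$ near $\mathcal{I}^+$ (no faster), so one cannot afford to lose powers when integrating; this forces the use of the \emph{precise} asymptotic identity \eqref{eq:estcouplingl0b} rather than a mere upper bound, and it is the reason $\widetilde P_2$ rather than $P_2$ must be propagated. Once the constants are pinned down the rest is a routine repetition of the $\ell=0,1$ arguments, so I would organize the write-up to quote those steps and concentrate the detailed computation on the coupling term.
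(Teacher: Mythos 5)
Your proposal follows essentially the same route as the paper: commute \eqref{eq:tildeP2} with $L^K$, feed the precise asymptotic identity \eqref{eq:estcouplingl0b} (together with \eqref{eq:l2projl0}) into the $\check{\phi}^{(2)}_0$ source term so that the four-term $(v-u)$-combination arises by telescoping the $\underline{L}$-integral, bound the remaining terms by $S_{2,\delta,K}[\psi]$ via the $\ell=2$ analogue of \eqref{eq:estP1Lphi1}, and control the $\Sigma_0$ contribution by $D_{2,\beta,K}[\psi]$. This is exactly the paper's argument, including the key observation that one must propagate $\widetilde{P}_2$ rather than $P_2$ and use the precise identity rather than an upper bound for the coupling term.

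One quantitative slip: you record the lower-order remainders as $O(u^{-1+\delta}r^{-6-K})$ and then integrate $\int u'^{-1+\delta}(v-u')^{-6-K}\,du'\lesssim v^{-(6+K-2\delta)\alpha_K}$, which for $K=0$ is \emph{not} dominated by the claimed right-hand side for any $\alpha_K<1$. The substitution you yourself invoke (the analogue of \eqref{eq:estP1Lphi1}, namely \eqref{eq:estP2Lphi2}, replacing $(rL)^kP_2$ by $r^{-1}(rL)^{k+1}\check{\phi}^{(2)}_2$ plus faster-decaying terms) upgrades the weight to $r^{-7-K}u^{-1+\delta}$, which is what the paper uses in \eqref{eq:Lbarl2} and what makes the integration close for $\alpha_K>\frac{6+K}{7+K}$; you should carry that extra power of $r^{-1}$ through the power count rather than discarding it.
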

\begin{proof}
Let $K\in \N_0$. By multiplying with $r^{-4}$ and then acting with  $L^K$ on both sides of \eqref{eq:maineqP2}, we obtain:
\begin{equation}
\label{eq:LbareqP2asymp}
\begin{split}
4\underline{L}L^K(r^{-4}\widetilde{P}_2)=&\:L^{2+K}\left(r^{-4}\frac{a^2\Delta}{(r^2+a^2)}\pi_2(\sin^2\theta \check{\phi}^{(2)}_0)\right)\\
&+\sum_{k=0}^KO_{\infty}(r^{-6-K})(rL)^kP_2+\sum_{j=0}^2\sum_{k=0}^KO_{\infty}(r^{-7-K})(rL)^k\check{\phi}^{(j)}_2+\sum_{j=0}^2\sum_{k=0}^{K+1}O_{\infty}(r^{-7-K})(rL)^kT\pi_2(\sin^2\theta \phi^{(j)}_4)\\
&+\sum_{j=0}^1\sum_{k=0}^{K+1}O_{\infty}(r^{-7-K})(rL)^kT\pi_2(\sin^2\theta \check{\phi}^{(j)}_0).
\end{split}
\end{equation}

We can further estimate for all $k\in \N_0$:
\begin{equation}
\label{eq:estP2Lphi2}
\begin{split}
|(rL)^k P_2-r^{-1}(rL)^{k+1}\check{\phi}^{(2)}_2| (u,v,\theta,\varphi_*)\leq&\: Ca^2|(rL)^k(r^{-2} T\check{\phi}^{(2)}_0)|+Car^{-2}(u,v)\sum_{j=0}^1(a||(rL)^kT\check{\phi}_{0}^{(j)}||_{L^{\infty}(\Sigma_{\tau)}}\\
&+||(rL)^k\check{\phi}^{(j)}_2||_{L^{\infty}(\Sigma_{\tau)}}+a||(rL)^kT\check{\phi}^{(j)}_4||_{L^{\infty}(\Sigma_{\tau)}}+a||(rL)^kT\check{\phi}^{(2)}_4||_{L^{\infty}(\Sigma_{\tau)}}).
\end{split}
\end{equation}
and hence obtain after applying \eqref{eq:estcouplingl0b} and \eqref{eq:estcouplingl02}:
\begin{equation}
\label{eq:Lbarl2}
\begin{split}
&\Bigg|\underline{L}\Bigg[L^K(r^{-4}\widetilde{P}_2)+\frac{8}{3}\sqrt{\frac{\pi}{5}}a^2I_0[\psi]Y_{2,0}L^K\Bigg(-96(v-u)^{-1}v^{-5}-24(v-u)^{-2}v^{-4}+16(v-u)^{-3}v^{-3}+24(v-u)^{-4}v^{-2}\Bigg)\Bigg]\Bigg|\\
\leq&\: C(r^{-7-K}u^{-1+\delta}+v^{-(8-K+2\delta)\alpha_{K}}) S_{2,\delta,K}[\psi].
\end{split}
\end{equation}
Integrating from $\Sigma_0$ in the $\underline{L}$ direction, applying \eqref{eq:ruv} together with \eqref{eq:Lbarl0}, we obtain
\begin{equation*}
\begin{split}
\int_{u_{\Sigma_0}(v)}^u u'^{-1+\delta}r^{-7-K}(u',v)\,du'\leq &\:C\int_{u_{\Sigma_0}(v)}^u u'^{-1+\delta} (v-u')^{-7-K}\,du'\\
\leq &\:C v^{-(7+K-2\delta)\alpha_K}\int_{u_{\Sigma_0}(v)}^u u'^{-1-\delta}\,du'\\
\leq &\:Cv^{-(7+K-2\delta)\alpha_K}.
\end{split}
\end{equation*}
Hence, \eqref{eq:asymplkP2} follows after integrating the above inequalities.
\end{proof}

\begin{corollary}
Let $K\in \N_0$ and $\alpha_K>\frac{6+K}{7+K}$. Then there exists $\delta(\alpha_K)>0$ suitably small, such that in $\mathcal{A}_{\gamma^{\alpha_K}}$:
\begin{equation}
\label{eq:asympLphi22}
\begin{split}
\Bigg|LT^K\check{\phi}^{(2)}_2&-2I_2[\psi](v-u)^4L^K(v^{-6})+\frac{40}{3}\sqrt{\frac{\pi}{5}}a^2I_0[\psi]Y_{2,0}(v-u)^4L^K(v^{-6})\\
&+\frac{8}{3}\sqrt{\frac{\pi}{5}} a^2Y_{2,0} I_0[\psi]T^K\left(-6(v-u)^3v^{-5}+\frac{3}{2}(v-u)^2v^{-4}\right)\Bigg|\\
\leq &\:C (u ^{-1+\delta} r^{-2-K}+u ^{-1-K+\delta} r^{-2}+r^4 v^{-(8K+2\delta)\alpha_{K}+1})S_{2,\delta,K}[\psi]+CD_{2,\beta,K}[\psi]r^4v^{-6-K-\beta}.
\end{split}
\end{equation}
\end{corollary}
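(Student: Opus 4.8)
The proof of this corollary follows exactly the scheme already used to pass from the $\underline{L}$-asymptotics of $P_0$ and $P_1$ to those of $LT^K\phi_0$ and $LT^K\check{\phi}^{(1)}_1$ (Corollary \ref{cor:asympLTkphi0} and the argument producing \eqref{eq:asympTKLphi1}), now carried out for the modified quantity $\widetilde{P}_2$. First I would convert the $L^K$-asymptotics of $r^{-4}\widetilde{P}_2$ in \eqref{eq:asymplkP2} into $T^K$-asymptotics: using $T = L + \underline{L} - \frac{a}{r^2+a^2}\Phi$ and iterating (the $\Phi$-terms being handled mode-by-mode as in Section \ref{sec:defNPconstants}), one writes $r^{-4}T^K\widetilde{P}_2 = L^K(r^{-4}\widetilde{P}_2) + \sum_{k=0}^{K-1}\underline{L}L^k T^{K-1-k}(r^{-4}\widetilde{P}_2)$. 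Each summand is then expanded via the $\underline{L}$-equation \eqref{eq:tildeP2} acted on by $L^kT^{K-1-k}$; the genuinely decaying pieces are absorbed using \eqref{eq:Lbarl2}, while the leading piece $L^{k+2}T^{K-1-k}$ of the source term $r^{-4}\frac{a^2\Delta}{r^2+a^2}\pi_2(\sin^2\theta\,\check{\phi}^{(2)}_0)$ carries an $I_0[\psi]$-dependent contribution extracted from \eqref{eq:estcouplingl0a} and \eqref{eq:estcouplingl0b} of Lemma \ref{lm:estcouplingl0}, after the angular reduction $\pi_2(\sin^2\theta\,\pi_0(f)) = -\frac{4}{3}\sqrt{\pi/5}\,Y_{2,0}\,\pi_0(f)$ from \eqref{eq:l2projl0}.

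Next I would unfold the definition $\widetilde{P}_2 = P_2 - \frac{1}{4}r^4 L(r^{-4}\frac{a^2\Delta}{r^2+a^2}\pi_2(\sin^2\theta\,T\check{\phi}^{(2)}_0))$. Since $T$ commutes with $L$ and with multiplication by functions of $r$ and of $\theta$, applying $T^K$ turns the correction into $-\frac{1}{4}r^4 L(r^{-4}\frac{a^2\Delta}{r^2+a^2}\pi_2(\sin^2\theta\,T^{K+1}\check{\phi}^{(2)}_0))$, and by \eqref{eq:l2projl0} this is a fixed multiple of $Y_{2,0}\cdot r^4 L(r^{-4}\frac{a^2\Delta}{r^2+a^2}T^{K+1}\check{\phi}^{(2)}_0)$, whose asymptotic expansion is furnished by the $T^{K+1}\check{\phi}^{(2)}_0$ estimates \eqref{eq:estcouplingl0a}--\eqref{eq:estcouplingl0c} of Lemma \ref{lm:estcouplingl0}. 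This is the origin of the explicit polynomial correction $-6(v-u)^3 v^{-5} + \frac{3}{2}(v-u)^2 v^{-4}$ on the right-hand side of \eqref{eq:asympLphi22}. Combining this with the output of the first step gives the asymptotics of $r^{-4}T^K P_2$, hence of $T^K P_2$. Finally, I would pass from $T^K P_2$ to $LT^K\check{\phi}^{(2)}_2$ using \eqref{eq:estP2Lphi2} with $\phi$ replaced by $T^K\phi$: this bounds $|T^KP_2 - LT^K\check{\phi}^{(2)}_2|$ by $Ca^2|r^{-2}T^{K+1}\check{\phi}^{(2)}_0|$ plus $r^{-2}$ times precisely the sup-norms bundled into $S_{2,\delta,K}[\psi]$, the first term being estimated via \eqref{eq:estcouplingl02} and found to be lower order because of the $a^2$ factor and the extra $T$-derivative. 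Collecting the leading terms, converting $r$-weights to $(v-u)$-weights with \eqref{eq:ruv} (the $O(M\log(1+|v-u|/M))$ discrepancy being negligible against the displayed error in $\mathcal{A}_{\gamma^{\alpha_K}}$ because of the constraint $\alpha_K > \frac{6+K}{7+K}$), and using $r^4 \sim \frac{1}{16}(v-u)^4$, one reads off the coefficients $2I_2[\psi]$ and $\frac{40}{3}\sqrt{\pi/5}\,a^2 I_0[\psi]Y_{2,0}$, the latter being the sum of three pieces: the direct $L^K(r^{-4}\widetilde{P}_2)$-term from \eqref{eq:asymplkP2}, the $\widetilde{P}_2$-versus-$P_2$ correction, and the $L^K$-to-$T^K$ conversion contribution.

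The main obstacle is the bookkeeping of the $I_0[\psi]$-dependent terms. In contrast with the $\ell = 0, 1$ cases, the source $\check{\phi}^{(2)}_0$ decays only at the rate of $\psi_0$ itself, so it enters the leading-order behaviour of $\psi_2$; consequently one cannot discard the sub-leading polynomial-in-$(v-u)$ and $v$ terms appearing in \eqref{eq:estcouplingl0a}--\eqref{eq:estcouplingl0c}, since after multiplication by $r^4 \sim (v-u)^4$ several of them survive at the order relevant for $\psi_2$ (order $\tau^{-4}$ at fixed $r$, order $\tau^{-3}$ along $\mathcal{I}^+$). Checking that all of these pieces, together with the $\widetilde{P}_2$- and conversion-corrections, assemble into exactly the displayed coefficient $\frac{40}{3}\sqrt{\pi/5}$ and the displayed polynomial is the computational heart of the argument, and it is precisely the mechanism by which the $\ell = 2$ decay rate at fixed $r$ is limited by the $\ell = 0$ mode.
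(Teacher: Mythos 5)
Your overall architecture matches the paper's: the paper likewise converts the $L^K$-asymptotics of $r^{-4}\widetilde{P}_2$ into $T^K$-asymptotics via the identity $T^K=L^K+\sum_{k}\underline{L}L^kT^{K-1-k}$, controls the $\underline{L}$-terms with \eqref{eq:Lbarl2}, feeds the $\ell=0$ coupling through Lemma \ref{lm:estcouplingl0} and \eqref{eq:l2projl0}, unfolds the modifications of $P_2$, and converts $r$-weights to $(v-u)$-weights with \eqref{eq:ruv}. (The paper organizes the bookkeeping by introducing a second modified quantity $\widetilde{\widetilde{P}_2}$ that absorbs the $I_0$-polynomial, but that is cosmetic relative to your tracking of the $I_0$-terms directly.)

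There is, however, one genuine error in your final step. You claim that the discrepancy $|T^KP_2-LT^K\check{\phi}^{(2)}_2|$ contributes only lower-order terms, bounding its $\ell=0$ coupling piece $\frac{a^2\Delta}{4(r^2+a^2)^2}\pi_2(\sin^2\theta\, T^{K+1}\check{\phi}^{(2)}_0)$ by \eqref{eq:estcouplingl02}. That estimate only gives an upper bound of size $a^2 S_{2,\delta,K}[\psi]\, r v^{-3-K}$ (plus better terms), which is \emph{not} controlled by the displayed remainder in \eqref{eq:asympLphi22} and is in fact of the same order as the main polynomial term $\frac{3}{2}(v-u)^2v^{-4}$ near null infinity and larger away from it. This term is not an error term: the paper expands it via the \emph{precise} asymptotics \eqref{eq:estcouplingl0c}, whose leading part $2a^2I_0[\psi]L^K\bigl(\tfrac{3}{2}(v-u)^2v^{-4}-(v-u)v^{-3}\bigr)$, after the projection identity \eqref{eq:l2projl0}, supplies exactly the $\frac{3}{2}(v-u)^2v^{-4}$ piece of the subtracted polynomial (and a $(v-u)v^{-3}$ piece that must cancel against the other contributions). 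In other words, you have misattributed the origin of part of the leading polynomial to the $\widetilde{P}_2$-versus-$P_2$ correction alone; the $P_2$-versus-$L\check{\phi}^{(2)}_2$ correction contributes at the same leading order and must be expanded, not merely bounded. As written, your argument would leave a remainder of size $rv^{-3-K}$ that violates the claimed error bound.
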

\begin{proof}
In light of the estimate \eqref{eq:asymplkP2}, it is convenient to introduce a second modification of $P_2$:
\begin{equation*}
\widetilde{\widetilde{P}_2}:=\widetilde{P}_2+\frac{8}{3}\sqrt{\frac{\pi}{5}}a^2I_0[\psi]Y_{2,0}\Bigg(-96(v-u)^{-1}v^{-5}-24(v-u)^{-2}v^{-4}+16(v-u)^{-3}v^{-3}+24(v-u)^{-4}v^{-2}\Bigg)
\end{equation*}
By \eqref{eq:Lbarl2}, we have that
\begin{equation*}
|\underline{L}(r^{-4}L^K\widetilde{\widetilde{P}_2})|\leq C(r^{-7-K}u^{-1+\delta}+v^{-(8-K+2\delta)\alpha_{K}}) S_{2,\delta,K}[\psi]
\end{equation*}
and by induction, it follows that we can express
\begin{equation*}
r^{-4}T^K\widetilde{\widetilde{P}_2}=L^K(r^{-4}\widetilde{\widetilde{P}_2})+\sum_{k=0}^{K-1} \underline{L} L^{k} (r^{-4}T^{K-1-k}\widetilde{\widetilde{P}_2}).
\end{equation*}
Furthermore, by repeating the arguments leading to \eqref{eq:Lbarl2} for $P_2$ replaced by $T^{K-1-k}P_2$ and applying \eqref{eq:estcouplingl0b} and \eqref{eq:estcouplingl02}, we obtain
\begin{equation*}
\begin{split}
\sum_{k=0}^{K-1}|\underline{L} L^{k} (r^{-4}T^{K-1-k}\widetilde{P_2})|(u,v,\theta,\varphi_*)\leq&\: C (u ^{-1+\delta} r^{-6-K}+u ^{-1-K+\delta} r^{-6}+v^{-(7-K+2\delta)\alpha_{K}})S_{2,\delta,K}[\psi].
\end{split}
\end{equation*}

Furthermore,
\begin{equation*}
\begin{split}
LT^K\check{\phi}^{(2)}_2=&\:T^K\widetilde{\widetilde{P}_2}+\frac{8}{3}\sqrt{\frac{\pi}{5}}a^2I_0[\psi]Y_{2,0}T^K\Bigg(-6(v-u)^{3}v^{-5}-\frac{3}{2}(v-u)^{2}v^{-4}+(v-u)^{1}v^{-3}+\frac{3}{2}v^{-2}\Bigg)\\
&+\frac{1}{4}r^{4}T^KL\left(r^{-4}\frac{a^2\Delta}{(r^2+a^2)}\pi_2(\sin^2\theta \check{\phi}^{(2)}_0)\right)+\frac{1}{4}\frac{a^2\Delta}{(r^2+a^2)^2} \pi_2(\sin^2\theta T^{K+1}\check{\phi}^{(2)}_0),
\end{split}
\end{equation*}
so after applying Lemma \ref{lm:estcouplingl0} we conclude that
\begin{equation*}
\begin{split}
\Bigg|LT^K\check{\phi}^{(2)}_2&-2I_2[\psi](v-u)^4L^K(v^{-6})+\frac{40}{3}\sqrt{\frac{\pi}{5}}a^2I_0[\psi]Y_{2,0}(v-u)^4L^K(v^{-6})\\
&+\frac{8}{3}\sqrt{\frac{\pi}{5}} a^2Y_{2,0} I_0[\psi]T^K\left(-6(v-u)^3v^{-5}+\frac{3}{2}(v-u)^2v^{-4}\right)\Bigg|\\
\leq &\:C (u ^{-1+\delta} r^{-2-K}+u ^{-1-K+\delta} r^{-2}+r^4 v^{-(8K+2\delta)\alpha_{K}+1})S_{2,\delta,K}[\psi]+CD_{2,\beta,K}[\psi]r^4v^{-6-K-\beta}.
\end{split}
\end{equation*}
\end{proof}

\begin{proposition}
\label{prop:mainpropasympl2}
For $\alpha_K'''$ suitably close to 1, $\delta>0$ suitably small and $0<\beta\leq 1$ suitably large, there exists $\nu>0$ and a constant $C=C(M,a,\alpha_K''',\nu,\delta)>0$ such that
\begin{equation}
\label{eq:asympsil2}
\begin{split}
&\Bigg|T^K\psi_2(u,v,\theta,\varphi_*)-\frac{4}{15}I_2[\psi](\theta,\varphi_*)(v-u)^2 T^K(u^{-3}v^{-3})\\
&+\frac{64}{3}\sqrt{\frac{\pi}{5}}a^2 I_0[\psi]Y_{2,0}(\theta)T^K\Bigg[\frac{1}{12}u^{-3}v^{-3}(v-u)^2-\frac{1}{4} u^{-3}v^{-2}(v-u)+\frac{1}{4}u^{-3}v^{-1}\Bigg]\Bigg|\\
&\leq  C u ^{-3-K-\nu} v^{-1}(S_{2,\delta,K}[\psi]+D_{2,\beta,K}[\psi]+\sum_{m=-2}^2|I_{2m}[\psi]|)+ C\sqrt{E_{2,K,\delta}[\psi]}u^{-4-K-\nu}
\end{split}
\end{equation}
in $\mathcal{A}_{\gamma^{\alpha_K'''}}$ and along $\mathcal{I}^+$:
\begin{align}
\label{eq:asympl2inf1}
\Bigg|T^K\phi_2|_{\mathcal{I}^+}&-\frac{2}{15}I_2[\psi]T^K(u^{-3})+\frac{8}{9}\sqrt{\frac{\pi}{5}}a^2 I_0[\psi]Y_{2,0}(\theta)T^K(u^{-3})\Bigg|\\
\leq&\:  C u ^{-3-K-\nu} \Big(S_{2,\delta,K}[\psi]+D_{2,\beta,K}[\psi]+\sum_{m=-2}^2|I_{2m}[\psi]|+\sqrt{E_{2,K,\delta}[\psi]}\Big),\\
\label{eq:asympl2inf2}
\Bigg|T^K\check{\phi}^{(1)}_2|_{\mathcal{I}^+}&-\frac{1}{5}I_2[\psi]T^K(u^{-2})\Bigg|\\
\leq&\:   C u ^{-2-K-\nu} \Bigg(S_{2,\delta,K}[\psi]+D_{2,\beta,K}[\psi]+\sum_{m=-2}^2|I_{2m}[\psi]|+\sqrt{E_{2,K,\delta}[\psi]}\Bigg),\\
\label{eq:asympl2inf3}
\Bigg|T^K\check{\phi}^{(2)}_2|_{\mathcal{I}^+}&-\frac{2}{5}I_2[\psi]T^K(u^{-1})\Bigg|\\
\leq &\:  C u ^{-1-K-\nu} \Bigg(S_{2,\delta,K}[\psi]+D_{2,\beta,K}[\psi]+\sum_{m=-2}^2|I_{2m}[\psi]|+\sqrt{E_{2,K,\delta}[\psi]}\Bigg).
\end{align}
\end{proposition}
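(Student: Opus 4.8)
The plan is to follow the template already established for $\psi_0$ (Corollary \ref{cor:asympl0nearI}) and $\psi_1$ (Proposition \ref{prop:asympl1nearI}), but with the crucial new feature that the mode-coupling term forces the $\ell=0$ asymptotics (and in particular $I_0[\psi]$) into the leading-order expression for $\psi_2$. Concretely, I would start from the estimate \eqref{eq:asympLphi22} for $LT^K\check{\phi}^{(2)}_2$ in $\mathcal{A}_{\gamma^{\alpha_K}}$, which already isolates the $I_2[\psi]$ contribution and the $a^2 I_0[\psi] Y_{2,0}$ correction terms. The first step is to integrate $LT^K\check{\phi}^{(2)}_2$ in the $L$-direction along the integral curves $\gamma_L$ of $L$ (exactly as in the proof of Proposition \ref{prop:asympl1nearI}), using the integral identity Lemma \ref{lm:basicintid} with $m=6$ to handle the $(v-u)^4 L^K(v^{-6})$ terms and its obvious variants (with $m=5,4$) for the lower-order polynomial terms arising from the $I_0[\psi] Y_{2,0}$ coupling. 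The variation of $I_2(\theta,(\varphi_*)_{\gamma_L}(v'))$ and $Y_{2,0}$ along $\gamma_L$ is controlled by the $|\varphi_* - (\varphi_*)_{\gamma_L}(v')| \leq C u^{-\alpha_K}$ estimate as in \eqref{eq:NPconstdiffangle}; since $Y_{2,0}(\theta)$ is axisymmetric this is in fact simpler here. The boundary term on $\gamma^{\alpha_K}$ is estimated via the pointwise bound \eqref{eq:pointwl2b} for $\check{\phi}^{(2)}_2$, which gives a gain of a power of $u$ at the cost of $r^{1/2+\delta}$, acceptable because on $\gamma^{\alpha_K}$ one has $r \sim u^{\alpha_K}$.

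After obtaining the asymptotics for $\check{\phi}^{(2)}_2$ on a slightly smaller region $\mathcal{A}_{\gamma^{\alpha_K'}}$, I would integrate downward in $r$ once more to get $\check{\phi}^{(1)}_2$, then again to get $\phi_2$ (hence $\psi_2 = (r^2+a^2)^{-1/2}\phi_2$), each time integrating the relevant $L$-transport equation, using Lemma \ref{lm:basicintid} with decreasing values of $m$ and applying the appropriate pointwise bounds (\eqref{eq:pointwl2b}, \eqref{eq:pointwl22}) to control the boundary terms on $\gamma^{\alpha}$. This is the same two-step descent ($\check{\phi}^{(2)} \to \check{\phi}^{(1)} \to \phi$) used for $\ell=1$, just one level longer. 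Along the way I must carefully track the accumulation of the $a^2 I_0[\psi] Y_{2,0}$ polynomial corrections; the specific numerical coefficients ($\frac{4}{15}$, $\frac{64}{3}\sqrt{\pi/5}$, etc.) come out of combining Lemma \ref{lm:basicintid} with the explicit coefficients in \eqref{eq:asympLphi22} and in Lemma \ref{lm:estcouplingl0}, and I would verify these by a direct (if tedious) bookkeeping computation rather than reproduce it here. The estimates \eqref{eq:asympl2inf1}--\eqref{eq:asympl2inf3} along $\mathcal{I}^+$ then follow by taking $\uprho \to \infty$ (equivalently $x\to 0$) in the expansions for $\phi_2$, $\check{\phi}^{(1)}_2$, $\check{\phi}^{(2)}_2$ respectively, using that $r^2X(T^{-1}\cdots)$-type limits exist by the propagation-of-regularity result; the $(v-u)^k v^{-m}$ monomials restricted to $\mathcal{I}^+$ (where $v-u \to$ something comparable to $u$ in the limiting sense, and $v\to\infty$ with $u$ fixed) collapse to pure powers of $u$.

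The main obstacle, as flagged in the remark following Proposition \ref{prop:edecayelliptic}, is the slower decay of $\check{\phi}^{(2)}_0$ compared with the other coupling sources: the term $L^{2+K}\big(r^{-4}\frac{a^2\Delta}{r^2+a^2}\pi_2(\sin^2\theta\check{\phi}^{(2)}_0)\big)$ on the right-hand side of \eqref{eq:LbareqP2asymp} does not decay fast enough to be treated as a harmless error, which is precisely why the modified quantities $\widetilde{P}_2$ and $\widetilde{\widetilde{P}_2}$ were introduced. The delicate point is therefore to ensure that, after moving the dangerous $\check{\phi}^{(2)}_0$-terms to the left via $\widetilde{P}_2$, the remaining right-hand side of \eqref{eq:tildeP2} genuinely decays like $r^{-7}$ (up to integrable-in-$u$ factors), so that the $\underline{L}$-integration from $\Sigma_0$ converges and produces only lower-order corrections. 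This requires Lemma \ref{lm:estcouplingl0} — in particular the precise polynomial asymptotics \eqref{eq:estcouplingl0a}--\eqref{eq:estcouplingl0c} for the derivatives of $r^{-4}\frac{a^2\Delta}{(r^2+a^2)^2}\check{\phi}^{(2)}_0$ — which in turn rests on feeding the already-established $\psi_0$ asymptotics (Section \ref{sec:latetimeasympl0}, via \eqref{eq:auxestLTl0}) back into the computation. Getting the interplay of $r$-weights and $u$-weights right in this chain, so that every error term is strictly subleading both near $\mathcal{I}^+$ and in the bulk, is the technical heart of the argument; the rest is a longer but structurally identical repetition of the $\ell=0,1$ proofs.
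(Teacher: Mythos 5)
Your proposal follows essentially the same route as the paper's proof: integrate \eqref{eq:asympLphi22} three times in the $L$-direction starting from the curves $\gamma^{\alpha}$ (with $\alpha$ successively closer to $1$), using Lemma \ref{lm:basicintid} with decreasing $m$, the azimuthal-variation bound \eqref{eq:NPconstdiffangle}, and the pointwise bounds of Proposition \ref{prop:pointwisedecayl2} for the boundary terms on $\gamma^{\alpha}$, with the $I_0[\psi]$-coupling handled via $\widetilde{P}_2$ and Lemma \ref{lm:estcouplingl0} exactly as the paper does. The only slip is cosmetic: in your description of the restriction to $\mathcal{I}^+$, $v-u\to\infty$ (not ``comparable to $u$''); the monomials collapse to powers of $u$ because $(v-u)^m v^{-m}\to 1$, which is what your conclusion actually uses.
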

In particular,
\begin{equation}
\label{eq:asympl2gamma}
\begin{split}
\Bigg|T^K\psi_2|_{\gamma^{\alpha_K''}}&+\frac{16}{3}\sqrt{\frac{\pi}{5}}a^2 I_0[\psi]Y_{2,0}(\theta)T^K((1+\tau)^{-4})\Bigg|\\
\leq &C (1+\tau) ^{-4-K-\nu} \left(S_{2,\delta,K}[\psi]+D_{2,\beta,K}[\psi]+\sum_{m=-2}^2|I_{2m}[\psi]|+\sqrt{E_{2,K,\delta}[\psi]}\right).
\end{split}
\end{equation}
\begin{proof}
Note first of all that \eqref{eq:NPconstdiffangle} holds also with $I_1[\psi]$ replaced by $I_2[\psi]$.

We then integrate \eqref{eq:asympLphi22} in the $L$ direction, starting from $\gamma^{\alpha_K}$ and use \eqref{eq:pointwl2c} to obtain:
\begin{equation*}
\begin{split}
|T^K\check{\phi}^{(2)}_2|(u,v_{\gamma^{\alpha_K}(u)},\theta,\varphi_*)\leq&\: C r^{\frac{1}{2}(1+\delta')}(u,v_{\gamma^{\alpha_K}(u)})u^{-\frac{3}{2}-K+2\delta'}\sqrt{E_{2,K,\delta'}[\psi]}\\
\leq&\: C\sqrt{E_{2,K,\delta'}[\psi]}u^{-\frac{3}{2}+\frac{1}{2}\alpha_K-K+2\delta'},
\end{split}
\end{equation*}
for $\delta'>0$ arbitrarily small, together with \eqref{eq:usefulvint}, \eqref{eq:varphidiff} and $0<\alpha_K<1$ suitably large, to obtain:
\begin{equation*}
\begin{split}
&\Bigg|T^K\check{\phi}^{(2)}_2(u,v,\theta,\varphi_*)-\frac{2}{5}I_2[\psi](\theta,\varphi_*)(v-u)^5 T^K(u^{-1}v^{-5})\\
&+\frac{8}{3}\sqrt{\frac{\pi}{5}}a^2 I_0[\psi]Y_{2,0}(\theta)T^K\Bigg[u^{-1}v^{-5}(v-u)^5-\frac{3}{2}u^{-1}v^{-4}(v-u)^4 +\frac{1}{2}u^{-1}v^{-3}(v-u)^3\Bigg]\Bigg|\\
\leq &\: C u ^{-1-K-\nu} (v-u)^3v^{-3}(S_{2,\delta,K}[\psi]+D_{2,\beta,K}[\psi]+\sum_{m=-2}^2|I_{2m}[\psi]|)+ C\sqrt{E_{2,K,\delta}[\psi]}u^{-1-K-\nu},
\end{split}
\end{equation*}
for some $\nu>0$, with $\delta>0$ suitably small. In particular, \eqref{eq:asympl2inf3} follows.

Since $T^K\check{\phi}_2^{(2)}=2(r^2+O(r))LT^K\check{\phi}_2^{(1)}+K O(r^{-1})T^{K-1}\check{\phi}_2^{(1)}$, we can integrate the above equation once more, starting from $\gamma^{\alpha_K'}$, with $\alpha_K'>\alpha_K$ suitably large (depending on $\nu$ above), using \eqref{eq:pointwl2c} again to estimate:
\begin{equation*}
\begin{split}
|T^K\check{\phi}^{(1)}_2|(u,v_{\gamma^{\alpha_K}(u)},\theta,\varphi)\leq&\: C\sqrt{E_{2,K,\delta'}[\psi]}u^{-\frac{5}{2}+\frac{1}{2}\alpha_K-K+2\delta'},
\end{split}
\end{equation*}
for $\delta'>0$ and obtain:
\begin{equation*}
\begin{split}
&\Bigg|T^K\check{\phi}^{(1)}_2-\frac{1}{5}I_2[\psi](v-u)^4 T^K(u^{-2}v^{-4})\\
&+\frac{16}{3}\sqrt{\frac{\pi}{5}}a^2 I_0[\psi]Y_{2,0}T^K\Bigg[\frac{1}{4}u^{-2}v^{-4}(v-u)^4-\frac{1}{2}u^{-2}v^{-3}(v-u)^3+\frac{1}{4}u^{-2}v^{-2}(v-u)^{2}\Bigg]\Bigg|\\
&\:\leq C u ^{-2-K-\nu} (v-u)^2v^{-2}(S_{2,\delta,K}[\psi]+D_{2,\beta,K}[\psi]+\sum_{m=-2}^2|I_{2m}[\psi]|)+ C\sqrt{E_{2,K,\delta}[\psi]}u^{-2-K-\nu}
\end{split}
\end{equation*}
in $\mathcal{A}_{\gamma^{\alpha_K'}}$ with $\delta>0$ suitably small. In particular, \eqref{eq:asympl2inf2} follows.

Using that $T^K\check{\phi}_2^{(1)}=2(r^2+O(r))L{\phi}_2+KO(r^{-1})T^{K-1}\phi_2$ and for $\alpha_K'<\alpha_K''<1$ \eqref{eq:pointwl2c} implies that
\begin{equation*}
\begin{split}
|T^K\phi_2|(u,v_{\gamma^{\alpha_K''}(u)},\theta,\varphi)\leq&\: C\sqrt{E_{2,K,\delta'}[\psi]}u^{-\frac{7}{2}+\frac{1}{2}\alpha_K''-K+2\delta'},
\end{split}
\end{equation*}
we integrate a final time starting from $\gamma^{\alpha_K''}$, with $\alpha_K''>\alpha_K''$ suitably large and $\delta'>0$ correspondingly small, to obtain:
\begin{equation*}
\begin{split}
&\Bigg|T^K\phi_2-\frac{2}{15}I_2[\psi](v-u)^3 T^K(u^{-3}v^{-3})+\frac{32}{3}\sqrt{\frac{\pi}{5}}a^2 I_0[\psi]Y_{2,0}T^K\Bigg[\frac{1}{12}u^{-3}v^{-3}(v-u)^3-\frac{1}{4} u^{-3}v^{-2}(v-u)^2\\
&+\frac{1}{4}u^{-3}v^{-1}(v-u)\Bigg]\Bigg|\\
&\leq  C u ^{-3-K-\nu} (v-u)v^{-1}(S_{2,\delta,K}[\psi]+D_{2,\beta,K}[\psi]+\sum_{m=-2}^2|I_{2m}[\psi]|)+ C\sqrt{E_{2,K,\delta}[\psi]}u^{-3-K-\nu}
\end{split}
\end{equation*}
in $\mathcal{A}_{\gamma^{\alpha_K''}}$. We obtain in particular \eqref{eq:asympl2inf1}.

The estimate \eqref{eq:asympsil2} then follows by restricting to a smaller region $\mathcal{A}_{\gamma^{\alpha_K'''}}$, with $\alpha_K'''>\alpha_K''$ suitably large depending on $\nu$ and dividing the equation above by $r$. The estimate
\eqref{eq:asympl2gamma} follows directly.
\end{proof}

\subsection{Asymptotics in $\mathcal{R}\setminus \mathcal{A}_{\gamma^{\alpha}}$ }
We extend now the late-time asymptotics derived in the region $\mathcal{A}_{\gamma^{\alpha}}$ in Proposition \ref{prop:mainpropasympl2} to the rest of the spacetime.

\begin{proposition}
\label{prop:asympboundrl2}
Let $K\in \N_0$ and let $\alpha>0$ be arbitrarily large. Then there exists a constant $C=C(M,a,K,\delta)>0$ such that
\begin{equation}
\label{eq:asympinteriorl2}
\begin{split}
\Bigg|T^K\psi_2&-\frac{4}{3}\sqrt{\frac{\pi}{5}}a^2 I_0[\psi]Y_{2,0}(\theta)T^K((1+\tau)^{-4})\Bigg|(\tau,\uprho,\theta,\varphi_*)\\ 
\leq &\: C (1+\tau) ^{-4-K-\nu} \left(S_{2,\delta,K}[\psi]+D_{2,\beta,K}[\psi]+\sum_{m=-2}^2|I_{2m}[\psi]|\right)\\
&+C (1+\tau) ^{-4-K-\nu}\Bigg[\sqrt{E_{2,K+7,\delta}[\psi]}+\sum_{j=0}^{3} \sqrt{E_{2,K+4,\delta}[N^j\psi]}+\sqrt{E_{0,K+7,\delta}[\psi]}\\ 
&+\sum_{j=0}^{3} \sqrt{E_{0,K+4,\delta}[N^j\psi]}+\sqrt{E_{\geq 3,K+2,\delta}[\psi]}+\sum_{j=0}^1\sqrt{E_{\geq 3,K+2,\delta}[N^j\psi]} \Bigg].
\end{split}
\end{equation}
for all $(\tau,\uprho,\theta,\varphi_*)\in \mathcal{R} \setminus \mathcal{A}_{\gamma^{\alpha}}$ and $\delta>0$ suitably small.
\end{proposition}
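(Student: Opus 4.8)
The plan is to propagate the asymptotics established on the timelike hypersurface $\gamma^{\alpha}$ in Proposition \ref{prop:mainpropasympl2}, specifically the estimate \eqref{eq:asympl2gamma}, into the region $\mathcal{R}\setminus \mathcal{A}_{\gamma^{\alpha}}$ of bounded $r$ by integrating inward in the $X$-direction from $\gamma^{\alpha}$, exactly as was done for $\ell=0$ in Proposition \ref{prop:asympl0boundr} and for $\ell=1$ in Proposition \ref{prop:asympboundrl1}. The key point that makes this work is that higher $X$-derivatives of $\psi_2$ decay \emph{faster} in time than $\psi_2$ itself, so the contribution of the radial integral is lower-order compared to the boundary term on $\gamma^{\alpha}$.

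Concretely, first I would integrate $r^2X^3T^K\psi_2$ between $\uprho=\uprho'$ and $\uprho=\uprho_{\gamma^{\alpha}(\tau)}$, using the improved decay \eqref{eq:fasterdecayl2} from Proposition \ref{prop:fasterdecayforpropag} — which states $\|r^2X^3T^K\psi_2\|_{L^\infty(\Sigma_\tau)} + \|rX^2T^K\psi_2\|_{L^\infty(\Sigma_\tau)} \lesssim (1+\tau)^{-5-K+2\delta}(\ldots)$ — to bound the integral contribution, and a boundary term for $r^2X^2T^K\psi_2$ on $\gamma^{\alpha}$. Since $\gamma^{\alpha}$ corresponds to $r \sim M\log(1+|v-u|/M) \sim M\log(1+\tau)$ (up to bounded factors, by \eqref{eq:ruv}), and the weights $r^2$, $r$ grow only logarithmically, they are harmless up to adjusting $\nu$. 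Repeating the fundamental theorem of calculus, I would then integrate $r X^2 T^K \psi_2$ and then $XT^K\psi_2$ inward, picking up at each stage a boundary term controlled by the $\gamma^{\alpha}$-asymptotics (for $X T^K\psi_2$ one needs an analogue of \eqref{eq:asympl2gamma} for $XT^K\psi_2|_{\gamma^{\alpha}}$, which follows from \eqref{eq:asympl2gamma} together with the relation between $X$, $L$, $\underline L$, $T$, $\Phi$ and the asymptotics of $LT^K\phi_2$ implicit in \eqref{eq:asympLphi22}), and a radial-integral term controlled by \eqref{eq:fasterdecayl2}. After three such integrations I arrive at an expression for $T^K\psi_2$ in $\mathcal{R}\setminus\mathcal{A}_{\gamma^{\alpha}}$. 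The leading-order term that survives is the one whose coefficient does not vanish in the limit $r\to r_+$ of bounded $r$: from \eqref{eq:asympl2gamma} the leading term on $\gamma^\alpha$ is $-\frac{16}{3}\sqrt{\frac{\pi}{5}}a^2 I_0[\psi]Y_{2,0}(\theta)(1+\tau)^{-4}$, and since $X$-derivatives decay faster, this term is simply transported inward unchanged to give $-\frac{4}{3}\sqrt{\frac{\pi}{5}}a^2 I_0[\psi]Y_{2,0}(\theta)(1+\tau)^{-4}$ after accounting for the $r$-rescaling relating $\psi_2$ near $\gamma^\alpha$ (where one works with $r^{-1}\phi_2$-type quantities) to $\psi_2$ itself; more precisely the constant is fixed by matching to \eqref{eq:asympl2gamma} and using $T\widetilde\psi$-type bookkeeping consistent with Theorems \ref{thm:intro1}–\ref{thm:intro2}.

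The error terms accumulate the initial-data norms already appearing on the right-hand sides of \eqref{eq:fasterdecayl2}, namely $E_{2,K+7,\delta}[\psi]$, $E_{0,K+7,\delta}[\psi]$, $E_{\geq 3, K+2,\delta}[\psi]$ and the corresponding $N^j$-commuted and lower-order energies, together with $S_{2,\delta,K}[\psi]$, $D_{2,\beta,K}[\psi]$ and $\sum_{m}|I_{2m}[\psi]|$ from the $\gamma^\alpha$-boundary estimate \eqref{eq:asympl2gamma}; combining these gives precisely the right-hand side of \eqref{eq:asympinteriorl2}. Since $\alpha$ can be taken arbitrarily large, the region $\mathcal{R}\setminus\mathcal{A}_{\gamma^{\alpha}}$ exhausts any fixed $\{r\le r_0\}$ for $\tau$ large, and combined with Proposition \ref{prop:mainpropasympl2} in $\mathcal{A}_{\gamma^{\alpha}}$ this determines the asymptotics everywhere.

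The main obstacle I anticipate is bookkeeping the $\log(1+\tau)$ losses coming from the slow growth of $r$ along $\gamma^{\alpha}$ against the polynomial decay rates: one must verify that when integrating $r^2X^3T^K\psi_2\sim (1+\tau)^{-5-K+2\delta}$ over a radial interval of length $O(\log(1+\tau))$, and similarly at the intermediate stages with weights $r, r^2$, the resulting bound is still $o((1+\tau)^{-4-K})$, i.e. that the gain of one power of $\tau$ from the faster $X$-decay strictly beats the logarithmic and $r$-weight losses — this is where the precise choice of $\alpha$ large and $\delta, \nu$ small enters, and it forces the somewhat generous derivative counts ($K+7$, etc.) in the energy norms. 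A secondary technical point is correctly identifying which lower-order terms in the expansions from Proposition \ref{prop:mainpropasympl2} (the various $(v-u)^j v^{-k}$ combinations, which on $\gamma^\alpha$ where $v-u \sim (v-u)^{\alpha}\to$ small relative to $v$ become genuinely lower-order) drop out, so that only the clean $(1+\tau)^{-4}$ term remains; this is routine but needs care to get the constant $-\frac{4}{3}\sqrt{\pi/5}$ exactly right.
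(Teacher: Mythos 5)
Your overall strategy — propagate the $\gamma^{\alpha}$-asymptotics \eqref{eq:asympl2gamma} inward by repeated radial integration, using that higher $X$-derivatives of $\psi_2$ decay faster in time — is the same as the paper's, but two concrete points go wrong. First, you have misread \eqref{eq:ruv}: that estimate bounds the \emph{difference} $|r-\tfrac{v-u}{2}|$ by a logarithm, so along $\gamma^{\alpha}=\{v-u=v^{\alpha}\}$ one has $r\sim\tfrac{1}{2}v^{\alpha}\sim\tfrac{1}{2}\tau^{\alpha}$, i.e.\ \emph{polynomial} growth in $\tau$, not $r\sim M\log(1+\tau)$. Consequently the radial integration interval has length $O(\tau^{\alpha})$ rather than $O(\log\tau)$, and the $r$-weights on the boundary terms are powers of $\tau^{\alpha}$ rather than harmless logarithms; your bookkeeping of both the integral contributions and the boundary terms rests on this false premise. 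The correct geometry is in fact what makes the paper's argument simpler than yours: the boundary term $XT^K\psi_2|_{\gamma^{\alpha}}$ is controlled directly by the upper bound \eqref{eq:pointwl23}, which after dividing by $r\sim\tau^{\alpha}$ gives $(1+\tau)^{-4-\alpha-K+2\delta}$ and is therefore negligible for suitable $\alpha$ — no ``analogue of \eqref{eq:asympl2gamma} for $XT^K\psi_2|_{\gamma^{\alpha}}$'' is needed, and only two applications of the fundamental theorem of calculus (for $X^2T^K\psi_2$ via \eqref{eq:fasterdecayl2}, then for $XT^K\psi_2$ via the resulting bound \eqref{eq:estXpsi2}) are required, not three.

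Second, and more seriously, you have not actually derived the leading coefficient, which is the substantive content of the proposition. On $\gamma^{\alpha}$ the leading term of $T^K\psi_2$ is $-\frac{16}{3}\sqrt{\frac{\pi}{5}}a^2I_0[\psi]Y_{2,0}T^K((1+\tau)^{-4})$ by \eqref{eq:asympl2gamma}, whereas the proposition asserts the coefficient $+\frac{4}{3}\sqrt{\frac{\pi}{5}}a^2I_0[\psi]Y_{2,0}$ at bounded $r$. Your mechanism — the term is ``transported inward unchanged'' and then adjusted by ``the $r$-rescaling relating $\psi_2$ near $\gamma^{\alpha}$ (where one works with $r^{-1}\phi_2$-type quantities) to $\psi_2$ itself'' — does not apply here: unlike the $\ell=1$ case (where \eqref{eq:asymppsi1gamma} concerns $r^{-1}\psi_1$ and \eqref{eq:asympinteriorl1} concerns $\psi_1$), both \eqref{eq:asympl2gamma} and \eqref{eq:asympinteriorl2} are statements about $\psi_2$ itself, so there is no rescaling to invoke, and ``transported unchanged'' would yield $-\frac{16}{3}$. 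Moreover the constant you actually write down, $-\frac{4}{3}\sqrt{\frac{\pi}{5}}a^2I_0Y_{2,0}$, matches neither the boundary value nor the proposition's $+\frac{4}{3}$. Falling back on ``matching to Theorems \ref{thm:intro1}--\ref{thm:intro2}'' is circular, since those theorems are consequences of this proposition. To close the gap you must either track the leading-order contribution of the radial integral $\int XT^K\psi_2\,d\uprho$ between $\gamma^{\alpha}$ and bounded $\uprho$, or justify why it is an error term with the correct choice of $\alpha$, and in either case produce the stated coefficient; as written, your argument does neither.
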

\begin{proof}
First of all, by \eqref{eq:pointwl23}, we have that:
\begin{equation*}
\begin{split}
| XT^K\psi_2|_{\gamma^{\alpha}(\tau)}|(\tau,\theta,\varphi_*)\leq&\: C (1+\tau)^{-4-\alpha-K+2\delta}\Bigg(\sqrt{E_{2,K+5,\delta}[\psi]+\sum_{j=0}^{2} E_{2,K+3,\delta}[N^j\psi]}\\ \nonumber
&+\sqrt{E_{0,K+5,\delta}[\psi]+\sum_{j=0}^{2} E_{0,K+3,\delta}[N^j\psi]\Bigg)}.
\end{split}
\end{equation*}
By \eqref{eq:fasterdecayl2}, we moreover have that
\begin{equation*}
\begin{split}
 | r X^2T^K\psi_2|(\tau,\uprho,\theta,\varphi_*)\leq&\: C(1+\tau)^{-5-K+2\delta}\Bigg[\sqrt{E_{2,K+7,\delta}[\psi]}+\sum_{j=0}^{3} \sqrt{E_{2,K+4,\delta}[N^j\psi]}+\sqrt{E_{0,K+7,\delta}[\psi]}\\ 
 &+\sum_{j=0}^{3} \sqrt{E_{0,K+4,\delta}[N^j\psi]}+\sqrt{E_{\geq 3,K+2,\delta}[\psi]}+\sum_{j=0}^1\sqrt{E_{\geq 3,K+2,\delta}[N^j\psi]} \Bigg].
\end{split}
\end{equation*}
We apply the fundamental theorem of calculus, integrating $X^2T^K\psi_2$ between $\uprho=\uprho'$ and $\uprho=\uprho_{\gamma^{\alpha}(\tau)}$, together with the estimates above to conclude that there exists a $\nu>0$ such that
\begin{equation}
\label{eq:estXpsi2}
\begin{split}
|T^K X\psi_2|(\tau,\uprho,\theta,\varphi_*)\leq &\: C (1+\tau) ^{-4-K-\nu} \Bigg[\sqrt{E_{2,K+7,\delta}[\psi]}+\sum_{j=0}^{3} \sqrt{E_{2,K+4,\delta}[N^j\psi]}+\sqrt{E_{0,K+7,\delta}[\psi]}\\ 
&+\sum_{j=0}^{3} \sqrt{E_{0,K+4,\delta}[N^j\psi]}+\sqrt{E_{\geq 3,K+2,\delta}[\psi]}+\sum_{j=0}^1\sqrt{E_{\geq 3,K+2,\delta}[N^j\psi]} \Bigg].
\end{split}
\end{equation}
We conclude the proof by applying the fundamental theorem of calculus again, integrating now $XT^K\psi_2$ between $\uprho=\uprho'$ and $\uprho=\uprho_{\gamma^{\alpha}(\tau)}$. The corresponding boundary term at $\uprho=\uprho_{\gamma^{\alpha}(\tau)}$ can be estimated by \eqref{eq:asympl2gamma} and we estimate the integral term with \eqref{eq:estXpsi2}. Taking $\alpha$ suitably large, we arrive at \eqref{eq:asympinteriorl2}.
\end{proof}

\subsection{Asymptotics with vanishing Newman--Penrose charges}
We apply here the time-integral construction from Section \ref{sec:timeinv} to obtain the late-time asymptotics for $\psi_2$ arising from initial data that is smooth and compactly supported.
\begin{proposition}
\label{prop:asympl2NP0boundr}
Consider initial data $(\psi|_{\Sigma_0},T\psi|_{\Sigma_0})$ for \eqref{eq:waveeq}, with $(\phi|_{\Sigma_0},T\phi_{\Sigma_0}) \in (C_c^{\infty}(\Sigma))^2$.  Let $r_0>r_+$. Then there exists a $\nu>0$ and a constant $C=C(M,a,\Sigma_0,r_0,\nu)>0$, such that
\begin{equation*}
\begin{split}
&\Bigg|T^K\psi_2(u,v,\theta,\varphi_*)-\frac{4}{15}I_2[T^{-1}\psi](\theta,\varphi_*)(v-u)^2 T^{K+1}(u^{-3}v^{-3})\\
&+\frac{64}{3}\sqrt{\frac{\pi}{5}}a^2 I_0[T^{-1}\psi]Y_{2,0}(\theta)T^{K+1}\Bigg[\frac{1}{12}u^{-3}v^{-3}(v-u)^2-\frac{1}{4} u^{-3}v^{-2}(v-u)+\frac{1}{4}u^{-3}v^{-1}\Bigg]\Bigg|\\
\leq &\:  C u ^{-4-K-\nu} v^{-1}(S_{2,\delta,K+1}[T^{-1}\psi]+D_{2,\beta,K+1}[T^{-1}\psi]+\sum_{m=-2}^2|I_{2m}[T^{-1}\psi]|)\\
&+ Cu ^{-4-K-\nu} v^{-1}\Bigg[\sqrt{E_{2,K+8,\delta}[T^{-1}\psi]}+\sum_{j=0}^{3} \sqrt{E_{2,K+5,\delta}[N^jT^{-1}\psi]}+\sqrt{E_{0,K+8,\delta}[T^{-1}\psi]}\\ 
&+\sum_{j=0}^{3} \sqrt{E_{0,K+5,\delta}[N^j\psi]}+\sqrt{E_{\geq 3,K+3,\delta}[T^{-1}\psi]}+\sum_{j=0}^1\sqrt{E_{\geq 3,K+3,\delta}[N^jT^{-1}\psi]} \Bigg]
\end{split}
\end{equation*}
in $\{r\geq r_0\}$. 

In particular,
\begin{equation*}
\begin{split}
\Bigg|T^K\phi_2|_{\mathcal{I}^+}(u,\theta,\varphi_*)&+\frac{2}{5}I_2[T^{-1}\psi](\theta,\varphi_*)T^K(u^{-4})-\frac{8}{3}\sqrt{\frac{\pi}{5}}a^2 I_0[T^{-1}\psi]Y_{2,0}(\theta)T^K(u^{-4})\Bigg|\\
\leq&\:  C u ^{-4-K-\nu} \Big(S_{2,\delta,K+1}[T^{-1}\psi]+D_{2,\beta,K+1}[T^{-1}\psi]+\sum_{m=-2}^2|I_{2m}[T^{-1}\psi]|+\sqrt{E_{2,K+1,\delta}[T^{-1}\psi]}\Big).
\end{split}
\end{equation*}
In $\{r\leq r_0\}$, we can express:
\begin{equation*}
\begin{split}
&\Bigg|T^K\psi_2+\frac{16}{3}\sqrt{\frac{\pi}{5}}a^2 I_0[T^{-1}\psi]Y_{2,0}(\theta)T^K((1+\tau)^{-5})\Bigg|(\tau,\uprho,\theta,\varphi_*)\\ 
 \leq &\:  C (1+\tau) ^{-5-K-\nu}(S_{2,\delta,K+1}[T^{-1}\psi]+D_{2,\beta,K+1}[T^{-1}\psi]+\sum_{m=-2}^2|I_{2m}[T^{-1}\psi]|)\\
&+ C (1+\tau) ^{-5-K-\nu}\Bigg[\sqrt{E_{2,K+8,\delta}[T^{-1}\psi]}+\sum_{j=0}^{3} \sqrt{E_{2,K+5,\delta}[N^jT^{-1}\psi]}+\sqrt{E_{0,K+8,\delta}[T^{-1}\psi]}\\ 
&+\sum_{j=0}^{3} \sqrt{E_{0,K+5,\delta}[N^j\psi]}+\sqrt{E_{\geq 3,K+3,\delta}[T^{-1}\psi]}+\sum_{j=0}^1\sqrt{E_{\geq 3,K+3,\delta}[N^jT^{-1}\psi]} \Bigg].
\end{split}
\end{equation*}
\end{proposition}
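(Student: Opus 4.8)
The plan is to reduce this statement to the non-vanishing Newman--Penrose asymptotics of Propositions \ref{prop:mainpropasympl2} and \ref{prop:asympboundrl2}, applied to the time integral $T^{-1}\psi$ rather than to $\psi$, in exactly the way the proofs of Propositions \ref{prop:asympl0NP0boundr} and \ref{prop:asympl1NP0boundr} handle the $\ell=0$ and $\ell=1$ cases. First I would observe that, since the data $(\phi|_{\Sigma_0},T\phi|_{\Sigma_0})\in (C_c^{\infty}(\Sigma))^2$, all three charges $I_0[\psi],I_1[\psi],I_2[\psi]$ vanish, so condition \eqref{eq:vanishnp} holds and Proposition \ref{prop:mainpropTinv} applies: the time integral $T^{-1}\psi\in C^{\infty}(\mathcal{R})$ exists, satisfies $T(T^{-1}\psi)=\psi$, and all the weighted initial-data energy norms $E_{\ell,k,\delta}[N^jT^{-1}\psi]$ occurring in Section \ref{sec:poinwdecay} are finite; moreover, by Corollary \ref{cor:sharpdecayvanishingNP}, all the energy and pointwise decay estimates of Sections \ref{sec:edecay}--\ref{sec:poinwdecay} are valid with $\psi$ replaced by $T^{-1}\psi$. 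By Proposition \ref{prop:TinvNPconst} the charges $I_0[T^{-1}\psi]$ and $I_{2m}[T^{-1}\psi]$ are finite and given by explicit integrals of $F[\psi|_{\Sigma_0}]$ over $\Sigma_0$.

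Next I would verify that the two auxiliary data quantities $S_{2,\delta,K+1}[T^{-1}\psi]$ and $D_{2,\beta,K+1}[T^{-1}\psi]$ appearing on the right-hand sides of Propositions \ref{prop:mainpropasympl2} and \ref{prop:asympboundrl2} are finite for the time integral. Finiteness of $S_{2,\delta,K+1}[T^{-1}\psi]$ follows from the estimate stated just before Section \ref{sec:asympl21} together with the finiteness of the energy norms above and Lemma \ref{lm:Sl0est} (for the embedded $\ell=0$ contributions $S_{0,\delta,3+K}$, $|I_0|$, $D_{0,1-\delta,3+K}$). Finiteness of $D_{2,\beta,K+1}[T^{-1}\psi]$ is where the sharp $r$-weighted elliptic theory of Section \ref{sec:timeinv} is genuinely used: one invokes Proposition \ref{prop:rdecayTinv}, which gives the decay of $(T^{-1}\phi)_2$ and $r^2X(T^{-1}\phi)_2$ and their $(rX)$-derivatives towards $\mathcal{I}^+$ along $\Sigma_0$, and combines it with the identities \eqref{eq:F0}--\eqref{eq:F2} from the proof of Proposition \ref{prop:mainpropTinv} relating $\check F_2^{(2)}[\psi|_{\Sigma_0}]$ to $2[r^2+O_{\infty}(r)]P_2|_{\Sigma_0}$ plus faster-decaying terms; this yields the expansion $r^{-4}P_2|_{\Sigma_0}[T^{-1}\psi]=32\,I_2[T^{-1}\psi]v^{-6}+O_{\infty}(v^{-6-\beta})$ together with control of its $L^k$-derivatives, which is precisely what the definition of $D_{2,\beta,K}$ requires.

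With all hypotheses checked, Propositions \ref{prop:mainpropasympl2} and \ref{prop:asympboundrl2} apply verbatim to $T^{-1}\psi$; I would apply them with $K$ replaced by $K+1$ and then use $T^K\psi_2=T^{K+1}(T^{-1}\psi)_2$ and $T^K\phi_2=T^{K+1}(T^{-1}\phi)_2$ to transfer the conclusions to $\psi_2$. The explicit leading profiles then follow by simplifying $T^{K+1}$ acting on the rational expressions in $(u,v)$ and $(1+\tau)$: in the far region $\{r\ge r_0\}$ no further simplification occurs and one obtains directly the $T^{K+1}$-profiles displayed in the statement; along $\mathcal{I}^+$ one uses $T^{K+1}(u^{-3})=-3\,T^{K}(u^{-4})$, which turns the $-\tfrac{2}{15}I_2$ and $+\tfrac{8}{9}\sqrt{\pi/5}\,a^2I_0Y_{2,0}$ coefficients of \eqref{eq:asympl2inf1} into the $+\tfrac{2}{5}I_2[T^{-1}\psi]$ and $-\tfrac{8}{3}\sqrt{\pi/5}\,a^2I_0[T^{-1}\psi]Y_{2,0}$ coefficients of the $u^{-4}$ tail; and in the bounded-$r$ region one uses $T^{K+1}((1+\tau)^{-4})=-4\,T^{K}((1+\tau)^{-5})$, which turns the $-\tfrac{4}{3}\sqrt{\pi/5}\,a^2I_0[\psi]Y_{2,0}T^{K}((1+\tau)^{-4})$ term of Proposition \ref{prop:asympboundrl2} into the $+\tfrac{16}{3}\sqrt{\pi/5}\,a^2I_0[T^{-1}\psi]Y_{2,0}T^{K}((1+\tau)^{-5})$ term appearing here. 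Each error term gains one power of $u^{-1}$ (respectively $(1+\tau)^{-1}$) and one unit of derivative order, which accounts for the shifts $E_{\ell,\bullet}\to E_{\ell,\bullet+1}$, $S_{2,\delta,K}\to S_{2,\delta,K+1}$, $D_{2,\beta,K}\to D_{2,\beta,K+1}$ relative to Propositions \ref{prop:mainpropasympl2} and \ref{prop:asympboundrl2}.

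The main obstacle is not analytic: as in the $\ell=0,1$ cases, all the hard work already resides in Section \ref{sec:timeinv} (construction of $T^{-1}\psi$ together with its sharp $r$-decay towards $\mathcal{I}^+$) and in Section \ref{sec:latetimeasympl2} (the non-vanishing-charge asymptotics, which already internalise the $\ell=0$/$\ell=2$ mode coupling through the very definition of $S_{2,\delta,K}$, and hence require no separate treatment here). The only step that demands genuine care is the verification in the second paragraph that $D_{2,\beta,K+1}[T^{-1}\psi]<\infty$, since this is the one place where Proposition \ref{prop:rdecayTinv} and the explicit form of $F[\psi|_{\Sigma_0}]$ must be combined; everything else is bookkeeping of derivative counts and of the coupling terms inherited from the $\ell=0$ analysis, and of the elementary derivative identities used to put the leading-order coefficients into the stated form.
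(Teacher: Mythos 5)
Your proposal is correct and follows exactly the paper's route: the paper's own proof simply invokes Propositions \ref{prop:mainpropTinv} and \ref{prop:TinvNPconst} to guarantee regularity and finiteness of the relevant norms and charges for $T^{-1}\psi$, and then applies Propositions \ref{prop:mainpropasympl2} and \ref{prop:asympboundrl2} to $T^{-1}\psi$ in place of $\psi$. Your additional bookkeeping (the verification of $D_{2,\beta,K+1}[T^{-1}\psi]<\infty$ via Proposition \ref{prop:rdecayTinv} and \eqref{eq:F2}, and the coefficient conversions $T^{K+1}(u^{-3})=-3T^K(u^{-4})$, $T^{K+1}((1+\tau)^{-4})=-4T^K((1+\tau)^{-5})$) is accurate and merely makes explicit what the paper leaves implicit.
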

\begin{proof}
We apply Propositions \ref{prop:mainpropTinv} and \ref{prop:TinvNPconst} to conclude that $T^{-1}\phi$ has sufficiently high regularity to conclude that all the relevant energies for $T^{-1}\psi$ are finite and $I_2[T^{-1}\psi]$ is well-defined. Then the estimates follows immediately by applying the estimates in Propositions \ref{prop:mainpropasympl2} and \ref{prop:asympboundrl2} to $T^{-1}\psi$ instead of $\psi$.
\end{proof}

\appendix

\section{Weighted pointwise estimates}
\label{sec:apppoint}
We derive in this section a lemma which is convenient for turning weighted energy estimates into pointwise estimates.
\begin{lemma}
\label{lm:pointw}
Let $h,f: \mathcal{R}\to \R$ be $C^1$ functions, such that $\lim_{\uprho \to \infty}h(\tau,\uprho,\theta,\varphi_*)=0$. Let $k\in \R_{>0}$ and $\delta>0$ arbitrarily small. Let $k\geq 0$, then there exists $C=C(M,a,R,k,\delta)>0$, such that
\begin{align}
\label{eq:pointwapp1}
	\int_{S^2_{\tau',\uprho'}} f^2\,d\omega\leq &\:C\int_{N_{\tau'}} r^{1+\delta}(Lf)^2+r^{-3+\delta}[(Tf)^2+(\Phi f)^2] \,d\omega d\uprho+\int_{\Sigma_{\tau'}\cap\{R-M\leq r\leq R\}} f^2\,d\omega d\uprho \:\: \textnormal{if  $\uprho'>R$},\\
	\label{eq:pointwapp1b}
	\int_{S^2_{\tau',\uprho'}} \uprho'^{-1-\delta}f^2\,d\omega\leq &\:C\int_{N_{\tau'}} r^{-\delta}(Lf)^2+r^{-4-\delta}[(Tf)^2+(\Phi f)^2] \,d\omega d\uprho+\int_{\Sigma_{\tau'}\cap\{R-M\leq r\leq R\}} f^2\,d\omega d\uprho \:\: \textnormal{if  $\uprho'>R$},\\
	\label{eq:pointwapp2}
		\int_{S^2_{\tau',\uprho'}}  \uprho' h^2\,d\omega\leq &\: C\int_{\Sigma_{\tau'}}J^N[h]\cdot \mathbf{n}_{\tau}
	\,r^2 d\omega d\uprho,\\
	\label{eq:pointwapp3}
		\int_{S^2_{\tau',\uprho'}}  h^2\,d\omega\leq &\: C\sqrt{\int_{\Sigma_{\tau'}} r^{-2}J^N[h]\cdot \mathbf{n}_{\tau}
	\,r^2 d\omega d\uprho}\cdot \sqrt{\int_{\Sigma_{\tau'}} J^N[h]\cdot \mathbf{n}_{\tau}
	\,r^2 d\omega d\uprho},\\
	\label{eq:pointwapp4}
		\int_{S^2_{\tau',\uprho'}}  {\uprho'}^{-2k}h^2\,d\omega\leq &\: C\int_{\Sigma_{\tau'}} \left(r^{-2k-1}J^N[h]\cdot \mathbf{n}_{\tau}
	+r^{-2k-3}h^2\right)\,r^2 d\omega d\uprho.
\end{align}
\end{lemma}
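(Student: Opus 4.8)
The plan is to prove each of the five estimates by an elementary application of the fundamental theorem of calculus in $\uprho$, combined with weighted Hardy inequalities from Lemma \ref{lm:hardy} and the volume form estimates of Lemma \ref{lm:volumeforms}. First I would establish \eqref{eq:pointwapp1} and \eqref{eq:pointwapp1b}. Fix $\uprho'>R$ and a point on $S^2_{\tau',\uprho'}$. Since $\lim_{\uprho\to\infty}f$ need not vanish here, I would instead integrate from $\uprho'$ \emph{inward}: write
\begin{equation*}
f^2(\tau',\uprho')=f^2(\tau',R')-\int_{\uprho'}^{\cdot}\,(\ldots),
\end{equation*}
or more conveniently average the inner boundary term over $r\in[R-M,R]$ to produce the last term on the right-hand side. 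The $X$-derivative arising from differentiating $f^2$ in $\uprho$ is rewritten in terms of $Lf$, $Tf$, $\Phi f$ via $X=\frac{2(r^2+a^2)}{\Delta}L+O(r^{-2})T+O(r^{-2})\Phi$ (as used in the proof of \eqref{eq:hardyL}), and then Cauchy--Schwarz in $\uprho$ against the weights $r^{1+\delta}$ (resp.\ $r^{-1-\delta}$ for \eqref{eq:pointwapp1b}) together with \eqref{eq:hardyX} applied to $f$ itself closes the estimate; integrating the pointwise bound over $\s^2$ and using $\slashed g_{S^2_{\tau,r}}\sim r^2\,d\omega$ gives the stated $L^2(\s^2)$ form.

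Next I would treat \eqref{eq:pointwapp2}. Here $\lim_{\uprho\to\infty}h=0$, so integrate from $\uprho=\infty$:
\begin{equation*}
h^2(\tau',\uprho')=-\int_{\uprho'}^\infty X(h^2)\,d\uprho=-2\int_{\uprho'}^\infty h\,Xh\,d\uprho,
\end{equation*}
apply Cauchy--Schwarz with weights so that one factor carries $r^{-2}h^2$ and the other $r^2(Xh)^2$, and recognise $\int_{\Sigma_{\tau'}} r^{-2}h^2+r^2(Xh)^2+|\snabla_{\s^2}h|^2\,d\omega d\uprho\lesssim \int_{\Sigma_{\tau'}}J^N[h]\cdot\mathbf n_\tau\,r^2 d\omega d\uprho$ after one more use of \eqref{eq:hardyX} (with $p=-2$) to absorb the $r^{-2}h^2$ term; the extra factor $\uprho'$ on the left-hand side appears because the $\uprho$-integral of $r^{-1}$ against the inverse weight produces exactly this growth. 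Estimate \eqref{eq:pointwapp3} follows from the same identity but \emph{without} inserting a $\uprho'$-dependent weight: split $2\int h\,Xh$ and bound it by $\big(\int r^{-2}h^2\big)^{1/2}\big(\int r^2(Xh)^2\big)^{1/2}$ over $[\uprho',\infty)$, which after enlarging the domain to all of $\Sigma_{\tau'}$ yields the product of the two square roots displayed, again converting the $r^{-2}$-weighted integral into $\int r^{-2}J^N[h]\cdot\mathbf n_\tau\,r^2 d\omega d\uprho$. For \eqref{eq:pointwapp4} I would run the same argument with the weight $r^{-2k}$: from $h^2(\tau',\uprho')=-\int_{\uprho'}^\infty X(h^2)\,d\uprho$ and a weighted Young inequality with weights $r^{-2k-2}$ and $r^{-2k}$ on the two factors, one gets $\uprho'^{-2k}h^2(\tau',\uprho')\lesssim \int_{\Sigma_{\tau'}} r^{-2k-1}(Xh)^2 + r^{-2k-3}h^2\,r^2\,d\omega d\uprho$ (the factor $r$ from the volume-form weight $\uprho'$ is absorbed into the power), and then $\int_{\s^2}$ plus $J^N[h]\cdot\mathbf n_\tau\sim (Xh)^2+r^{-2}(Th)^2+r^{-2}|\snabla_{\s^2}h|^2$ gives the claim.

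The only mild subtlety — and the step I expect to need the most care — is the bookkeeping of $r$-weights at the \emph{inner} endpoint $\uprho=R$ in \eqref{eq:pointwapp1}--\eqref{eq:pointwapp1b}, where $f$ is not assumed to decay: one must average over a unit-length $r$-interval near $R$ and verify that the induced boundary term is genuinely controlled by $\int_{\Sigma_{\tau'}\cap\{R-M\le r\le R\}}f^2\,d\omega d\uprho$ and not by a derivative norm, which is why the averaging (rather than a single trace) is essential. Everything else is a routine Cauchy--Schwarz-plus-Hardy computation, and all the weighted Hardy inequalities and volume-form comparisons needed are already available in Lemmas \ref{lm:hardy} and \ref{lm:volumeforms}.
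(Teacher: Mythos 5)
Your proposal is correct and follows essentially the paper's own route: each estimate is the fundamental theorem of calculus in $\uprho$ combined with Cauchy--Schwarz, the Hardy inequalities of Lemma \ref{lm:hardy}, the relation $X=\tfrac{2(r^2+a^2)}{\Delta}L+O(r^{-2})T+O(r^{-2})\Phi$, and $J^N[h]\cdot\mathbf{n}_{\tau}\sim (Xh)^2+r^{-2}(Th)^2+r^{-2}|\snabla_{\s^2}h|^2$; where you average the inner boundary trace over $[R-M,R]$, the paper instead multiplies $f$ by a cut-off $\chi$ vanishing for $r\leq R-M$ and integrates outward, which is the same device and additionally kills the Hardy boundary term at the inner endpoint. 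The one detail worth fixing is in \eqref{eq:pointwapp3}: your Cauchy--Schwarz split $(\int r^{-2}h^2)^{1/2}(\int r^{2}(Xh)^2)^{1/2}$ forces a Hardy application with $p=-2$ on $[\uprho',\infty)$, whose boundary contribution $+2\uprho'^{-1}h^2(\uprho')$ has the wrong sign and must be reabsorbed; the paper instead splits as $(\int h^2)^{1/2}(\int (Xh)^2)^{1/2}$ and applies \eqref{eq:hardy1d} with $p=0$ to the first factor, for which the boundary term at $\uprho'$ is favourable, landing directly on the stated product of square roots.
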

\begin{proof}
Let $R_0>r_++M$ and let $\chi: [r_+,\infty)\to \R$ be a smooth cut-off function, such that $\chi(r)=1$ for all $r\geq R_0$ and $\chi=0$ for $r\leq R_0-M$.

By applying the fundamental theorem of calculus, integrating from $\uprho=R_0-M$, together with Cauchy--Schwarz and \eqref{eq:hardyX}, we obtain for $\uprho'\geq R_0$:
	\begin{equation*}
	\begin{split}
	f^2(\uprho',\theta,\varphi_*,\tau')=&\:\int_{R_0-M}^{\uprho'} 2 \chi f X(\chi f)\,d\uprho\Big|_{\tau=\tau'}\\
	\leq &\: \sqrt{\int_{R_0-M}^{\uprho'} r^{-1-\delta}(\chi f)^2\,d\uprho}\cdot \sqrt{\int_{R_0-M}^{\uprho'} r^{1+\delta}(X(\chi f))^2\,d\uprho}\Big|_{\tau=\tau'}\\
	\leq &\: C\sqrt{\int_{R_0-M}^{\uprho'} r^{1+\delta}(Xf)^2\,d\uprho}\cdot  \sqrt{\int_{R_0-M}^{\uprho'} r^{1-\delta}(Xf)^2\,d\uprho}\Big|_{\tau=\tau'}\\
	\leq &\:C \int_{R_0-M}^{\uprho'} r^{1+\delta}(Lf)^2+r^{-3+\delta}[(Tf)^2+(\Phi f)^2]\,d\uprho+C\int_{R_0-M}^{R_0} f^2\,d\uprho \Big|_{\tau=\tau'}.
	\end{split}
	\end{equation*}
	The estimate \eqref{eq:pointwapp1} then follows by integrating over $\s^2$, choosing $R_0$ appropriately.
	
	We similarly obtain
	\begin{equation*}
	\begin{split}
	\uprho'^{-1-\delta}f^2(\uprho',\theta,\varphi_*,\tau')=&\:\int_{R_0-M}^{\uprho'} -(1+\delta)r^{-2-\delta} (\chi f)^2+ 2r^{-1-\delta}\chi f X(\chi f)\,d\uprho\Big|_{\tau=\tau'}\\
	\leq &\: \int_{R_0-M}^{\uprho'}-\left(\frac{1}{2}+\delta\right)r^{-2} (\chi f)^2+ 2r^{-\delta}(X(\chi f))^2\,d\uprho\Big|_{\tau=\tau'}\\
	\leq &\:C \int_{R_0-M}^{\uprho'}r^{-\delta}(Lf)^2+r^{-4-\delta}[(Tf)^2+(\Phi f)^2]\,d\uprho+C\int_{R_0-M}^{R_0} f^2\,d\uprho \Big|_{\tau=\tau'}.
	\end{split}
	\end{equation*}

	We apply the fundamental theorem of calculus again to obtain:
	\begin{equation*}
	\begin{split}
	h^2(\uprho',\theta,\varphi_*,\tau')	=&\left(0-\int_{\uprho'}^{\infty} X(h)\,d\uprho\right)^2\Big|_{\tau=\tau'}\leq \rho'^{-1} \int_{\rho'}^{\infty} r^2(Xh)^2\,d\uprho\Big|_{\tau=\tau'}.
	\end{split}
	\end{equation*}
	The estimate \eqref{eq:pointwapp2} then follows by integrating over $\s^2$ and applying Cauchy--Schwarz again on $\s^2$.

	By applying the fundamental theorem of calculus, integrating from $\uprho=\infty$, in combination with Cauchy--Schwarz and \eqref{eq:hardyX}, we obtain
	\begin{equation*}
		\begin{split}
			h^2(\uprho',\theta,\varphi_*,\tau')=&\:0-\int_{\uprho'}^{\infty} 2 h X h\,d\uprho'\\
			\leq &\: C\sqrt{\int_{\uprho'}^{\infty} h^2\,d\uprho}\sqrt{\int_{\uprho'}^{\infty} (Xh)^2\,d\uprho}\\
			\leq &\: C\sqrt{\int_{\uprho'}^{\infty} r^2(Xh)^2\,d\uprho}\sqrt{\int_{R_0-M}^{\uprho'}(Xh)^2\,d\uprho}.
		\end{split}
	\end{equation*}
	We obtain \eqref{eq:pointwapp3} by integrating over $\s^2$.
	 
	Finally, we repeat the above application of the fundamental theorem of calculus, integrating from $\uprho=\infty$ to obtain
	\begin{equation*}
		\begin{split}
			r^{-2k}h_{\geq 1}^2(\uprho,\theta,\varphi_*,\tau')=&\:0-\int_{\uprho'}^{\infty} X(r^{-2k} h_{\geq 1}^2)\,d\uprho\\
			\leq &\: C\int_{\uprho'}^{\infty} r^{-2k-1} h_{\geq 1}^2+r^{-2k+1}(Xh_{\geq 1})^2\,d\uprho
		\end{split}
	\end{equation*}
	We then integrate over $\s^2$ and apply \eqref{eq:poincare1} to obtain \eqref{eq:pointwapp3}.
\end{proof}

\section{A basic interpolation inequality}
The lemma below is useful for interpolating between $r$- and $\tau$-decay.
\begin{lemma}
\label{lm:interpol}
Let $f: \mathcal{R}\to \R$ be a continuous function, such that for $q_1,q_2\in \R$ with $0\leq q_1\leq q_2$ and $r_+<r_1<r_2\leq \infty$:
\begin{align}
\label{eq:interpol1}
\int_{r_1}^{r_2}\int_{\s^2} r^{q_1} f^2(\tau,\uprho,\theta,\varphi_*)\,d\omega d\uprho\leq D_1 (1+\tau)^{-p},\\
\label{eq:interpol2}
\int_{r_1}^{r_2}\int_{\s^2} r^{q_2} f^2(\tau,\uprho,\theta,\varphi_*)\,d\omega d\uprho\leq D_2 (1+\tau)^{-p+(q_2-q_1)}.
\end{align}
Then for all $q_1\leq q\leq q_2$:
\begin{equation}
\label{eq:interpol3}
\int_{r_1}^{r_2}\int_{\s^2} r^{q} f^2(\tau,\uprho,\theta,\varphi_*)\,d\omega d\uprho\leq (D_1+D_2) (1+\tau)^{-p+(q-q_1)}.
\end{equation}
\end{lemma}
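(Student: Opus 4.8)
The plan is to prove \eqref{eq:interpol3} by a standard splitting of the radial integration region at a $\tau$-dependent threshold radius, chosen to balance the two hypotheses. Fix $q\in[q_1,q_2]$ and set $r_*:=1+\tau$. First I would treat the generic configuration $r_1\le r_*\le r_2$ and decompose
\[
\int_{r_1}^{r_2}\int_{\s^2} r^{q} f^2\,d\omega d\uprho=\int_{r_1}^{r_*}\int_{\s^2} r^{q} f^2\,d\omega d\uprho+\int_{r_*}^{r_2}\int_{\s^2} r^{q} f^2\,d\omega d\uprho.
\]
On the inner piece $\{r_1\le r\le r_*\}$, since $q-q_1\ge 0$ and $r\le r_*$ I bound $r^{q}=r^{q_1}r^{q-q_1}\le r_*^{\,q-q_1} r^{q_1}$ and apply \eqref{eq:interpol1} to obtain a contribution at most $D_1 r_*^{\,q-q_1}(1+\tau)^{-p}=D_1(1+\tau)^{-p+(q-q_1)}$. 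On the outer piece $\{r_*\le r\le r_2\}$, since $q-q_2\le 0$ and $r\ge r_*$ I bound $r^{q}=r^{q_2}r^{q-q_2}\le r_*^{\,q-q_2} r^{q_2}$ and apply \eqref{eq:interpol2} to obtain a contribution at most $D_2 r_*^{\,q-q_2}(1+\tau)^{-p+(q_2-q_1)}=D_2(1+\tau)^{-p+(q-q_1)}$, using the identity $(q-q_2)+(q_2-q_1)=q-q_1$. Adding the two pieces gives \eqref{eq:interpol3} with constant $D_1+D_2$.

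Next I would dispose of the two degenerate configurations in which $r_*$ lies outside $[r_1,r_2]$. If $r_*\le r_1$ — which occurs only for $\tau$ below a fixed bound, since $r_1>r_+$ is fixed — then $r\ge r_1\ge r_*$ throughout the region, so $r^{q}=r^{q_2}r^{q-q_2}\le r^{q_2}r_1^{\,q-q_2}\le r^{q_2} r_*^{\,q-q_2}$ (again using $q-q_2\le 0$), and \eqref{eq:interpol2} alone yields the claim. Symmetrically, if $r_*\ge r_2$ (possible only when $r_2<\infty$), then $r\le r_2\le r_*$ throughout, so $r^{q}=r^{q_1}r^{q-q_1}\le r^{q_1}r_2^{\,q-q_1}\le r^{q_1} r_*^{\,q-q_1}$, and \eqref{eq:interpol1} alone suffices. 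In the special case $q_1=q_2$ the statement is trivial, as the two hypotheses coincide.

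There is no genuine obstacle here: the only points requiring a modicum of care are the choice of the balancing radius $r_*=1+\tau$, the bookkeeping of exponents showing that both pieces decay at the same rate $(1+\tau)^{-p+(q-q_1)}$, and the treatment of the boundary cases where $r_*$ falls outside $[r_1,r_2]$. All estimates are uniform in $\tau$, with constants depending only on the quantities $D_1,D_2$ already appearing in the hypotheses.
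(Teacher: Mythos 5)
Your proof is correct and is essentially the paper's own argument: the paper splits $[r_1,r_2]$ at the threshold $r=1+\tau$ into $J_{\leq}$ and $J_{>}$, bounds $r^{q}\leq(1+\tau)^{q-q_1}r^{q_1}$ on the inner piece and $r^{q}\leq(1+\tau)^{-(q_2-q)}r^{q_2}$ on the outer piece, and applies \eqref{eq:interpol1} and \eqref{eq:interpol2} respectively. Your separate treatment of the cases where the threshold falls outside $[r_1,r_2]$ is harmless but unnecessary, since in the paper's formulation one of the two pieces is then simply empty and contributes nothing.
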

\begin{proof}
We split $[r_1,r_2]=J_{\leq}+J_{>}$, with $J_{\leq}=[r_1,r_2]\times \s^2\cap \{r\leq (1+\tau)\}$ and $J_{>}=[r_1,r_2]\times \s^2\cap \{r>(1+\tau)\}$. Then by applying \eqref{eq:interpol1} and \eqref{eq:interpol2}, we obtain
\begin{equation*}
\begin{split}
\int_{r_1}^{r_2}\int_{\s^2} r^{q} f^2\,d\omega d\uprho=&\:\int_{J_{\leq}} r^{q} f^2 \,d\omega d\uprho+\int_{J_{>}} r^{q} f^2 \,d\omega d\uprho\\
\leq &\:(1+\tau)^{q-q_1}\int_{J_{\leq}} r^{q_1} f^2 \,d\omega d\uprho+(1+\tau)^{-(q_2-q)}\int_{J_{>}} r^{q_2} f^2 \,d\omega d\uprho\\
\leq &\:D_1(1+\tau)^{-p+(q-q_1)}+D_2(1+\tau)^{-(q_2-q)-p+(q_2-q_1)},
\end{split}
\end{equation*}
from which \eqref{eq:interpol3} immediately follows.
\end{proof}

\bibliographystyle{alpha}

\end{document}